\def\l@section{\@tocline{2}{4pt}{1pc}{5pc}{}}
\def\l@subsection{\@tocline{2}{0pt}{2.5pc}{4pc}{}}
\patchcmd{\section}{\scshape}{\large\bfseries}{}{}
\renewcommand{\@secnumfont}{\bfseries}
\numberwithin{equation}{section}
\newtheorem{thma}{Theorem}
\newtheorem{thmaa}{Theorem}
\newtheorem{thmas}{Theorem}[section]
\newtheorem{lemma}{Lemma}[section]
\newtheorem{cor}{Corollary}[section]
\newtheorem{prop}{Proposition}[section]
\theoremstyle{remark}
\newtheorem{rem}[lemma]{Remark}
\newtheorem{defin}{Definition}[section]
\newcommand{\nc}{\newcommand}
\nc{\Aa}{{\mathcal{A}}}
\nc{\Bb}{{\mathcal{B}}}
\nc{\Cc}{{\mathcal{C}}}
\nc{\Dd}{{\mathcal{D}}}
\nc{\Ee}{{\mathcal{E}}}
\nc{\Ff}{{\mathcal{F}}}
\nc{\Gg}{{\mathcal{G}}}
\nc{\Hh}{{\mathcal{H}}}
\nc{\Ii}{{\mathcal{I}}}
\nc{\Jj}{{\mathcal{J}}}
\nc{\Kk}{{\mathcal{K}}}
\nc{\Ll}{{\mathcal{L}}}
\nc{\Mm}{{\mathcal{M}}}
\nc{\Nn}{{\mathcal{N}}}
\nc{\Oo}{{\mathcal{O}}}
\nc{\Pp}{{\mathcal{P}}}
\nc{\Qq}{{\mathcal{Q}}}
\nc{\Rr}{{\mathcal{R}}}
\nc{\Ss}{{\mathcal{S}}}
\nc{\Tt}{{\mathcal{T}}}
\nc{\Uu}{{\mathcal{U}}}
\nc{\Vv}{{\mathcal{V}}}
\nc{\Ww}{{\mathcal{W}}}
\nc{\Xx}{{\mathcal{X}}}
\nc{\Yy}{{\mathcal{Y}}}
\nc{\Zz}{{\mathcal{Z}}}
\nc{\mA}{{\mathrm{A}}}
\nc{\mB}{{\mathrm{B}}}
\nc{\mC}{{\mathrm{C}}}
\nc{\mD}{{\mathrm{D}}}
\nc{\mE}{{\mathrm{E}}}
\nc{\mF}{{\mathrm{F}}}
\nc{\mG}{{\mathrm{G}}}
\nc{\mH}{{\mathrm{H}}}
\nc{\mI}{{\mathrm{I}}}
\nc{\mJ}{{\mathrm{J}}}
\nc{\mK}{{\mathrm{K}}}
\nc{\mL}{{\mathrm{L}}}
\nc{\mM}{{\mathrm{M}}}
\nc{\mN}{{\mathrm{N}}}
\nc{\mO}{{\mathrm{O}}}
\nc{\mP}{{\mathrm{P}}}
\nc{\mQ}{{\mathrm{Q}}}
\nc{\mR}{{\mathrm{R}}}
\nc{\mS}{{\mathrm{S}}}
\nc{\mT}{{\mathrm{T}}}
\nc{\mU}{{\mathrm{U}}}
\nc{\mV}{{\mathrm{V}}}
\nc{\mW}{{\mathrm{W}}}
\nc{\mX}{{\mathrm{X}}}
\nc{\mY}{{\mathrm{Y}}}
\nc{\mZ}{{\mathrm{Z}}}
\nc{\BB}{{\mathbb{B}}}
\nc{\CC}{{\mathbb{C}}}
\nc{\DD}{{\mathbb{D}}}
\DeclareMathOperator{\EE}{{\mathbb{E}}}
\nc{\FF}{{\mathbb{F}}}
\nc{\GG}{{\mathbb{G}}}
\nc{\HH}{{\mathbb{H}}}
\nc{\II}{{\mathbb{I}}}
\nc{\JJ}{{\mathbb{J}}}
\nc{\KK}{{\mathbb{K}}}
\nc{\LL}{{\mathbb{L}}}
\nc{\MM}{{\mathbb{M}}}
\nc{\NN}{{\mathbb{N}}}
\nc{\OO}{{\mathbb{O}}}
\nc{\PP}{{\mathbb{P}}}
\nc{\QQ}{{\mathbb{Q}}}
\nc{\RR}{{\mathbb{R}}}
\nc{\TT}{{\mathbb{T}}}
\nc{\UU}{{\mathbb{U}}}
\nc{\VV}{{\mathbb{V}}}
\nc{\WW}{{\mathbb{W}}}
\nc{\XX}{{\mathbb{X}}}
\nc{\YY}{{\mathbb{Y}}}
\nc{\ZZ}{{\mathbb{Z}}}
\DeclareMathOperator{\Aut}{\mathrm{Aut}}
\DeclareMathOperator{\Pic}{\mathrm{Pic}}
\nc{\Div}{D}
\DeclareMathOperator{\Tr}{\mathrm{Tr}}
\DeclareMathOperator{\Id}{\mathrm{Id}}
\DeclareMathOperator{\ord}{\mathrm{ord}}
\def\Arf{\mathrm{Arf}}
\nc{\Fun}{\mathrm{Fun}}
\def\Jac{\mathrm{Jac}}
\def\Tch{\mathrm{Teich}}
\def\Mod{\mathrm{Mod}}
\def\Diff{\mathrm{Diff}}
\def\length{\mathrm{length}}
\def\dist{\mathrm{dist}}
\def\Area{\mathrm{Area}}
\DeclareMathOperator{\wind}{\mathrm{wind}}
\def\exact{\mathrm{exact}}
\def\coexact{\mathrm{coexact}}
\def\harm{\mathrm{harm}}
\def\antiholom{\mathrm{antiholom}}
\DeclareMathOperator{\supp}{\mathrm{supp}}
\renewcommand{\div}{\mathrm{div}}
\let\Re\relax
\let\Im\relax
\DeclareMathOperator{\Re}{\mathrm{Re}}
\DeclareMathOperator{\Im}{\mathrm{Im}}
\DeclareMathOperator{\Res}{\mathrm{Res}}
\DeclareMathOperator{\dbar}{\bar{\partial}}
\nc\chr[2]{\begin{bmatrix}#1 \\ #2\end{bmatrix}}
\def\eps{\varepsilon}   
\def\vphi{\varphi}
\def\cst{\mathrm{cst}}
\DeclareMathOperator{\osc}{\mathrm{osc}}
\let\div\relax
\DeclareMathOperator{\div}{\mathrm{div}}
\def\harm{\mathrm{harm}}
\nc{\indic}{1\!\!1}
\DeclareMathOperator{\Var}{\mathrm{Var}}
\def\smm{\smallsetminus}
\def\op{\mathrm{op}}
\def\out{\mathrm{out}}
\def\inn{\mathrm{in}}
\def\rf{\mathrm{ref}}
\def\zero{\mathrm{zero}}
\def\GFF{\mathrm{GFF}}
\def\unif{\mathrm{Unif}}
\def\Lip{\mathrm{Lip}}
\def\spl{\mathrm{spl}}
\def\Bws{B^\circ_\spl}
\def\Wws{W^\circ_\spl}
\def\Ttws{\Tt_\spl^\circ}
\def\Bbs{B^\bullet_\spl}
\def\Wbs{W^\bullet_\spl}
\def\Gbs{G_\spl^\bullet}
\def\fluct{\mathrm{fluct}}
\nc{\w}{\mathbf{w}}
\def\M{\mathrm{M}}
\def\m{\mathfrak{m}}
\nc{\pf}{\mathrm{E}}
\def\o{\rotatebox[origin=c]{180}{$\omega$}}
\begin{document}

\title[Dimers on Riemann surfaces and compactified free field]{Dimers on Riemann surfaces and compactified free field}

\author[Mikhail Basok]{Mikhail Basok$^\mathrm{a}$}

\thanks{\textsc{${}^\mathrm{A}$ University of Helsinki, m.k.basok@gmail.com.}}

\begin{abstract}
We consider the dimer model on a bipartite graph embedded into a locally flat Riemann surface with conical singularities and satisfying certain geometric conditions in the spirit of~\cite{CLR1}. Following the approach developed by Dub\'edat~\cite{DubedatFamiliesOfCR} we establish the convergence of dimer height fluctuations to the compactified free field in the small mesh size limit. This work is inspired by the series of works~\cite{BerestyckiLaslierRayI, BerestyckiLaslierRayII} of Berestycki, Laslier, and Ray, where a similar problem is addressed, and the convergence to a conformally invariant limit is established in the Temperlian setup but the identification of the limit as the compactified free field is missing. This identification is the main result of our paper.
\end{abstract}

\maketitle

\tableofcontents

\section{Introduction}
\label{sec:introduction}

The celebrated result of Kenyon~\cite{KenyonGFF, KenyonConfInvOfDominoTilings} asserts that for each simply-connected domain $\Sigma\subset \CC$ and sequence of Temperleyan polygons $G^\delta\subset \delta\ZZ^2$ approximating $\Sigma$ as the mesh size $\delta$ tends to zero, the corresponding sequence of centered dimer height functions $h_\delta - \EE h_\delta$ converges to the Gaussian free field in $\Sigma$. During the last two decades Kenyon's approach was generalized to various other setups including other boundary conditions (e.g.~\cite{russkikh2020dominos}) and discrete approximations by more general graphs~\cite{CLR1,CLR2}. Another direction in which the results of Kenyon are generalized is related to more general Temperleyan domains where the graphs are not assumed to have a rigid geometric structure but rather satisfy some soft probabilistic conditions~\cite{berestycki2020dimers}. The techniques developed in~\cite{berestycki2020dimers} allowed to analyze the dimer model sampled on Temperleyan graphs on a general Riemann surface~\cite{BerestyckiLaslierRayI, BerestyckiLaslierRayII}. This led to proving the convergence of the dimer height function to a universal conformally invariant limit in this setup, which is perhaps the first known result of this type when the Riemann surface is general. While the soft probabilistic methods of~\cite{BerestyckiLaslierRayI, BerestyckiLaslierRayII} allowed to prove the convergence under rather mild assumptions (with the only significant limitation being the Temperleyan combinatorics), these methods did not allow to reconstruct the structure of the limit.

The goal of the present work is to build another approach for studying the dimer model on a general Riemann surface based upon the discrete complex analysis techniques developed in~\cite{CLR1,CLR2,DubedatFamiliesOfCR}. While this approach inevitably requires more `rigid' assumptions compared to the soft probabilistic approach of~\cite{BerestyckiLaslierRayI, BerestyckiLaslierRayII}, it also has a significant advantage of revealing exact relations between observables of the model and intrinsic objects on the Riemann surface. In particular, our results complement the results of~\cite{BerestyckiLaslierRayI, BerestyckiLaslierRayII} and allow to determine an exact structure of the limit of the dimer height function and identify it with a version of a compactified free field, see Corollary~\ref{cor:identification_for_BLR} and the remark afterwards.

Let us now discuss the setup in more details. Given a bipartite graph $G$ embedded into a Riemann surface $\Sigma$, we can define the dimer height function $h$ following the standard local rules (see, e.g., Section~\ref{subsec:intro_height_function}). This definition is consistent locally but not globally, which results in $h$ having an additive monodromy. This monodromy is random and depends only on the homology class of a loop along which it is evaluated; in other words, each dimer cover of $G$ defines a random cohomology class in $H^1(\Sigma, \RR)$. It is easy to see that the monodromy is integer \emph{up to a deterministic shift}, that is, there exists a deterministic $[u_0]\in H_1(\Sigma,\RR)$ such that the monodromy class almost surely belongs to $[u_0] + H_1(\Sigma,\ZZ)$. When $\Sigma$ has a boundary and $h$ is defined in such a way that it is constant along each boundary component (for example, if $\EE h = 0$), then the height difference between boundary components can be also included in the notion of the monodromy. In this case the corresponding cohomology class should be considered in the relative cohomology group $H^1(\Sigma; \partial \Sigma, \RR)$. 

Dimer model sampled on a topologically non-trivial surface has been considered already in works of Kasteleyn~\cite{kasteleyn1963dimer, kasteleyn1967graph}, where he computed the partition function on a torus and suggested the (proven later) formula in the general case, see~\cite{Cimasoni} for a comprehensive review and a modern approach to this problem. In the pioneering work of Kenyon~\cite{KenyonConfInvOfDominoTilings} the dimer height function was studied in multiply connected domains approximated by Temperleyan polygons. It was shown that the distribution of the height differences between boundary components becomes conformally invariant in the scaling limit, without identifying it explicitly. Later it was heuristically argued by Gorin in~\cite[Lecture~24]{gorin2021lectures} that the distribution becomes discrete Gaussian in the limit. In other words, Gorin argued that, for some deterministic \emph{shift} $x_0\in \RR^n$, the limit of the vector of height differences must belong to $x_0+\ZZ^n$ almost surely and the probability of observing a given vector must be proportional to the Gaussian of it. In the recent work~\cite{ahn2022lozenge} the height jump between two boundary components of a cylinder was proven to be approximately discrete Gaussian when the dimer model is sampled with respect to the $q$-volume measure.

In the work of Boutillier and de Tili\`ere~\cite{boutillier2009loop} the dimer model was studied on the hexagonal lattice projected to a torus. It was proven that the monodromies of the dimer height function have discrete Gaussian distribution in the scaling limit. %
The full convergence of the centered height function on a torus was proven by Dub\'edat~\cite{DubedatFamiliesOfCR} for the dimer model sampled on a sequence of arbitrary Temperleyan isoradial graphs. It was proven that the limit coincides with the \emph{compactified free field} on the torus, that is, can be written as a sum of the Gaussian free field ($\GFF$) and a linear function with a random integer slope, the latter being independent of the $\GFF$ and having discrete Gaussian distribution. A separate convergence of the monodromies of the toroidal dimer height function to the discrete Gaussian random vector was later proven in the case of much more general Temperleyan approximations by Dub\'edat and Gheissari~\cite{dubedat2015asymptotics}.

All these results support a widely accepted prediction that the scaling limit of the dimer height function on a general Riemann surface must be described by the compactified free field~\cite{Bosonization, francesco2012conformal, DubedatFamiliesOfCR}. To make this prediction more precise we need to introduce some notation. We prefer the language of differential forms when dealing with additively multivalued functions. To this end we can view the centered dimer height function $h - \EE h$ as a function on $\Sigma$ that is locally constant on faces of the bipartite graph $G$ and consider its exterior derivative $d(h - \EE h)$. This exterior derivative is a 1-form with coefficients being generalized functions, globally well-defined and closed (i.e. the exterior derivative of $d(h - \EE h)$ vanishes in the weak sense). Since $d(h - \EE h)$ is closed, the Hodge decomposition applies and we can write
\[
  d(h - \EE h) = d\Phi + \Psi
\]
where $\Phi$ is a function on $\Sigma$ and $\Psi$ is a harmonic differential. If the boundary of $\Sigma$ is non-empty, we can additionally require that $\Phi$ and $\Psi$ vanish along the boundary. (In the case of $\Psi$ the latter means that $\int\Psi$ is a harmonic function, multivalued if the genus of $\Sigma$ is positive, and constant on each boundary component of $\Sigma$.) Roughly speaking $\Phi$ is responsible for the local fluctuations of $h$, while $\Psi$ encodes its global behaviour, namely, the monodromy and how $h-\EE h$ changes between boundary components: the integral of $\Psi$ along a loop on $\Sigma$ (resp. a path between boundary components) is equal to the monodromy of $h - \EE h$ along this loop (resp. the jump of $h-\EE h$ between these components). In other words, the cohomology class in $H^1(\Sigma, \RR)$ (or in the relative cohomology group $H^1(\Sigma; \partial \Sigma,\RR)$ if $\partial\Sigma\neq \varnothing$) represented by $\Psi$ is equal to the cohomology class of the monodromy.

Using this terminology we can define the compactified free field to be the 1-form:
\[
  \m = d\phi + \psi
\]
where $\phi$ is the Gaussian free field (with zero boundary conditions if $\partial\Sigma\neq\varnothing$) and $\psi$ is a random harmonic differential vanishing along $\partial\Sigma$ and having the following two properties:
\begin{enumerate}
  \item There exists a deterministic harmonic differential $\psi_0$ (vanishing along $\partial\Sigma$) such that the periods and integrals between boundary components of $\psi - \psi_0$ are integer almost surely.
  \item $\psi$ is independent of $\phi$ and has discrete Gaussian distribution: the probability to observe a given differential is proportional to $\exp\left(-\frac{\pi}{2}\int_\Sigma \psi\wedge \ast \psi\right)$ where $\ast$ is the Hodge star (if $f = \int\psi$, then the integral above is nothing but the Dirichlet energy of $f$).
\end{enumerate}
We address the reader to Section~\ref{subsec:intro_freefield} for more details. Note that as soon as a basis in the first homology group $H_1(\Sigma;\partial\Sigma,\RR)$ is specified we can represent $\psi$ as a tuple of real numbers of length $2g(\Sigma) + \max(n-1,0)$ where $n$ is the number of boundary components. This tuple must almost surely be integer plus the corresponding tuple for $\psi_0$, and the quadratic form $\int_\Sigma \psi\wedge \ast \psi$ becomes explicitly expressed via this tuple and the matrix of $B$-periods of $\Sigma$ (the Schottky double of $\Sigma$ if the boundary is present), see Lemma~\ref{lemma:inner_product_via}.

Following the terminology adopted from physics we call $\Phi$ (resp. $\phi$) the scalar component of $h - \EE h$ (resp., of the compactified free field), and $\Psi$ (resp. $\psi$) the instanton component of $h - \EE h$ (resp., of the compactified free field). Note that it would be more accurate to define the compactified free field as the primitive of $\m$, but we prefer this abuse of the terminology for simplicity. Note also that in the case of the standard definition of the compactified free field the instanton component is assumed to have integer periods (and so $\int\m$ is a ``random map'' between $\Sigma$ and $S^1$). However, as we will see below, the appearance of the shift $\psi_0$ is very important and cannot be avoided in the dimer model context.

Consider now a family of graphs $G^\delta$ approximating the conformal structure of $\Sigma$ in a suitable sense. Let $h_\delta$ be the dimer height function on $G^\delta$ and $\Phi_\delta,\Psi_\delta$ be the scalar and instanton components of $h_\delta - \EE h_\delta$. The aforementioned prediction asserts that the pair $(\Phi_\delta,\Psi_\delta)$ must converge to the pair $(\phi,\psi)$ of the scalar and instanton components of the compactified free field, given that the latter is properly normalized and the shift $\psi_0$ is properly chosen.

\subsection*{Main results} The main goal of this paper is to give a mathematical proof of this convergence for a certain class of discrete approximations $G^\delta$ of a general Riemann surface specified below. The primal motivation for considering this setup comes from the works of Berestycki, Laslier, and Ray~\cite{BerestyckiLaslierRayI, BerestyckiLaslierRayII} that we mentioned in the beginning. Let us briefly discuss the details of this work. 
Recall that a graph is called Temperleyan if it obtained as a superposition of a graph $\Gamma$ and its dual $\Gamma^\dagger$. The superposition construction makes sense on any surface and produces a bipartite graph (see Figure~\ref{fig:construction_of_Gamma}), which does not however possess a dimer cover unless the Euler characteristics $\chi(\Sigma)$ of the surface is zero: indeed, the Euler formula implies that the mismatch between the number of black and the number of white vertices of the superposition graph is given by $\chi(\Sigma)$. In order obtain a dimerable graph in the general case the authors of~\cite{BerestyckiLaslierRayI, BerestyckiLaslierRayII} remove $-\chi(\Sigma)$ white vertices from the superposition graph and study the dimer model on the punctured graph $G$. %
Generalizing on the classical Temperleyan bijection, Berestycki, Laslier, and Ray provide a weight-preserving correspondence between dimer covers of $G$ and Temperleyan cycle rooted spanning forests (t-CRSF) of the initial graph $\Gamma$. (The weights on $G$ are defined out of the weights on $\Gamma$ is a suitable way.) Assuming that a sequence of graphs $\Gamma_k$ satisfy certain metric regularity conditions, the random walks on $\Gamma_k$ converge to the Brownian motion in the Skorokhod topology, and that the white vertices removed from the superposition graphs $G^k$ converge to points $p_1,\dots, p_{-\chi(\Sigma)}\in \Sigma$, Berestycki, Laslier, and Ray prove that t-CRSFs have a scaling limit in the Schramm topology. Moreover, they prove that the limit depends \emph{only} on the conformal type of the marked surface $(\Sigma,p_1,\dots, p_{-\chi(\Sigma)})$. Further, expressing the dimer height function $h_k$ on $G^k$ via the winding of branches of the t-CRSF on $\Gamma_k$ the authors of~\cite{BerestyckiLaslierRayI, BerestyckiLaslierRayII} prove that $h_k - \EE h_k$ converge to a certain \emph{universal} limit depending on $(\Sigma,p_1,\dots, p_{-\chi(\Sigma)})$ and nothing else.

As we already emphasized, the approach of~\cite{BerestyckiLaslierRayI,BerestyckiLaslierRayII} does not allow to identify the limit of $h_k - \EE h_k$ with the compactified free field (unlike in the simply connected case, where the identification with the $\GFF$ can be made by means of imaginary geometry coupling~\cite{berestycki2020dimers}). Our main goal therefore is to develop a setup that includes Temperleyan graphs such as considered by Berestycki, Laslier, and Ray, and, on the other hand, allows to apply the discrete complex analysis technique and use it to get a structural understanding of the limit. Note that since the limit of $h_k - \EE h_k$ constructed by Berestycki, Laslier, and Ray does not depend on the particular sequence of graphs, we can restrict our attention only to graphs with special geometric properties; however, such graphs must exist on each marked Riemann surface.

To describe our setup we begin with a choice of a \emph{uniformization} of a marked surface $(\Sigma, p_1,\dots, p_{-\chi(\Sigma)})$. As the number of marked points is conveniently equal to $-\chi(\Sigma)$, it is natural to fix a locally flat metric on $\Sigma$ with conical singularities at $p_1,\dots, p_{-\chi(\Sigma)}$ with cone angles $4\pi$. We also assume that each boundary component of $\Sigma$ has a neighborhood isometric to a vertical cylinder with horizontal boundary; such a metric always exists and is unique up to a global rescaling, see Proposition~\ref{prop:existence_of_metric_on_Sigma}.
Using that the metric is locally Euclidean, we extend the notion of a t-embedding introduced in~\cite{CLR1} by Chelkak, Laslier and Russkikh to this setup and consider bipartite graphs on $\Sigma$ whose dual possess a `regular enough' t-embedding into $\Sigma$ (see Sections~\ref{subsubsec:intro_t-embedding_prelim} and~\ref{subsubsec:intro_graph_assumptions}). If $\partial \Sigma\neq \varnothing$, then we rather consider t-embeddings on the double (see Section~\ref{subsec:intro_flat_metric}) of $\Sigma$ (which also is equipped with a locally flat metric due to our assumption about the boundary) that are symmetric under the natural involution. A t-embedding of the dual graph provides a natural choice for the dimer weights.

Let us now formulate our main results. We begin by discussing the special case of Temperleyan graphs considered by Berestycki, Laslier and Ray. In this case the shift of the compactified free field turns out to have a deep relation with the holonomy of the aforementioned locally flat metric with conical singularities. If $\partial\Sigma\neq \varnothing$, replace $\Sigma$ with its double and complement the set of $p_j$'s with its symmetric image (see Section~\ref{subsec:intro_flat_metric} for details, note that we again have $-\chi(\Sigma)$ marked points). Recall that the holonomy of the metric on $\Sigma$ along a smooth oriented loop $\gamma\subset \Sigma\smm\{ p_1,\dots,p_{-\chi(\Sigma)} \}$ is the operator $\chi(\gamma)\in \mathrm{SO}(2,\RR)\cong \TT = \{ z\in \CC\ \mid\ |z| = 1 \}$ such that, for a given point $p\in \gamma$, the parallel transportation of every tangent vector $v\in T_p\Sigma$ along $\gamma$ results in replacing $v$ with $\chi(\gamma)\cdot v$. The properties of the metric on $\Sigma$ imply that the holonomy depends only on the homology class of $\gamma$ in $\Sigma$, and so it induces a homomorphism $\chi: H^1(\Sigma, \ZZ)\to \TT$. There exists a harmonic differential $u_0$ such that
\[
  \chi(\gamma) = \exp(i\int_\gamma u_0),\qquad \forall [\gamma]\in H^1(\Sigma,\ZZ).
\]

The metric on $\Sigma$ also allows to define the (intrinsic) winding of smooth oriented curves. Given such a curve $\gamma:[0,1]\to \Sigma\smm\{ p_1,\dots, p_{-\chi(\Sigma)} \}$ we compute the winding of it by evaluating the total change of the oriented angle between $\gamma'(t)$ and the vector field obtained by the parallel transportation of $\gamma'(0)$ along $\gamma$. For example, the winding of a small simple loop oriented counterclockwise is equal to $2\pi$. If $\gamma$ is a general simple loop, then its winding is not necessary an integer times $2\pi$: if the holonomy of the metric is non-trivial, then the winding is equal to $-\int_\gamma u_0$ modulo $2\pi \ZZ$. One can deduce from the results of~\cite[Section~4]{BerestyckiLaslierRayI} that, given that the reference flow (see Section~\ref{subsec:intro_height_function} and~\eqref{eq:def_of_fA}) determining the dimer height function on $\Sigma$ is chosen properly, the increments of the height can be expressed via the winding of branches of t-CRSF computed with respect to the above defined metric on $\Sigma$ and normalized by $2\pi$. Thus, if we take this choice of the metric for granted, it is plausible to expect that the shift of the compactified free field arising as the limit of Temperleyan dimer height functions on $\Sigma$ will be $-(2\pi)^{-1}u_0$.

When $\partial\Sigma\neq\varnothing$ the above definition of $u_0$ does not specify it precise enough and we need to fix an additional normalization. In this case we can choose $u_0$ to be anti-invariant under the natural involution of the double, and such that for every simple oriented loop $\gamma$ whose cohomology class is \emph{anti-symmetric} under the involution on the double its winding is equal to $2\pi-\int_\gamma u_0$ modulo $4\pi \ZZ$. Note that in this way we fix $u_0$ up to adding a harmonic differential representing an integer element in the \emph{relative} cohomology group, that is, in $H^1(\Sigma; \partial 2\pi\Sigma, \ZZ)$. In other words, if $u_1$ is another choice, then both periods and \emph{integrals between boundary components} of $u_0 - u_1$ belong to $2\pi\ZZ$. We are now ready to present our results for Temperleyan graphs. We address the reader to Section~\ref{subsec:intro_relation_to_BLR} for more details.

\renewcommand{\thethmaa}{A}
\begin{thmaa}
  \label{thmaa:informal_temp}
  Given $(\Sigma, p_1,\dots, p_{-\chi(\Sigma)})$ there exists a sequence of graphs $G^k$ on $\Sigma$ with the following properties:

-- Each $G^k$ is of the form $\Gamma^k\cup (\Gamma^k)^\dagger\smm \{ p_1,\dots, p_{-\chi(\Sigma)} \}$ where $\Gamma^k\cup (\Gamma^k)^\dagger$ is the superposition of a weighted graph $\Gamma$ and its dual, and $p_1,\dots, p_{-\chi(\Sigma)}$ are positions of some white vertices of $G^k$.

-- Graphs $G^k$ satisfy all the assumptions imposed in~\cite{BerestyckiLaslierRayI, BerestyckiLaslierRayII} if the latter are stated with respect to the locally flat metric \emph{with conical singularities} chosen on $\Sigma$.

-- The sequence $G^k$ satisfies the assumptions imposed in Section~\ref{subsubsec:intro_graph_assumptions} and the dimer weights on $G^k$ defined in this section are gauge equivalent to the dimer weights considered in~\cite{BerestyckiLaslierRayI, BerestyckiLaslierRayII}. %

Assume moreover that the sequence $h_k - \EE h_k$ is \emph{tight} and that for any smooth 1-form $u$ the first moments of random variables $\int_\Sigma d(h_k - \EE h_k) \wedge u $ are bounded in $k$. Then $h_k - \EE h_k$ converge to $\int (\m - \EE\m)$ where $\m$ is the compactified free field normalized as in Section~\ref{subsec:intro_freefield} and with the shift $\psi_0 = -(2\pi)^{-1}u_0$ where $u_0$ is the holonomy 1-form chosen as above.
\end{thmaa}

Theorem~\ref{thmaa:informal_temp} is a particular case of Theorem~\ref{thma:main1}, see Theorem~\ref{thma:for_BLR}. Let us emphasize that the main result in~\cite{BerestyckiLaslierRayI, BerestyckiLaslierRayII} is stated when the graphs are satisfying aforementioned assumptions with respect to a \emph{smooth} metric on $\Sigma$. We expect the necessary generalization to appear in following works. At the moment however, combining the results of~\cite{BerestyckiLaslierRayI, BerestyckiLaslierRayII} with the theorem above we arrive to the following statement:

\begin{cor}
  \label{cor:identification_for_BLR}
  Assume that the conclusion of~\cite[Theorem~1.1]{BerestyckiLaslierRayI} can be generalized to the case when the metric on the surface $\Sigma$ has finitely many conical singularities. Then the limit of height fluctuations that appears in~\cite[Theorem~1.1]{BerestyckiLaslierRayI} is equal (in distribution) to the compactified free field normalized as in Section~\ref{subsec:intro_freefield} and with the shift $\psi_0 = -(2\pi)^{-1}u_0$ where $u_0$ is the holonomy 1-form chosen as above.
\end{cor}

Let us briefly argue why the choice of the above defined metric on $\Sigma$ fits naturally with the Temperleyan setup. Note that a locally flat metric $ds^2$ has a conical signularities of cone angle $4\pi$ at $p\in \Sigma$ if it is locally given by the pullback of the Euclidean metric along a double cover ramified at $p$. Assume now that we have a Temperleyan graph on the plane and look at its preimage under this branched cover. As we discuss in Section~\ref{subsec:intro_relation_to_BLR} (see in particular Figure~\ref{fig:adding_edges}), the combinatorics of the preimage at $p$ will be the one of a Temperleyan graph with a white vertex removed. This observation suggests that to define `nice' (say, isoradial, or by circle patterns~\cite{KLRR}) embeddings of Temperleyan graphs with removed white vertices it is natural to introduce conical singularities of cone angles $4\pi$ at these vertices. Finally, let us notice that in the isoradial setup the winding reference flow miraculously coincides with the reference flow defined by the infinite volume Gibbs measure with a maximal entropy, see Remark~\ref{rem:flows} for more detailed discussion.

\begin{rem}
  \label{rem:cylinder}
  Note that the normalization of $u_0$ might force it to be non-zero even when the metric has trivial holonomy. This happens already if $\Sigma$ is a vertical cylinder, which results the height jump between two boundary components to be half-integer in the scaling limit if one approximates the cylinder by Temperleyan graphs. To argue that this must be the case one can consider the dimer model on a square grid and the height function defined with respect to the usual $\tfrac14$ reference flow.
\end{rem}

We further investigate the nature of the shift of the compactified free field by including graphs with more general combinatorics in our general identification Theorem~\ref{thma:main1}. Namely, we allow the graphs to not have Temperleyan structure while keeping the boundary conditions intact. The relevant choice of the shift for the compactified free field is then determined by the geometry of the t-embedding of the graph, in particular, by the gauge class of a real valued Kasteleyn operator on it (see Section~\ref{subsec:intro_Kasteleyn_operator} and~\eqref{eq:def_of_K}). As a simple example of a graph on a torus demonstrates (see Example~\ref{intro_example:torus}), this shift may \emph{not} coincide with $-(2\pi)^{-1}u_0$ in the general case.

To support our identification result for non-Temperleyan graphs we also prove the following theorem asserting that a sequence of centered dimer height functions is tight provided that graphs satisfy our regularity conditions (See Theorem~\ref{thma:main2} for the detailed statement):

\renewcommand{\thethmaa}{B}
\begin{thmaa}
  \label{thmaa:tightness}
  Assume that $(\Sigma,p_1,\dots, p_{-\chi(\Sigma)})$ is generic and graphs $G^\delta$ satisfy all the conditions from Section~\ref{subsubsec:intro_graph_assumptions}. Then the family of centered height function $h^\delta - \EE h^\delta$ on $G^\delta$ is tight, and for any smooth 1-form $u$ second moments of $\int_\Sigma d(h_\delta - \EE h_\delta)\wedge u$ are bounded in $\delta$ uniformly in $u$ taken from any set bounded with respect to the $\mC^1$-norm.
\end{thmaa}

The marked surface $(\Sigma,p_1,\dots, p_{-\chi(\Sigma)})$ is ``generic'' if, in particular, $\Sigma$ is arbitrary and $p_1,\dots, p_{-\chi(\Sigma)}$ are chosen generically, or if $\Sigma$ has the topology of a multiply connected domain and $p_1,\dots, p_{-\chi(\Sigma)}$ are arbitrary points in the bulk.

\subsection*{Discussion of the proof}

Let us now give a short overview of the strategy that we use to obtain our results. As a starting point we take the approach developed by Dub\'edat~\cite{DubedatFamiliesOfCR} to analyse the dimer height function on a torus. Working with isoradial graphs, Dub\'edat introduces a discrete version $K_\alpha$ of a perturbed Dirac operator of the form $\begin{pmatrix} 0 & \dbar + \alpha \\ \partial - \bar{\alpha} & 0 \end{pmatrix}$, where $\alpha$ is an arbitrary $(0,1)$ form. A generalization of the Kasteleyn theorem allows to link $\det K_\alpha$ and the observable $\EE \exp(i \int \Im \alpha \wedge dh)$ where $h$ is a properly defined dimer height function. To derive the asymptotics of this observable Dub\'edat considers the logarithmic variation of $\det K_\alpha$ with respect to $\alpha$. The resulting object is shown to resemble the Quillen's variation identity~\cite{Quillen} in the scaling limit, which allows to relate the limit of the observable $\EE \exp(i \int \Im \alpha \wedge dh)$ with the compactified free field on the torus.

To generalize this to an arbitrary Riemann surface, we mimic the same construction for the family of perturbed Dirac operators $K_\alpha$ and study the logarithmic variation of $\det K_\alpha$. %
There are various complications comparable to the setup of Dub\'edat: we have to deal with non-isoradial graphs, to take care of the boundary and of the conical singularities while constructing and estimating $K_\alpha^{-1}$, to deal with more involved algebraic and geometric structures arising when the genus of the surface is bigger than 1. This requires a use of various techniques including the recently developed regularity theory for discrete holomorphic functions on t-embeddings~\cite{CLR1}, and extending Dub\'edat's results on multivalued discrete holomorphic functions~\cite{DubedatFamiliesOfCR} to study $K_\alpha^{-1}$ at conical singularities. Similarly as in the case of a torus, $\det K_\alpha$ is naturally related with a certain observable of the dimer height function. The asymptotics of the logarithmic variation of $\det K_\alpha$ allows us to control this observable and, similarly to the torus case, to relate it with the Quillen's determinant. In order to apply this to relating the scaling limit of the dimer height function with the compactified free field we generalized a `bosonization identity' of Luis Alvarez-Gaum\'e et al.~\cite{Bosonization} (see Lemma~\ref{lemma:bosonization_identity}).

It is worth noting that various versions of the aforementioned perturbation technique have been used to analyze the dimer model in several setups in which a non-trivial topology plays a role. Besides the torus case, in~\cite{DubedatFamiliesOfCR} the asymptotics of monomer correlators on the plane was treated; in this case the non-trivial topology appears when one considers monomers as punctures. In the work~\cite{dubedat2018double} the Dub\'edat--Kenyon `topological observables' (see also~\cite{kenyon2014conformal}) of the double-dimer model are shown to converge to an instance of the isomonodromic tau function in the scaling limit. The work~\cite{dubedat2018double} has been developed further by Basok and Chelkak~\cite{basok2021tau} and Bai and Wan~\cite{bai2023crossing}.%
In~\cite{basok2024nesting} a similar approach is used to study nesting of double-dimer loops.

\subsection*{Relation to the works of Cimasoni and Costa-Santos and McCoy} Dimer model sampled on a Riemann surface equipped with a locally flat metric with conical singularities was earlier considered by Cimasoni~\cite{cimasoni2012discrete}. Given an isoradial (not necessary Temperleyan) bipartite graph embedded into such a surface, Cimasoni introduced a construction of $2^{2g}$ discrete Dirac operators associated with spin structures on the surface. This begins with a discrete $\dbar$ operator that mimics the standard construction of~\cite{KenyonCriticalPlanarGraphs}. By its definition, this operator acts on (discrete) functions and returns (discrete) $(0,1)$-forms. This discrete $\dbar$ is then twisted by a spin structure, which makes it acting on sections of a spin line bundle. It is proved in~\cite{cimasoni2012discrete} that whenever discrete holomorphic spinors converge to a continuous section of the corresponding spin line bundle, the limit is actually holomorphic. In addition to this, the work~\cite{cimasoni2012discrete} suggests geometric conditions that are necessary and sufficient for these discrete Dirac operators to be gauge equivalent to real Kasteleyn matrices on the corresponding graph. Notably, these conditions do not necessary imply that the metric on the surface has trivial holonomy, rather that the cone angles are odd multiple of $2\pi$ and a certain angle condition. An analogy of the last condition in our work would be the condition $\alpha_G = 0$, where $\alpha_G$ is the $(0,1)$-form introduced in Section~\ref{subsec:intro_Kasteleyn_operator}. Note that if this condition is satisfied, then the scaling limit of dimer height fluctuations is the compactified free field whose instanton component is integer.

We conclude the introduction by indicating the following corollary of our result. Kasteleyn's formula expresses the partition function of the dimer model on a closed Riemann surface $\Sigma$ as a linear combination of $2^{2g}$ signed partition functions enumerated by spin structures on $\Sigma$, where $g$ is the genus~\cite{Cimasoni}. On the continuous side, such a decomposition appears naturally in the context of conformal field theory~\cite{Bosonization}. Computations made from the conformal field theory perspective suggest that the ratio of any two aforementioned signed partition functions should converge in the scaling limit to a concrete expression involving theta constants. This hypothesis was verified numerically by Costa-Santos and McCoy~\cite{costa2002dimers} in the case of genus 2 surfaces approximated by `lattices' (cf.~Example~\ref{intro_example:pillow_surface}). Our result implies the aforementioned convergence for all $g$ when the Riemann surface $(\Sigma,p_1,\dots,p_{2g-2})$ is generic; the genericity assumption can be dropped given that the result of Berestycki, Laslier and Ray~\cite{BerestyckiLaslierRayI, BerestyckiLaslierRayII} is extended to metrics with conical singularities. See Section~\ref{subsec:intro_ratio_of_partition_functions} for details.

We address the reader to Section~\ref{sec:background_and_formulations} for more detailed background and strict formulation of main results. Section~\ref{subsec:organization} contains the organization of the rest of the paper.

\subsection*{Acknowledgments} The author thanks Beno\^it Laslier for introducing him to the problem, and Nathana{\"e}l Berestycki, Beno{\^i}t Laslier and Gourab Ray for discussions on the subject. The author is grateful to Dmitry Chelkak and Konstantin Izyurov for many enlightening advises and for making the work on this paper possible. The author thanks David Cimasoni for his valuable comments on the first version of the paper. The work on this paper was finalized while the author was visiting the Institute for Pure and Applied Mathematics (IPAM), which is supported by the National Science Foundation (Grant No. DMS-1925919). The author thanks the anonymous referees for their valuable comments and suggestions. This project has been started at the ENS Paris with a partial support of the ENS-MHI Chair in Mathematics and the ANR-18-CE40-0033 project DIMERS, the author is grateful to the ENS for the hospitality. The author was supported by Academy of Finland via the project ``Critical phenomena in dimension two'' and by Research Council of Finland via the project ``Lattice models and conformal field theory''.

\section{Background and formulation of main results}
\label{sec:background_and_formulations}

In this section we introduce main notions and concepts we need to formulate and discuss our main results. We formulate our results in Section~\ref{subsec:intro_main_results}.

\subsection{Locally flat metrics with conical singularities on a Riemann surface}
\label{subsec:intro_flat_metric}

Let $\Sigma_0$ be a Riemann surface of genus $g_0$ with $n\geq 0$ boundary components $B_0,\dots, B_{n-1}$ (if $n=0$, then we assume that $\partial\Sigma_0 = \varnothing$). We will always assume that $-\chi(\Sigma_0) = 2g_0 - 2 + n \geq 0$, that is, $\Sigma_0$ has non-positive Euler characteristics. If $2g_0 - 2+n>0$, then we assume that $2g_0-2+n$ distinct points $p_1,\dots, p_{2g_0 -2+n}$ in the interior of $\Sigma_0$ are given. If $\partial \Sigma_0\neq \varnothing$, then it is useful to consider the \emph{double} of $\Sigma_0$. Is is constructed by gluing $\Sigma_0$ with $\Sigma_0^\op$ along the boundary, where $\Sigma_0^\op$ is a copy of $\Sigma_0$ with the reversed orientation. The double comes with a natural conformal structure and an anti-holomorphic involution $\sigma$. We can also double the collection $p_1,\dots, p_{2g_0 -2+n}$ by putting $p_{2g_0 -2+n+i} = \sigma(p_i)$, $i = 1,\dots, 2g_0-2+n$. Note that $g = 2g_0-1+n$ is the genus of the double, and the total amount of marked points on the double is $2g-2$.

Let $\Sigma$ denote the Riemann surface equal to $\Sigma_0$ if $\partial \Sigma_0 = \varnothing$ or to the double of $\Sigma_0$ if $\partial \Sigma_0 \neq \varnothing$. Let $g$ denote the genus of $\Sigma$ in both cases. In what follows we will usually be working with $\Sigma$ rather than with $\Sigma_0$, extending all the other objects to the double if $\Sigma\neq \Sigma_0$. The presence the boundary of $\Sigma_0$ is indicated by the presence of the involution $\sigma$; if the latter is given, then we always keep track of how the objects change under its action.

It can be proven (see Proposition~\ref{prop:existence_of_metric_on_Sigma}) that there exists a unique singular Riemannian metric $ds^2$ on $\Sigma$ with the following properties:
\begin{enumerate}
  \item The area of $\Sigma$ is 1.
  \item $ds^2$ is smooth outside $p_1,\dots, p_{2g-2}$. Any point $p\in \Sigma_0\smm\{ p_1,\dots, p_{2g-2} \}\cup \partial\Sigma_0$ has a neighborhood isometric to an open subset of the Euclidean plane, that is, $ds^2$ is locally flat outside $p_1,\dots, p_{2g-2}$.
  \item For any $j = 1,\dots, 2g-2$ there is a neighborhood of $p_j$ isometric to a neighborhood of the origin of $\CC$ with the metric $|d(z^2)|^2$. In other words, $ds^2$ has conical singularities of cone angles $4\pi$ at $p_1,\dots, p_{2g-2}$.
  \item The metric $ds^2$ is symmetric with respect to $\sigma$ if the latter is given.
\end{enumerate}
We will often be calling a pair $(\Sigma, ds^2)$ a \emph{locally flat surface} for simplicity, not mentioning conical singularities. The notation $\dist(x,y)$ will stand for the distance between $x,y\in \Sigma$ measured with respect to the inner metric induced by $ds^2$.

Note that $ds^2$ may have non-trivial holonomy: a parallel transport along a non-contractible loop may create a non-trivial turn. This defines a cohomology class on $\Sigma$ which can be represented using an anti-holomorphic 1-form $\alpha_0$ such that the holonomy map along a loop $\gamma$ is given by the multiplication by $\exp\left( 2i \int_\gamma \Im \alpha_0 \right)$. Note that $\alpha_0$ can be chosen such that
\[
  \sigma^*\alpha_0 = \bar{\alpha}_0
\]
if $\sigma$ is present. Moreover, according to Proposition~\ref{prop:existence_of_metric_on_Sigma}, the metric $ds^2$ has the form $|\omega_0|^2$ where $\omega_0$ is a smooth $(1,0)$-form on $\Sigma_0$ satisfying $(\dbar - \alpha_0)\omega_0 = 0$ (the latter condition means that the multivalued $(1,0)$-form $\exp\left( -2i\int^p\Im\alpha_0 \right)\omega_0(p)$ is holomorphic). If $\sigma$ is given, then we can choose $\omega_0$ such that $\sigma^*\omega_0 = \bar{\omega}_0$.

\begin{rem}
  \label{rem:why_01_forms}
  It is more common to represent cohomology classes with harmonic 1-forms. There is however a linear isomorphism between the space of anti-holomorphic $(0,1)$-forms and the space of real-valued harmonic 1-forms given by $\alpha\mapsto 2\Im \alpha$ (see Section~\ref{subsec:cff_formal_def}). For a certain technical reason using $(0,1)$-forms is more convenient for us; it does not create any difference due to the aforementioned isomorphism. 
\end{rem}

Let us choose a base point $p_0\in \Sigma$. If $\sigma$ is present, then we assume also that $\sigma(p_0) = p_0$. We define
\begin{equation}
  \label{eq:def_of_omega}
  \omega(p) = \exp\left( -2i\int_{p_0}^p\Im\alpha_0 \right)\omega_0(p)
\end{equation}
and
\begin{equation}
  \label{eq:def_of_Tt}
  \Tt(p) = \int_{p_0}^p\omega.
\end{equation}
The holomorphic $(1,0)$-form $\omega$ and the function $\Tt$ are defined consistently only in a simply-connected vicinity of the point $p_0$; extending them analytically along a loop $\gamma$ amounts in replacing $\omega$ with $a\omega$ and $\Tt$ with $a\Tt + b$, where $a,b\in \CC$ and $|a| = 1$. We thus regard $\omega$ and $\Tt$ as a multivalued $(1,0)$-form and a multivalued function on $\Sigma$ respectively. We intentionally keep a little ambiguity in these definitions: in what follows $\omega$ and $\Tt$ will be used only as building blocks for defining other objects, and in all the cases the replacements of $\omega$ and $\Tt$ with $a\omega$ and $a\Tt + b$ will not affect the constructions. So, for example, we can freely write $ds^2 = |\omega|^2 = |d\Tt|^2$. Note that $\Tt$ is locally one-to-one outside conical singularities, and a branched double cover at the conical singularities.

Using the differential $\omega_0$ we can define the (intrinsic, cf.~\cite{berestycki2020dimers}) winding of a smooth path $\gamma:[0,1]\to \Sigma\smm\{ p_1,\dots, p_{2g-2} \}$
\begin{equation}
  \label{eq:def_of_wind_w0}
  \wind(\gamma, \omega_0) = \Im\int_0^1 \frac{d}{dt} \log \omega_0(\gamma'(t))\,dt.
\end{equation}
Note that if $\gamma$ is a loop, then $\wind(\gamma, \omega_0)\in 2\pi \ZZ$. Note also that 
\begin{equation}
  \label{eq:wind_w}
  \wind(\gamma, \omega) = \wind(\gamma, \omega_0) - 2\Im \int_{\gamma([0,1])}\alpha_0
\end{equation}
is the winding defined by the metric connection of the metric $ds^2$. In particular, such winding of a loop may not be integer times $2\pi$.

Note that $\omega_0$ and $\alpha_0$ are not determined uniquely by the metric $ds^2$. We can replace $\alpha_0$ with $\alpha_1$ if $\int_\gamma\Im(\alpha_1 - \alpha_0)\in \pi \ZZ$ for each loop $\gamma$; in this case we can also put $\omega_1(p) = \exp(2i\int^p \Im (\alpha_1 - \alpha_0))\omega_0(p)$. This in particular allows us to tune $\alpha_0$ and $\omega_0$ in such a way that for every simple smooth loop $\gamma\subset \Sigma\smm\{ p_1,\dots, p_{2g-2} \}$ such that $\sigma(\gamma) = \gamma$ we have
\begin{equation}
  \label{eq:wind_condition}
  \wind(\gamma,\omega_0) \in 2\pi + 4\pi \ZZ,
\end{equation}
see Corollary~\ref{cor:wind_condition}.

\subsection{Assumptions on the graphs embedded into a surface}
\label{subsec:intro_graphs_on_Sigma0}

Given a bipartite graph $G$, we will be calling its vertices ``black'' and ``white'' to distinguish its bipartite classes as usual. We use the notation $B$ and $W$ for the set of black and white vertices of $G$, and keep the same notation for the vertices of $G$ and for the corresponding faces of the dual graph $G^\ast$. Thus, for example, $b\in B$ may denote a point on a surface, corresponding to a vertex of $G$, and a polygon corresponding to a face of $G^\ast$ at the same time. We say that a graph is embedded into $\Sigma$ if it is drawn on $\Sigma$ in such a way that all its vertices are mapped onto distinct points, all its edges are simple curves which can intersect each other only at the vertices, and the complement to the union of all the edges is a union of topological discs, which we call faces of the graph. 

To be able to apply methods from discrete complex analysis we impose a number of assumptions on the graphs we consider. We first formulate them informally:

(i) We need the graph to possess a good notion of discrete holomorphicity, with an appropriate regularity theory behind. For this we assume that the dual graph $G^\ast$ is t-embedded into $\Sigma$, and the corresponding origami map is small, so that the t-embedding is approximating the complex structure of the surface (see below for the definitions of a t-embedding and an origami map). This is formulated in Assumptions~\ref{item:intro_G_well_embedded}--~\ref{item:intro_reg_part} below.

(ii) We need the graph to be very regular near conical singularities, because the local analysis is more delicate there. For this we assume that the graph is locally a double cover of a Temperleyan isoradial graph at conical singularities, see Assumption~\ref{item:intro_conical_sing}.

(iii) We need the discrete Cauchy--Riemann operator acting on functions on black vertices of $G$ to approximate the Cauchy--Riemann operator acting on smooth functions well enough. This is fixed by Assumption~\ref{item:intro_dbar_adapted}, and by choosing positions of black vertices in Assumption~\ref{item:intro_Gast_t-embedding}.

Before we dive further into the details, let us mention that the sequence of graphs constructed in Example~\ref{intro_example:triangular_graphs} is already sufficient for the reconstruction result needed for Berestycki, Laslier, and Ray work~\cite{BerestyckiLaslierRayI, BerestyckiLaslierRayII}, which is the main motivation of this paper.

\subsubsection{T-embedding, its origami map and a circle pattern}
\label{subsubsec:intro_t-embedding_prelim}
We now dive into details. Before we begin listing the assumptions, we need to recall the notion of a \emph{t-embedding} introduced in~\cite{CLR1}. Assume that $G$ is a bipartite planar graph. An embedding of its dual $G^\ast$ (endowed with the natural planar structure) into the plane is called a t-embedding if it is a proper embedding (i.e. there is no self-intersections and self-overlappings), all edges are mapped onto straight segments and the following \emph{angle condition} is satisfied: for each vertex $v$ of $G^\ast$, the sum of ``black'' angles (i.e. those contained in black faces) at this vertex is equal to the sum of ``white'' angles. Note that the embedding of $G^\ast$ does not need to have any geometric relation to any embedding of $G$; we only require that the planar structure of $G^\ast$ inherited from the embedding coincide with the planar structure on $G^\ast$ as of the dual graph to $G$. We address the reader to Section~\ref{subsec:t-embedding_def} for more details.

A t-embedding of $G^\ast$ into $\CC$ gives a rise to an \emph{origami map} $\Oo:\CC\to \CC$ introduced in~\cite{CLR1}. The map $\Oo$ is defined inductively: one first declare it to be an identity on some white face, and then, each time one crosses an edge, one applies a reflection with respect to the image of this edge under the origami map, which is already defined up to this edge. In other words, origami map corresponds to the procedure of folding the plane along edges of $G^\ast$. The angle condition is necessary and sufficient for this procedure to be consistent; we refer the reader to Definition~\ref{defin:of_origami} for the details. Note that the origami map is defined up to a translation and a rotation.

Following~\cite{KLRR} we can associate a \emph{circle pattern} with any t-embedding of $G^\ast$. To this end, let us pick an arbitrary vertex $b_0$ from $G$ and identify it with a point in $\CC$. The angle condition ensures then that we can map all other vertices of $G$ into $\CC$ inductively by requiring that for any pair of incident vertices $b$ and $w$ the corresponding points on the plane are symmetric with respect to the line passing through the edge of the t-embedding dual to $wb$. It is easy to verify that for any vertex $v$ of the t-embedding all the vertices of $G$ incident to the corresponding face of $G$ lie on a circle with the center at $v$. Note also that, if we connect vertices of the circle pattern by straight segment according to the combinatorics of $G$, than we do not necessary get a proper embedding of $G$ into plane.

We now introduce the class of t-embeddings we will be working with. Let $0<\lambda<1$ and $\delta>0$ be given. We call a t-embedding of $G^\ast$ \textbf{weakly uniform} (with respect to these parameters) if the following conditions are satisfied:
\begin{enumerate}
  \item\label{item:full-plane_bdd_density} For any $r>0$ and $z\in \CC$ the number of vertices of $G^\ast$ in the disc $B(z,r)$ is at most $\lambda^{-1}r\delta^{-2}$.

  \item Each black face of the t-embedding is $\lambda\delta$-fat, i.e. contains a disc of radius $\lambda\delta$.
  \item\label{item:full-plane_white_angles} Each white face of the t-embedding has all its angles bounded from $0$ and $\pi$ by $\lambda$.
    
  \item\label{item:full-plane_unif_black_face} Each white face of the t-embedding has a black neighbor with all its edges of length at least $\lambda\delta$.
\end{enumerate}

Note that all these assumptions would follow from a stronger assumption $\unif$, which forces all the edges of t-embedding to have comparable lengths and the angles to be uniformly bounded from $0$ and $\pi$. We, however, cannot afford having $\unif$, as will be clear from the crucial Example~\ref{intro_example:triangular_graphs}.

Besides these ``soft'' geometric properties needed for the application of the regularity theory from~\cite{CLR1}, there is another important characteristics of a t-embedding, namely the asymptotic behaviour of the origami map $\Oo$ in the small mesh size limit. This asymptotics determines the asymptotic properties of discrete holomorphic functions on t-embeddings: depending on the small mesh size limit of origami maps discrete holomorphic functions may converge to functions which are or are not holomorphic.
Thus, if we want our graphs to approximate the conformal structure of the plane, we need to take care of the limit origami maps. To this end we impose the very strong \textbf{$O(\delta)$-small origami} assumption: it requires the origami map $\Oo$ to satisfy
\begin{equation}
  \label{eq:intro_small_origami_assumption}
  |\Oo(z)| \leq \lambda^{-1}\delta
\end{equation}
holds for all $z$. Note that this assumption implies that the t-embedding has a circle pattern such that all the radii of the corresponding circles are at most $2\lambda^{-1}\delta$, see~\cite{KLRR}.

Finally, let us say that a t-embedding of a graph $G^\ast$ into $\CC$ is a \emph{full-plane} t-embedding if the union of its bounded faces exhausts $\CC$.

\subsubsection{Assumptions on the graph embedded into a Riemann surface}
\label{subsubsec:intro_graph_assumptions}

The definition of the notion of a t-embedding is purely local, hence it can be translated to the case when $G$ is embedded into a locally flat surface with conical singularities verbatim. In this case we additionally require that some of the vertices of $G^\ast$ are mapped to conical singularities, and that all face angles at these vertices are less than $\pi$. The procedure of constructing the origami map is still locally defined, but may not have a global extension. Fix a white face of $G^\ast$ and identify it with a polygon on the Euclidean plane isometrically and preserving the orientation. Repeating the inductive construction of the origami starting from this face we get a multivalued function $\Oo$ on $\Sigma$, having the same monodromy properties as $\Tt$ introduced above (one can make $\Oo$ to be a function of $\Tt$ actually). We call this function an origami map, keeping the ambiguity in its definition for the same reason we did it for $\Tt$.

Assume now that we are given a locally flat Riemann surface $(\Sigma, ds^2)$ with conical singularities and possibly with an involution $\sigma$ as constructed in Section~\ref{subsec:intro_flat_metric}, and a bipartite graph $G$ embedded into $\Sigma$. Let $0<\lambda<1$ and $0<\delta < \lambda^2$ be fixed, let $\Tt$ be defined by~\eqref{eq:def_of_Tt}. We say that $G$ is \textbf{$(\lambda, \delta)$-adapted} if the following conditions are satisfied:

\begin{enumerate}
  \item \label{item:intro_metric_assumptions} For any two distinct $i,j = 1,\dots, 2g-2$ we have $\dist(p_i,p_j)\geq 4\lambda$ and each loop on $\Sigma$ has the length at least $4\lambda$. Note in particular that for any point $p\in \Sigma$ situated at the distance at least $\lambda$ from conical singularities the restriction of $\Tt$ to the corresponding ball $B_\Sigma(p, \lambda)$ isometrically identifies it with a disc in $\CC$.
  \item \label{item:intro_G_well_embedded} All edges of $G$ are smooth curves of length at most $\lambda^{-1}\delta$.
  \item \label{item:intro_Gast_t-embedding} The graph $G^\ast$ is t-embedded into $\Sigma$. Each vertex of $G$ is at the distance at most $\lambda^{-1}\delta$ from the corresponding face of $G^\ast$ (note that we do not require the vertices of $G^\ast$ to belong to faces of $G$). 
  \item \label{item:intro_reg_part} For each point $p\in \Sigma$ at the distance at least $\lambda$ from the conical singularities, the map $\Tt$ restricted to the open disc $B_\Sigma(p, \lambda)$ identifies $G^\ast\cap B_\Sigma(p, \lambda)$ with a subgraph of a full-plane weakly uniform t-embedding having $O(\delta)$-small origami. Moreover, the image of black vertices $G\cap B_\Sigma(p, \lambda)$ under $\Tt$ are mapped to the vertices of a circle pattern associated with this t-embedding.
  \item \label{item:intro_conical_sing} For each $j = 1,\dots, 2g-2$ the graphs $B_\Sigma(p_j, 2\lambda)\cap G$ and $B_\Sigma(p_j, 2\lambda)\cap G^\ast$ are invariant under the rotation by $2\pi$ around $p_j$. The map $\Tt$ restricted to $B_\Sigma(p_j, 2\lambda)$ maps $B_\Sigma(p_j, 2\lambda)\cap G$ onto a subgraph of a full-plane Temperleyan isoradial graph (i.e. a superposition of an isoradial graph and its dual, see Figure~\ref{fig:graph_near_conical}), and each vertex of $B_\Sigma(p_j, 2\lambda)\cap G^\ast$ to the circumcenter of the corresponding face of $\Tt(G)$ (cf.~\cite[Section~8.3]{CLR1}).

  \item \label{item:intro_boundary_part} If the involution $\sigma$ is present, then both $G$ and $G^\ast$ are invariant under it. Moreover, the boundary arcs of $\Sigma_0$ are composed of edges of $G$ (the corresponding cycles are called \emph{boundary cycles}), do not contain vertices of $G^\ast$ and cross edges of $G^\ast$ perpendicularly.

  \item \label{item:intro_dbar_adapted} This is the most technical and restrictive condition, which can be thought of as a `good approximation' condition for the discrete Cauchy--Riemann operator. Let $w$ be an arbitrary white face of $G^\ast$ at the distance at least $\lambda$ from conical singularities. Let $b_1,\dots, b_k$ be the black vertices of $G$ inticident to $w$ and listed in the counterclockwise order, and let $v_i$ be the vertex of $G^\ast$ (a face of $G$) incident to $b_{i+1}$,$w$ and $b_i$. We require that
    \[
      \sum_{i = 1}^k (\Tt(v_i)-\Tt(v_{i-1}))\Tt(b_i) = 0.
    \]
\end{enumerate}

\begin{figure}
  \centering
  \includegraphics[clip, trim=0cm 0cm 0cm 0cm, width = 0.66\textwidth]{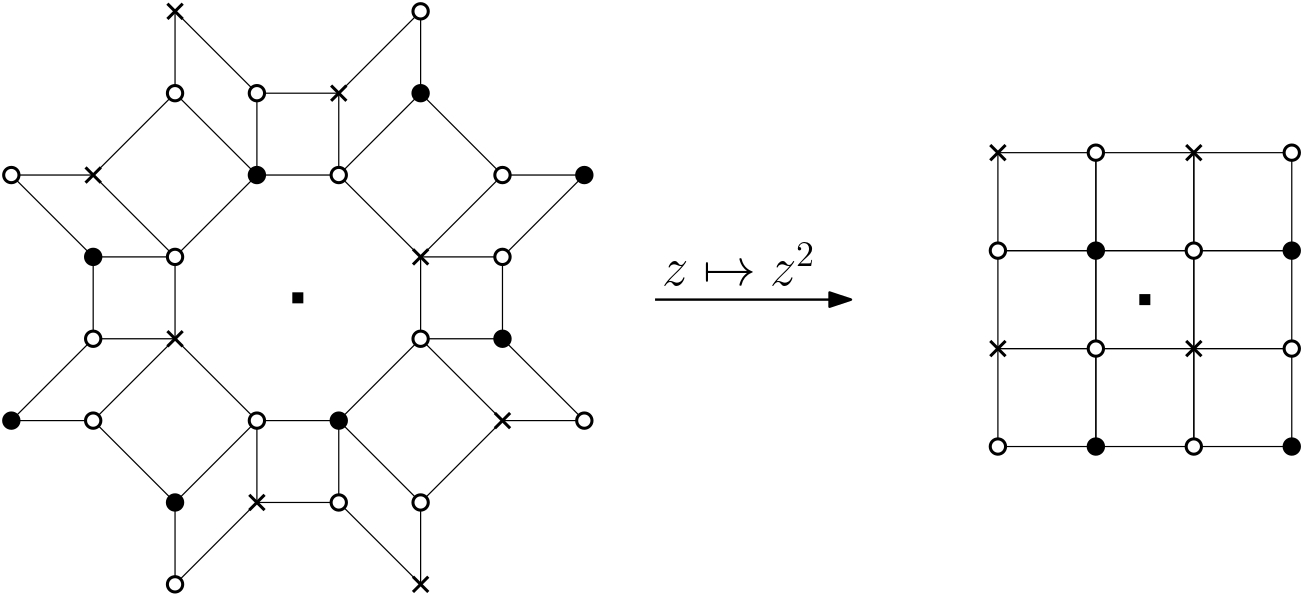}
  
  \vskip0.03\textwidth
  \includegraphics[clip, trim=0cm 0cm 0cm 0cm, width = 0.86\textwidth]{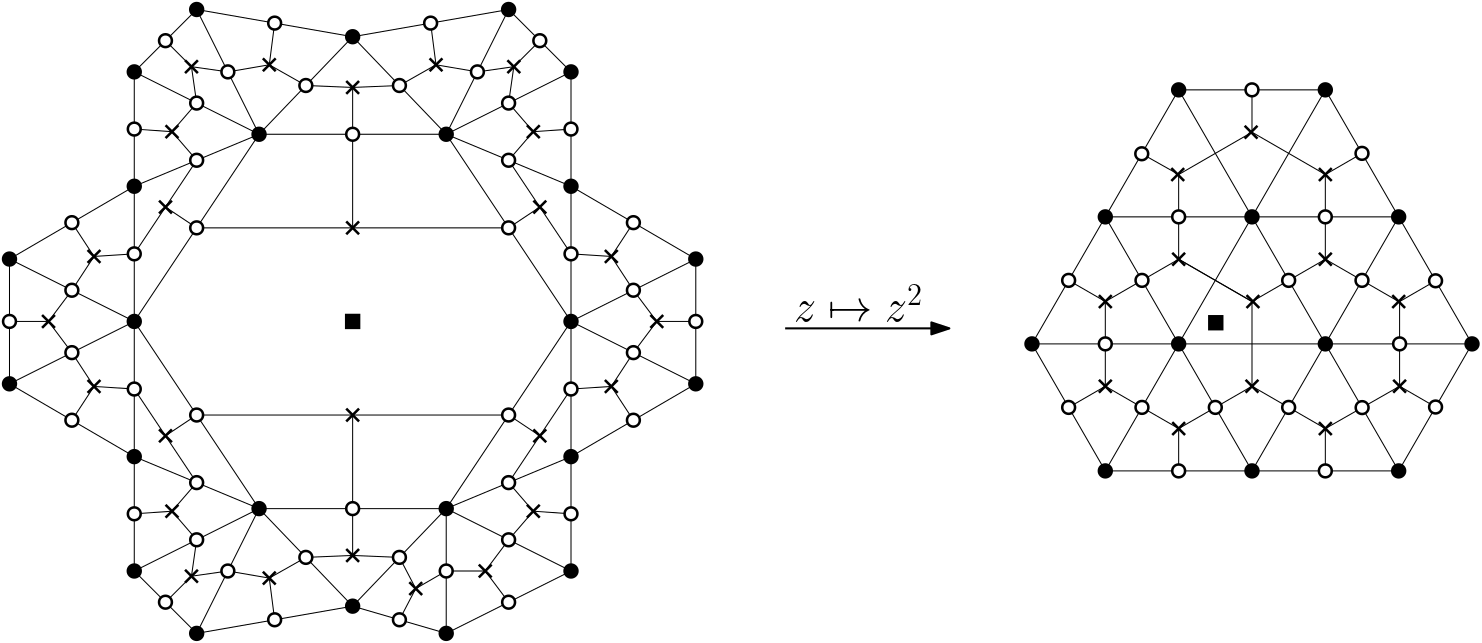}

  \caption{Two examples of a double cover of a Temperleyan isoradial graph branched around a circumcenter of a face. On the fist picture we have the superposition of two square lattices (a square lattice again), the second one corresponds to the superposition of a triangular lattice an its dual hexagonal lattice. In both cases the double cover is drawn only schematically.}
  \label{fig:graph_near_conical}
\end{figure}

\noindent If the assumptions above are satisfied, then we assign $\delta$ to be the mesh size of $G$. We put
\[
  G_0 = G\cap \Sigma_0\smm\partial \Sigma_0.
\]
We say that $G_0$ \emph{is obtained from} $G$.

We conclude this section by specifying edge weights of $G$ given it satisfies assumptions above. Let $wb$ be an edge of $G$ and let $v_1v_2$ be the dual edge of $G^\ast$. Then we set
\begin{equation}
  \label{eq:def_of_w}
  \w(wb) = \dist(v_1,v_2).
\end{equation}

\subsection{Dimer model on a graph and its double}
\label{subsec:intro_dimer_model}

Let $G$ be a finite bipartite graph, $B,W$ denote the bipartite classes of its vertices. A dimer cover of $G$ is a subset of edges of $G$ covering each vertex exactly ones --- in other words, dimer cover is a perfect matching of $G$. Let $\w: \mathrm{Edges}(G)\to \RR_{>0}$ be a weight function and assume that $G$ admits at least one dimer cover. The dimer model on $G$ assigns a probability measure on the set of all dimer covers defined by
\[
  \PP[D] = \frac{1}{\Zz_G} \prod_{wb\in D}\w(wb).
\]
where
\begin{equation}
  \label{eq:def_of_Zz}
  \Zz_G = \sum_{D\text{ - dimer cover of }G}\prod_{wb\in D}\w(wb).
\end{equation}

Assume now that $G$ is embedded into a surface $\Sigma$ and is $(\lambda,\delta)$ adapted, i.e. satisfies all the assumptions from Section~\ref{subsec:intro_graphs_on_Sigma0}. Let $G_0$ be the graph on $\Sigma_0$ obtained from $G$. The main objective of our work is to analyze the dimer model on $G_0$ with respect to the weight function~\eqref{eq:def_of_w}. However, we prefer to work with dimer covers of $G$, even if $\partial \Sigma_0\neq \varnothing$ and so $G\neq G_0$. In the latter case the invariance of $G$ under the involution $\sigma$ allows us to extend dimer covers on $G_0$ to dimer covers on $G$. For this, fix a dimer cover $E$ of the union of boundary cycles and for each dimer cover $D_0$ of $G_0$ define $D = D_0\cup E\cup \sigma(D_0)$. This gives a bijection between dimer covers of $G_0$ and symmetric dimer covers of $G$ containing $E$. Given $E$, we define the $\sigma$-invariant dimer model on $G$ to be the probability measure on dimer covers $D$ of the form above which comes from the dimer model on $G_0$ via the aforementioned bijection. In what follows we will be working with the dimer model on $G$ which is $\sigma$-invariant if $\sigma$ is present. We will specify $E$
 later, when it will be convenient for us to make a particular choice.

\subsection{Kasteleyn operator for \texorpdfstring{$G$}{G}}
\label{subsec:intro_Kasteleyn_operator}

An important technical role in analyzing the dimer model is played by a Kasteleyn operator $K$ associated with the graph $G$. Recall that a Kasteleyn operator of a weighted bipartite graph embedded into a surface is any matrix $(K(w,b))_{w\in W, b\in B}$ such that 
\[
  |K(w,b)| = \begin{cases}
    \w(wb),\quad w\sim b,\\
    0,\quad \text{else,}
  \end{cases}
\]
and such that for any face $v$ of $G$ with the boundary vertices $b_1,w_1,b_2,\dots,b_k,w_k$ listed consequently we have
\begin{equation}
  \label{eq:Kasteleyn_condition}
  \prod_{j = 1}^k \frac{K(w_j,b_j)}{K(w_j, b_{j+1})}\in (-1)^{k+1}\RR_{>0};
\end{equation}
this relation is called the \emph{Kasteleyn condition}. Note that the definition of $K$ implies that for any dimer cover $D_0$ of $G$ and a random dimer cover $D$ sampled with respect to the dimer model associated with weights on $G$ we have
\[
  \PP[D = D_0] \sim \prod_{wb\in D_0} |K(w,b)|
\]
where $\sim$ means equality up to a multiplicative constant which does not depend on $D_0$.

Note that any t-embedding of the dual of a planar bipartite graph induces a natural choice of the Kasteleyn operator on it according to~\cite{CLR1} (generalizing the classical `critical' Kasteleyn operator on isoradial graphs~\cite{KenyonCriticalPlanarGraphs}). This construction however makes use of a local coordinate on the plane that create a certain difficulty. 

Assume first that the metric $ds^2$ has trivial holonomy. Let $G$ be a $(\lambda,\delta)$-adapted graph on $\Sigma$. Given an edge $wb$ of $G$ let $v_1v_2$ be the dual edge of $G^\ast$ oriented such that $w$ is on the left. Then the recipe in~\cite{CLR1} suggests us to define $K(w,b) = \Tt(v_2) - \Tt(v_1)$, and as soon as the holonomy of the metric on $\Sigma$ is trivial this definition is consistent and $K$ satisfies Kasteleyn condition around each face of $G$ which does not contain a conical singularity, while at conical singularities the sign of the alternating product is opposite to what is required. To get rid of the latter inconsistency we modify the definition of $K$ as follows. Let $\gamma_1,\dots, \gamma_{g-1}$ be simple non-intersecting paths on $G^\ast$ connecting $p_1,\dots, p_{2g-2}$ pairwise. If $\partial \Sigma_0\neq \varnothing$, then we assume that $\sigma(\gamma_i) = \gamma_i$. We now replace $K(w,b)$ with $-K(w,b)$ if $wb$ intersects $\gamma_i$ for some $i = 1,\dots, 2g-2$. In this way $K$ will satisfy the Kasteleyn condition everywhere.

Assume now that the holonomy of $ds^2$ is non-trivial. In this case the definition of $K$ introduced above is not consistent along non-contractible loops as $\Tt$ may have a multiplicative monodromy along them. To compensate this monodromy we can modify the definition of $K$ as follows. Recall the $(0,1)$-form $\alpha_0$ introduces in Section~\ref{subsec:intro_flat_metric}. We put
\[
  \tilde{K}(w,b) = \begin{cases}
    \exp(i\Im (\int_{p_0}^b\alpha_0 + \int_{p_0}^w\alpha_0))(\Tt(v_2) - \Tt(v_1)),\quad bw\text{ not intersect }\cup_{i=1}^{2g-2}\gamma_i,\\
    -\exp(i\Im (\int_{p_0}^b\alpha_0 + \int_{p_0}^w\alpha_0))(\Tt(v_2) - \Tt(v_1)),\quad \text{else,}
  \end{cases}
\]
where the paths of integration in the exponent and in the definition of $\Tt$ are chosen to be the same (in particular, $p_0$ is the base point used to define $\Tt$). It is easy to verify that the operator $\tilde{K}$ is a well-defined and satisfies the Kasteleyn condition around each face. To motivate this definition (which might look unnecessary bulky on a first glance), let us mention that a straightforward computation (see Lemma~\ref{lemma:Kalpha_and_dbar} and also~\cite[eq.~(3.15)]{DubedatFamiliesOfCR}) can relate $\tilde{K}$ with the perturbed Cauchy--Riemann operator $\dbar + \frac{\alpha_0}{2}$ acting on smooth functions on $\Sigma$ with square root singularities at $p_1,\dots, p_{2g-2}$. This will allow us later to interpret functions $f$ on black vertices of $G$ as spinors of the form $f\sqrt{\omega_0}$.
Since $(\dbar -\alpha_0)\omega_0 = 0$ (see Section~\ref{subsec:intro_flat_metric}), $f\sqrt{\omega_0}$ is holomorphic if and only if $(\dbar + \frac{\alpha_0}{2})f = 0$. 

Yet, we need to modify the definition of $\tilde{K}$ further. Recall that two Kasteleyn operators $K_1,K_2$ are said to be \emph{gauge equivalent} if there exist functions $U_1,V_1$ such that $K_1(w,b) = U_1(w)K_2(w,b)V_1(b)$ for each $w,b$ (see Section~\ref{subsec:intro_ratio_of_partition_functions} for more comments). We want our Kasteleyn operator $K$ to be gauge equivalent to a real one. The reason for this is the interpretation of $K$ as a \emph{Dirac} operator with holomorphic and anti-holomorphic parts. Having $K$ to be gauge equivalent to a real operator is the same as to have the holomorphic and anti-holomorphic parts of $K$ to act on the same function space (sections of the same bundle).

Requiring $K$ to be gauge equivalent to a real operator is the same as requiring that for any loop $b_1w_1b_2\ldots b_kw_kb_1$ we have
\[
  \left(\prod_{j = 1}^k \frac{K(w_j, b_j) }{|K(w_j, b_j)|}\cdot \frac{|K(w_j,b_{j+1})|}{K(w_j,b_{j+1})}\right)^2 = 1.
\]
If $K$ satisfies the Kasteleyn condition, then the expression above does depend on the homology class of the loop only, hence it defines a cohomology class in $H^1(\Sigma, \TT^*)$. It follows that, given $\tilde{K}$ defined as above, we can find an anti-holomorphic $(0,1)$-form $\alpha_G$ such that the operator
\begin{equation}
  \label{eq:def_of_K}
  K(w,b) = \exp\left( 2i\int_w^b \Im \alpha_G \right)\tilde{K}(w,b)
\end{equation}
has this cohomology class equal to zero, which makes it to be gauge equivalent to a real operator. In the case when $\partial \Sigma_0\neq \varnothing$ we can choose $\alpha_G$ to satisfy $\sigma^*\alpha_G = \bar{\alpha}_G$. This finally fixes our choice of $K$.

Note that $\alpha_G$ is not unique: we can replace it with any other anti-holomorphic $(0,1)$-form $\alpha_G^{(1)}$ such that $2\pi^{-1}\Im (\alpha_G - \alpha_G^{(1)})$ has integer cohomologies (we also have to keep symmetry with respect to $\sigma$ is the latter is present). The way equivalence class of $\alpha_G$ behaves asymptotically will determine the instanton component of the compactified free field that occurs in the scaling limit.

Let $K$ be defined by~\eqref{eq:def_of_K}, then we can find a function $\eta$ on vertices of $G$ with values in $\TT$ such that:
\begin{enumerate}
  \item For each edge $b\sim w$ we have $(\bar{\eta}_b\bar{\eta}_w)^2 = \frac{K(w,b)^2}{|K(w,b)|^2}$;
  \item If $\partial \Sigma_0\neq \varnothing$, then for any $b\in B$ we have $\eta_{\sigma(b)} = \bar{\eta}_b$ and for any $w\in W$ we have $\eta_{\sigma(w)} = - \bar{\eta}_w$.
\end{enumerate}
Following~\cite{CLR1} we call $\eta$ an \emph{origami square root function}.

\subsection{Examples of \texorpdfstring{$\Sigma$}{Sigma} and \texorpdfstring{$G$}{G} satisfying the assumptions of main results}
\label{subsec:intro_examples}

Before we proceed we would like to present several examples of $(\lambda,\delta)$-adapted graphs. In particular, Example~\ref{intro_example:triangular_graphs} will be crucial for the relation between our results and~\cite{BerestyckiLaslierRayI, BerestyckiLaslierRayII}.

\subsubsection{Torus}
\label{intro_example:torus}

Let $\Sigma_\Lambda = \CC/\Lambda$ where $\Lambda = a\ZZ + b\ZZ$ for some $a,b\in \CC$ such that $\Im (b\bar{a}) > 0$. Note that the holonomy of the natural flat metric on $\Sigma$ is trivial, so that we may take $\alpha_0 = 0$; however $\alpha_G$ introduced in Section~\ref{subsec:intro_Kasteleyn_operator} might be non-zero. To demonstrate this we consider the example of a hexagonal lattice. Put
\[
  \Lambda = \ZZ + e^{\pi i/3}\ZZ,
\]
let $N>0$ be an integer and let $G^\ast$ be the graph whose vertex set is $N^{-1}\Lambda$ and $v_1,v_2$ are connected by an edge if and only if $v_1-v_2\in \{ \pm N^{-1}, \pm N^{-1}e^{\pi i/3}, \pm N^{-1}e^{2\pi i/3} \}$; in this way $G^\ast$ becomes the triangular lattice. Let $G$ be the hexagonal lattice dual to $G$; declare the vertex $\frac{e^{\pi i/6}}{N\sqrt{3}}$ of $G$ to be white for definiteness. In this case we can put 
\[
  \alpha_G = -N\left( \frac{\pi}{6\sqrt{3}} - \frac{i\pi}{6} \right)\,d\bar{z}
\]
Note that the integrals of $2\pi^{-1}\Im \alpha_G$ along basis cycles of $\Sigma_{\Lambda}$ are $\frac{N}{3}$ and $\frac{2N}{3}$; in particular, $\alpha_G$ cannot be replaced by zero if $N$ is not divisible by 3.

\subsubsection{Pillow surface}
\label{intro_example:pillow_surface}

\begin{figure}
  \centering
  \includegraphics[clip, trim=0cm 0cm 0cm 0cm, width = 0.56\textwidth]{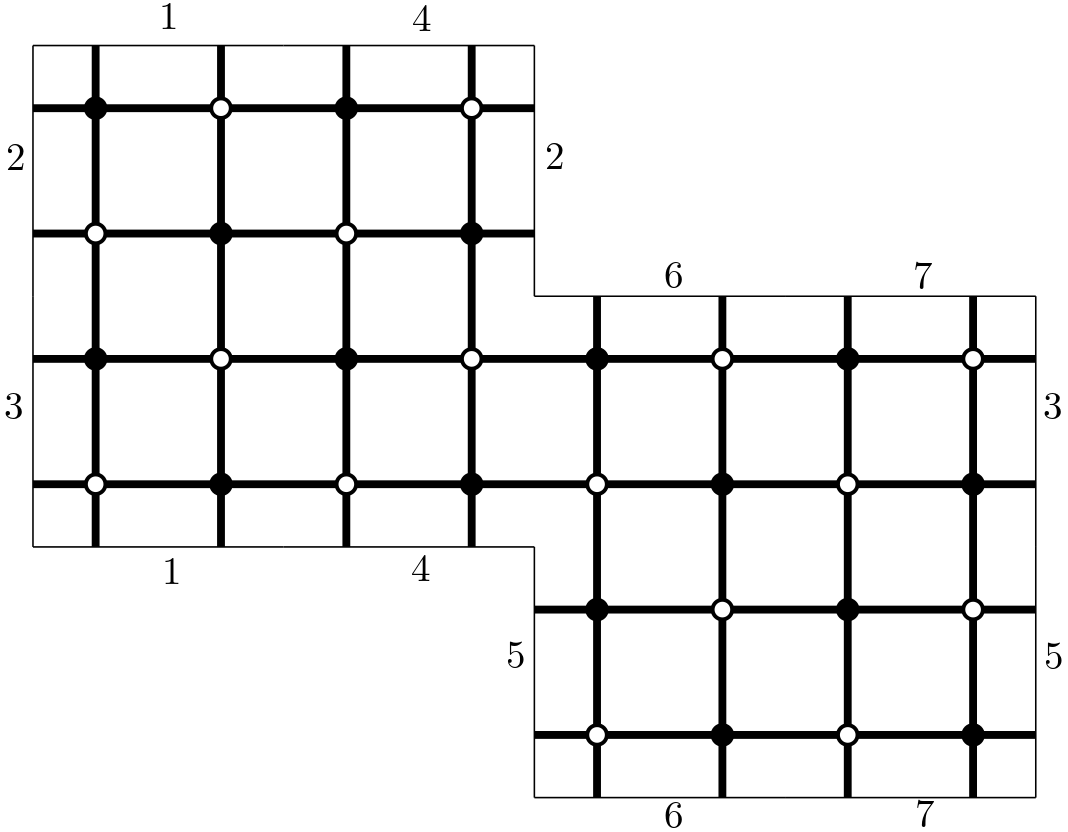}
  \caption{Pillow surface of genus 2 with two conical singularities glued from 8 squares, and the square lattice on it. Labels on outer sides of squares indicate pairs of sides glued together.}
  \label{fig:pillow_surface}
\end{figure}

Let $T = \CC/\Lambda$ be the corresponding torus. Assume that $f:\Sigma\to T$ is a ramified cover branching over one point only (say, over $0 + \Lambda\in T$) only. A Riemann surface $\Sigma$ endowed with such a cover is called a ``pillow surfaces'' (see e.g.~\cite{eskin2010lyapunov}). The flat metric on $T$ pulls back to a locally flat metric on $\Sigma$ with conical singularities at ramification points. Assuming that all the ramifications are simple we ensure that all cone angles at the conical singularities are equal to $4\pi$; dividing the metric by $\deg f$ we make it to have a unit area. Note that the holonomy of the metric on $\Sigma$ is trivial in this case. Given $n>0$ we put
\[
  G_{\text{base}}^n = \frac{1+i}{4n} + \frac{1}{2n}\ZZ^2
\]
and let $G^n = f^{-1}(G_{\text{base}}^n)$. Clearly, this choice makes $G^n$ to be $(\lambda, \frac{1}{n\sqrt{\deg f}})$-adapted, where $\lambda$ is a small enough constant depending at the distances between the conical singularities only.

Note that any pillow surface can be obtained by gluing squares $[0,1]^2$ along sides in any such a way that gives an oriented surface with trivial holonomy. Then $G_{\text{base}}^n$ is obtained by taking a piece of the square grid in each square and gluing them together, see Figure~\ref{fig:pillow_surface} for an example.

One can show that any locally flat surface with conical singularities of cone angles multiple of $2\pi$ and \emph{trivial} holonomy can be approximated by a sequence of pillow surfaces in the topology of the corresponding moduli space. One can also generalize this example to the case of a surface with a boundary. However, we can only produce examples of surfaces with trivial holonomy in this way.

\subsubsection{Temperleyan graphs corresponding to Delaunay triangulations}
\label{intro_example:triangular_graphs}

In this example we aim to construct a sequence of adapted graphs on \emph{any given} locally flat surface. Let $(\Sigma, ds^2)$ be as in Section~\ref{subsec:intro_flat_metric}. We choose a small $0<\lambda<1$ small enough, in particular, such that $(\Sigma, ds^2)$ satisfies Assumption~\ref{item:intro_metric_assumptions} from Section~\ref{subsec:intro_graphs_on_Sigma0}, and $0 < \delta < \lambda^2$. We also assume that the involution $\sigma$ is present; if it is not, then one can apply the same construction as below just omitting the boundary part.

We begin by fixing the graph $G$ and the embedding of its dual near the conical singularities and near the boundary of $\Sigma_0$. Replacing $\lambda$ with $\lambda/3$ if necessary, we may assume that distances between conical singularities and boundary components are at least $5\lambda$. For each $j = 1,\dots, 2g-2$ we choose a local coordinate $z_j$ at $p_j$ such that $z_j(p_j) = 0$ and $ds^2 = |d(z_j^2)|$. We think of $z_j$ as of an isometry between $B_\Sigma(p_j, 2\lambda)$ and the corresponding cone. For each $j = 2g-1,\dots, 2g-2+n$ we choose a holomorphic isometry $z_j$ between a $2\lambda$-neighborhood of the $(j+1-2g)$-th boundary component and the cylinder $\{ z\in \CC\ \mid\ |\Im z|<2\lambda \}/_{z\mapsto z+l_j}$, where $l_j$ is the length of the component. We define graphs $\Gamma_j, \Gamma_j^\dagger, G_j$ and $G_j^\ast$ as follows.

-- For $j = 1,\dots, 2g-2$ we put $\Gamma_{j,0}$ to be the properly shifted and rescaled triangular lattice:
\[
  \Gamma_{j,0} = \delta\ZZ + \delta e^{\pi i/3}\ZZ - \frac{\delta e^{\pi i/6}}{\sqrt{3}},
\]
and $\Gamma_{j,0}^\dagger$ to be the dual hexagonal lattice. Let $G_{j,0}$ be the superposition graph of $\Gamma_{j,0}$ and $\Gamma_{j,0}^\dagger$. Note that $G_{j,0}$ is Temperleyan isoradial and the origin is the circumcenter of one of its faces. We put $G_{j,0}^\ast$ to be the dual to $G_{j,0}$ embedded by circumcenters of the faces. Finally, we put
\[
  \Gamma_j = (z_j^2)^{-1}(\Gamma_{j,0}),\quad \Gamma_j^\dagger = (z_j^2)^{-1}(\Gamma_{j,0}^\dagger) \qquad G_j = (z_j^2)^{-1}(G_{j,0}),\quad G_j^\ast = (z_j^2)^{-1}(G_{j,0}^\ast).
\]

-- For $j = 2g-1,\dots, 2g-2+n$ we put 
\[
  \Gamma_{j,0} = l_j\lfloor \delta^{-1}\rfloor^{-1}\ZZ + l_j\lfloor \delta^{-1}\rfloor^{-1} e^{\pi i/3}\ZZ,
\]
and then define $\Gamma_{j,0}^\dagger, G_{j,0}, G_{j,0}^\ast$ as above. We put
\[
  \Gamma_j = z_j^{-1}(\Gamma_{j,0}),\qquad \Gamma_j^\dagger = z_j^{-1}(\Gamma_{j,0}^\dagger) \qquad G_j = z_j^{-1}(G_{j,0}),\qquad G_j^\ast = z_j^{-1}(G_{j,0}^\ast).
\]

Now, let us construct the graphs $\Gamma, \Gamma^\dagger$ and $G$ on $\Sigma$. The construction is drawn schematically on Figure~\ref{fig:construction_of_Gamma}. It can be described as follows.

First, let us complete the vertices of $\Gamma_1,\dots, \Gamma_{2g-2+n}$ to a $\lambda^{-1}\delta$-net $\Gamma$ on $\Sigma$. This can be done such that the following conditions are satisfied:

-- No points inside the $2\lambda$-neighborhood of the conical singularities and boundary components were added.

-- Making $\lambda$ small enough we can ensure that $\Gamma$ is a $\lambda\delta$-separated set (first complete $\Gamma_1,\dots, \Gamma_{2g-2+n}$ to an arbitrary $\lambda^{-1}\delta$-net $\Gamma$; if there is a point $v$ in $\Gamma\smm(\Gamma_1\cup\dots\cup \Gamma_{2g-2+n})$ that has another point of $\Gamma$ in its $\lambda^{-1}\delta$ neighborhood, then remove $v$ from $\Gamma$; repeat this procedure until one ends up with a $\lambda\delta$-separated set).

-- The Voronoi diagram associated with $\Gamma$ has all its vertices of degree 3 (indeed, note that the Voronoi diagram of a generic set on the plane has this property; since the definition of the Voronoi diagram is local it is enough to choose the points complementing $\Gamma_1,\dots, \Gamma_{2g-2+n}$ generically).

-- We have $\sigma(\Gamma) = \Gamma$ if $\partial\Sigma_0\neq \varnothing$ and $\sigma$ is present (e.g. choose $\Gamma\cap\Sigma_0$ first and then reflect it to $\Sigma_0^\op$).

Now, let us extend the graph $\Gamma_1^\dagger\cup\ldots\cup \Gamma_{2g-2+n}^\dagger$ to the whole $\Sigma$. To this end, lets define $\Gamma^\dagger$ to coincide with $\Gamma_1^\dagger\cup\ldots\cup \Gamma_{2g-2+n}^\dagger$ in the $\lambda$-neighborhood of the conical singularities and boundary components, and to be the Voronoi diagram of $\Gamma$ elsewhere. Note that $\Gamma^\dagger$ is in fact the Voronoi diagram of $\Gamma$ everywhere outside an $O(\delta)$-neighborhood of conical singularities.

Note that each face of $\Gamma^\dagger$ that does not contain a conical singularity has exactly one point from $\Gamma$ inside. In particular, we can endow $\Gamma$ with a graph structure by declaring it to be dual to $\Gamma^\dagger$ outside of the conical singularities; this definition agrees with $\Gamma_1\cup\ldots\cup \Gamma_{2g-2+n}$. The embedding of $\Gamma$ into $\Sigma$ is specified by mapping edges to geodesic. In this way $\Gamma$ becomes the Delaunay triangulation of the corresponding vertex set (outside the conical singularities), and the vertices of $\Gamma^\dagger$ become the circumcenters of the corresponding triangles. It might happen that some of the triangles are obscure, and the corresponding vertices of $\Gamma^\dagger$ are not lying inside the triangles. Despite this, the ``superposition'' graph $G$ of $\Gamma$ and $\Gamma^\dagger$ can be properly defined as follows. Let the black vertices of $G$ be the vertices of $\Gamma$ and $\Gamma^\dagger$, and the white vertices be the midpoints of the edges of $\Gamma^\dagger$. The edges of $G$ include

-- in the $\lambda$-neighborhood of conical singularities and boundary components: all edges of $G_1\cup\ldots\cup G_{2g-2+n}$ .

-- outside the $\lambda$-neighborhood of conical singularities and boundary components: half-edges of $\Gamma^\dagger$ and straight segments connecting the midpoints of edges of $\Gamma^\dagger$ and the incident dual vertices of $\Gamma$.

Let us emphasize that $G$ coincides with $G_1\cup\ldots\cup G_{2g-2+n}$ in the $\lambda$-neighborhood of conical singularities and boundary components. 

Finally, let us define the t-embedding of $G^\ast$. This definition is schematically shown on Figure~\ref{fig:construction_of_t-embedding}. In a $\lambda$-neighborhood of conical singularities and the boundary of $\Sigma_0$ it is already given by $G_1^\ast\cup\ldots\cup G_{2g-2+n}^\ast$. It can be extended outside as follows. Note that each face of $G$ outside a conical singularity has degree 4, and there is a correspondence between such faces and `corners' of $\Gamma$, that is, pairs $b\sim b^\dagger$ where $b$ is a vertex of $\Gamma$ and $b^\dagger$ is a vertex of $\Gamma^\dagger$. For each pair $b,b^\dagger$ of incident vertices of $\Gamma$ and $\Gamma^\dagger$, we put a vertex of $G^\ast$ at the middle of the geodesic $bb^\dagger$. Then, we draw a straight segment per each edge of $G^\ast$.

\begin{figure}
  \centering
  \begin{minipage}{.3\textwidth}
    \includegraphics[clip, trim=0cm 0cm 0cm 0cm, width = 0.96\textwidth]{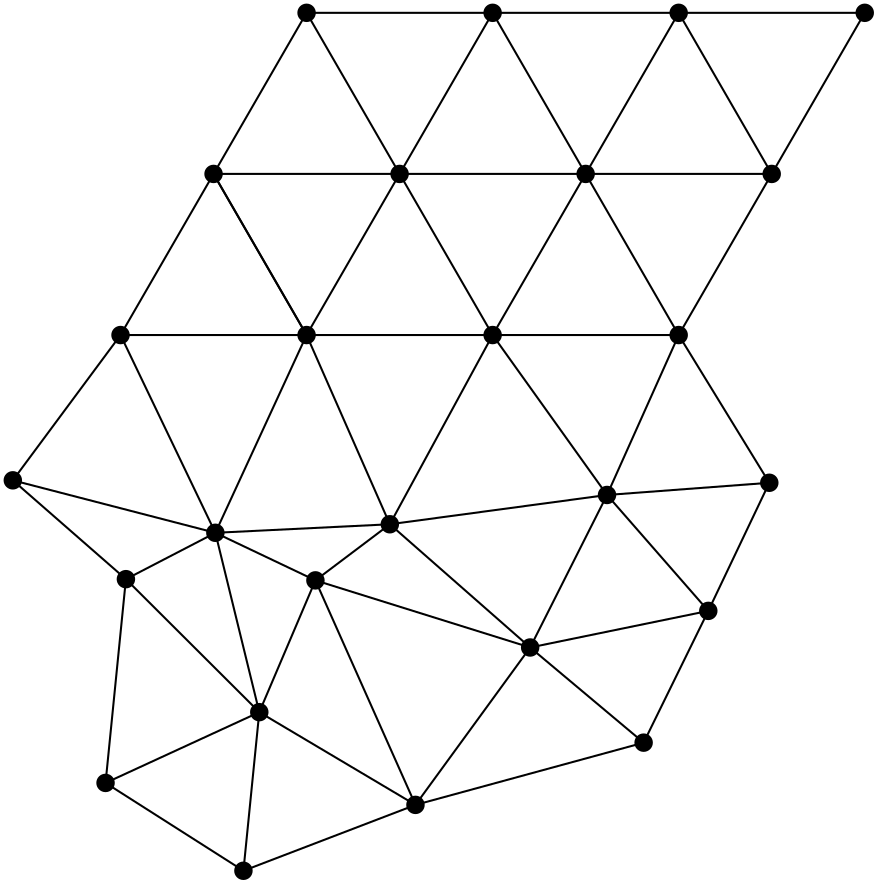}

    To get $\Gamma$: extend the triangular lattice pattern by a Delaunay triangulation
  \end{minipage}
  \hskip 0.03\textwidth\begin{minipage}{.3\textwidth}
    \includegraphics[clip, trim=0cm 0cm 0cm 0cm, width = 0.96\textwidth]{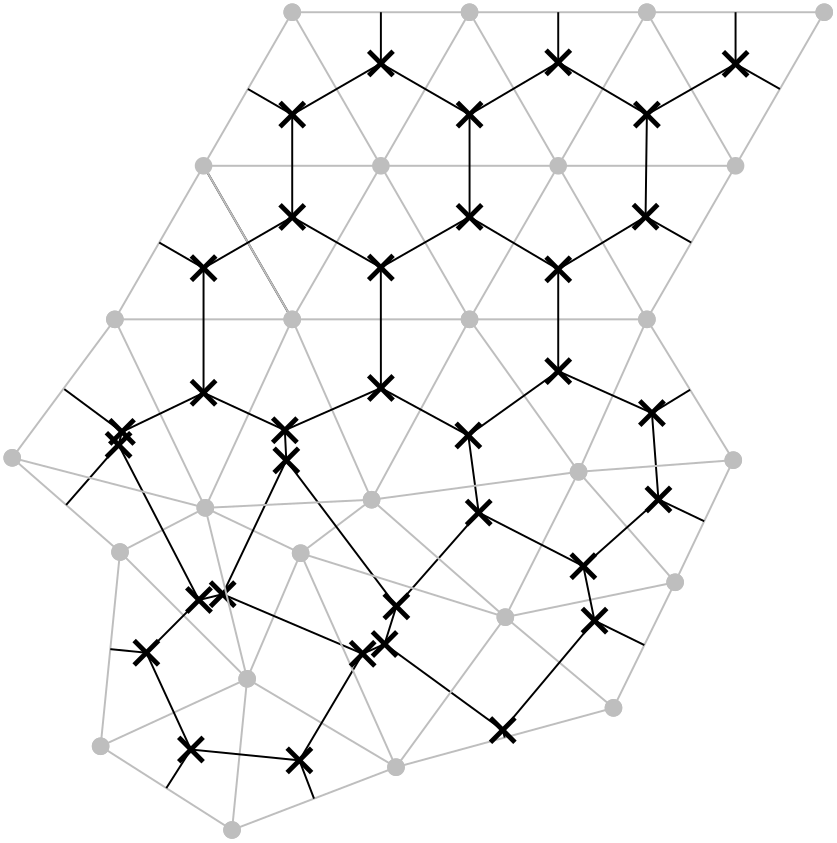}

    To get $\Gamma^\dagger$: draw the Voronoi diagram for $\Gamma$ outside the conical singularities
  \end{minipage}
  \hskip 0.03\textwidth\begin{minipage}{.3\textwidth}
    \includegraphics[clip, trim=0cm 0cm 0cm 0cm, width = 0.96\textwidth]{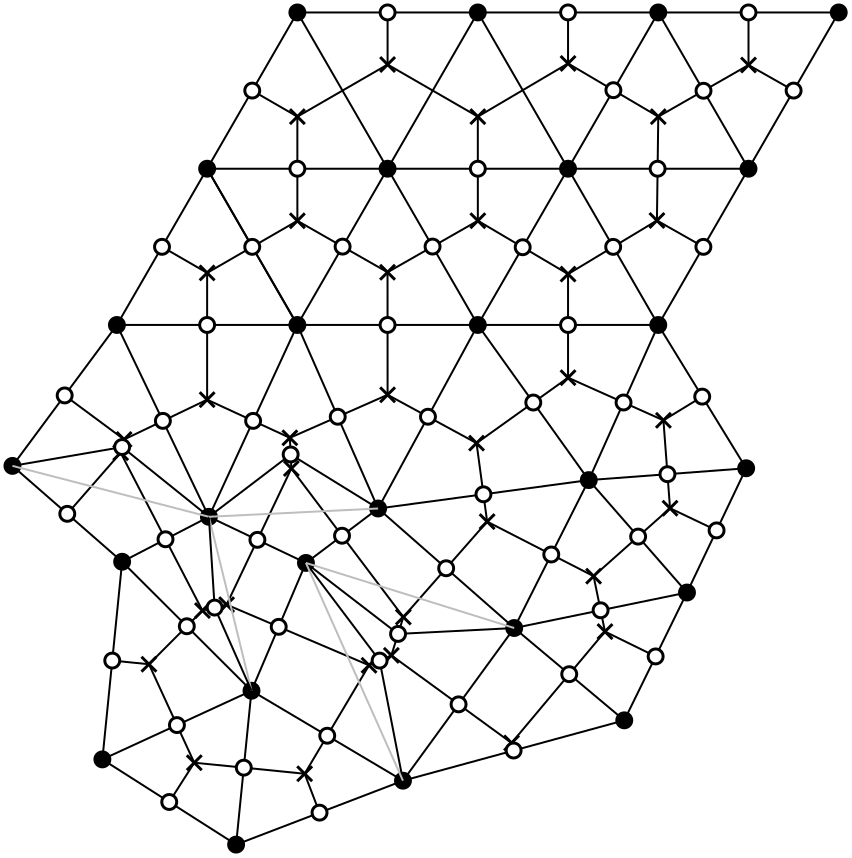}

    To get $G$: put a white vertex per edge of $\Gamma^\dagger$, connect blacks with whites by straight lines
  \end{minipage}
  \caption{The process of extending $\Gamma$, $\Gamma^\dagger$ and $G$ outside conical singularities. Note that dual edges of $\Gamma$ and $\Gamma^\dagger$ dot not always intersect.}
  \label{fig:construction_of_Gamma}
\end{figure}

\begin{figure}
  \centering
  \begin{minipage}{.3\textwidth}
    \includegraphics[clip, trim=0cm 0cm 0cm 0cm, width = 0.96\textwidth]{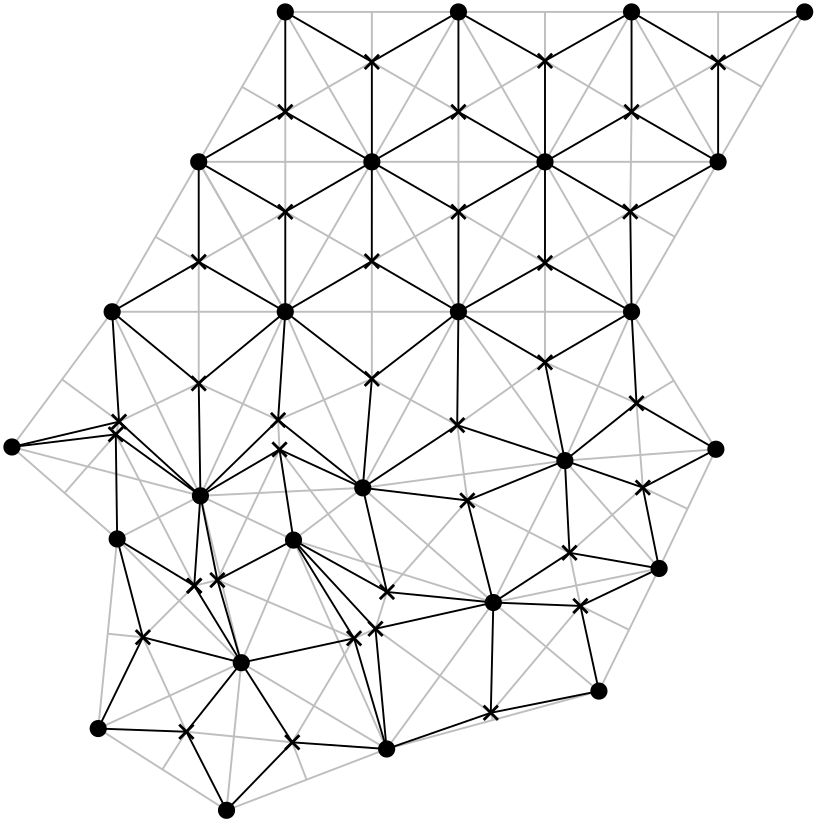}

    Connect each vertex of $\Gamma^\dagger$ to the three vertices of $\Gamma$ nearest to it. Call these new segments ``radius edges''
  \end{minipage}
  \hskip 0.03\textwidth\begin{minipage}{.3\textwidth}
    \includegraphics[clip, trim=0cm 0cm 0cm 0cm, width = 0.96\textwidth]{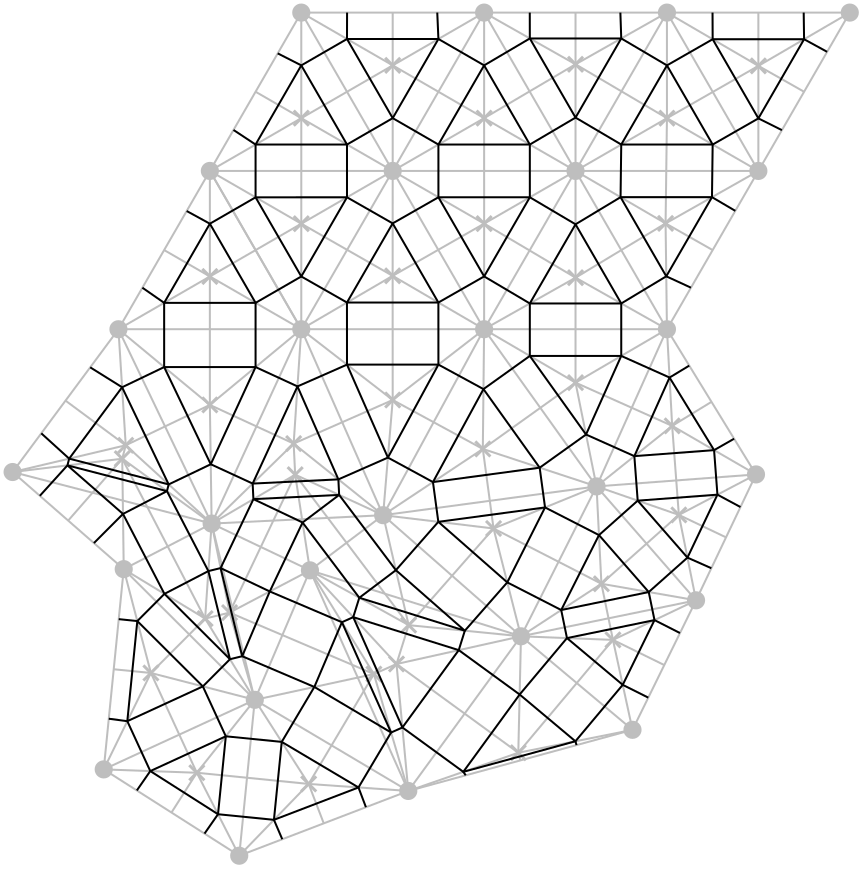}

    Midpoints of radius edges are vertices of $G^\ast$. To draw the edges, connect two vertices if the radius edges are incident
  \end{minipage}
  \hskip 0.03\textwidth\begin{minipage}{.3\textwidth}
    \includegraphics[clip, trim=0cm 0cm 0cm 0cm, width = 0.91\textwidth]{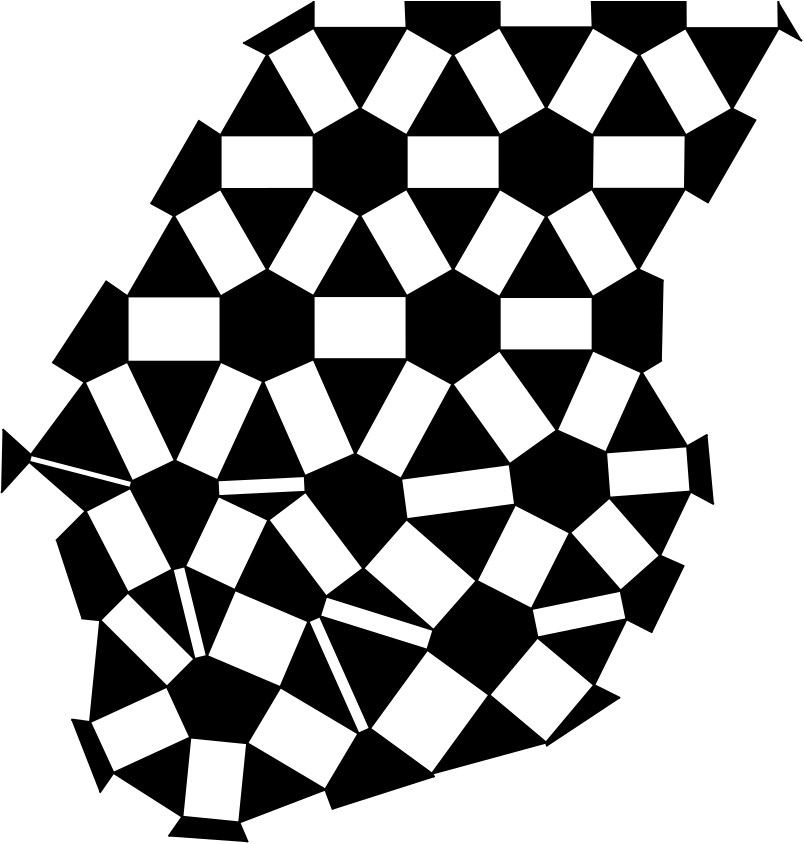}

    Black faces of $G^\ast$ filled. Note that white faces can be very thin, and black faces correspond to faces of $\Gamma$ and $\Gamma^\dagger$
  \end{minipage}
  \caption{The process of extending the t-embedding of $G^\ast$ outside conical singularities.}
  \label{fig:construction_of_t-embedding}
\end{figure}

Now, having defined $G$ and the embedding of its dual, we should verify that $G$ is $(\lambda,\delta)$-adapted. We assume that $\delta$ is small enough, but otherwise arbitrary, and $\lambda$ is small enough, depending on $(\Sigma,ds^2)$ only. We go through the assumptions from Section~\ref{subsec:intro_graphs_on_Sigma0} consequently.

Assumptions~\ref{item:intro_metric_assumptions} and~\ref{item:intro_G_well_embedded} is satisfied by the definition of $\lambda$.

Assumption~\ref{item:intro_Gast_t-embedding}. We first need to show that the embedding of $G^\ast$ is proper. Clearly, it is a local statement, and we need to verify it only outside the $\lambda$-neighborhood of conical singularities. To this end, notice that outside conical singularities each black face $b$ of $G^\ast$ corresponding to a vertex of $\Gamma$ (resp. $\Gamma^\dagger$) is the rescaling by 2 and translation of the corresponding dual face of $\Gamma^\dagger$ (resp. the corresponding triangular face of $\Gamma$); in the same time each white face is a rectangle. Note that the orientation of each face inherited from the embedding of $G^\ast$ coincides with the orientation inherited from the embedding of $G$ (and the fact that $G^\ast$ is dual to $G$). Finally, note that the sum of oriented angles at each vertex of $G^\ast$ except those which are conical singularities is $2\pi$, and there are exactly two white angles among those. This shows that $G^\ast$ is properly t-embedded. The other part of the assumption follows directly.

Assumption~\ref{item:intro_reg_part}. Let $p\in \Sigma$ be as in the assumption, identify $B_\Sigma(p,\lambda)$ with a subset of the Euclidean plane. Extend $\Gamma\cap B_\Sigma(p,\lambda)$ to a $\lambda^{-1}\delta$-net $\Gamma_p$ of the plane, assume that $\Gamma_p$ is also $\lambda\delta$-separated and each circle containing four points from $\Gamma_p$ contains another point inside. Repeat the construction above to get $G_p^\ast$.

Assumptions~\ref{item:intro_conical_sing},~\ref{item:intro_boundary_part} and~\ref{item:intro_dbar_adapted} follows directly from the construction.

\subsubsection{Temperleyan structure and the form \texorpdfstring{$\alpha_G$}{alphaG}}
\label{intro_example:alpha_and_temperley}

In Examples~\ref{intro_example:pillow_surface} and~\ref{intro_example:triangular_graphs} the $(0,1)$-form $\alpha_G$ appears to be related with the holonomy of the metric $ds^2$ on $\Sigma$. We claim that in this special case the $(0,1)$-forms $\alpha_G$ and $\alpha_0$ can be chosen such that
\begin{equation}
  \label{eq:alpha_0_vs_alpha_G_Temperley}
  \alpha_G = -\frac{1}{2}\alpha_0
\end{equation}
We explain this choice for Example~\ref{intro_example:triangular_graphs}, the other example can be analyzed similarly. Recall that white vertices of $G$ are midpoints of edges of $\Gamma^\dagger$ (this holds true even at the conical singularities). Given such a vertex $w$, let us choose an arbitrary non-zero tangent vector $v_w\in T_w\Sigma$ tangent to the edge of $\Gamma^\dagger$ passing through $w$. Put
\[
  \eta_w = \frac{\overline{\omega_0(v_w)}}{|\omega_0(v_w)|},\qquad \eta_b = 1,\ b\in \Gamma,\qquad \eta_b = i,\ b\in \Gamma^\dagger.
\]
Then it is easy to see that if $\alpha_G$ was defined by the equation~\eqref{eq:alpha_0_vs_alpha_G_Temperley} and $K$ was defined by~\eqref{eq:def_of_K}, then $\eta_wK(w,b)\eta_b\in \RR$ for each $w,b$, hence $K$ is gauge equivalent to a real operator.

\subsection{Dimer height function as a 1-form on a Riemann surface}
\label{subsec:intro_height_function}

Traditionally, the dimer model on a bipartite planar graph $G_0$ has the associated (random) height function~\cite[Section~2.2]{kenyon2009lectures}. To define it, we need to choose a reference flow which can be done in several ways, for example, we can choose the reference flow associated with a fixed dimer cover $D_\rf$. Given any other dimer cover $D_0$ we superimpose it with $D_\rf$ to get a collection of loops and double edges. We orient each loop in such a way that all the dimers in $D_0$ are oriented from the white vertex to the black one; then the height function $h$ is defined to be constant on faces of $G_0$ and to jump to $+1$ each time we cross a loop from right to left, and to have no other jumps. This defines $h^{D_\rf}_{D_0}$ up to an additive constant; if $G_0$ has a distinguished face, then we can normalize $h^{D_\rf}_{D_0}$ to be zero on it. For example, if $G_0$ is a discrete domain on the plane with a boundary, then we can require $h$ to be zero on the outer face of $G_0$ (seen as embedded into the sphere).

If we try to apply the same procedure to a graph $G_0$ embedded into a general Riemann surface $\Sigma_0$, we immediately see that the function $h^{D_\rf}_{D_0}$ is not defined consistently: indeed, if the homology class represented by the union of oriented loops in the superposition of $D$ with $D_\rf$ is non-zero, then $h^{D_\rf}_{D_0}$ inherits an additive integer monodromy along any loop having a non-trivial algebraic intersection with this homology class. Note that if we take $D_0$ randomly, then the monodromy of $h^{D_\rf}_{D_0}$ will be random.

Another complication appears when $\partial \Sigma_0\neq \varnothing$. Similarly to the simply-connected case, we can normalize $h^{D_\rf}_{D_0}$ to be zero on one of boundary components. However, this normalization does not fix it on the rest of the boundary of $\Sigma_0$. Indeed, given a path between two components of $\partial \Sigma_0$, the sum of increments of $h_{D_0}^{D_\rf}$ along this path depends on its intersection number with the homology class represented by loops in $D_0\cup D_\rf$. We call this number the `height jump' along this path and observe that height jumps are random similarly to the monodromy.

Recall that given $\Sigma_0$ we introduce the surface $\Sigma$ which is the double of $\Sigma_0$ if $\partial \Sigma_0\neq \varnothing$ and coincides with $\Sigma_0$ else. Both monodromy and height jumps components of the height function on $\Sigma_0$ are encoded in the monodromy of its extension to $\Sigma$. Indeed, according to our convention in Section~\ref{subsec:intro_dimer_model}, each dimer cover $D_0$ of $G_0$ induces a symmetric dimer cover $D$ on the graph $G$ on $\Sigma$ with coincides with $G_0$ on $\Sigma_0$ and is symmetric under the involution if $\partial \Sigma_0\neq \varnothing$. Extending the reference cover $D_\rf$ to $G$ in the same way we can extend $h_{D_0}^{D_\rf}$ on the whole $\Sigma$. Denote the extension by $h_D^{D_\rf}$ by abusing the language slightly. Note that the height jump of $h_{D_0}^{D_\rf}$ along an oriented path $\overrightarrow{\gamma}$ on $\Sigma_0$ is equal to the one half of the monodromy of $h_D^{D_\rf}$ along $\overrightarrow{\gamma}\cup \sigma(\overleftarrow{\gamma})$. In particular, the monodromy of $h_D^{D_\rf}$ along any such loop must be even.

Let us now discuss the definition of the height function more formally. There are many ways to make a strict sense of the concept of a multi-valued function. The most straightforward is probably to consider it on the universal cover. However, for our purpose it is more convenient to replace the height function $h$ with its exterior derivative which is single-valued since the monodromy is constant. A slight issue comes from the fact that $h$ is not differentiable, which does not prevent us from considering $dh$ as a 1-form whose coefficients are distributions, or as a functional on smooth 1-forms, which we will have to do anyway as $h$ is not going to be a function in the scaling limit. We now discuss this more accurately.  

Define a flow $f$ to be any anti-symmetric function on oriented edges of $G$. Assume that the edges of $G$ are piece-wise smooth curves on $\Sigma$. Then with each flow $f$ we can associate a generalized 1-form $\M_f$ defined such that for any 1-form $u$ we have
\[
  \int_\Sigma u\wedge \M_f = \sum_{w\sim b}f(wb)\int_w^b u.
\]
With such a definition $\M_f$ becomes a 1-form with the coefficients from the space of generalized functions. This 1-form is closed outside vertices of our graph meaning that it vanishes when paired with any $u = d\vphi$ where $\vphi$ is a smooth function vanishing at the vertices. This observation allows to apply a variant of the Hodge decomposition to $\M_f$ and write it as
\[
  \M_f = d\Phi + \Psi
\]
where $\Phi$ is a function defined up to an additive constant, bounded and continuous on faces of $G$, and $\Psi$ is a 1-form which is harmonic outside the vertices of $G$ and has simple poles at the vertices. The latter means that $\Psi$ can be written as 
\[
  \Psi = \frac{c}{2\pi}\Im \left( \frac{dz}{z} \right) + \text{harm. term}
\]
locally near a vertex $v$ of $G$, where $z$ is a local coordinate at $v$ and $c$ is equal to the divergence of $f$, i.e. $c = \div f(v) = \sum_{v'\sim v}f(vv')$.

With each dimer cover $D$ we associate the flow
\[
  f_D(wb) = \begin{cases}
    1,\quad wb\in D,\\
    0,\quad \text{else.}
  \end{cases}
\]
Let $\M_D^\fluct$ denote the 1-form corresponding to the flow $f_D^\fluct(wb) = f_D(wb) - \PP[wb\text{ covered}]$. Let
\[
  \M_D^\fluct = d\Phi_D^\fluct + \Psi_D^\fluct
\]
be the Hodge decomposition. Note that $f_D^\fluct$ is divergence-free, hence $\Psi_D^\fluct$ is a harmonic differential on $\Sigma$.

Let for a moment make a step back and consider the multivalued height function $h^{D_\rf}_D$ defined above. The differential $dh^{D_\rf}_D$ is a well-defined generalized 1-form. It is straightforward to verify that
\[
  \M_D^\fluct = dh^{D_\rf}_D - \EE dh^{D_\rf}_D = d(h^{D_\rf}_D - \EE h^{D_\rf}_D) = dh^\fluct_D
\]
where $h^\fluct$ is the dimer height function normalized to have zero mean, i.e. $h^\fluct = h^{D_\rf} - \EE h^{D_\rf}$ (to define $\EE h^{D_\rf}_D$ we actually have to pull $h^{D_\rf}_D$ back to the universal cove of $\Sigma$). In particular, we have the following relation between $h_D^{D_\rf}$ and $(\Phi_D^\fluct, \Psi_D^\fluct)$:

-- the harmonic differential $\Psi_D^\fluct$ has the cohomology class corresponding to the monodromy of $h_D^{D_\rf} - \EE h_D^{D_\rf}$. Note that $\Psi_D^\fluct$ does not necessary have integer cohomologies, but one can always find a deterministic harmonic differential $u$ such that the cohomology class of $\Psi_D^\fluct - u$ is integer.

-- the function $\Phi_D^\fluct$ is responsible for local fluctuations of $h_D^{D_\rf}$ around its mean.

Assume that $\partial \Sigma_0\neq \varnothing$. In this case we have
\[
  \sigma^*\M_D^\fluct = -\M_D^\fluct,
\]
and the same is true for both components $\Psi_D^\fluct$ and $\Phi_D^\fluct$ (given that we chose the additive constant in the definition of $\Phi_D^\fluct$ properly).

Recall that height jumps of the height function on $G_0$ are encoded in the monodromy of its extension to $G$ along symmetric loops intersecting boundary components. Let $l$ be a path on $\Sigma_0$ connecting two boundary points, then $l\cup \sigma(l)$ is a loop on $\Sigma$, which we can orient such that the orientation of $l$ is kept. The height jump between the endpoints of $l$ (note that it depends on the relative homology class of $l$ with respect to the boundary components containing its endpoints) is then $\frac{1}{2}\int_{l\cup \sigma(l)}\Psi_D^\fluct$. In particular, $\int_{C}(\Psi_{D_1}^\fluct - \Psi_{D_2}^\fluct)$ should be an \emph{even} integer for any two dimer covers $D_1,D_2$, whenever $C$ is a symmetric loop crossing the boundary.

\subsection{Compactified free field}
\label{subsec:intro_freefield}

The expected continuous counterpart for the dimer height function on a Riemann surface is a \emph{compactified free field} which we will briefly discuss now. We address the reader to Sections~\ref{subsec:bosonization},~\ref{subsec:cff_formal_def} for more detailed discussion, and to~\cite{DubedatFamiliesOfCR},~\cite[6.3.5, 10.4.1]{francesco2012conformal} for the discussion of the physical meaning behind this object.

We begin with a suitable notation for the Dirichlet energy on $\Sigma_0$: given a 1-form $u$ we set
\[
  \Ss_0(u) = \frac{\pi}{2}\int_{\Sigma_0} u\wedge \ast u,
\]
where $\ast$ is the \emph{Hodge star} normalized such that if $u = d\vphi$ for some smooth $\vphi$, then the integral in the right-hand side above is the Dirichlet energy of $\vphi$ (see also~\eqref{eq:def_of_Hodge_star}). Very informally speaking, the compactified free field is a random multivalued (generalized) function on $\Sigma_0$ with integer monodromy, distributed according to the Gaussian probability measure proportional to ``$e^{-\Ss_0(dh)}\, \Dd(dh)$'', where $\Dd(dh)$ is a `Lebesgue measure' on the space of 1-forms $dh$. Of coarse, this definition does not make any mathematical sense (both $\Dd(dh)$ and $\Ss_0(dh)$ are not well-defined), but let us keep it for a while to make some heuristics.

Following our approach to the height function, we consider the (closed, but not necessary exact) 1-form $dh$ and apply the Hodge decomposition to it:
\begin{equation}
  \label{eq:intro1}
  dh = d\phi + \psi,
\end{equation}
where $\phi$ is a (generalized) function and $\psi$ is a harmonic differential. The components $\vphi$ and $\psi$ are usually called \emph{scalar} and \emph{instanton} components respectively. Note that
\[
  \Ss_0(d\phi + \psi) = \Ss_0(d\phi) + \Ss_0(\psi),
\]
thus $\phi$ and $\psi$ are independent if treated as random variables --- at least according to our informal definition of $h$. 

To define the compactified free field accurately, we describe the components $\phi$ and $\psi$ separately and declare them to be independent; the compactified free field is then defined by the right-hand side of~\eqref{eq:intro1}.

If $\partial \Sigma_0\neq \varnothing$, then we define $\phi$ to be the Gaussian free field on $\Sigma_0$ with zero boundary conditions normalized in such a way that $\int_{\Sigma_0}d\phi\wedge \ast d\vphi\sim \Nn(0, \Ss_0(d\vphi))$ for any test function $\vphi$. We extend $\phi$ to the double $\Sigma$ in such a way that $\sigma^*\phi = -\phi$.

If $\partial \Sigma_0 = \varnothing$, then we again declare $\phi$ to be the Gaussian free field on $\Sigma_0$ normalized as above. Though $\phi$ is defined only up to an additive constant in this way, the 1-form $d\phi$ is defined properly.

Let us now define the instanton component $\psi$. As we want the compactified free field to be the limit of the averaged height function, we need to impose the homology class of $\psi$ to be integer shifted by a deterministic real cohomology class (not purely integer as usual). Assume first that $\partial \Sigma_0 = \varnothing$. Let $\alpha$ be a deterministic anti-holomorphic $(0,1)$-form on $\Sigma$. Then we declare $\psi^\alpha$ to be a random harmonic differential having the following properties
\begin{enumerate}
  \item The cohomology class of $\psi^\alpha - \pi^{-1}\Im\alpha$ is almost surely integer,
  \item For any harmonic differential $u$ such that $u-\pi^{-1}\Im\alpha$ has integer cohomologies we have
    \[
      \PP[\psi^\alpha = u] \sim e^{-\Ss_0(u)}.
    \]
\end{enumerate}
If $\partial \Sigma_0 \neq \varnothing$, we additionally assume that $\sigma^*\alpha = \bar{\alpha}$, and sample $\psi$ as above but conditioned on the event that 
\begin{enumerate}[resume]
  \item for any loop $C$ on $\Sigma$ symmetric under $\sigma$ the integral $\int_C (\psi^\alpha - \pi^{-1}\Im\alpha)$ is equal to an even integer, and $\sigma^*\psi^\alpha = -\psi^\alpha$.
\end{enumerate}
The compactified free field twisted by $\alpha$ is now by definition the sum
\[
  \m^\alpha = d\phi + \psi^\alpha
\]
where $\phi$ and $\psi^\alpha$ are taken to be independent. Note that $\m^\alpha$ is not centered if $\alpha$ is chosen generically:
\[
  \EE\m^\alpha = \EE \psi^\alpha
\]
can be a non-zero harmonic differential. Note also that $\m^\alpha - \EE\m^\alpha$ is not necessary equal to $\m^0$.

\subsection{Formulation of main results}
\label{subsec:intro_main_results}

Assume now that $\lambda\in (0,1)$ and an integer $g\geq 0$ are given, $\{\delta_k\}_{k>0}$ is a sequence of positive numbers tending to zero, and a sequence $(\Sigma^k, p_1^k,\dots, p_{2g-2}^k, G^k)$ is given, where
\begin{enumerate}
  \item Each $\Sigma^k$ is a closed Riemann surface of genus $g$ which is either a double of a Riemann surface $\Sigma_0^k$ of genus $g_0$ with boundary (then $\sigma_k$ is the corresponding involution), or coincides with a closed Riemann surface $\Sigma_0^k$ of genus $g_0$.
  \item $p_1^k,\dots, p_{2g-2}^k$ is a collection of distinct points on $\Sigma^k$ symmetric under $\sigma_k$ if the involution $\sigma_k$ is present.
  \item The sequence of marked surfaces $\Sigma^k, p_1^k,\dots, p_{2g-2}^k$ converges to a marked Riemann surface $\Sigma,p_1,\dots, p_{2g-2}$ in the topology of the moduli space of Riemann surfaces with $2g-2$ marked points; let $\sigma$ denote the involution on $\Sigma$ given by the limit of $\sigma_k$.
  \item For each $k$ let $ds_k^2$ be the locally flat metric on $\Sigma^k$ with conical singularities at $p_i^k$'s with cone angles $4\pi$ normalized such that the area of $\Sigma^k$ is 1. Then $G^k$ is a bipartite graph embedded into $\Sigma^k$ which is $(\lambda,\delta_k)$-adapted, i.e. satisfies all the assumptions from Section~\ref{subsec:intro_graphs_on_Sigma0}.
\end{enumerate}
For each $k$ we consider the dimer model on $\Sigma_0^k$ and denote by $\M^{\fluct,k}_D = d\Phi^{\fluct,k}_D + \Psi^{\fluct,k}_D$ the derivative of the corresponding height function on $\Sigma^k$ as defined in Section~\ref{subsec:intro_height_function}, where $D$ is a random dimer cover of $G^k$ sampled as described in Section~\ref{subsec:intro_height_function}.

For each $k$ we fix an orientation preserving diffeomorphism $\xi_k: \Sigma_k\to \Sigma$ such that the following is satisfied:

\begin{enumerate}
  \item $\xi_k$ tends to identity in $\mC^2$ topology as $k\to \infty$ (see Section~\ref{subsec:teichmuller_space} for the precise meaning of it).
  \item If (starting from some $k$) $\partial \Sigma_0^k\neq \varnothing$, then $\sigma\circ \xi_k = \xi_k\circ \sigma_k$.
  \item For each $k$ and $j = 1,\dots, 2g-2$ we have $\xi_k(p_j^k) = p_j$.
\end{enumerate}
We identify the space of smooth 1-forms on $\Sigma$ with the space of smooth 1-forms on $\Sigma^k$ as follows. Given a smooth 1-form $u$ on $\Sigma$ decompose it as
\[
  u = d\vphi + \ast d\vphi_1 + u_h
\]
where $\vphi,\vphi_1\in \mC^\infty(\Sigma)$, the 1-form $u_h$ is harmonic and $\ast$ is the Hodge star. Let
\begin{equation}
  \label{eq:def_of_uk}
  u^k = d\vphi\circ \xi_k + \ast d\vphi\circ \xi_k + u_h^k
\end{equation}
where $u_h^k$ has the cohomology class equal to the pullback of the cohomology class of $u$ under $\xi_k$. For each $k$ let $\alpha_{G_k}$ be the anti-holomorphic $(0,1)$-form defined as in Section~\ref{subsec:intro_Kasteleyn_operator}. We assume that the sequence of cohomology classes of $\alpha_{G_k}^k$ pulled back to $\Sigma$ along $\xi_k^{-1}$ converges to a cohomology class represented by an anti-holomorphic $(0,1)$-form $\alpha_1$.

\subsubsection{Identification of the limit height field with the compactified free field}
\label{subsubsec:intro_identification_theorem}

Our first main result concerns the reconstruction of the limit of the sequence $\M^{\fluct, k}$ provided the sequence is tight. Recall the definition of $\m^\alpha$ from Section~\ref{subsec:intro_freefield}.

\begin{thma}
  \label{thma:main1}
  Let $\Uu$ be a finite dimensional subspace of the space of all smooth 1-forms on $\Sigma$. Given a generalized 1-form $\M$ on $\Sigma^k$ denote by $\M\vert_\Uu$ the linear functional on $\Uu$ given by $u\mapsto \int_{\Sigma^k} u^k\wedge \M$. Assume that $\Uu$ contains all harmonic 1-forms on $\Sigma$ and the sequence of distributions of $\M^{\fluct, k}\vert_\Uu$ in $\Uu^\ast$ is tight in the weak topology. Then all the subsequential weak limits of $\M^{\fluct, k}\vert_\Uu$ are of the form $(\m^{2\alpha_1} - \m_\Uu)\vert_\Uu$, where $\m_\Uu\in \Uu^\ast$ is a deterministic linear functional, depending on the subsequence. If $\EE\left|\int_{\Sigma^k} u^k\wedge \M^{\fluct, k}\right|$ is uniformly bounded in $k$ for each $u\in \Uu$, then $\M^{\fluct, k}\vert_\Uu$ weakly converges to $\m^{2\alpha_1} - \EE\m^{2\alpha_1}$ restricted to $\Uu$.
\end{thma}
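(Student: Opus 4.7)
The plan is to compute the characteristic function $\EE\exp\!\left(i\int_{\Sigma^k} u^k\wedge \M^{\fluct,k}\right)$ for $u\in \Uu$ and show that it converges, along any tight subsequence, to the characteristic function of an affine shift of $\m^{2\alpha_1}$ restricted to $\Uu$. Combined with tightness and L\'evy's continuity theorem, this identifies the subsequential limits. First I would decompose $u\in \Uu$ into harmonic and exact/coexact parts and associate an anti-holomorphic $(0,1)$-form $\alpha(u)$ encoding the relevant twist. Via the standard Kasteleyn/Pfaffian identity, the characteristic function of $\int u^k\wedge \M^{\fluct,k}$ (or rather its generating function in a complex parameter) should be expressed as a ratio
\[
\frac{\det K^k_{\alpha(u)}}{\det K^k_0}\cdot e^{-i\int u^k\wedge \EE\M^{\fluct,k}},
\]
where $K^k_\alpha$ is the perturbation of the Kasteleyn operator from Section~\ref{subsec:intro_Kasteleyn_operator} by an additional $(0,1)$-form $\alpha$ (i.e.\ replacing $\alpha_{G^k}$ with $\alpha_{G^k}+\alpha$), extending Dub\'edat's construction on the torus.

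The asymptotics of this determinant ratio I would obtain via the logarithmic variation formula
\[
\partial_s \log\det K^k_{s\alpha(u)}\;=\;\Tr\!\left(K^k_{s\alpha(u)}{}^{-1}\,\partial_s K^k_{s\alpha(u)}\right),
\]
and then integrating in $s$ from $0$ to $1$. The core analytic task is the asymptotic expansion of the inverse operator $K^k_\alpha{}^{-1}$. Away from the conical singularities and boundary, I would use the regularity theory for discrete holomorphic functions on t-embeddings from~\cite{CLR1}, in combination with the weakly uniform and $O(\delta)$-small origami assumptions, to show that $K^k_\alpha{}^{-1}(w,b)$ is approximated (after suitable normalization by $\eta, \omega_0$, etc.) by the kernel of the continuous perturbed Cauchy-Riemann operator $\dbar + (\alpha_0+\alpha)/2$ acting on spinors on $\Sigma$. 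Near the conical singularities one has to extend Dub\'edat's theory of multivalued discrete holomorphic functions from~\cite[Section~7]{DubedatFamiliesOfCR} to the present setting (this is in fact where the Temperleyan isoradial local structure from Assumption~\ref{item:intro_conical_sing} is used), producing the expected square-root branching of $K^k_\alpha{}^{-1}$ at each $p_i$. If $\sigma$ is present the symmetry of $K^k_\alpha$ under $\sigma$ lets one essentially transfer the analysis to the double and impose the correct boundary behaviour.

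With the asymptotics of $\Tr(K^{-1}\partial K)$ at hand, one recognises its continuous limit as the logarithmic variation of the Quillen determinant of $\dbar+(\alpha_0+s\alpha(u))/2$ twisted by a spin structure. Integrating in $s$, and invoking the bosonization identity developed in Section~\ref{subsec:bosonization}, I would express the limit of $\det K^k_{\alpha(u)}/\det K^k_0$ as the product of the Gaussian free field factor $\exp(-\tfrac12\Ss_0(\cdot))$ evaluated on the exact part of $u$ and the theta-series
\[
\sum_{m\in H^1(\Sigma,\ZZ)}\exp\!\left(-\Ss_0\!\left(m+\pi^{-1}\Im(\alpha_{\alpha(u)})\right)\right)\Big/\sum_{m\in H^1(\Sigma,\ZZ)}\exp(\cdots)
\]
coming from the harmonic part: this is exactly the characteristic function of $\m^{2\alpha_1}$ (up to a deterministic affine term accounting for the mean), since the limiting cohomology class of $\pi^{-1}\Im\alpha_{G^k}$ is $\pi^{-1}\Im\alpha_1$ by hypothesis. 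The subtraction $\EE\M^{\fluct,k}$ on the dimer side becomes a deterministic linear functional $\m_\Uu$ in the limit; under the first moment bound, a uniform integrability argument identifies $\m_\Uu=\EE\m^{2\alpha_1}$, giving the sharper convergence statement. The main obstacle I anticipate is step two, specifically getting the asymptotics of $K^k_\alpha{}^{-1}$ uniform in $\alpha$ and uniform across the three different regimes (bulk t-embedding, conical singularities, boundary of $\Sigma_0$), and then packaging the traces into an identity that matches the Quillen variation formula on the nose; the bosonization identity itself is a second delicate ingredient, as it must be stated for the specific class of surfaces and spin-like operators at play.
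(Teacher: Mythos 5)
Your high-level strategy -- perturbed Kasteleyn operators, logarithmic variation of the determinant, identification with a Quillen determinant, bosonization -- is indeed the one the paper follows, and your listing of the analytic ingredients (regularity theory on t-embeddings in the bulk, multivalued discrete holomorphic functions near the conical singularities, symmetry under $\sigma$ at the boundary) matches the paper's Sections 3--5. However, there is a genuine gap at the very first step, and it propagates through the rest of the argument.

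You write that the characteristic function of $\int u^k\wedge \M^{\fluct,k}$ equals a ratio $\det K^k_{\alpha(u)}/\det K^k_0$ (up to a deterministic factor). This is false on a surface of genus $\geq 1$. On such a surface a single Kasteleyn determinant computes a \emph{signed} sum over dimer covers: by Lemma~\ref{lemma:Kasteleyn_thm} and Lemma~\ref{lemma:detKalpha} one has, schematically,
\[
e^{-P(\alpha)}\det K_\alpha \;\propto\; \EE\exp\!\left[\pi i\, q_0(\Psi^A_D)+2i\int_{\Sigma_0}\Im\alpha\wedge \M^K_D\right],
\]
where the factor $\exp[\pi i q_0(\Psi^A_D)]$ is a random sign depending on the homology class of the dimer cover $D$. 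Recovering the actual partition function (and hence an actual characteristic function) would require summing $2^{2g}$ such signed determinants as in~\eqref{eq:Kasteleyn_formula_Riemann_surface}; a single determinant does not give $\EE\exp(i\int u^k\wedge \M^{\fluct,k})$. The paper's proof therefore never works with the characteristic function directly. Instead it works with the sign-twisted functionals $F^\fluct_k(\alpha)$, $F^K_k(\alpha)$ of~\eqref{eq:char_fcts}, which carry the random factor $\exp[\pi i q_0(\Psi_D^{A,k})]$. The bosonization identity (Lemma~\ref{lemma:bosonization_identity}, Corollary~\ref{cor:char_fct_of_malpha1}) is then formulated for the analogous sign-twisted expectation $\EE\exp[\pi i q_0(\psi^{2\alpha_1}-\pi^{-1}\Im\alpha_1)+\ldots]$ on the continuous side, so that both sides of the limiting identity carry matching $q_0$-signs. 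The final step -- which your proposal is missing -- is an argument that converts this equality of \emph{signed} densities into an equality of the underlying probability measures: in~\eqref{eq:char_fct4} one obtains an identity between two complex measures whose densities differ by a unimodular factor, and since both underlying measures are probabilistic, taking absolute values yields $\PP^\fluct=\PP^{2\alpha_1}(\cdot+\m_\Uu)$. Without this mechanism your approach does not close, because what converges is the signed generating function, not the characteristic function.

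A secondary point: the deterministic shift $\m_\Uu$ does not come from subtracting $\EE\M^{\fluct,k}$ by hand as your proposal suggests; it is extracted as the limit of the deterministic quantity $\M^{\fluct,K,k}=\M^{K,k}_D-\M^{\fluct,k}_D$ via the ratio $F^K_k/F^\fluct_k$, and one needs the nonvanishing of the limit $F^\fluct_0(\beta)$ at a suitably chosen $\beta$ to make this extraction valid. Only at the very end, under the first-moment hypothesis, is $\m_\Uu$ identified with $\EE\m^{2\alpha_1}$ using $\EE\M^{\fluct,k}=0$.
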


\begin{rem}
  \label{rem:main1}
  Let us emphasize that $\alpha_1 = -\frac{1}{2}\alpha_0$ in Example~\ref{intro_example:triangular_graphs}, where $\alpha_0$ is the anti-holomorphic $(0,1)$-form associated with the metric $ds^2$ (see Section~\ref{subsec:intro_flat_metric}). In particular, the instanton component $\psi^{2\alpha_1}$ that appears in the limit is shifted by the 1-form $2\pi^{-1}\Im\alpha_1 = -\pi^{-1}\Im\alpha_0$ which is the (minus) holonomy one form of the metric $ds^2$.

  Moreover, one can take $\alpha_1 = 0$ in the case of an approximation by pillow surfaces (Example~\ref{intro_example:pillow_surface}), therefore the compactified free field that appears in the limit has integer monodromies in this case.
\end{rem}

\subsubsection{Tightness of the height field}
\label{subsubsec:intro_tightness_theorem}

Our next result concerns the tightness of the sequence $\M^{\fluct, k}$. The methods we use to prove Theorem~\ref{thma:main1} allow us to establish the tightness of $\M^{\fluct, k}$ when the sequence $(\Sigma^k, p_1^k,\dots, p_{2g-2}^k, G^k)$ is in \emph{general position}. We describe the precise genericity assumptions in Theorem~\ref{thmas:second_moment_estimate} given in Section~\ref{subsec:second_thm_tightness}. Here, we only present some examples in which these assumptions are satisfied:

\begin{enumerate}
  \item\label{item:intro1} The form $2\pi^{-1}\Im\alpha_1$ has integer cohomologies and all theta constants of $\Sigma$ corresponding to even theta characteristics are non-zero. Note that the last condition is satisfied if $\Sigma$ was chosen generically enough (outside of an analytic subvariety in the moduli space).
  \item\label{item:intro2} Identify the Jacobian of $\Sigma$ with the quotient $\frac{H^1(\Sigma, \RR)}{H^1(\Sigma, \ZZ)}$ in the standard way (see Section~\ref{subsec:Jacobian}). Then the point in the Jacobian corresponding to the cohomology class of $\pi^{-1}\Im \alpha_1$ is outside of the union of all half-integer shifts of the theta divisor. Note that theta divisor is an analytic subvariety, hence this condition is satisfied provided $\alpha_1$ is generic.
  \item\label{item:intro3} The surface $\Sigma_0$ has the topology of a multiply connected domain.
\end{enumerate}

We now formulate our second result.

\begin{thma}
  \label{thma:main2}
  Assume that at least one of the assumptions~\ref{item:intro1}--\ref{item:intro3} above is satisfied. Then for any 1-form $u$ on $\Sigma$ with $\mC^1$ coefficients the sequence $\EE\left|\int_{\Sigma^k} u^k\wedge \M^{\fluct, k}\right|^2$ is bounded uniformly in the $\mC^1$ norm of the coefficients on $u$.
\end{thma}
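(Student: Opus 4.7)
The plan is to express the second moment $\EE |\int_{\Sigma^k} u^k \wedge \M^{\fluct,k}|^2$ through derivatives of $\log \det K_{\alpha}$ in the spirit of Dubédat's approach and then to control these derivatives uniformly in $k$ under the stated generality assumptions. Recall that the characteristic function of the dimer observable $\int_{\Sigma^k} u^k \wedge \M^{\fluct,k}$ is, up to phases encoded by $\EE$, the ratio $\det K_{\alpha_{G^k} + t\beta}/\det K_{\alpha_{G^k}}$ for a small parameter $t$ and a $(0,1)$-form $\beta$ associated with $u$ via $\beta = -i\pi(\bar\partial \phi + u_h^{(0,1)})$ up to Hodge-decomposing $u^k = d\phi^k + \ast d\phi_1^k + u_h^k$. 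Thus bounding the second moment reduces to showing that $t \mapsto \log\det K_{\alpha_{G^k} + t\beta}$ has a bounded second derivative at $t=0$, uniformly in $k$ and in $\beta$ from any bounded set (with respect to a $\mC^1$-like norm).

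The first step is to decompose the test form. Using the Hodge decomposition of $u$ on $\Sigma$, I would handle the three pieces separately. For the exact and co-exact parts, the relevant second derivative of $\log\det K_\alpha$ admits a standard representation as a sum over edges of $G^k$ involving $K_\alpha^{-1}$, and the uniform bounds on the inverse Kasteleyn operator developed earlier in the paper (the regularity theory on t-embeddings from~\cite{CLR1} together with the analysis of $K_\alpha^{-1}$ near conical singularities extending~\cite[Section~7]{DubedatFamiliesOfCR}) give the required uniform control; no generality assumption is needed here. For the harmonic part of $u$, however, the second derivative captures the "instanton sector" and is governed by the asymptotic Quillen/bosonization formula proved in Section~\ref{subsec:bosonization}, which expresses $\det K_\alpha$ up to negligible factors as a product of a Gaussian factor in $\alpha$ and a theta function evaluated at the point $[\pi^{-1}\Im\alpha] \in \Jac(\Sigma)$ (shifted by the appropriate characteristic).

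The key step is then to bound the second derivative of this theta factor uniformly in $k$. This is exactly where the generality assumptions enter. Under assumption~\ref{item:intro2}, $[\pi^{-1}\Im\alpha_1]$ is outside the (half-integer translate of the) theta divisor, so the theta function stays bounded away from zero in a neighborhood of the relevant point and its logarithmic second derivative is a bounded real-analytic quantity. Under assumption~\ref{item:intro1}, the relevant evaluation becomes a sum of the form $\sum_{\text{even }\epsilon} \pm \theta[\epsilon](0|\tau)^{\cdot}$ at the origin with even theta characteristics, and the hypothesis that all such theta constants are non-zero yields the same type of boundedness. Under assumption~\ref{item:intro3}, $\Sigma$ is the Schottky double of a planar domain, the matrix of $b$-periods is explicit, and the corresponding theta constants can be shown to be non-zero by direct computation. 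In all three cases one gets a uniform bound on the logarithmic second derivative, and combining with the bulk estimate from the previous step yields the desired second moment bound; passing from the characteristic function second derivative to the $L^2$ moment is standard given that both first and second cumulants are controlled.

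The main obstacle is the harmonic/instanton step: the bosonization identity must be established with enough uniformity in $k$ (error terms decaying in $\delta_k$), and one must show that evaluating $\alpha_{G^k}$ along the sequence indeed produces points in the Jacobian that converge to the point associated with $\alpha_1$, so that the non-vanishing persists for large $k$ and not merely in the limit. Provided the bosonization identity of Section~\ref{subsec:bosonization} is proved with a quantitative error and the convergence of the relevant cohomology classes (Section~\ref{subsec:intro_main_results}) is used, the second moment bound is uniform in any $\mC^1$-bounded family of test forms $u$, since both the bulk contribution and the theta derivatives depend continuously on $u$ through its Hodge components.
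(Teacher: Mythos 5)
Your high-level plan (differentiate $\log\det K_\alpha$ twice and bound the derivative uniformly) is the right starting point, but you have misidentified the key mechanism, and the argument as written has a genuine gap: the sign problem. The quantity $e^{-P(\alpha)}\det K_\alpha$ is not a characteristic function of the height field with respect to the dimer probability measure; by Lemma~\ref{lemma:detKalpha} it is the \emph{signed} observable $\epsilon\Zz_G\cdot\EE\exp[\pi i q_0(\Psi_D^A) + i\int \Im\alpha\wedge\M_D^K]$, where the random sign $\exp[\pi iq_0(\cdot)]$ depends on the monodromy of the height function. Bounding $\tfrac{d^2}{dt^2}\log\det K_{\alpha_{G^k}+t\beta}\vert_{t=0}$ for a \emph{single} gauge form therefore does not bound $\EE(\int u\wedge\M^{\fluct,k})^2$, because the sign can destroy or amplify the relation between the signed and unsigned second moments. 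The paper's way around this, which your proposal does not contain, is the Kasteleyn-type averaging over spin structures: Lemma~\ref{lemma:sum_of_even_quadratic_forms} shows $\sum_{q\text{ even}}\exp[\pi iq(v)]$ is uniformly positive and bounded, and Lemma~\ref{lemma:partition_function_estimate} uses this to show that the sum over $l\in L_0^\varnothing$ (or $L_0^{\partial\Sigma_0}$) of the signed partition functions is $\asymp\Zz_G$. Differentiating this \emph{sum} at $t=0$, rather than a single term, is what produces a quantity $\asymp\Zz_G\cdot\EE(\int\Im\alpha\wedge\M_D^K)^2$; comparing the summands of the sum via Proposition~\ref{prop:variations_of_detKalpha} then reduces the problem to bounding $\tfrac{d^2}{dt^2}\log\det K_{0,\alpha_l(t)}$ for one $l$.

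You have also misattributed the role of the generality assumptions~\ref{item:intro1}--\ref{item:intro3}. In the paper they do not appear as bounds on second logarithmic derivatives of theta functions in a Quillen formula; they serve solely to guarantee that $\theta[\pm\tfrac12\alpha_l+\alpha_1](0)\neq 0$ for the relevant $l$, which is the hypothesis needed for Proposition~\ref{prop:Kernel_for_Kalpha} (well-posedness of the parametrix $S_\alpha$ and hence of $K_\alpha^{-1}$, since $\Dd_\alpha^{-1}$ is not even defined when the theta constant vanishes). The bosonization identity (Lemma~\ref{lemma:bosonization_identity}) plays no role in the tightness proof; it is used only for the identification Theorem~\ref{thma:main1}. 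Once the parametrix estimates are in place, the bound on $\tfrac{d^2}{dt^2}\log\det K_{0,\alpha_l(t)}$ is an entirely combinatorial estimate (equations~\eqref{eq:sme5}--\eqref{eq:sme15}) expanding the derivative as a double sum over edges, using the near-diagonal expansion of $K_\alpha^{-1}$ for the diagonal term, and a flow/Hodge-decomposition argument with Riemann bilinear relations for the off-diagonal double sum — no asymptotic Quillen formula or theta-function second derivatives appear. The separate handling of the exact/coexact and harmonic parts of $u$ that you propose is also not how the paper proceeds; the combinatorial estimate handles the full $\alpha = \dbar\vphi+\alpha_h$ at once.
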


Theorem~\ref{thma:main2} implies in particular that if we consider $\M^{\fluct,k}$ as a functional on the space of 1-forms with $\mC^{1+\eps}$ coefficients (where $\mC^{1+\eps}$ is the corresponding Sobolev space), then the corresponding sequence of probability measures on the dual Sobolev space will be tight. Theorem~\ref{thma:main1} then implies that it converges to $\m^{2\alpha_1} - \EE\m^{2\alpha_1}$.

It appears possible to improve the smoothness in Theorem~\ref{thma:main2} to $\mC^\eps$ for an arbitrary $\eps>0$. However, we prefer to keep it $1+\eps$ to shorten the exposition.

\subsection{Relation to the work of Berestycki, Laslier, and Ray}
\label{subsec:intro_relation_to_BLR}

Let us briefly recall the setup from the works~\cite{BerestyckiLaslierRayI,BerestyckiLaslierRayII} that we have already discussed in the introduction. Let a Riemann surface $\Sigma_0$ of genus $g_0$ with $n\geq 0$ boundary components and $2g_0-2+n$ marked points be given, assume that $2g_0 -2+n\geq 0$. Fix a \emph{smooth} Riemannian metric on $\Sigma_0$ continuous up to the boundary and representing the conformal class of the surface. Let $\Gamma^k_0$ be a sequence of graphs embedded into $\Sigma_0$ such that for each $k$ the boundary of $\Sigma_0$ is composed of edges of $\Gamma^k_0$. Let $\Gamma^{k,\dagger}_0$ denote the dual graph to $\Gamma^k_0$. It is also assumed that oriented edges of each $\Gamma^k_0$ are endowed with non-negative weights.

Put $\Sigma_0' = \Sigma_0\smm(\{ p_1,\dots, p_{2g_0-2+n} \}\cup \partial \Sigma_0)$ and let $\widetilde{\Sigma}_0'$ denote the universal cover of $\Sigma_0'$ identified with an open subset of $\CC$ containing the origin. Let $\widetilde{\Gamma}_0^k$ denote the lift of $\Gamma_0^k$ to $\widetilde{\Sigma}_0'$. The following conditions are imposed on $\Gamma^k_0$, see~\cite[Section~2.3]{BerestyckiLaslierRayI}:

\begin{enumerate}
  \item (\textbf{Bounded density}) There exists a constant $C>0$ and a sequence $\delta_k\to 0$ such that for each $k$ and each $x\in \Sigma_0$ the number of vertices of $\Gamma^k_0$ in the ball $\{ z\in \Sigma_0\ \mid\ d_{\Sigma_0}(x,z)\leq \delta_k \}$ is smaller than $C$.
  \item (\textbf{Good embedding}) The edges of $\Gamma_0^k$ and $\Gamma_0^{k,\dagger}$ are embedded as smooth curves and for every compact $K\subset \widetilde{\Sigma}_0'$ the intrinsic winding (see~\cite[eq.~(2.4)]{BerestyckiLaslierRayI}) of every edge of $\widetilde{\Gamma}_0^k$ intersecting $K$ is bounded by a constant depending on $K$ only.
  \item (\textbf{Invariance principle}) The continuous time random walk on $\widetilde{\Gamma}^k_0$ defined by the edge weights of $\Gamma^k_0$ and started at the closest vertex to the origin converges to the standard Brownian motion killed on the boundary up to (possibly random) continuous time change in law in Skorokhod topology.
  \item (\textbf{Uniform crossing estimate}) The continuous time random walk on $\Gamma_0^k$ must satisfy the uniform crossing estimate in any compact subset of $\Sigma_0\smm\partial \Sigma_0$ up to the scale $\delta_k$ with the constants depending on the compact only (see~\cite[Section~2.3]{BerestyckiLaslierRayI} for details).
\end{enumerate}

For each $k$ denote by $G^k_0$ the Temperleyan graph obtained by superimposing $\Gamma^k_0$ and $\Gamma^{k,\dagger}_0$ and removing the boundary vertices of $\Gamma^k_0$. The weight of an edge $e$ of $G^k_0$ is defined to be the weight of the oriented edge of $\Gamma^k_0$ if $e$ is the corresponding tail half edge, or 1 if $e$ is a half edge of an edge of $\Gamma^{k,\dagger}_0$. For each $k$ and $j = 1,\dots, 2g_0-2+n$ remove a white vertex on a distance $o(1)$ from $p_j$ from $G^k_0$. Denote by $(G^k_0)'$ the corresponding graph and by $(\widetilde{G}^k_0)'$ the lift of $(G^k_0)'$ to $\widetilde{\Sigma}_0'$. Let $h^k$ denote the dimer height function on $(\widetilde{G}^k_0)'$ defined with respect to the lift of any reference flow on $(G^k_0)'$. The main result of~\cite{BerestyckiLaslierRayI,BerestyckiLaslierRayII} (see~\cite[Theorem~6.1]{BerestyckiLaslierRayI}) about $h^k$ can be stated as follows:
\begin{thmas}[Berestycki, Laslier, Ray]
  \label{thmas:BLR}
  Let $\mu$ denote the lift of the volume form of $\Sigma_0'$ to $\widetilde{\Sigma}_0'$. Then for any smooth test function $\vphi$ on $\widetilde{\Sigma}_0'$ the random variable $\int (h^k - \EE h^k)\vphi\,d\mu$ converges to a conformally invariant limit which depend only on $\Sigma_0$ and the points $p_1,\dots, p_{2g_0-2+n}$, but not on the sequence $(G^k_0)'$. The convergence is in the sense of all moments.
\end{thmas}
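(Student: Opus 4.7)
The plan is to exploit the Temperleyan structure of $(G^k_0)'$ through its bijective correspondence with spanning structures on $\Gamma^k_0$. First, I would apply the surface version of Temperley's bijection (worked out in \cite{BerestyckiLaslierRayI}) to convert a weighted dimer cover of $(G^k_0)'$ into a cycle-rooted spanning forest $\mathcal{F}^k$ on $\Gamma^k_0$, whose components are either rooted at the vertices lying close to the punctures $p_1,\dots,p_{2g_0-2+n}$, or carry a single non-contractible cycle on $\Sigma_0$. The dimer height function $h^k$ can then be read off, up to a deterministic term fixed by the reference flow and the boundary normalization, as essentially one half of the winding of the dual path in the dual CRSF on $\Gamma^{k,\dagger}_0$ from a reference face to the face in question. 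In particular, differences $h^k(x)-h^k(y)$ become an explicit functional of the branches of $\mathcal{F}^k$.

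The next step is to sample $\mathcal{F}^k$ by Wilson's algorithm adapted to the surface: run random walks on $\Gamma^k_0$, loop-erase, and stop upon hitting the growing forest or closing a cycle of non-trivial homology. Each branch is then a loop-erased random walk on $\Gamma^k_0$. The invariance principle and the uniform crossing estimate are tailor-made to promote the convergence of these random walks to Brownian motion on $\Sigma_0$ to a joint Hausdorff convergence of the LERW branches to conformally invariant loop-erased Brownian paths, with roots the marked points; the distribution of the homology classes of the cycle components is controlled via the law of homology classes of Brownian loops on $\Sigma_0$, another conformally invariant object. Bounded density and uniform crossing together give tightness of the discrete branches in a strong enough topology to test them against smooth functions $\vphi$ lifted to $\widetilde{\Sigma}'_0$.

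The technical heart, and what I expect to be the main obstacle, is the analysis of the winding, since the winding of a generic Brownian path diverges logarithmically and the limit of $h^k$ itself is not a function. The plan is to split the winding of each LERW branch into a macroscopic piece, which converges to a well-defined winding of the continuum loop-erased Brownian path (measurable because that curve is SLE$_2$-regular), and a microscopic piece governed by the local lattice geometry. Uniform crossing yields Beurling-type estimates and uniform moment bounds on both pieces; the microscopic part concentrates by a law-of-large-numbers argument and cancels out upon centering by $\EE h^k$. Convergence in the sense of all moments then follows from uniform integrability ensured by the bounded density assumption, and conformal invariance of the limit is automatic because every ingredient entering the limit --- Brownian motion, winding laws, continuum LERW, and the law of Brownian homology cycles --- depends only on the conformal class of $(\Sigma_0,p_1,\dots,p_{2g_0-2+n})$.
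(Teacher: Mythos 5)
The paper does not prove this statement itself: Theorem~\ref{thmas:BLR} is imported verbatim from the cited works of Berestycki, Laslier, and Ray (in particular \cite[Theorem~6.1]{BerestyckiLaslierRayI}), so there is no ``paper's own proof'' to compare against. You have instead sketched the strategy of the external proof.

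Your sketch does capture the main scaffolding of the BLR argument: Temperleyan graphs induce a bijection between dimer covers of $(G_0^k)'$ and cycle-rooted spanning forests on $\Gamma_0^k$; the dimer height increments are read off from the intrinsic winding of branches; Wilson's algorithm sampled via loop-erased random walks makes the branches amenable to the invariance-principle and uniform-crossing assumptions; and the separation of winding into macroscopic and microscopic parts is the organizing device for handling the logarithmic divergence of winding near the tip. These are the right moving parts, at the right level of abstraction. Two cautions though. First, you write that the microscopic contribution ``concentrates by a law-of-large-numbers argument and cancels upon centering.'' The actual BLR control of the near-tip winding is more delicate than a plain LLN: it hinges on a local coupling of the LERW with a stationary situation (translation-averaged or full-plane ergodic structure) so that moments of the microscopic winding can be compared across scales, together with separate arguments for branches that wind around non-contractible cycles. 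Second, ``conformal invariance of the limit is automatic'' is not a free lunch: conformal invariance of continuum LERW/SLE$_2$ and of the homology law of Brownian loops gets you invariance of each ingredient, but one still must show that the assembled limit functional of the centered height, tested against $\vphi$, factors only through the conformal class of $(\Sigma_0, p_1,\dots,p_{2g_0-2+n})$; this is precisely where BLR put in work to prove ``universality'' rather than declare it. Finally, for the statement to cover this paper's setup one would need the metric to be allowed conical singularities, which is not covered by \cite{BerestyckiLaslierRayI}; the present paper flags this as a hypothesis and leaves the extension for future work, so a proof proposal should not present the result for singular metrics as already established.
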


As we already mentioned in the introduction, the scaling limit in Theorem~\ref{thmas:BLR} and our Theorem~\ref{thma:main1} is designed to reconstruct it. To this end we consider the graphs introduced in Example~\ref{intro_example:triangular_graphs}. Let $(\Sigma, ds^2)$ be as in Section~\ref{subsec:intro_flat_metric}
Fix small enough $\lambda>0$ and a sequence $\delta_k\to 0+$, for each $k$ let $(\Gamma^k)', (\Gamma^{k,\dagger})', (G^k)', (G^{k,\ast})'$ be the graphs $(\Gamma, \Gamma^\dagger, G, G^\ast)$ constructed as in Example~\ref{intro_example:triangular_graphs} with the given $\lambda$ and $\delta = \delta_k$. We can also redraw the edges of $(\Gamma^k)'$ and $(G^k)'$ as is shown on Figure~\ref{fig:changing_edges_of_Gamma} to make them smooth (piece-wise smoothness is also enough for the arguments in~\cite{BerestyckiLaslierRayI,BerestyckiLaslierRayII}, though this is not stated explicitly). Put 
\[
  (\Gamma_0^k)' = (\Gamma^k)'\cap \Sigma_0,\qquad (\Gamma^{k,\dagger}_0)' = (\Gamma^{k,\dagger})'\cap\Sigma_0, \qquad (G^k_0)' = (G^k)'\cap \Sigma_0\smm\partial \Sigma_0.
\]
As we already noticed while analysing the construction from Example~\ref{intro_example:triangular_graphs}, the graphs $(\Gamma^k)'$ and $(\Gamma^{k,\dagger})'$ are not dual to each other at conical singularities. In fact, to make them dual everywhere it is enough to add a pair of edges at each conical singularity. Indeed, fix a $j = 1,\dots, 2g_0-2$ and consider the face of $(G^k)'$ which contains $p_j$. Let $b_1,b_1^\dagger, b_2,b_2^\dagger$ be the black vertices of $G$ incident to this face and listed counterclockwise. Note that $b_i\in (\Gamma^k)'$ and $b_i^\dagger\in (\Gamma^{k,\dagger})'$. Draw the additional edge $b_1b_2$ for $(\Gamma^k_0)'$ and $b_1^\dagger b_2^\dagger$ for $(\Gamma^{k,\dagger}_0)'$ in such a way that they intersect at $p_j$ and do not intersect any other edge of $(\Gamma^k)'$ or $(\Gamma^{k,\dagger})'$ (say, by connecting all these 4 vertices to $p_j$ be straight segments), see Figure~\ref{fig:adding_edges}. After we do this for all $j$'s, we obtain graphs $\Gamma^k_0,\Gamma^{k,\dagger}_0$ which are dual to each other everywhere. The corresponding Temperleyan graph $G^k_0$ differs from $(G^k_0)'$ only by $2g_0-2+n$ white vertices located at $p_1,\dots, p_{2g_0-2+n}$. In this way we come back to the combinatorial setup of~\cite{BerestyckiLaslierRayI}.

We equip the oriented edges of $\Gamma_0^k$ by weights as follows. Let $b$ be a vertex of $\Gamma_0^k$ and $b_1,\dots, b_d$ be the neighbors of $b$ in $(\Gamma_0^k)'$. For each $j$ let $b^\dagger_{j,+}b^\dagger_{j,-}$ be the edge of $(\Gamma^{k,\dagger}_0)'$ dual to $bb_j$. We define the weight of $bb_i$ by
\begin{equation}
  \label{eq:def_of_weight_on_Gammak}
  \mathrm{weight}(bb_i) = \frac{m_i}{\dist(b,b_i)^2},\qquad m_i = \frac{\dist(b^\dagger_{i,+},b^\dagger_{i,-})\cdot \dist(b,b_i)}{\sum_{j = 1}^d \dist(b^\dagger_{j,+},b^\dagger_{j,-})\cdot \dist(b,b_j)}.
\end{equation}
We also set $\mathrm{weight}(b_1b_2) = 0$ if $b_1b_2$ is not an edge of $(\Gamma_0^k)'$. It is easy to see that the dimer weights induced via the Temperley bijections by the weights of $\Gamma_0^k$ are gauge equivalent to the dimer weights defined in Section~\ref{subsec:intro_dimer_model}. In particular, if $h^k$ is the dimer height function on $(\widetilde{G}_0^k)'$ defined as above, then $h^k - \EE h^k$ is given by an appropriately chosen primitive of the pullback of $\M^{\fluct,k}$ to the universal cover, where $\M^{\fluct}$ denotes the 1-form associated with the height fluctuations on $G^k_0$ sampled according to our dimer weights, see Section~\ref{subsec:intro_height_function}.

\begin{figure}
  \centering
  \begin{minipage}{.3\textwidth}
    \centerline{\includegraphics[clip, trim=0cm 0cm 0cm 0cm, width = 0.66\textwidth]{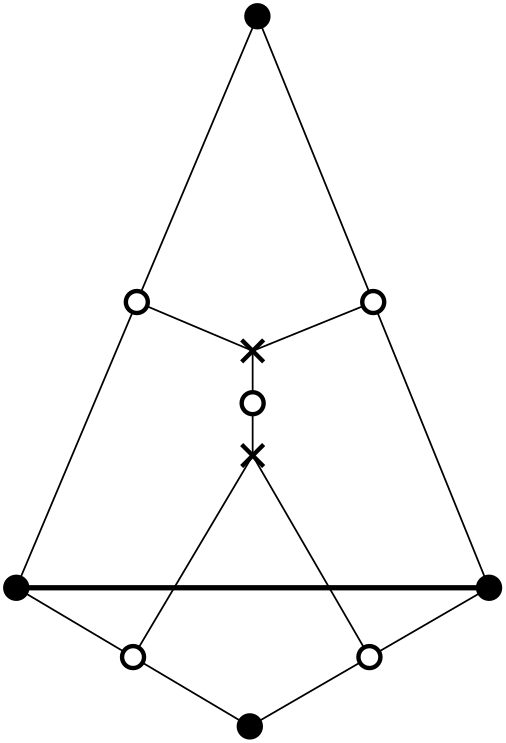}}

    Edges of $(\Gamma^k)'$ do not necessary intersect the corresponding dual edges of $(\Gamma^{k,\dagger})'$
  \end{minipage}
  \hskip 0.03\textwidth \begin{minipage}{.3\textwidth}
    \centerline{\includegraphics[clip, trim=0cm 0cm 0cm 0cm, width = 0.66\textwidth]{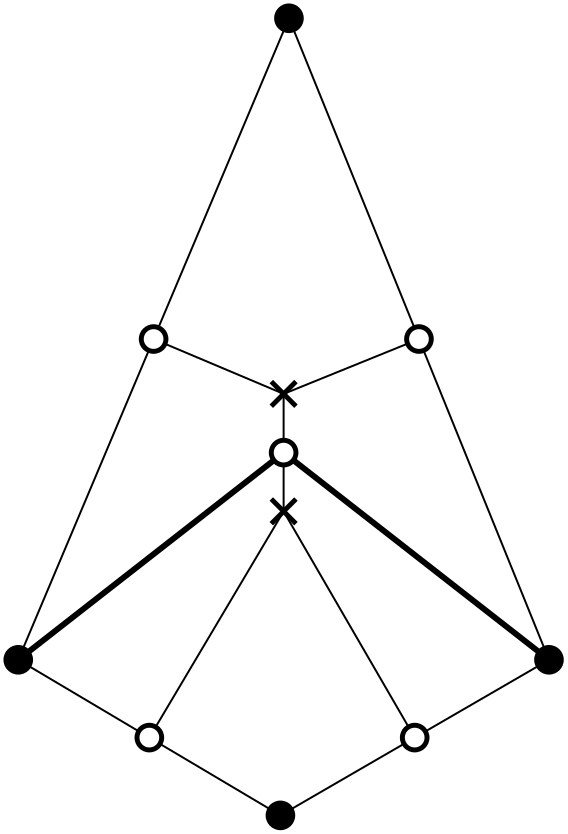}}

    Replace such edges of $(\Gamma^k)'$ with the corresponding pairs of edges of $(G^k)'$ ...
  \end{minipage}
  \hskip 0.03\textwidth\begin{minipage}{.3\textwidth}
    \centerline{\includegraphics[clip, trim=0cm 0cm 0cm 0cm, width = 0.66\textwidth]{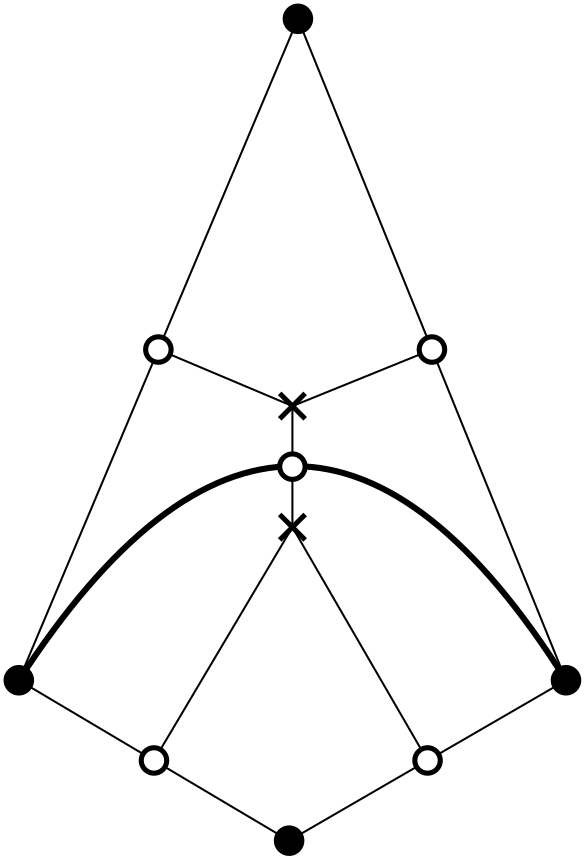}}

    ... and then smooth them (changing edges of $(G^k)'$ accordingly)
   \end{minipage}
  \caption{Changing edges of $(\Gamma_k)'$.}
  \label{fig:changing_edges_of_Gamma}
\end{figure}

\begin{figure}
  \centering
  \centerline{\includegraphics[clip, trim=0cm 0cm 0cm 0cm, width = 0.46\textwidth]{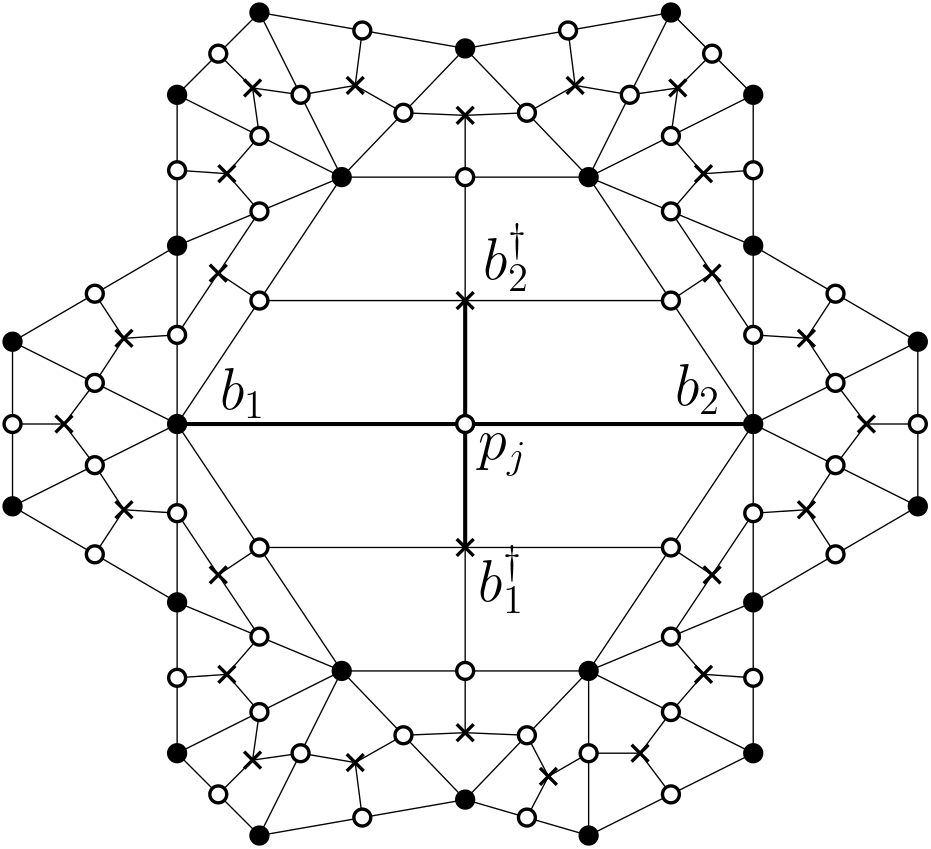}}
  \caption{Adding edges to $(\Gamma^k)'$ and $(\Gamma^{k,\dagger})'$ at conical singularities. The new edges are drawn in bold (cf. Figure~\ref{fig:graph_near_conical}).}
  \label{fig:adding_edges}
\end{figure}

Recall the $(0,1)$-form $\alpha_0$ associated with the metric $ds^2$ in Section~\ref{subsec:intro_flat_metric}. Recall that $\m^{-\alpha_0}$ denotes the derivative of the compactified free field whose instanton part is given by an integer cohomology class minus the class of $\pi^{-1}\Im \alpha_0$, see Section~\ref{subsec:intro_freefield} (recall that $\m^{-\alpha_0}$ is not $\m^0$ shifted by $\pi^{-1}\Im \alpha_0$). Below is a specified version of Theorem~\ref{thmaa:informal_temp}. %

\begin{thma}
  \label{thma:for_BLR}
  Let $(\Sigma_0,p_1,\dots, p_{2g_0-2+n})$ be given and the sequence $\Gamma_0^k, \Gamma_0^{k,\dagger}, G_0^k$ be as above. Then there exists a subsequence of $\Gamma_0^k, \Gamma_0^{k,\dagger}, G_0^k$ satisfying all the assumptions of~\cite{BerestyckiLaslierRayI} with respect to the singular metric $ds^2$. If the limit of the height field $h^k - \EE h^k$ exists in the sense of Theorem~\ref{thmas:BLR}, then it is equal to an appropriately chosen primitive of $\m^{-\alpha_0} - \EE \m^{-\alpha_0}$ restricted to $\Sigma_0$ and pulled back to the universal cover $\widetilde{\Sigma}'_0$.
\end{thma}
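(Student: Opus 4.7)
The plan is essentially a bookkeeping reduction: verify that the sequence of graphs produced by Example~\ref{intro_example:triangular_graphs} (with mesh sizes $\delta_k\to 0$, possibly along a subsequence) simultaneously fits both our $(\lambda,\delta)$-adapted framework and the framework of~\cite{BerestyckiLaslierRayI}, and then feed it into Theorem~\ref{thma:main1} to identify the BLR limit. The $(\lambda,\delta_k)$-adaptedness of $G^k$ was already established in Example~\ref{intro_example:triangular_graphs}, so Theorem~\ref{thma:main1} applies verbatim. The weights~\eqref{eq:def_of_weight_on_Gammak} on $\Gamma_0^k$ induce, via the Temperley bijection, a dimer measure on $G_0^k$ that coincides with the one coming from the t-embedding weights~\eqref{eq:def_of_w} up to gauge: this is a direct computation using that $\dist(b^\dagger_{i,+},b^\dagger_{i,-})$ is exactly the length of the corresponding edge of the t-embedding of $G^{\ast,k}$ (the factor $\dist(b,b_i)^{-2}$ and the normalization by $\sum m_j$ are exactly the Temperley--Kenyon gauge factors). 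Consequently the height fluctuation $h^k-\EE h^k$ from~\cite{BerestyckiLaslierRayI} is a primitive of $\M^{\fluct,k}$ (pulled back to $\widetilde{\Sigma}'_0$), up to an additive normalization fixed by the reference flow.

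Next I would verify the BLR axioms with respect to the singular metric $ds^2$. Bounded density and the good-embedding/intrinsic-winding condition (after the smoothing of Figure~\ref{fig:changing_edges_of_Gamma}) are immediate from the Delaunay structure in the bulk and the isoradial triangular-lattice structure near the conical singularities and boundary cylinders. Uniform crossing estimates on any compact of $\Sigma_0'$ follow from the uniform ellipticity of~\eqref{eq:def_of_weight_on_Gammak} on the Delaunay part, together with the exactly lattice structure where the Delaunay part ends. The invariance principle is the classical convergence of the random walk on an isothermal Delaunay triangulation to Brownian motion in the bulk; across a conical neighborhood the graph is a double cover branched at $p_j$ of a triangular lattice, and the coordinate change $z_j\mapsto z_j^{2}$ from Section~\ref{subsec:intro_flat_metric} turns the walk into a standard triangular-lattice walk, whose Brownian limit pushed forward by $z\mapsto z^2$ is exactly the Brownian motion on the cone. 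A diagonal extraction along $k$ produces a subsequence along which the global convergence holds. The only real obstacle here is that~\cite[Theorem~6.1]{BerestyckiLaslierRayI} is proved for smooth metrics; this is precisely why the theorem is phrased conditionally on that conclusion holding for $ds^2$.

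For the identification I would invoke Section~\ref{intro_example:alpha_and_temperley}: in the Temperleyan setup of Example~\ref{intro_example:triangular_graphs} the Kasteleyn operator defined by~\eqref{eq:def_of_K} is gauge equivalent to a real operator when $\alpha_{G_k}=-\tfrac12\alpha_0^k$; passing to the limit gives $\alpha_1=-\tfrac12\alpha_0$ and hence $2\alpha_1=-\alpha_0$ in the notation of Theorem~\ref{thma:main1}. The convergence of moments guaranteed by the assumed BLR limit, applied to test functions of the form $\ast d\vphi$ with $\vphi$ supported away from the marked points and to test functions detecting monodromies around a basis of $H_1(\Sigma_0)$, yields both the tightness hypothesis and the uniform first-moment bound required by the sharp form of Theorem~\ref{thma:main1} on any finite-dimensional $\Uu$ containing the harmonic 1-forms. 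Therefore $\M^{\fluct,k}\vert_\Uu$ converges weakly to $(\m^{2\alpha_1}-\EE\m^{2\alpha_1})\vert_\Uu=(\m^{-\alpha_0}-\EE\m^{-\alpha_0})\vert_\Uu$. Taking $\Uu$ increasingly rich, restricting the limiting 1-form to $\Sigma_0$, pulling back to $\widetilde{\Sigma}'_0$ and taking the primitive with the additive normalization determined by the reference flow matches the BLR limit, which completes the identification.
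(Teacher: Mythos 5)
Your overall reduction is the same as the paper's: verify that the graphs from Example~\ref{intro_example:triangular_graphs} are $(\lambda,\delta_k)$-adapted, identify $\alpha_{G_k}=-\tfrac12\alpha_0$ so that $2\alpha_1=-\alpha_0$, check the BLR axioms, and then feed into Theorem~\ref{thma:main1}. The identification step and the bookkeeping about Temperley gauge equivalence of the weights are fine.

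However, there is a genuine gap in how you dispatch the \emph{uniform crossing estimate} and the \emph{invariance principle}. You describe them as ``classical'' consequences of ``uniform ellipticity of~\eqref{eq:def_of_weight_on_Gammak}'' and of the ``classical convergence of the random walk on an isothermal Delaunay triangulation to Brownian motion.'' Neither of these is available here. The Delaunay triangulation in Example~\ref{intro_example:triangular_graphs} is built from an arbitrary $\lambda^{-1}\delta$-net that is $\lambda\delta$-separated; its triangles can be nearly degenerate (no lower bound on the angles), so the conductances~\eqref{eq:def_of_weight_on_Gammak} are \emph{not} uniformly elliptic, and the walk is not a standard reversible random walk for which classical invariance principles would apply. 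The paper's crucial observation, which your proposal omits, is Lemma~\ref{lemma:Gamma_is_Tgraph}: with exactly these weights, $\Gamma$ \emph{is} the T-graph $\Tt+\bar{\Oo}$ of the t-embedding, and the weights coincide with the T-graph transition rates from Section~\ref{subsec:t-embedding_def}. This is a non-obvious structural identification. Once it is in place, the crossing estimates come from the T-graph machinery (\cite[Lemma~6.8]{CLR1}), and the subsequential convergence to a time-changed Brownian motion is proved in Corollary~\ref{cor:conv_to_BM_full-plane} by showing that $\log|z-\lambda|$ evaluated along the walk is a martingale (using the full-plane kernel from Theorem~\ref{thmas:parametrix} and Lemma~\ref{lemma:primitive_of_f}) and then using the variance normalization. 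Without the T-graph identification, your argument for the random-walk axioms of~\cite{BerestyckiLaslierRayI} does not go through.

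One smaller point: the limit you need is Brownian motion ``up to a (possibly random) continuous time change'' as in the BLR formulation; the paper only proves convergence to $B_{\phi(t)}$ with $\EE\phi(t)=t$, along a subsequence, which is exactly what is allowed. Your phrasing suggests convergence to Brownian motion on the nose in the bulk, which is not established (and not needed).
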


\subsection{Ratio of signed dimer partition functions}
\label{subsec:intro_ratio_of_partition_functions}

Let $\Sigma$ be a closed topological surface of genus $g$ and $G$ be a weighted bipartite graph embedded into it. Recall the definition of a Kasteleyn operator given in Section~\ref{subsec:intro_Kasteleyn_operator}. Recall that two Kasteleyn operators $K_1,K_2$ are said to be \emph{gauge equivalent} if there exist functions $U_1,V_1$ with values in $\TT$ such that $K_1(w,b) = U_1(w)K_2(w,b)V_1(b)$ for each $w,b$. The set $\Kk$ of all gauge equivalence classes of Kasteleyn operators form an affine space over $H^1(\Sigma, \TT)$. To define the action of $H^1(\Sigma, \TT)$ on $K\in\Kk$ we represent a given cohomology class by a cocycle $u: \vec{E}(G)\to \TT$ (where $\vec{E}(G)$ is the set of oriented edges) and then replace $K(w,b)$ by $u(wb)K(w,b)$. Recall that any cohomology class can be represented by a cocycle of the form $u(wb) = \exp(2i\Im \int_w^b\alpha)$ where $\alpha$ is an anti-holomorphic $(0,1)$-form and $\int_w^b$ denotes the integral along any smooth path connecting $w$ and $b$ and homotopic to the edge $wb$. Introduce the notation
\[
  K_\alpha(w,b) = \exp(2i\Im \int_w^b\alpha)K(w,b),
\]
so that $K_\alpha$ is the result of applying the cocycle $u$ to $K$.

When $g>0$, the determinant of any Kasteleyn operator computes a `signed' partition function of the dimer model on $G$, where the sign is determined by a certain topological information depending on the cover and on the choice of the gauge class of the Kasteleyn operator. Let $h^\fluct_D$ be the dimer height function normalized to have zero average, where $D$ is a random dimer cover of $G$. Consider its Hodge decomposition
\[
  dh^\fluct_D = d\Phi^\fluct_D + \Psi^\fluct_D,
\]
see Section~\ref{subsec:intro_height_function} for details. Recall that the cohomology class of the harmonic differential $\Psi^\fluct_D$ is integer up to a deterministic shift. Let us denote by $u$ a harmonic differential such that for any dimer cover $D$ we have $\Psi^\fluct_D - u$ to represent an integer cohomology class. Finally, let $K\in \Kk$ be any Kasteleyn operator that is gauge equivalent to a real valued one (e.g. any Kasteleyn operator associated with a Kasteleyn orientation on $G$~\cite{Cimasoni}). Then it can be proven~\cite{Cimasoni} (see also Section~\ref{subsec:Kasteleyn_thm}) that there exists a constant $\epsilon\in \TT$ such that
\begin{equation}
  \label{eq:det_K_abstract}
  \epsilon\det K = \sum_{D\text{ - dimer cover}} \exp(\pi i q(\Psi^\fluct_D - u))\,\prod_{wb\in D}|K(w,b)|,
\end{equation}
where $q: H_1(\Sigma, \ZZ/2\ZZ)\to \ZZ/2\ZZ$ is a quadratic form (see Section~\ref{subsec:quadratic_forms} for the definition of a quadratic form over $\ZZ/2\ZZ$) and $\Psi^\fluct_D - u$ is associated with an integer homology class from $H_1(\Sigma, \ZZ/2\ZZ)$ via the Poincar\'e duality. Note that both $q$ and $\epsilon$ depend on $u$: if we replace $u$ by $u+v$, then $q(\cdot)$ is replaced by $q(\cdot + v) - q(v)$ and $\epsilon$ is replaced by $(-1)^{q(v)}\epsilon$.

Starting from the aforementioned expression for $\det K$ it is also possible to derive an expression for $\det K_\alpha$ for any anti-holomorphic $\alpha$. To this end, let us define
\[
  Q(\alpha) = 2i\Im \sum_{w\sim b}\PP[wb\text{ is covered by a dimer cover}] \cdot \int_w^b\alpha.
\]
It can be shown (see the proof of Lemma~\ref{lemma:detKalpha}) that 
\begin{multline}
  \label{eq:def_of_Zzalpha}
  \Zz_\alpha := \epsilon e^{-Q(\alpha)}\det K_\alpha = \\
  = \sum_{D\text{ - dimer cover}} \exp\left[\pi i q(\Psi^\fluct_D - u) + 2i\int_\Sigma \Im \alpha\wedge \Psi_D^\fluct\right]\,\prod_{wb\in D}|K(w,b)|.
\end{multline}
Given $l\in H^1(\Sigma, \ZZ/2\ZZ)$ denote by $\alpha_l$ any anti-holomorphic $(0,1)$-form such that the cohomology class of $2\pi^{-1}\Im \alpha_l$ is equal to $l$ modulo 2. We have
\[
  e^{-2i\int_\Sigma \Im\alpha_l\wedge u}\cdot \Zz_{\alpha_l} = \sum_{D\text{ - dimer cover}} \exp(\pi i (q+l)(\Psi^\fluct_D - u))\,\prod_{wb\in D}|K(w,b)|
\]
where $q+l$ denotes the quadratic form obtained by adding $l$ considered as a linear functional on $H_1(\Sigma, \ZZ/2\ZZ)$. In particular, we have the formula originally pointed by Kasteleyn (see~\cite{Cimasoni} and references therein):
\begin{equation}
  \label{eq:Kasteleyn_formula_Riemann_surface}
  \Zz_{\mathrm{dimer}} = 2^{-g}\sum_{l\in H^1(\Sigma, \ZZ/2\ZZ)} (-1)^{\Arf(q+l)} e^{-2i\int_\Sigma \Im\alpha_l\wedge u}\cdot \Zz_{\alpha_l}.
\end{equation}
Note that the signs of both $\Zz_{\alpha_l}$ and $(-1)^{\Arf(q+l)}$ depend on the choice of $u$, but the sign of the product $(-1)^{\Arf(q+l)} \Zz_{\alpha_l}$ does not.

A decomposition analogous to~\eqref{eq:Kasteleyn_formula_Riemann_surface} appears when one considers the partition function of free fermions on a Riemann surface~\cite{alvarez1986theta, Bosonization}. This decomposition suggests that the scaling limit of the ratio of $\frac{\Zz_{\alpha_l}}{\Zz_{\mathrm{dimer}}}$ should coincide with a certain expression written in terms of theta constants on $\Sigma$. In the case of $g=2$ this was verified numerically in~\cite{costa2002dimers} under the assumption that the Riemann surface is associated by `lattices on a surface'. The technique that we develope in our paper allows to analyze scaling limits of the aforementioned ratios in the setup of Theorem~\ref{thma:main1} and show that they coincide with those predicted by the aforementioned results. In particular, we show that when $\Sigma$ is chosen generically and is approximated by pillow surfaces, then the scaling limit coincides of the one appearing in the work~\cite{costa2002dimers}.

From now on, let us assume that we are in the setup of Theorem~\ref{thma:main1}, that is, we have a Riemann surface $\Sigma$ with a locally flat metric with conical singularities $ds^2$, and a sequence of Riemann surfaces $\Sigma^k$ with graphs $G^k$ that approximates $\Sigma$ and satisfies all the assumptions from Section~\ref{subsec:intro_main_results}. Recall that for each $k$ we have a diffeomorphism $\xi_k: \Sigma^k\to \Sigma$ fixed. We assume that $\partial \Sigma_0 = \varnothing$ for simplicity.

We choose a symplectic basis $A_1,\dots, A_g,B_1,\dots, B_g\in H_1(\Sigma, \ZZ)$ and use it to define the theta function with characteristics $a,b\in \RR^g$ via the formula
\[
  \theta\chr{a}{b}(z,\Omega) = \sum_{m\in \ZZ^g}\exp\Bigl( \pi i (m+a)^t\cdot \Omega (m+a) + 2\pi i (z-b)^t(m+a) \Bigr),
\]
see Section~\ref{subsec:Szego_kernel}. Note a slightly unusual choice of signs in the definition, cf. Remark~\ref{rem:miunus_in_def_of_theta}. Given a $(0,1)$-form $\alpha$ we put
\[
  a_j(\alpha) = \pi^{-1}\int_{A_j}\Im \alpha,\qquad b_j(\alpha) = \pi^{-1}\int_{B_j}\Im\alpha,\qquad j = 1,\dots, g.
\]
Note that for any $l\in H^1(\Sigma, \ZZ/2\ZZ)$ we have $a(\alpha_l),b(\alpha_l)\in \frac{1}{2}\ZZ^g$ to be the theta characteristics corresponding to $l$. Recall that $\alpha_1$ is the $(0,1)$-form which is the limit of $\alpha_{G^k}$ (see Section~\ref{subsec:intro_main_results}).

Recall that for each graph $G^k$ we have a distinguished Kasteleyn operator $K_k$ that we defined in Section~\ref{subsec:intro_Kasteleyn_operator}. By the construction it is gauge equivalent to a real operator. We choose a harmonic differential $u_k$ and the quadratic form $q_k$ such that~\eqref{eq:det_K_abstract} holds for $K_k$. Given an arbitrary anti-holomorphic $(0,1)$-form $\alpha$ on $\Sigma$ we denote by $\alpha^k$ the $(0,1)$-form on $\Sigma^k$ such that the cohomology class in $H^1(\Sigma^k, \RR)$ represented by $\Im \alpha_k$ is equal to the pullback along the diffeomorphism $\xi_k$ of the cohomology class in $H^1(\Sigma,\RR)$ represented by $\Im\alpha$. If for all $k$ we have $\Sigma^k = \Sigma$, then $\alpha^k = \alpha$. Let $\Zz_{\alpha^k}^k$ be the signed dimer partition function~\eqref{eq:def_of_Zzalpha} and $\Zz_{\mathrm{dimer}}^k$ be the dimer partition function on $G^k$.

\begin{thma}
  \label{thma:ratio_of_partition_functions}
  Assume that we are in the setup of Theorem~\ref{thma:main1} and assume moreover the tightness and first moment hypothesis from this theorem are satisfied. Then there exists a universal non-zero constant $\Zz_1$ and a theta characteristics $a^0, b^0\in \{ 0,1/2 \}^g$ associated with a quadratic form $q_0$ such that for any anti-holomorphic $(0,1)$-form $\alpha$ we have
  \begin{multline*}
    \lim\limits_{k\to +\infty}\frac{(-1)^{\Arf(q_k)}\Zz_{\alpha^k}^k}{\Zz_{\mathrm{dimer}}^k} = \\
    = \Zz_1\theta\chr{a^0 + a(\alpha_1) + a(\alpha)}{b^0 + b(\alpha_1) + b(\alpha)}(0,\Omega)\cdot \overline{\theta\chr{a^0 +a(\alpha_1) - a(\alpha)}{b^0 + b(\alpha_1) - b(\alpha)}(0,\Omega)} \cdot\\
    \cdot\exp[2\pi i a(\alpha)\cdot (b(\alpha_1) - 2b^0) - 2i\int_\Sigma \Im\alpha\wedge \EE\psi^{2\alpha_1} ].
  \end{multline*}
  The constant $\Zz_1$ is determined by the fact that $\sum_{l\in H^1(\Sigma,\ZZ/2\ZZ)}(-1)^{\Arf(q_k)}\Zz_{\alpha^k_l}^k = \Zz_{\mathrm{dimer}}^k$. The spin structure associated with the theta characteristics $a^0, b^0$ is determined by the metric $ds^2$ on $\Sigma$ and the choice of cuts in the definition of $K_k$; the underlying quadratic form $q_0$ is defined by~\eqref{eq:def_of_q0}.
\end{thma}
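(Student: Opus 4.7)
The plan is to recast the ratio as an expectation of a bounded phase of the instanton field $\Psi_D^\fluct$, to pass to the scaling limit via Theorem~\ref{thma:main1}, and to evaluate the resulting discrete Gaussian expectation as a product of theta functions through the bosonization identity of Section~\ref{subsec:bosonization}. Dividing~\eqref{eq:def_of_Zzalpha} by $\Zz_{\mathrm{dimer}}^k=\sum_D\prod_{wb\in D}|K(w,b)|$ rewrites the left-hand side of the theorem as
\begin{equation*}
\frac{(-1)^{\Arf(q_k)}\Zz_{\alpha^k}^k}{\Zz_{\mathrm{dimer}}^k} = (-1)^{\Arf(q_k)}\EE\!\left[\exp\!\left(\pi i q_k(\Psi_D^\fluct-u_k) + 2i\!\int_{\Sigma^k}\!\Im\alpha^k\wedge\Psi_D^\fluct\right)\right].
\end{equation*}
The integrand has modulus one, so convergence in distribution of $\Psi_D^\fluct$ to $\psi^{2\alpha_1}-\EE\psi^{2\alpha_1}$ (ensured by Theorem~\ref{thma:main1} under the tightness and first-moment hypothesis) together with dominated convergence yields a genuine limit once the combinatorial data stabilize. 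As the marked surfaces $\Sigma^k$ approach $\Sigma$, the system of cuts $\gamma_i^k$ can be chosen to converge, so $q_k$ stabilizes to a quadratic form $q_0$ on $H_1(\Sigma,\ZZ/2\ZZ)$ with an associated theta characteristic $(a^0,b^0)$, while $u_k$ converges to $u_\infty\coloneqq 2\pi^{-1}\Im\alpha_1-\EE\psi^{2\alpha_1}$ (so that $\psi^{2\alpha_1}-\EE\psi^{2\alpha_1}-u_\infty=\psi^{2\alpha_1}-2\pi^{-1}\Im\alpha_1$ is an integer cohomology class almost surely).

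Pulling out the deterministic factor $e^{-2i\int_\Sigma\Im\alpha\wedge\EE\psi^{2\alpha_1}}$, the computation reduces to
\begin{equation*}
\EE\!\left[\exp\!\left(\pi i q_0(\psi^{2\alpha_1}-2\pi^{-1}\Im\alpha_1)+2i\!\int_\Sigma\!\Im\alpha\wedge\psi^{2\alpha_1}\right)\right].
\end{equation*}
Parametrize $\psi^{2\alpha_1}$ by its periods $(a,b)=(2a(\alpha_1)+m,\,2b(\alpha_1)+n)$ with $(m,n)\in\ZZ^{2g}$, and write $\Omega=X+iY$. The Dirichlet energy expands as $\Ss_0(\psi)=\tfrac{\pi}{2}a^T Y a+\tfrac{\pi}{2}(Xa-b)^T Y^{-1}(Xa-b)$; the $q_0$-term becomes a half-integer phase $\pi i q_0(m,n)$ linear-affine in $(m,n)$ and encoded by $(a^0,b^0)$; and the $\Im\alpha$-coupling contributes the phase $2\pi i[a(\alpha)\!\cdot\!(2b(\alpha_1)+n)-b(\alpha)\!\cdot\!(2a(\alpha_1)+m)]$. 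Applying the bosonization identity of Section~\ref{subsec:bosonization}, the Gaussian sum over $(m,n)\in\ZZ^{2g}$ splits into a holomorphic theta sum in a shifted variable and its complex conjugate; a change of summation indices of the form $(m,n)\mapsto(m+n,\,m-n)/2$, combined with the shifts $(a^0+a(\alpha_1)\pm a(\alpha),\,b^0+b(\alpha_1)\pm b(\alpha))$ absorbed into the theta characteristics, identifies the two factors with the theta functions of the statement, and the residual constant phase collected in the rearrangement is exactly $e^{2\pi i a(\alpha)\cdot(b(\alpha_1)-2b^0)}$.

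The multiplicative constant $\Zz_1$ is $\alpha$-independent and is pinned down by the normalization supplied by Kasteleyn's formula~\eqref{eq:Kasteleyn_formula_Riemann_surface} applied at $k$ fixed, which forces the sum over $l\in H^1(\Sigma,\ZZ/2\ZZ)$ of the signed ratios on the left to recover $1$. The principal technical obstacle is the identification of the limiting theta characteristic $(a^0,b^0)$: one must match the discrete quadratic form $q_0$ coming from the cuts $\gamma_i$, the origami square root $\eta$, and the choice of reference flow $E$, with the continuous spin structure obtained from the natural square-root section $\sqrt{\omega_0}$ of the holomorphic differential associated with $ds^2$. The explicit formula~\eqref{eq:def_of_q0} is the output of this matching, and it is the point at which the geometric data of the locally flat metric with conical singularities enter the final answer.
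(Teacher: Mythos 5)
Your overall strategy matches the paper's: rewrite the signed ratio as $(-1)^{\Arf(q_k)}\EE\exp[\pi i q_k(\Psi^\fluct_D-u_k) + 2i\int\Im\alpha\wedge\Psi^\fluct_D]$, pass to the scaling limit, and evaluate the resulting discrete-Gaussian expectation via the bosonization identity of Section~\ref{subsec:bosonization}. However, there is a genuine gap in the passage to the limit, precisely at the claim that "$u_k$ converges to $u_\infty = 2\pi^{-1}\Im\alpha_1 - \EE\psi^{2\alpha_1}$."

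The deterministic shift $u_k$ is not a free parameter; it is dictated by the discrete data through $u_k = -\Psi^{\fluct,A,k}$, the harmonic part of the angle flow minus the mean flow. What you actually need, because $q_0$ lives on $H_1(\Sigma,\ZZ/2\ZZ)$, is convergence of the cohomology class of $u_k$ in the torus $H^1(\Sigma,\RR)/H^1(\Sigma,2\ZZ)$, not merely in $H^1(\Sigma,\RR)/H^1(\Sigma,\ZZ)$. Your observation that "the system of cuts $\gamma_i^k$ can be chosen to converge" controls only the topological ingredient $q_k\to q_0$; it does not control the sequence $\Psi^{\fluct,A,k}$, whose limit depends on the detailed geometry of the graphs. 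Convergence modulo $H^1(\Sigma,\ZZ)$ follows cheaply from the integrality of $\Psi^A_D$ and the identification of the support of the limiting law of $\Psi^{\fluct,k}_D$ (Theorem~\ref{thma:main1}), but upgrading this to convergence modulo $H^1(\Sigma,2\ZZ)$ requires an extra argument. In the paper this is exactly the content of Corollary~\ref{cor:angle_flow}, whose proof compares the signed characteristic functions on both sides and uses a parity count over the quadratic form $q_0$ (the observation that $\#q_0^{-1}(0)\neq\#q_0^{-1}(1)$) to pin down the ambiguity. Without that step your formula can only be asserted up to a sign, which would spoil the claimed identification of the theta characteristic $(a^0,b^0)$.

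A secondary, more minor point: invoking "dominated convergence" on the phase $\exp(\pi i q_k(\Psi^\fluct_D - u_k))$ hides the fact that this map is not continuous on harmonic differentials; one needs the limiting law to be supported on a discrete lattice shift, together with the stabilization of $u_k$ mod $2\ZZ$, so that the integrand eventually agrees with a fixed bounded function on the support. This is morally what you intend, but it should be made explicit since the discontinuity of $q_0$ is precisely where the mod-$2$ issue above bites.
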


\begin{proof}
  It is established in Section~\ref{subsec:Kasteleyn_thm} and in particular in the first item of Lemma~\ref{lemma:Kasteleyn_thm} that there exists a flow $f^{A,k}$ such that
  \begin{equation}
    \label{eq:tropf1}
    \det K_k = \epsilon\cdot \sum_{D\text{ - dimer cover of }G_k} \exp[\pi i q_0(\Psi_D^{A,k})]\,\prod_{wb\in D}|K_k(w,b)|
  \end{equation}
  where $q_0$ is the quadratic form defined by~\eqref{eq:def_of_q0} for $\Sigma^k$ (and we use the same notation $q_0$ for all $k$ by abusing the language).
  Declaring
  \[
    u_k = -\Psi^{\fluct, A, k} = -\Psi^{A,k}_D + \Psi_D^{\fluct, k}
  \]
  (note that the right-hand side does not depend on $D$), we can rewrite~\eqref{eq:tropf1} as
  \[
    \det K_k = \epsilon\cdot \sum_{D\text{ - dimer cover of }G_k} \exp[\pi i q_0(\Psi_D^{\fluct,k} - u_k)]\,\prod_{wb\in D}|K_k(w,b)|
  \]
  which now resembles~\eqref{eq:det_K_abstract}. Now by the definition of $\Zz_{\alpha^k}^k$ and by the formula of $u_k$ we have
  \begin{equation}
    \label{eq:tropf2}
    \frac{(-1)^{\Arf(q_0)}\Zz_{\alpha^k}^k}{\Zz_{\mathrm{dimer}}^k} = (-1)^{\Arf(q_0)} \,\EE\exp\left[ \pi i q_0(\Psi_D^{\fluct,k} + \Psi^{\fluct, A, k}) + 2i\int_\Sigma \Im\alpha\wedge \Psi_D^{\fluct,k} \right].
  \end{equation}
  By Theorem~\ref{thma:main1}, using that tightness and first moment hypothesis are satisfied, we know that $\Psi^{\fluct, k}_D$ converges to $\psi^{2\alpha_1} - \EE\psi^{2\alpha_1}$ in distribution. Moreover, by Corollary~\ref{cor:angle_flow} we have that $\Psi^{\fluct, A, k}$ converges to $\EE\psi^{2\alpha_1} - 2\pi^{-1}\Im\alpha_1$ modulo $H^1(\Sigma, 2\ZZ)$. Using these observations and taking the limit of~\eqref{eq:tropf2} we get
  \begin{multline}
    \label{eq:tropf3}
    \lim\limits_{k\to +\infty} \frac{(-1)^{\Arf(q_0)}\Zz_{\alpha^k}^k}{\Zz_{\mathrm{dimer}}^k} =\\
    = (-1)^{\Arf(q_0)} \,\EE\exp\left[ \pi i q_0(\psi^{2\alpha_1} - 2\pi^{-1}\Im \alpha_1) + 2i\int_\Sigma \Im\alpha\wedge (\psi^{2\alpha_1} - \EE \psi^{2\alpha_1}) \right].
  \end{multline}
  Finally, we can evaluate the right-hand side of~\eqref{eq:tropf3} via theta functions using Lemma~\ref{lemma:bosonization_identity}.
\end{proof}

Let us now formulate the particular case of our theorem in the setup when the Riemann surface $\Sigma$ is approximated by the ``square lattice'', that is, by pillow surfaces; see Example~\ref{intro_example:pillow_surface} for the definition of a pillow surface and an image of a square lattice on it.

\begin{cor}
  \label{cor:ratio_of_partition_functions}
  Assume that $(\Sigma, ds^2)$ is a closed Riemann surface of genus $g$ with a locally flat metric with conical singularities with all cone angles equal to $4\pi$ and trivial holonomy. Assume that $\Sigma^k$ is a sequence of pillow surfaces approximating $(\Sigma,ds^2)$ in the moduli space of Riemann surfaces equipped with locally flat metrics with conical singularities, and assume that for each $k$ the square lattice graph $G^k$ on $\Sigma^k$ and the number of vertices of $G^k$ tends to $+\infty$ as $k\to +\infty$. Let $\Zz_{\alpha^k}^k, q_k$ and $a^0, b^0$ be as in Theorem~\ref{thma:ratio_of_partition_functions}. Assume moreover that for any even theta characteristics $a,b\in \{ 0,1/2 \}^g$ we have $\theta\chr{a}{b}(0,\Omega)\neq 0$. Then we have
  \begin{multline*}
    \lim\limits_{k\to +\infty}\frac{(-1)^{\Arf(q_k)}\Zz_{\alpha^k}^k}{\Zz_{\mathrm{dimer}}^k} = (-1)^{\Arf(q_0)}\Zz_1\left|\theta\chr{a^0 + a(\alpha)}{b^0 + b(\alpha)}(0,\Omega)\right|^2 = \\
    = (-1)^{\Arf(q_0)} \Zz_1e^{-\pi (a+a^0)\cdot \Im \Omega (a+a^0)}\left|\theta(-b-b^0 + \Omega(a + a^0),\Omega)\right|^2
  \end{multline*}
  where $\Zz_1$ is defined in the same way as in Theorem~\ref{thma:ratio_of_partition_functions} and $\theta(z,\Omega) = \theta\chr{0}{0}(z,\Omega)$.
\end{cor}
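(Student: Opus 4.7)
The plan is to derive the corollary as a direct specialization of Theorem~\ref{thma:ratio_of_partition_functions} to the pillow surface setting, using two fundamental simplifications available there: one can take $\alpha_1 = 0$, and $\EE\psi^0 = 0$.

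\textbf{Verification of hypotheses and specialization.} Example~\ref{intro_example:pillow_surface} shows that the square lattices $G^k$ on $\Sigma^k$ are $(\lambda,\delta_k)$-adapted with $\delta_k\to 0$, and that the pulled-back flat metric has trivial holonomy; hence one may take $\alpha_0 = 0$ and, by Remark~\ref{rem:main1}, also $\alpha_1 = 0$. The tightness and first-moment hypothesis inherited from Theorem~\ref{thma:main1} follows from condition~\ref{item:intro1} of Theorem~\ref{thma:main2}: the cohomology class of $2\pi^{-1}\Im\alpha_1 = 0$ is trivially integer, and nonvanishing of all even theta constants is a standing assumption. For the mean of $\psi^0$: its distribution is a discrete Gaussian on integer harmonic differentials weighted by $e^{-\Ss_0(u)}$, and the involution $u\mapsto -u$ preserves both the lattice and $\Ss_0$, hence the distribution, which forces $\EE\psi^0 = 0$. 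Substituting $\alpha_1 = 0$, $a(\alpha_1) = b(\alpha_1) = 0$ and $\EE\psi^0 = 0$ into Theorem~\ref{thma:ratio_of_partition_functions} collapses the limit to
\begin{equation*}
  \Zz_1\,\theta\chr{a^0+a}{b^0+b}(0,\Omega)\,\overline{\theta\chr{a^0-a}{b^0-b}(0,\Omega)}\,e^{-4\pi i\,a\cdot b^0},\qquad a = a(\alpha),\ b = b(\alpha).
\end{equation*}

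\textbf{Collapse of the theta product.} Using the series definition of $\theta\chr{\cdot}{\cdot}$ from Section~\ref{subsec:Szego_kernel}, the substitution $m\mapsto-m$ at $z=0$ gives the symmetry $\theta\chr{-c}{-d}(0,\Omega) = \theta\chr{c}{d}(0,\Omega)$, while a direct index shift gives the integer translation rule $\theta\chr{c+m}{d+n}(z,\Omega) = e^{-2\pi i\,n\cdot c}\theta\chr{c}{d}(z,\Omega)$ for $m,n\in\ZZ^g$. Applying the latter with $(c,d) = (-(a^0+a), -(b^0+b))$ and $(m,n) = (2a^0, 2b^0)\in\ZZ^{2g}$ (valid since $a^0,b^0\in\{0,\tfrac12\}^g$) and then using the reflection symmetry, I obtain
\begin{equation*}
  \theta\chr{a^0-a}{b^0-b}(0,\Omega) = e^{4\pi i\,b^0\cdot(a^0+a)}\,\theta\chr{a^0+a}{b^0+b}(0,\Omega).
\end{equation*}
Conjugating (the phase is real) and multiplying by $\theta\chr{a^0+a}{b^0+b}(0,\Omega)$ produces $|\theta|^2$ times $e^{-4\pi i\,b^0\cdot(a^0+a)}$, and combination with the external factor $e^{-4\pi i\,a\cdot b^0}$ cancels the $\alpha$-dependent phases and leaves $\Zz_1\,e^{-4\pi i\,a^0\cdot b^0}\,\bigl|\theta\chr{a^0+a}{b^0+b}(0,\Omega)\bigr|^2$. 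Finally, the integer $4a^0\cdot b^0 = \#\{j : a^0_j = b^0_j = \tfrac12\}$ has parity equal to $\Arf(q_0)$ by the classical correspondence between $\ZZ/2\ZZ$-quadratic forms and theta characteristics, which in the present setup is set up in~\eqref{eq:def_of_q0}; hence $e^{-4\pi i\,a^0\cdot b^0} = (-1)^{\Arf(q_0)}$, yielding the first displayed form.

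\textbf{Second form and main difficulty.} The alternative expression follows from the standard translation identity $\theta\chr{c}{d}(z,\Omega) = e^{\pi i\,c^t\Omega c + 2\pi i\,c^t(z-d)}\,\theta(z - d + \Omega c,\Omega)$ applied with $(c,d) = (a^0+a, b^0+b)$ at $z=0$: taking moduli squared, the modulus of the exponential prefactor yields the Gaussian factor $e^{-\pi(a+a^0)^t\Im\Omega(a+a^0)}$, and what remains is $|\theta(-b-b^0 + \Omega(a+a^0),\Omega)|^2$. The chief subtlety lies in Step~2: the precise cancellation of $\alpha$-dependent exponents and the matching of the residual constant phase with $(-1)^{\Arf(q_0)}$ must be carried out against the paper's nonstandard sign convention for theta characteristics (Remark~\ref{rem:miunus_in_def_of_theta}) and the explicit definition~\eqref{eq:def_of_q0}. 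These are bookkeeping tasks rather than genuine mathematical obstacles, but they are the only places where the precise constant appearing in the corollary is pinned down.
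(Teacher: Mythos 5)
Your proposal follows the same route as the paper: invoke Theorem~\ref{thma:main2} for tightness, take $\alpha_1 = 0$ and $\EE\psi^0 = 0$ in the pillow setting, derive a theta--characteristic identity, and substitute into Theorem~\ref{thma:ratio_of_partition_functions}. Your derivation of the key identity
\begin{equation*}
  \theta\chr{a^0-a}{b^0-b}(0,\Omega) = e^{4\pi i\,b^0\cdot(a^0+a)}\,\theta\chr{a^0+a}{b^0+b}(0,\Omega)
\end{equation*}
is correct and in fact repairs a typo in the paper's~\eqref{eq:ropf1}, where the right-hand side is written with the characteristics $\chr{-a^0+a}{-b^0+b}$, which by the reflection symmetry makes the identity trivially vacuous. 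However, the final bookkeeping in your ``Collapse of the theta product'' step is wrong. You assert that the phase $e^{4\pi i\,b^0\cdot(a^0+a)}$ is real; it is not. Writing it as $e^{4\pi i\,b^0\cdot a^0}\cdot e^{4\pi i\,b^0\cdot a}$, the first factor is $\pm 1$ (since $4a^0\cdot b^0\in\ZZ$), but the second factor $e^{4\pi i\,b^0\cdot a}$ is a non-trivial, $\alpha$-dependent phase whenever $b^0\neq 0$ and $a = a(\alpha)$ is a general real vector. Conjugating therefore yields $e^{-4\pi i\,b^0\cdot a^0}e^{-4\pi i\,b^0\cdot a}$, and combining with the external factor $e^{-4\pi i\,a\cdot b^0}$ gives $e^{-4\pi i\,a^0\cdot b^0}\,e^{-8\pi i\,a\cdot b^0}$, not the claimed $e^{-4\pi i\,a^0\cdot b^0}$. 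The residual phase $e^{-8\pi i\,a\cdot b^0}$ does not cancel and does not equal $1$ for generic $\alpha$, so the chain of equalities does not close as written. (Your second-form computation also produces $e^{-2\pi(a+a^0)^t\Im\Omega(a+a^0)}$ upon taking $|\cdot|^2$, not $e^{-\pi(a+a^0)^t\Im\Omega(a+a^0)}$ as you claim; that factor-of-two discrepancy likewise tracks the statement rather than being corrected.) The upshot is that as written there is a genuine gap in the phase bookkeeping: one needs to either show the extra phase vanishes in this setting, or trace back a compensating sign in Theorem~\ref{thma:ratio_of_partition_functions} or Lemma~\ref{lemma:bosonization_identity}; silently asserting reality of the non-real factor does not do this.
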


\begin{proof}
  By Theorem~\ref{thma:main2} we know that the tightness and the first moment hypothesis of Theorem~\ref{thma:main1} are satisfied, thus we are in the position to use Theorem~\ref{thma:ratio_of_partition_functions}. In the particular case of pillow surfaces and $G^k$ being embedded square grid on them we may take $\alpha_{G_k} = 0$ for all $k$, hence $\alpha_1 = 0$ and $\EE \psi^{2\alpha_1} = 0$. A straightforward computation shows that
  \begin{equation}
    \label{eq:ropf1}
    \theta\chr{a^0 - a}{b^0 - b}(0,\Omega) = \theta\chr{-a^0 + a}{-b^0 + b}(0,\Omega)\cdot \exp(4\pi i b^0\cdot (a^0 + a)).
  \end{equation}
  Note that $(-1)^{\Arf(q_0)} = (-1)^{4\pi i a^0\cdot b^0}$. Indeed, it follows from the formula~\eqref{eq:def_of_Arf} for the Arf's invariant and the relation between the theta characteristics $a^0,b^0$ and $q_0$ (see~\eqref{eq:a_b_for_quadratic_form}). Substituting~\eqref{eq:ropf1} into the formula in Theorem~\ref{thma:ratio_of_partition_functions} we get the first equality in the corollary. The second follows from Proposition~\ref{prop:of_theta}. 
\end{proof}

\subsection{Organization of the paper}
\label{subsec:organization}

Let us comment on how the rest of the paper is organized. As we already mentioned in the introduction, the proof of our main results is based on generalization of the technique developed by Dub\'edat in~\cite{DubedatFamiliesOfCR}. This requires a certain amount of preparatory lemmas. Section~\ref{sec:full plane case} is devoted to building the necessary discrete complex analysis tools: in Section~\ref{subsec:t-embedding_def} we briefly recall some key lemmas and constructions from~\cite{CLR1}; nothing new appears there, but we still include it for the sake of completeness and to fix the notation. In Sections~\ref{subsec:t-holom-regularity-lemmas},~\ref{subsec:local_inverting_on_t-emb} and~\ref{subsec:circle_pattern} we construct and estimate the discrete full-plane Cauchy kernel. This kernel will be used later as a building block for the discrete Cauchy kernel on a Riemann surface. In Section~\ref{subsec:multivalued} and~\ref{subsec:graph_on_a_cone} we model the situation near a conical singularity. Here multivalued functions on isoradial graphs have to be analyzed. Luckily, a great deal of tools for this was already developed by Dub\'edat in~\cite{DubedatFamiliesOfCR}, which eases our work.

We then jump from a local (full-plane) analysis to the global (on a surface) analysis. In Section~\ref{sec:The Riemann surface: continuous setting} we introduce all the necessary notations and constructions associated with the continuous Riemann surface setup. In particular, we prove the existence of a locally flat metric with prescribed conical singularities. In Section~\ref{subsec:Szego_kernel} we study the continuous Cauchy kernel, which the discrete one approximates.

At this point everything is ready to construct the perturbed discrete Cauchy kernel on a Riemann surface. The kernel is constructed and analyzed in Section~\ref{sec:discrete_Riemann_surface}.

After we construct and estimate the inverting kernel, we study in details the relation between the perturbed Kasteleyn operator and the dimer height function. This is done in Section~\ref{sec:determinant}. Following~\cite{DubedatFamiliesOfCR}, we establish a combinatorial relation between the determinant of the perturbed Kasteleyn operator and the characteristics function of the height field in Sections~\ref{subsec:Kasteleyn_thm} and~\ref{subsec:det_Kalpha}. We analyze the logarithmic variation of the determinant in Section~\ref{subsec:variation_of_det}.

A somewhat continuous analog of these results appears in Section~\ref{sec:free_field}. After a short technical discussion of the construction of the compactified free field we generalize the bosonization identity from the work~\cite{Bosonization}, adapting it for our needs. This identity express certain partition functions of the compactified free field via theta constants, which allows to link it with the determinantal identities for the corresponding partition function of the dimer model via the variation identities obtained in Section~\ref{subsec:variation_of_det}.

Finally, everything is prepared for the proof of the our main results. Section~\ref{sec:proof_of_main_results} is devoted to this: we prove Theorem~\ref{thma:main1} in Section~\ref{subsec:reconstruction_theorem}, Theorem~\ref{thma:main2} in Section~\ref{subsec:second_thm_tightness} and Theorem~\ref{thma:for_BLR} in Section~\ref{subsec:third_thm_BLR_setup}.

In the course of our work we use many different aspects of the theory of Riemann surfaces. We finalize our work with an appendix which contains a short guide through these aspects of the theory.

\section{T-embeddings on the full plane and local kernels}
\label{sec:full plane case}

This section is devoted to the study of ``local'' properties of the inverse Kasteleyn operator. This is done by considering infinite t-embeddings covering the Euclidean plane $\CC$, and constructing and estimating the inverse Kasteleyn operators associated with them.

We begin by reviewing the notion of t-embeddings of planar graphs and related concepts introduced in~\cite{CLR1}. We also recall the necessary facts from the regularity theory for t-holomorphic functions developed in that paper. Then, we use these tools to construct and estimate a full-plane inverse kernel for the Cauchy--Riemann operator on a t-embedding. Next, we proceed to the more special case of isoradial graphs, which we need to analyze discrete holomorphic functions near conical singularities. To prepare for it we study multivalued discrete holomorphic functions on full-plane isoradial graphs following the approach from~\cite[Section~7.1]{DubedatFamiliesOfCR}. We finish the section by analyzing the corresponding inverse Kasteleyn operator in an infinite cone.

Until the rest of the section $G$ denotes an infinite bipartite planar graph. The vertices of $G$ are split into two bipartite classes $B\sqcup W$, we call the vertices from $B$ \emph{black} and the vertices from $W$ \emph{white}. The graph $G^*$ denotes the corresponding dual graph. We will use the same notation for faces of $G^\ast$ and vertices of $G$.

\subsection{T-embeddings, t-holomorphic functions, and T-graphs}
\label{subsec:t-embedding_def}

We begin with the definitions and combinatorial properties of t-embeddings and t-holomorphic functions following~\cite{CLR1}. Consider an embedding $\Tt$ of $G^*$ into the plane which maps all faces from $B\cup W$ to bounded convex polygons, and such that the union of the faces covers the plane.
\begin{defin}
  \label{defin:t-emb}
  An embedding $\Tt$ of $G^*$ into $\CC$ is called an \emph{t-embedding} if for any vertex $v$ of $G^*$ the sum of black angles incident to $\Tt(v)$ is equal to $\pi$.
\end{defin}

There is a natural way to define Kasteleyn weights on $G$ given a t-embedding $\Tt$: if $b\sim w$ and $v_1v_2$ is the dual edge of $G^*$ oriented such that $b$ is on the right, then we set
\begin{equation}
  \label{eq:def_of_KTt}
  K_\Tt(w,b) = \Tt(v_2) - \Tt(v_1).
\end{equation}
It is straightforward to check~\cite[Section~2]{CLR1} that the matrix $K_\Tt$ satisfies the Kasteleyn condition provided $\Tt$ is a t-embedding.
 Another important object in the theory is the \emph{origami square root} function $\eta: B\cup W\to \TT = \{ z\in \CC\ \mid\ |z| = 1\}$. By the definition, $\eta$ is any function such that for any $b\in B,w\in W$
\begin{equation}
  \label{eq:def_of_eta}
  \bar{\eta}^2_b\bar{\eta}^2_w = \left( \frac{K_\Tt(w,b)}{|K_\Tt(w,b)|} \right)^2.
\end{equation}
This definition of $\eta$ is consistent around each vertex of $G^*$ due to the definition of t-embedding, therefore such an $\eta$ always exists. From now on we fix a choice of $\eta$. Following~\cite[Definition~5.1]{CLR1} we define t-holomorphic functions as follows:

\begin{defin}
  \label{defin:discrete_holom}
  Let $w\in W$ and $f$ be a function defined on all vertices $b$ such that $b\sim w$. Then $f$ is called \emph{t-white-holomorphic} at $w$ if for any $b\sim w$ we have $f(b)\in \eta_b \RR$, and the following relation holds:
  \begin{equation}
    \label{eq:Cauchy--Riemann_identity}
    \sum_{b\sim w} K_\Tt(w,b)f(b) = 0.
  \end{equation}
  Given a subset of white vertices and a function $f$ defined on the union of their neighbors, we call $f$ t-white-holomorphic if $f$ is t-white-holomorphic at each white vertex from the subset.
\end{defin}

The notion of a \emph{white splitting} of $\Tt$ is defined as follows. For any $w\in W$, draw an arbitrary maximal collection of non-intersecting diagonals of $w$, thus dividing it into triangles. Let $\Bws$ denote the union of $B$ and the set of diagonals and $\Wws$ denote the set of triangles. We write $b\sim w$ if $b\in \Bws$ is adjacent to $w\in \Wws$. If $b\sim w$, then we define $K_\Tt(w,b)$ by~\eqref{eq:def_of_KTt} using the vertices $v_1,v_2$ of $G^\ast$ adjacent to $b$ and $w$ and enumerated such that $w$ is on the left from $v_1v_2$. Note that the origami square root function $\eta$ can be extended to $\Bws\cup \Wws$ to still satisfy the condition~\eqref{eq:def_of_eta}.

We define a black splitting in the same way and use the notation $\Gbs,\Bbs, \Wbs$ in this case. The following lemma is a rephrasing of~\cite[Proposition~5.4]{CLR1}:
\begin{lemma}
  \label{lemma:function_on_splitting}
  Let $w\in W$ and the function $\eta$ be fixed. Let also a white splitting $\Ttws$ be given and let $w_1,\dots, w_m$ be the triangles corresponding to $w$. A function $f$ is t-white-holomorphic at $w$ if and only if $f$ can be extended to $\{ w_1,\dots,w_m \}$ and all the diagonals separating $w_i$'s in such a way that for any $k = 1,\dots, m$ and $b\in \Bws$ such that $b\sim w_k$ we have
  \[
    f(b) = \Pr(f(w_k),\eta_b\RR),
  \]
  where $\Pr(A, \eta \RR) = \frac{1}{2}\left( A + \eta^2\bar{A} \right)$ is the projection of the complex number $A$ onto the line $\eta\RR$.
\end{lemma}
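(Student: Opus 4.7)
The plan is to verify the two implications separately. Everything is local around the white face $w$, so the entire argument reduces to combinatorics on the polygon $w$ equipped with the fixed triangulation into $w_1,\dots,w_m$. For the ($\Leftarrow$) direction, assume an extension with the projection property is given. Since $\Pr(\cdot,\eta_b\RR)$ takes values in $\eta_b\RR$ by construction, the $\eta_b\RR$-valuedness of $f$ on $B$ is automatic. My key step is the triangle identity $\sum_{b\sim w_k}K_\Tt(w_k,b) f(b)=0$ for every $k$: substituting $f(b)=\tfrac12(f(w_k)+\eta_b^2\overline{f(w_k)})$, the $f(w_k)$-piece telescopes since the dual edges close up around the triangle. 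For the $\overline{f(w_k)}$-piece, the defining identity $\bar\eta_b^2\bar\eta_{w_k}^2=(K_\Tt(w_k,b)/|K_\Tt(w_k,b)|)^2$ gives $K_\Tt(w_k,b)\eta_b^2=\overline{K_\Tt(w_k,b)}/\eta_{w_k}^2$, so this piece equals $\eta_{w_k}^{-2}\overline{\sum_b K_\Tt(w_k,b)}=0$. Summing the triangle identity over $k$, contributions at each internal diagonal cancel pairwise (the two adjacent triangles induce opposite orientations on the dual edge), leaving exactly $\sum_{b\in B,\,b\sim w}K_\Tt(w,b)f(b)=0$.

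For the ($\Rightarrow$) direction I would induct on $m$. In the base case $m=1$, the polygon $w$ is a triangle with three black neighbors $b_1,b_2,b_3$; rewriting $f(b_i)=\eta_{b_i}r_i$ with $r_i\in\RR$, the projection conditions $\Pr(z,\eta_{b_i}\RR)=f(b_i)$ read $\mathrm{Re}(z\bar\eta_{b_i})=r_i$. Since the triangle is non-degenerate, no two of its dual edges are parallel, hence $\eta_{b_1}^2\ne\eta_{b_2}^2$ and $\bar\eta_{b_1},\bar\eta_{b_2}$ are $\RR$-linearly independent. The first two conditions therefore determine $f(w_1)\coloneqq z$ uniquely, and the third becomes a single real linear relation in $(r_1,r_2,r_3)$; by the ($\Leftarrow$) direction it must be proportional to the real form $\sum_i\epsilon_{w,b_i}|K_\Tt(w,b_i)|r_i=0$ of the Cauchy--Riemann equation, which holds by hypothesis. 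For $m\geq2$ I would pick any diagonal $b^*$ of the triangulation, splitting $w$ into sub-polygons $w'$ and $w''$ each with strictly fewer triangles. Writing $f(b^*)=\eta_{b^*}r^*$, the equation $\sum_{b\sim w'}K_\Tt(w',b)f(b)=0$ rewrites as $\bar\eta_{w'}$ times a real affine function of $r^*$ with nonzero coefficient $\pm|K_\Tt(w',b^*)|$, which uniquely determines $r^*$. The analogous equation on $w''$ then holds automatically by the diagonal cancellation $K_\Tt(w',b^*)+K_\Tt(w'',b^*)=0$ combined with the original t-white-holomorphicity at $w$, and the induction hypothesis applied to $w'$ and $w''$ produces the desired extension to the remaining diagonals and triangles.

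The only nontrivial ingredient is keeping the sign and $\eta$ conventions straight in the algebraic identities; once the relation $K_\Tt(w,b)\eta_b=\pm|K_\Tt(w,b)|\bar\eta_w$ is in hand, both the telescoping identity inside a triangle and the solvability of the reduced equation on a diagonal become mechanical. Since the statement is a rephrasing of~\cite[Proposition~5.4]{CLR1}, after laying out this scheme I would conclude by referring the reader there for any remaining details.
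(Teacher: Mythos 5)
The paper does not give an internal proof of this lemma; it simply cites~\cite[Proposition~5.4]{CLR1}, so there is no in-paper argument to compare against. Your self-contained reconstruction is correct: the ($\Leftarrow$) direction follows from the triangle identity $\sum_{b\sim w_k}K_\Tt(w_k,b)=0$ together with the relation $K_\Tt(w_k,b)\eta_b^2=\bar\eta_{w_k}^2\,\overline{K_\Tt(w_k,b)}$, after which interior diagonals cancel in pairs when summing over $k$; and your ($\Rightarrow$) induction on a chosen diagonal correctly isolates the single real unknown $r^*$ via the reality of $\bar\eta_{w'}\sum_b K_\Tt(w',b)f(b)$ and uses the diagonal cancellation $K_\Tt(w',b^*)+K_\Tt(w'',b^*)=0$ to propagate the Cauchy--Riemann identity to the complementary sub-polygon. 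The only small bookkeeping point worth making explicit in your inductive step is that the sub-polygons $w'$, $w''$ are not themselves elements of the splitting, so you are implicitly extending $K_\Tt$ and $\eta$ to them; this is harmless since the boundary edges of $w'$ (other than $b^*$) coincide with sides of $w$, so one may take $\eta_{w'}=\eta_{w''}=\eta_w$ and define $\eta_{b^*}$ by the usual relation, and convexity of the white faces guarantees all triangles appearing are non-degenerate so that the linear systems in the base case and in the diagonal-determination step have full rank.
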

The values $f(w_k)$ are sometimes referred as ``true complex values'' of the function $f$ in the literature. Notice that the function $f$ considered, for example, as a piece-wise constant function on the black faces of $\Tt$ cannot be regular since the arguments of the values of $f$ on these faces jump from one face to another. But a t-white-holomorphic function $f$ restricted to white faces will have certain regularity properties as we will see below.

We now define the origami map $\Oo$ associated with a t-embedding (cf.~\cite[Definition~2.7]{CLR1}):

\begin{defin}
  \label{defin:of_origami}
  The origami map $\Oo$ is primitive of the 1-form
  \begin{equation}
    \label{eq:def_of_origami}
    d\Oo(z)= \begin{cases}
      \eta_w^2\,dz,\qquad z\in \cup_{w\in W}\Tt(w),\\
      \bar{\eta}_b^2\,d\bar{z},\qquad z\in \cup_{b\in B}\Tt(b).
    \end{cases}
  \end{equation}
  By abusing the language we will also be using the notation $\Oo$ for the map $\Oo\circ\Tt$ from the vertices of $G^\ast$ to $\CC$.
\end{defin}

Let $\alpha\in \TT$ be a unit complex number. Consider the image of $G^*$ under the map $\Tt + \alpha^2\Oo$. From the definition of $\Oo$ it is clear that for any $b\in B$ the image $(\Tt + \alpha^2\Oo)(b)$ is a translation of $2\Pr(\Tt(b), \alpha\bar{\eta}_b\RR)$ and for any $w\in W$ the image $(\Tt + \alpha^2\Oo)(w)$ is a translation of $(1 + (\alpha\eta_w)^2)\Tt(w)$. We endow the image of $G^\ast$ with a graph structure as follows: the vertices are the images of vertices of $G^\ast$, and two vertices are connected by an edge if there is an edge $e$ of $G^\ast$ such that $(\Tt + \alpha^2 \Oo)(e)$ is a segment connecting these two vertices and not containing any other vertex inside. Denote this graph by $\Tt + \alpha^2\Oo$ by abusing the notation. It can be shown~\cite[Proposition~4.3]{CLR1} that the mapping $\Tt + \alpha^2\Oo$ does not have overlaps, hence, we can treat $\Tt + \alpha^2\Oo$ as a planar graph with faces given by those $(\Tt + \alpha^2\Oo)(w)$ for which $1 + (\alpha\eta_2)^2\neq 0$. Those $w$ for which $1 + (\alpha\eta_w)^2=0$ are called \emph{degenerate faces} of $\Tt + \alpha^2\Oo$. 

  Graphs $\Tt + \alpha^2\Oo$ have a special structure making them \emph{T-graphs} with possibly degenerated faces, we address the reader to~\cite[Definitions~4.1,4.2]{CLR1} for the precise definition of the notion of T-graphs. We can also consider $\Tt - \bar{\alpha}^2\bar{\Oo}$ which is again a T-graph with faces corresponding to $b\in B$; in what follows we will need both families of T-graphs.

Consider now a \emph{black} splitting of $\Tt$, so that $\Bbs$ is the set of triangles and $\Wbs$ is the set of diagonals and white faces of $G^*$. Note that for any diagonal $w$ we still have that $(\Tt + \alpha^2\Oo)(w)$ is a translation of $(1 + (\alpha\eta_w)^2)\Tt(w)$, where $\eta_w$ is the extension of $\eta$ to the triangles and the diagonals. We define \emph{transition rates} $q(v\to v')$ between the vertices of the T-graph $\Tt + \alpha^2\Oo$ associated with the given black splitting as follows.

\begin{itemize}
  \item Let $b\in \Bbs$ be a triangle and assume that the length of each side of this triangle remains non-zero in $(\Tt + \alpha^2\Oo)(b)$. Let $v_1,v_2,v_3$ be the vertices of $\Tt + \alpha^2\Oo$ which are the images of the vertices of $b$. Then we chose the $i$ such that $v_i$ lies between $v_{i-1}$ and $v_{i+1}$ and set
    \begin{equation*}
      q(v_i \to v_{i\pm 1}) = \frac1{|v_i - v_{i\pm 1}|\cdot |v_{i+1} - v_{i-1}|}.
    \end{equation*}
  \item Assume that $w\in \Wbs$ is such that $1 + (\alpha\eta_w)^2 = 0$ and let $b_1,b_2,\dots, b_d\in\Bbs$ be the triangles adjacent to $w$. Let $v_1,v_2,\dots,v_d$ be all the vertices of $\Tt + \alpha^2\Oo$ which are images of those vertices of $b_1,\dots,b_d$ that are not adjacent to $w$. Finally, let $v$ denote the vertex of $\Tt + \alpha^2\Oo$ corresponding to the image of $w$. For each $i = 1,\dots,d$ we set
    \[
      q(v\to v_i) = \frac{m_i}{|v-v_i|^2},\qquad m_i = \frac{S_{b_i}}{\sum_{j = 1}^d S_{b_j}} = \frac{|K_\Tt(w,b_i)|\cdot |v - v_i|}{\sum_{j = 1}^d|K_\Tt(w,b_j)|\cdot |v - v_j| }
    \]
    where $S_{b_j}$ is the area of the triangle $b_j$, and $K_\Tt(w,b_j)$ denotes the aforementioned extension of $K_\Tt$ to $\Wbs\times \Bbs$.
  \item If $v,v'$ are any two vertices in $\Tt + \alpha^2\Oo$ such that $q(v\to v')$ was not defined on the previous procedure, we declare $q(v\to v') = 0$.
\end{itemize}
Define $X^v_t$ to be the continuous time random walk on vertices of $\Tt + \alpha^2\Oo$ started at $v$ and jumping with rates specified above.

\begin{rem}
  \label{rem:variation_of_Xt}
  Note that the rates defining $X^v_t$ are set up such that $\Tr\Var(X_t^v) = t$, cf.~\cite[Remark~4.5]{CLR1}. Note that the quantities $S_b$ give an invariant measure for the random walk $X_t$ (cf. Lemma~\ref{lemma:invariant_measure}), which explains the meaning of the coefficients $m_i$ in the definition of intensities at a degenerate vertex, see~\cite[Remark~4.7]{CLR1} for details.
\end{rem}

A complex valued function $H$ on a subset of vertices of $\Tt + \alpha^2\Oo$ is called harmonic if it is a martingale with respect to $X_t$ stopped on the boundary of this subset. In~\cite{CLR1} a correspondence between primitives of t-holomorphic functions on t-embeddings and harmonic functions on T-graphs is constructed. We now describe this correspondence in details.

Let $\alpha\in \TT$ be arbitrary and black and white splittings be fixed. Let $U$ be a subset of the plane and $U_\alpha$ be its image on the T-graph $\Tt + \alpha^2 \Oo$. Let $H$ be a harmonic function on $U_\alpha$. Then it is easy to check (see~\cite[Section~4.2]{CLR1}) that for each $b\in \Bws$ such that $\Tt(b)\subset U$ the function $H$ extends to the segment $(\Tt + \alpha^2\Oo)(b)$ linearly. Define the function $D[H](b)$ by
\begin{equation}
  \label{eq:def_of_DH}
  dH = D[H](b)\,dz\qquad \text{along the segment} (\Tt + \alpha^2\Oo)(b).
\end{equation}

\begin{lemma}
  \label{lemma:Dharm=holom}
  Assume that we are in the setting above. Assume moreover that $H$ takes its values in $\alpha\RR$. Then for each $b\in \Bws$ contained in $U$ we have $D[H](b)\in \eta_b\RR$ and the function $D[H]$ is t-white-holomorphic at each $w\in W$ such that $w$ and all its neighbouring black faces are contained in $U$.
\end{lemma}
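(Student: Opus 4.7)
The plan is to deduce both claims from the martingale property of $H$ combined with the geometry of the splitting. As a preliminary, I would first observe that at any vertex of $\Tt+\alpha^2\Oo$ lying in the interior of a segment $(\Tt+\alpha^2\Oo)(b)$, the only allowed jumps of the random walk are along that segment; with the prescribed rates $q(v_i\to v_{i\pm 1})=1/(|v_i-v_{i\pm 1}||v_{i+1}-v_{i-1}|)$, the martingale condition collapses to $(H(v_{i+1})-H(v_i))/|v_{i+1}-v_i|=(H(v_i)-H(v_{i-1}))/|v_i-v_{i-1}|$, which forces linear interpolation. Hence $H$ extends affinely along every segment and $D[H](b)$ is well-defined via~\eqref{eq:def_of_DH}.

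For the first assertion I would argue purely geometrically. For $b\in B$, on $\Tt(b)$ the map $\Tt+\alpha^2\Oo$ reads $z\mapsto z+\alpha^2\bar\eta_b^2\bar z + c$, so the image segment is a translate of $2\Pr(\Tt(b),\alpha\bar\eta_b\RR)$ and therefore has direction $\alpha\bar\eta_b$. For a diagonal $b$ contained in a white face $w'$ of $\Tt$, the analogous computation uses $\Tt+\alpha^2\Oo=(1+\alpha^2\eta_{w'}^2)z+c'$ together with the extension of $\eta$ to $\Bws\cup\Wws$, yielding again direction $\alpha\bar\eta_b$. Since $H$ takes values in $\alpha\RR$, the increment $dH$ along such a segment is a real multiple of $\alpha$, whereas $dz$ is a real multiple of $\alpha\bar\eta_b$; taking the ratio gives $D[H](b)\in\eta_b\RR$.

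For the second assertion I would work triangle by triangle in the splitting of $w$, say over the triangles $w_1,\dots,w_m$. When $1+(\alpha\eta_{w_i})^2\neq 0$, the face $(\Tt+\alpha^2\Oo)(w_i)$ is a genuine polygon whose oriented edges are translates of $(1+(\alpha\eta_{w_i})^2)K_\Tt(w_i,b)$ for $b\sim w_i$; integrating the exact form $dH$ around this face and substituting $dH=D[H](b)\,dz$ on each edge yields $(1+(\alpha\eta_{w_i})^2)\sum_{b\sim w_i}K_\Tt(w_i,b)D[H](b)=0$, and dividing gives $\sum_{b\sim w_i}K_\Tt(w_i,b)D[H](b)=0$. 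When $1+(\alpha\eta_{w_i})^2=0$ the face collapses to a single vertex $v$, and the same identity has to be extracted from the martingale condition at $v$ using that the rates $q(v\to v_j)=m_j/|v-v_j|^2$ are defined with weights $m_j\propto |K_\Tt(w_i,b_j)|\cdot|v-v_j|$; after substituting the known direction of each outgoing segment and the values $D[H](b_j)\in\eta_{b_j}\RR$, the martingale equation rewrites exactly as the required Cauchy--Riemann sum. Summing the identities over $i=1,\dots,m$, the contributions along the splitting diagonals cancel in pairs, and only $\sum_{b\sim w,\,b\in B}K_\Tt(w,b)D[H](b)=0$ survives, which is~\eqref{eq:Cauchy--Riemann_identity} at $w$.

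The only real obstacle I anticipate is the degenerate case: one must verify by a direct computation that the specific rates at a collapsed face recombine into the correct Cauchy--Riemann combination. A cleaner alternative I would keep in reserve is a perturbation argument, replacing $\alpha$ by a slightly rotated $\alpha_\varepsilon$ so that none of $w_1,\dots,w_m$ collapses, applying the non-degenerate calculation, and passing to the limit using continuity of $D[H](b)$ in $\alpha$; this reduces the degenerate case to the non-degenerate one at the price of a continuity check on the harmonic function associated with a varying T-graph.
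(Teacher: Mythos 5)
Your proof is essentially self-contained, whereas the paper simply cites~\cite[Section~4.2 and Section~5]{CLR1}, so there is no written-out argument to compare against line by line; what you are reconstructing is the CLR1 argument. The non-degenerate part of your proof is correct: the martingale condition along a black segment does force affine extension (the T-graph structure from~\cite[Proposition~4.3]{CLR1} is implicitly used to ensure each interior vertex lies in the interior of exactly one segment, so the martingale equation at $v_i$ involves only $v_{i\pm1}$), the direction computation for $(\Tt+\alpha^2\Oo)(b)$ with $b\in\Bws$ is right (for diagonals one checks $e^{i\theta}(1+\alpha^2\eta_{w'}^2)/(\alpha e^{i\theta}\eta_{w'})=2\Re(\alpha\eta_{w'})\in\RR$), and the contour integral around a non-degenerate image triangle, followed by the telescoping cancellation of diagonal contributions, does give~\eqref{eq:Cauchy--Riemann_identity}.

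The degenerate case is where you have a real gap. First, when $1+(\alpha\eta_w)^2=0$, \emph{all} triangles $w_1,\dots,w_m$ of $w$ are degenerate simultaneously (they share the same $\eta$), and they all collapse to the single vertex $v$. There is exactly one martingale equation at $v$, not one per white triangle, so there are no triangle-level identities to sum --- your ``$\sum_{b\sim w_i}K_\Tt(w_i,b)D[H](b)=0$ for each $i$'' simply does not exist as a decomposition of the degenerate martingale condition. Second, the rates at a degenerate vertex are defined in terms of the \emph{black} splitting, with $m_i\propto|K_\Tt(w,b_i)|\cdot|v-v_i|$ where $w\in\Wbs$ is the full white face and $b_i\in\Bbs$ are black triangles; your weights $m_j\propto|K_\Tt(w_i,b_j)|\cdot|v-v_j|$ use white-splitting triangles $w_i$ in a place where the definition does not. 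The correct fix is to work directly with the full white face: the single martingale equation at $v$, $\sum_k q(v\to v_k)\bigl(H(v_k)-H(v)\bigr)=0$, together with $H(v_k)-H(v)=D[H](b_k)(v_k-v)$ and the identity $|K_\Tt(w,b_k)|\,\tfrac{v_k-v}{|v_k-v|}=\pm iK_\Tt(w,b_k)$, rewrites precisely to $\sum_{b\sim w}K_\Tt(w,b)D[H](b)=0$ in one step (this is essentially the computation underlying Lemma~\ref{lemma:monodromy_and_Kf}, with zero source). Finally, the perturbation fallback you mention does not close the gap as stated: perturbing $\alpha$ changes the transition rates of the T-graph, so the given function $H$ is no longer harmonic on $\Tt+\alpha_\varepsilon^2\Oo$, and continuity of $D[H]$ in $\alpha$ is not the only thing you would need to check --- you would need to approximate $H$ by functions harmonic on the perturbed T-graph, which is a genuinely separate argument.
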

\begin{proof}
  See~\cite[Section~4.2]{CLR1}, in particular Definition~4.12 and the remark after Remark~4.13, and~\cite[Section~5]{CLR1} for the treatment of t-embeddings with non-triangular faces.
\end{proof}

Vice versa, given a t-holomorphic function one can integrate it and obtain a harmonic function on a T-graph. Assume that $f$ is a function defined on faces and diagonals from $\Bws\cup \Wws$ whose images are contained in $U$, and assume that for each $b\sim w$ we have $f(b) = \Pr[f(w), \eta_b\RR]$. Following~\cite{CLR1} we introduce the following piece-wise constant 1-form $\omega_f$ on $U$:
\begin{equation}
  \label{eq:def_of_omega_f}
  \omega_f(z) = \begin{cases}
    2f(b)\,dz,\qquad z\text{ belongs to }\Tt(b),\ b\in \Bws,\\
    f(w)\,dz + \overline{f(w)}\,d\Oo(z),\qquad z\text{ belongs to }\Tt(w),\ w\in \Wws.
  \end{cases}
\end{equation}
We have the following 
\begin{lemma}
  \label{lemma:primitive_of_f}
  The form $\omega_f$ is closed. If the primitive $\mathrm{I}_{\alpha\RR}[f]$ of $\Pr(\omega_f, \alpha\RR)$ is well-defined on $U$, then $\mathrm{I}_{\alpha\RR}[f]$ restricted to vertices of $G^*$ descends to a harmonic function on the T-graph $(\Tt + \alpha^2\Oo)(U)$ such that $D[H](b) = f(b)$ for any $b$ contained in $U$.
\end{lemma}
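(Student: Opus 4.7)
Proof plan:

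The lemma has two assertions, closedness of $\omega_f$ and the identification of its projected primitive (restricted to T-graph vertices) with a harmonic function whose discrete derivative recovers $f$, which I address in turn.

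For closedness, observe that on each face of the splitting $\omega_f$ is a constant multiple of $dz$: on a white triangle $w\in\Wws$ the identity $d\Oo|_{\Tt(w)}=\eta_w^2\,dz$ gives $\omega_f|_{\Tt(w)}=(f(w)+\overline{f(w)}\eta_w^2)\,dz=2\Pr(f(w),\eta_w\RR)\,dz$, and on a black face $b\in\Bws$, $\omega_f|_{\Tt(b)}=2f(b)\,dz$ by definition. Closedness then reduces, via Stokes face-by-face, to the vanishing of $\oint_{\partial\Tt(F)}\omega_f$ around each face $F$ of $G^\ast$ in $U$. For a black face the constancy of $\omega_f|_{\Tt(b)}$ makes this integral $2f(b)\oint dz=0$ trivially. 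For an original white face $w\in W$, integrating $\omega_f$ along $\partial\Tt(w)$ using the black-side value $2f(b)$ on each boundary edge dual to $wb$ gives
\[
\sum_{b\sim w}2f(b)K_\Tt(w,b)=2\sum_{b\sim w}K_\Tt(w,b)f(b),
\]
which vanishes by the Cauchy--Riemann identity~\eqref{eq:Cauchy--Riemann_identity}. The analogous CR identities on each triangle of the triangulation of $w$, summed together, telescope along the interior diagonals and recover the same boundary identity.

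Granting closedness, the projection $\Pr(\omega_f,\alpha\RR)=\tfrac12(\omega_f+\alpha^2\overline{\omega_f})$ is also closed, so under the hypothesis of the lemma its primitive $H=\mathrm{I}_{\alpha\RR}[f]$ is a well-defined $\alpha\RR$-valued function on $U$. To compute $D[H](b)$ for a black face $b\in\Bws$ whose T-graph edge has endpoints $p_i=(\Tt+\alpha^2\Oo)(\Tt(v_i))$ for $i=1,2$, I integrate $\Pr(\omega_f,\alpha\RR)$ along the straight segment from $\Tt(v_1)$ to $\Tt(v_2)$ inside $\Tt(b)$; since $\omega_f|_{\Tt(b)}=2f(b)\,dz$ this produces $\Pr(2f(b)(v_2-v_1),\alpha\RR)$ (identifying $\Tt(v_i)$ with $v_i\in\CC$). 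Using $f(b)\in\eta_b\RR$, $(v_2-v_1)=\pm\bar\eta_b\bar\eta_w|v_2-v_1|$, and the formula $(p_2-p_1)=(v_2-v_1)+\alpha^2\bar\eta_b^2\overline{(v_2-v_1)}=\pm|v_2-v_1|\bar\eta_b(\bar\eta_w+\alpha^2\eta_w)$ on $\Tt(b)$, a short algebraic simplification yields $\Pr(2f(b)(v_2-v_1),\alpha\RR)=f(b)(p_2-p_1)$, so $D[H](b)=f(b)$ by the definition~\eqref{eq:def_of_DH}.

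Harmonicity of $H$ on the T-graph then follows from the converse direction of Lemma~\ref{lemma:Dharm=holom}: since the established relation $D[H]=f$ is t-white-holomorphic by hypothesis, the restriction of $H$ to T-graph vertices is automatically a martingale for the random walk $X^v_t$ on $\Tt+\alpha^2\Oo$. I expect the main technical obstacle to be the closedness verification: the explicit formulas for $\omega_f$ on black faces versus white triangles do not agree tangentially edge-by-edge, so the cancellation manifests only after summation along each full face boundary of $G^\ast$, with the Cauchy--Riemann identity providing the precise algebraic mechanism.
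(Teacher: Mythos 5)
The paper's proof of this lemma is just a citation of \cite[Propositions~3.7 and~4.15]{CLR1}, so you are attempting a self-contained argument; unfortunately it has two genuine gaps. The first is in the closedness step. Showing that $\oint_{\partial\Tt(F)}\omega_f=0$ for every face $F$ of $G^\ast$ does \emph{not} establish closedness of a discontinuous piecewise-constant $1$-form: a small contractible loop that threads a single edge (without enclosing any face) picks up precisely the jump in the tangential component across that edge, and closedness in the distributional sense requires this jump to vanish \emph{on every edge separately}, not merely after summing around face boundaries. You explicitly acknowledge that the formulas ``do not agree tangentially edge-by-edge,'' which, if true, would make $\omega_f$ non-closed and your face-boundary cancellation argument would not repair it. In fact the tangential components \emph{do} agree once the conjugation convention on the origami differential is sorted out: with $\omega_f$ on $\Tt(w)$ written as $f(w)\,dz+\overline{f(w)}\,\bar\eta_w^2\,d\bar z$ (i.e.\ with $d\bar\Oo$ rather than $d\Oo$, consistent with the fact that on $\Tt(b)$ this unified formula reduces to $2f(b)\,dz$ because $f(b)\in\eta_b\RR$), the jump across an edge dual to $wb$ is $\overline{f(w)}\bigl(\eta_b^2\tau-\bar\eta_w^2\bar\tau\bigr)$, which vanishes identically using $\tau^2=\bar\eta_b^2\bar\eta_w^2$; the same identity kills the jump across each interior diagonal using $f(w_1)-f(w_2)\in i\eta_b\RR$. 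That edge-by-edge matching, not a post-hoc face summation, is the actual content of the closedness claim, and it also gives the ``descends'' assertion (which your proof does not address): one checks that $\Pr(\omega_f,\alpha\RR)$ vanishes identically on a degenerate white face $\Tt(w)$ with $1+(\alpha\eta_w)^2=0$, so the primitive is constant there and the restriction to vertices of $G^\ast$ is well-defined on T-graph vertices.

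The second gap is the harmonicity step. You invoke ``the converse direction of Lemma~\ref{lemma:Dharm=holom},'' but that lemma only asserts that harmonicity of $H$ implies t-holomorphicity of $D[H]$; it says nothing about the converse, and the converse is not a formal consequence. What is needed is a direct verification of the martingale balance $\sum_{v'}q(v\to v')\bigl(H(v')-H(v)\bigr)=0$ at every (non-degenerate) vertex of $\Tt+\alpha^2\Oo$, with the rates defined via the black splitting; this requires unpacking the geometric normalisation in the definition of $q$ and combining it with the relation $D[H]=f$ on the incident black cells, together with the Cauchy--Riemann identity at the two white neighbours adjacent to $v$. This is exactly the content of \cite[Proposition~4.15]{CLR1}, which is why the paper cites it. Your computation of $D[H](b)=f(b)$ is correct (it uses only $\omega_f|_{\Tt(b)}=2f(b)\,dz$ and $f(b)\in\eta_b\RR$, so it is insensitive to the conjugation issue above), but on its own it does not yield harmonicity without the balance argument.
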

\begin{proof}
  Follows from~\cite[Proposition~3.7]{CLR1} and~\cite[Proposition~4.15]{CLR1}.
\end{proof}

Let us look closely at the case when $U$ is obtained from a simply-connected domain by removing $\Tt(w_0)$ for a fixed $w_0\in W$. Fix such an $U$ and a function $f$ as above. Let $\gamma$ be a simple loop in $U$ encircling $w_0$ and oriented counterclockwise. Note that the face of the T-graph $\Tt - \bar{\eta}_{w_0}^2\Oo$ corresponding to $w_0$ is degenerate. Let $b_1,\dots, b_d$ be the neighbors of $w_0$ and for each $i$ let $v_i$ be the unique vertex of $\Tt - \bar{\eta}_{w_0}^2\Oo$ lying on $(\Tt - \bar{\eta}_{w_0}^2\Oo)(b_i)$ and such that $q(v_0\to v_i)\neq 0$.

\begin{lemma}
  \label{lemma:monodromy_and_Kf}
  We have
  \begin{equation}
    \label{eq:monodromy_and_Kf1}
    \int_\gamma \omega_f = \sum_{b\sim w_0} K_\Tt(w_0,b)f(b)\in \bar{\eta}_{w_0}\RR.
  \end{equation}
  In particular, $H = \mathrm{I}_{i\bar{\eta}_{w_0}\RR}[f]$ is well-defined on $U$ and we have 
  \begin{equation}
    \label{eq:monodromy_and_Kf2}
    \sum_{k = 1}^d (H(v_k) - H(v_0))\cdot q(v_0 \to v_k) = \frac{i\sum_{k = 1}^d K_\Tt(w_0,b_k)f(b_k)}{\sum_{k = 1}^d |K_\Tt(w_0,b_k)|\cdot |v_0 - v_k|}
  \end{equation}

  Vice versa, if we have a function $H$ on $(\Tt - \bar{\eta}_{w_0}^2\Oo)(U)$ harmonic at all the vertices except $(\Tt - \bar{\eta}_{w_0}^2\Oo)(w_0)$, having its values in $i\bar{\eta}_{w_0}\RR$ and satisfying 
  \begin{equation}
    \label{eq:monodromy_and_Kf2.5}
    \sum_{k = 1}^d (H(v_k) - H(v_0))\cdot q(v_0 \to v_k) = \frac{i\alpha}{\sum_{k = 1}^d |K_\Tt(w_0,b_k)|\cdot |v_0 - v_k|}
  \end{equation}
  then $f = D[H]$ as a function on the black faces whose images are contained in $U$ is t-white-holomorphic at all the white faces $w$ such that the images of $w$ and all its neighbors are contained in $U$. Moreover, we have
  \[
    \sum_{k = 1}^d K_\Tt(w_0,b_k)f(b_k) = \alpha.
  \]
\end{lemma}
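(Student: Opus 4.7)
\emph{Plan.} I begin with~\eqref{eq:monodromy_and_Kf1}. By Lemma~\ref{lemma:primitive_of_f} the form $\omega_f$ is closed on $U$, so the period around $\Tt(w_0)$ may be computed by contracting $\gamma$ into a thin strip just outside $\partial\Tt(w_0)$ lying in the black faces $\Tt(b_1),\dots,\Tt(b_d)$. Corner contributions vanish as the strip shrinks (since $\omega_f$ is piecewise constant, hence locally bounded), and the segment inside each $\Tt(b_k)$ contributes, up to a convention factor, $K_\Tt(w_0,b_k)f(b_k)$. To see that this sum lies in $\bar\eta_{w_0}\RR$ I use the identity $\eta_w\eta_b K_\Tt(w,b)\in\RR$ (which follows directly from~\eqref{eq:def_of_eta}) together with $f(b_k)\in\eta_{b_k}\RR$: writing $K_\Tt(w_0,b_k)=\epsilon_k|K_\Tt(w_0,b_k)|\bar\eta_{w_0}\bar\eta_{b_k}$ and $f(b_k)=\eta_{b_k}r_k$ with $\epsilon_k\in\{\pm1\}$ and $r_k\in\RR$ gives $K_\Tt(w_0,b_k)f(b_k)=\epsilon_k|K_\Tt(w_0,b_k)|\bar\eta_{w_0}r_k$. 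Since $\bar\eta_{w_0}\RR$ is perpendicular to $i\bar\eta_{w_0}\RR$, the projection $\Pr(\omega_f,i\bar\eta_{w_0}\RR)$ has zero period around $w_0$, and the primitive $H\coloneqq\mathrm{I}_{i\bar\eta_{w_0}\RR}[f]$ is well-defined on $U$.

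By Lemma~\ref{lemma:primitive_of_f}, the restriction of $H$ to the vertices of the T-graph $\Tt-\bar\eta_{w_0}^2\Oo$ is harmonic at each vertex whose corresponding white face satisfies the t-white-holomorphicity constraint; by hypothesis this fails only at $w_0$, so $H$ is harmonic on the T-graph away from $v_0$. To obtain the source term at $v_0$, I use two ingredients: first, $H$ extends linearly along the T-graph segment corresponding to each triangle $b_k$, giving $H(v_k)-H(v_0)=f(b_k)(v_k-v_0)$ (this follows from Lemma~\ref{lemma:Dharm=holom} together with $f=D[H]$); and second, a direct computation of the T-graph coordinates gives $v_k-v_0 = \Delta_k - \bar\eta_{w_0}^2\bar\eta_{b_k}^2\bar\Delta_k$ where $\Delta_k=\Tt(u_k^\ast)-\Tt(v_k^\ast)$, from which $(v_k-v_0)/|v_k-v_0| = \pm iK_\Tt(w_0,b_k)/|K_\Tt(w_0,b_k)|$ (the sign determined by the CCW orientation of $b_k$). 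Substituting these, together with the rate formula $q(v_0\to v_k)=m_k/|v_0-v_k|^2=|K_\Tt(w_0,b_k)|/(M|v_0-v_k|)$ with $M=\sum_j|K_\Tt(w_0,b_j)|\cdot|v_0-v_j|$, into $\sum_k q(v_0\to v_k)(H(v_k)-H(v_0))$ and simplifying yields~\eqref{eq:monodromy_and_Kf2}.

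The converse is proved by reversing this chain. Given $H$ with the stated properties, set $f=D[H]$ on the black faces with images in $U$; the value condition $f(b)\in\eta_b\RR$ follows from $H\in i\bar\eta_{w_0}\RR$ combined with the direction of the T-graph segments computed above, and t-white-holomorphicity of $f$ at each $w\neq w_0$ with all neighbors in $U$ follows from Lemma~\ref{lemma:Dharm=holom}. The hypothesis~\eqref{eq:monodromy_and_Kf2.5} then inverts the computation in the previous paragraph to give $\sum_k K_\Tt(w_0,b_k)f(b_k)=\alpha$. The main obstacle throughout is the careful sign and phase bookkeeping: the origami square root $\eta$ is only defined up to a global sign, the triangles of the black splitting must be oriented consistently with the ``$b$ on the right'' convention of $K_\Tt$, and the direction of $(v_k-v_0)$ in the degenerate T-graph must be correctly aligned with $K_\Tt(w_0,b_k)$ via the identity $\Im(K_\Tt(w_0,b_k)\bar\Delta_k)>0$ for CCW-oriented $b_k$, so that the $\pm i$ in the segment direction matches the $i$ on the right-hand side of~\eqref{eq:monodromy_and_Kf2}.
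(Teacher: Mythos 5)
Your proposal follows essentially the same route as the paper's proof: deforming $\gamma$ to the boundary of $\Tt(w_0)$ and reading off the black-face contributions for~\eqref{eq:monodromy_and_Kf1}, then using the linear extension $H(v_k)-H(v_0)=f(b_k)(v_k-v_0)$, the rate formula $q(v_0\to v_k)=|K_\Tt(w_0,b_k)|/(M\,|v_0-v_k|)$, and the phase identity $|K_\Tt(w_0,b_k)|(v_k-v_0)/|v_k-v_0|=iK_\Tt(w_0,b_k)$ for~\eqref{eq:monodromy_and_Kf2}, with the converse obtained by reversing the chain. You spell out a couple of points that the paper leaves implicit (the direct verification that $\sum_k K_\Tt(w_0,b_k)f(b_k)\in\bar\eta_{w_0}\RR$, and the explicit T-graph displacement formula $\Delta_k-\bar\eta_{w_0}^2\bar\eta_{b_k}^2\bar\Delta_k$ showing $v_k-v_0\in i\bar\eta_{w_0}\bar\eta_{b_k}\RR$), but both arguments defer the final sign determination to a brief orientation remark.
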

\begin{proof}
  Since $\omega_f$ is closed on $U$ outside the white face $w_0$, we can assume that $\gamma$ is the boundary of $\Tt(w_0)$. The relation~\eqref{eq:monodromy_and_Kf1} now follows from the definition of $K_\Tt$ given in~\eqref{eq:def_of_KTt}.

  To establish~\eqref{eq:monodromy_and_Kf2}, let us note that by the definition of $D[H]$ we have 
  \[
    H(v_k) - H(v_0) = f(b_k)\cdot (v_k - v_0)
  \]
  which together with the definition of the transition rate $q$ in the case of a degenerate face give
  \begin{equation}
    \label{eq:monodromy_and_Kf3}
    \sum_{k = 1}^d (H(v_k) - H(v_0))\cdot q(v_0 \to v_k) = \frac{\sum_{k = 1}^d |K_\Tt(w_0,b_k)|\frac{(v_k - v_0)}{|v_k - v_0|}f(b_k)}{\sum_{k = 1}^d |K_\Tt(w_0,b_k)|\cdot |v_0 - v_k|}.
  \end{equation}
  Recall that by the definition of $\eta$ we have $K_\Tt(w_0,b_k) \in \bar{\eta}_{w_0}\bar{\eta}_{b_k}\RR$, and in the same time $v_k - v_0\in i\bar{\eta}_{w_0}\bar{\eta}_{b_k}\RR$. It follows that
  \begin{equation}
    \label{eq:monodromy_and_Kf4}
    |K_\Tt(w_0,b_k)|\frac{(v_k - v_0)}{|v_k - v_0|} = \pm iK(w_0,b_k).
  \end{equation}
  The fact that the sign in~\eqref{eq:monodromy_and_Kf4} is ``$+$'' follows easily from orientation arguments. Thus~\eqref{eq:monodromy_and_Kf2} follows from~\eqref{eq:monodromy_and_Kf3} and~\eqref{eq:monodromy_and_Kf4}.

  The reverse statement is clear from the calculations above.
\end{proof}

Another important combinatorial result of~\cite{CLR1} shows that t-holomorphic functions are martingales with respect to the \emph{time-reversed} random walks on T-graphs. Let $\alpha\in \TT$ be an arbitrary unit complex number and consider the T-graph $\Tt - \bar{\alpha}^2\bar{\Oo}$; recall that black faces of $\Tt$ correspond to faces of this T-graph, while white faces are projected to segments. Let us fix an arbitrary \emph{white} splitting $\Ttws$ and define transition rates for the random walk on $\Tt - \bar{\alpha}^2\bar{\Oo}$ similarly as above, but with black and white faces playing opposite roles. For each white triangle $w\in \Wws$ of $\Ttws$ define $S_w$ to be the area of $\Tt(w)$. Let $v$ be a vertex of $\Tt - \bar{\alpha}^2\bar{\Oo}$. Define $\nu(v)$ as follows:

\begin{itemize}
  \item If there exists a white triangle $w(v)\in \Wws$ such that $v$ belongs to the interior of the segment $(\Tt - \bar{\alpha}^2\bar{\Oo})(w(v))$, then set $\nu(v) = S_{w(v)}$.
  \item Otherwise there exists a black face or a black diagonal $b\in \Bws$ such that $v = (\Tt - \bar{\alpha}^2\bar{\Oo})(b)$. Let $w_1,\dots,w_d$ be the white triangles adjacent to $b$. Set $\nu(v) = S_{w_1} + \dots + S_{w_d}$.
\end{itemize}

\begin{lemma}
  \label{lemma:invariant_measure}
  The weights $\nu$ define an invariant measure for the continuous time random walk on $\Tt - \bar{\alpha}^2\bar{\Oo}$.
\end{lemma}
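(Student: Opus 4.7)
The plan is to verify the invariance equation $\sum_u \nu(u) q(u \to v) = \nu(v) \sum_{u'} q(v \to u')$ at every vertex $v$ of the T-graph $\Tt - \bar{\alpha}^2\bar{\Oo}$ by a local computation, in direct analogy with the statement for the walk on $\Tt + \alpha^2\Oo$ with measure $\{S_b\}_{b\in\Bbs}$ indicated in Remark~\ref{rem:variation_of_Xt} and in~\cite[Remark~4.7]{CLR1}.

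First, I would classify $v$ according to whether it lies in the interior of some segment $(\Tt-\bar\alpha^2\bar\Oo)(w(v))$ with $w(v)\in\Wws$ (so $\nu(v)=S_{w(v)}$), or $v$ is a ``corner'' vertex — identified with the image of a degenerate black face $b$ (satisfying $\eta_b^2=\bar\alpha^{-2}$) or of a diagonal $b \in \Bws$ — with $\nu(v)=S_{w_1}+\dots+S_{w_d}$. The only sources of nonzero rates are non-degenerate segments (producing a middle-to-endpoint pair of rates) and degenerate black faces (producing rates from the collapsed point to the outer neighbors $v_i$).

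Second, for a non-degenerate segment with collinear images $v_-,v_m,v_+$, the collinearity identity $|v_+-v_-|=|v_m-v_-|+|v_m-v_+|$ shows that the two rates out of $v_m$ sum to $(|v_m-v_-||v_m-v_+|)^{-1}$. Thus in the ``interior'' case the outflow at $v=v_m^{w(v)}$ equals $S_{w(v)}/(|v-v_-^{w(v)}||v-v_+^{w(v)}|)$, while each incident segment with $v$ at an endpoint contributes $S_{w_i}/(|v-v_m^{w_i}||v_+^{w_i}-v_-^{w_i}|)$ to the inflow. In the ``corner'' case, the definitions $m_i=S_{w_i}/\sum_j S_{w_j}$ and $\nu(v)=\sum_j S_{w_j}$ combine so that the outflow from the degenerate black face equals $\sum_i S_{w_i}/|v-v_i|^2$, which must match the sum of inflows from the incident segments of $w_1,\dots,w_d$.

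Third, the equality of outflow and inflow reduces to a local geometric identity at the underlying vertex $v^\ast$ of $G^\ast$, relating the segment lengths $|v_+^w-v_-^w|$ and $|v_m^w-v|$ to the area $S_w$. Using the projection formula $|v_+^w-v_-^w|=2|\Re(\alpha\bar\eta_w(\Tt(v_+^\ast)-\Tt(v_-^\ast)))|$ together with the base-times-height expression $2S_w=|K_\Tt(w,b)|\cdot d(w,b)$ for the area of the white triangle in terms of the dual edge length and the perpendicular distance to the opposite vertex, the desired identity becomes an angle-summing relation at $v^\ast$ that follows from the angle condition of Definition~\ref{defin:t-emb} and from the defining relation~\eqref{eq:def_of_eta} of the origami square-root function. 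The same computation disposes of the degenerate-black-face case, the weighting $m_i$ being designed precisely to absorb the contribution coming from $b$.

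The main technical obstacle is the careful bookkeeping near corner vertices where many white triangles and possibly a degenerate black face contribute simultaneously. A cleaner route, which I would try first, is to exploit the symmetry of Definition~\ref{defin:t-emb} and of~\eqref{eq:def_of_eta} under the exchange $(B,W,\alpha)\leftrightarrow(W,B,i\bar\alpha)$: under this exchange the data of $(\Tt+\alpha^2\Oo,\Bbs)$ and that of $(\Tt-\bar\alpha^2\bar\Oo,\Wws)$ correspond, the rates match term by term, and the areas $S_b$ and $S_w$ swap their roles. The claim then follows from the invariance of $\{S_b\}_{b\in\Bbs}$ for the walk on $\Tt+\alpha^2\Oo$ referenced in~\cite[Section~4]{CLR1} by transport of structure.
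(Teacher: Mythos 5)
The paper's ``proof'' of this lemma is nothing more than a citation to~\cite[Proposition~4.11(vi)]{CLR1} and~\cite[Section~5]{CLR1}, so there is no detailed argument in the paper to compare against. Your proposal is nevertheless instructive to evaluate on its own terms.

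Your ``cleaner route'' at the end is sound, and it is essentially the mechanism that makes the cited result apply: swapping black and white in $\Tt$ yields a new t-embedding $\Tt'$ (same point set, same segments, angle condition preserved) whose origami map satisfies $\Oo'=\bar\Oo$ up to an additive constant, and for which one may choose $\eta'_{w'}=\eta_b$, $\eta'_{b'}=\eta_w$. Taking $\alpha'=i\bar\alpha$ then gives $\alpha'^2\Oo'=-\bar\alpha^2\bar\Oo$, so $\Tt'+\alpha'^2\Oo'=\Tt-\bar\alpha^2\bar\Oo$ as T-graphs, the black splitting $\Bbs'$ of $\Tt'$ is exactly the white splitting $\Wws$ of $\Tt$, the degeneracy condition $1+(\alpha'\eta'_{b'})^2=0$ becomes $\eta_b^2=\alpha^2$ as required, and the areas $S_{b'}$ are precisely the $S_w$. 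Under this dictionary, the rates and the measure $\nu$ defined in the lemma are literally the CLR1 objects for $(\Tt',\Bbs',\alpha')$, so the invariance really does transfer. This makes explicit what the paper's citation leaves implicit, which is useful.

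Your first route, by contrast, is only a sketch. The balance equation is set up correctly, and the collinearity identity $|v_+-v_-|=|v_m-v_-|+|v_m-v_+|$ giving total outflow $(|v_m-v_-||v_m-v_+|)^{-1}$ from an interior vertex is right. But the crucial step — the ``angle-summing relation'' at a vertex $v^\ast$ of $G^\ast$, which is what must be shown to follow from the angle condition and from~\eqref{eq:def_of_eta} — is asserted and never derived; this is precisely the nontrivial content. You would also need to account for possible inflow from degenerate black faces incident to the same $v^\ast$, which the sketch does not address. Finally, the projection formula you quote, $|v_+^w-v_-^w|=2|\Re(\alpha\bar\eta_w(\Tt(v_+^\ast)-\Tt(v_-^\ast)))|$, does not match the definitions: since $d(\Tt-\bar\alpha^2\bar\Oo)=dz-(\bar\alpha\bar\eta_w)^2\,d\bar z$ on a white face, the segment length is $2|\Im(\alpha\eta_w\,\xi)|$ for $\xi=\Tt(v_+^\ast)-\Tt(v_-^\ast)$, not the expression you wrote. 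None of this affects the duality route, but as written the direct route does not constitute a proof.
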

\begin{proof}
  See~\cite[Proposition~4.11(vi)]{CLR1} and~\cite[Section~5]{CLR1}.
\end{proof}

Let $\alpha\in \TT$ and $U\subset \Wws \cup \Bws$ be given. Let us say that a vertex $v$ of the T-graph $\Tt - \bar{\alpha}^2\bar{\Oo}$ is covered if one of the following holds:

\begin{itemize}
  \item either there is a white triangle $w\in U$ such that $v$ is in the interior of the interval $(\Tt - \bar{\alpha}^2\bar{\Oo})(w)$; in this case set $w(v)= w$,
  \item or $v$ coincides with a degenerate face $(\Tt - \bar{\alpha}^2\bar{\Oo})(b)$ and all the triangles adjacent to $b$ belong to $U$; in this case set $w(v)$ to be any of these triangles.
\end{itemize}
Let $U^\alpha$ be the set of vertices of $\Tt - \bar{\alpha}^2\bar{\Oo}$ covered by $(\Tt - \bar{\alpha}^2\bar{\Oo})(U)$ and $\partial U^\alpha$ be the subset of $U^\alpha$ of those $v$ for which there exists $v'\notin U^\alpha$ such that $q(v'\to v)\neq 0$. Denote by $Y_t$ the reversed time random walk on $\Tt - \bar{\alpha}^2\bar{\Oo}$ defined with respect to the invariant measure $\nu$.

Let now $f$ be a function defined on $U$ and having the property that whenever $w,b\in U$ and $w\sim b$ and $f$ we have $\Pr[f(w),\eta_b\RR] = f(b)$.

\begin{lemma}
  \label{lemma:t-holom_are_matringales}
  In the setting above, the function $v\mapsto \Pr(f(w(v)), \alpha\RR)$ defined on $U^\alpha$ is a martingale with respect to the random walk $Y_t$ stopped on $\partial U^\alpha$.
\end{lemma}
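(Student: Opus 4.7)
The statement is a local property of the time-reversed walk $Y_t$ at each interior covered vertex $v\in U^\alpha\smm\partial U^\alpha$. By the time-reversal formula $\tilde q(v\to v')=\nu(v')q(v'\to v)/\nu(v)$ with $\nu$ as in Lemma~\ref{lemma:invariant_measure}, the martingale property is equivalent to the flux identity
\[
\sum_{v'\sim v} \nu(v')\,q(v'\to v)\bigl[F(v')-F(v)\bigr]=0, \qquad F(v):=\Pr\bigl(f(w(v)),\alpha\RR\bigr),
\]
where $q$ denotes the forward rates on the T-graph $\Tt-\bar\alpha^2\bar\Oo$. So the plan is simply to verify this identity by unpacking the definitions.

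First I would identify the relevant terms. By construction of the forward walk on the T-graph, the rate $q(v'\to v)$ is non-zero only when $v'$ is the middle vertex of a degenerate bar $(\Tt-\bar\alpha^2\bar\Oo)(w')$ for some white triangle $w'\in\Wws$ of which $v$ is one of the two endpoints; in this case $\nu(v')=S_{w'}$ and $q(v'\to v)$ equals $|v_+'-v_-'|^{-1}|v'-v|^{-1}$ with $v_\pm'$ the endpoints of that bar. Moreover $F(v')=\Pr(f(w'),\alpha\RR)$ since $v'$ is interior to the bar of $w'$. Let $v_0\in V(G^\ast)$ be the vertex with $(\Tt-\bar\alpha^2\bar\Oo)(v_0)=v$; then the white triangles $w'$ contributing to the above sum are precisely the triangles of the splitting lying in white faces $w$ of $G$ adjacent to $v_0$.

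The second and main step is to collect the contributions of each such white face $w\in U$ and match them against the Cauchy--Riemann identity~\eqref{eq:Cauchy--Riemann_identity}. Summing over the triangles of the white splitting inside a fixed white face $w$, the geometric weights $\nu(v')q(v'\to v)=S_{w'}/(|v_+'-v_-'|\,|v'-v|)$ telescope and the remaining coefficients combine, after using the formula for $d\Oo$ on $w$, into the Kasteleyn weights $K_\Tt(w,b)$ for $b\sim w$. Invoking the projection relation $f(b)=\Pr(f(w),\eta_b\RR)$ from Lemma~\ref{lemma:function_on_splitting} and inserting it into $F(v')-F(v)=\Pr(f(w)-f(w(v)),\alpha\RR)$, the per-face contribution rewrites as
\[
\Re\Bigl[\,\bar\alpha\sum_{b\sim w}K_\Tt(w,b)\,f(b)\Bigr],
\]
which vanishes by t-white-holomorphicity of $f$ at $w$. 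Summing over all $w\sim v_0$ closes the argument.

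A little extra care is needed for the second case in the definition of ``covered'', namely when $v$ coincides with a degenerate face $(\Tt-\bar\alpha^2\bar\Oo)(b)$ for some $b\in B$ with $\bar\alpha^2\eta_b^2=1$, so that $w(v)$ is arbitrary among the adjacent white triangles in $U$. Here one first has to check well-definedness of $F(v)$: in the degenerate regime $\alpha\RR=\bar\eta_b\RR$, and the identity $\sum_{b'\sim w}K_\Tt(w,b')f(b')=0$ applied to each adjacent $w$ shows that the projection $\Pr(f(w),\alpha\RR)$ does not depend on the choice of $w$ among the neighbours of $b$, after which the computation of the previous paragraph applies verbatim. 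The main technical obstacle in the proof is precisely this bookkeeping of geometric factors --- matching the triangle areas $S_{w'}$, the bar lengths in $\Tt-\bar\alpha^2\bar\Oo$, and the moduli $|K_\Tt(w,b)|$ --- which is the content of the combinatorial identities already developed in~\cite{CLR1} and which reduce the global martingale property to the pointwise Cauchy--Riemann identity for $f$.
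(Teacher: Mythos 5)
The paper's own ``proof'' of this lemma is nothing more than a citation of~\cite[Proposition~4.17]{CLR1}, so there is no internal argument to compare against; your goal of giving a self-contained derivation is reasonable. Your reduction of the martingale property to the flux identity $\sum_{v'}\nu(v')\,q(v'\to v)\,[F(v')-F(v)]=0$ at interior vertices, via the time-reversal formula $\tilde q(v\to v')=\nu(v')q(v'\to v)/\nu(v)$, is the right first step and matches the framework set up in the text. After that, however, there are concrete gaps.

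First, your identification of the neighbours $v'$ contributing to the flux is incomplete. You assert that $q(v'\to v)\neq 0$ only when $v'$ is the middle vertex of a white bar having $v$ as an endpoint. But the forward rates on $\Tt-\bar\alpha^2\bar\Oo$ also include jumps out of a \emph{degenerate black face vertex}: if $b'\in\Bws$ has $\bar\alpha^2\eta_{b'}^2=1$, then $v'=(\Tt-\bar\alpha^2\bar\Oo)(b')$ is a single point, and the forward walk jumps from $v'$ to the ``opposite'' vertices $v_i$ of the adjacent white triangles with rate $m_i/|v'-v_i|^2$. If $v$ happens to be one of these $v_i$, there is an extra flux term you never account for. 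You do flag the degenerate case, but only for $v$ itself (the second alternative in the definition of ``covered''), not for $v'$ as a source; these are different situations.

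Second, and more seriously, the step where ``the geometric weights telescope and the remaining coefficients combine into the Kasteleyn weights $K_\Tt(w,b)$'' is not justified, and I do not believe it works the way you claim. The triangles $w'\in\Wws$ entering the flux sum at $v$ are only those having $v_0=(\Tt-\bar\alpha^2\bar\Oo)^{-1}(v)$ as a vertex of $\Tt(w')$; these are a strict subset of the triangles of each white face $w$ of $G^\ast$ adjacent to $v_0$ unless the splitting of $w$ is chosen to be a fan from $v_0$ (which it need not be). The Cauchy--Riemann identity $\sum_{b\sim w}K_\Tt(w,b)f(b)=0$, on the other hand, is a statement about \emph{all} black neighbours of $w$, and by Lemma~\ref{lemma:function_on_splitting} it is only equivalent to the projection relation holding on \emph{all} triangles and diagonals of $w$'s splitting. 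The hypothesis that $v\in U^\alpha\smm\partial U^\alpha$ guarantees that the triangles whose bars contribute a forward transition into $v$ are in $U$, but it does not guarantee that $U$ contains the full splitting of every white face incident to $v_0$. So invoking ``t-white-holomorphicity of $f$ at $w$'' is not licensed by the hypotheses, and the per-face reorganization you rely on would have to be replaced by a genuinely local identity that only uses the projection relation on the triangles that actually appear in the flux sum.

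In short: the skeleton (flux identity via reversal, then a local combinatorial cancellation) is correct, but the cancellation you describe is not the one that actually holds, both because it omits the degenerate-source terms and because it appeals to a global-per-face identity that the hypotheses do not provide. The actual argument in~\cite[Proposition~4.17 and Lemma~4.19]{CLR1} is a more delicate local bookkeeping that stays at the level of the triangles and diagonals incident to $v_0$.
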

\begin{proof}
  See~\cite[Proposition~4.17]{CLR1}.
\end{proof}

\subsection{Regularity lemmas for t-holomorphic functions}
\label{subsec:t-holom-regularity-lemmas}

Let now the parameters $0<\lambda<1$ and (a small) $\delta>0$ be fixed. From now on we assume that all the t-embeddings are weakly uniform as defined in Section~\ref{subsec:intro_graphs_on_Sigma0}. We do not impose the small origami assumption yet, but rather assume a weaker $\Lip = \Lip(1-\lambda, \lambda\delta)$ assumptions from~\cite{CLR1}. Namely, we assume that
\begin{equation}
  \label{eq:Lip_assumption}
  |\Tt(z_1)-\Tt(z_1)|\geq \lambda^{-1}\delta \quad \Rightarrow \quad |\Oo(z_1) - \Oo(z_2)|\leq (1-\lambda)|\Tt(z_1)-\Tt(z_1)|.
\end{equation}

\noindent Note that the argument principle together with the lipschitzness of $\Oo$ imply that whenever $\Ff$ is one of the mappings $\Tt + \alpha^2 \Oo$ we have
\begin{equation}
  \label{eq:balls_on_temb_vs_on_Tgraphs}
  B(\Ff(z), \lambda r) \subset \Ff(B(z,r)) \subset B(\Ff(z), (2-\lambda)r),
\end{equation}
provided $r\geq C\delta$ for some $C$ depending on $\lambda$ only, see~\cite[eq.~(6.1)]{CLR1} and the discussion after it. Given a function $f$ on a set $A$ define
\begin{equation}
  \label{eq:def_of_osc}
  \osc_A f = \sup_{a,b\in A} |f(a) - f(b)|
\end{equation}

\begin{lemma}
  \label{lemma:Harnak_on_T-graph}
  Assume that $\Tt$ is a weakly uniform t-embedding satisfying $\Lip$ assumption, let $\alpha\in \TT$ be given, $v$ be a vertex of the T-graph $\Tt + \alpha^2 \Oo$, and black and white splittings be fixed. Let $H$ be a harmonic function defined on vertices of $\Tt + \alpha^2 \Oo$ lying inside the ball $B(v,r)$. Then we have
  \[
    \max_{B(v, r/2)}|D[H]| \leq \frac{C}{r}\cdot \osc_{B(v,r)}H
  \]
  provided $r\geq C\delta$, where $C$ depends on $\lambda$ only.
\end{lemma}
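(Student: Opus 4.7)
The plan is to reduce the gradient bound on $D[H]$ to a Lipschitz estimate for the harmonic function $H$ itself on the T-graph $\Tt + \alpha^2 \Oo$, and then prove that Lipschitz estimate via a martingale-coupling argument for the random walk $X_t$.

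First, I would unpack what $|D[H](b)|$ means in terms of $H$. Fix any $b \in \Bws$ whose image in the T-graph lies in $B(v, r/2)$. Since $d\Oo = \bar{\eta}_b^2 d\bar z$ on $\Tt(b)$, the image $(\Tt+\alpha^2\Oo)(b)$ is a rigid image (up to a reflection) of $\Tt(b)$, so the edge lengths are the same as in $\Tt(b)$; by the weakly uniform assumption, each such image contains two vertices $v_1, v_2$ at distance $\geq \lambda\delta$. Since $H$ is affine with slope $D[H](b)$ along $(\Tt+\alpha^2\Oo)(b)$, we get
\[
|D[H](b)| \;\leq\; \frac{|H(v_1) - H(v_2)|}{|v_1 - v_2|}.
\]
Thus it suffices to establish the Lipschitz bound
\begin{equation}
|H(v_1) - H(v_2)| \;\leq\; \frac{C\,|v_1-v_2|}{r}\,\osc_{B(v,r)} H \qquad\text{for all vertices } v_1, v_2 \in B(v, r/2)\tag{$\ast$}
\end{equation}
of the T-graph, as soon as $|v_1-v_2|\geq \lambda\delta$. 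Together with the lower bound $|v_1-v_2|\geq \lambda\delta$ and the trivial bound on the minimum edge length, $(\ast)$ gives exactly $|D[H](b)|\leq (C/r)\osc_{B(v,r)}H$.

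To prove $(\ast)$ I would appeal to the martingale representation $H(v_i) = \EE[H(X_{\tau_i}^{v_i})]$, where $\tau_i$ is the exit time of $X_t^{v_i}$ from $B(v,r)$; by~\eqref{eq:balls_on_temb_vs_on_Tgraphs}, this ball in the T-graph corresponds to a region in $\Tt$ of comparable size, and the boundary values live inside $B(v, (2-\lambda)r)$. Couple two walks $X_t^{v_1}$ and $X_t^{v_2}$ in the standard way (e.g.\ reflection / mirror coupling in the T-graph metric, or by using the invariance principle for $X_t$ as formulated in [CLR1], which gives that $X_t$ converges to planar Brownian motion on scales $\gg \delta$). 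By the variance identity $\Tr \Var X_t^v = t$ from Remark~\ref{rem:variation_of_Xt}, the walks exit $B(v,r)$ after time $\sim r^2$ with overwhelming probability; on the other hand the coupled walks coalesce in time $\lesssim (|v_1-v_2|/r)\cdot r^2$. Hence
\[
\mathbb{P}[X_{\tau_1}^{v_1}\neq X_{\tau_2}^{v_2}] \;\leq\; C\,\frac{|v_1-v_2|}{r},
\]
and $(\ast)$ follows after using $|H(X_{\tau_1}^{v_1})-H(X_{\tau_2}^{v_2})|\leq \osc_{B(v,r)}H$ on this event.

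The main obstacle is executing the coupling uniformly on the T-graph in the presence of degenerate vertices $(\Tt+\alpha^2\Oo)(w)$ (those with $1+(\alpha\eta_w)^2=0$), where the walk has potentially many long-range jumps. This is precisely where the weakly uniform and $\Lip$ assumptions enter: they guarantee uniform crossing estimates for $X_t$ (developed in [CLR1]) which in turn yield the required Harnack principle at all scales $\geq C\delta$. Granting those crossing estimates, the coupling argument proceeds by a now-standard scheme, producing the constants $C$ depending only on $\lambda$.
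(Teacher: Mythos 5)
The paper's own proof of this lemma is a one-line citation: it invokes~\cite[Theorem~6.17]{CLR1}, a dichotomy stating that either the gradient bound in the lemma holds, or else $\max|D[H]|$ is exponentially large in $r/\delta$; the paper then rules out the second alternative by observing that $\lambda\delta$-fatness of black faces bounds $|D[H](b)|$ trivially by $\osc H/(\lambda\delta)$, which is sub-exponential. Your proposal instead attempts a self-contained proof via a coupling argument, which is a genuinely different and more ambitious route. However, there are two issues, one cosmetic and one substantive.

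The cosmetic issue: your claim that $(\Tt+\alpha^2\Oo)(b)$ is ``a rigid image (up to a reflection) of $\Tt(b)$'' is wrong. Since $|\alpha^2\bar\eta_b^2|=1$, the map $z\mapsto z+\alpha^2\bar\eta_b^2\bar z$ is twice an orthogonal projection onto a line, so $(\Tt+\alpha^2\Oo)(b)$ is a \emph{segment}, not a rigid copy of the polygon, and edge lengths are not preserved. The conclusion you need (two vertices of the image at mutual distance $\gtrsim\lambda\delta$) is still true, but it comes from the fact that projecting a region containing a disc of radius $\lambda\delta$ yields a segment of length $\geq 2\lambda\delta$, not from rigidity.

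The substantive issue is the coupling step, which you yourself flag as ``the main obstacle'' and then resolve by asserting it is ``a now-standard scheme.'' It is not, in this setting, and the gap is exactly where the content of~\cite[Theorem~6.17]{CLR1} lies. Mirror/reflection coupling cannot be implemented exactly for discrete walks, and for T-graph walks with degenerate vertices (where $X_t$ makes long-range jumps with position-dependent rates) there is no off-the-shelf approximate version. The invariance principle you invoke is a soft (weak-convergence) statement that gives no rate, and therefore cannot by itself yield the sharp $O(|v_1-v_2|/r)$ uncoupling probability; what one gets from crossing estimates alone is a scale-iterated argument yielding a H\"older bound (cf.\ Lemma~\ref{lemma:Holderness_of_thol_fcts}), not a Lipschitz one. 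To upgrade H\"older to Lipschitz by coupling one needs a quantitative local-CLT-type input (e.g.\ total-variation overlap of the time-$r^2$ transition kernels from $v_1$ and $v_2$), which is essentially as hard as what~\cite[Theorem~6.17]{CLR1} establishes by other means. So the proposal, as written, reduces the lemma to a claim that is not justified and is in fact the crux of the matter. The honest way to close the gap is either to cite~\cite[Theorem~6.17]{CLR1} directly (as the paper does) and note that $\lambda\delta$-fatness eliminates the exponential alternative, or to actually prove the quantitative coupling estimate for T-graph walks, which would amount to reproving that theorem.
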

\begin{proof}
  See~\cite[Theorem~6.17]{CLR1}; note that the second alternative in this theorem cannot occur because black faces are $\lambda\delta$-fat.
\end{proof}

\begin{lemma}
  \label{lemma:Holderness_of_thol_fcts}
  Assume that $\Tt$ is a weakly uniform t-embedding satisfying $\Lip$ assumption $z\in \CC$ is a point and $R>r>0$. Assume that a white splitting is given and $F$ is a function defined on those faces and diagonals from $\Wws\cup \Bws$ which are contained in $B(z,R)$, and assume that $f$ satisfies $\Pr[F(w), \eta_b\RR] = F(b)$ whenever $b\sim w$ are from the domain where $F$ is defined. Let $F^\circ$ denote the restriction of $F$ to white triangles. Then we have
  \[
    \osc_{B(z,r)} F^\circ \leq C(r/R)^\alpha \osc_{B(z,R)} F^\circ
  \]
  provided $r\geq C\delta$, where $C>0$ and $\alpha>0$ depend on $\lambda$ only.
\end{lemma}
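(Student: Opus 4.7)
The strategy is to leverage the martingale characterization of t-white-holomorphic functions from Lemma~\ref{lemma:t-holom_are_matringales} and derive H\"older regularity from a standard dyadic oscillation reduction on T-graphs.

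First, I would reduce the claim to real-valued projections. Fix two linearly independent unit complex numbers $\alpha_1,\alpha_2\in\TT$ (say $\alpha_1=1$ and $\alpha_2=i$). For any $\zeta\in\CC$ one has $|\zeta|\leq C_0\bigl(|\Pr(\zeta,\alpha_1\RR)|+|\Pr(\zeta,\alpha_2\RR)|\bigr)$, and the same bound holds for oscillations of $F^\circ$ on any set. It therefore suffices to prove the claim for each real-valued projection $P_\alpha\coloneqq\Pr(F^\circ,\alpha\RR)$ separately, with $\alpha\in\{\alpha_1,\alpha_2\}$.

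Fix $\alpha\in\TT$. By Lemma~\ref{lemma:t-holom_are_matringales}, the lift $\widetilde{P}_\alpha(v)\coloneqq \Pr(F(w(v)),\alpha\RR)$, defined on the set $U^\alpha$ of vertices of the T-graph $\Tt-\bar\alpha^2\bar\Oo$ covered by the image of $B(z,R)$, is a martingale for the time-reversed random walk $Y_t$ killed on $\partial U^\alpha$. The H\"older estimate will then follow by iterating a one-scale oscillation reduction of the form
\[
  \osc_{B(z,\rho)}P_\alpha \;\leq\; \theta\,\osc_{B(z,2\rho)}P_\alpha,\qquad \rho\in[C\delta,R/2],
\]
where $\theta=\theta(\lambda)\in(0,1)$; applying this over the dyadic scales $\rho=r,2r,4r,\dots$ up to size comparable to $R$ yields the stated bound with exponent $\alpha=\log_2(1/\theta)$, and the final exchange from T-graph vertices to values on white triangles uses the inclusion~\eqref{eq:balls_on_temb_vs_on_Tgraphs}.

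The one-scale reduction is a consequence of a uniform crossing estimate for $Y_t$: starting from any vertex $v_0$ of the T-graph with $w(v_0)\subset B(z,\rho)$, the walk exits $B(z,2\rho)$ and hits each of two prescribed macroscopic arcs of $\partial B(z,2\rho)$ with probability bounded below by some $p=p(\lambda)>0$. Combined with the martingale property of $\widetilde{P}_\alpha$, this forces $\widetilde{P}_\alpha(v_0)$ to lie strictly inside the range of its boundary values on $\partial B(z,2\rho)$, which is precisely the desired oscillation contraction. The crossing estimate itself would be extracted from the Harnack-type gradient bound of Lemma~\ref{lemma:Harnak_on_T-graph} and the geometric control provided by weak uniformity and the $\Lip$ assumption, which together ensure that $Y_t$ behaves like a non-degenerate diffusion down to scale $C\delta$.

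The main obstacle I foresee is making the crossing estimate for $Y_t$ genuinely uniform in $\alpha$, in $\rho\geq C\delta$, and in the admissible geometry of the T-graph, in particular in the presence of possibly degenerate faces and of thin white faces permitted by weakly uniform (rather than $\unif$) t-embeddings. This is a discrete potential-theoretic task closely parallel to the arguments developed in~\cite{CLR1}, and one should be able to assemble it from the Harnack gradient bound, the invariant measure of Lemma~\ref{lemma:invariant_measure}, and a standard doubling argument using~\eqref{eq:balls_on_temb_vs_on_Tgraphs}.
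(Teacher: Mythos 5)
Your sketch reproduces the strategy of \cite[Proposition~6.13]{CLR1}, which is precisely what the paper cites for this lemma: project onto two transversal real lines, invoke the martingale property of Lemma~\ref{lemma:t-holom_are_matringales} for the reversed walk $Y_t$ on the T-graph $\Tt-\bar\alpha^2\bar\Oo$, and run a dyadic oscillation contraction driven by uniform crossing estimates, finally transferring between t-embedding and T-graph scales via~\eqref{eq:balls_on_temb_vs_on_Tgraphs}. One correction to the logical dependencies, though: you propose to extract the needed crossing estimate from the Harnack-type gradient bound of Lemma~\ref{lemma:Harnak_on_T-graph}, but in \cite{CLR1} the dependence runs the other way. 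The uniform crossing estimate (their Lemma~6.8) is the primitive input, proved directly from the geometry of T-graphs under weak uniformity and $\Lip$, and the gradient bound (their Theorem~6.17, restated here as Lemma~\ref{lemma:Harnak_on_T-graph}) is one of its downstream consequences. The crossing estimate therefore has to be invoked as a standalone fact rather than derived from the gradient bound; attempting the latter would be circular. With that swap your argument is essentially the cited proof.
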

\begin{proof}
  See~\cite[Proposition~6.13]{CLR1}.
\end{proof}

\begin{lemma}
  \label{lemma:Boundedness_of_white_values}
  Assume that $\Tt$ is a weakly uniform t-embedding satisfying $\Lip$ assumption and white splitting $\Wws$ is fixed. Let $U\subset \Wws\cup \Bws$ be given. Say that $w\in \Wws\cap U$ lies in the interior of $U$ if for all $b\sim w$ we have $b\in U$. Assume that $f:U\to \CC$ is t-white-holomorphic at each $w$ from the interior of $U$. Assume also that for each black $b\in U$ we have $|f(b)|\leq 1$. Then all the values of $f$ at $w$ from the interior of $U$ are bounded by a constant depending on $\lambda$ only.
\end{lemma}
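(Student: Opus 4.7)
By Lemma~\ref{lemma:function_on_splitting}, for each triangle $w \in \Wws$ in the interior of $U$, the value $F = f(w) \in \CC$ is characterized by the three projection conditions $\Pr(F, \eta_{b_i}\RR) = f(b_i)$ for its neighbors $b_1, b_2, b_3 \in \Bws$. Writing $f(b_i) = r_i\eta_{b_i}$ with $r_i \in \RR$ and $|r_i| = |f(b_i)| \leq 1$ by hypothesis (applied to original black vertices and diagonals in $\Bws \cap U$ alike), these become the real linear equations $\Re(F\bar\eta_{b_i}) = r_i$ for $i = 1, 2, 3$.

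The plan is to identify a pair $(b_i, b_j)$ among the three neighbors for which the lines $\eta_{b_i}\RR$ and $\eta_{b_j}\RR$ meet at an angle bounded below by a positive constant $c(\lambda)$. Inverting the corresponding $2\times 2$ subsystem then yields $|F| \leq 2\max(|r_i|, |r_j|)/\sin c(\lambda) \leq C(\lambda)$. By~\eqref{eq:def_of_eta}, $\eta_{b_i}^2$ is proportional to $\overline{e_i}^2/|e_i|^2$ for $e_i = K_\Tt(w, b_i)$ the vector along the triangle side dual to $wb_i$, so the angle between the lines $\eta_{b_i}\RR$ and $\eta_{b_j}\RR$ (modulo $\pi$) coincides with the angle between the sides $e_i, e_j$ of the triangle $w$.

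First I would treat the favorable ``ear'' case, where two sides of $w$ are original polygon edges of the underlying white face $w^\circ \in W$ of $\Tt$ meeting at a vertex of $G^\ast$. By the weak uniformity condition~\ref{item:full-plane_white_angles}, the interior angle of $w^\circ$ at that vertex lies in $[\lambda, \pi - \lambda]$, so the angle between the two sides lies in $[\lambda, \pi - \lambda]$ as well, giving $\sin(\text{angle}) \geq \sin\lambda$ and hence $|F| \leq C(\lambda)$.

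The main obstacle is the non-ear case, where one or two sides of $w$ are diagonals of the triangulation and the interior angles of $w$ itself need not be bounded away from $0$. I would resolve it by propagating bounds along the dual tree of the triangulation of $w^\circ$: if $w$ and $w'$ are adjacent triangles sharing a diagonal $b_d$, equality of the projections of $F$ and $F'$ onto $\eta_{b_d}\RR$ gives $F - F' \in i\eta_{b_d}\RR$, so a bound on $|F'|$ transfers to a bound on the component of $F$ along $\eta_{b_d}$, while the perpendicular component of $F$ is controlled using the projection onto a third neighbor $b$ of $w$ (this requires $\arg(\eta_b/\eta_{b_d})$ to be bounded away from $0$ modulo $\pi$, which can be arranged by choosing the pair $(b_d, b)$ appropriately among the three sides of $w$). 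Since $w^\circ$ has at most $2\pi/\lambda$ sides (from $\theta_v \leq \pi - \lambda$ summing to $(k-2)\pi$), the dual tree of the triangulation has size bounded by $C(\lambda)$, and iterating inward from all ears simultaneously yields the uniform bound $|F| \leq C(\lambda)$ on every triangle of the splitting.
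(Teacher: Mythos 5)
Your reduction is the right one: $f(w)$ is pinned down by the real linear equations $\Re(f(w)\bar\eta_{b_i}\cdot\bar\eta_{b_i}) $ (i.e.\ projection onto each $\eta_{b_i}\RR$ equals $f(b_i)$), and the bound follows as soon as two of these lines meet at an angle bounded below; you also correctly translate this into the condition that two sides $e_i,e_j$ of the triangle $\Tt(w)$ meet at an angle in $[c,\pi-c]$. Your ``ear'' case is handled correctly via Assumption~\ref{item:full-plane_white_angles}.

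The gap is in how you handle the non-ear triangles. First, the crucial claim that for every triangle one can ``choose the pair $(b_d,b)$ appropriately'' so that the two corresponding lines are at a bounded angle is asserted in a parenthetical but never justified --- and it is exactly the content of the lemma. Second, the propagation step does not do what you want it to: $F-F'\in i\eta_{b_d}\RR$ means $F$ and $F'$ have the \emph{same} projection onto $\eta_{b_d}\RR$, namely $f(b_d)$, which is already bounded by hypothesis with no reference to $F'$. So the bound on $|F'|$ gives you nothing beyond what you had; the parallel component of $F$ is free, and the perpendicular one still requires the third projection at a bounded angle. If that angle condition holds (which it does), the direct argument closes immediately and the iteration along the dual tree is superfluous; if it fails, propagating from $F'$ cannot rescue it.

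What makes the direct argument work for every triangle is the observation the paper uses: each angle of a triangle in the splitting is, at every vertex, a sub-angle of the polygon $\Tt(w^\circ)$, hence is $\leq\pi-\lambda$ by weak uniformity. Since the three angles sum to $\pi$, at least one is $\geq\pi/3>\lambda$ (here $\lambda<1<\pi/3$), so every triangle of $\Wws$ automatically has an angle in $[\lambda,\pi-\lambda]$. Taking the two sides incident to that angle gives a $2\times 2$ system with determinant $\gtrsim\sin\lambda$, and the bound on $|f(w)|$ follows at once, with no case distinction and no recursion.
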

\begin{proof}
  Weak uniformity of $\Tt$ implies in particular that each white triangle $\Tt(w)$, $w\in \Wws$, has at least one angle from $[\lambda, \pi-\lambda]$. Let $b_1,b_2\in \Bws$ be the two black faces/diagonals that are incident to $w$ and to this angle. If $f$ is t-white-holomorphic at $w$, then $\Pr[f(w); \eta_{b_i}\RR] = f(b_i)$ by the definition, which gives the desired bound on $f(w)$.
\end{proof}

\subsection{Local inverse operator for \texorpdfstring{$K_\Tt$}{KT} on a t-embedding}
\label{subsec:local_inverting_on_t-emb}

Recall that each t-embedding has a natural Kasteleyn operator $K_\Tt$ associated with it, see~\eqref{eq:def_of_KTt}. The goal of this subsection is to construct a good inverse $K_\Tt^{-1}$. We keep assuming that all t-embeddings are weakly uniform and satisfy $\Lip$ with some fixed $\lambda$ and small $\delta$, which is thought of as a mesh size of the embedding. We will prove the following

\begin{prop}
  \label{prop:full_plane_kernel}
  Let $\Tt$ be a full-plane weakly uniform t-embedding satisfying $\Lip$. Then there exists a unique inverting kernel $K_\Tt^{-1}(b,w), (b,w)\in B\times W$, such that
  \begin{enumerate}
    \item $K_\Tt^{-1}$ is both left and right inverse for $K_\Tt$.
    \item For any $b\in B$ and $w\in W$ we have $K_\Tt^{-1}(b,w)\in \eta_b\eta_w\RR$ and 
      \[
        |K_\Tt^{-1}(b,w)|\leq \frac{C}{\dist(\Tt(b),\Tt(w)) + \delta}
      \]
      where $C$ depends only on $\lambda$.
  \end{enumerate}
\end{prop}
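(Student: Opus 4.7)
The plan is to construct $K_\Tt^{-1}(\cdot, w_0)$ for each fixed $w_0 \in W$ as the discrete derivative $D[H]$ of a suitable potential kernel $H$ living on a T-graph, and to prove uniqueness separately by a Liouville-type argument. Set $\alpha = i\bar\eta_{w_0}$ and $\Ff = \Tt + \alpha^2\Oo = \Tt - \bar\eta_{w_0}^2\Oo$; the face of $\Ff$ corresponding to $w_0$ is degenerate, and we denote its image by $v_0$. Lemma~\ref{lemma:monodromy_and_Kf} reduces the construction of a function $f\colon B \to \CC$ that is t-white-holomorphic at every $w \neq w_0$ and satisfies the residue relation $\sum_{b\sim w_0} K_\Tt(w_0,b)f(b) = 1$ to the construction of a function $H$ on the vertices of $\Ff$ with values in $\alpha\RR$, harmonic at every vertex other than $v_0$, and satisfying the pole relation~\eqref{eq:monodromy_and_Kf2.5} at $v_0$ with right-hand side equal to a concrete nonzero constant. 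Up to a unimodular factor, $f$ then lies in $\eta_b\eta_{w_0}\RR$ as required.

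To build $H$, I would use a compactness argument for finite-volume potentials. For each large $R$ let $B_R^\Ff$ be the set of vertices of $\Ff$ lying in the disc $B(v_0, R)$, and let $H_R$ be the unique function on $B_R^\Ff$ vanishing on the outer boundary, harmonic inside apart from the prescribed pole at $v_0$; existence and uniqueness of $H_R$ is a finite-dimensional linear-algebra statement about the transient random walk on $\Ff$ killed on exit from $B_R^\Ff$. After subtracting $H_R$ at a fixed reference vertex, the Harnack-type estimate of Lemma~\ref{lemma:Harnak_on_T-graph} and the Hölder estimate of Lemma~\ref{lemma:Holderness_of_thol_fcts} applied to $D[H_R]$ yield equicontinuity of the family $\{H_R\}$ on compacts away from $v_0$; a diagonal subsequence converges to a limit $H$ satisfying the three requirements above. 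The logarithmic growth $H(v) = O(\log(|v-v_0|/\delta))$ at infinity is inherited from the isotropy of the random walk on $\Ff$ recorded in Remark~\ref{rem:variation_of_Xt} combined with the same regularity estimates.

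The decay estimate on $K_\Tt^{-1}(b, w_0)$, which is proportional to $D[H](b)$, then follows directly. For $r \coloneqq \dist(\Tt(b), \Tt(w_0)) \leq C_1\delta$ one bounds $|D[H](b)|$ locally using the pole condition together with weak uniformity and $\Lip$, giving $|D[H](b)| \leq C/\delta$. For $r \geq C_1\delta$, by~\eqref{eq:balls_on_temb_vs_on_Tgraphs} the ball of radius $r/2$ around $\Ff(b)$ stays away from $v_0$, so $H$ is harmonic there with oscillation $O(1)$ by the logarithmic bound; Lemma~\ref{lemma:Harnak_on_T-graph} yields $|D[H](b)| \leq C/r$, which together with the short-range bound gives the desired estimate uniformly.

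For uniqueness, suppose two kernels both satisfy the conclusion. Then their difference $g(b)$ is t-white-holomorphic at \emph{every} $w$ (including $w_0$), is valued in $\eta_b \eta_{w_0}\RR$, and tends to zero at infinity by the decay bound. Lemma~\ref{lemma:primitive_of_f} lifts $g$ to a globally harmonic function on $\Ff$ valued in $\alpha\RR$ and vanishing at infinity, which must be identically zero by the maximum principle for the (recurrent) random walk on $\Ff$; hence $g \equiv 0$. The left-inverse property is obtained by running the analogous construction with the roles of black and white swapped, working on the T-graph $\Tt - \bar\alpha^2\bar\Oo$, and identifying the two kernels via a standard argument. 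I expect the main obstacle to be the establishment of the logarithmic growth of $H$ at infinity: this is where the isotropic $\Tr \Var(X_t) = t$ structure of the T-graph random walk is essential and where care is required, since the walk has a non-trivial invariant measure and does not reduce to a standard lattice walk.
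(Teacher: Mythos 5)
Your overall strategy is the same as the paper's: fix $w_0$, form the T-graph $\Ff = \Tt - \bar\eta_{w_0}^2\Oo$, construct a Green's function on a ball with a prescribed pole at the degenerate vertex $v_0$, pass to a limit, and define $K_\Tt^{-1}(\cdot,w_0)$ as the discrete derivative $D[H]$. The paper defines the finite-volume Green's function explicitly via the killed random walk (its formula~\eqref{eq:def_of_H}), which is the same object as your linear-algebra definition with zero boundary data; your presentation is a harmless repackaging.

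The genuine gap is the step you yourself flag as "the main obstacle": you assert that $H$ grows logarithmically with oscillation $O(1)$ on dyadic annuli, and say this "is inherited from the isotropy of the random walk $\ldots$ combined with the same regularity estimates." That sentence is not a proof and does not follow from the Harnack or Hölder lemmas alone: those give decay of oscillation from \emph{inside} a region where $H$ is already known to be bounded, but here you need an \emph{a priori} uniform bound on the oscillation of $H_R$ on each annulus $A(v_0,2^{k-1}\delta, 2^k\delta)$, uniformly in $R$, $k$, and the graph. This is precisely the content of Proposition~\ref{prop:osc_of_Green_function}, whose proof in the paper is nontrivial: it proceeds by contradiction, rescales so that an annulus has unit size, extracts a subsequential limit of rescaled Green's functions and of origami maps using the regularity theory, and then applies the Stoïlow factorisation theorem to rule out the blow-up. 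Nothing in your sketch supplies a substitute for this compactness argument; without it, you cannot extract the diagonal subsequence at all (you need equicontinuity \emph{plus} a uniform bound), and the estimate $|D[H](b)|\le C/r$ does not follow.

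A secondary issue is your uniqueness argument. You lift the difference $g$ to a harmonic function on $\Ff$ "valued in $\alpha\RR$ and vanishing at infinity" and invoke a maximum principle. But $g = O(1/(|b|+\delta))$ only forces the \emph{derivative} of the primitive to decay; the primitive itself may grow logarithmically, so "vanishing at infinity" does not transfer to $H$, and a plain maximum principle does not apply. The paper avoids this by applying the Hölder estimate (Lemma~\ref{lemma:Holderness_of_thol_fcts}) directly to the t-white-holomorphic function $g$: since $\osc_{B(z,r)} g \le C(r/R)^\alpha \osc_{B(z,R)} g$ and the right-hand oscillation stays bounded while $(r/R)^\alpha \to 0$, one gets $g \equiv \mathrm{const} = 0$. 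The same Liouville argument also gives the left-inverse property in one line (applied to $\sum_w K_\Tt^{-1}(\cdot,w)K_\Tt(w,b_0) - \delta_{b_0}$), which is simpler than your proposal to re-run the whole construction with black and white swapped.
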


To prove Proposition~\ref{prop:full_plane_kernel} we consider a sequence of Green functions on finite T-graphs exhausting the plane. Inverting kernel will be defined to be the limit of derivatives of these functions.

Let $\Tt$ be a full-plane weakly uniform t-embedding of $G^\ast$ satisfying $\Lip$, white and black splittings be fixed, and $w_0\in W$ be a white face of $G^\ast$. Consider the T-graph $\Tt -\bar{\eta}_{w_0}^2\Oo$, recall that the face $(\Tt -\bar{\eta}_{w_0}^2\Oo)(w_0)$ is degenerate, let $v_0$ be the corresponding vertex of this T-graph. Given $N>0$ and a vertex $v\in B(v_0, N\delta)$ of $\Tt -\bar{\eta}_{w_0}^2\Oo$ we define $X_t^v$ to be the random walk on $\Tt -\bar{\eta}_{w_0}^2\Oo$ associated with the given black splitting (see Section~\ref{subsec:t-embedding_def}) started at $v$ and stopped at the first time it left $B(v_0, N\delta)$. Define the Green's function on $B(v_0, N\delta)$ by
\begin{equation}
  \label{eq:def_of_H}
  H_{B(v_0,N\delta)}^\Tt (v) = \frac{\EE[\text{time $X_t^v$ spent at $v_0$}]}{\sum_{k = 1}^d |K_\Tt(w_0,b_k)|\cdot |v_0 - v_k|}
\end{equation}
where $v_1,\dots, v_d$ are as in Lemma~\ref{lemma:monodromy_and_Kf}.
By the definition, $H_{B(v_0,N\delta)}^\Tt$ is harmonic on $B(v_0, \delta N)\smm \{ v_0 \}$ and vanishes outside $B(v_0,\delta N)$. Note that Lemma~\ref{lemma:monodromy_and_Kf} provides another characterization of $H_{B(v_0, N\delta)}^\Tt$: this is the unique function on the vertices of $\Tt - \bar{\eta}_{w_0}^2\Oo$ which is zero outside $B(v_0, N\delta)$, harmonic on $B(v_0, N\delta)\smm\{ v_0 \}$ and satisfying
\begin{equation}
  \label{eq:dbar_of_Gf}
  \sum_{b\sim w_0}K_\Tt(w,b)D[i\bar{\eta}_{w_0}H_{B(v_0, N\delta)}^\Tt](b) = \bar{\eta}_{w_0}.
\end{equation}
Note that $H_{B(v_0,N\delta)}^\Tt$ is scale invariant, i.e. 
\[
  H_{B(kv_0,kN\delta)}^{k\Tt}(kv) = H_{B(v_0,N\delta)}^\Tt(v).
\]
Define the annulus
\[
  A(v_0, r, R) = B(v_0, R)\smm B(v_0, r)
\]

\begin{prop}
  \label{prop:osc_of_Green_function}
  There exists a constant $C>0$ depending only on $\lambda$ such that whenever we are in the setup above we have
  \[
    \osc_{B(v_0, \delta)}H_{B(v_0,N\delta)}^\Tt\leq C,\qquad \osc_{A(v_0, 2^{k-1}\delta, 2^k\delta)}H_{B(v_0,N\delta)}^\Tt\leq C,\quad k = 1,\dots, \log_2N.
  \]
\end{prop}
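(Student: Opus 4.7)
The plan is to exploit the probabilistic interpretation of $H^\Tt_{B(v_0,N\delta)}$ as (a normalization of) the Green's function of the random walk $X_t$ on the T-graph $\Tt - \bar{\eta}_{w_0}^2\Oo$, and reduce the oscillation control to a Harnack/Beurling-type estimate for $X_t$ on dyadic annuli. By the strong Markov property, for any $v \neq v_0$ inside $B(v_0,N\delta)$,
\[
  H^\Tt_{B(v_0,N\delta)}(v) \;=\; P_v(\tau_{v_0}<\tau_\partial)\cdot H^\Tt_{B(v_0,N\delta)}(v_0),
\]
where $\tau_{v_0}$ is the first hitting time of $v_0$ and $\tau_\partial$ is the exit time of $B(v_0,N\delta)$. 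So it suffices to control the harmonic measure $\tilde{H}(v) := P_v(\tau_{v_0}<\tau_\partial)$ on each dyadic annulus and to determine the order of $H^\Tt_{B(v_0,N\delta)}(v_0)$.

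The function $\tilde{H}$ is real-valued and harmonic on $B(v_0,N\delta)\smm\{v_0\}$, so by Lemma~\ref{lemma:Dharm=holom} the function $D[\tilde{H}]$ is t-white-holomorphic outside $w_0$. I would apply Lemma~\ref{lemma:Harnak_on_T-graph} together with a dyadic iteration to show that $|D[\tilde H](b)|\leq C_\lambda (\log N)^{-1}\cdot |\Tt(b)-\Tt(w_0)|^{-1}$ for all black $b$ whose image lies outside the $2\delta$-neighbourhood of $v_0$; this reflects the expected logarithmic behaviour of the 2D potential. Integrating $D[\tilde H]$ along edges of the T-graph inside $A(v_0,2^{k-1}\delta,2^k\delta)$ then gives $\osc_{A_k}\tilde{H}\leq C_\lambda(\log N)^{-1}$.

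Next, to pin down $H^\Tt_{B(v_0,N\delta)}(v_0)$, I would integrate the same bound on $D[\tilde{H}]$ along a radial chain of neighbouring vertices running from a neighbour of $v_0$ out to the boundary, where $\tilde{H}$ drops from a value of order $1$ (provided by the normalisation at $v_0$) down to $0$. This forces $H^\Tt_{B(v_0,N\delta)}(v_0)\asymp \log N$. Multiplying $\osc_{A_k}\tilde H\asymp(\log N)^{-1}$ by $H^\Tt_{B(v_0,N\delta)}(v_0)\asymp \log N$ yields the desired $\osc_{A_k} H^\Tt_{B(v_0,N\delta)}\leq C$. The bound on $B(v_0,\delta)$ is handled by applying the same estimate directly: at every neighbour $v_k$ of $v_0$ in the T-graph one has $H^\Tt_{B(v_0,N\delta)}(v_0)-H^\Tt_{B(v_0,N\delta)}(v_k) = D[H^\Tt_{B(v_0,N\delta)}](b_k)\cdot(v_k-v_0)$, and the bound on $D[H^\Tt_{B(v_0,N\delta)}]$ at the smallest dyadic scale (combined with weak uniformity, which ensures $|v_k-v_0|\asymp\delta$) produces a bound of order $1$.

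The main obstacle is obtaining a matching \emph{lower} bound $|D[\tilde H]|\gtrsim (\log N)^{-1}|\cdot-v_0|^{-1}$ on some macroscopic set of black faces at each dyadic scale, which is needed to conclude $H^\Tt_{B(v_0,N\delta)}(v_0)\gtrsim \log N$ rather than merely $\lesssim\log N$. The upper bound on $|D[\tilde H]|$ follows essentially from Lemma~\ref{lemma:Harnak_on_T-graph} and Lemma~\ref{lemma:Holderness_of_thol_fcts} via dyadic iteration; the lower bound, however, amounts to a non-degeneracy statement for the hitting probability of $v_0$ by $X_t$ and must be extracted from the crossing/Beurling-type estimates for T-graphs established in~\cite{CLR1} under the weak uniformity and $\Lip$ assumptions. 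Once this non-degeneracy is in place, the argument above produces both oscillation bounds with a constant depending only on $\lambda$.
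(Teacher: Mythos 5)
Your approach takes a genuinely different route from the paper, but it contains a gap that is more fundamental than the one you acknowledge.

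The paper's own proof of this proposition proceeds by \emph{compactness and contradiction}: it assumes the oscillations are unbounded along some sequence, normalizes the Green's functions by the offending oscillation, uses the Harnack and H\"older lemmas to extract a subsequential limit $h$ on a fixed annulus, observes that $h$ is nonnegative, has a point where it equals $1$, and is $\leq 1-\eps$ near the boundary, and then invokes the Stoilow factorization theorem to upgrade the limiting quasiconformal primitive to a holomorphic one, which yields a contradiction. No sharp quantitative estimate about the dyadic decay of the hitting probability is ever established. By contrast, you attempt a \emph{direct quantitative} proof via the decomposition $H(v) = \tilde H(v)\cdot H(v_0)$ with $\tilde H(v)=\PP_v(\tau_{v_0}<\tau_\partial)$ and the two-sided asymptotics $\osc_{A_k}\tilde H \asymp (\log N)^{-1}$, $H(v_0)\asymp\log N$.

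The critical flaw in your plan is not the lower bound on $|D[\tilde H]|$ that you flag at the end, but the claimed \emph{upper} bound $|D[\tilde H](b)|\lesssim (\log N)^{-1}|\Tt(b)-\Tt(w_0)|^{-1}$, which you say ``follows essentially from Lemma~\ref{lemma:Harnak_on_T-graph} and Lemma~\ref{lemma:Holderness_of_thol_fcts} via dyadic iteration.'' It does not. Lemma~\ref{lemma:Harnak_on_T-graph} bounds $|D[\tilde H]|$ by $r^{-1}\cdot\osc_{B(\cdot,r)}\tilde H$, and since $\tilde H\in[0,1]$ this only gives $|D[\tilde H]|\lesssim r^{-1}$ \emph{without} any $(\log N)^{-1}$ gain. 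Conversely, the bound $\osc_{A_k}\tilde H\lesssim(\log N)^{-1}$ that would supply the gain is exactly (the $\tilde H$-version of) the statement being proved, so the ``dyadic iteration'' is circular. The telescoping argument $\sum_k(M_k-M_{k+1})\approx 1$, where $M_k$ is the max of $\tilde H$ at scale $2^k\delta$, controls only the \emph{average} drop; to bound each individual drop by $(\log N)^{-1}$ requires a two-sided comparison $M_{k_1}-M_{k_1+1}\asymp M_{k_2}-M_{k_2+1}$ uniformly in $k_1,k_2$, i.e.\ a uniform annular Harnack-type inequality for the discrete potential on a T-graph. That comparison is the nontrivial content here, and neither of the two regularity lemmas you cite delivers it. Your acknowledged gap (the lower bound needed to get $H(v_0)\gtrsim\log N$) is really a symptom of this same missing two-sided comparison, and it cannot simply be ``extracted from~\cite{CLR1}'' in the form stated. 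The paper sidesteps all of this precisely by not aiming for quantitative $(\log N)^{-1}$-type decay at all, instead getting a soft contradiction from a limiting holomorphic function --- which is why a compactness argument was used in the first place.
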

\begin{proof}
  The proof will use compactness arguments and is very similar to the proof of~\cite[Theorem~3.1]{CLR2}. Assume by contradiction that there is a sequence $G_n^\ast, \Tt_n, w_0^n, N_n$ and $U_n$ such that $\osc_{U_n}H_n\to +\infty$; here each $U_n$ is either an annulus or a ball from the proposition. Let $\delta_n$ be the scale associated with $\Tt_n$; set $H_n = H^{\Tt_n}_{B(v_0^n, N_n\delta_n)}$ for simplicity. We proceed with the following steps.

  \emph{Step 1.} Prove that there is a sequence $k_n\to \infty$ such that 
  \begin{equation}
    \label{eq:osc_of_Green_function1}
    \osc_{A(v_0^n, 2^{k_n-1}\delta_n, 2^{k_n}\delta_n)}H_n\to +\infty.
  \end{equation}
  Assume that there is a $k>0$ and a subsequence of $n$'s such that $\osc_{A(v_0, 2^{l-1}\delta_n, 2^l\delta_n)} H_n$ are bounded simultaneously when $n$ runs along this subsequence and $l = k+1,\dots, \log_2N_n$. Note that $H_{B(v_0, 2^k\delta_n)}^{\Tt_n} = H_n - G_n$ where $G_n$ is the harmonic extension of $H_n$ inside $B(v_0, 2^k\delta_n)$. It follows from our assumption that $\osc_{B(v_0, 2^k\delta_n)} G_n$ is bounded along the subsequence, hence 
  \begin{equation}
    \label{eq:ooGf1}
    \osc_{B(v_0, 2^k\delta_n)}H_{B(v_0, 2^k\delta_n)}^{\Tt_n}\to +\infty\qquad n\text{ from the subsequence.}
  \end{equation}
  On the other hand, one can easily show that $H_{B(v_0, 2^k\delta_n)}^{\Tt_n}$ is bounded uniformly in $n$ by estimating the numerator and the denumerator in~\eqref{eq:def_of_H}. Recall that $H$ is scale invariant, hence we can assume that $\delta_n = 1$ for all $n$ so that the disc $B(v_0, 2^k)$ is fixed and only graphs vary. Then notice that the expected time the random walk $X_t^{v_0}$ has spent in $B(v_0, 2^k)$ before exiting it is bounded by some constant depending on $k$ only (cf. Remark~\ref{rem:variation_of_Xt}), thus the numerator in~\eqref{eq:def_of_H} is bounded. Assumption~\ref{item:full-plane_unif_black_face} from the set of weak uniformity assumptions implies that the denumerator in~\eqref{eq:def_of_H} is bounded from below by a constant depending on $\lambda$ only. It follows that $H_{B(v_0, 2^k\delta_n)}^{\Tt_n}$ is bounded uniformly in $n$ and we get a contradiction.

  \emph{Step 2.} Construct a sequence of auxiliary functions $\tilde{H}_n$. Let us fix a sequence $k_n\to \infty$ such that~\eqref{eq:osc_of_Green_function1} holds. After a proper rescaling and translation we can assume that $2^{k_n}\delta_n = 1$ for each $n$ and both $\Tt_n(w_0^n)$ and $v_0^n$ are in $\cst\cdot \delta_n$-neighborhood of zero where $\cst>0$ is a constant depending on $\lambda$ only. Let $m_n = \min_{v\in B(v_0^n, 1)}H_n(v)$ and $C_n = \osc_{A(v_0^n, 2^{-1}, 1)} H_n$. Consider the new function
  \begin{equation}
    \label{eq:osc_of_Green_function2}
    \tilde{H}_n = C_n^{-1}(H_n - m_n).
  \end{equation}
  Note that $\tilde{H}_n\geq 0$ on $B(v_0^n,1)$ by the construction. Let $v^n_-\in B(v_0^n, 1)$ and $v^n_+\in A(v_0^n, 2^{-1}, 1)$ be such that
  \[
    H_n(v^n_-) = m_n,\qquad H_n(v^n_+) = \max_{v\in A(v_0^n, 2^{-1}, 1)} H_n(v),
  \]
  we have $\tilde{H}_n(v^-_n) = 0$ by the definition. By the maximal principle $v^n_-\in A(v_0^n, 2^{-1}, 1)$, hence $\tilde{H}_n(v^n_+) = 1$ and $\tilde{H}_n(v)\leq 1$ if $v\in A(v_0^n, 2^{-1}, 1)$.

  Since $\tilde{H}_n$ is a harmonic function on $B(v_0^n, N_n\delta_n)\smm \{ v_0^n \}$ and attains its maximum at $v_0^n$, there exists a path $\gamma_-^n$ on the T-graph $\Tt_n - \bar{\eta}^2_{w_0^n}\Oo_n$ which goes from $v^n_-$ to the boundary of $B(v_0^n, N_n\delta_n)$ and such that $\tilde{H}_n$ is non-positive along $\gamma_-^n$. Using the uniform crossing property of the random walk on the T-graph $\Tt - \bar{\eta}^2_{w_0}\Oo$ (see~\cite[Lemma~6.8]{CLR1}), the fact that $\tilde{H}_n$ is non-positive along $\gamma_n^-$ and is bounded by 1 on $A(v_0^n, 1/2,1)$ we conclude that there is an $\eps>0$ such that for all $n$ and $v\in A(v_0^n, 3/4, 1)$ we have $\tilde{H}_n(v)\leq 1 -\eps$. Note in particular that $v_+^n\notin A(v_0^n, 3/4, 1)$. 

  Applying the Harnack estimate~\cite[Proposition~6.9]{CLR1} we conclude that for each compact $K\subset B(0, 1)\smm \{ 0 \}$ the family of functions $\tilde{H}_n$ is bounded on vertices from $K$ uniformly in $n$. Lemma~\ref{lemma:Harnak_on_T-graph} and Arzel\'a--Ascoli lemma then ensure the existence of a continuous function $h$ on $B(0,1)\smm\{ 0 \}$ such that a subsequence of $\tilde{H}_n$ converges to $h$ uniformly on compacts and such that $h$ has the following properties:
  \begin{itemize}
    \item $h$ is non-negative;
    \item for any $z\in A(0,3/4,1)$ we have $h(z)\leq 1 -\eps$;
    \item there is a point $v_+\in B(0,1)\smm\{ 0 \}$ such that $h(v_+) = 1$.
  \end{itemize}

  \emph{Step 3.} Study the derivative of $h$. Let $U_n$ be the set of all faces of $G_n^\ast$ which are mapped to $B(0, 1)$ under the map $\Tt_n - \bar{\eta}^2_{w_0}\Oo_n$. Consider the function $f_n = D[i\bar{\eta}\tilde{H}_n]$ defined on black faces from $U_n$. Due to Lemma~\ref{lemma:Dharm=holom}, functions $f_n$ are t-white-holomorphic at each $w\neq w_0$ which belongs to $U_n$ with all its neighbors. By Lemmas~\ref{lemma:Harnak_on_T-graph} and~\ref{lemma:Boundedness_of_white_values}, for any compact $K\subset B(0,1)\smm\{ 0 \}$ the maximum of values of $f_n$ on black and white faces which are mapped into $K$ is bounded uniformly in $n$.

  Recall that the functions $\Oo_n\circ\Tt_n^{-1}$ are all $(1-\lambda)$-Lipshitz. After passing to a subsequence we may assume that $\Oo_n\circ\Tt_n^{-1}$ converges uniformly on $V = \cap_{m\geq 1}\cup_{n\geq m} V_n$ to a $(1-\lambda)$-Lipshitz function $\vartheta$, where $V_n = (\Tt_n - \bar{\eta}^2_{w_0}\Oo_n)^{-1}(B(0,1))$. We have $B(0,1) = (\mathrm{id} - \bar{\eta}_0^2\vartheta)(V)$ for some $\eta_0\in \TT$. For each $n$, define the function $F_n:V\to \CC$ by 
  \[
    F_n(z) = f_n(w),\qquad \text{$w\in U_n$ is such that $\Tt_n(w)$ is the nearest to $z$ where $f_n$ is defined}.
  \]
  Applying Lemma~\ref{lemma:Holderness_of_thol_fcts} and Arzel\'a--Ascoli lemma and passing to a subsequence we can assume that $F_n$ converge to a continuous function $f$ on $V\smm \{ 0 \}$ uniformly on compacts from the interior of $V$, and we have for $H = h\circ (\mathrm{id} + \vartheta)^{-1}$ and any $z_1,z_2\in V\smm\{ 0 \}$
  \[
    H(z_1) - H(z_2) = \Im\left( \eta_0\int_{z_1}^{z_2} (f\,dz + \overline{f}\,d\vartheta)\right).
  \]
  Recall that the constants $C_n$ from~\ref{eq:osc_of_Green_function2} tend to $+\infty$. Thus, by Lemma~\ref{lemma:monodromy_and_Kf} and the normalization in the definition of Green's function~\eqref{eq:def_of_H}, the form $f\,dz + \overline{f}\,d\vartheta$ is exact in $V\smm\{ 0 \}$. Let $z_+\in V$ be such that $z_+ + \vartheta(z_+) = v_+$. Then we can define
  \begin{equation}
    \label{eq:osc_of_Green_function3}
    F(z) = \int_{z_+}^z (f\,dz + \overline{f}\,d\vartheta)
  \end{equation}
  and we have
  \begin{equation}
    \label{eq:osc_of_Green_function4}
    H(z) = 1 + \Im(\eta_0 F(z)).
  \end{equation}

  \emph{Step 4.} Obtain a contradiction. From the definition of $F$ and the fact that $\vartheta$ is $(1-\lambda)$-Lipshitz it follows that the distortion of $F$ at any point of $V\smm\{ 0 \}$ is bounded by $\frac{2-\lambda}{\lambda}$ from above (see~\cite[Chapter~2.4]{AstalaBook} for the definition of the distortion of a quasiconformal map). Note also that since $f$ is continuous and $\theta$ is Lipshits, we have $F\in W^{1,2}_{\mathrm{loc}}(V\smm \{ 0 \})$. Thus, by Stoilow factorization theorem (see~\cite[Theorem~5.5.1]{AstalaBook}) there exists a homeomorphism $G: V\to V'$ and a holomorphic function $\vphi$ on $V'\smm\{ G(0) \}$ such that 
  \[
    F(z) = \vphi(G(z)).
  \]
  From~\eqref{eq:osc_of_Green_function4} and the properties of $h$ we have that $1 + \Im \eta_0 \vphi \geq 0$, hence $\vphi$ extends holomorphically to $V'$. But on the other hand $\Im \eta_0 \vphi(w) \leq -\eps$, if $w$ is close enough to the boundary of $V'$, and $\Im \eta_0 \vphi(G(z_+)) = 0$, which is a contradiction.
\end{proof}

Now we can prove Proposition~\ref{prop:full_plane_kernel}.

\begin{proof}[Proof of Proposition~\ref{prop:full_plane_kernel}]
  Fix a white face $w_0\in W$ and for any $N>0$ consider
  \[
    K_{\Tt,N}^{-1}(b,w_0) = \eta_{w_0}D[i\bar{\eta}_{w_0}H^\Tt_{B(v_0, N\delta)}]
  \]
  where $H^\Tt_{B(v_0, N\delta)}$ is the Green's function defined by~\eqref{eq:def_of_H}. Using Proposition~\ref{prop:osc_of_Green_function} and Lemma~\ref{lemma:Harnak_on_T-graph} we obtain
  \[
    K_{\Tt,N}^{-1}(b,w_0) = O\left(\frac{1}{\dist(\Tt(b),\Tt(w_0))+ \delta}\right)
  \]
  Uniformly in $b$ and $N$ given that $N>|b|$. Let $K_\Tt^{-1}$ be any subsequential limit of $K_{\Tt,N}^{-1}$. We have for this limit
  \begin{equation}
    \label{eq:infas1}
    K_\Tt^{-1}(b,w_0) = O\left(\frac{1}{\dist(\Tt(b),\Tt(w_0))+ \delta}\right)
  \end{equation}
  and
  \[
    \sum_{b\sim w}K_\Tt(w,b)K_\Tt^{-1}(b,w_0) = \delta_{w_0}(w), \qquad K_\Tt^{-1}(b,w_0)\in \eta_b\eta_{w_0}\RR
  \]
  by Lemma~\ref{lemma:Dharm=holom} and~\eqref{eq:dbar_of_Gf}, so that $K_\Tt^{-1}$ is a right inverse for $K_\Tt$. It remains to show that $K_\Tt^{-1}$ is also a left inverse. To this end fix a $b_0\in B$ and consider the function
  \[
    f(b) = \eta_{b_0}\left( \sum_{w\sim b_0} K_\Tt^{-1}(b,w) K_\Tt(w,b_0) - \delta_{b_0}(b)\right).
  \]
  It is straightforward to see that $f$ is t-white-holomorphic and vanishes at infinity, hence $f\equiv 0$ due to Lemma~\ref{lemma:Holderness_of_thol_fcts}.
\end{proof}

\subsection{Asymptotic of \texorpdfstring{$K_\Tt^{-1}$}{KT-1} under the \texorpdfstring{$O(\delta)$}{O(delta)}-small origami assumption}
\label{subsec:circle_pattern}

In this section we additionally impose the \emph{$O(\delta)$-small origami} assumption~\eqref{eq:intro_small_origami_assumption} on t-embeddings we consider. Recall that this assumption means that the primitive $\Oo$ of the origami 1-form~\eqref{eq:def_of_origami} can be chosen in such a way that
\begin{equation}
  \label{eq:small_origami_assumption}
  |\Oo(z)| \leq \lambda^{-1}\delta
\end{equation}
for all $z\in \CC$. In what follows we fix such a primitive. Recall that any t-embedding satisfying this assumption has a circle pattern with all radii being at most $2\lambda^{-1}\delta$. We fix such a circle pattern and use it to identify vertices of $G$ and points on the plane. Namely, we can fix the positions of the vertices of $G$ such that for every edge $b\sim w$ and the corresponding dual edge $v_1\sim v_2$ of the t-embedding the points $b$ and $w$ are symmetric with respect to the line passing through $v_1$ and $v_2$. The condition~\eqref{eq:small_origami_assumption} implies that the aforementioned positions of the vertices of $G$ can be chosen such that the distance between every the vertices of $G$ and adjacent vertices of the t-embedding does not exceed $2\lambda^{-1}\delta$ (note that all the vertices of $G$ adjacent to a given vertex $v$ of the t-embedding are now placed on a circle centered at $v$).

From now on we will be using the same notation for the vertices of $G$, the faces of $\Tt$ and vertices of the circle pattern corresponding to $\Tt$. Note that for any $b,w$ the distance between the corresponding point and the corresponding face is at most $2\lambda^{-1}\delta$.

\begin{lemma}
  \label{lemma:local_area_relations}
  Let $w$ be a convex $n$-gon on the plane $\CC$ with vertices $v_1,v_2,\dots, v_n$ listed counterclockwise. Let $z_1\in \CC$ be arbitrary and $z_2,z_3,\dots, z_n$ be constructed inductively such that $z_{j+1}$ is obtained from $z_j$ by applying the reflection along $v_{j-1}v_j$ and then the reflection along $v_jv_{j+1}$. Then
  \[
    \sum_{j = 1}^n \bar{z}_j(v_j - v_{j-1}) = 4i\Area(w).
  \]
\end{lemma}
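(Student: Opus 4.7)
The plan hinges on a surprising invariance in the construction. Both reflections used to pass from $z_j$ to $z_{j+1}$ fix the vertex $v_j$. Denoting by $z_j^*$ the intermediate image of $z_j$ (after reflecting along $v_{j-1}v_j$ but before reflecting along $v_jv_{j+1}$), the very definition gives that $z_{j+1}$ is the reflection of $z_j^*$ across $v_jv_{j+1}$. But then the intermediate point $z_{j+1}^*$ in the next iteration --- namely the reflection of $z_{j+1}$ across $v_jv_{j+1}$ --- coincides with $z_j^*$ by the involution property of reflections. I would conclude that $z_j^* \equiv z^*$ is a single point, independent of $j$; concretely, it is the reflection of $z_1$ across the edge $v_nv_1$. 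This is the crucial step, and essentially the only one that needs care.

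Setting $e_j = (v_j - v_{j-1})/|v_j - v_{j-1}|$, each $z_j$ is the reflection of $z^*$ across $v_{j-1}v_j$, so
\[
  z_j = v_{j-1} + e_j^2\,\overline{z^* - v_{j-1}}.
\]
Taking the complex conjugate and multiplying by $v_j - v_{j-1} = |v_j - v_{j-1}|\,e_j$, the combination $\bar e_j^2\, e_j\, |v_j - v_{j-1}|$ collapses to $\overline{v_j - v_{j-1}}$, yielding
\[
  \bar z_j\,(v_j - v_{j-1}) = \bar v_{j-1}(v_j - v_{j-1}) + (z^* - v_{j-1})\,\overline{v_j - v_{j-1}}.
\]

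Finally, I would sum over $j$ (with the cyclic convention $v_0 = v_n$). The contribution proportional to $z^*$ vanishes because $\sum_{j=1}^n \overline{v_j - v_{j-1}} = 0$, leaving
\[
  \sum_{j=1}^n \bar v_{j-1}(v_j - v_{j-1}) - \sum_{j=1}^n v_{j-1}\,\overline{v_j - v_{j-1}} = 2i\,\Im\sum_{j=1}^n \bar v_{j-1}v_j = 4i\,\Area(w),
\]
the last equality being the shoelace formula. Note that convexity is not really used beyond ensuring that the shoelace sum produces the positive area; the algebraic identity itself holds for any closed polygonal cycle with the specified reflection construction.
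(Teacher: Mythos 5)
Your proof is correct. The central observation --- that the intermediate point $z^*$ is invariant, so each $z_j$ is the reflection of a single fixed point across the side $v_{j-1}v_j$ --- is exactly what the paper's proof is built on, but the two proofs engineer it and finish it differently. The paper argues by real-analyticity in $z_1$ to reduce to the case where $z_1$ is the reflection across $v_nv_1$ of a point $z_0$ \emph{inside} $w$; then each $z_j$ is the reflection of $z_0$ across $v_{j-1}v_j$, and the last equality is read off geometrically, since $(\bar z_j - \bar z_0)(v_j - v_{j-1}) = 4i\,\mathrm{Area}(\triangle\, z_0 v_{j-1} v_j)$ and these triangles tile $w$. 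You instead identify the invariant point $z^*$ unconditionally for arbitrary $z_1$, which dispenses with the analyticity reduction entirely, and then finish with an algebraic computation (the shoelace formula) rather than a triangle decomposition. The paper's route is shorter but uses the geometric partition (hence the need for $z_0$ interior); yours is a bit longer but fully explicit, does not require the reduction, and, as you note, really only uses the counterclockwise convention at the very end. Both are valid; yours has the minor advantage of making transparent that the identity is affine in $z_1$ with the $z^*$-dependent term dropping out by a telescoping sum.
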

\begin{proof}
  Notice that the formula in the lemma is real analytic in $z_1$, thus it is enough to prove it when $z_1$ belongs to an arbitrary non-empty open set of the plane. To this end, choose a point $z_0$ inside $W$ and set $z_1$ to be its image under the reflection with respect to $v_nv_1$. Then, each $z_j$ becomes the image of $z_0$ under the reflection with respect to $v_{j-1}v_j$ and we have
  \[
    \sum_{j = 1}^n \bar{z}_j(v_j - v_{j-1}) = \sum_{j = 1}^n (\bar{z}_j - \bar{z}_0)(v_j - v_{j-1}) = 4i\Area(w).
  \]
\end{proof}

This computation implies the following:

\begin{lemma}
  \label{lemma:approx_of_dbar}
  In the setting above, let $w\in W$ be an arbitrary white vertex of $G$ and let $\vphi$ be a $C^2$ function defined in a convex neighborhood of $\Tt(w)$ containing all images of vertices $b$ such that $b\sim w$. Keeping the same notation for a vertex and for its image we have
  \[
    \sum_{b\sim w}K_\Tt(w,b)\vphi(b) = 4i\mu_w \frac{\partial}{\partial\bar z} \vphi(w) + \partial \vphi(w)\sum_{b\sim w}K(w,b)(b - w) + O(\|\vphi''\|_\infty\delta^3)
  \]
  where $\mu_w$ is the area of the face $\Tt(w)$ of t-embedding and the constant in $O(\ldots)$ depends only on $\lambda$.
\end{lemma}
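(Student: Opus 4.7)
The strategy is a direct second-order Taylor expansion of $\vphi$ around $w$, combined with Lemma~\ref{lemma:local_area_relations} applied to the white face $\Tt(w)$. Since $\vphi$ is $C^2$ on a convex neighborhood containing $w$ and all its neighbors, for each $b\sim w$ we may write
\[
\vphi(b) = \vphi(w) + \partial\vphi(w)(b-w) + \dbar\vphi(w)\overline{(b-w)} + R_b,
\qquad |R_b| \le C\|\vphi''\|_\infty |b-w|^2.
\]
Summing $K_\Tt(w,b)$ times this expansion over $b\sim w$ produces four terms, and the plan is to handle each.

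First, the constant-term sum $\vphi(w)\sum_{b\sim w} K_\Tt(w,b)$ vanishes: labelling the vertices $v_0,v_1,\dots,v_n=v_0$ of $\Tt(w)$ counterclockwise, and letting $b_j$ denote the black vertex such that $v_{j-1}v_j$ is dual to $wb_j$, the definition~\eqref{eq:def_of_KTt} gives $K_\Tt(w,b_j) = v_j - v_{j-1}$ (with $w$ on the left of $v_{j-1}v_j$ since we are going counterclockwise around $w$), so $\sum_j K_\Tt(w,b_j)$ telescopes to zero. The linear term $\partial\vphi(w)\sum_b K_\Tt(w,b)(b-w)$ is kept as is, matching the statement.

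The main step is the anti-linear term. I would like to show that
\[
\sum_{b\sim w} K_\Tt(w,b)\,\overline{(b-w)} \;=\; 4i\mu_w.
\]
Since $\sum_j (v_j - v_{j-1})\bar w = 0$, this reduces to showing $\sum_j (v_j - v_{j-1})\bar b_j = 4i\mu_w$, which is precisely the conclusion of Lemma~\ref{lemma:local_area_relations} once we identify $b_j$ with the sequence $z_j$ produced by successive reflections. For this I would use the circle-pattern construction recalled in Section~\ref{subsubsec:intro_t-embedding_prelim}: each incident pair $(w,b_j)$ is symmetric across the dual edge $v_{j-1}v_j$, hence $b_j = \mathrm{Refl}_{v_{j-1}v_j}(w)$. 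Composing, $\mathrm{Refl}_{v_{j-1}v_j}$ sends $b_j$ back to $w$, and $\mathrm{Refl}_{v_jv_{j+1}}$ then sends $w$ to $b_{j+1}$, which is exactly the recursion of Lemma~\ref{lemma:local_area_relations} with $z_j=b_j$. Thus the lemma applies verbatim.

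Finally, the remainder is controlled by the $O(\delta)$-small origami assumption and weak uniformity: the circle-pattern radii are $O(\delta)$ so $|b-w|=O(\delta)$; the angle condition bounds the number of neighbors of $w$ by a constant depending only on $\lambda$; and each $|K_\Tt(w,b)|\le\lambda^{-1}\delta$. Hence $\sum_b |K_\Tt(w,b)||R_b| = O(\|\vphi''\|_\infty\delta^3)$, which is absorbed into $O(\|\vphi''\|_\infty\delta^2)$. The only genuinely delicate step is the geometric identification in the anti-linear term, specifically verifying that the cyclic orientation of the $v_j$'s around $w$ matches the sign convention of $K_\Tt$ so that Lemma~\ref{lemma:local_area_relations} is applied with the correct orientation (otherwise one would get $-4i\mu_w$); everything else is bookkeeping.
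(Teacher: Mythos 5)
Your proof is correct and is essentially the argument the paper intends: the paper gives no explicit proof but prefaces the lemma with ``This computation implies the following:'' immediately after Lemma~\ref{lemma:local_area_relations}, meaning precisely the first-order Taylor expansion of $\vphi$ at $w$ together with the telescoping identity $\sum_b K_\Tt(w,b)=0$ and the identification $\sum_b K_\Tt(w,b)\overline{(b-w)}=4i\mu_w$ via the reflection recursion that you carry out. Your remainder estimate $O(\|\vphi''\|_\infty\delta^3)$ is in fact one order better than the $O(\delta^2)$ stated, which is consistent (the paper is being conservative), and your orientation check — that the counterclockwise labelling of the $v_j$ puts $b_j$ on the right so $K_\Tt(w,b_j)=v_j-v_{j-1}$ — is exactly the detail that needs verifying.
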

\begin{proof}
  Let $v_1,v_2,\dots,v_n$ be the vertices of the polygon corresponding to the face $\Tt(w)$, and let $b_i w$ be the edge of $G$ dual to the edge $v_{i-1}v_i$ of $\Tt$. Note that $K_\Tt(w,b_i) = v_i-v_{i-1}$ by the definition~\eqref{eq:def_of_KTt}, therefore we have
  \[
    \sum_{b\sim w}K_\Tt(w,b)\vphi(b) = \sum_{j = 1}^n \vphi(b_j) (v_j-v_{j-1}).
  \]
  The lemma now follows from the Taylor expansion of $\vphi$ and Lemma~\ref{lemma:local_area_relations}.
\end{proof}

The small origami assumption means in particular that t-holomorphic functions approximate the functions holomorphic in the complex structure of the plane (cf. the discussion in Section~\ref{subsec:intro_graphs_on_Sigma0}). Thus, we expect the inverting kernel from Proposition~\ref{prop:full_plane_kernel} to approximate the Cauchy kernel. In the next theorem we show that this is happening with a polynomial error. 

\begin{thmas}
  \label{thmas:parametrix}
  Let $G$ be a bipartite graph and $\Tt$ be a weakly uniform t-embedding of $G^\ast$ with $O(\delta)$-small origami, where $\lambda,\delta$ are the parameters in the weak uniformity assumption. Assume that the vertices of $G$ are identified with points on the plane such that each vertex is at the distance at most $\lambda^{-1}\delta$ from the corresponding face of $\Tt$. Then there exists a constant $\beta$ depending only on $\lambda$, and a unique inverse kernel $K^{-1}_\Tt(b,w)$ such that
  \[
    \begin{split}
      &K_\Tt^{-1}(b,w) =  \Pr\left[ \frac{1}{\pi i (b - w)}, \eta_b\eta_w\RR \right] + O\left( \frac{\delta^\beta}{|b-w|^{1+\beta}} \right),\\
      &K_\Tt^{-1}(b,w) = O\left( \frac{1}{|b-w| + \delta} \right).
    \end{split}
  \]
  for each $b,w$, where the constants in $O(\ldots)$ depend on $\lambda$ only.
  This kernel satisfies $K_\Tt^{-1}(b,w) \in \eta_b\eta_w\RR$ for each $b,w$.
\end{thmas}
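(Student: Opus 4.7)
The plan is to construct $K_\Tt^{-1}$ and show that it matches the projected Cauchy kernel $P(b,w)\coloneqq\Pr[(\pi i(b-w))^{-1},\eta_b\eta_w\RR]$ up to a polynomially small error. The existence of an inverse satisfying $K_\Tt^{-1}(b,w)\in\eta_b\eta_w\RR$ together with the pointwise bound $|K_\Tt^{-1}(b,w)|\leq C/(\dist(\Tt(b),\Tt(w))+\delta)$ is provided by Proposition~\ref{prop:full_plane_kernel}. Since each vertex $b\in B$ (resp. $w\in W$) lies within $\lambda^{-1}\delta$ of the corresponding face $\Tt(b)$ (resp. $\Tt(w)$), the quantity $\dist(\Tt(b),\Tt(w))$ agrees with $|b-w|$ up to an additive $O(\delta)$ that is absorbed by the $+\delta$ in the denominator, yielding the second displayed bound of the theorem. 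Uniqueness follows by a Liouville-type argument: any two inverses differ by a t-white-holomorphic function on black vertices that vanishes at infinity, which must be identically zero by Lemma~\ref{lemma:Holderness_of_thol_fcts}.

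The key computation for the asymptotic expansion is to show that $P(\cdot,w_0)$ is approximately t-white-holomorphic at every white vertex $w$ sufficiently far from $w_0$. Expanding the projection gives $P(b)=\tfrac{1}{2\pi i(b-w_0)}-\tfrac{\eta_b^2\eta_{w_0}^2}{2\pi i\,\overline{(b-w_0)}}$. For the holomorphic piece I apply Lemma~\ref{lemma:approx_of_dbar} to $\vphi(z)=(\pi i(z-w_0))^{-1}$: since $\dbar\vphi=0$, the main contribution reduces to $\partial\vphi(w)\cdot S_w$ with $S_w\coloneqq\sum_{b\sim w}K_\Tt(w,b)(b-w)$, together with an $O(\|\vphi''\|_\infty\delta^2)=O(\delta^2/|w-w_0|^3)$ remainder. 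The crucial bound $|S_w|=O(\delta^2)$ follows from the midpoint identity $\sum_i(\Tt(v_{i+1})-\Tt(v_i))m_i=0$ (with $m_i$ the midpoint of the dual edge $\Tt(v_i)\Tt(v_{i+1})$) combined with $|b_i-m_i|=O(\delta)$; the number of edges around each white face is bounded in view of weak uniformity. The anti-holomorphic piece is reduced to the conjugate of the holomorphic one via the identity $\eta_b^2K_\Tt(w,b)=\bar\eta_w^2\overline{K_\Tt(w,b)}$, which is a direct consequence of the origami relation $\eta_w^2K_\Tt(w,b)=\bar\eta_b^2\overline{K_\Tt(w,b)}$. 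Combining these gives $|\epsilon_w|\coloneqq|\sum_{b\sim w}K_\Tt(w,b)P(b)|=O(\delta^2/|w-w_0|^2)$ for $|w-w_0|\gtrsim\delta$.

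Setting $R(b)\coloneqq K_\Tt^{-1}(b,w_0)-P(b)$, the previous step shows that $R$ has the a priori bound $|R|\leq C/(|b-w_0|+\delta)$, takes values in $\eta_b\eta_{w_0}\RR$, and is ``approximately'' t-white-holomorphic in the sense $\sum_{b\sim w}K_\Tt(w,b)R(b)=-\epsilon_w$ at all $w$ far from $w_0$. To upgrade the a priori bound to $|R(b)|=O(\delta^\beta/|b-w_0|^{1+\beta})$, I proceed by an iterated H\"older improvement across dyadic annuli $B(w_0,2R)\smm B(w_0,R)$. On each such annulus I split $R=R^{\mathrm{corr}}+R^{\mathrm{hol}}$, where $R^{\mathrm{corr}}$ is a convolution of the local source $\{\epsilon_w\}$ against $K_\Tt^{-1}(\cdot,w)$ designed to cancel the inhomogeneity, so that $R^{\mathrm{hol}}$ is exactly t-white-holomorphic on a concentric sub-annulus. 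Applying Lemma~\ref{lemma:Holderness_of_thol_fcts} to $R^{\mathrm{hol}}$ gives a H\"older gain $(r/R)^\alpha$ at each step, and summing the direct estimates on $R^{\mathrm{corr}}$ across the dyadic scales from $\delta$ up to $|b-w_0|$ delivers the polynomial decay.

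The main obstacle is that the naive convolution $\sum_wK_\Tt^{-1}(b,w)\epsilon_w$ converges only logarithmically and would yield the insufficient bound $\log(|b-w_0|/\delta)/|b-w_0|$. To overcome this I exploit the specific ``holomorphic'' structure of the source: to leading order $\epsilon_w=\partial\vphi(w)\cdot S_w$, where $S_w$ is a local geometric quantity independent of $w_0$ and $\partial\vphi(w)=-(\pi i(w-w_0)^2)^{-1}$ is holomorphic in $w$. Convolving $K_\Tt^{-1}(b,w)$ against such a product produces cancellations analogous to the continuous identity $\dbar^{-1}\partial\vphi=\vphi+(\text{holomorphic correction})$, shifting the effective decay from $1/|b-w_0|$ down to $\delta^\beta/|b-w_0|^{1+\beta}$. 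Making this continuous heuristic rigorous in the discrete setting, and calibrating the H\"older exponent $\alpha$ supplied by Lemma~\ref{lemma:Holderness_of_thol_fcts} against the final exponent $\beta$, is the principal technical difficulty.
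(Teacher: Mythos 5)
Your first two steps — existence, the pointwise upper bound, and uniqueness via Proposition~\ref{prop:full_plane_kernel} plus the Liouville argument — match the paper. Your verification that the projected Cauchy kernel $P(b)$ is approximately t\mbox{-}white\mbox{-}holomorphic with $|\epsilon_w|=O(\delta^2/|w-w_0|^2)$ is also correct and is a legitimate reformulation. But from there the paper takes a genuinely different route: instead of a discrete bootstrap, it mollifies $\Phi=\bar\eta_{w_0}K_\Tt^{-1}(\cdot,w_0)$ into a continuous function $\Phi_{\delta^\eps}$ (Lemmas~\ref{lemma:f_t_vs_f},~\ref{lemma:dbar_of_f_t}), shows $\dbar\Phi_{\delta^\eps}=O(\delta^{\beta_2}/|z-w_0|)$, and then extracts the coefficient $\bar\eta_{w_0}/(\pi i(z_1-w_0))$ directly from Cauchy's formula, with the residue fixed by the monodromy identity~\eqref{eq:monodromy_and_Kf1}. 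The key technical ingredient underlying the smallness of $\dbar\Phi_{\delta^\eps}$ is Lemma~\ref{lemma:derivative_of_approx_identity}: $\sum_{b\in Q}\sum_{w\sim b}K_\Tt(w,b)(w-b)=O(\delta r)$ for a box of side $r$, i.e.\ the local coefficients cancel when summed over regions.

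That same box-level cancellation (in white-centered form, $\sum_{w\in Q}S_w=O(\delta r)$, which follows from $\sum_{w\sim b}K_\Tt(w,b)=0$ and $\sum_{b\sim w}K_\Tt(w,b)=0$ exactly as in Lemma~\ref{lemma:derivative_of_approx_identity}) is what you would need to make your convolution argument close, but you don't invoke it. Instead you attribute the necessary gain to ``the holomorphic structure of the source,'' i.e.\ the discrete analogue of $\dbar^{-1}\partial\varphi=\varphi+\text{holom}$. That heuristic is misdirected: the factor $S_w$ multiplying $\partial\varphi(w)$ is an $O(\delta^2)$-size term that oscillates from face to face, so $\partial\varphi(w)S_w$ is \emph{not} the discrete $\partial$ of anything; the only usable structure is that $S_w$ averages out over boxes. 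Without a precise statement and proof of that cancellation, the passage from $\log(|b-w_0|/\delta)/|b-w_0|$ to $\delta^\beta/|b-w_0|^{1+\beta}$ is a gap (which you acknowledge). The dyadic decomposition $R=R^{\mathrm{corr}}+R^{\mathrm{hol}}$ also needs to be made precise: a convolution of the source against $K_\Tt^{-1}$ does not vanish outside the annulus, so $R^{\mathrm{hol}}$ as described is not exactly t-holomorphic on a sub-annulus, and Lemma~\ref{lemma:Holderness_of_thol_fcts} is an interior oscillation estimate on balls — applying it on annuli to produce a pointwise decay (rather than an oscillation bound) requires an additional anchoring step (e.g.\ $R\to 0$ at infinity) that you haven't supplied. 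The strategy is not obviously doomed, but as written it is an outline with the central estimate missing.
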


The goal of this subsection is to prove this theorem. Without loss of generality we can assume that the vertices of $G$ are embedded via the corresponding circle pattern. We keep using the same notation for a vertex of $G$ and its image on the plane. Given $b\in B$ or $w\in W$ denote by $\mu_b$ (resp. $\mu_w$) the area of the face $\Tt(b)$ (resp. $\Tt(w)$) of the t-embedding. We begin with some preliminary observations.

\begin{lemma}
  \label{lemma:area_estimate}
  Let $D\subset \CC$ be a convex set of diameter $r$. Then
  \[
    \Area(D) = 2\sum_{b\in D}\mu_b + O(\delta r)
  \]
  provided $r\geq \cst \delta$ where $\cst$ and the constant in $O(\ldots)$ depend only on $\lambda$.
\end{lemma}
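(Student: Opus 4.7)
The plan is to use the origami map $\Oo$ to convert the discrepancy between $\sum_{b\in D}\mu_b$ and $\tfrac12\Area(D)$ into a boundary integral, which is then controlled by the $O(\delta)$-small origami assumption.

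By the definition \eqref{eq:def_of_origami} of $d\Oo$ together with $|\eta|=1$, a direct computation on each face gives
\[
  d\Oo\wedge d\overline{\Oo}=\begin{cases} dz\wedge d\bar z=-2i\,dx\wedge dy, & \text{on each white face of }\Tt,\\ d\bar z\wedge dz=\phantom{-}2i\,dx\wedge dy, & \text{on each black face of }\Tt.\end{cases}
\]
Write $B_D$ (resp.\ $W_D$) for the total area of the black (resp.\ white) faces of $\Tt$ intersected with $D$; then $B_D+W_D=\Area(D)$ (the faces tile $\CC$) and
\[
  \int_D d\Oo\wedge d\overline{\Oo}=2i(B_D-W_D).
\]
Since $d(\Oo\,d\overline{\Oo})=d\Oo\wedge d\overline{\Oo}$ (as $d^2=0$), Stokes' theorem together with the bound $|\Oo|\le\lambda^{-1}\delta$ from \eqref{eq:small_origami_assumption} and the pointwise identity $|d\overline{\Oo}|=|dz|$ (valid on every face) yields
\[
  |B_D-W_D|=\tfrac12\Bigl|\int_{\partial D}\Oo\,d\overline{\Oo}\Bigr|\le \tfrac12\lambda^{-1}\delta\cdot|\partial D|\le \tfrac12\pi\lambda^{-1}\delta\,r,
\]
the last inequality being the standard perimeter bound for a convex planar set of diameter $r$. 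Combined with $B_D+W_D=\Area(D)$ this gives $B_D=\tfrac12\Area(D)+O(\delta r)$.

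It then remains to show that $\sum_{b\in D}\mu_b$ differs from $B_D$ by at most $O(\delta r)$. The key additional input is that each face of $\Tt$ has diameter $O(\delta)$: indeed, restricted to any face of $\Tt$ the origami map $\Oo$ is affine (antiholomorphic on black faces, holomorphic on white), hence an isometry, so the image of the face is congruent to the face itself and is contained in $\{\,w\in\CC\,:\,|w|\le\lambda^{-1}\delta\,\}$, which forces the diameter to be at most $2\lambda^{-1}\delta$. Since moreover $|b-\Tt(b)|\le\lambda^{-1}\delta$ by assumption, every black face contributing to the discrepancy (either because $b$ and $\Tt(b)$ lie on opposite sides of $\partial D$, or because $\Tt(b)$ straddles $\partial D$) lies entirely within an $O(\delta)$-neighbourhood of $\partial D$, which has area $O(\delta r)$ by Minkowski's inequality for convex sets of diameter $r$ (assuming $r\ge\cst\delta$). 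The total area of those black faces is therefore $O(\delta r)$, and doubling gives the claim.

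The only delicate step is the diameter bound $O(\delta)$ on faces; without it, a face with $b\in D$ could have part of $\Tt(b)$ far outside $D$, producing an uncontrolled error. Happily this bound is immediate from the affine (and isometric) nature of $\Oo$ on each face, combined with the global bound $|\Oo|\le\lambda^{-1}\delta$.
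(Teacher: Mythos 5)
Your proof is correct and follows essentially the same route as the paper's: compute $d\Oo\wedge d\bar\Oo$ facewise, turn the black/white area imbalance into a boundary integral via Stokes, and bound it with the $O(\delta)$-small-origami assumption together with the $1$-Lipschitz property of $\Oo$. The paper's one-line argument collapses your steps (2)--(5) into a single chain of equalities, absorbing the boundary-layer corrections (the difference between $B_D$ and $\sum_{b\in D}\mu_b$, and the white analogue) into a single $O(\delta r)$; your explicit derivation that every face has diameter $\le 2\lambda^{-1}\delta$ — from $\Oo$ being an isometry on each face with globally bounded image — is a nice self-contained way to justify that implicit step, and is the same fact the paper points to when it notes that small origami forces all circle-pattern radii to be $O(\delta)$.
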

\begin{proof}
  Indeed, we have
  \begin{equation*}
    \frac{i}{2}\int_D d\Oo\wedge d\bar{\Oo} = \sum_{w\in D}\mu_w - \sum_{b\in D}\mu_b + O(\delta r) = \frac{i}{2}\int_{\partial D} \Oo\wedge d\bar{\Oo} + O(\delta r) = O(\delta r)
  \end{equation*}
  where the last estimate follows from~\eqref{eq:small_origami_assumption}.
\end{proof}

Let $\vphi$ be a smooth approximation of the identity in $\CC$, that is $C^\infty(\CC)\ni\vphi\geq 0$, $\supp \vphi\subset B(0,1)$ and $\int_\CC\vphi = 1$. Set $\vphi_t(z) = t^{-2}\vphi(zt^{-1})$. Given a function $f$ defined on some portion of black vertices and $t>0$ define the smooth function
\begin{equation}
  \label{eq:def_of_ft}
  f_t(z) = 4\sum_{b\in B} \vphi_t(z+b)f(b)\cdot \mu_b
\end{equation}
for all $z$ for which the expression on the right-hand side makes sense. 
\begin{lemma}
  \label{lemma:f_t_vs_f}
  Assume that $f$ is t-white holomorphic and let $\eps\in (0,1/2)$ be arbitrary. There are constants $C,\beta_1>0$ depending only on $\lambda$ and $\eps$ such that if $z$ is a point where $f_{\delta^\eps}$ is defined, and $\Tt(w_s),w_s\in \Wws,$ is the closest white triangle to $z$, then
  \[
    |f_{\delta^\eps}(z) - f(w_s)| \leq C\cdot \max_{b\in B\colon |b - z| < \delta^{\eps/2}}|f(b)|\cdot \delta^{\beta_1}.
  \]
\end{lemma}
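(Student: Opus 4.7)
The plan is to split $f_{\delta^\eps}(z)$ into a ``scalar'' part, which reconstructs $f(w_s)$, and an ``oscillating'' part, which is small thanks to the $O(\delta)$-smallness of the origami. First I would fix a white splitting $\Wws$ and assign to every black vertex $b$ in the support of $\vphi_{\delta^\eps}(z-\cdot)$ an adjacent white triangle $w_b \in \Wws$. By Lemma~\ref{lemma:function_on_splitting}, the t-white-holomorphicity of $f$ gives $f(b) = \tfrac{1}{2}(f(w_b) + \eta_b^2\,\overline{f(w_b)})$, and substituting this into the definition \eqref{eq:def_of_ft} (read with sign $z-b$) yields $f_{\delta^\eps}(z) = S_1 + S_2$ with
\[
S_1 = 2\sum_{b}\vphi_{\delta^\eps}(z-b)\,f(w_b)\,\mu_b,\qquad S_2 = 2\sum_b\vphi_{\delta^\eps}(z-b)\,\eta_b^2\,\overline{f(w_b)}\,\mu_b.
\]

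For the scalar term $S_1$, the Hölder estimate of Lemma~\ref{lemma:Holderness_of_thol_fcts} applied at scales $r=\delta^\eps$ and $R=\delta^{\eps/2}$ (legal once $\delta^\eps\geq \cst\cdot\delta$, which holds for $\eps<1$) gives $|f(w_b)-f(w_s)|\leq C\delta^{\eps\alpha/2}M$ for every $b$ in the support, where $M:=\max_{|b-z|<\delta^{\eps/2}}|f(b)|$. Thus $S_1 = 2f(w_s)\sum_b\vphi_{\delta^\eps}(z-b)\mu_b + O(\delta^{\eps\alpha/2}M)$. Using Lemma~\ref{lemma:area_estimate} on a partition of the support of $\vphi_{\delta^\eps}$ into cells of side $\asymp\delta^{1/2}$ and Lipschitzness of $\vphi_{\delta^\eps}$ on each cell, the discrete mass $\sum_b\vphi_{\delta^\eps}(z-b)\mu_b$ is compared to $\tfrac{1}{2}\int\vphi_{\delta^\eps}(z-y)\,dy=\tfrac{1}{2}$ with a controllable error, delivering $S_1 = f(w_s)+O((\delta^{\eps\alpha/2}+\delta^{\beta_0})M)$ for some $\beta_0>0$.

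For the oscillating term $S_2$, I would exploit that on the black face $\Tt(b)$ one has $d\Oo=\bar\eta_b^2\,d\bar z$, so $\bar\eta_b^2\mu_b = \tfrac{1}{2i}\int_{\Tt(b)} d\Oo\wedge dz$. Rewriting $\overline{S_2}$ as a sum of such face integrals and using the piece-wise smooth extension of $\vphi_{\delta^\eps}(z-\cdot)f(w_\cdot)$ across black faces, the sum can be converted into an integral against $d\Oo\wedge dz$ over the support of $\vphi_{\delta^\eps}(z-\cdot)$. An integration by parts then moves the exterior derivative off $\Oo$; since $\|\Oo\|_\infty=O(\delta)$ by the $O(\delta)$-small origami assumption while $\nabla\vphi_{\delta^\eps}$ is of order $\delta^{-\eps}$ on a set of diameter $\delta^\eps$, the resulting estimate is $|S_2|\leq C(\delta^{1-\eps} + \delta^{\eps\alpha/2})M$. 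The Hölder step is used again to control the replacement of $f(w_b)$ by a Lipschitz interpolant before integrating by parts.

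Combining the two bounds and taking $\beta_1 := \tfrac{1}{2}\min(\eps\alpha,\,2(1-\eps),\,2\beta_0)>0$ (using $\eps<1/2$ only to keep $\delta^\eps < \delta^{\eps/2}$, so that $M$ really dominates every value $|f(b)|$ or $|f(w_b)|$ appearing in the sums) gives the claimed estimate. The main obstacle I expect is the careful discrete-to-continuous conversion in Step 3: one has to re-pack the face sums into a genuine integral on a piecewise smooth region, track boundary contributions to the integration by parts on $\partial\supp\vphi_{\delta^\eps}$, and handle the fact that $w_b$ is only \emph{locally} constant in $b$, so the replacement of $f(w_b)$ by a smooth function introduces jumps across white-splitting diagonals that must be absorbed into Hölder and origami error terms.
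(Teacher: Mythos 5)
Your proposal is essentially the same strategy as the paper's, and it is correct, but the paper organizes the steps in a slightly cleaner order that avoids your acknowledged obstacle. Specifically: the paper first replaces $f(w)$ by $f(w_s)$ for \emph{every} $b$ in the support of $\vphi_{\delta^\eps}(z+\cdot)$, using Lemma~\ref{lemma:Holderness_of_thol_fcts} and Lemma~\ref{lemma:Boundedness_of_white_values} at scales $\delta^\eps$ and $\delta^{\eps/2}$, incurring a single $O(\delta^{\alpha\eps/2}M)$ error. Only then does it split the sum; after the replacement the two remaining sums are the \emph{purely geometric} quantities $\sum_b\vphi_{\delta^\eps}(z+b)\mu_b$ and $\sum_b\eta_b^2\vphi_{\delta^\eps}(z+b)\mu_b$, with no $f$-dependence inside. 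The first is compared to $1/2$ via Lemma~\ref{lemma:area_estimate}; the second is estimated by subdividing the plane into squares of side $\delta^{2\eps}$, freezing $\vphi_{\delta^\eps}$ on each square (Lipschitz error $O(\delta^{-\eps}\mathrm{Area}(Q))$), and then noting that $d\bar\Oo\wedge d\bar u$ vanishes on white faces so $\int_{Q\cap\bigcup_b\Tt(b)} d\bar\Oo\wedge d\bar u = \int_Q d\bar\Oo\wedge d\bar u = O(\delta^{1+2\eps})$ by Stokes and the small-origami bound. You instead split first and keep $f(w_b)$ inside $S_2$, which forces you to Lipschitz-interpolate the piecewise-constant function $f(w_\cdot)$ before integrating by parts and to track jump terms across the splitting diagonals --- exactly the ``main obstacle'' you flagged. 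That step can be made to work (the H\"older estimate at scale $\delta$ gives jump sizes $O(\delta^{\alpha(1-\eps/2)}M)$, which controls the interpolant's gradient), but the paper's ordering makes it unnecessary: by the time the origami is invoked, the summand contains no $f$-data. The two routes yield the same kind of final exponent $\beta_1$.
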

\begin{proof}
  Let us fix $z\in \CC$ and $w_s$ as in the statement of the lemma. From basic properties of t-white-holomorphic functions (see Lemma~\ref{lemma:function_on_splitting}) we have that for any black vertex $b$ and white triangle $w\in \Wws$ adjacent to it we have
  \begin{equation}
    \label{eq:f_t_vs_f1}
    2f(b) = f(w) + \eta_b^2\overline{f(w)}.
  \end{equation} 
  Assume that $b\in B(z,\delta^\eps)$. Then by Lemma~\ref{lemma:Holderness_of_thol_fcts} and Lemma~\ref{lemma:Boundedness_of_white_values} we can replace $f(w)$ with $f(w_s)$ in~\eqref{eq:f_t_vs_f1} by introducing an error $O(\delta^{\alpha\eps/2}\max_{b\in B\colon |b - z| < \delta^{\eps/2}}|f(b)|)$:
  \begin{equation}
    \label{eq:f_t_vs_f15}
    2f(b) = f(w_s) + \eta_b^2\overline{f(w_s)} + O\left(\delta^{\alpha\eps/2}\max_{b\in B\colon |b - z| < \delta^{\eps/2}}|f(b)|\right).
  \end{equation}
  By~\eqref{eq:f_t_vs_f15} and the definition~\eqref{eq:def_of_ft} of $f_{\delta^\eps}$ we can write
  \begin{multline}
    \label{eq:f_t_vs_f2}
    f_{\delta^\eps}(z) =\\
    = 2f(w_s)\sum_{b\in B}\vphi_{\delta^\eps}(z+b)\mu_b + 2\overline{f(w_s)}\sum_{b\in B}\eta_b^2\vphi_{\delta^\eps}(z+b)\mu_b +\\
    + O\left(\delta^{\alpha\eps/2}\max_{b\in B\colon |b - z| < \delta^{\eps/2}}|f(b)|\right).
  \end{multline}
  Let us estimate both sums in the right-hand side of~\eqref{eq:f_t_vs_f2}. Using that the diameter of $\Tt(b)$ is of order $\delta$ (since we assume $\Tt$ to be weakly uniform, see Section~\ref{subsubsec:intro_t-embedding_prelim}) we can write (recall that $\mu_b$ is the area of $\Tt(b)$)
  \[
    2\phi_{\delta^\eps}(z+b)\mu_b = i\int_{\Tt(b)}\vphi_{\delta^\eps}(z+u)\,du\wedge d\bar{u} + \indic[z+b\in \supp(\phi_{\delta^\eps})]\cdot\mu_b\cdot O(\delta^{1-3\eps}).
  \]
  Since the area of the support of $\vphi_{\delta^\eps}$ is of order $\delta^{2\eps}$ and $\eps < 1/2$ the previous display implies
  \begin{equation}
    \label{eq:f_t_vs_f3}
     2\sum_{b\in B}\vphi_{\delta^\eps}(z+b)\mu_b = i\sum_{b\in B} \int_{\Tt(b)} \vphi_{\delta^\eps}(z+u)\,du\wedge d\bar{u} + O(\delta^{1/2}).
  \end{equation}
  Note that for any square $Q$ on the plane with the center $u_0$ and the side length $\delta^{2\eps}$ we have, by Lemma~\ref{lemma:area_estimate},
  \[
    i\int_{Q\cap \bigcup_{b\in B} \Tt(b)} \vphi_{\delta^\eps}(z+u)\,du\wedge d\bar{u} = \vphi_{\delta^\eps}(z + u_0) \Area(Q) + O(\delta^{-\eps}\Area(Q)).
  \]
  Dividing the plane into squares of side length $\delta^{2\eps}$, integrating over those squares that intersect the support of $\vphi_{\delta^\eps}(z+\cdot)$ and using the relation above we can estimate
  \begin{equation}
    \label{eq:f_t_vs_f4}
    2i\sum_{b\in B} \int_{\Tt(b)} \vphi_{\delta^\eps}(z+u) = i\sum_{b\in B} \int_{\Tt(b)} \vphi_{\delta^\eps}(z+u)\,du\wedge d\bar{u} + O(\delta^{1/2}) = 1 + O(\delta^\eps).
  \end{equation}
  To estimate the second sum in~\eqref{eq:f_t_vs_f2}, note that by the definition of the origami~\eqref{eq:def_of_origami} we have
  \[
    2\sum_{b\in B}\eta_b^2\vphi_{\delta^\eps}(z+b)\mu_b = i\sum_{b\in B} \int_{\Tt(b)} \vphi_{\delta^\eps}(z+u)\,d\bar{\Oo}\wedge d\bar{u} + O(\delta^{1/2}),
  \]
  and for any square $Q$ on the plane with the center $u_0$ and the side length $\delta^{2\eps}$
  \begin{multline*}
    \int_{Q\cap \bigcup_{b\in B} \Tt(b)} \vphi_{\delta^\eps}(z+u)\,d\bar{\Oo}\wedge d\bar{u} =\\
    = \vphi_{\delta^\eps}(z+u_0)\int_{Q\cap \bigcup_{b\in B} \Tt(b)} d\bar{\Oo}\wedge d\bar{u} + O(\delta^{-\eps}\Area(Q)).
  \end{multline*}
  We then notice that $d\bar{\Oo}\wedge d\bar{u}$ vanishes on any white face, therefore
  \[
    \int_{Q\cap \bigcup_{b\in B} \Tt(b)} d\bar{\Oo}\wedge d\bar{u} = \int_Qd\bar{\Oo}\wedge d\bar{u} = O(\delta^{1+2\eps})
  \]
  due to the small origami assumption. We conclude that
  \[
    \int_{Q\cap \bigcup_{b\in B} \Tt(b)} \vphi_{\delta^\eps}(z+u)\,d\bar{\Oo}\wedge d\bar{u} = O(\delta^{-\eps}\Area(Q))
  \]
  and thus, by dividing the plane into squares of side length $\delta^{2\eps}$ we get
  \begin{equation}
    \label{eq:f_t_vs_f5}
    2\sum_{b\in B}\eta_b^2\vphi_{\delta^\eps}(z+b)\mu_b = i\sum_{b\in B} \int_{\Tt(b)} \vphi_{\delta^\eps}(z+u)\,d\bar{\Oo}\wedge d\bar{u} + O(\delta^{1/2}) = O(\delta^\eps).
  \end{equation}
  Plugging~\eqref{eq:f_t_vs_f4} and~\eqref{eq:f_t_vs_f5} to~\eqref{eq:f_t_vs_f2} we get the result.
\end{proof}

\begin{lemma}
  \label{lemma:derivative_of_approx_identity}
  For any square $Q$ with the side of length $r$ we have
  \[
    \sum_{b\colon b\in Q}\sum_{w\sim b}K_\Tt(w,b)(w-b) = O(\delta r),\qquad \sum_{b\colon b\in Q}\sum_{w\sim b}\bar{\eta}_w^2\overline{K_\Tt(w,b)}(w-b) = O(\delta r)
  \]
  provided $r\geq \cst \delta$ where $\cst$ and the constants in $O(\ldots)$ depend only on $\lambda$.
\end{lemma}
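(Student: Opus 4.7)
The starting point for both estimates is the circle pattern reflection property: for each edge $wb$ of $G$ with $\Tt$-dual edge $e = v^+-v^-$, the vertex $b$ is the reflection of $w$ across the line through $v^\pm$, so $w-b = 2ih_e\,e/|e|$ where $h_e$ is the common perpendicular distance. This immediately gives
\[
  \overline{K_\Tt(w,b)}\,(w-b) \;=\; 2i h_e |e| \;=\; 2i\,\mathrm{Area}(R_{wb}),\qquad K_\Tt(w,b)\,(w-b) \;=\; 2i\,\mathrm{Area}(R_{wb})\,\bar\eta_w^2\bar\eta_b^2,
\]
where $R_{wb}$ is the kite with diagonals $v^+v^-$ and $wb$ and $\bar\eta_w^2\bar\eta_b^2 = (e/|e|)^2$ depends only on the direction of $e$. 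The kites $R_{wb}$ partition the plane, and $\sum_{b\sim w}\mathrm{Area}(R_{wb}) = 2\mu_w$ for any interior white face $w$, with boundary corrections of size $O(\delta r)$ coming from the $O(r/\delta)$ edges crossing $\partial Q$ (each contributing $O(\delta^2)$).

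The second estimate then reduces, after swapping summations and absorbing boundary corrections, to bounding $\sum_{w\subset Q}\bar\eta_w^2\mu_w$. Here the $O(\delta)$-small origami assumption enters directly via a Stokes argument on the 1-form $\bar\Oo\,dz$: since $d\bar\Oo = \bar\eta_w^2 d\bar z$ on white faces and $d\bar\Oo = \eta_b^2 dz$ on black, one has $d(\bar\Oo\,dz) = -2i\bar\eta_w^2\chi_W\,dA$, hence
\[
  \sum_{w\subset Q}\bar\eta_w^2\mu_w \;=\; \tfrac{i}{2}\int_{\partial Q}\bar\Oo\,dz \;=\; O(\delta r)
\]
using $|\bar\Oo|\leq\lambda^{-1}\delta$ and $|\partial Q|=4r$. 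This yields the second estimate.

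For the first estimate the analogous reduction gives $S_1 = 4i\int_Q\phi\,dA + O(\delta r)$ with $\phi = \bar\eta_w^2\bar\eta_b^2$ piecewise constant on the kites $R_e$, and the remaining task is to prove $\int_Q\phi\,dA = O(\delta r)$. The idea is to refine the Stokes argument. On a white-and-black splitting one observes the local exact-form identities
\[
  \phi\,dA \;=\; d\Bigl(\tfrac{\bar\eta_b^2}{2i}\bar\Oo\,dz\Bigr) \text{ on } T_{wb}\subset\Tt(w), \qquad \phi\,dA \;=\; d\Bigl(\tfrac{\bar\eta_w^2}{2i}\Oo\,dz\Bigr) \text{ on } T_{bw}\subset\Tt(b),
\]
obtained from $\bar\eta_w^2 dA = \tfrac{1}{2i}dz\wedge d\bar\Oo$ on whites (and the symmetric identity on blacks) together with the constancy of $\bar\eta_b^2$, $\bar\eta_w^2$ on the respective triangles. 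Summing and applying Stokes produces an external boundary integral over $\partial Q$ of size $O(\delta r)$ plus internal jumps across (i) the radial segments $wv_j^w$, $bv_j^b$ (Type A) and (ii) the t-embedding edges $v^-v^+$ (Type B). A crucial simplification is that on Type B edges the jump $\bar\eta_b^2\bar\Oo - \bar\eta_w^2\Oo$ is \emph{constant} along the edge, because the linearly-varying terms cancel thanks to the identity $\phi_e\bar e = e$ (a direct consequence of the origami compatibility $\eta_w^2 e = \bar\eta_b^2\bar e$).

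The main obstacle is that a naive bound on the internal jumps (number of internal edges $O(r^2/\delta^2)$ times per-edge contribution $O(\delta^2)$) only gives $O(r^2)$, so sharp cancellations must be extracted. This is done by Abel summation around each face: for Type A the telescoping $\sum_j(\bar\eta_{b_{j-1}}^2-\bar\eta_{b_j}^2)F_j = \sum_j\bar\eta_{b_j}^2(F_{j+1}-F_j)$, with $F_j=\int_{wv_j^w}\bar\Oo\,dz$, converts the sum into differences $F_{j+1}-F_j$ computed via Stokes on each triangle $T_{wb_j}$; these reduce to products of $\bar\Oo$ (of size $O(\delta)$) with edge midpoint values and areas. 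Combined with the constant Type B jumps and the auxiliary Stokes identities $\sum_{w\subset Q}\bar\eta_w^2\mu_w = O(\delta r)$, $\sum_{b\subset Q}\eta_b^2\mu_b = O(\delta r)$ (derived analogously from $\Oo\,d\bar z$ and variants), all accumulated contributions collapse to $O(\delta r)$, completing the bound on the first sum.
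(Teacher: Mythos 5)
Your approach is quite different from the paper's. The paper dispatches the lemma in a couple of lines by using only the two divergence identities $\sum_{w\sim b}K_\Tt(w,b)=0$ and $\sum_{b\sim w}K_\Tt(w,b)=0$ together with the bounded-density assumption, reducing the sum to a boundary sum of $O(\delta^2)$ terms over the $O(r/\delta)$ edges near $\partial Q$; it never invokes the kite decomposition, the origami primitives, or any Stokes/telescoping machinery. (Implicitly this short route also leans on the vanishing of $\sum_{b\sim w}K_\Tt(w,b)\,b$, i.e.\ the $\dbar$-adapted relation, without which the direct reduction to boundary terms does not go through.)

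Within your own approach: the reduction of both sums via the reflection identity $K_\Tt(w,b)(w-b)=2i\,\mathrm{Area}(R_{wb})\,\bar\eta_w^2\bar\eta_b^2$ and $\overline{K_\Tt(w,b)}(w-b)=2i\,\mathrm{Area}(R_{wb})$ is correct, as is the constancy of the jump $\bar\eta_b^2\bar\Oo-\bar\eta_w^2\Oo$ along a $\Tt$-edge, and your Stokes argument for the \emph{second} estimate (integrating $\bar\Oo\,dz$ over $\partial Q$ and invoking $|\Oo|\le\lambda^{-1}\delta$) is clean and complete. However, your treatment of the \emph{first} estimate has a genuine gap. Carrying out the Abel summation you propose and computing $F_{j+1}-F_j$ via Stokes on each triangle, one finds that the edge contributions from the three jump types cancel exactly (after grouping by face and using $\int_{\partial\Tt(w)}\Oo\,dz=\int_{\partial\Tt(b)}\bar\Oo\,dz=0$), while the area contributions add up to precisely $2i\int_Q\phi\,dA$ again. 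Substituting back into the Stokes identity then yields the tautology $\int_Q\phi=\int_Q\phi+O(\delta r)$: the argument is circular and does not produce the claimed bound. The auxiliary identities $\sum_w\bar\eta_w^2\mu_w=O(\delta r)$, $\sum_b\eta_b^2\mu_b=O(\delta r)$ that you invoke do not apply either, because the area terms that survive carry the weight $\phi_e=\bar\eta_w^2\bar\eta_b^2$, not $\bar\eta_w^2$ or $\eta_b^2$ alone. The assertion that ``all accumulated contributions collapse to $O(\delta r)$'' is therefore unsubstantiated, and some additional input beyond the Stokes/Abel scheme you describe is needed to close the first estimate.
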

\begin{proof}
  Using that $\sum_{b\colon b\sim w_0}K_\Tt(w_0,b) = \sum_{w\colon w\sim b_0} K_\Tt(w,b_0) = 0$ for any $b_0,w_0$ we can easily get that
  \[
    \left|\sum_{b\colon b\in Q}\sum_{w\sim b}K_\Tt(w,b)(w-b) \right| \leq \sum_{\substack{b\sim w,\\ \dist(b,\partial Q) = O(\delta)}} |K_\Tt(w,b)|\cdot |w-b| = O(\delta r),
  \]
  where the last equality follows from Assumption~\ref{item:full-plane_bdd_density} from Section~\ref{subsec:t-holom-regularity-lemmas}. The second sum in the lemma can be treated similarly (notice that $\bar{\eta}_w^2\overline{K_\Tt(w,b)} = K_\Tt(w,b)\eta_b^2$).
\end{proof}

\begin{lemma}
  \label{lemma:dbar_of_f_t}
  Assume that $\eps< \frac{\alpha}{\alpha+2}$ where $\alpha$ is the H\"older exponent from Lemma~\ref{lemma:Holderness_of_thol_fcts}. Let $f$ be a t-white holomorphic function. Then there exist $C,\beta_2>0$ depending only on $\lambda$ and $\eps$ such that for any $z\in \CC$ for which $f_{\delta^\eps}$ is defined we have
  \begin{equation}
    \label{eq:d_of_f_t}
    |\partial f_{\delta^{\eps}}(z)|\leq C\cdot \max_{b\in B\colon |b - z| < \delta^\eps}|f(b)|\cdot \delta^{-\eps}
  \end{equation}
  \begin{equation}
    \label{eq:dbar_of_f_t}
    |\dbar f_{\delta^\eps}(z)|\leq C \cdot \max_{b\in B\colon |b - z| < \delta^{\eps/2}}|f(b)|\cdot \delta^{\beta_2}.
  \end{equation}
\end{lemma}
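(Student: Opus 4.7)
The bound \eqref{eq:d_of_f_t} on $\partial f_{\delta^\eps}$ will follow immediately by differentiating under the sum: using $\|\partial\vphi_{\delta^\eps}\|_\infty = O(\delta^{-3\eps})$ together with $\sum_{b\in B(z,\delta^\eps)}\mu_b = O(\delta^{2\eps})$ from Lemma~\ref{lemma:area_estimate}, one gets $|\partial f_{\delta^\eps}(z)| \leq C\max|f|\cdot \delta^{-\eps}$.

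For the $\dbar$-bound \eqref{eq:dbar_of_f_t}, which is the substance of the lemma, my plan is a discrete integration by parts exploiting the t-white-holomorphy of $f$. Setting $g(u) = \vphi_t(z+u)$ with $t = \delta^\eps$, I have $\dbar f_t(z) = 4\sum_b f(b)\mu_b(\dbar g)(b)$. The first step will be to derive a black-vertex analogue of Lemma~\ref{lemma:approx_of_dbar}: applying Lemma~\ref{lemma:local_area_relations} to the polygon $\Tt(b)$ with $z_1$ taken to be any white neighbor of $b$, and observing that the iterated double reflections coincide with the remaining neighbors $w_j$ (since each $w_j$ is the single reflection of $b$ across the chord $v_{j-1}v_j$ and two reflections across the same line cancel), I obtain $\sum_{w\sim b} K_\Tt(w,b)(\bar w - \bar b) = -4i\mu_b$. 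Combined with $\sum_{w\sim b}K_\Tt(w,b) = 0$ and Taylor expansion of $g$ around $b$, this yields
\[
 -4i\mu_b\,\dbar g(b) = \sum_{w\sim b} K_\Tt(w,b)\,g(w) \;-\; \partial g(b)\,X(b) \;+\; O(\|\nabla^2 g\|_\infty\delta^3), \qquad X(b)\coloneqq \sum_{w\sim b} K_\Tt(w,b)(w-b).
\]
Substituting into $\dbar f_t(z)$ and swapping the order of summation, the leading contribution becomes $\sum_w g(w)\sum_{b\sim w} K_\Tt(w,b) f(b)$, which vanishes identically by t-white-holomorphy of $f$. The Taylor remainder is controlled by $O(\max|f|\cdot \delta^{1-2\eps})$ using $\|\nabla^2 g\|_\infty = O(t^{-4})$ and the count $\#\{b\in \supp g\} = O(t^2/\delta^2)$, giving a positive power of $\delta$ since $\eps < \alpha/(\alpha+2) < 1/2$ under our hypothesis.

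The remaining contribution $\sum_b f(b)\partial g(b) X(b)$ is the technically delicate step, to be handled by summation-by-parts over squares $Q$ of side $\rho\geq C\delta$ to be optimized. The leading piece will use the cancellation $|\sum_{b\in Q}X(b)| = O(\delta\rho)$ from Lemma~\ref{lemma:derivative_of_approx_identity} and $|f\partial g|\leq C\max|f|\cdot t^{-3}$ to contribute $O(\max|f|\cdot \delta/(t\rho))$; the oscillation piece will use the pointwise $|X(b)| = O(\delta^2)$ together with the H\"older estimate $\osc_\rho f \leq CM_{\eps/2}(\rho/\delta^{\eps/2})^\alpha$ from Lemma~\ref{lemma:Holderness_of_thol_fcts} applied at scale $\delta^{\eps/2}$ (which is precisely what produces the max over $B(z,\delta^{\eps/2})$ appearing in \eqref{eq:dbar_of_f_t}) and the smoothness of $\partial g$, to contribute $O(M_{\eps/2}\cdot t^{-1}(\rho/\delta^{\eps/2})^\alpha)$. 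Optimizing at $\rho \sim \delta^{(1+\alpha\eps/2)/(\alpha+1)}$ produces a total exponent $(\alpha - \eps(3\alpha/2+1))/(\alpha+1)$, which is strictly positive under the hypothesis $\eps < \alpha/(\alpha+2)$ (it suffices to have $\eps < 2\alpha/(3\alpha+2)$, implied by the stated hypothesis when $\alpha\leq 1$). Taking $\beta_2$ to be the minimum of the positive exponents will complete the proof. The hard part is precisely this optimization: balancing the H\"older modulus of continuity of $f$ against the $O(\delta\rho)$ origami-cancellation loss, the stated threshold on $\eps$ is exactly what is needed to get a strictly positive gain in $\delta$.
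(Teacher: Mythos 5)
Your architecture matches the paper's: both start from the vanishing sum $\sum_w g(w)\sum_{b\sim w}K_\Tt(w,b)f(b)=0$, swap the order, Taylor-expand (your black-vertex analogue of Lemma~\ref{lemma:approx_of_dbar} is correct, and so is the identity $\sum_{w\sim b}K_\Tt(w,b)(\bar w-\bar b)=-4i\mu_b$), and reduce to bounding $\sum_b f(b)\partial g(b)X(b)$ by a summation-by-parts over squares using Lemma~\ref{lemma:derivative_of_approx_identity}. Up to this reduction your plan reproduces the paper's proof.

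The gap is in the square-sum estimate. You pull out $f(b_0)\partial g(b_0)$ with $b_0\in B$ and control the remainder by ``the H\"older estimate $\osc_\rho f$''. But $f$ restricted to black vertices is \emph{not} H\"older: the constraint $f(b)\in\eta_b\RR$ means the argument of $f(b)$ jumps arbitrarily between adjacent black faces, so $\osc_Q (f\vert_B)$ is of order $\max|f|$, not small. Lemma~\ref{lemma:Holderness_of_thol_fcts} controls only the white values $f^\circ$. With your decomposition, the ``oscillation piece'' is $O(\max|f|\,\delta^{-\eps})$, not a positive power of $\delta$, and the proof stalls. The fix the paper uses is to write $f(b) = \tfrac12\bigl(f(w_Q)+\eta_b^2\overline{f(w_Q)}\bigr) + O(\osc f^\circ)$ as in~\eqref{eq:dft1}, which splits the leading piece into a sum against $X(b)$ \emph{and} a sum against $\eta_b^2 X(b)$; the latter is exactly what the second cancellation identity in Lemma~\ref{lemma:derivative_of_approx_identity} (for $\bar\eta_w^2\overline{K_\Tt(w,b)}(w-b)=K_\Tt(w,b)\eta_b^2(w-b)$) was designed for. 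Relatedly, your two-term optimization omits the error from $\osc\,\partial g$ over a square, which contributes the additional constraint $\nu>2\eps$ (the paper's $\delta^{\nu-2\eps}$ term in~\eqref{eq:dft5}); once you include it, the three-term balance still closes under the stated hypothesis $\eps<\alpha/(\alpha+2)$, but your claimed threshold $\eps<2\alpha/(3\alpha+2)$ is not by itself sufficient for $\alpha$ close to $1$.
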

\begin{proof}
  The first inequality is straightforward, let us prove the second one.
  Using that $f$ is t-white holomorphic and Lemma~\ref{lemma:approx_of_dbar} we get the following 
  \begin{multline*}
    0 = \sum_{w\in W}\vphi_{\delta^\eps}(z + w)\sum_{b\sim w}K_\Tt(w,b)f(b) = \sum_{b\in B}f(b)\sum_{w\sim b} K_\Tt(w,b)\vphi_{\delta^\eps}(z + w) = \\
    = i\dbar f_{\delta^\eps}(z) + \sum_{b\in B}f(b)\sum_{w\sim b} K_\Tt(w,b)\partial\vphi_{\delta^\eps}(z + b)(w - b) + O(\delta^{-4\eps + 3}).
  \end{multline*}
  We need to show that
  \[
    \sum_{b\in B}f(b)\sum_{w\sim b} K_\Tt(w,b)\partial\vphi_{\delta^\eps}(z + b)(w - b) = O(\max_{b\in B\colon |b - z| < \delta^\eps}|f(b)|\cdot \delta^{\beta_2})
  \]
  for some $\beta_2>0$ depending on $\eps$ and $\lambda$ only. Let us fix a parameter $\nu>0$ specified later, and for any $n\in \delta^\nu\ZZ^2$ let us denote by $Q_n$ the cube with the center $n$ and the side length $\delta^\nu$, and by $w_n\in \Wws$ an arbitrary white triangle contained in $Q_n$. Given $b\in Q_n\cap B(z,\delta^\eps)$ we can write
  \begin{equation}
    \label{eq:dft1}
    2f(b) = f(w_n) + \eta_b^2\overline{f(w_n)} + O(\delta^{\alpha(\nu - \eps/2)}\max_{b\in B\colon |b - z| < \delta^{\eps/2}}|f(b)|),
  \end{equation}
  where we use Lemma~\ref{lemma:Holderness_of_thol_fcts}, Lemma~\ref{lemma:Boundedness_of_white_values} and properties of t-holomorphic functions (cf.~\eqref{eq:f_t_vs_f15}). We also have
  \begin{equation}
    \label{eq:dft2}
    \partial\vphi_{\delta^\eps}(z + b) = \partial\vphi_{\delta^\eps}(z + w_n) + O(\delta^{\nu-4\eps}).
  \end{equation}
  We conclude that for a given $n$
  \begin{multline}
    \label{eq:dft3}
    \sum_{b\in B\cap Q_n} f(b)\sum_{w\sim b} K_\Tt(w,b)\partial\vphi_{\delta^\eps}(z + b)(w - b) = \\
    = \partial\vphi_{\delta^\eps}(z + w_n)\cdot\Big(f(w_n)\sum_{b\colon b\in Q_k}\sum_{w\sim b}K_\Tt(w,b)(w-b) + \\
    +\overline{f(w_n)}\sum_{b\colon b\in Q_k}\sum_{w\sim b}\eta_b^2K_\Tt(w,b)(w-b) \Big) + \\
    + O((\delta^{\alpha(\nu-\eps/2) -3\eps} + \delta^{\nu-4\eps}) \max_{b\in B\colon |b - z| < \delta^{\eps/2}}|f(b)|\cdot \Area(Q_n)).
  \end{multline}
  Using Lemma~\ref{lemma:derivative_of_approx_identity} to estimates two sums on the right-hand side of~\eqref{eq:dft3} we get
  \begin{multline}
    \label{eq:dft4}
    \sum_{b\in B\cap Q_n} f(b)\sum_{w\sim b} K_\Tt(w,b)\partial\vphi_{\delta^\eps}(z + b)(w - b) =\\
    = O(( \delta^{1-\nu-3\eps} + \delta^{\alpha(\nu-\eps/2) -3\eps} + \delta^{\nu-4\eps}) \max_{b\in B\colon |b - z| < \delta^{\eps/2}}|f(b)|\cdot \Area(Q_n)).
  \end{multline}
  Summing over all squares intersecting the support of $\vphi_{\delta^\eps}(z +\cdot)$ we finally get
  \begin{multline}
    \label{eq:dft5}
    \sum_{b\in B}f(b)\sum_{w\sim b} K_\Tt(w,b)\partial\vphi_{\delta^\eps}(z + b)(w - b) = \\
    = O(( \delta^{1-\nu-\eps} + \delta^{\alpha(\nu-\eps/2) -\eps} + \delta^{\nu-2\eps}) \max_{b\in B\colon |b - z| < \delta^{\eps/2}}|f(b)|).
  \end{multline}
  Plugging $\nu = 2\alpha^{-1}\eps$ we get the result.

\end{proof}

\begin{proof}[Proof of Theorem~\ref{thmas:parametrix}]
  Let $K_\Tt^{-1}$ be as in Proposition~\ref{prop:full_plane_kernel}, let $\lambda$ be fixed. The second estimate in the theorem follows from Proposition~\ref{prop:full_plane_kernel}. We first analyze the asymptotic of $K_\Tt^{-1}(b,w)$ assuming that $\delta>0$ is small enough depending on $\lambda$. Fix $w_0$ and consider the function $\Phi(b) = \bar{\eta}_{w_0}K_\Tt^{-1}(b,w_0)$. We fix a white splitting $\Wws$ and extend $\Phi$ to white triangles. Choose a small parameter $\eps>0$ and consider the function $\Phi_{\delta^\eps}$ defined as in~\eqref{eq:def_of_ft}. Choose another small parameter $0<\nu<\eps/9$ and introduce two circles
  \[
    \gamma_\inn = \partial B(w_0, \delta^{3\nu}), \qquad \gamma_\out = \partial B(w_0, \delta^{-1})
  \]
  both oriented counterclockwise.

  Let us choose $\eps$ small enough so that Lemma~\ref{lemma:f_t_vs_f} and Lemma~\ref{lemma:dbar_of_f_t} hold. These lemmas together with h\"olderness of t-white holomorphic functions (Lemma~\ref{lemma:Holderness_of_thol_fcts}) and the bound on $K_\Tt^{-1}$ imply the following. Let $z\in \CC$ such that $|z-w_0|\geq \delta^{3\nu}$, and let $w_z\in \Wws$ be a triangle on $O(\delta)$ distance from $z$. Let $\alpha$ be the H\"older exponent from Lemma~\ref{lemma:Holderness_of_thol_fcts}. Then there exist parameters $0<\beta_1,\beta_2<\alpha$ both depending on $\lambda$ and $\eps$ solely such that
  \begin{equation}
    \label{eq:Phi_delta_values}
    \Phi_{\delta^\eps}(z) = \Phi(w_z) + O\left(\frac{\delta^{\beta_1}}{|z - w_0|}\right),\qquad |\Phi_{\delta^\eps}(z)| = O\left( \frac{1}{|z-w_0|} \right),
  \end{equation}
  \begin{equation}
    \label{eq:Phi_delta_dbar}
    \dbar \Phi_{\delta^\eps}(z) = O\left(\frac{\delta^{\beta_2}}{|z-w_0|}\right),\qquad \partial \Phi_{\delta^\eps}(z) = O\left( \frac{\delta^{-\eps}}{|z-w_0|} \right).
  \end{equation}

  Let now $z_1\in B$ be such that $\delta^\nu < |w_0 - z_1| < \delta^{-1}$. The upper bound~\eqref{eq:Phi_delta_dbar} implies
  \begin{equation}
    \label{eq:two_integrals}
    \int_{\gamma_\out}\frac{\Phi_{\delta^\eps}(z)\,dz}{2\pi i (z - z_1)} - \int_{\gamma_\inn}\frac{\Phi_{\delta^\eps}(z)\,dz}{2\pi i (z - z_1)} = \Phi_{\delta^\eps}(z_1) + O(\delta^{\beta_2}\log\delta).
  \end{equation}
  Let us estimate each integral in the left-hand side of this equality. For the integral along $\gamma_\out$ we can use a crude bound
  \[
    \left|\int_{\gamma_\out}\frac{\Phi_{\delta^\eps}(z)\,dz}{2\pi i (z - z_1)}\right| = O(\delta)
  \]
  that follows from~\eqref{eq:Phi_delta_values}. To estimate the second integral we need a bit more delicate arguments. Using~\eqref{eq:Phi_delta_values} again we obtain
  \begin{equation}
    \label{eq:int_along_gamma_inn}
    \int_{\gamma_\inn}\frac{\Phi_{\delta^\eps}(z)\,dz}{2\pi i (z - z_1)} = \frac{1}{2\pi i(w_0 - z_1)}\int_{\gamma_\inn}\Phi_{\delta^\eps}(z)\,dz + O(\delta^\nu).
  \end{equation}
  Using the fact that $\Oo = O(\delta)$ and Lemma~\ref{lemma:dbar_of_f_t} we get
  \begin{equation}
    \label{eq:int_along_gamma_inn2}
    \int_{\gamma_\inn}\Phi_{\delta^\eps}(z)\,dz = \int_{\gamma_\inn}\left(\Phi_{\delta^\eps}(z)\,dz + \overline{\Phi_{\delta^\eps}(z)}\,d\Oo\right) + O(\delta^{1-\eps}).
  \end{equation}
  Recall the 1-form $\omega_{\Phi}$ defined in~\eqref{eq:def_of_omega_f}. Using~\eqref{eq:Phi_delta_values} and~\eqref{eq:int_along_gamma_inn2} we can write
  \[
    \int_{\gamma_\inn}\Phi_{\delta^\eps}(z)\,dz = \int_{\gamma_\inn}\omega_{\Phi} + O(\delta^{\beta_1} + \delta^{1-\eps}) = 2\bar{\eta}_{w_0} + O(\delta^{\beta_1} + \delta^{1-\eps})
  \]
  where the last equality follows from~\eqref{eq:monodromy_and_Kf1} and the definition of $\Phi$. Substituting this expression for $\int_{\gamma_\inn}\Phi_{\delta^\eps}(z)\,dz$ into~\eqref{eq:int_along_gamma_inn} we obtain
  \[
    \int_{\gamma_\inn}\frac{\Phi_{\delta^\eps}(z)\,dz}{2\pi i (z - z_1)} = \frac{\bar{\eta}_{w_0}}{\pi i(w_0 - z_1)} + O(\delta^\nu + \delta^{\beta_1 - \nu} + \delta^{1 - \eps-\nu}).
  \]
  Combining this with~\eqref{eq:two_integrals} we finally get
  \begin{equation}
    \label{eq:Phi_t_asymp}
    \Phi_{\delta^\eps}(z_1) = \frac{\bar{\eta}_{w_0}}{\pi i (z_1 - w_0)} + O(\delta^\nu + \delta^{\beta_1 - \nu} + \delta^{1 - \eps-\nu} + \delta^{\beta_2}\log \delta).
  \end{equation}
  Let now $b\in B$ be such that $\delta^\nu \leq |b-w_0|\leq \delta^{-1}$ (recall that we assume black vertices to be embedded into $\CC$) and let $w\in \Wws$ denote a white triangle incident to $b$. Then, by~\eqref{eq:Phi_delta_values},~\eqref{eq:Phi_t_asymp} and the definition of $\Phi$ we get
  \begin{multline*}
    \bar{\eta}_{w_0}K_\Tt^{-1}(b,w_0) = \Pr(\Phi(w), \eta_b\RR) =\\
    = \Pr\left( \frac{\bar{\eta}_{w_0}}{\pi i (b - w_0)} \right) + O(\delta^\nu + \delta^{\beta_1 - \nu} + \delta^{1 - \eps-\nu} + \delta^{\beta_2}\log \delta).
  \end{multline*}
  When $|b-w_0|> \delta^{-1}$, the same estimates follows from the upper bound on $K_\Tt^{-1}$. Choosing $\nu$ to be sufficiently small we conclude that there exists $\beta>0$ such that for all $\delta$ sufficiently small depending on $\lambda$ we have
  \begin{equation}
    \label{eq:non-homogen_asymp}
    K_\Tt^{-1}(b,w) = \Pr\left[\frac{1}{\pi i (b - w)}, \eta_b\eta_w\RR\right] + O(\delta^\beta),\qquad \text{given that}\quad |b-w|\geq \delta^\beta.
  \end{equation}

  Currently we have proven that~\eqref{eq:non-homogen_asymp} holds for all $\delta\leq \delta_0$ uniformly in $\delta$ where $\delta_0$ is some constant depending on $\lambda$ only. When $\delta>\delta_0$, the same estimate (with constants depending on $\delta_0$, which depends $\lambda$ only) follows from the upper bound on $K_\Tt^{-1}$ which we have regardless the value of $\delta$. Taking this into account, we can rewrite~\eqref{eq:non-homogen_asymp} in the following homogeneous form:
  \begin{equation}
    \label{eq:homogen_asymp}
    K_\Tt^{-1}(b,w) = \Pr\left[\frac{1}{\pi i (b - w)}, \eta_b\eta_w\RR\right] + O\left(\frac{\delta^\beta}{|b-w|^{1+\beta}}\right),\qquad |b-w| = 1
  \end{equation}
  uniformly in $\delta>0$ with constants depending on $\lambda$ only. Since both sides of~\eqref{eq:homogen_asymp} are multiplied by the same scalar when we rescale $\Tt$, we conclude that~\eqref{eq:homogen_asymp} holds without the restriction $|w-b| = 1$.
\end{proof}

\subsection{Multivalued holomorphic functions on isoradial graphs}
\label{subsec:multivalued}

In this section we assume that the graph $G$ is embedded into the plane isoradially~\cite{KenyonCriticalPlanarGraphs}. Recall that, by definition, all faces of $G$ are inscribed polygons, each face contains its circumcenter inside and all radii of circumscribed circles of faces are equal to each other. Denote the common radius by $\delta>0$. Let $\Tt$ denote the embedding of $G^\ast$ by circumcenters of faces of $G$. It is easy to see that $\Tt$ is a t-embedding.

By connecting vertices of $G$ with adjacent vertices of $G^\ast$ we get a tiling of the plane by rhombi. We assume that each rhombus has both angles bigger than some fixed constant $\lambda>0$. With this assumption $\Tt$ becomes weakly uniform and satisfies $O(\delta)$-small origami assumption (with some constant $\lambda'$ depending on $\lambda$). Let the origami square root function $\eta$ be chosen arbitrary.

Our goal for this subsection is to study discrete holomorphic functions having a prescribed multiplicative monodromy around a face of $G$. We will be using intensively the results of~\cite[Section~7.1]{DubedatFamiliesOfCR}. For the rest of the section we assume that $0$ is a vertex of $\Tt$, that is, a circumcenter of some face of $G$. Let $\gamma_0$ be a simple infinite path composed of edges of $\Tt$, starting from $0$ and oriented towards infinity. Let $s\in \RR$ be given. Having this data, we modify the Kasteleyn operator $K_\Tt$ as follows:
\begin{equation}
  \label{eq:def_of_K_Gs}
  K_s(w,b) = \begin{cases}
    K_\Tt(w,b),\quad bw\text{ does not cross }\gamma_0,\\
    e^{2\pi i s}K_\Tt(w,b),\quad bw\text{ crosses }\gamma_0\text{ and $b$ is on the left,}\\
    e^{-2\pi i s}K_\Tt(w,b),\quad \text{else.}
  \end{cases}
\end{equation}
Operator $K_s$ may be thought of as operator $K_\Tt$ acting on a space of multivalued functions with monodromy $e^{2\pi i s}$. Indeed, let $\tilde{\CC}\to \CC\smm\{ 0 \}$ be the universal cover and $\chi: \tilde{\CC}\to \tilde{\CC}$ denote the deck transformation corresponding to a single counterclockwise turn around the origin. Let $\tilde G\to G$ be the pullback of $G$ on $\tilde{\CC}$. Consider the set of functions
\[
  \Fun_s(B) = \{ f: B(\tilde G)\to \CC\ \mid\ f(\chi(b)) = e^{2\pi i s}f(b) \};
\]
define the set $\Fun_s(W)$ similarly.
If $\Fun(B)$ denotes the space of all functions from $B$ to $\CC$, then we have a (non-canonical) isomorphism $\Fun_s(B)\cong \Fun(B)$ corresponding to any fundamental domain in $\tilde{\CC}$. Then the left action of $K_\Tt$ on $\Fun_s(B)$ is intertwined with the left action of $K_s$ on $\Fun(B)$ by this isomorphism. Similarly, the right action of $K_\Tt$ on $\Fun_{-s}(W)$ is intertwined with the right action of $K_s$ on $\Fun(W)$. In what follows we will be often jumping between these two formalisms without mentioning it (e.g. considering the function $z^s$ as a function on $G$ by mean of the corresponding identification of $\Fun_s$ with $\Fun$).

For the next lemma we need an additional notation. Let $b_0\in B(G)$ be any black vertex incident to the face containing $0$ and $\eta_0$ be any unit complex number satisfying
\begin{equation}
  \label{eq:def_of_etav0}
  \eta_0^2 = \frac{b_0}{|b_0|}\eta_{b_0}^2.
\end{equation}
From the definition of $\eta_b$ we see that $\eta_0^2$ does not depend on the choice of $b_0$.

\begin{lemma}
  \label{lemma:fs}
  Assume that $G$ is a full-plane isoradial graph as above. For any $s\in [-1/2,+\infty)$ there exist functions $[z^s], [\bar{z}^s]$ on black vertices of $G$ such that $K_s [z^s] = K_{G,-s}[\bar{z}^s] = 0$ and the following asymptotic holds:
  \[
    \begin{split}
      &[\bar{b}^s] = \eta_b^2 \bar{b}^s + \eta_0^2\frac{\Gamma(s+1)}{\Gamma(-s)} \left( \delta^{2s+1} b^{-s-1} + O(\delta b^{s-1})\right),\\
      &[b^s] = b^s + \eta_b^2\bar{\eta}_0^2\frac{\Gamma(s+1)}{\Gamma(-s)}\left( \delta^{2s+1}\bar{b}^{-s-1} + O(\delta b^{s-1})\right),
    \end{split}
  \]
  where the constants in the asymptotic depend only on $\lambda$. Moreover, if $b_0$ is incident to the face containing $0$, then we have
  \[
    \begin{split}
      &[\bar{b}_0^s] = \Gamma(1+s) \eta_b^2\bar{b}_0^s,\\
      &[b_0^s] = \Gamma(1+s)b_0^s.
    \end{split}
  \]
\end{lemma}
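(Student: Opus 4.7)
My proposal is to adapt the construction of Dub\'edat~\cite[Section~7.1]{DubedatFamiliesOfCR} to our normalizations. The building blocks are Kenyon's discrete exponential functions $e(\cdot,\xi)$ on the isoradial graph: starting from $e(0,\xi)=1$, along each rhombus edge from $v$ to $v'$ with $v'-v=\delta e^{i\alpha}$, one sets
\[
  e(v',\xi)=\frac{1+\xi\delta e^{i\alpha}/2}{1-\xi\delta e^{i\alpha}/2}\cdot e(v,\xi).
\]
These are well-defined on the universal cover of $\CC\smm\{0\}$, satisfy the discrete Cauchy--Riemann equation $K_\Tt\, e(\cdot,\xi)=0$ on black vertices, and in the limit $\delta\to 0$ converge to $e^{\xi z}$ away from the two singular points $\xi=\pm 2/(\delta e^{i\alpha})$.

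I would then define the two multivalued functions via Hankel-type contour integrals:
\[
  [b^s]=\frac{\Gamma(s+1)}{2\pi i}\int_{\Gamma_+}\xi^{-s-1}e(b,\xi)\,d\xi,
  \qquad
  [\bar b^s]=\frac{\Gamma(s+1)}{2\pi i}\int_{\Gamma_-}\xi^{-s-1}\overline{e(b,\xi)}\,d\xi,
\]
where $\Gamma_\pm$ are contours that come in from infinity along rays close to the negative (resp.\ positive) real axis of $\xi$, loop around the origin, and return. The branch of $\xi^{-s-1}$ supplies the monodromy: transporting $b$ once around the origin multiplies $[b^s]$ by $e^{2\pi is}$, which is precisely the monodromy recorded by the twisted operator $K_s$ in~\eqref{eq:def_of_K_Gs}. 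Since $K_\Tt e(\cdot,\xi)=0$, discrete holomorphicity $K_s[b^s]=0$ is automatic after choosing $\Gamma_0$ to match the cut of $\xi^{-s-1}$. The reason one obtains two independent solutions $[b^s]$ and $[\bar b^s]$ is that $e(b,\xi)$ has two distinct critical regimes in $\xi$: a low-frequency saddle near $\xi\sim b^{-1}$ where $e(b,\xi)\sim e^{\xi b}$, and a high-frequency saddle near the poles $\xi\sim 2/(\delta e^{i\alpha})$ which effectively probes $\bar b$.

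The asymptotic expansion would be obtained by steepest descent. For $[\bar b^s]$ the small-$\xi$ saddle applied to $\int_{\Gamma_-}\xi^{-s-1}e^{-\xi\bar b}\,d\xi$ produces, via Hankel's integral for $1/\Gamma$, the leading term $\eta_b^2\bar b^s$ (the factor $\eta_b^2$ arises because the discrete complex conjugate on the black sublattice is encoded by the origami square root, cf.\ Lemma~\ref{lemma:function_on_splitting}); the reciprocal large-$\xi$ saddle contributes $\eta_0^2 \Gamma(s+1)/\Gamma(-s)\cdot \delta^{2s+1}b^{-s-1}$, where the constant $\eta_0^2$ in~\eqref{eq:def_of_etav0} accounts for the phase of the rhombus edge at the central face, and the ratio of gammas comes from Euler's reflection formula applied to the two contour integrals. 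The remainder $O(\delta|b|^{s-1})$ is the next term in the saddle expansion, controlled uniformly in $b$ using only the angle bound $\lambda$. The expansion for $[b^s]$ is obtained by the mirror argument on $\Gamma_+$.

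For the explicit values at $b_0$, the path from $0$ to $b_0$ consists of a single rhombus edge, so $e(b_0,\xi)$ collapses to the rational function $(1+\xi b_0/2)/(1-\xi b_0/2)$ (up to the conventional origin of $e$), and after deforming $\Gamma_\pm$ the contour integral reduces to the classical Hankel representation $\frac{1}{2\pi i}\int_\Gamma\xi^{-s-1}e^{\xi b_0}\,d\xi=\frac{b_0^s}{\Gamma(s+1)}$, giving exactly $\Gamma(1+s)b_0^s$ after cancellation of the correction (the high-frequency saddle term vanishes because $b_0$ lies at the first rhombus). The main obstacle I expect is making the two-saddle steepest descent uniform in $b$ and in the mesh $\delta$ simultaneously: one must deform $\Gamma_\pm$ depending on $\arg b$ and track carefully how the discrete product $\prod(1+\xi\delta e^{i\alpha_j}/2)/(1-\xi\delta e^{i\alpha_j}/2)$ differs from $e^{\xi b}$ as $\xi$ approaches the singular annulus $|\xi|\sim\delta^{-1}$; this uniformity is what allows the error $O(\delta|b|^{s-1})$ to hold for all $b$ rather than just in a bounded region.
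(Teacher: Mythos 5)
Your overall framework — discrete exponentials $e_b(z)$, Hankel-type contour integrals against $z^{-s-1}$ to produce the monodromy, and a double-saddle steepest descent — matches the paper's strategy, and your formula for $[b^s]$ is essentially the paper's $f_s(b)=\int_{\gamma_b}z^{-s-1}e_b(z)\,dz$ up to the choice of contour. The genuine gap is in your proposed definition of $[\bar b^s]$.

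You set $[\bar b^s]\propto\int_{\Gamma_-}\xi^{-s-1}\overline{e(b,\xi)}\,d\xi$, but $\overline{e(\cdot,\xi)}$ is \emph{not} in the kernel of the discrete Cauchy--Riemann operator. Conjugating $\sum_b K_\Tt(w,b)e_b(z)=0$ and using $\overline{K_\Tt(w,b)}=K_\Tt(w,b)\,\eta_b^2\eta_w^2$ gives $\sum_b K_\Tt(w,b)\,\eta_b^2\overline{e_b(z)}=0$; it is $\eta_b^2\overline{e_b(z)}$, not $\overline{e_b(z)}$, that is $K_\Tt$-holomorphic. As written your $[\bar b^s]$ therefore does not satisfy $K_{G,-s}[\bar b^s]=0$, which is one of the two properties the lemma asserts, and the factor $\eta_b^2$ that you claim ``arises'' from the origami square root in the asymptotics is in fact the missing ingredient needed to make the definition work in the first place. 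Even after inserting it, one must re-derive the monodromy of $\eta_b^2\overline{e_b(z)}$ against the cut $\gamma_0$ to match $K_{G,-s}$; none of this is addressed.

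The paper's proof avoids the issue entirely and is worth noting as a cleaner route: it defines \emph{both} multivalued functions from the \emph{same} non-conjugated discrete exponential, via $f_s(b)=\int_{\gamma_b}z^{-s-1}e_b(z)\,dz$, and then sets $[b^s]\propto f_s(b)$ and $[\bar b^s]\propto f_{-s-1}(b)$. The anti-holomorphic leading term $\eta_b^2\bar b^s$ does not require a conjugate building block at all — it emerges from the large-$z$ asymptotics $e_b(z)\sim\bar\eta_0^2\eta_b^2 z^{-1}\exp(-\bar b z^{-1})$, which already carry $\bar b$ and $\eta_b^2$. Changing the power of $z$ from $-s-1$ to $s$ simply swaps which of the two saddles of $e_b$ dominates. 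With this observation discrete holomorphicity for both functions is automatic, and the $\Gamma(s+1)/\Gamma(-s)$ ratio appears from the two Hankel integrals over the inner and outer circles of the keyhole contour rather than from a separate reflection-formula argument.
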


\begin{proof}
  The proof is based on analyzing suitable discrete exponents as it was done in~\cite[Lemma~10]{DubedatFamiliesOfCR}. We sketch the arguments for the sake of completeness. 

  We first assume that $\delta = 1$ and $s\in \RR\smm\ZZ$ is arbitrary. Following the conventions introduced in~\cite[Appendix~A.1,~Proof of Theorem~2.5]{ChelkakSmirnov}, for a generic $z\in \CC$ we define the discrete complex exponent $e_v(z)$ inductively by declaring $e_0(z) = 1$ and then for any vertex $v$ of $G^*$ and $b,w\sim v$
  \[
    e_b(z) = (1 + z(b-v))^{-1}e_v(z),\qquad e_w(z) = (1- z(w-v)) e_v(z).
  \]
  It is well-known that $e_b(z)$ is consistently defined rational function of $z$ with poles lying on the unit circle, and that $e_b(z)$ is discrete holomorphic as a function of $b$. (See~\cite[Section~3.2]{KenyonCriticalPlanarGraphs}, where the statement is proven for a collection of functions $f_u(z), u\in $ defined by $f_b(z) = z^{-1}e_b(z^{-1})$). We define
  \[
    f_s(b) := \int_{\gamma_b} z^{-s-1}e_b(z)\,dz
  \]
  where $\gamma_b$ is the contour that starts at the point $(2b)^{-1}$ on the circle $\{|z| = (2|b|)^{-1}\}$, follows this circle in the counterclockwise direction, then the straights segment connecting $(2b)^{-1}$ with $2\bar b$, then follows the circle $\{ |z| = 2|b| \}$ in the clockwise direction, and finally closes itself along the straight segment connecting $2\bar b$ with $(2b)^{-1}$. One can verify easily that $f_s(b)$ is discrete holomorphic as a function of $b$. Choosing a simple path between $0$ and $b$ on $G\cup G^*$ such that its orthogonal projection onto $b\RR$ is monotone we obtain the following asymptotic relations:
  \[
    \begin{split}
      &e_b(z) = \bar{\eta}_0^2\eta_b^2z^{-1}\exp(-\bar{b}z^{-1} + O(bz^{-2})),\qquad |z|\geq |b|^{1/2},\\
      &e_b(z) \leq C\cdot e^{-c\cdot |b|^{1/2}},\qquad |b|^{-1/2}\leq |z|\leq |b|^{1/2}\quad \text{and}\quad bz\in \RR_+,\\
      &e_b(z) = \exp(-bz + O(bz^2)),\qquad |z|\leq |b|^{-1/2}
    \end{split}
  \]
  where $C,c>0$ and all other constants depend on $\lambda$ only. Substituting this into the definition of $f_s$ we get the following asymptotic relation:
  \begin{multline}
    \label{eq:fs1}
    f_s(b) = (1 - e^{2\pi i s})\Gamma(s+1)\bar{\eta}_0^2\eta_b^2 \bar{b}^{-s-1} \left(1 + O((s+1)(s+2)b^{-1})  \right) +\\
    + (1 - e^{2\pi i s})\Gamma(-s)b^s \left(1 + O(s(1-s)b^{-1})  \right).
  \end{multline}
  When $s\in [-1/2,+\infty)$ we set
  \[
    \begin{split}
      &[b^s] = ((1 - e^{2\pi i s})\Gamma(-s))^{-1}\cdot f_s(b),\\
      &[\bar{b}^s] = ((1 - e^{-2\pi i s})\Gamma(-s)\bar{\eta}_0^{2})^{-1}\cdot f_{-s-1}(b);
    \end{split}
  \]
  when $s\in \ZZ$ we just take an appropriate limit.
  The asymptotic expansions of $[b^s]$ and $[\bar{b}^s]$ required in the lemma follow immediately from the asymptotic expansion of $f_s$. Moreover, observing that $e_{b_0}(z) = \frac{1}{1 + zb_0}$ and $|b_0| = 1$ (as we assumed $\delta = 1$) we get the exact relation
  \[
    f_s(b_0) = \int_{\gamma_b}\frac{z^{-s-1}}{(1 + zb_0)} = -2\pi i e^{\pi i s} \bar b_0^{-s} = \int_{\gamma_b}\frac{z^{-s-1}}{(1 + zb_0)} = -2\pi i e^{\pi i s} b_0^s
  \]
  which implies the expression for the values of $[b^s]$ and $[\bar b^s]$ at $b_0$.
\end{proof}

From now on we will be assuming that $G$ is a \emph{Temperleyan isoradial} graph, that is, a superposition of an isoradial graph $\Gamma$ and its dual $\Gamma^\dagger$ embedded by circumcenters. Recall that black vertices of $G$ correspond to vertices of $\Gamma$ and $\Gamma^\dagger$, edges are the half-edges of $\Gamma$ and $\Gamma^\dagger$ and white vertices lie at intersections of edges of $\Gamma$ and $\Gamma^\dagger$. It is clear that such a $G$ is always isoradial, with each face given by a union of two right triangles with a common hypotenuse. We keep the notation $\Tt$ for the t-embedding of the dual graph given by circumcenters of faces (that is, midpoints of the aforementioned hypotenuses). Temperleyan isoradial graphs are usually considered in the framework of classical discrete complex analysis~\cite{ChelkakSmirnov}, in particular, this framework was chosen by Dub\'edat in~\cite{DubedatFamiliesOfCR}. Thus, restricting our focus to this framework will allow us to use results developed in~\cite[Section~7.1]{DubedatFamiliesOfCR} with minimal changes.

Our goal is to analyze the inverting kernel $K_s^{-1}$. This kernel was constructed and estimated by Dub\'edat~\cite[Lemmas~13,14]{DubedatFamiliesOfCR}, however, for our purposes we need to improve this asymptotics, filling some gaps in the aforementioned lemmas. This will be achieved by following the same line of arguments as proposed in the proof of~\cite[Lemmas~13,14]{DubedatFamiliesOfCR}, but with higher precision when it comes to estimates.

We begin by reminding the following results of~\cite[Lemmas~13,14]{DubedatFamiliesOfCR}:

\begin{lemma}
  \label{lemma:Ksinv_a_priori_bound}
  Let $s\in (0,1/2)$ be fixed. Then there is a unique function $K^{-1}_s: B\times W\to \CC$ such that
  \begin{equation}
    \label{eq:estimate_on_K_s}
    K_s^{-1}(b,w) = O\left( \frac{1}{|b-w|} \left( \left|\frac{b}{w}\right|^s + \left|\frac{w}{b}\right|^s \right) \right)
  \end{equation}
  and $K_s^{-1}$ is the (both left and right) inverse operator for $K_s$. Moreover, if $w_0$ is incident to the face of $G$ containing zero, then
  \begin{multline}
    \label{eq:Ksinv_at_zero}
    K_s^{-1}(b,w_0) 
    = \frac{\Gamma(1-s)}{2}\frac{1}{\pi i (b-w_0)}\left( \frac{b}{w_0} \right)^s - \frac{\Gamma(1+s)}{2} \frac{(\eta_b\eta_{w_0})^2 }{\pi i \overline{b-w_0}}\left( \frac{\bar{w_0}}{\bar{b}} \right)^{s} +\\+ O(\delta^{2-s} b^{s-3}).
  \end{multline}
\end{lemma}
\begin{proof}
  By~\cite[Lemma~13]{DubedatFamiliesOfCR}, there exists a right inverse $K_s^{-1}$ of $K_s$ satisfying~\eqref{eq:estimate_on_K_s} when $|b|\geq 2|w|$ and $|b|\leq \frac{|w|}{2}$. The case when $\frac{|w|}{2} \leq |b|\leq 2|w|$ is not indicated in this lemma, but follows directly from the proof. Indeed, it follows from the construction of $K_s^{-1}$ given by Dub\'edat in the proof of~\cite[Lemma~13]{DubedatFamiliesOfCR} that $K_s(b,w) = O(|w|^{-1})$ when $|w|/4\leq |w-b|\leq |w|/2$. Let $K_\Tt^{-1}$ be the kernel constructed Proposition~\ref{prop:full_plane_kernel}. The function $K_s^{-1}(b,w) - K_\Tt^{-1}(b,w)$ considered as a function of $b$ is a discrete holomorphic function in the disc $|b-w|\leq |w|/2$ and bounded by $|w|^{-1}$ on the boundary of this disc. It follows that $K_s^{-1}(b,w) = K_\Tt^{-1}(b,w) + O(|w|^{-1})$ which implies the desired inequality.

  The relation~\eqref{eq:Ksinv_at_zero} is proven in~\cite[Lemma~14]{DubedatFamiliesOfCR}. The uniqueness of $K_s^{-1}(b,w)$ follows from~\cite[Lemma~11]{DubedatFamiliesOfCR}, and the fact that $K_s^{-1}(b,w)$ is the full inverse (not only the right one) follows from the same uniqueness argument exactly as it was done in the proof of Proposition~\ref{prop:full_plane_kernel}.
\end{proof}

Let us now describe the asymptotics of $K_s^{-1}$. Similarly to Theorem~\ref{thmas:parametrix} describing the asymptotics of $K_\Tt^{-1}$ we may expect that $K_s^{-1}$ is asymptotically equal to the continuous kernel
\begin{equation}
  \label{eq:def_of_Cs}
  \Cc_s(b,w) = \frac{1}{2}\left[ \frac{1}{\pi i (b-w)}\left( \frac{b}{w} \right)^s - \frac{(\eta_b\eta_w)^2 }{\pi i\overline{(b-w)}} \left( \frac{\bar{w}}{\bar{b}} \right)^s \right]
\end{equation}
with the error term being uniformly bounded away from the singularities. More precisely, the following assertion holds:

\begin{lemma}
  \label{lemma:existence_of_Ksinv}
  For each $s\in (0,1/2)$ there exist $\beta>0$ such that the kernel $K^{-1}_s: B\times W\to \CC$ from Lemma~\ref{lemma:Ksinv_a_priori_bound} satisfies the following estimates
    \begin{multline}
      \label{eq:asympt_of_K_s_b>2w}
      K_s^{-1}(b,w) = \frac{1}{2}\left[ \frac{1}{\pi i (b-w)}\left( \frac{b}{w} \right)^s - \frac{(\eta_b\eta_w)^2 }{\pi i\overline{(b-w)}} \left( \frac{\bar{w}}{\bar{b}} \right)^s \right] + \\
      + O\left( \left(\frac{|b|^s}{|w|^s} + \frac{|w|^s}{|b|^s}\right)\left( \frac{1}{|b-w|} + \frac{1}{|b|} \right)\frac{\delta^{2\beta}}{|w|^{2\beta}} \right),\qquad |b-w| > |w|^{1-\beta}\delta^\beta,
    \end{multline}
    \begin{multline}
      \label{eq:asympt_of_K_s_b-w<w34}
      K_s^{-1}(b,w) = e^{2\pi i s A(b,w)}\left( K_\Tt^{-1}(b,w) + \frac{s}{2} \left[ \frac{1}{\pi i w} + \frac{(\eta_b\eta_w)^2}{\pi i \bar{w}} \right]\right) + O\left(\frac{\delta^\beta}{|w|^{1+\beta}} \right),\\\qquad |b-w| \leq |w|^{1-\beta}\delta^\beta,
    \end{multline}
    where $A(b,w) = \indic[b\text{ on the left from }\gamma_0] - \indic[w\text{ on the left from }\gamma_0]$.
\end{lemma}
\begin{proof}

  It is enough to prove all the estimates when $\delta = 1$, the case of a general $\delta$ follows from rescaling arguments. Let us now show how to develop a more precise asymptotics of the kernel $K_s^{-1}$ following the general construction suggested by Dub\'edat in the proof of~\cite[Lemma~14]{DubedatFamiliesOfCR}. The idea is to modify the continuous kernel $\Cc_s(b,w)$ near its singularities using discrete holomorphic substitutes such as $[b^s]$ or $K_\Tt^{-1}(b,w)$. If $S$ is the modified kernel, then we can formally write
  \begin{equation}
    \label{eq:Kinv_via_S_one_puncture}
    K_s^{-1} = S - K_s^{-1}T
  \end{equation}
  where
  \begin{equation}
    \label{eq:def_of_T}
    T = K_sS - \Id.
  \end{equation}
  As soon as $T$ is small enough, we can use the estimate~\eqref{eq:estimate_on_K_s} to bound the term $K_s^{-1}T$ on the right-hand side of~\eqref{eq:Kinv_via_S_one_puncture} and to conclude that $K_s^{-1}$ is approximately $S$.

  To define the parametrix $S(b,w)$ we fix two exponents $x,y\in (0,1)$ specified later and for each $w\in W$ we define
  \[
    \Uu_0 = \{ z\in \CC\ \mid\ |z|\leq |w|^x \},\qquad \Uu_w = \{ z\in \CC\ \mid\ |z-w|\leq |w|^y \}.
  \]
  We now define
  \[
    S(b,w) = \begin{cases}
      \Cc_s(b,w), \quad b\notin \Uu_0\cup \Uu_w,\\
      e^{2\pi i s A(b,w)}\left( K^{-1}_\Tt(b,w) + \frac{s}{2}\left[\frac{1}{\pi i w} + \frac{(\eta_b\eta_w)^2}{\pi i\bar{w}}\right]\right),\quad b\in \Uu_w,\\
      -\frac{[b^s]}{2\pi w^{1+s}} + \frac{\eta_w^2[\bar{b}^{-s}]}{2\pi \bar{w}^{1-s}},\quad b\in \Uu_0.
    \end{cases}
  \]
  where $[\bar{b}^s],[b^s]$ are as in Lemma~\ref{lemma:fs} and $A$ is as~\eqref{eq:asympt_of_K_s_b-w<w34}. Let us now estimate $T(u,w) = (K_sS)(u,w) - \Id(u,w)$. By our construction $T(u,w)$ vanishes identically when $u$ is inside $\Uu_0\cup \Uu_w$ and is at the distance more than 1 from their boundaries, and can be estimated by the second derivative of $\Cc_s$ when $u$ is outside of $\Uu_0\cup \Uu_w$ and at the distance more than 1 from their boundaries. When $u$ is at the distance at most 1 from $\partial\Uu_0\cup \partial\Uu_w$, then the estimate on $T(u,w)$ depends on the mismatch between $\Cc_s$ and the discrete holomorphic substitutes. These mismatches can be determined by Lemma~\ref{lemma:fs} and Theorem~\ref{thmas:parametrix}; moreover, as we are currently working with isoradial graphs the exponent $\beta$ in Theorem~\ref{thmas:parametrix} can be taken to be equal to $1$ (see~\cite[Theorem~1]{DubedatFamiliesOfCR} which is closer to our notation or~\cite[Theorem~4.3]{KenyonCriticalPlanarGraphs} where the estimate was originally obtained). Overall, this leads to the following estimates (below $X\lesssim Y$ we mean that $|X|\leq C|Y|$ for an absolute constant $C>0$):
  \begin{equation}
    \label{eq:estimate_on_T_one_puncture}
    T(u,w) \lesssim \begin{cases}
      \left( \frac{1}{|u|^3} + \frac{1}{|u-w|^3} \right)\frac{1}{|u-w|} \left( \left|\frac{u}{w}\right|^s + \left|\frac{w}{u}\right|^s\right),\quad u\notin \Uu_0^{+1}\cup \Uu_w^{+1},\\
      0,\quad u\in \Uu_0^{-1}\cup \Uu_w^{-1},\\
      |w|^{-2y} + |w|^{y-2},\quad u \in \partial \Uu_w+\overline B(0,1),\\
      |w|^{-1-(1-x)(1-s)} + |w|^{-(1-s)(1+x)},\quad u\in \partial \Uu_0+\overline B(0,1),\\
    \end{cases}
  \end{equation}
  where where $\Uu^{+1} = \Uu + \overline B(0,1)$ and $\Uu^{-1} = \Uu\smm (\partial\Uu + \overline B(0,1))$. Let us now estimate $K_s^{-1}T$. We first write by definition
  \[
    |(K_s^{-1}T)(b,w)| \leq \sum_{u\in W} |K_s^{-1}(b,u) T(u,w)|.
  \]
  We can split this sum into three terms depending on the cases on the right-hand side of~\eqref{eq:estimate_on_T_one_puncture}:
  \begin{equation}
    \label{eq:KT_via_I}
    |(K_s^{-1}T)(b,w)| \leq I_1 + I_w + I_0
  \end{equation}
  where 
  \[
    \begin{split}
      &I_1 = \sum_{u\notin \Uu_0^{+1}\cup\,\Uu_w^{+1}}|K_s^{-1}(b,u) T(u,w)|,\\
      &I_w = \sum_{u \in \partial \Uu_w+\overline B(0,1)}|K_s^{-1}(b,u) T(u,w)|,\\
      &I_0 = \sum_{u \in \partial \Uu_0+\overline B(0,1)}|K_s^{-1}(b,u) T(u,w)|.
    \end{split}
  \]
  Let us now estimate each sum separately.

  \textbf{Estimate of $I_1$.} Using the estimate~\eqref{eq:estimate_on_K_s} to bound $K_s^{-1}(b,u)$ (note that we have justified this estimate above already) and the estimate~\eqref{eq:estimate_on_T_one_puncture} to bound $T(u,w)$ we can write
  \begin{multline*}
    I_1\lesssim \int\limits_{\CC\smm (\Uu_0\cup\, \Uu_w)} \frac{1}{|u-b|}\left( \left|\frac{b}{u}\right|^s + \left|\frac{u}{b}\right|^s\right) \cdot \left( \frac{1}{|u|^3} + \frac{1}{|u-w|^3} \right)\frac{1}{|u-w|} \left( \left|\frac{u}{w}\right|^s + \left|\frac{w}{u}\right|^s\right)\,d^2u\\
    \lesssim \int\limits_{\CC\smm(\Uu_0\cup\,\Uu_w)} \left(\frac{|b|^s}{|w|^s} + \frac{|w|^s}{|b|^s} + \frac{|u|^{2s}}{|bw|^s} + \frac{|bw|^s}{|u|^{2s}}\right)\cdot \frac{d^2u}{|(u-b)(u-w)||u|^3} + \\
    +\int\limits_{\CC\smm(\Uu_0\cup\,\Uu_w)} \left(\frac{|b|^s}{|w|^s} + \frac{|w|^s}{|b|^s} + \frac{|u|^{2s}}{|bw|^s} + \frac{|bw|^s}{|u|^{2s}}\right)\cdot\frac{d^2u}{|u-b||u-w|^4}.
  \end{multline*}
  Estimating the integral on the right-hand side is tedious but absolutely straightforward. A direct analysis shows that the main impact comes from integrating over the annuli $\Aa_0 = \{|w|^x\leq |u| \leq 2|w|^x\}$ and $\Aa_w = \{|w|^y\leq |u-w|\leq 2|w|^y\}$. We have the following estimates:
  \[
    \int_{\Aa_0}(\cdots) \lesssim \left(\frac{|w|^s}{|b|^s} + \frac{|bw|^s}{|w|^{2sx}}\right)\cdot \frac{1}{|w|^{x+1}\max(|b|,|w|^x)},
  \]
  \[
    \int_{\Aa_w}(\cdots) \lesssim \left(\frac{|b|^s}{|w|^s} + \frac{|w|^s}{|b|^s}\right)\cdot \frac{1}{|w|^{2y}\max(|b-w|,|w|^y)}
  \]
  which finally implies 
  \begin{multline}
    \label{eq:I1_estimate}
    I_1\lesssim \left(\frac{|w|^s}{|b|^s} + \frac{|bw|^s}{|w|^{2sx}}\right)\cdot \frac{1}{|w|^{x+1}\max(|b|,|w|^x)} +\\+\left(\frac{|b|^s}{|w|^s} + \frac{|w|^s}{|b|^s}\right)\cdot \frac{1}{|w|^{2y}\max(|b-w|,|w|^y)}.
  \end{multline}

  \textbf{Estimate of $I_w$.} Substituting the estimate~\eqref{eq:estimate_on_K_s} for $K_s^{-1}(b,u)$ and the appropriate estimate from~\eqref{eq:estimate_on_T_one_puncture} for $T(u,w)$ we obtain
  \[
    I_w \lesssim \sum_{u\in \partial \Uu_w + \overline B(0,1)}\frac{1}{|u-b|}\left( \left|\frac{b}{u}\right|^s + \left|\frac{u}{b}\right|^s\right)\cdot (|w|^{-2y} + |w|^{y-2}).
  \]
  Note that $\sum_{u\in \partial \Uu_w + \overline B(0,1)}\frac{1}{|u-b|}\asymp \log\left(\frac{|b-w|+|w|^y}{||b-w|-|w|^y|+1}\right)\lesssim \frac{|w|^y\log|w|}{\max(|b-w|,|w|^y)}$ and when $u$ is close to $\Uu_w$ the quantities $|b/u|$ and $|u/b|$ are comparable to $|b/w|$ and $|w/b|$ respectively, thus
  \begin{equation}
    \label{eq:Iw_estimate}
    I_w \lesssim \left(\frac{|b|^s}{|w|^s} + \frac{|w|^s}{|b|^s}\right)\cdot\left( \frac{1}{|w|^{2y}} + \frac{1}{|w|^{2-y}} \right)\frac{|w|^y\log|w|}{\max(|b-w|,|w|^y)}.
  \end{equation}

  \textbf{Estimate of $I_0$.} Substituting the estimate~\eqref{eq:estimate_on_K_s} for $K_s^{-1}(b,u)$ and the appropriate estimate from~\eqref{eq:estimate_on_T_one_puncture} for $T(u,w)$ we obtain
  \[
    I_0\lesssim \sum_{u \in \partial \Uu_0+\overline B(0,1)}\frac{1}{|u-b|}\left( \left|\frac{b}{u}\right|^s + \left|\frac{u}{b}\right|^s\right)\cdot \left( |w|^{-1-(1-x)(1-s)} + |w|^{-(1-s)(1+x)} \right)
  \]
  Similarly as in the previous case we can observe that $\sum_{u \in \partial \Uu_0+\overline B(0,1)}\frac{1}{|u-b|}\lesssim \frac{|w|^x\log|w|}{\max(|b|, |w|^x)}$, so
  \begin{equation}
    \label{eq:I0_estimate}
    I_0\lesssim \left( \frac{|b|^s}{|w|^{sx}} + \frac{|w|^{sx}}{|b|^s}\right)\cdot \left( |w|^{-1-(1-x)(1-s)} + |w|^{-(1-s)(1+x)} \right)\frac{|w|^x\log|w|}{\max(|b|, |w|^x)}.
  \end{equation}

  To conclude the estimates stated in the lemma we now need to plug~\eqref{eq:I1_estimate},~\eqref{eq:Iw_estimate} and~\eqref{eq:I0_estimate} into~\eqref{eq:KT_via_I} and choose the parameters $x,y$. The latter must be chosen so that all the error terms scale as $\lambda^{-1-\beta}$ for some $\beta>0$ when we rescale the graph by $\lambda$. This can be achieved by choosing $x \geq 2s$ and $y>\tfrac12$. Note that the resulting exponent $\beta$ will tend to 0 as $s$ tends to $\tfrac12$ because of the last estimate~\eqref{eq:I0_estimate}. This reflects the fact that the error term in the expansion of $[\bar b^{-s}]$ becomes of the same order as the leading term (see Lemma~\ref{lemma:fs}). We will not investigate it further as we will only use the result for the particular value $s = \tfrac14$ in our what follows. Finally, to obtain the proof for a general $\delta$ we note that to change the scale we just need to replace $K_s^{-1}(b,w)$ with $\delta^{-1}K_s(\delta^{-1}b,\delta^{-1}w)$.

\end{proof}

Note that the error term in the asymptotics~\eqref{eq:asympt_of_K_s_b>2w} blows up when $b$ or $w$ approaches the origin. Indeed, in this regime $b^s$ and $w^s$ in the main term in the asymptotics must be replaced by their discrete analogs defined in Lemma~\ref{lemma:fs} to match the left-hand side. As we will demonstrate in the following corollary, this indeed improves the error term. We will prove the corollary only when $w$ is small, using the relation~\eqref{eq:Ksinv_at_zero} that we borrow from~\cite{DubedatFamiliesOfCR}. The same relation can be proven when $b$ is incident to the origin using the same technique; we will not do it here for the sake of shortness.

\begin{cor}
  \label{cor:Ksinv_when_w_at_sing}
  Let $s\in (0,1/2)$ be fixed and $K_s^{-1}$ be as in Lemma~\ref{lemma:Ksinv_a_priori_bound}. There exists a $\beta>0$ such that 
  \begin{equation}
    \label{eq:big_b_asymptotics_revisited}
    K_s^{-1}(b,w) = \frac{b^s[w^{-s}]}{2\pi i (b-w)} - \frac{\eta_b^2\bar{b}^{-s}[\bar{w}^s]}{2\pi i (\bar{b} - \bar{w})} + O\left(\frac{\delta^\beta}{|b|^{1+\beta}}\right),\qquad |b|\geq 2|w|,
  \end{equation}
  where $[w^{-s}], [\bar{w}^s]$ are the functions defined in Lemma~\ref{lemma:fs} (applied to discrete holomorphic functions on white vertices).
\end{cor}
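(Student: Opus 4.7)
The plan is to deduce the corollary from the Lemma~\ref{lemma:existence_of_Ksinv} asymptotic~\eqref{eq:asympt_of_K_s_b>2w}, combined with the expansions of the multivalued discrete (anti)holomorphic functions $[w^{-s}]$ and $[\bar{w}^s]$ provided by Lemma~\ref{lemma:fs}. Since $G$ is a Temperley isoradial graph, both the black and the white sublattices carry the standard isoradial discrete complex analysis structure, and Lemma~\ref{lemma:fs} may be applied with the roles of black and white vertices exchanged, producing the functions $[w^{-s}]$, $[\bar{w}^s]$ on white vertices with asymptotic expansions
\[
[w^{-s}] = w^{-s} + O(\delta^{1-2s}|w|^{s-1}) + O(\delta|w|^{-s-1}),\qquad [\bar{w}^s] = \eta_w^2\bar{w}^s + O(\delta^{1+2s}|w|^{-s-1}) + O(\delta|w|^{s-1}).
\]

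First, I would observe that in the regime $|b|\geq 2|w|$ one has $|b-w|\geq \tfrac12|b|\geq |w|$, and since $|w|$ is at least of order $\delta$ for any white vertex of $G$, this ensures $|b-w|\geq 2|w|^{3/4}\delta^{1/4}$, so~\eqref{eq:asympt_of_K_s_b>2w} applies:
\[
K_s^{-1}(b,w) = \frac{b^s w^{-s}}{2\pi i(b-w)} - \frac{\eta_b^2\eta_w^2\bar{w}^s\bar{b}^{-s}}{2\pi i(\bar{b}-\bar{w})} + \mathrm{err}_1,
\]
with $\mathrm{err}_1$ the error appearing in~\eqref{eq:asympt_of_K_s_b>2w}. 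Next, substituting the above expansions of $[w^{-s}]$ and $[\bar{w}^s]$ into the claimed right-hand side
\[
\frac{b^s[w^{-s}]}{2\pi i(b-w)} - \frac{\eta_b^2\bar{b}^{-s}[\bar{w}^s]}{2\pi i(\bar{b}-\bar{w})}
\]
recovers the same leading expression modulo additional contributions of size $\delta^{1-2s}|b|^{s-1}|w|^{s-1}$, $\delta|b|^{s-1}|w|^{-s-1}$, and their anti-holomorphic counterparts, where I have used $|b-w|\sim |b|$.

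The remaining task is to verify that the sum of $\mathrm{err}_1$ and these substitution errors is dominated by $C\delta^{1/4-s/2}|b|^{-5/4+s/2}$ uniformly over the allowed range of $w$. This reduces to comparing exponents in the three variables $\delta$, $|w|$, and $|b|$ under the constraints $c\delta\leq |w|\leq |b|/2$; each individual error is a monomial in these three variables, so the worst case is attained at one of the endpoints $|w|\sim \delta$ or $|w|\sim |b|$, and the check is elementary. The specific exponents $\tfrac14-\tfrac s2$ and $\tfrac54-\tfrac s2$ in the claimed bound are exactly what one obtains by interpolating between the competing error sources in~\eqref{eq:asympt_of_K_s_b>2w}, matched at the critical scale $|b-w|\sim |w|^{3/4}\delta^{1/4}$. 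The main obstacle is thus purely bookkeeping: keeping track of the several distinct error contributions and confirming that each of them satisfies the common final bound.
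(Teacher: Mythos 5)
The proposal breaks down in exactly the regime the corollary is designed to handle: when $|w|$ is within a few lattice spacings of the puncture. The idea of plugging the expansion of $[w^{-s}]$ from Lemma~\ref{lemma:fs} into the claimed formula and treating the correction as an error term cannot succeed there. The correction term in that expansion is of order $\delta^{1-2s}|\bar w|^{s-1}$, which at $|w|\asymp\delta$ is $\asymp\delta^{-s}$ — the \emph{same} order as the leading term $w^{-s}$ — so after multiplying by $b^s/(2\pi i(b-w))\asymp |b|^{s-1}$ it contributes something of size $|b|^{s-1}\delta^{-s}$, which is also the size of the entire main term. Likewise the error in~\eqref{eq:asympt_of_K_s_b>2w} at $|w|\asymp\delta$ is $\asymp\delta^{-s}|b|^{s-1}$. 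Both are larger than the claimed bound $\delta^{1/4-s/2}/|b|^{5/4-s/2}$ by a factor $(|b|/\delta)^{1/4+s/2}\to\infty$. (Also the hypothesis $|b-w|\geq 2|w|^{3/4}\delta^{1/4}$ of~\eqref{eq:asympt_of_K_s_b>2w} can itself fail when $|w|\lesssim\delta$, but the more serious problem is the one above.) So the "bookkeeping" at the endpoint $|w|\sim\delta$ does not close.

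The corollary is a genuine refinement over Lemma~\ref{lemma:existence_of_Ksinv}, sharp precisely for white vertices near the singularity, and cannot be derived by term-by-term substitution. The paper's proof instead splits the range: for $|w|>\sqrt{\delta|b|}$ the Lemma already suffices. For smaller $|w|$ it fixes $b=b_0$, forms the difference
\[
F(w)\;=\;K_s^{-1}(b_0,w)-\frac{[w^{-s}]}{2\pi i\, b_0^{1-s}}+\frac{[\bar w^{s}]}{2\pi i\,\bar b_0^{1+s}},
\]
which is discrete holomorphic in $w$ in the disc $|w|\leq\sqrt{\delta|b_0|}$ (as $K_{-s}$ annihilates $[w^{-s}]$ and $[\bar w^{s}]$ by Lemma~\ref{lemma:fs}), then establishes smallness of $F$ on the boundary circle via the Lemma and at the puncture via Dub\'edat's near-diagonal asymptotic~\cite[Lemma~14]{DubedatFamiliesOfCR}, and finally propagates the bound inward by a maximum-type principle using the martingale property of discrete holomorphic functions on T-graphs (Lemma~\ref{lemma:t-holom_are_matringales}). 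That harmonicity-and-boundary-values step is the essential ingredient missing from your argument; without it one cannot beat the error already present in~\eqref{eq:asympt_of_K_s_b>2w} for small $|w|$.
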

\begin{proof}
  Choosing $\gamma>0$ small enough we can ensure that~\eqref{eq:big_b_asymptotics_revisited} holds for $\delta^\gamma|b|^{1-\gamma} < |w|\leq |b|/2$ for some $\beta>0$ small enough depending on $s$ and $\gamma$. When $|w|\leq \delta^\gamma|b|^{1-\gamma}$ consider the function
  \[
    F(w) = K_s^{-1}(b,w) - \frac{[w^{-s}]}{2\pi i b^{1-s}} + \frac{[\bar{w}^s]}{2\pi i \bar{b}^{1+s}}.
  \]
  Using~\eqref{eq:asympt_of_K_s_b>2w} to estimate $K_s^{-1}$ and Lemma~\ref{lemma:fs} to estimate the other terms we conclude that
  \begin{equation}
    \label{eq:existence_of_Ksinv2}
    |F(w)| = O\left(\frac{\delta^\beta}{|b|^{1+\beta}}\right)
  \end{equation}
  when $|w| \asymp \delta^\gamma|b|^{1-\gamma}$ for some $\beta>0$ small enough, again provided that $\gamma>0$ is small enough. Moreover, the same estimate~\eqref{eq:existence_of_Ksinv2} can be derived when $w$ is incident to the origin by using~\eqref{eq:Ksinv_at_zero} to estimate $K_s^{-1}(b,w)$ and the exact values of $[w^s]$ and $[\bar w^{-s}]$ given in Lemma~\ref{lemma:fs}.

  Finally, notice that $F$ is a discrete holomorphic multivalued function in the disc $|w|\leq \delta^\gamma|b|^{1-\gamma}$ by the construction. We claim that this implies a maximum principle asserting that the maximal value of $|F|$ in the disc can be bounded via its values on the boundary of the disc or at a vertex incident to the branching point (that is, to the origin). To this end, notice that the function $F$ can be treated as a single-valued discrete holomorphic function on every simply connected subset of the disc punctured $|w|\leq \delta^\gamma|b|^{1-\gamma}$ at the origin. Its two projections $F(w) + \eta_w^2 \overline{F(w)}$ and $i(F(w) - \eta_w^2 \overline{F(w)})$ are also discrete holomorphic and, moreover, are t-holomorphic as defined in Definition~\ref{defin:discrete_holom} (here we switch the roles of black and white vertices which does not create any significant difference). For these functions we can define their `true complex values' (say, $F^\bullet$ and $F^{\bullet,\ast}$) as in Lemma~\ref{lemma:function_on_splitting} and conclude the maximum principle for those values from the martingale property as in Lemma~\ref{lemma:t-holom_are_matringales}. Using that the simply connected set was chosen arbitrary we can conclude that the values of $|F|$ in the disc are bounded by the values of $F^\bullet$ and $F^{\bullet,\ast}$ on black faces of $\Tt$ (that is, black vertices of $G$; remember that the initial $F$ is defined on white vertices of $G$ aka white faces of $\Tt$, so $F^\bullet,F^{\bullet,\ast}$ are defined on black faces of $\Tt$) incident to the origin and to the boundary of $|w|\leq \delta^\gamma|b|^{1-\gamma}$. But these values, in their turn, can be reconstructed from the values of $F$ on the white vertices incident to the origin and to the boundary of the disc, which implies the desired bound. We finally conclude that~\eqref{eq:existence_of_Ksinv2} holds everywhere in the disc $|w|\leq \delta^\gamma|b|^{1-\gamma}$.

\end{proof}

\subsection{Temperleyan isoradial graph on an infinite cone}
\label{subsec:graph_on_a_cone}

We now construct the discrete Cauchy kernel in an infinite cone. We need to adapt our notation. Let $G$ be a Temperleyan isoradial full-plane graph, $G^\ast$ be embedded by circumcenters of faces of $G$. Recall that the latter embedding is a t-embedding of $G^\ast$, let an arbitrary origami square root function $\eta$ on this t-embedding be fixed. Assume that $0\in \CC$ is a vertex of $G^\ast$. Denote by $\Cc$ the plane $\CC$ equipped with the metric $|d(z^2)|^2$, that is, $\Cc$ is an infinite cone with cone angle $4\pi$. Let $\Tt: \Cc\to \CC$ be the mapping given by 
\[
  \Tt(z) = z^2.
\]
Put $G_\Cc = \Tt^{-1}(G),\ G_\Cc^\ast = \Tt^{-1}(G^\ast)$ endowed with the natural graph structure. Note tat $G_\Cc$ and $G_\Cc^\ast$ are dual to each other because $0$ is a vertex of $G^\ast$. 

Recall that the pullback of the Kasteleyn weights from $G$ do not give Kasteleyn weights for $G_\Cc$ because the Kasteleyn condition would not hold around the face containing conical singularity in this case.
Let $\gamma_0^\Cc$ denote an arbitrary simple path connecting $0\in \Cc$ with infinity and such that $\Tt(\gamma_0^\Cc) = \gamma_0$ is the path chosen in the previous section. Let $K_G(w,b)$ denote the usual isoradial Kasteleyn weight of an edge $bw$ of $G$ defined as in~\eqref{eq:def_of_KTt}. Define Kasteleyn weights of $G_\Cc$ by
\begin{equation}
  \label{eq:def_of_K_Tts}
  K_{\Tt,1/2}(w,b) = \begin{cases}
    K_G(\Tt(w),\Tt(b)),\quad bw\text{ does not cross }\gamma_0^\Cc,\\
    -K_G(\Tt(w),\Tt(b)),\quad \text{else}.
  \end{cases}
\end{equation}
Let $\dist(b,w)$ denote the distance between $b,w\in \Cc$ measured in the inner metric of $\Cc$. For example, $\dist(b, 0) = |b|^2$.

\begin{lemma}
  \label{lemma:kernel_of_G4pi}
  There is a unique function $K_{\Tt,1/2}^{-1}(b,w)$ defined on $B(G_\Cc)\times W(G_\Cc)$ such that
  \begin{enumerate}
    \item $K_{\Tt,1/2}^{-1}$ is left and right inverse of $K_{\Tt,1/2}$;

    \item We have
      \begin{equation}
        \label{eq:estimate_on_K_in_cone}
        K_{\Tt,1/2}^{-1}(b,w) = O\left( \frac{1}{|b-w|(|b| + |w|)}\left( \left|\frac{b}{w}\right|^{1/2} + \left|\frac{w}{b}\right|^{1/2} \right) \right)
      \end{equation}
    \item There exists an absolute constant $\beta>0$ such that
      \begin{multline}
        \label{eq:asympt_of_K_Tt_b>2w}
        K_{\Tt,1/2}^{-1}(b,w) = \frac{1}{4}\left[ \frac{1}{\pi i (b-w)\sqrt{bw}} - \frac{(\eta_{\Tt(b)}\eta_{\Tt(w)})^2 }{\pi i\overline{(b-w)}\sqrt{\bar{b}\bar{w}}} \right] + \\
        + O\left( \frac{1}{\sqrt{|bw|}}\left( \frac{1}{|b-w|} + \frac{|w|}{|b|} + 1 \right) \frac{\delta^{2\beta}}{|w|^{4\beta}} \right) ,\qquad \dist(b,w)> |w|^{2-2\beta}\delta^\beta,
      \end{multline}
      \begin{multline}
        \label{eq:asympt_of_K_Tt_b-w<w34}
        K_{\Tt,1/2}^{-1}(b,w) = e^{\pi i A(b,w)} K_G^{-1}(\Tt(b),\Tt(w)) +\\
        + O\left( \frac{\delta^\beta}{|w|^{2+2\beta}} \right),\qquad \dist(b,w) \leq |w|^{2-2\beta}\delta^\beta
      \end{multline}
      where $A(b,w) = \indic[b\text{ on the left from }\gamma_0^\Cc] - \indic[w\text{ on the left from }\gamma_0^\Cc]$.
    \item Moreover, there exists an absolute constant $\beta>0$ such that
      \begin{multline}
        \label{eq:asymp_w_near_sing}
        K_{\Tt,1/2}^{-1}(b,w) = \\
        = \frac{1}{4}\left[ \frac{b^{1/2}[\Tt(w)^{-1/4}] + b^{-1/2}[\Tt(w)^{1/4}]}{\pi i (b-w)(b+w)} - \frac{\eta_{\Tt(b)}^2(\bar{b}^{-1/2}[\Tt(\bar{w})^{1/4}] + \bar{b}^{1/2}[\Tt(\bar{w})^{-1/4}])}{\pi i \overline{(b-w)(b+w)}} \right] +\\
        + O\left( \frac{\delta^\beta}{|b|^{2+2\beta}} \right),\qquad |w|^2\leq |b|^{2-2\beta}\delta^\beta
      \end{multline}
      where $[\bar{z}^s],[z^s]$ are as in Lemma~\ref{lemma:fs}.
  \end{enumerate}
\end{lemma}
\begin{proof}
  Let $K_{1/4}^{-1}$ be defined by Lemma~\ref{lemma:existence_of_Ksinv} applied to $G$, interpreted as a multivalued function in both variables. Fix a white vertex $w$ and define
  \[
    K_{\Tt,1/2}^{-1}(b,w) = \frac{1}{2}\left(K^{-1}_{1/4}(\Tt(b),\Tt(w)) + (\eta_{\Tt(b)}\eta_{\Tt(w)})^2\overline{K^{-1}_{1/4}(\Tt(b),\Tt(w))} \right)
  \]
  For the black vertices $b$ neighborhing $w$. Note that if neither of edges incident to $w$ cross $\gamma_0^\Cc$ we have $(K_{\Tt,1/2}K_{\Tt,1/2}^{-1})(w,w) = 1$. Moreover, extending $K_{\Tt,1/2}^{-1}(b,w)$ we obtain a function multivalued with monodromy $-1$ in both variables and discrete holomorphic everywhere aside from the diagonal where it has the discrete residue 1. Taking a branch of this function in $\CC\smm\gamma_0^\Cc$ we obtain the desired inverse operator $K_{\Tt,1/2}^{-1}$.

  The asymptotic relations~\eqref{eq:estimate_on_K_in_cone}--\eqref{eq:asympt_of_K_Tt_b-w<w34} follow from Lemma~\ref{lemma:existence_of_Ksinv}, and the relation~\eqref{eq:asymp_w_near_sing} follows from Corollary~\ref{cor:Ksinv_when_w_at_sing}. For the uniqueness, note that whenever we have $K_{\pi^{-1}(G),1/2}^{-1}$ as in the lemma, we get
  \[
    K_{1/4}^{-1}(\Tt(b),\Tt(w)) = K_{\Tt,1/2}^{-1}(b,w) - i K_{\Tt,1/2}^{-1}(-b,w)
  \]
  to be an inverse for $K_{1/4}$ satisfying the conditions from Lemma~\ref{lemma:existence_of_Ksinv} which is unique.
\end{proof}

\section{Perturbed Szeg\"o kernel on a Riemann surface}
\label{sec:The Riemann surface: continuous setting}

In this section we introduce the Szeg\"o kernel on $\Sigma$ and a family of its perturbations $\Dd_\alpha$ that will be used to describe the limit of perturbed Kasteleyn operators later. We begin with a short informal discussion. Assume for a moment that the holonomy of $ds^2$ is trivial. Assume that we have a sequence of adapted graphs on $\Sigma$ with the mesh size tending to zero, and assume that the gauge form $\alpha_G$ is zero for each of these graphs (like in the case of a pillow surface, see Example~\ref{intro_example:pillow_surface}). Then `true complex values' (cf. Definition~\ref{defin:discrete_holom}) of bounded discrete holomorphic functions (i.e. those from the kernel of the Kasteleyn operator defined in Section~\ref{subsec:intro_Kasteleyn_operator}) will be approximating values of holomorphic functions on $\Sigma$ having square root singularities $\frac{1}{\sqrt{z}}(a + O(\sqrt{z}))$ at conical singularities of the metric. To define a space of such functions we must specify along which loops they pick up the multiplicative monodromy $-1$. This is equivalent to choosing a cohomology class in $H^1(\Sigma\smm\{ p_1,\dots,p_{2g-2} ,\ZZ/2\ZZ\})$ which has value $-1$ on any small circle around any $p_i$. When $\Sigma$ has a non-trivial topology, this can be done in many different ways. This ambiguity is fixed by choosing simple paths $\gamma_1,\dots, \gamma_{g-1}$ are used to define the Kasteleyn operator Section~\ref{subsec:intro_Kasteleyn_operator}. The cohomology class is then the Poincar\'e dual to the (relative to $p_1,\dots,p_{2g-2}$) homology class of the union of these paths. In other words, having these path we can require our functions to have single-valued branches in $\Sigma\smm\cup_{j = 1}^{g-1}\gamma_j$.

The locally flat metric $ds^2$ and the aforementioned cohomology class patched together determine a spin structure on $\Sigma$. Multivalued functions that we considered appear to be in natural correspondence with smooth (also at conical singularities) sections of the corresponding spin line bundle. We address the reader to Section~\ref{subsec:spin_line_bundles} for more detailed discussion on topological aspects of this correspondence. The continuous analog of the Kasteleyn operator is the Cauchy--Riemann (Dirac, if being more accurate) operator acting on these sections intertwined with the isomorphism between sections and functions with singularities. Presence of a non-trivial holonomy of $ds^2$ or a non-trivial gauge form $\alpha_G$ results in a change of the complex structure in the spin line bundle, which corresponds to a perturbation of the corresponding operator.

These heuristics lead us to expect that the limit of the inverse Kasteleyn matrix should converge to the Szeg\"o kernel with the spin structure we introduced. Classically~\cite{Fay}, a Szeg\"o kernel can be expressed via theta functions and the prime form. Below we use this expression ad hoc, and list some basic asymptotic properties following from it. The proofs of all the statements from this section are given in Appendix.

\subsection{The spin structure associated with \texorpdfstring{$\omega_0$}{omega0} and \texorpdfstring{$\gamma_1,\dots,\gamma_{g-1}$}{gamma1...gammag-1} and the corresponding Szeg\"o kernel}
\label{subsec:Szego_kernel}

Assume that we are in the setup introduces in Section~\ref{subsec:intro_flat_metric}. Let $\gamma_1,\dots, \gamma_{g-1}$ be simple non-intersecting paths connecting $p_1,\dots, p_{2g-2}$ pairwise, and assume that $\sigma(\gamma_i) = \gamma_i$ for each $i = 1,\dots,g-1,$ if the involution $\sigma$ is present. 
Recall that we have $ds^2 = |\omega_0|^2$, where $\omega_0$ is the smooth $(1,0)$-form defined in Proposition~\ref{prop:existence_of_metric_on_Sigma}. Recall that given a smooth path $\gamma:[0,1]\to \Sigma\smm\{ p_1,\dots, p_{2g-2} \}$ we define its winding with respect to $\omega_0$ by
\[
  \wind(\gamma, \omega_0) = \Im\int_0^1 \frac{d}{dt} \log \omega_0(\gamma'(t))\,dt.
\]
Obviously, the winding does not depend on the parametrization. For each smooth oriented simple loop $\gamma$ on $\Sigma\smm\{ p_1,\dots,p_{2g-2} \}$ we define
\begin{equation}
  \label{eq:def_of_q0}
  q_0(\gamma) = (2\pi)^{-1}\wind(\gamma,\omega_0) + \gamma\cdot (\gamma_1+\ldots+\gamma_{g-1}) + 1\mod 2,
\end{equation}
where $\cdot$ denotes the algebraic intersection number. It is easy to show that $q_0$ is correctly defined and depends only on the homology class in $H_1(\Sigma, \ZZ/2\ZZ)$ which $\gamma$ represents.
\begin{lemma}
  \label{lemma:q0_is_quadratic_form}
  The function $q_0: H_1(\Sigma, \ZZ/2\ZZ)\to \ZZ/2\ZZ$ is a quadratic form with respect to the bilinear form given by the intersection product.
\end{lemma}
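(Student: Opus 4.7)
The plan is to establish two properties of $q_0$: well-definedness on $H_1(\Sigma,\ZZ/2\ZZ)$, and the quadratic refinement identity $q_0(\gamma+\gamma')=q_0(\gamma)+q_0(\gamma')+\gamma\cdot\gamma' \mod 2$. The first step is to extend the formula additively to smoothly immersed multi-loops in $\Sigma\smm\{p_1,\ldots,p_{2g-2}\}$, replacing the constant $+1$ by $+n(\gamma)$, the number of components; under this convention $q_0$ is manifestly invariant under regular homotopy in $\Sigma\smm\{p_j\}$, since both the winding and the intersection numbers are regular-homotopy invariants.

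For well-definedness on homology, it suffices to check that $q_0(\partial M)\equiv 0 \mod 2$ for every smoothly embedded oriented sub-surface $M\subset\Sigma$ with boundary avoiding $\{p_j\}$, and that $q_0$ vanishes on small loops around each puncture. For the sub-surface case, Gauss--Bonnet for the locally flat cone metric (each conical singularity inside $M$ contributes angle deficit $2\pi-4\pi=-2\pi$) yields
\[
  \wind(\partial M,\omega_0) = 2\pi\chi(M)+2\pi k,
\]
where $k=\#(M\cap\{p_j\})$. Next, $\partial M\cdot\gamma_i\equiv 1\mod 2$ iff exactly one endpoint of $\gamma_i$ lies in $M$; since the $\gamma_i$ form a perfect matching of the $2g-2$ punctures, a simple counting argument (total in-endpoints $= 2\cdot(\text{fully-in edges}) + (\text{mixed edges}) = k$) shows $\partial M\cdot(\gamma_1+\ldots+\gamma_{g-1})\equiv k\mod 2$. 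Combined with $\chi(M)+n(\partial M)=2-2h(M)\equiv 0\mod 2$, this gives $q_0(\partial M)\equiv\chi(M)+k+k+n(\partial M)\equiv 0\mod 2$. For a small loop $C$ around $p_j$, working in the local coordinate where $\omega_0=2z\,dz$, the tangent winds by $4\pi$, $C$ crosses the unique $\gamma_i$ at $p_j$ once, and $n(C)=1$, so $q_0(C)=2+1+1\equiv 0$. Hence $q_0$ descends to $H_1(\Sigma,\ZZ/2\ZZ)$.

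For the refinement identity, I will represent $\gamma,\gamma'$ by smoothly immersed loops meeting transversely in $n$ points with $n\equiv\gamma\cdot\gamma'\mod 2$. At each crossing I perform the orientation-preserving smoothing, yielding a smooth multi-loop $\tilde\gamma$ of class $[\gamma]+[\gamma']\in H_1(\Sigma,\ZZ/2\ZZ)$. The key local calculation is that at each crossing with tangents $e^{i\alpha_1}$ (along $\gamma$) and $e^{i\alpha_2}$ (along $\gamma'$), the two corners created by the smoothing have signed winding contributions $+(\alpha_2-\alpha_1)$ and $-(\alpha_2-\alpha_1)$, which cancel exactly. Hence $\wind(\tilde\gamma,\omega_0)=\wind(\gamma,\omega_0)+\wind(\gamma',\omega_0)$. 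The intersection number with $\gamma_1+\ldots+\gamma_{g-1}$ is also additive, since smoothing is local and can be arranged to be disjoint from the $\gamma_i$'s. Finally, each smoothing changes the component count by $\pm 1$, giving $n(\tilde\gamma)\equiv n(\gamma)+n(\gamma')+n\mod 2$. Summing the three identities yields $q_0(\tilde\gamma)\equiv q_0(\gamma)+q_0(\gamma')+n\equiv q_0(\gamma)+q_0(\gamma')+\gamma\cdot\gamma'\mod 2$, as required.

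The main obstacle will be the winding-cancellation claim at each orientation-preserving smoothing, which requires carefully fixing conventions for the orientation of the smoothed corners and verifying the sign pattern in a local model; once done in a normalized case (say tangents along $(1,0)$ and $(0,1)$), it extends to arbitrary crossings by a planar rotation. Everything else is a combination of Gauss--Bonnet with conical defects, the combinatorics of the perfect matching on the punctures, and standard surface surgery.
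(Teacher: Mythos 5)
Your proposal is correct but takes a genuinely different route from the paper. The paper's proof is structural: it assembles a class $\xi_0\in H^1(UT\Sigma,\ZZ/2\ZZ)$ from the winding of $\omega_0$ and the intersection with the cuts $\gamma_1,\dots,\gamma_{g-1}$, checks that $\xi_0$ is a spin structure, identifies $q_0(\gamma)=\xi_0(\tilde\gamma)+1$ with the tangent-framing lift, and then simply invokes Johnson's theorem (Theorem~\ref{thmas:Johnson_spin_structures}) to get the quadratic refinement for free. You instead reprove the content of Johnson's theorem from scratch in this special case: the Gauss--Bonnet computation with conical defects and the matching combinatorics on the punctures replaces the paper's verification that $\xi_0$ descends to $H^1(UT\Sigma,\ZZ/2\ZZ)$, and your crossing-smoothing computation (the cancellation of the two corner contributions $\pm(\alpha_2-\alpha_1)$, together with the parity change of the component count) replaces the appeal to Johnson's theorem. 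Both your key calculations check out — in particular, note that $\wind(\partial M,\omega_0)$ really does equal the geodesic-curvature integral because $\omega_0$ differs from a parallel frame only by $\exp(-2i\int\Im\alpha_0)$, and $\Im\alpha_0$ is closed so $\int_{\partial M}\Im\alpha_0=0$. The paper's route is shorter and modular; yours is more self-contained and would be preferable if one did not want to cite Johnson. The one place where you should tighten the writing is the well-definedness step: checking $q_0(\partial M)\equiv 0$ for \emph{embedded} subsurfaces and $q_0(C_j)\equiv 0$ around punctures is not, by itself, enough, since a null-homologous immersed multi-loop need not bound an embedded surface; you need to first invoke your crossing-smoothing step to reduce to the embedded case (smoothing all crossings preserves the class and accounts for the intersection correction, which vanishes since $[\gamma]\cdot[\gamma]=0$). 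The ingredients are all in your writeup, but the logical order should put the smoothing identity before, not after, the well-definedness conclusion.
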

\begin{proof}
  See Section~\ref{subsec:asymptotics_Ddalpha}.
\end{proof}

Recall that $B_0,\dots, B_{n-1}$ denote the boundary components of $\Sigma_0$. We can complete the homology classes of $B_1,\dots, B_{n-1}$ (oriented according to the orientation of $\Sigma_0$) to a simplicial basis $A_1,\dots, A_g, B_1,\dots, B_g$ in the homology group $H_1(\Sigma, \ZZ)$ in such a way that $\sigma(B_i) = B_i, \sigma(A_i) = -A_i, i = 1,\dots, g$. Let $\omega_1,\dots, \omega_g$ be the normalized Abelian differentials of the first kind, and $\Omega$ be the matrix of B-periods of $\Sigma$, see Section~\ref{subsec:simplicial_basis} for details.

\begin{cor}
  \label{cor:wind_condition}
  There exists a choice of $\alpha_0$ and $\omega_0$ such that~\eqref{eq:wind_condition} is satisfied.
\end{cor}
\begin{proof}
  The normalization~\eqref{eq:wind_condition} is equivalent to $q_0([\gamma]) = 0$ for each simple loop $\gamma$ such that $\sigma(\gamma) = \gamma$, where $[\gamma]\in H_1(\Sigma, \ZZ/2\ZZ)$ is the homology class represented by $\gamma$. This in fact is equivalent to $q_0(A_j) = 0$ for each $j = 1,\dots, n-1$. Indeed, if $\gamma$ is such as above, then $[\gamma]$ can be expressed as a linear combination of $A_1,\dots, A_{n-1}$ and a homology class of the form $[X] + \sigma([X])$ where $X$ is a collection of loops in $\Sigma_0$ disjoint from $A_1,\dots, A_{n-1}$. The relation $q_0(A_j) = 0$ ensures that $q_0$ vanishes on such linear combination.

  It remains to show that the relation $q_0(A_j) = 0$ can be satisfied for each $j = 1,\dots, n-1$ if $\alpha_0$ is chosen properly. Recall that $\alpha_0$ can be replaced with an arbitrary anti-holomorphic $\alpha_1$ such that $\int_\gamma\Im(\alpha_1-\alpha_0)\in \pi\ZZ$ for each loop $\gamma$, that is, the cohomology class represented by $\pi^{-1}\Im(\alpha_1 - \alpha_0)$ is integer. Since we want to keep the symmetry $\sigma^\ast\alpha_0 = \bar\alpha_0$ (that is, $\sigma^\ast\Im\alpha_0 = -\Im\alpha_0$), we must also require this cohomology class to be anti-symmetric with respect to $\sigma^\ast$. In fact, the cohomology class represented by $\pi^{-1}\Im(\alpha_1 - \alpha_0)$ can be an arbitrary class satisfying these restrictions because $\alpha\mapsto\Im\alpha$ is a bijection between anti-holomorphic $(0,1)$-forms and real-valued harmonic differentials on $\Sigma$.

  Once we replace $\alpha_0$ with $\alpha_1$, the differential $\omega_0$ must be replaced by the differential $\omega_1(p) = \exp(2i\int^p \Im (\alpha_1 - \alpha_0))\omega_0(p)$, and so $q_0$ is replaced by $q_0$ plus the cohomology class represented by $\pi^{-1}\Im(\alpha_1 - \alpha_0)$ (modulo~$2\ZZ$). It remains to choose the latter cohomology class so that $q_0(A_j) = 0$ for each $j = 1,\dots, n-1$ which is possible due to the discussion above.
\end{proof}

Recall that for each $p,q\in \Sigma$ the Abel map applied to the divisor $p-q$ is, by definition,
\[
  \Aa(p-q) = (\int_q^p\omega_1,\dots, \int_q^p\omega_g)\ \mod \ZZ^g + \ZZ^g\Omega,
\]
see~\eqref{eq:def_of_Abel_map}. When $p$ and $q$ are close, we will be choosing the path of the integration to be the geodesic between $p$ and $q$, to specify the representative in the equivalence class above. Given $a,b\in \RR^g$, let $\theta\chr{a}{b}(z,\Omega)$ be the theta function with characteristics $\chr{a}{b}$ defined as
\[
  \theta\chr{a}{b}(z,\Omega) = \sum_{m\in \ZZ^g}\exp\Bigl( \pi i (m+a)^t\cdot \Omega (m+a) + 2\pi i (z-b)^t(m+a) \Bigr)
\]
(cf.~\eqref{eq:def_of_theta} and Remark~\ref{rem:miunus_in_def_of_theta} for the unusual choice of the signs of $a$ and $b$ in the definition). Recall that the prime form $\pf(p,q)$ is defined as
\[
  \pf(p,q) = \frac{\theta\chr{a^-}{b^-}(\Aa(p-q),\Omega)}{\sqrt{\omega_-}(p)\sqrt{\omega_-}(q)},
\]
where $[a^-,b^-]$ is some odd theta characteristics and $\omega_-$ is the holomorphic $(1,0)$-form given by the square of a holomorphic section of the corresponding spin line bundle normalized properly, see Section~\ref{subsec:the_prime_form} for details. Write
\begin{equation}
  \label{eq:def_of_varsigma}
  \omega_-(p) = \varsigma(p)\cdot \omega_0(p),
\end{equation}
where $\varsigma$ is a smooth function on $\Sigma\smm\{ p_1,\dots, p_{2g-2} \}$, and introduce the notation
\[
  E(p,q)\sqrt{\omega_0(p)}\sqrt{\omega_0(q)}= \frac{\theta\chr{a^-}{b^-}(\Aa(p-q),\Omega)}{\sqrt{\varsigma(p)}\sqrt{\varsigma(q)}}.
\]

Assume that an anti-holomorphic $(0,1)$-form $\alpha_h$ is given. We associate the vectors $a(\alpha_h),b(\alpha_h)\in \RR^g$ with $\alpha_h$ as follows
\begin{equation}
  \label{eq:def_of_a_b}
  a(\alpha_h)_j = \pi^{-1}\int_{A_j}\Im \alpha_h,\quad b(\alpha_h)_j = \pi^{-1}\int_{B_j}\Im\alpha_h,\qquad j = 1,\dots,g,
\end{equation}
where $A_j,B_j$'s represent the simplicial basis chosen in Section~\ref{subsec:simplicial_basis}. Let also $[a_0, b_0]$ be the characteristics of $q_0$, i.e. $a_0,b_0\in \{ 0,1/2 \}^g$ and we have
\[
  q_0(A_i) = 2a_i,\qquad q_0(B_i) = 2b_i, \qquad i = 1,\dots, g.
\]
We now set
\begin{equation}
  \label{eq:def_of_theta_alpha}
  \theta[\alpha_h](z) = \theta\chr{a(\alpha_h)+a_0}{b(\alpha_h)+b_0}(z,\Omega).
\end{equation}

\begin{prop}
  \label{prop:def_of_S}
  Let $\alpha$ be a $(0,1)$-form on $\Sigma$ with $\mC^2$ coefficients, and let $\alpha = \dbar \vphi + \alpha_h$ be its Dolbeault decomposition. Assume that $\theta[\alpha_h](0)\neq 0$. Let $U\subset \Sigma\smm\{ p_1,\dots,p_{2g-2} \}$ be a non-empty simply-connected open subset. Let $\sqrt{\varsigma}$ be a branch of the square root in $U$. Consider the function $\Dd_\alpha^{-1}(p,q)$ on $U\times U\smm\mathrm{Diagonal}$ defined by
  \begin{equation}
    \label{eq:def_of_S}
      \Dd_\alpha^{-1}(p,q) = \frac{\theta[\alpha_h](\Aa(p-q))}{\pi i \theta[\alpha_h](0)\cdot \pf(p,q) \sqrt{\omega_0(p)}\sqrt{\omega_0(q)}} \cdot \exp\left(\vphi(q) - \vphi(p)-2i\int\limits_q^p\Im \alpha_h\right). 
  \end{equation}
  where the integration path between $p$ and $q$ is taken to lie inside $U$. Then $\Dd_\alpha^{-1}(p,q)$ satisfies the following equations when $p\neq q$:
    \begin{equation}
      \label{eq:equation_on_p_of_Salpha}
      (\dbar_p + \frac{\alpha_0(p)}{2} + \alpha(p))\, \Dd_\alpha^{-1}(p,q) = 0,
    \end{equation}
    \begin{equation}
      \label{eq:equation_on_q_of_Salpha}
      (\dbar_q + \frac{\alpha_0(q)}{2} - \alpha(q))\, \Dd_\alpha^{-1}(p,q) = 0.
    \end{equation}
    Moreover, the function $\Dd^{-1}_\alpha(p,q)$ admits a unique multivalued extension to the space 
    \[
      \Bigl((\Sigma\smm\{ p_1,\dots,p_{2g-2} \})\times (\Sigma\smm\{ p_1,\dots,p_{2g-2} \})\Bigr)\smm\mathrm{Diagonal}
    \]
    satisfying the equations~\eqref{eq:equation_on_q_of_Salpha},~\eqref{eq:equation_on_p_of_Salpha} everywhere on this space and having the multiplicative monodromy $(-1)^{\gamma\cdot (\gamma_1+\ldots + \gamma_{g-1})}$ along each loop $\gamma$ on $\Sigma \smm\{ p_1,\dots, p_{2g-2} \}$.
\end{prop}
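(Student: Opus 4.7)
The plan is two-fold: verify equations~\eqref{eq:equation_on_p_of_Salpha}--\eqref{eq:equation_on_q_of_Salpha} on $U$ by logarithmic differentiation, then analytically continue and track the monodromy of each building block around a loop.

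For the differential equations, I would rewrite the formula using the identity $\pf(p,q)\sqrt{\omega_0(p)\omega_0(q)} = \theta\chr{a^-}{b^-}(\Aa(p-q))/\sqrt{\varsigma(p)\varsigma(q)}$ (a direct consequence of the definition of the prime form together with $\omega_-(p) = \varsigma(p)\omega_0(p)$), so that
\[
  \Dd_\alpha^{-1}(p,q) = \frac{\theta[\alpha_h](\Aa(p-q))\sqrt{\varsigma(p)\varsigma(q)}}{\pi i\,\theta[\alpha_h](0)\,\theta\chr{a^-}{b^-}(\Aa(p-q))}\cdot \exp\bigl(\vphi(q) - \vphi(p) - 2i{\textstyle\int_q^p}\Im\alpha_h\bigr).
\]
Then $\dbar_p\log\Dd_\alpha^{-1}$ is computed summand by summand. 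The two theta terms evaluated at $\Aa(p-q)$ are holomorphic in $p$ since the Abel map is holomorphic, and $\theta[\alpha_h](0)$ is constant, so these do not contribute. The identity $(\dbar-\alpha_0)\omega_0=0$ combined with the holomorphicity of $\omega_-$ yields $\dbar\log\varsigma=-\alpha_0$, hence $\dbar\log\sqrt{\varsigma(p)}=-\alpha_0(p)/2$. For the exponential factor, $-\dbar\vphi(p)$ appears explicitly and, since the $(0,1)$-part of $\Im\alpha_h=\tfrac{1}{2i}(\alpha_h-\overline{\alpha_h})$ is $\alpha_h/(2i)$, differentiating $-2i\int_q^p\Im\alpha_h$ in $p$ contributes $-\alpha_h(p)$, giving altogether $-\dbar\vphi(p)-\alpha_h(p)=-\alpha(p)$. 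Summing yields~\eqref{eq:equation_on_p_of_Salpha}; the derivation of~\eqref{eq:equation_on_q_of_Salpha} proceeds symmetrically, with signs coming from $d_q\Aa(p-q)=-\omega(q)$ and from differentiating the integral at its lower endpoint.

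For the multivalued extension, the formula on $U$ already defines a single-valued function on the universal cover of $\bigl((\Sigma\smm\{p_1,\dots,p_{2g-2}\})\times(\Sigma\smm\{p_1,\dots,p_{2g-2}\})\bigr)\smm\mathrm{Diagonal}$; one only needs to compute the deck transformations. Continuing $p$ along a loop $\gamma$ decomposed in the chosen symplectic basis as $\sum n_i A_i + \sum m_i B_i$, the Abel map shifts by $n+\Omega m$, and the quasi-periodicity of theta functions with characteristics gives
\[
  \frac{\theta[\alpha_h](\Aa+n+\Omega m)}{\theta\chr{a^-}{b^-}(\Aa+n+\Omega m)}\Big/\frac{\theta[\alpha_h](\Aa)}{\theta\chr{a^-}{b^-}(\Aa)} = \exp\bigl[2\pi i n^T(a_0+a(\alpha_h)-a^-) + 2\pi i m^T(b_0+b(\alpha_h)-b^-)\bigr],
\]
the $\Aa$-dependent and $\Omega m$-quadratic factors having cancelled between numerator and denominator. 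The exponential factor $\exp(-2i\int_q^p\Im\alpha_h)$ contributes, by the definition~\eqref{eq:def_of_a_b}, a factor $\exp[-2\pi i n^T a(\alpha_h) - 2\pi i m^T b(\alpha_h)]$, cancelling exactly the $\alpha_h$-dependent part of the theta ratio. The remaining multiplicative factor $\sqrt{\varsigma(p)\varsigma(q)}$ picks up a sign equal to the ratio of the monodromies of $\sqrt{\omega_-}$ and $\sqrt{\omega_0}$ along $\gamma$, which encode respectively the spin structure with characteristic $[a^-,b^-]$ and (by Lemma~\ref{lemma:q0_is_quadratic_form} and~\eqref{eq:def_of_q0}) the spin structure determined by $\omega_0$ and the cuts $\gamma_1,\dots,\gamma_{g-1}$ with characteristic $[a_0,b_0]$. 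Combining with the leftover theta characteristic factor $\exp[2\pi i n^T(a_0-a^-)+2\pi i m^T(b_0-b^-)]$, all characteristic data cancels except the contribution coming from the cuts, leaving the asserted multiplicative monodromy $(-1)^{\gamma\cdot(\gamma_1+\ldots+\gamma_{g-1})}$.

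The main obstacle is the last bookkeeping step: identifying the monodromy of the holomorphic square root of $\omega_-/\omega_0$ along each loop with the sign $(-1)^{q_0(\gamma)}$ computed combinatorially from windings of $\omega_0$ and intersections with the cuts, and matching this with the analytic spin characteristic $[a_0,b_0]$. This requires importing the correspondence between combinatorial and analytic spin structures, which is the content of Section~\ref{subsec:spin_line_bundles} and its use in~\eqref{eq:def_of_q0}. Uniqueness of the extension is then automatic, as the local equations~\eqref{eq:equation_on_p_of_Salpha}--\eqref{eq:equation_on_q_of_Salpha} determine $\Dd_\alpha^{-1}$ on any simply-connected domain by its germ on $U\times U$, and the asserted monodromy then defines it everywhere.
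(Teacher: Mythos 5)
Your proof is correct and follows essentially the same route as the paper: verify the $\dbar$-equations by termwise logarithmic differentiation using $(\dbar-\alpha_0)\omega_0=0$, then track the monodromy by combining the quasi-periodicity of the two theta factors, the $\alpha_h$-part of the exponential, and the winding of $\sqrt{\varsigma}$ — which is controlled via the relation between the analytic spin characteristic $[a^-,b^-]$ of $\pf$ and the combinatorial form $q_0$ from~\eqref{eq:def_of_q0}, exactly as in the paper's equations~\eqref{eq:doS1}--\eqref{eq:doS4}. The only notable difference is cosmetic: you substitute $\pf(p,q)\sqrt{\omega_0(p)\omega_0(q)} = \theta\chr{a^-}{b^-}(\Aa(p-q))/\sqrt{\varsigma(p)\varsigma(q)}$ up front (the paper introduces this as $E(p,q)\sqrt{\omega_0(p)}\sqrt{\omega_0(q)}$ but leaves the prime form unexpanded in its own verification), and you spell out the bookkeeping the paper compresses into the phrase ``direct verification''; one could be slightly more explicit that the sign discrepancy between $\gamma\cdot A_i = -m_i$ and $\gamma\cdot B_i = n_i$ is harmless mod 2 because the characteristics are half-integers, but this is the kind of detail the paper also elides.
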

\begin{proof}
  See Section~\ref{subsec:asymptotics_Ddalpha}.
\end{proof}

\begin{rem}
  \label{rem:prop_of_S}
  Let $\Ff_{q_0}$ denote the spin line bundle on $\Sigma$ corresponding to the quadratic form $q_0$ (see Section~\ref{subsec:spin_line_bundles} for more details). It can be shown using the definition of $q_0$ that the differential $\omega_0$ gives rise to a smooth section of $\Ff_0$ whose square is equal to $\omega_0$ after we identify $\Ff_0\otimes\Ff_0$ with the cotangent bundle. Denote this section by $\sqrt{\omega_0}$ by abusing the notation. The object $\Dd_\alpha^{-1}(p,q)\sqrt{\omega_0}(p)\sqrt{\omega_0}(q)$ can be viewed as a section of $\Ff_0\boxtimes \Ff_0$ (that is, a section of $\Ff_0$ in $p$ and a section of $\Ff_0$ in $q$) which is smooth outside the diagonal and have a simple pole along it. This section multiplied by $\tfrac12$ is equal to the inverting kernel of the operator $\dbar + \alpha$ acting on smooth sections of $\Ff_0$, that is, for every such smooth section $f$ we have
  \[
    \int_{q\in \Sigma} \Dd_\alpha^{-1}(p,q)\sqrt{\omega_0}(p)\sqrt{\omega_0}(q) (\dbar + \alpha)f(q) = 2f(p)
  \]
  (note that the integrand can be interpreted as a $(\tfrac12,0) + (\tfrac12,1) = (1,1)$-form, so it can be integrated over $\Sigma$). The existence of such a section for a generic $\alpha$ can be shown using Riemann--Roch theorem. Indeed, if one denotes by $\Ll_\alpha$ the holomorphic line bundle whose local holomorphic sections are functions annihilated by the operator $\dbar + \alpha$ (see~\eqref{eq:def_of_Lalpha}), then, for a fixed $q$, the object $\Dd_\alpha^{-1}(p,q)\sqrt{\omega_0}(p)\sqrt{\omega_0}(q)$ becomes a meromorphic section of $\Ff_0\otimes \Ll_\alpha$ with a simple pole at $q$. If $\Ff_0\otimes \Ll_\alpha$ does not admit non-zero holomorphic sections, then the existence of such meromorphic section is guaranteed by the Riemann--Roch formula.
\end{rem}

Below we list some properties of the kernel $\Dd_\alpha^{-1}$. The proof of Proposition~\ref{prop:def_of_S} and the following lemmas will be given in Section~\ref{subsec:asymptotics_Ddalpha}. Recall the multivalued function $\Tt$ on $\Sigma$ we defined in Section~\ref{subsec:intro_graphs_on_Sigma0}, see~\eqref{eq:def_of_Tt}.

\begin{lemma}
  \label{lemma:diagonal_expansion_of_Salpha}
  Given $\alpha = \dbar\vphi + \alpha_h$ with $\vphi\in \mC^1$ define the function $r_\alpha$ by the formula
  \begin{equation}
    \label{eq:formula_for_r}
    r_\alpha(q) = \frac{1}{\pi i} \frac{d_p}{\omega_0(p)}\log \theta[\alpha_h](\Aa(p-q))\vert_{p=q} - \frac{2}{\pi i}\cdot \frac{\partial\Re \vphi}{\omega_0}(q).
  \end{equation}
  Then we have
  \begin{multline}
    \label{eq:def_of_r}
    \Dd_\alpha^{-1}(p,q)\cdot \exp(i\Im \int\limits_q^p (2\alpha+\alpha_0)) =\\ = \frac{1}{\pi i}\exp\left(-2i\Im \int\limits_{p_0}^q\alpha_0 \right)\cdot \left(  \Tt(p) - \Tt(q) \right)^{-1} +  r_\alpha(q) + O\left( \frac{\dist(p,q)}{\dist(q, \{ p_1,\dots, p_{2g-2} \})^{3/2}} \right) 
  \end{multline}
  as $p\to q$ uniformly in $q$ staying away from $p_1,\dots, p_{2g-2}$; the integral $\int\limits_q^p \alpha$ is taken along the geodesic between $p$ and $q$, the value of the product $\exp\left(-2i\Im \int\limits_{p_0}^q\alpha_0 \right)\cdot \left(  \Tt(p) - \Tt(q) \right)^{-1}$ is fixed by requiring that the path of integration between $p_0$ and $q$ in the exponential factor and in the definition of $\omega$ (see~\eqref{eq:def_of_omega}) are the same (the product does not depend on the choice of this path).
\end{lemma}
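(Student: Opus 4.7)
\textbf{Plan of proof for Lemma~\ref{lemma:diagonal_expansion_of_Salpha}.}

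The plan is to start from the explicit formula~\eqref{eq:def_of_S} for $\Dd_\alpha^{-1}(p,q)$, multiply by the prescribed exponential $\exp(i\Im\int_q^p(2\alpha+\alpha_0))$, and perform a Taylor expansion of each factor around $p=q$ up to the constant term. There are three blocks to expand: the exponential factor coming from $\vphi$ and $\alpha_h$, the theta-function ratio $\theta[\alpha_h](\Aa(p-q))/\theta[\alpha_h](0)$, and the prime form $\pf(p,q)\sqrt{\omega_0(p)}\sqrt{\omega_0(q)}$ in the denominator. The key bookkeeping is that the only singular contribution must reproduce $\pi i\exp(-2i\Im\int_{p_0}^q \Im \alpha_0)\cdot(\Tt(p)-\Tt(q))^{-1}$, and the $O(1)$ term must be exactly $r_\alpha(q)$ as defined in~\eqref{eq:formula_for_r}.

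First I would treat the exponential block. Writing $\alpha=\dbar\vphi+\alpha_h$ and using $2i\Im X = X-\bar X$, one rewrites
$$
\vphi(q)-\vphi(p)-2i\int_q^p\Im\alpha_h + i\Im\int_q^p(2\alpha+\alpha_0)
=\ -\,2\,\tfrac{\partial\Re\vphi}{\omega_0}(q)\,(\Tt(p)-\Tt(q)) + i\Im\int_q^p\alpha_0 + O(\dist(p,q)^2).
$$
The algebraic identity here is obtained by splitting $\vphi(p)-\vphi(q)=\int_q^p\partial\vphi+\int_q^p\bar\partial\vphi$ along the geodesic, substituting $\bar\partial\vphi=\alpha-\alpha_h$, collecting real and imaginary parts, and expanding $\int_q^p\partial\vphi = \partial\vphi(q)\cdot(\Tt(p)-\Tt(q))/\omega_0(q)+O(|p-q|^2)$. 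This produces the $-\tfrac{2}{\pi i}\partial\Re\vphi/\omega_0$ summand of $r_\alpha(q)$ after the multiplicative factor $\pi i(\Tt(p)-\Tt(q))^{-1}$ coming from the prime form is taken into account.

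Next I would expand the theta block. Since the Abel map along the geodesic satisfies $\Aa(p-q)_j=\int_q^p\omega_j+O(|p-q|^2)$, one has
$$
\frac{\theta[\alpha_h](\Aa(p-q))}{\theta[\alpha_h](0)}
= 1+\sum_{j}\frac{\partial_j\theta[\alpha_h](0)}{\theta[\alpha_h](0)}\,\omega_j(q)\cdot\frac{\Tt(p)-\Tt(q)}{\omega_0(q)}+O(|p-q|^2),
$$
and the linear term, multiplied by the $\pi i(\Tt(p)-\Tt(q))^{-1}$ coming from the prime form, contributes exactly the first summand $\tfrac{1}{\pi i}\,d_p\log\theta[\alpha_h](\Aa(p-q))/\omega_0(p)|_{p=q}$ of $r_\alpha(q)$. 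For the prime form I would use the odd-characteristic expansion $\theta\chr{a^-}{b^-}(\Aa(p-q))=\sum_j\partial_j\theta\chr{a^-}{b^-}(0)\int_q^p\omega_j+O(|p-q|^3)$ (the $|p-q|^2$ term vanishes by oddness) together with $\omega_-(p)=(\sum_j\partial_j\theta\chr{a^-}{b^-}(0)\,\omega_j(p))^2$ and the definition $\omega_-=\varsigma\omega_0$ from~\eqref{eq:def_of_varsigma}. This identifies
$$
\pf(p,q)\sqrt{\omega_0(p)}\sqrt{\omega_0(q)} \;=\; (\Tt(p)-\Tt(q))\exp\!\Bigl(-i\Im\!\int_q^p\alpha_0\Bigr)\Bigl(1+O(|p-q|^2)\Bigr),
$$
where the exponential arises precisely because $\omega_0=\exp(2i\int_{p_0}^{\cdot}\Im\alpha_0)\,\omega$ (cf.~\eqref{eq:def_of_omega}). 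Combining this with the previous exponential block cancels the $i\Im\int_q^p\alpha_0$ and leaves only the prefactor $\exp(-2i\Im\int_{p_0}^q\Im\alpha_0)$, which matches the statement.

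Finally, I would control the error. Away from $\{p_1,\dots,p_{2g-2}\}$ the three quadratic remainders are of order $\dist(p,q)^2$; the factor $|\omega_0(q)|^{-1}\asymp\dist(q,\{p_j\})^{-1/2}$ needed to convert a derivative in $\Tt$ into a derivative in $\omega_0$ produces the denominator $\dist(q,\{p_j\})^{3/2}$ once an extra factor $(\Tt(p)-\Tt(q))^{-1}$ from the leading pole is attached. The main obstacle I expect is the prime-form computation: one has to verify the normalization $\omega_-=(\sum_j\partial_j\theta\chr{a^-}{b^-}(0)\omega_j)^2$ used in Section~\ref{subsec:the_prime_form} and track, without sign or $2\pi i$ errors, how the cutting system $\gamma_1\cup\dots\cup\gamma_{g-1}$ and the characteristic $[a_0,b_0]$ of $q_0$ enter through the definition~\eqref{eq:def_of_theta_alpha} of $\theta[\alpha_h]$. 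Once these normalizations are pinned down, the rest of the proof reduces to the algebraic manipulations sketched above.
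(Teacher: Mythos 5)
Your overall strategy is exactly what the paper means by ``direct computations'': multiply the explicit Szeg\H{o} formula~\eqref{eq:def_of_S} by the prescribed exponential, expand each of the three blocks (the $\vphi$-$\alpha_h$ exponential, the theta ratio, the prime form) to first order around $p=q$, and collect the pole and constant terms. This is the only sensible route and the one the paper implicitly takes, so there is no real difference in approach.

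That said, a cluster of sign and normalization errors in your intermediate displays would, if carried out literally, produce the wrong answer, and they need to be fixed before this can be called a proof. (i) In the identity $\int_q^p\partial\Re\vphi \approx \tfrac{\partial\Re\vphi}{\omega_0}(q)\cdot(\Tt(p)-\Tt(q))$ you drop the factor $\exp\bigl(2i\int_{p_0}^q\Im\alpha_0\bigr)$ that comes from $\omega_0=\exp(2i\int_{p_0}^{\cdot}\Im\alpha_0)\,\omega$; this factor is necessary to cancel the $\exp(-2i\int_{p_0}^q\Im\alpha_0)$ in the leading pole so that $r_\alpha$ comes out without an exponential prefactor. (ii) Your displayed expansion of $\pf(p,q)\sqrt{\omega_0(p)}\sqrt{\omega_0(q)}$ has the wrong sign in the local exponent and omits the global prefactor: the correct statement is $\pf(p,q)\sqrt{\omega_0(p)}\sqrt{\omega_0(q)} = (\Tt(p)-\Tt(q))\exp\bigl(2i\int_{p_0}^q\Im\alpha_0\bigr)\exp\bigl(+i\int_q^p\Im\alpha_0\bigr)\bigl(1+O(\dist(p,q)^2)\bigr)$. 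As written, inverting your formula would not cancel the $i\Im\int_q^p\alpha_0$ from the exponential block; with the corrected sign the cancellation you claim does go through. (iii) You write $\omega_-=(\sum_j\partial_j\theta\chr{a^-}{b^-}(0)\omega_j)^2$, but by definition~\eqref{eq:def_of_dif_eta} $\omega_-$ is the linear combination itself, not its square; Lemma~\ref{lemma:on_dif_eta} then tells you $\omega_-$ is the square of a section of $\Ff_-$. (iv) Your error bookkeeping ``$\dist(q,\{p_j\})^{-1/2}$ times the pole gives $\dist(q,\{p_j\})^{-3/2}$'' does not add up as stated: the $-3/2$ exponent arises because the quadratic remainder in each block, written in a smooth coordinate, must be converted into $\Tt$-increments, and one picks up two powers of $|\omega_0(q)|^{-1}\asymp\dist(q,\underline p)^{-1/2}$, one from the pole and one from the $(\Tt(p)-\Tt(q))$ that the pole eats. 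Finally, be aware that the displayed leading coefficient $\pi i$ in the lemma statement is not consistent with formula~\eqref{eq:def_of_S} (which has $\pi i$ in the denominator) nor with the normalization $K_\Tt^{-1}(b,w)\approx\Pr[\tfrac{1}{\pi i(b-w)},\eta_b\eta_w\RR]$ from Theorem~\ref{thmas:parametrix}; the correct leading coefficient is $\tfrac{1}{\pi i}$, and your sketch silently inherits the same discrepancy.
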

\begin{proof}
  Follows from direct computations.
\end{proof}

Let $j = 1,\dots, 2g-2$ be given. Note that $\sqrt{\Tt(p) - \Tt(p_i)}$ has a single-valued branch near $p_i$. Denote any such branch by $z_j(p)$. Define the local kernel
\begin{equation}
  \label{eq:def_of_SCc}
  S_j(p,q) = \frac{1}{2\pi i (z_j(p) - z_j(q))\sqrt{z_j(p)}\sqrt{z_j(q)}}
\end{equation}
for $p,q$ are close to $p_j$. Note that replacing $\Tt$ with $\lambda\Tt$ amounts in replacing $S_j(p,q)$ with $\lambda^{-1}S_j(p,q)$.

\begin{lemma}
  \label{lemma:bw_near_singularity_expansion_of_Salpha}
  Let $j = 1,\dots, 2g-2$ be fixed and $\alpha$ be a $(0,1)$-form with $\mC^1$ coefficients.
  \begin{multline}
    \label{eq:bw_near_singularity_expansion_of_Salpha}
    \Dd_\alpha^{-1}(p,q)\cdot \exp(i\Im \int\limits_q^p (2\alpha+\alpha_0)) =\\
    = \exp(-2i\Im\int_{p_0}^q\alpha_0) \cdot  S_j(p,q) + \frac{r_\alpha(q)\sqrt{z_j(q)}}{\sqrt{z_j(p)}} + O\left( \frac{|z_j(p) - z_j(q)|}{\sqrt{z_j(p)}\sqrt{z_j(q)}} \right) 
  \end{multline}
  uniformly in $p,q\in B(p_j,\lambda)$, where $r_\alpha$ is as in Lemma~\ref{lemma:diagonal_expansion_of_Salpha} and $\lambda$ is the constant from Assumption~\ref{item:intro_metric_assumptions} on the graph and the metric.
\end{lemma}
\begin{proof}
  Follows from direct computations.
\end{proof}

Let $\Delta$ be the Laplace operator associated with the metric $ds^2$ defined on $\mC^2$ functions compactly supported in the interior of $\Sigma_0$, i.e.
\[
  -4\dbar\partial f = \Delta f \bar{\omega}_0\wedge \omega_0.
\]

\begin{lemma}
  \label{lemma:derivative_of_theta}
  Assume that $\partial \Sigma_0\neq \varnothing$ and we have an antiholomorphic $(0,1)$-form $\alpha_G$ on $\Sigma$ satisfying $\sigma^*\alpha_G = \bar{\alpha}_G$. Assume that $\alpha_t = \dbar\vphi_t + \alpha_{h,t}$ is a smooth family of $(0,1)$-forms on $\Sigma$ such that for all $t$ we have $\sigma^*\alpha_t = -\bar{\alpha}_t$, $\theta[\alpha_{h,t} + \alpha_G](0)\neq 0$ and $\vphi_t,\dot{\vphi}_t\in \mC^2(\Sigma)$. 
  Let $r_{\pm\alpha_t + \alpha_G }$ be as in Lemma~\ref{lemma:diagonal_expansion_of_Salpha}. Then we have
  \begin{multline}
    \frac{d}{dt}\left(\log \theta[\alpha_{h,t} + \alpha_G](0) + 2\pi i a(\alpha_{h,t})\cdot b(\alpha_G) - \frac{1}{2\pi}\int_{\Sigma_0}\Re \vphi_t \Delta \Re\vphi_t ds^2\right) =\\
    =-\frac{1}{4}\int_\Sigma \left( r_{\alpha_t+\alpha_G}\omega_0\wedge \dot{\alpha}_t - \overline{r_{-\alpha_t + \alpha_G}\omega_0\wedge \dot{\alpha}_t} \right).
  \end{multline}
\end{lemma}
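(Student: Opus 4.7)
The strategy is to compute the derivatives of the three LHS terms explicitly and match against the RHS after substituting the explicit formula for $r_\alpha$ from Lemma~\ref{lemma:diagonal_expansion_of_Salpha}. First I would exploit the anti-equivariance to reduce the identity: from $\sigma^*\alpha_t=-\bar\alpha_t$ and $\sigma_*A_k=-A_k$ one deduces $\int_{A_k}\alpha_t\in\RR$, hence $a(\alpha_{h,t})=0$ and in particular $\dot a(\alpha_{h,t})=0$; from $\sigma^*\alpha_G=\bar\alpha_G$ and $\sigma_*B_k=B_k$ one similarly gets $b(\alpha_G)=0$, so the middle term on the LHS vanishes identically. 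One also notes the closely related facts $\sigma^*\omega_j=-\bar\omega_j$, $\overline{\Omega_{jk}}=-\Omega_{jk}$ (the period matrix is pure imaginary), and $\Re\vphi_t\circ\sigma=-\Re\vphi_t$ (after fixing the additive freedom in $\vphi_t$).

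For the theta-function piece, I would differentiate the theta series termwise using the identities $\partial_{a_k}\log\theta\chr{a}{b}(0,\Omega)=\sum_j \Omega_{kj}\partial_{z_j}\log\theta\chr{a}{b}(0,\Omega)-2\pi i b_k$ and $\partial_{b_k}\log\theta\chr{a}{b}(0,\Omega)=-\partial_{z_k}\log\theta\chr{a}{b}(0,\Omega)$, which combined with $\dot a(\alpha_{h,t})=0$ give
\[
\tfrac{d}{dt}\log\theta[\alpha_{h,t}+\alpha_G](0) = -\sum_k \dot b(\alpha_{h,t})_k\,\partial_{z_k}\log\theta[\alpha_{h,t}+\alpha_G](0).
\]
Plugging Lemma~\ref{lemma:diagonal_expansion_of_Salpha} into the RHS and applying the Riemann bilinear relation $\int_\Sigma\omega_j\wedge\dot\alpha_t=2\pi i\,\dot b(\alpha_{h,t})_j$ (which uses $\dot a(\alpha_{h,t})=0$ and $\int_\Sigma\omega_j\wedge\dbar\dot\vphi_t=0$) converts the theta piece of the RHS into $-\tfrac12\sum_k \dot b_k\bigl[\partial_{z_k}\log\theta[\alpha_{h,t}+\alpha_G](0) - \overline{\partial_{z_k}\log\theta[-\alpha_{h,t}+\alpha_G](0)}\bigr]$; the two sides match provided
\[
\partial_{z_k}\log\theta[\alpha_{h,t}+\alpha_G](0) + \overline{\partial_{z_k}\log\theta[-\alpha_{h,t}+\alpha_G](0)} = 0,
\]
which I would derive from $\bar\Omega=-\Omega$ combined with the theta quasi-periodicity $\theta\chr{a}{b+n}=e^{-2\pi i n^t a}\theta\chr{a}{b}$ for $n\in\ZZ^g$ applied to $n=-2b_0$.

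For the Dirichlet-energy piece, the key observation is that $\sigma$-antisymmetry of $\Re\vphi_t$ and $\sigma$-symmetry of $\Im\dot\vphi_t$ imply $\int_\Sigma\Im\dot\vphi_t\cdot\Delta\Re\vphi_t\,ds^2=0$, so $\int_\Sigma\partial\Re\vphi_t\wedge\dot\alpha_t=-\tfrac{i}{2}\int_\Sigma\Re\dot\vphi_t\,\Delta\Re\vphi_t\,ds^2$ is pure imaginary (the $\dot\alpha_{h,t}$ part of $\dot\alpha_t$ drops out by Stokes and $d\dot\alpha_{h,t}=0$). Combining this with the analogous computation for the $-\alpha_t+\alpha_G$ term, the scalar contribution of the RHS equals $-\tfrac{1}{2\pi}\int_\Sigma\Re\dot\vphi_t\Delta\Re\vphi_t\,ds^2$, which by the $\sigma$-doubling identity $\int_\Sigma\Re\dot\vphi_t\Delta\Re\vphi_t\,ds^2=2\int_{\Sigma_0}\Re\dot\vphi_t\Delta\Re\vphi_t\,ds^2$ matches the derivative of $-\tfrac{1}{2\pi}\int_{\Sigma_0}\Re\vphi_t\Delta\Re\vphi_t\,ds^2$. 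The main obstacle is carefully verifying the reflection symmetry of the theta function under $\bar\Omega=-\Omega$ and keeping track of the signs and factors of $2$ introduced by orientation reversal of $\sigma$ and the passage from $\Sigma$- to $\Sigma_0$-integrals, together with the dependence on the theta characteristics $a_0,b_0$.
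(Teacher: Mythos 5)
Your approach matches the paper's in its broad strokes---compute the right-hand side using the explicit formula for $r_\alpha$ and Riemann bilinear relations, then match against the derivative of the theta series---but it rests on a symmetry assumption that fails for a genuine bordered surface of positive genus.

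You take $\sigma_*A_j = -A_j$ for \emph{all} $j$, deduce $\sigma^*\omega_j = -\bar\omega_j$ and $\overline{\Omega} = -\Omega$, and conclude $a(\alpha_{h,t}) = 0$ and $b(\alpha_G) = 0$ so that the middle term on the left-hand side vanishes identically. This is only correct when $g_0 = 0$, i.e.\ when $\Sigma_0$ is a multiply connected planar domain. For $g_0 > 0$ one cannot choose an \emph{integral} symplectic basis with $\sigma_*A_j = -A_j,\ \sigma_*B_j = B_j$ for all $j$: the $\pm1$ eigenlattices of $\sigma_*$ on the $\ZZ^4$ associated with a doubled handle have index $4$. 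The paper's working basis instead satisfies $\sigma_*A_j = -A_{J(j)},\ \sigma_*B_j = B_{J(j)}$, where $J$ (eq.~\eqref{eq:def_of_J}) fixes the first $n-1$ indices and swaps each handle of $\Sigma_0$ with its mirror; eqs.~\eqref{eq:symmetries_of_normalized_differentials},~\eqref{eq:Omega_under_involution} and the whole bosonization computation in Section~\ref{subsec:bosonization} use this $J$. (The literal wording of eq.~\eqref{eq:symmetries_of_simplicial_basis} reads as $J=\Id$ and is misleading, so the confusion is understandable, but the lemma statement carrying a nonzero $2\pi i\, a(\alpha_{h,t})\cdot b(\alpha_G)$ is already a signal.)

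With the correct basis, (i) $a(\alpha_{h,t})$ and $b(\alpha_G)$ both lie in the $(-1)$-eigenspace $V_-$ of $J$ and are generically nonzero, so the middle term does \emph{not} drop out; (ii) $J\Omega J = -\overline{\Omega}$ rather than $\overline{\Omega}=-\Omega$; (iii) the theta symmetry you need is $\theta[\alpha_{h,t}+\alpha_G](z) = \overline{\theta[-\alpha_{h,t}+\alpha_G](-J\bar z)}$, giving
\[
\partial_{z_k}\log\theta[\alpha_{h,t}+\alpha_G](0) = -\,\overline{\partial_{z_{J(k)}}\log\theta[-\alpha_{h,t}+\alpha_G](0)}
\]
and not the $J$-free identity you wrote; and (iv) when differentiating the theta series, the term $-2\pi i\,B^t\,a(\dot\alpha_{h,t})$ with $B=b(\alpha_{h,t})+b(\alpha_G)+b_0$ survives only through $b(\alpha_G)$, because $b(\alpha_{h,t}),b_0\in V_+$ are orthogonal to $a(\dot\alpha_{h,t})\in V_-$---this uses the eigenspace decomposition, not vanishing. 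The Dirichlet-energy piece of your argument is fine, and once $J$ is threaded through the theta computation the remainder reduces to the paper's own proof (which introduces $W_{\alpha+\alpha_G}(q)=d_q\log\theta[\alpha+\alpha_G](\Aa(p-q))\vert_{p=q}$, records $\sigma^*W_{\alpha_t+\alpha_G}=\overline{W}_{-\alpha_t+\alpha_G}$, and applies Riemann bilinear relations).
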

\begin{proof}
  See Section~\ref{subsec:asymptotics_Ddalpha}.
\end{proof}

\section{Inverse Kasteleyn operator: construction and estimates} 
\label{sec:discrete_Riemann_surface}

We will now assume that we are in the setup described in Section~\ref{subsec:intro_flat_metric}, that is, $\Sigma$, the locally flat metric $ds^2$ with conical singularities at $p_1,\dots, p_{2g-2}$ and the associated 1-forms $\omega_0,\alpha_0$ are fixed. Let $\lambda\in (0,1), \delta>0$ be given, let $G$ be a $(\lambda,\delta)$ adapted graph on $\Sigma$ as defined in Section~\ref{subsec:intro_graphs_on_Sigma0}. We think of $\delta>0$ as of a parameter that tends to zero, in particular, we will always be assuming that it is much smaller than any macroscopic parameters such as distances between $p_j$-s or lengths of non-contractible loops on $\Sigma$. Let $\alpha_G$ and $K_\delta = K$ be as defined in Section~\ref{subsec:intro_Kasteleyn_operator}, recall that a collection of paths $\gamma_1,\dots,\gamma_{g-1}$ connecting $p_1,\dots, p_{2g-2}$ pairwise was fixed to define $K$. Note that we defined $K$ to be gauge equivalent to a real-valued operator. Let $\eta$ be the origami square root function as defined in the end of Section~\ref{subsec:intro_Kasteleyn_operator}.
 
Let $\alpha$ be a $(0,1)$-form with $\mC^1$ coefficients. If the involution $\sigma$ is present, then we assume that $\sigma^*\alpha = -\bar{\alpha}$. We define the \emph{perturbed} Kasteleyn operator $K_\alpha$ by
\begin{equation}
  \label{eq:def_of_Kalpha}
  K_\alpha(w,b) = \exp(2i\Im \int_w^b\alpha)\cdot K_\delta(w,b)
\end{equation}
where the integration is taken along the edge $wb$ of $G$. For the next lemma we need some notation. Let us fix an arbitrary~\emph{smooth} metric on $\Sigma$. Then, given an open set $U\subset \Sigma$ we denote by $\mC^n(U)$ the space of $n$ times continuously differentiable functions on $U$ with the usual norm associated with this metric. 
Finally, given two $(0,1)$-forms $\alpha_1,\alpha_2$ we denote by $\alpha_1/\alpha_2$ the function for which $\alpha_1 = (\alpha_1/\alpha_2)\cdot \alpha_2$.

\begin{lemma}
  \label{lemma:Kalpha_and_dbar}
  Let $w\in W$ be a white vertex of $G$ and $U\subset\Sigma$ be an open subset containing $w$ and all its neighbors and isometric to a convex Euclidean polygon in the metric $ds^2$. Assume that $f$ is a $\mC^2$ multivalued function on $U$ that picks the multiplicative monodromy $-1$ when its argument crosses any of the paths $\gamma_1,\dots,\gamma_{g-1}$. Let $\mu_w$ denote the area of the face $w$ of the t-embedding associated with $G$. Then we have
  \begin{multline}
    \sum_{b\sim w}K_\alpha(w,b)f(b) = 4i\mu_w\cdot ( (\dbar + \frac{\alpha_0}{2} + \alpha_G + \alpha) f)(w)\cdot \bar{\omega}_0(w)^{-1} + \\
    + \frac{\|f\|_{\mC^0(B(w,\delta))}\cdot \|(\alpha+\alpha_0+\alpha_G)/\bar{\omega}_0\|_{\mC^1(B(w,\delta))}}{\dist(w, \{ p_1,\dots, p_{2g-2} \})^{1/2}}\cdot O(\delta^3) + \\
    + \frac{\|f\|_{\mC^2(B(w,\delta))}\cdot \|\alpha/\bar{\omega}_0\|_{\mC^0(B(w,\delta))}}{\dist(w, \{ p_1,\dots, p_{2g-2} \})}\cdot O(\delta^3)
  \end{multline}
  as $\delta\to 0$, the constant in $O(\ldots)$ depends only on $\lambda$.
\end{lemma}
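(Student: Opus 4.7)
The plan is to factor $K_\alpha(w,b)f(b)$ so as to reduce the computation to an unperturbed sum on the local t-embedding around $\Tt(w)$, after which the planar $\bar\partial$-approximation of Lemma~\ref{lemma:approx_of_dbar} applies directly. Unfolding the definitions of $K$ and $K_\alpha$ from Section~\ref{subsec:intro_Kasteleyn_operator} one obtains
\[
K_\alpha(w,b) \;=\; \epsilon(w,b)\,e^{2i\Im\int_{p_0}^w\alpha_0}\,\bigl(\Tt(v_2)-\Tt(v_1)\bigr)\,e^{i\Im\int_w^b(2\alpha+2\alpha_G+\alpha_0)},
\]
where $\epsilon(w,b)\in\{\pm1\}$ collects the Kasteleyn orientation and the crossings of the cuts $\gamma_1,\dots,\gamma_{g-1}$, and the increment $\Tt(v_2)-\Tt(v_1)$ is the \emph{raw} t-embedding Kasteleyn weight $K_\Tt(w,b)$ in the local flat coordinate $z=\Tt$. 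Choose a branch of $f$ on $U$ and set $\tilde f(b) = \epsilon(w,b)\,e^{i\Im\int_w^b(2\alpha+2\alpha_G+\alpha_0)}\,f(b)$; the signs $\epsilon(w,b)$ combine with the $-1$ monodromy of $f$ across each $\gamma_i$ so that $\tilde f$ is a single-valued smooth function on the Euclidean image $\Tt(U)$. The whole sum becomes
\[
\sum_{b\sim w}K_\alpha(w,b)f(b) \;=\; e^{2i\Im\int_{p_0}^w\alpha_0}\sum_{b\sim w}K_\Tt(w,b)\,\tilde f(b).
\]

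Applying Lemma~\ref{lemma:approx_of_dbar} to $\tilde f$ in the coordinate $z = \Tt$ yields
\[
\sum_{b\sim w}K_\Tt(w,b)\tilde f(b) \;=\; 4i\mu_w\,\bar\partial\tilde f(w) \;+\; \partial\tilde f(w)\sum_{b\sim w}K_\Tt(w,b)\bigl(\Tt(b)-\Tt(w)\bigr) \;+\; O\!\bigl(\|\tilde f''\|_{\mC^0}\,\delta^3\bigr).
\]
The second term vanishes identically: telescoping around the face gives $\sum_{b\sim w}K_\Tt(w,b)=\sum_i(\Tt(v_i)-\Tt(v_{i-1}))=0$, while $\sum_{b\sim w}K_\Tt(w,b)\Tt(b)=0$ is exactly Assumption~\ref{item:intro_dbar_adapted}. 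This is the single crucial input from the geometric structure of $G$ beyond weak uniformity and the small-origami hypothesis.

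To compute the leading coefficient, write $\tilde f = h\cdot f$ with $h(b)=\epsilon(w,b)\exp(i\Im\int_w^b\beta)$ and $\beta = 2\alpha+2\alpha_G+\alpha_0$. Since $\beta$ is itself a $(0,1)$-form, the $(0,1)$-part of $d_b(i\Im\int_w^b\beta)\vert_{b=w}$ equals $\tfrac{1}{2}\beta(w)$, so Leibniz gives
\[
\bar\partial\tilde f(w) \;=\; \pm\Bigl(\bar\partial f(w) + \tfrac{\beta(w)}{2\bar\omega(w)}f(w)\Bigr)
\;=\; \pm\,\Bigl(\bar\partial + \tfrac{\alpha_0}{2}+\alpha_G+\alpha\Bigr)f(w)\big/\bar\omega(w),
\]
where the sign matches $\epsilon(w,b)$ in the limit $b\to w$. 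Combining with the prefactor $e^{2i\Im\int_{p_0}^w\alpha_0}$ and using $\bar\omega_0 = e^{-2i\Im\int\alpha_0}\bar\omega$ to replace $\bar\omega(w)^{-1}$ by $\bar\omega_0(w)^{-1}$ reassembles the right-hand side into $4i\mu_w\cdot\bigl(\bar\partial+\tfrac{\alpha_0}{2}+\alpha_G+\alpha\bigr)f(w)\cdot\bar\omega_0(w)^{-1}$, exactly as claimed.

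The remaining — and main — task is to bound $\|\tilde f''\|_{\mC^0}$ on $\Tt(B(w,\delta))$ in the form stated. Leibniz yields the raw estimate
\[
\|\tilde f''\|_{\mC^0} \;\lesssim\; \|f''\|_{\mC^0}\|\alpha/\bar\omega_0\|_{\mC^0}^0 + \|f\|_{\mC^0}\bigl(\|\beta/\bar\omega_0\|_{\mC^1}+\|\beta/\bar\omega_0\|_{\mC^0}^2\bigr)+\ldots,
\]
and the translation from $\Tt$-coordinate derivatives to ambient $ds^2$-derivatives introduces powers of $|\omega_0|^{-1}\sim\dist(w,\{p_j\})^{-1/2}$, since $|\omega_0|^2$ degenerates like $\dist(\cdot,p_j)$ near a conical singularity ($\omega_0 = 2z_j\,dz_j$ locally). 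The obstacle is the careful bookkeeping that produces precisely the two-term error in the statement: one must isolate the purely $\alpha$-dependent contribution, scaling as $\|f\|_{\mC^2}\|\alpha/\bar\omega_0\|_{\mC^0}/\dist$, from the ``structural'' contribution involving $\alpha+\alpha_G+\alpha_0$, scaling as $\|f\|_{\mC^0}\|(\alpha+\alpha_G+\alpha_0)/\bar\omega_0\|_{\mC^1}/\dist^{1/2}$. This finer decomposition uses crucially that when $\alpha\equiv 0$ the operator $K$ is tuned to the continuous Cauchy--Riemann operator on spinors, so the unperturbed sum already eliminates the $\delta^2\|f\|_{\mC^2}$ contribution without further cost; the residual $\|f\|_{\mC^2}$-error therefore carries only the factor $\|\alpha/\bar\omega_0\|_{\mC^0}$, not the full $\|\beta/\bar\omega_0\|_{\mC^0}$.
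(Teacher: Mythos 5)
Your strategy matches the paper's own (one-line) proof exactly: unfold the definitions of $K$ and $K_\alpha$ to factor out a unitary gauge $e^{i\Im\int_w^b(2\alpha+2\alpha_G+\alpha_0)}$ and the prefactor $e^{2i\Im\int_{p_0}^w\alpha_0}$, thereby reducing the sum to $\sum_bK_\Tt(w,b)\tilde f(b)$ on the local t-embedding; apply Lemma~\ref{lemma:approx_of_dbar}; kill the $\partial$-term via $\sum_bK_\Tt(w,b)=0$ and Assumption~\ref{item:intro_dbar_adapted}; and reassemble the leading coefficient using $\dbar_b(i\Im\int_w^b\beta)|_{b=w}=\tfrac{1}{2}\beta(w)$ and $\bar\omega_0=e^{-2i\Im\int\alpha_0}\bar\omega$. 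The leading-term computation is correct and complete.

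The gap is in the final paragraph on the error bookkeeping, and you are right to flag the decomposition into the two stated error terms as the sticking point. The specific claim you make — that ``when $\alpha\equiv0$ the unperturbed sum already eliminates the $\delta^2\|f\|_{\mC^2}$ contribution without further cost'' — is asserted, not derived, and it does not follow from the mechanism you used. Applying Lemma~\ref{lemma:approx_of_dbar} to $\tilde f=gf$ produces an error bounded by $\|\tilde f''_\Tt\|$, and the Leibniz expansion of $\tilde f''_\Tt$ contains the term $g\,f''_\Tt$, which contributes $\|f\|_{\mC^2}/\dist(w,\underline p)$ with \emph{no} factor of $\alpha$, $\alpha_G$, or $\alpha_0$ (since $|g|=1$). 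Assumption~\ref{item:intro_dbar_adapted} only removes the \emph{first}-order term $\partial_\Tt\tilde f(w)\sum K_\Tt(\Tt(b)-\Tt(w))$; there is no analogous exact identity killing the quadratic sums $\sum K_\Tt(w,b)(\Tt(b)-\Tt(w))^2$ etc., so nothing in your argument (or, for that matter, in the cited Lemma~\ref{lemma:approx_of_dbar}) makes the $\|f''_\Tt\|$-contribution vanish in the unperturbed case. To actually obtain the stated error one would need either an additional cancellation beyond what Lemma~\ref{lemma:approx_of_dbar} provides, or a reinterpretation of how the stated norms are being applied. In short: the structure of your argument and the leading term are correct and match the paper, but the claimed refinement of the error split is not justified and the assertion used to obtain it is not true as stated.
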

\begin{proof}
  Recall that, by the definition~\eqref{eq:def_of_K} of $K$ and the definition~\eqref{eq:def_of_Kalpha}, outside of the cuts $\gamma_1,\dots, \gamma_{g-1}$, the operator $K_\alpha$ can be written as
  \begin{equation}
    \label{eq:K_def_again}
    K_\alpha(w,b) = \exp\left( i\Im\int_w^b(\alpha+2\alpha_G) + i\Im(\int_{p_0}^b\alpha_0 +\int_{p_0}^w\alpha_0) \right)(\Tt(v_2) - \Tt(v_1)).
  \end{equation}
  Recall also that $\omega_0(p) = \exp\left( 2i\int_{p_0}^p\Im\alpha_0 \right)\omega(p)$ where $\omega = d\Tt$ (see~\eqref{eq:def_of_omega} and~\eqref{eq:def_of_Tt}), that is,
  \begin{equation}
    \label{eq:omega_is_dTt}
    \omega_0(p) = \exp\left( 2i\int_{p_0}^p\Im\alpha_0 \right)\,d\Tt.
  \end{equation}
  Finally, recall that $G^\ast$ is t-embedded into $\Sigma$ and, locally, coincides with a t-embedding with $O(\delta)$-small origami.

  We can now compute $\sum_{b\sim w}K_\alpha(w,b)f(b)$ following the steps below:
  \begin{itemize}
    \item Replace $K_\alpha$ with the right-hand side of~\eqref{eq:K_def_again}. This puts us in the setup of Lemma~\ref{lemma:approx_of_dbar} with $f(b)$ being replaced by $\exp\left( i\Im\int_w^b(\alpha+2\alpha_G) + i\Im(\int_{p_0}^b\alpha_0 +\int_{p_0}^w\alpha_0) \right)f(b)$.
    \item Use $\Tt$ as a local coordinate and apply Lemma~\ref{lemma:approx_of_dbar}. Recall that we have to compute the derivative of $\exp\left( i\Im\int_w^b(\alpha+2\alpha_G) + i\Im(\int_{p_0}^b\alpha_0 +\int_{p_0}^w\alpha_0) \right)f(b)$, not just $f(b)$ itself. This gives rise to the appearance of the operator $\dbar + \frac{\alpha_0}{2} + \alpha_G + \alpha$. We divide by $\bar\omega_0$ because the result of applying this operator is not a function but a $(0,1)$. Note that the exponential factor in~\eqref{eq:omega_is_dTt} matches with $\exp\left( i\Im\int_w^b(\alpha+2\alpha_G) + i\Im(\int_{p_0}^b\alpha_0 +\int_{p_0}^w\alpha_0) \right)$ evaluated at $b=w=p$.
    \item Finally, we need to carefully estimate the error terms. The estimates given by Lemma~\ref{lemma:approx_of_dbar} are made for the derivatives takes with respect to the coordinate $\Tt$ which is not smooth at conical singularities. Rewriting this in terms of derivatives taken with respect to a smooth local coordinate we get the error terms as in the lemma.
  \end{itemize}
\end{proof}

The goal of this section is to construct and estimate $K_\alpha^{-1}$. As we can see from Lemma~\ref{lemma:Kalpha_and_dbar}, the matrix $K_\alpha$ approximates the operator $\Dd_\alpha = \dbar + \frac{\alpha_0}{2} + \alpha_G + \alpha$ on smooth multivalued functions with the monodromy $(-1)^{\gamma\cdot (\gamma_1+\ldots+\gamma_{g-1})}$ along a loop $\gamma$. Following an analogy with Theorem~\ref{thmas:parametrix}, we are aiming to prove that $K^{-1}_\alpha(b,w)$ approximates $\Dd_{\alpha + \alpha_G}^{-1}(b,w) + (\eta_b\eta_w)^2\overline{\Dd_{-\alpha + \alpha_G}^{-1}(b,w)}$, where $\Dd_\alpha^{-1}$ is the kernel defined in Proposition~\ref{prop:def_of_S}.

\subsection{Auxiliary notation}
\label{subsec:aux_for_Kalphainv}

We need some auxiliary notation. Recall the multivalued function $\Tt$ on $\Sigma$ defined in Section~\ref{subsec:intro_graphs_on_Sigma0}. We define the multivalued function $K_\Tt(w,b)$, the local parametrix $K_\Tt^{-1}(b,w)$ and functions $[\Tt(b)^s], [\Tt(w)^s]$ as follows:
\begin{enumerate}
  \item Let $bw$ be an edge of $G$ and $v_1v_2$ be the dual edge of $G^\ast$ oriented such that the black face is on the right. Define $K_\Tt(w,b) = \Tt(v_2) - \Tt(v_1)$.
  \item Assume that $w$ is a white vertex and put $r = \dist(w,\{ p_1,\dots, p_{2g-2} \})$ Then, according to Assumption~\ref{item:intro_reg_part} and Assumption~\ref{item:intro_conical_sing} on $G$ from Section~\ref{subsec:intro_graphs_on_Sigma0} the mapping $\Tt$ provides an isometry between $G^\ast\cap B_\Sigma(w,\min(r,\lambda))$ and a subset of a full-plane t-embedding. For each $b\in B_\Sigma(w,\lambda/2)$ we put $K_\Tt^{-1}(b,w)$ to be the unique inverting kernel defined by applying Theorem~\ref{thmas:parametrix} to this t-embedding. Note that $K_\Tt^{-1}(b,w)$ is discrete holomorphic in $b$, but not in $w$, as for different choices of $w$ the full-plane t-embeddings might vary.
  \item Assume that $\dist(w,p_j)\leq \lambda$ for some $j = 1,\dots, 2g-2$. By Assumption~\ref{item:intro_conical_sing} on $G$ from Section~\ref{subsec:intro_graphs_on_Sigma0}, the graph $G\cap B_\Sigma(p_j, 2\lambda)$ is isometric to a subgraph of a Tempreley isoradial graph on an infinite cone as defined in Section~\ref{subsec:graph_on_a_cone}. For each $b\in B_\Sigma(w, \lambda/2)$ we identify $K_\Tt^{-1}$ with $K_{\Tt,1/2}^{-1}$, where the latter is the unique inverting kernel defined by applying Lemma~\ref{lemma:kernel_of_G4pi} to this graph.
  \item Pick a $j = 1,\dots, 2g-2$. Recall that, by Assumption~\ref{item:intro_conical_sing} on $G$ from Section~\ref{subsec:intro_graphs_on_Sigma0}, the mapping $\Tt$ defines a double cover from $G\cap B(p_j,2\lambda)$ onto a subgraph of a full-plane isoradial graph. Given $b,w\in B_\Sigma(p_j, 2\lambda)$ and $s\geq -\frac{1}{2}$ we define $[\Tt(b)^s]$, $[\Tt(w)^s]$ to be the corresponding function constructing by applying Lemma~\ref{lemma:fs} to the black and white vertices of this graph respectively. Recall that $[\Tt(b)^s]$ is a discrete version of $(b-p_j)^{2s}$.
\end{enumerate}
The parametrix $K_\Tt^{-1}$ is intended to describe the leading term in the asymptotic of $K_\alpha^{-1}(b,w)$ at the diagonal. Note that the definition of $K_\Tt^{-1}, K_{\Tt,1/2}^{-1}(b,w)$ and $K_\Tt$ depends on the choice of the branch of $\Tt$; in all forthcoming expressions it is assumed that all the objects whose definition depend on $\Tt$ are defined with respect to the same branch. In fact, all the expressions we are going to build using $\Tt$ will not depend on the choice of this branch: for example, we have a well-defined edge weight $K_\Tt(w,b)K_\Tt^{-1}(b,w)$, which is nothing but the probability of the edge $bw$ to be covered by a dimer computed with respect to the Gibbs measure on the full-plane t-embedding corresponding to $K_\Tt^{-1}$. Another example is an expression of the form $\exp\left( -i\Im(\int_{p_0}^b\alpha_0 + \int_{p_0}^w\alpha_0) \right)K_\Tt^{-1}(b,w)$. Note that by specifying a path between $p_0$ and $w$ we also specify a branch of the multivalued form $\omega$ (see~\eqref{eq:def_of_omega}), which in turn determines a branch of $\Tt$. If we assume that $\dist(b,w)<\lambda$, then we can specify the path between $p_0$ and $b$ as well by concatenating the path between $p_0$ and $w$ with the unique geodesic connecting $w$ and $b$. Under these conventions it is easy to verify that $\exp\left( -i\Im(\int_{p_0}^b\alpha_0 + \int_{p_0}^w\alpha_0) \right)K_\Tt^{-1}(b,w)$ does not depend on the choice of the path between $p_0$ and $w$.

Let $\Mm_g^{t,(0,1)}$ denote the moduli space of Torelli marked curves of genus $g$ with a fixed anti-holomorphic $(0,1)$-form, see Section~\ref{subsec:teichmuller_space} for details. In what follows we will often be using the notation $d$ for $\dist$ and $\underline{p}$ for $\{ p_1,\dots, p_{2g-2} \}$ to shorten the formulae.

\subsection{Construction of the inverse kernel \texorpdfstring{$K_\alpha^{-1}$}{Kalpha-1}}
\label{subsec:Kalphainverse}

Our strategy of constructing $K_\alpha^{-1}$ is essentially the same as in the proof of Lemma~\ref{eq:Kinv_via_S_one_puncture} (see also~\cite[Section~5]{DubedatFamiliesOfCR} where $K_\alpha^{-1}$ is constructed on a torus). It consists of the following steps:
\begin{enumerate}
  \item Construct an approximation $S_\alpha$ of the inverting kernel by modifying the function $\Dd_{\alpha + \alpha_G}^{-1}(b,w) + (\eta_b\eta_w)^2\overline{\Dd_{-\alpha + \alpha_G}^{-1}(b,w)}$ locally near its singularities using discrete holomorphic objects such as $K_\Tt^{-1}$ (near the diagonal) and $[\Tt(b)^s]$ (near conical singularities).
  \item Estimate $T= K_\alpha S_\alpha - \Id$ and $S_\alpha T$.
  \item Define $K_\alpha^{-1} = S_\alpha(\Id + T)^{-1}$ and use the estimates on $T$ to estimate the difference between $K_\alpha^{-1}$ and $S_\alpha$.
\end{enumerate}

We now describe the kernel $S_\alpha(b,w)$. Let $\beta>0$ the minimum of the exponents from Theorem~\ref{thmas:parametrix} and Lemma~\ref{lemma:kernel_of_G4pi}. We fix the mesoscopic scales 
\begin{equation}
  \label{eq:def_of_nu}
  \delta^\beta\ll \nu_1 \ll \nu_2 \ll \nu_3 \ll 1
\end{equation}
specified later. Given $b,w$ such that $d(b,w)\leq \min( d(b,\underline{p}), d(w,\underline{p}) )$ let us set
\begin{equation}
  \label{eq:def_of_sign}
  (-1)^{bw\cdot \gamma} = (-1)^{l(bw)\cdot (\gamma_1+\ldots+\gamma_{g-1})}
\end{equation}
where $l(bw)$ is the geodesic between $b$ and $w$ and $\cdot$ denotes the algebraic intersection number.
We consider the following cases.

\emph{Case 1: Definition of $S_\alpha(b,w)$ when $d(w,\{ p_1,\dots, p_{2g-2} \})\geq \nu_2$.}

\begin{enumerate}
  \item Assume that $d(b,\{ w,p_1,\dots, p_{2g-2} \})\geq \nu_1$. Then we define
    \begin{equation*}
      S_\alpha(b,w) = \frac{1}{2}\left[  \Dd_{\alpha+\alpha_G}^{-1}(b,w) + (\eta_b\eta_w)^2\overline{\Dd_{-\alpha + \alpha_G}^{-1}(b,w)}\right].
    \end{equation*}
  \item Assume that $d(b, w)\leq \nu_1$. Choose a holomorphic coordinate $z$ defined in $B(w,\lambda)$ and write $\alpha = a(z)\,d\bar z$ and $\alpha_c = a(z(w))\,d\bar z$. Note that $\alpha_c$ is closed and $\alpha_c - \alpha$ is small in the $\nu_1$-neighborhood of $w$. We define
    \begin{multline}
      \label{eq:Salpha_case12}
      S_\alpha(b,w) 
      =(-1)^{bw\cdot \gamma}\exp\left[ -i\Im (2\int_w^b (\alpha_c + \alpha_G) + \int_{p_0}^b \alpha_0 + \int_{p_0}^w\alpha_0) \right] K^{-1}_\Tt(b, w) + \\
      + \frac{(-1)^{bw\cdot \gamma}}2 \exp\left[ -i\Im \int_w^b (2\alpha_c + 2\alpha_G + \alpha_0)  \right]\cdot (r_{\alpha+\alpha_G}(w) + (\eta_b\eta_w)^2\overline{r_{-\alpha + \alpha_G}(w)}).
    \end{multline}
    where the integration in $\int_w^b$ is taken along the geodesic between $b$ and $w$ (or any other homotopic path).
  \item Assume that $d(b,p_i)\leq \nu_1$ for some $i\in \{ 1,\dots, 2g-2 \}$. Choose a holomorphic coordinate $z$ defined in $B(p_i,\lambda)$ and write $\alpha = a(z)\,d\bar z$ and $\alpha_c = a(z(w))\,d\bar z$. Define 
    \[
      \widetilde{\Dd}^{-1}_\alpha(b,w) = [\Tt(b)^{-1/4}]\cdot \lim\limits_{p\to p_i} \Dd_\alpha^{-1}(p,w)(\Tt(p)- \Tt(p_i))^{1/4}
    \]
    and
    \begin{multline}
      \label{eq:b_near_pi1}
      S_\alpha(b,w) =\\= \frac{1}{2}\exp\left( -i\Im\int_{p_i}^b(2\alpha_c + 2\alpha_G + \alpha_0) \right) \left[ \widetilde{\Dd}^{-1}_{\alpha + \alpha_G}(b,w) + (\eta_b\eta_w)^2\overline{\widetilde{\Dd}^{-1}_{-\alpha + \alpha_G}(b,w)} \right]
    \end{multline}
    where the integration is taken along the geodesic.
\end{enumerate}

\emph{Case 2: Definition of $S_\alpha(b,w)$ when $d(w,p_i)\leq \nu_2$ for some $i\in \{ 1,\dots, 2g-2 \}$.} Define
\[
  \Dd_{i,\alpha}^{-1}(p,w) = \Dd_{\alpha}^{-1}(p,w)\cdot[(\Tt(w))^{-1/4}](\Tt(w) - \Tt(p_i))^{1/4}.
\]

\begin{enumerate}
  \item Assume that $d(b,\{p_1,\dots, p_{2g-2} \})\geq \nu_3$. Then we define
    \begin{equation}
      \label{eq:Salpha_case21}
      S_\alpha(b,w) = \frac{1}{2}\left[  \Dd_{i,\alpha+\alpha_G}^{-1}(b,w) + (\eta_b\eta_w)^2\overline{\Dd_{i,-\alpha + \alpha_G}^{-1}(b,w)}\right].
    \end{equation}
  \item Assume that $d(b, p_i)\leq \nu_3$. Choose a holomorphic coordinate $z$ defined in $B(p_i,\lambda)$ and write $\alpha = a(z)\,d\bar z$ and $\alpha_c = a(z(p_i))\,d\bar z$. Define
    \begin{multline}
      \label{eq:Salpha_case22}
      S_\alpha(b,w) =\exp\left(-i\Im (2\int_w^b (\alpha_c + \alpha_G) + \int_{p_0}^b\alpha_0 + \int_{p_0}^w\alpha_0)\right)\cdot K_{\Tt,1/2}^{-1}(b,w) + \\
      + \frac{1}{2}\exp\left[ -i\Im \int_w^b (2\alpha_c + 2\alpha_G + \alpha_0)  \right]\cdot\\\cdot \Big(r_{\alpha+\alpha_G}(w)(\Tt(w) - \Tt(p_i))^{1/2}[\Tt(w)^{-1/4}][\Tt(b)^{-1/4}] + \\
      + (\eta_b\eta_w)^2\overline{r_{-\alpha + \alpha_G}(w)(\Tt(w) - \Tt(p_i))^{1/2}[\Tt(w)^{-1/4}][\Tt(b)^{-1/4}]}\Big)
    \end{multline}
    where the integration in $\int_w^b$ is taken along the geodesic.
  \item Assume that $d(b,p_j)\leq \nu_3$ for some $j\neq i$. Choose a holomorphic coordinate $z$ defined in $B(p_j,\lambda)$ and write $\alpha = a(z)\,d\bar z$ and $\alpha_c = a(z(p_j))\,d\bar z$. Define 
    \[
      \widetilde{\Dd}^{-1}_{i, \alpha}(b,w) = [\Tt(b)^{-1/4}]\cdot\lim\limits_{p\to p_j} \Dd_{i,\alpha}^{-1}(p,w)(\Tt(p) - \Tt(p_j))^{1/4}
    \]
    and
    \begin{multline}
      \label{eq:Salpha_case23}
      S_\alpha(b,w) = \\
      = \frac{1}{2}\exp\left( -i\Im\int_{p_j}^b(2\alpha_c + 2\alpha_G + \alpha_0) \right)\cdot\left[ \widetilde{\Dd}^{-1}_{i,\alpha + \alpha_G}(b,w) + (\eta_b\eta_w)^2\overline{\widetilde{\Dd}^{-1}_{i,-\alpha + \alpha_G}(b,w)} \right]
    \end{multline}
    where the integration is taken along the geodesic.
\end{enumerate}

Recall that
\begin{equation}
  \label{eq:def_of_T}
  T = K_\alpha S_\alpha - \Id.
\end{equation}
We will now provide a number of estimates on $T(u,w)$ depending on the relative positions of $u,w$ and the conical singularities. As before, we will use the symbol $A\lesssim B$ when $|A|\leq \cst |B|$ for some constant $\cst>0$. Recall that we use $\underline{p}$ to denote $\{ p_1,\dots,p_{2g-2} \}$. Recall that $\beta>0$ is the minimum of exponents from Theorem~\ref{thmas:parametrix} and Lemma~\ref{lemma:kernel_of_G4pi}. We will now list all the estimates following the same order as in the definition of $S_\alpha$.

\emph{Case 1: Bound of $T(u,w)$ when $\dist(w,\{ p_1,\dots, p_{2g-2} \})\geq \nu_2$.}

\begin{enumerate}
  \item Assume that $\dist(u,\{ w,p_1,\dots, p_{2g-2} \})\geq \nu_1+\delta$. In this case
    \begin{equation}
      \label{eq:bound_on_T1}
      T(u,w) \lesssim \left(\frac{\delta^3}{d(u,w)^2} + \frac{\delta^3}{d(u,\underline{p})^2}\right)\frac{\sqrt{d(u, \underline{p}) + d(w,\underline{p})}}{d(u,w)\sqrt[4]{d(u,\underline{p})d(w,\underline{p})}}
    \end{equation}
  \item Assume that $\dist(u,w) \in [\nu_1-\delta,\nu_1+\delta]$. In this case
    \begin{equation}
      \label{eq:bound_on_T2}
      T(u,w) \lesssim \frac{\delta^{1+\beta}}{\nu_1^{1+\beta}} + \frac{\delta\nu_1}{d(w,\underline p)^{3/2}}.
    \end{equation}
  \item Assume that $d(u,w)\leq \nu_1-\delta$. In this case
    \begin{equation}
      \label{eq:bound_on_T3}
      T(u,w) \lesssim \frac{\delta^2}{d(w,\underline p)}.
    \end{equation}
  \item Assume that $\dist(u,p_i)\in (\nu_1-\delta,\nu_1+\delta)$ for some $i\in \{ 1,\dots, 2g-2 \}$. In this case
    \begin{equation}
      \label{eq:bound_on_T4}
      T(u,w) \lesssim \frac{\delta^{3/2}}{\nu_1^{3/4} d(w,\underline p)^{3/4}} + \frac{\delta\nu_1^{1/4}}{d(w,\underline p)^{3/4}}.
    \end{equation}
  \item Assume that $\dist(u,p_i)\leq \nu_1-\delta$ for some $i\in \{ 1,\dots, 2g-2 \}$. In this case
    \begin{equation}
      \label{eq:bound_on_T5}
      T(u,w)\lesssim \frac{\delta^2}{d(u,p_i)^{1/4}d(w,\underline p)^{3/4}}.
    \end{equation}
\end{enumerate}

\emph{Case 2: Bound on $T(u,w)$ when $\dist(w,p_i)\leq \nu_2$ for some $i\in \{ 1,\dots, 2g-2 \}$.} 
\begin{enumerate}
  \item Assume that $\dist(u,\{p_1,\dots, p_{2g-2} \})\geq \nu_3+\delta$. In this case
    \begin{equation}
      \label{eq:bound_on_T6}
      T(u,w) \lesssim \frac{\delta^3}{d(u,\underline{p})^2\sqrt{d(u,p_i)}\sqrt[4]{d(u,\underline{p})}\sqrt[4]{d(w,p_i)}}
    \end{equation}
  \item Assume that $\dist(u, p_i)\in [\nu_3-\delta, \nu_3+\delta]$. In this case
    \begin{equation}
      \label{eq:bound_on_T7}
      T(u,w) \lesssim \frac{\delta\nu_3^{1/4}}{d(w, p_i)^{1/4}} + \frac{\delta^{1+\beta} }{\nu_3^{1+\beta}} + \frac{\delta^{3/2}}{d(w,p_i)^{1/4}\nu_3^{3/4}}.
    \end{equation}
  \item Assume that $\dist(u, p_i)\leq \nu_3-\delta$. In this case
    \begin{equation}
      \label{eq:bound_on_T8}
      T(u,w) \lesssim \frac{\delta^2\sqrt{d(u,p_i) + d(w,p_i)}}{(d(u,w)+\delta)\sqrt[4]{d(w,p_i)}\sqrt[4]{d(u,p_i)}}
    \end{equation}
  \item Assume that $\dist(u,p_j)\in (\nu_3-\delta,\nu_3+\delta)$ for some $j\neq i$. In this case
    \begin{equation}
      \label{eq:bound_on_T9}
      T(u,w) \lesssim \frac{\delta\nu_3^{1/4}}{d(w,p_i)^{1/4}} + \frac{\delta^{3/2}}{\nu_3^{3/4}d(w,p_i)^{1/4}}.
    \end{equation}
  \item Assume that $\dist(u,p_j)\leq \nu_3-\delta$ for some $j\neq i$. In this case
    \begin{equation}
      \label{eq:bound_on_T10}
      T(u,w) \lesssim \frac{\delta^2 }{d(u,p_j)^{1/4}d(w,p_i)^{1/4}}.
    \end{equation}
\end{enumerate}

In the course of proving these estimates we will often use the following bound which is straightforward:

\begin{lemma}
  \label{lemma:bound_on_Ddalpha}
  Let $\Kk\subset \Mm_g^{t,(0,1)}$ be a compact subset such that for any point $[\Sigma, A,B,\alpha]\in \Kk$ we have $\theta[\alpha_G\pm\alpha_h](0)\neq 0$. Let $R>0$ be given. Then there exists a constant $C>0$ depending only on $\Kk$ and $R$ such that whenever $\alpha = \dbar \vphi + \alpha_h$ is such that 
  \[
    (\Sigma, A_1,\dots, A_g,B_1,\dots,B_g,\alpha_h)\in \Kk, \qquad \|\vphi\|_{\mC^2}\leq R
  \]
  and $b,w\in \Sigma$ are such that $\dist(b,w) \geq \delta$ we have
    \begin{equation}
      \label{eq:bound_on_Ddinv}
      |\Dd_{\alpha_G+\alpha}^{-1}(b,w)| + |\Dd_{\alpha_G-\alpha}^{-1}(b,w)| + |S_\alpha(b,w)| \leq C^2\frac{\sqrt{d(b,\underline{p}) + d(w,\underline{p})}}{d(b,w)\sqrt[4]{d(b,\underline{p})d(w,\underline{p})}}
    \end{equation}
\end{lemma}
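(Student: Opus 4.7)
The plan is to chain two estimates: the second inequality $|S_\alpha(b,w)|\le C\frac{\sqrt{d(b,w)+d(b,\underline p)+d(w,\underline p)}}{d(b,w)\sqrt[4]{d(b,\underline p)d(w,\underline p)}}$ is obtained by verifying the bound case by case through the definition of $S_\alpha$ in Section~\ref{subsec:Kalphainverse}, and the first inequality $|\Dd_\alpha^{-1}(b,w)|\le C|S_\alpha(b,w)|$ follows along the way because $S_\alpha$ was constructed precisely to capture the leading singular behaviour of $\Dd_\alpha^{-1}$ at the diagonal and at the conical singularities.

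First I would harvest uniform constants from the compactness hypothesis. Since $(\Sigma,A,B,\alpha_h)\mapsto\theta[\alpha_h](0)$ is continuous and non-vanishing on $\Kk$, there is $c>0$ with $|\theta[\alpha_h](0)|\ge c$ uniformly. The remaining ingredients of the explicit formula~\eqref{eq:def_of_S} --- the theta function $\theta[\alpha_h]$, the prime-form numerator, and the spinor $\varsigma$ from~\eqref{eq:def_of_varsigma} --- depend continuously on $\Kk$ and are smooth off the diagonal, so they are uniformly controlled. The exponential factor $\exp(\vphi(q)-\vphi(p)-2i\int_q^p\Im\alpha_h)$ is uniformly bounded above and below thanks to $\|\vphi\|_{\mC^0}\le R$ and the compactness of $\{\alpha_h\}$. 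The only nontrivially scaling piece is the denominator $\sqrt{\omega_0(p)}\sqrt{\omega_0(q)}$: near a singularity $p_j$, the relation $ds^2=|\omega_0|^2=|d(z_j^2)|^2$ gives $|\omega_0(p)|\asymp d(p,p_j)^{1/2}$, which is exactly the source of the factor $d(\cdot,\underline p)^{1/4}$ on the right-hand side of the claimed bound.

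Next I would verify the $|S_\alpha|$-bound by running through the cases. Away from both the diagonal and $\underline p$ (Cases 1.(1) and 2.(1)), $S_\alpha$ coincides, up to origami phases and a bounded $[\Tt(w)^{-1/4}](\Tt(w)-\Tt(p_i))^{1/4}$ modifier, with $\Dd_{\pm\alpha+\alpha_G}^{-1}$, and~\eqref{eq:def_of_S} gives the estimate once one uses $|\pf(p,q)\sqrt{\omega_0(p)}\sqrt{\omega_0(q)}|\asymp d(p,q)\,d(p,\underline p)^{1/4}d(q,\underline p)^{1/4}$. Near the diagonal (Cases 1.(2) and 2.(2)), $S_\alpha$ is written as $K_\Tt^{-1}$ or $K_{\Tt,1/2}^{-1}$ plus a bounded correction, and the $1/d(b,w)$-type bounds from Theorem~\ref{thmas:parametrix} and~\eqref{eq:estimate_on_K_in_cone} yield the result immediately. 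In the mixed regimes in which $b$ lies near some singularity (Cases 1.(3), 2.(4), 2.(5)), the asymptotics of $[\Tt(\cdot)^{\pm 1/4}]$ from Lemma~\ref{lemma:fs} behave like $d(\cdot,p_j)^{\pm 1/2}$ and compensate, up to uniform constants, the $\sqrt{\omega_0}$ factors that would otherwise blow up.

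The first inequality $|\Dd_\alpha^{-1}|\le C|S_\alpha|$ is almost built into the construction: in the generic cases $S_\alpha$ equals $\tfrac{1}{2}(\Dd_{\alpha+\alpha_G}^{-1}+(\eta_b\eta_w)^2\overline{\Dd_{-\alpha+\alpha_G}^{-1}})$ times a bounded factor, and $\Dd_\alpha^{-1}$ differs from $\Dd_{\alpha+\alpha_G}^{-1}$ only through the uniformly controlled replacement of $\alpha_h$ by $\alpha_h+\alpha_G$ in the theta characteristic (where $\alpha_G$ itself lies in a compact family determined by $\Kk$). In the remaining regimes, the diagonal expansions of Lemmas~\ref{lemma:diagonal_expansion_of_Salpha} and~\ref{lemma:bw_near_singularity_expansion_of_Salpha} show that $\Dd_\alpha^{-1}$ and $S_\alpha$ share the same leading singular behaviour, so their ratio stays bounded. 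The main obstacle I anticipate is making all constants genuinely uniform across $\Kk$ in the near-singularity regime: the error terms in Lemmas~\ref{lemma:diagonal_expansion_of_Salpha},~\ref{lemma:bw_near_singularity_expansion_of_Salpha} and the construction of the $[\Tt^{\pm 1/4}]$ modifiers must be shown to depend continuously on the Torelli-marked curve, which ultimately reduces to the classical continuity, over $\Mm_g^{t,(0,1)}$, of the period matrix, the Abel map, and the prime form.
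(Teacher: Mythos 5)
The paper marks this lemma ``straightforward'' and gives no proof, so there is no argument to compare against; what matters is whether your sketch is in fact correct. The overall architecture (uniform bounds from compactness, local analysis near the diagonal and near the conical singularities) is right, but there is a genuine quantitative error in the near-singularity step.

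You claim $|\pf(p,q)\sqrt{\omega_0(p)}\sqrt{\omega_0(q)}|\asymp d(p,q)\,d(p,\underline p)^{1/4}d(q,\underline p)^{1/4}$. This is \emph{not} the correct behaviour near a cone point $p_j$. Working in the flat coordinate $z_j$ (where $ds^2=|d(z_j^2)|^2$), the prime-form expansion $\pf(p,q)\approx (z_j(p)-z_j(q))/(\sqrt{dz_j}(p)\sqrt{dz_j}(q))$ together with $\omega_0=d(z_j^2)=2z_j\,dz_j$ gives
\[
\pf(p,q)\sqrt{\omega_0(p)}\sqrt{\omega_0(q)}\ \asymp\ (z_j(p)-z_j(q))\sqrt{z_j(p)}\sqrt{z_j(q)},
\]
so that $|\pf\sqrt{\omega_0}\sqrt{\omega_0}|\asymp |z_j(p)-z_j(q)|\,d(p,\underline p)^{1/4}d(q,\underline p)^{1/4}$. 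The missing step is that the \emph{Euclidean} distance in the $z_j$-chart is not the \emph{metric} distance on the cone: one has $|z_j(p)-z_j(q)|\asymp d(p,q)/\sqrt{d(p,q)+d(p,p_j)+d(q,p_j)}$ (near $p_j$, the conformal factor $|dz_j^2/dz_j|=2|z_j|\asymp\sqrt{d(\cdot,p_j)}$ dilates distances by roughly $\sqrt{r}$). That denominator is exactly the source of the factor $\sqrt{d(b,w)+d(b,\underline p)+d(w,\underline p)}$ in the numerator of~\eqref{eq:bound_on_Ddinv}. With your version of the estimate, the resulting bound on $|S_\alpha|$ is $\lesssim \frac{1}{d(b,w)\,(d(b,\underline p)d(w,\underline p))^{1/4}}$, which is \emph{larger} than the right-hand side of~\eqref{eq:bound_on_Ddinv} by the factor $1/\sqrt{d(b,w)+d(b,\underline p)+d(w,\underline p)}$. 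This factor is not uniformly bounded: in Cases 1.(1) and 2.(1) one may have $b,w$ both at distance comparable to $\nu_1$ or $\nu_2$ from the same cone point, and these scales go to zero with $\delta$. So the argument as written does not establish the stated inequality.

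A secondary remark: your justification of the first inequality $|\Dd_\alpha^{-1}(b,w)|\le C|S_\alpha(b,w)|$ is also shaky. In Case 1.(2) the kernel $S_\alpha$ is built from $K_\Tt^{-1}$, which by Theorem~\ref{thmas:parametrix} is a projection $\Pr[\tfrac{1}{\pi i(b-w)},\eta_b\eta_w\RR]$ plus a small error; this projection may be much smaller than $\tfrac{1}{|b-w|}$, while $|\Dd_\alpha^{-1}(b,w)|$ is genuinely of that order. So the chain $|\Dd_\alpha^{-1}|\le C|S_\alpha|$ is not automatic from ``they share the same leading singularity''; it is safest to read the displayed inequality as asserting that $|\Dd_\alpha^{-1}|$ and $|S_\alpha|$ are \emph{separately} bounded by the right-hand side, and to argue each bound directly from the explicit formula for $\Dd_\alpha^{-1}$, Theorem~\ref{thmas:parametrix}, and the corrected prime-form estimate above.
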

\begin{proof}
  The estimate on $|\Dd^{-1}_{\alpha_G\pm\alpha}(b,w)|$ follows immediately from the definition of it given in Proposition~\ref{prop:def_of_S}. Indeed, recall that locally we have
  \begin{equation}
    \label{eq:bDal1}
      \Dd_\alpha^{-1}(p,q) = \frac{\theta[\alpha_h](\Aa(p-q))}{\pi i \theta[\alpha_h](0)\cdot \pf(p,q) \sqrt{\omega_0(p)}\sqrt{\omega_0(q)}} \cdot \exp\left(\vphi(q) - \vphi(p)-2i\int\limits_q^p\Im \alpha_h\right). 
  \end{equation}
  To estimate the right-hand side of~\eqref{eq:bDal1} we can bound $\theta[\alpha_h]$ in the nominator and denominator by a constant, and do the same with the exponential factor. To bound $\pf(p,q) \sqrt{\omega_0(p)}\sqrt{\omega_0(q)}$ we may use a local coordinate $z$ chosen in a neighborhood of $\{ p,q \}$ and write
  \[
    \pf(p,q) \sqrt{\omega_0(p)}\sqrt{\omega_0(q)} = \left(\pf(p,q)\sqrt{dz(p)}\sqrt{dz(q)}\right) \cdot \left(\sqrt{\frac{\omega_0}{dz}(p)}\sqrt{\frac{\omega_0}{dz}(q)}\right).
  \]
  Each pair of brackets on the right-hand side now contains a function. To estimate these functions we notice that, if a conical singularity $p_j$ is close to $p$, then
  \[
    |\tfrac{\omega_0}{dz}(p)|\asymp\sqrt{\dist(p,p_j)},
  \]
  and we also have
  \[
    |\pf(p,q)\sqrt{dz(p)}\sqrt{dz(q)}|\asymp \frac{d(b,w)}{\sqrt{d(b,\underline p) + d(w,\underline p)}}.
  \]
  since $\pf$ has a simple zero along the diagonal. The two estimates above prove the desired estimate for $\Dd^{-1}_{\alpha_G\pm\alpha}$. Note furthermore that our arguments imply that the right-hand side of~\eqref{eq:bound_on_Ddinv} is comparable with $\Dd^{-1}_{\alpha_G\pm\alpha}$ on any compact outside of the zero locus of $\Dd^{-1}_{\alpha_G\pm\alpha}$ (given by the zero locus of $\theta[\alpha_h](\Aa(p-q))$).

  Let us now estimate $|S_\alpha(b,w)|$. It is enough to estimate it when $b$ or $w$ approach conical singularities and/or each other. These are exactly the regimes when $S_\alpha(p,q)$ is different from $S_\alpha(b,w) = \frac{1}{2}\left[  \Dd_{\alpha+\alpha_G}^{-1}(b,w) + (\eta_b\eta_w)^2\overline{\Dd_{-\alpha + \alpha_G}^{-1}(b,w)}\right]$. Recall that to construct $S_\alpha$ in each particular regime we consider the main term in the asymptotics of $\Dd^{-1}_{\alpha_G\pm\alpha}$ and replace it with the corresponding discrete holomorphic function constructed in Sections~\ref{subsec:multivalued},~\ref{subsec:graph_on_a_cone}. The estimates on these discrete holomorphic functions proven in Lemma~\ref{lemma:fs}, Lemma~\ref{lemma:Ksinv_a_priori_bound} and Lemma~\ref{lemma:kernel_of_G4pi} imply that they are comparable with their continuous analogs. This shows in particular that $|S_\alpha(b,w)|$ is bounded by a constant times $|\Dd^{-1}_{\alpha_G+\alpha}(b,w)| + |\Dd^{-1}_{\alpha_G-\alpha}(b,w)|$ if only $\Aa(b-w)$ stays on a definite distance from the zero locus of $\theta[\alpha_G\pm\alpha_h]$.

  Finally, let us fix $w$ and assume that $b$ is close to a zero $p$ of, say, $\Dd_{\alpha_G + \alpha}^{-1}(b,w)$. Let $d$ be the order of this zero. Then we the same arguments as above allows to bound $|S_\alpha(b,w)|$ by $|(z(b)-z(p))^{-d}\Dd_{\alpha_G +\alpha}^{-1}(b,w)|$ which is still bounded by the right-hand side of~\eqref{eq:bound_on_Ddinv} for the same reason as in the beginning of the proof: the only change required is to notice that $(z(b) - z(p))^{-p}\theta[\alpha_G+\alpha_h](\Aa(b-w))$ is bounded from above.
\end{proof}

\begin{lemma}
  \label{lemma:bounds_on_T}
  Let $\Kk\subset \Mm_g^{t,(0,1)}$ be a compact subset such that for any point $[\Sigma, A,B,\alpha]\in \Kk$ we have $\theta[\alpha](0)\neq 0$. Assume that $\alpha = \dbar \vphi + \alpha_h$ is such that
  \[
    (\Sigma, A_1,\dots, A_g, B_1,\dots, B_g,\pm\alpha_h+\alpha_G)\in \Kk.
  \]
  Then the inequalities~\eqref{eq:bound_on_T1}--\eqref{eq:bound_on_T10} hold with some constants depending on $\lambda, \|\vphi\|_{\mC^2(\Sigma)}\leq R$ and $\Kk$ only.
\end{lemma}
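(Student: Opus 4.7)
The plan is to bound $T(u,w) = (K_\alpha S_\alpha)(u,w) - \indic_{u=w}$ case by case, following the same partition of the $(u,w)$-plane that was used to define $S_\alpha$. The master tool is Lemma~\ref{lemma:Kalpha_and_dbar}, which identifies $K_\alpha$ applied to a smooth multivalued function $f$ at a white vertex $u$ with $4i\mu_u\cdot (\Dd_\alpha f)(u)/\bar\omega_0(u)$ up to an error of order $\delta^3$ times the $\mC^2$-norm of $f$ on the star of $u$ (weighted by inverse powers of $d(u,\underline{p})$). Correspondingly, Theorem~\ref{thmas:parametrix} and Lemma~\ref{lemma:kernel_of_G4pi} give the exact identities $K_\Tt K^{-1}_\Tt = \Id$ and $K_{\Tt,1/2}K^{-1}_{\Tt,1/2} = \Id$ on the local model t-embedding and the local model cone, respectively. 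The gauge factor $\exp(2i\Im\int_u^b\alpha)$ in $K_\alpha$ matches the gauge factor embedded in the definition of $S_\alpha$ provided we make a consistent choice of integration paths along edges.

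First I would handle the bulk regimes, Case~1(i) and Case~2(i). Here $S_\alpha(\cdot,w)$ is given by a single smooth formula in a neighborhood of $u$, and by Proposition~\ref{prop:def_of_S} each of the two summands lies in the kernel of $\Dd_{\pm\alpha+\alpha_G}$ applied at $u$ (the kernel $\Dd^{-1}_{i,\alpha}$ used in Case~2 is multiplied by holomorphic factors, so this remains true). Thus Lemma~\ref{lemma:Kalpha_and_dbar} immediately gives $T(u,w) = O(\delta^3\|S_\alpha(\cdot,w)\|_{\mC^2(B(u,\delta))})$. The $\mC^2$-norm is controlled by differentiating the explicit formula~\eqref{eq:def_of_S}: away from the diagonal and from $\underline{p}$, each derivative in the first variable picks up at most $d(u,w)^{-1} + d(u,\underline{p})^{-1}$; near a puncture, the kernel picks up the square-root blow-up of Lemma~\ref{lemma:bw_near_singularity_expansion_of_Salpha}. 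Collecting these yields~\eqref{eq:bound_on_T1} and~\eqref{eq:bound_on_T6}.

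Next, for the genuinely near-diagonal regimes Case~1(iii) and Case~2(iii), the approximate kernel equals, up to a gauge factor, $K^{-1}_\Tt(b,w)$ or $K^{-1}_{\Tt,1/2}(b,w)$ plus the $r$-correction. I would write $K_\alpha(u,b) = e^{2i\Im\int_u^b\alpha}K(u,b)$, factor out $\exp(-i\Im\int_w^u(2\alpha+2\alpha_G+\alpha_0))$ from the sum over $b\sim u$, and Taylor-expand the resulting exponentials in $\int_u^b(\alpha+\alpha_G+\frac{\alpha_0}{2})$. The zeroth-order term produces exactly $K_\Tt K^{-1}_\Tt(u,w) = \indic_{u=w}$ (and similarly in the cone). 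The first-order term is precisely the expression that is cancelled by the $r_{\pm\alpha+\alpha_G}(w)$ contribution built into $S_\alpha$: this uses Lemma~\ref{lemma:diagonal_expansion_of_Salpha} for the near-diagonal constant of $\Dd^{-1}$ and Assumption~\ref{item:intro_dbar_adapted}. What survives is the second-order Taylor remainder, bounded by $\delta^2\|\alpha+\alpha_G+\alpha_0\|_{\mC^1}\cdot d(w,\underline{p})^{-3/2}$, giving~\eqref{eq:bound_on_T3}; in the cone case one additionally needs the fact that $[\Tt(w)^{1/4}][\Tt(b)^{-1/4}]$ is, for $b\sim u$ at scale $\delta$, the correct multivalued substitute for $\sqrt{z_i(w)}/\sqrt{z_i(b)}$ with a polynomial error, producing~\eqref{eq:bound_on_T8}.

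The main obstacle is the matching regimes~\eqref{eq:bound_on_T2},~\eqref{eq:bound_on_T4},~\eqref{eq:bound_on_T5},~\eqref{eq:bound_on_T7},~\eqref{eq:bound_on_T9},~\eqref{eq:bound_on_T10}, where $S_\alpha(\cdot,w)$ is given by two different formulas on the two sides of an interface that passes through the star of $u$. Here the error is not a failure of any holomorphicity equation but the discrepancy between the local parametrix and the global kernel evaluated at the matching scale. For~\eqref{eq:bound_on_T2} I would compare the asymptotic $K^{-1}_\Tt(b,w) = \Pr[(\pi i(b-w))^{-1},\eta_b\eta_w\RR] + O(\delta^\beta/d(b,w)^{1+\beta})$ from Theorem~\ref{thmas:parametrix} with the diagonal expansion $\Dd^{-1}_{\alpha+\alpha_G}(b,w) = \Pr[\cdots]/\pi i(\Tt(b)-\Tt(w)) + r_{\alpha+\alpha_G}(w) + O(d(b,w)/d(w,\underline{p})^{3/2})$ from Lemma~\ref{lemma:diagonal_expansion_of_Salpha}; the difference between the two kernels at scale $d(b,w)\sim\nu_1$ is therefore at most $\delta^\beta/\nu_1^{1+\beta} + \nu_1/\nu_2^{3/2}$, and summing $K_\alpha(u,b)$ times this over the $O(1)$ edges $b\sim u$ that cross the interface multiplies by $\delta$, yielding~\eqref{eq:bound_on_T2}. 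The bound~\eqref{eq:bound_on_T4} (resp.~\eqref{eq:bound_on_T9}) uses the refined asymptotic~\eqref{eq:asymp_w_near_sing} of Lemma~\ref{lemma:kernel_of_G4pi} (writing $K^{-1}_{\Tt,1/2}$ in terms of $[\Tt(\cdot)^{\pm 1/4}]$) to match against the singular part of $\widetilde{\Dd}^{-1}_{i,\alpha+\alpha_G}$ coming from the factor $[(\Tt(b)^2)^{-1/4}]$, while~\eqref{eq:bound_on_T5},~\eqref{eq:bound_on_T7},~\eqref{eq:bound_on_T10} reduce to applying Lemma~\ref{lemma:Kalpha_and_dbar} to a function that is holomorphic but has a polynomial prefactor with square-root singularities at a puncture, so that the $\mC^2$-bound has to be tracked with the extra $d(u,p_j)^{-1/4}$ weight. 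Throughout, the hypothesis $(\Sigma,A,B,\pm\alpha_h+\alpha_G)\in\Kk$ together with $\|\vphi\|_{\mC^2}\leq R$ keeps all implicit constants uniform.
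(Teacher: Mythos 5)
The paper's proof of this lemma is simply a list of citations — Theorem~\ref{thmas:parametrix}, Lemma~\ref{lemma:fs}, Lemma~\ref{lemma:kernel_of_G4pi}, Lemma~\ref{lemma:diagonal_expansion_of_Salpha}, Lemma~\ref{lemma:bw_near_singularity_expansion_of_Salpha}, Lemma~\ref{lemma:Kalpha_and_dbar}, Lemma~\ref{lemma:bound_on_Ddalpha} — followed by ``direct computations,'' and your write-up is a correct unwinding of exactly those computations: using Lemma~\ref{lemma:Kalpha_and_dbar} plus the Dirac equations from Proposition~\ref{prop:def_of_S} and Lemma~\ref{lemma:bound_on_Ddalpha} in the bulk, the telescoping identity $\sum_{b\sim u}K_\Tt(u,b)=0$ together with Assumption~\ref{item:intro_dbar_adapted} to absorb the $r$-constant near the diagonal, and the mismatch between the parametrix asymptotics (Theorem~\ref{thmas:parametrix}, Lemma~\ref{lemma:kernel_of_G4pi}/\ref{lemma:fs}) and the continuum expansions (Lemmas~\ref{lemma:diagonal_expansion_of_Salpha}, \ref{lemma:bw_near_singularity_expansion_of_Salpha}) at the interface annuli. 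This matches the paper's intent and tool set, so no further comparison is needed.
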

\begin{proof}
  We will discuss the inequalities in the same order they are presented above.

  \emph{Case 1: $\dist(w,\{ p_1,\dots, p_{2g-2} \})\geq \nu_2$.}

  \begin{enumerate}
    \item The bound~\eqref{eq:bound_on_T1} follows from Lemma~\ref{lemma:Kalpha_and_dbar} and direct estimates of the derivatives of $\Dd^{-1}_{\pm\alpha+\alpha_G}$ similar to Lemma~\ref{lemma:bound_on_Ddalpha}.

    \item\label{T_bound_item:case12} The bound~\eqref{eq:bound_on_T2} is given by $\delta$ times the mismatch between the continuous kernel used to define $S_\alpha$ when $\dist(b,w)\geq \nu_1$ and the patch used when $\dist(b,w)\leq \nu_1$. We start with the latter and transform it to the former in three steps. First, we replace $K_\Tt^{-1}$ with its continuous analog given in Theorem~\ref{thmas:parametrix}; this creates the error $O(\delta^\beta\nu_1^{-1-\beta})$. Then we replace $\alpha_c$ with $\alpha$. Recall that $\alpha = a(z)\,d\bar z$ and $\alpha_c = a(z(w))\,d\bar z$ where $z$ is a holomorphic local coordinate defined in a macroscopic neighborhood of $w$. Comparing the metrics $|dz|^2$ and $ds^2$ one can show that $a(z) - a(z(w)) = O(\nu_1 d(w,\underline p)^{-1/2})$ and the $|dz|^2$-length of the $ds^2$-geodesic between $b$ and $w$ is of order $O(\nu_1 d(w,\underline p)^{-1/2})$. Combining this with the fact that $K_\Tt^{-1}(b,w) = O(\nu_1^{-1})$ by Theorem~\ref{thmas:parametrix} and $r_{\pm\alpha+\alpha_G}(w) = O(d(w,\underline p)^{-1/2})$ we get the error $O(\nu_1d(w,\underline p)^{-1})$. Finally, we can compare the resulting expression (with $K_\Tt^{-1}$ and $\alpha_c$ being replaced) to the near diagonal expansion of the actual kernel given in Lemma~\ref{lemma:diagonal_expansion_of_Salpha}. The mismatch between these two expressions is of order $O(\nu_1d(w,\underline p)^{-3/2})$. Adding all these error terms and multiplying them by $\delta$ we get~\eqref{eq:bound_on_T2}.

    \item\label{T_bound_item:case13} Using the fact that $\alpha_c$ is a closed form it is easy to show that $(K_{\alpha_c}S_\alpha)(u,w) - \Id(u,w) = 0$. Thus, we have the bound
      \begin{equation}
        \label{eq:T_bound_case13}
        |T(u,w)|\leq |((K_\alpha - K_{\alpha_c})S_\alpha)(u,w)| \leq \max_{b\sim u}|K_\alpha(u,b) - K_{\alpha_c}(u,b)|\cdot \max_{b\sim u}|S_\alpha(b,w)|.
      \end{equation}
      Let $b$ be a vertex incident to $u$. Arguing as in the previous item (Item~\ref{T_bound_item:case12}) we can bound $\int_u^b(\int\alpha_c - \alpha) = O(\delta d(u,w)d(w,\underline p)^{-1})$ (here the integration is taken along the edge between $u$ and $b$ as in the definition~\eqref{eq:def_of_Kalpha} of $K_\alpha$) which implies that
      \[
        \max_{b\sim u}|K_\alpha(u,b) - K_{\alpha_c}(u,b)| \lesssim \frac{\delta d(u,w)}{d(w,\underline p)}\max_{b\sim u}|K_\alpha(u,b)| \lesssim \frac{\delta^2 d(u,w)}{d(w,\underline p)}.
      \]
      Estimating $K^{-1}_\Tt(b,w)$ using Theorem~\ref{thmas:parametrix} we also conclude that $\max_{b\sim u}|S_\alpha(b,w)| \lesssim d(u,w)^{-1}$, therefore
      \[
        |T(u,w)| \lesssim \frac{\delta^2}{d(w,\underline p)}.
      \]

    \item\label{T_bound_item:case14} The bound~\eqref{eq:bound_on_T4} is given by $\delta$ times the mismatch between the continuous kernel used to define $S_\alpha$ when $\dist(b,\underline p)\geq \nu_1$ and the patch~\eqref{eq:b_near_pi1} used when $\dist(b,\underline p)\leq \nu_1$. To compare the patch with the continuous kernel we first notice that $\Dd_\alpha^{-1}(p,w)(\Tt(p)- \Tt(p_i))^{1/4}$ is a smooth function of $p$ near $p_i$ with the derivative (taken with respect to any smooth local coordinate at $p$) being of order $d(w,\underline p)^{-3/4}$, thus replacing $\Dd_\alpha^{-1}(p,w)(\Tt(p)- \Tt(p_i))^{1/4}$ with its limit as $p\to p_i$ creates the error $O(\nu_1^{1/2}d(w,\underline p)^{-3/4})$ (recall that the distance to $p_i$ in our singular metric $ds^2$ scale as a square of the distance in a smooth metric) which results in the total cost of $\delta\nu_1^{1/4}d(w,\underline p)^{-3/4}$ after we multiply back by $(\Tt(p)- \Tt(p_i))^{-1/4}$ and also by $\delta$. Arguing in the same way we can see that removing the exponential factor creates the same error. Finally, replacing $(\Tt(b)- \Tt(p_i))^{-1/4}$ with its discrete companion $[\Tt(b)^{-1/4}]$ creates the error $O(\delta^{1/2}\nu_1^{-3/4})$ according to Lemma~\ref{lemma:fs}, which becomes $\delta^{3/2}\nu_1^{-3/4}d(w,\underline p)^{-3/4}$ when we multiply by $\delta$ and $\lim\limits_{p\to p_i}\Dd_\alpha^{-1}(p,w)(\Tt(p)- \Tt(p_i))^{1/4}$.

    \item\label{T_bound_item:case15} Arguing in the same way as in Item~\ref{T_bound_item:case13} above we arrive to the bound~\eqref{eq:T_bound_case13}. Comparing $\alpha$ and $\alpha_c$ as previously we conclude that
      \[
        \max_{b\sim u}|K_\alpha(u,b) - K_{\alpha_c}(u,b)| \lesssim \delta^2.
      \]
      When $b\sim u$ we have $S_\alpha(b,w) \lesssim d(u,p_i)^{-1/4}d(w,\underline p)^{-3/4}$. Plugging these two bounds to~\eqref{T_bound_item:case13} we get~\eqref{eq:bound_on_T5}.

      We argue in the same way as in Item~\ref{T_bound_item:case13} above, namely, replace $\alpha$ with its Taylor expansion at $u$ and calculate the error. In this case we have $\alpha = \left(\tfrac{\alpha}{d\overline\Tt}(u) + O(\delta d(u, p_i)^{-3/2})\right)\,d\overline \Tt$ and $|K_\alpha(u,b)S_\alpha(b,w)| \lesssim \delta d(w,\underline p)^{-3/4}d(u,p_i)^{-1/4}$, therefore, arguing as after~\eqref{eq:T_bound_case13} we get the bound $\delta^3d(w,\underline p)^{-3/4}d(u,p_i)^{-7/4}$

  \end{enumerate}

  \emph{Case 2: $\dist(w,p_i)\leq \nu_2$ for some $i\in \{ 1,\dots, 2g-2 \}$.}

  \begin{enumerate}
    \item The bound~\eqref{eq:bound_on_T6} follows from Lemma~\ref{lemma:Kalpha_and_dbar} and direct estimates of the derivatives of $\Dd^{-1}_{\pm\alpha+\alpha_G}$ similar to Lemma~\ref{lemma:bound_on_Ddalpha}; in fact, one can just replace $w$ with $p_i$ in the right-hand side of~\eqref{eq:bound_on_T1}.

    \item The bound~\eqref{eq:bound_on_T7} is equal to $\delta$ times the mismatch between two patches applied when $d(b,\underline{p})\geq \nu_3$ and $d(b,p_i)\leq \nu_3$ respectively. Let us consider the patch corresponding to $d(b,p_i)\geq \nu_3$, that is,~\eqref{eq:Salpha_case21}, and estimate the cost of transforming it into the patch~\eqref{eq:Salpha_case22}. We first apply Lemma~\ref{lemma:bw_near_singularity_expansion_of_Salpha} to replace $\Dd_{\pm \alpha + \alpha_G}^{-1}(b,w)$ in the definition of $\Dd_{\pm\alpha + \alpha_G, i}^{-1}$ with the sum of the two terms on the right-hand side of~\eqref{eq:bw_near_singularity_expansion_of_Salpha} which results in
      \begin{multline}
        \label{eq:case2_bound2}
        \Dd_{\pm\alpha+\alpha_G,i}^{-1}(b,w) = \exp(\cdots) \cdot  S_i(b,w)\cdot [(\Tt(w))^{-1/4}](\Tt(w) - \Tt(p_i))^{1/4} + \\
        + \exp(\cdots) \cdot \frac{r_\alpha(w)(\Tt(w) - \Tt(p_i))^{1/2}[\Tt(w)^{-1/4}]}{(\Tt(b) - \Tt(p_i))^{1/4}} + O(\nu_3^{1/4}d(w, p_i)^{-1/4})
      \end{multline}
      where $S_i$ is as defined in~\eqref{eq:def_of_SCc} before Lemma~\ref{lemma:bw_near_singularity_expansion_of_Salpha} and we omit the exponential factors for shortness. We pick the main terms on the right-hand side of~\eqref{eq:case2_bound2} and replace $S_i(b,w)$ with the discrete kernel $K_{\Tt,1/2}^{-1}(b,w)$ and $\frac{r_\alpha(w)(\Tt(w) - \Tt(p_i))^{1/2}[\Tt(w)^{-1/4}]}{(\Tt(b) - \Tt(p_i))^{1/4}}$ with the discrete holomorphic (in $b$) $r_\alpha(w)(\Tt(w) - \Tt(p_i))^{1/2}[\Tt(w)^{-1/4}][\Tt(b)^{-1/4}]$. The first substitute creates the error $O(\delta^\beta \nu_3^{-1-\beta})$ by the last item of Lemma~\ref{lemma:kernel_of_G4pi}, and the second substitute creates an error $O(\delta^{1/2}d(w,p_i)^{-1/4}\nu_3^{-3/4})$ by Lemma~\ref{lemma:fs} (recall that $r_{\pm\alpha+\alpha_G}(w) = O(d(w,p_i)^{-1/2})$). Finally, we have to replace $\alpha$ with $\alpha_c$ in the exponential factors. Comparing $\alpha$ with $\alpha_c$ as in Item~\ref{T_bound_item:case12} in Case~1 and estimating $K_{\Tt,1/2}^{-1}$ using Lemma~\ref{lemma:kernel_of_G4pi} we can bound the error by $O(\nu_3^{1/4}d(w,p_i)^{-1/4})$. Multiplying all these errors by $\delta$ and adding them we get~\eqref{eq:bound_on_T7}.

    \item We prove the bound~\eqref{eq:bound_on_T8} following the same arguments as in Item~\ref{T_bound_item:case15} in Case~1, where we proved~\eqref{eq:bound_on_T5}.

    \item The bound~\eqref{eq:bound_on_T9} is equal to $\delta$ times the mismatch between two patches applied when $d(b,\underline{p})\geq \nu_3$ and $d(b,p_j)\leq \nu_3$ respectively. The latter mismatch can be calculated by estimating the error created by removing the exponential factor and then comparing $\Dd_{\pm\alpha +\alpha_G,i}^{-1}(b,w)$ with $\widetilde{\Dd}_{\pm\alpha+\alpha_G,i}^{-1}(b,w)$. To transform the latter into the former we must first replace $\lim\limits_{p\to p_j} \Dd_{i,\alpha}^{-1}(p,w)(\Tt(p) - \Tt(p_j))^{1/4}$ with $\Dd_{i,\alpha}^{-1}(b,w)(\Tt(b) - \Tt(p_j))^{1/4}$ and then $[\Tt(b)^{-1/4}]$ with $(\Tt(b) - \Tt(p_j))^{-1/4}$. Arguing as in Item~\ref{T_bound_item:case14} in Case~1 we conclude that the first replacement creates the error $O(\nu_3^{1/4}d(w,p_i)^{-1/4})$, while the second replacement creates the error $O(\delta^{1/2}\nu_3^{-3/4}d(w,p_i)^{-1/4})$. Finally, removing the exponential factor costs $O(\nu_3^{1/4}d(w,p_i)^{-1/4})$. Multiplying by $\delta$ and adding the errors we get~\eqref{eq:bound_on_T9}.

    \item We prove the bound~\eqref{eq:bound_on_T10} following the same arguments as in Item~\label{T_bound_item:case15} in Case~1, where we proved~\eqref{eq:bound_on_T5}.
  \end{enumerate}
 \end{proof}

We now estimate $S_\alpha T$. Our goal is to show that $(S_\alpha T)(b,w)\sqrt[4]{d(b,\underline{p})d(w,\underline{p})}$ is $o(1)$ when $\delta\to 0$ uniformly on a compact $\Kk$ such as in Lemma~\ref{lemma:bounds_on_T}. Applying Lemma~\ref{lemma:bounds_on_T} we get
\begin{equation}
  \label{eq:ST_leq_sum}
  |(S_\alpha T)(b,w)|\sqrt[4]{d(b,\underline{p})d(w,\underline{p})} \lesssim \sum_{u\in W}  \frac{\sqrt{d(b,\underline{p}) + d(u,\underline{p})}}{d(b,u)\sqrt[4]{d(u,\underline{p})}}|T(u,w)|\sqrt[4]{d(w,\underline{p})}.
\end{equation}
We will now use the bounds~\eqref{eq:bound_on_T1}--\eqref{eq:bound_on_T10} to estimate the right-hand side of~\eqref{eq:ST_leq_sum}. Similarly to how we derived these bounds we consider the cases when $w$ is far or close to conical singularities separately.

\emph{Case 1: Bound on $(S_\alpha T)(b,w)$ when $\dist(w,\{ p_1,\dots, p_{2g-2} \})\geq \nu_2$.}
Define
\[
  \Uu^{(1)}_0 = \{ u\in \Sigma\ \mid\ d(u,\underline p)\leq \nu_1+\delta \},\qquad \Uu^{(1)}_w = \{ u\in \Sigma\ \mid\ d(u,w)\leq \nu_1+\delta \}
\]
and, similarly to the proof of Lemma~\ref{lemma:existence_of_Ksinv} write
\begin{equation}
  \label{eq:ST_three_sums_case1}
  \sum_{u\in W}  \frac{\sqrt{d(b,\underline{p}) + d(u,\underline{p})}}{d(b,u)\sqrt[4]{d(u,\underline{p})}}|T(u,w)|\sqrt[4]{d(w,\underline{p})} \leq I^{(1)}_1 + I^{(1)}_w + I^{(1)}_0
\end{equation}
where
\[
  I^{(1)}_1 = \sum_{u\notin \Uu^{(1)}_0\cup\,\Uu^{(1)}_w}(\cdots),\qquad I^{(1)}_w = \sum_{u\in \Uu^{(1)}_w}(\cdots),\qquad I^{(1)}_0 = \sum_{u\in \Uu^{(1)}_0}(\cdots).
\]
Let us estimate each of $I^{(1)}_1,I^{(1)}_0$ and $I^{(1)}_w$ separately. These estimates will be rather similar to those applied in the proof of Lemma~\ref{lemma:existence_of_Ksinv}, so we will allow ourselves to omit some details.

\emph{Estimate of $I^{(1)}_1$.} Applying~\eqref{eq:bound_on_T1} we get the bound
\[
  I^{(1)}_1 \lesssim \sum_{u\notin \Uu^{(1)}_0\cup\,\Uu^{(1)}_w}\left( \frac{\sqrt{d(b,\underline{p}) + d(u,\underline{p})}}{d(b,u)\sqrt[4]{d(u,\underline{p})}} \right)\cdot\left( \left(\frac{\delta^3}{d(u,w)^2} + \frac{\delta^3}{d(u,\underline{p})^2}\right)\frac{\sqrt{d(u, \underline{p}) + d(w,\underline{p})}}{d(u,w)\sqrt[4]{d(u,\underline{p})}} \right).
\]
Arguing as in the proof of Lemma~\ref{lemma:existence_of_Ksinv} one can show that the main contribution in the sum above comes from the terms corresponding to $\nu_1\leq d(u,\underline p)\leq 2\nu_1$ and $\nu_1\leq d(u,w)\leq 2\nu_1$. In both regimes all the terms apart from $d(b,u)$ can be replaced with their maximal or minimal values which results in the following estimate:
\begin{equation}
  \label{eq:I1_case1}
  I^{(1)}_1\lesssim \frac{\delta}{\nu_1\nu_2^{1/2}} + \frac{\delta}{\nu_1^{3/2}}.
\end{equation}

\emph{Estimate of $I^{(1)}_w$.} In this case the bound on $T$ depend on whether $\nu_1-\delta \leq d(u,w)\leq \nu_1+\delta$ or $d(u,w)<\nu_1-\delta$. In the first case we can apply the estimate~\eqref{eq:bound_on_T2} and in the second case we can apply the estimate~\eqref{eq:bound_on_T3}. This results in
\begin{multline*}
  I^{(1)}_w \lesssim \sum_{\nu_1-\delta \leq d(u,w)\leq \nu_1+\delta} \left( \frac{\sqrt{d(b,\underline{p}) + d(u,\underline{p})}}{d(b,u)\sqrt[4]{d(u,\underline{p})}} \right)\cdot\left( \frac{\delta^{1+\beta}}{\nu_1^{1+\beta}} + \frac{\delta\nu_1}{d(w,\underline p)^{3/2}} \right)\cdot \sqrt[4]{d(w,\underline{p})} + \\
  +\sum_{d(u,w)<\nu_1-\delta} \left( \frac{\sqrt{d(b,\underline{p}) + d(u,\underline{p})}}{d(b,u)\sqrt[4]{d(u,\underline{p})}} \right)\cdot\frac{\delta^2}{d(w,\underline p)}\cdot\sqrt[4]{d(w,\underline{p})}
\end{multline*}
which implies
\begin{equation}
  \label{eq:Iw_case1}
  I^{(1)}_w \lesssim \left( \frac{\delta^\beta}{\nu_1^{1+\beta}} + \frac{\nu_1}{\nu_2} \right)|\log\delta| + \frac{\nu_1}{\nu_2^{3/2}}.
\end{equation}

\emph{Estimate of $I^{(1)}_0$.} In this case the bound on $T$ depend on whether $\nu_1-\delta \leq d(u,\underline p)\leq \nu_1+\delta$ or $d(u,\underline p)<\nu_1-\delta$. In the first case we can apply the estimate~\eqref{eq:bound_on_T4} and in the second case we can apply the estimate~\eqref{eq:bound_on_T5}. This results in
\begin{multline*}
  I^{(1)}_0 \lesssim \sum_{\nu_1-\delta \leq d(u,\underline p)\leq \nu_1+\delta} \left( \frac{\sqrt{d(b,\underline{p}) + d(u,\underline{p})}}{d(b,u)\sqrt[4]{d(u,\underline{p})}} \right)\cdot\left( \frac{\delta^{3/2}}{\nu_1^{3/4} d(w,\underline p)^{1/2}} + \frac{\delta\nu_1^{1/4}}{d(w,\underline p)^{1/2}} \right) + \\
  +\sum_{d(u,\underline p)<\nu_1-\delta} \left( \frac{\sqrt{d(b,\underline{p}) + d(u,\underline{p})}}{d(b,u)\sqrt[4]{d(u,\underline{p})}} \right)\cdot \frac{\delta^2}{d(u,p_i)^{1/4}d(w,\underline p)^{1/2}}
\end{multline*}
which implies
\begin{equation}
  \label{eq:I0_case1}
  I^{(1)}_0 \lesssim \left( \frac{\delta^{1/2}}{\nu_1\nu_2^{1/2}} + \frac{1}{\nu_2^{1/2}} \right)\sqrt{d(b,\underline p) + \nu_1}\log\left( \frac{d(b,\underline p)+\nu_1}{d(b,\underline p)+\delta} \right) 
  + \frac{\nu_1^{1/2}}{\nu_2^{1/2}}.
\end{equation}
Let us now move on to the case when $w$ is close to a conical singularity.

\emph{Case 2: Bound on $(S_\alpha T)(b,w)$ when $\dist(w,p_i)\leq \nu_2$ for some $i\in \{ 1,\dots, 2g-2 \}$.}
Define
\[
  \Uu_j^{(2)} = \{ u\in \Sigma\ \mid\ d(u,p_j)\leq \nu_3+\delta \},\quad j = 1,\dots, 2g-2
\]
and similarly as in the previous case write
\begin{equation}
  \label{eq:ST_three_sums_case2}
  \sum_{u\in W} \frac{\sqrt{d(b,\underline{p}) + d(u,\underline{p})}}{d(b,u)\sqrt[4]{d(u,\underline{p})}}|T(u,w)|\sqrt[4]{d(w,\underline{p})} \leq I_1^{(2)} + I_w^{(2)} + I_0^{(2)}
\end{equation}
where
\[
  I_1^{(2)} = \sum_{u\notin \bigcup_j\Uu_j^{(2)}}(\cdots),\qquad I_w^{(2)} = \sum_{u\in \Uu_i^{(2)}}(\cdots),\qquad I_0^{(2)} = \sum_{u\in \bigcup_{j\neq i}\Uu_j^{(2)}}(\cdots).
\]
Let us estimate each of $I_1^{(2)},I_0^{(2)}$ and $I_w^{(2)}$ separately.

\emph{Estimate of $I^{(2)}_1$.} Applying~\eqref{eq:bound_on_T6} we get the bound
\[
  I^{(2)}_1 \lesssim \sum_{u\notin \bigcup_j\Uu_j^{(2)}}\left( \frac{\sqrt{d(b,\underline{p}) + d(u,\underline{p})}}{d(b,u)\sqrt[4]{d(u,\underline{p})}} \right)\cdot\left( \frac{\delta^3}{d(u,\underline{p})^2\sqrt{d(u,p_i)}\sqrt[4]{d(u,\underline{p})}} \right).
\]
Again one can show that the main contribution in the sum above comes from the terms corresponding to $\nu_3\leq d(u,\underline p)\leq 2\nu_3$. Simplifying the expression we get the following estimate:
\begin{equation}
  \label{eq:I1_case2}
  I^{(2)}_1\lesssim \frac{\delta}{\nu_3^2}.
\end{equation}

\emph{Estimate of $I^{(2)}_w$.} In this case the bound on $T$ depend on whether $\nu_3-\delta \leq d(u,p_i)\leq \nu_3+\delta$ or $d(u,p_i)<\nu_3-\delta$. In the first case we can apply the estimate~\eqref{eq:bound_on_T7} and in the second case we can apply the estimate~\eqref{eq:bound_on_T8}. This results in
\begin{multline*}
  I^{(2)}_w \lesssim \sum_{\nu_3-\delta \leq d(u,p_i)\leq \nu_3+\delta} \left( \frac{\sqrt{d(b,\underline{p}) + d(u,\underline{p})}}{d(b,u)\sqrt[4]{d(u,\underline{p})}} \right)\cdot\left( \delta\nu_3^{1/4} + \frac{\delta^{1+\beta} d(w,\underline{p})^{1/4}}{\nu_3^{1+\beta}} + \frac{\delta^{3/2}}{\nu_3^{3/4}} \right) + \\
  +\sum_{d(u,p_i)<\nu_3-\delta} \left( \frac{\sqrt{d(b,\underline{p}) + d(u,\underline{p})}}{d(b,u)\sqrt[4]{d(u,\underline{p})}} \right)\cdot\left( \frac{\delta^2\sqrt{d(u,p_i) + d(w,p_i)}}{(d(u,w)+\delta)\sqrt[4]{d(u,p_i)}} \right)
\end{multline*}
which implies
\begin{equation}
  \label{eq:Iw_case2}
  I^{(2)}_w \lesssim \left( 1 + \frac{\delta^{\beta}}{\nu_3^{1+\beta}} + \frac{\delta^{1/2}}{\nu_3} \right)\cdot \sqrt{d(b,\underline p)+\nu_3}\log\left( \frac{d(b,\underline p)+\nu_3}{d(b,\underline p)+\delta} \right) + \nu_3^{1/2}.
\end{equation}

\emph{Estimate of $I^{(2)}_0$.} In this case the bound on $T$ depend on whether $\nu_3-\delta \leq d(u,p_j)\leq \nu_3+\delta$ or $d(u,p_j)<\nu_3-\delta$ for some $j\neq i$. In the first case we can apply the estimate~\eqref{eq:bound_on_T9} and in the second case we can apply the estimate~\eqref{eq:bound_on_T10}. This results in
\begin{multline*}
  I^{(2)}_0 \lesssim \sum_{\nu_3-\delta \leq d(u,p_j)\leq \nu_3+\delta} \left( \frac{\sqrt{d(b,\underline{p}) + d(u,\underline{p})}}{d(b,u)\sqrt[4]{d(u,\underline{p})}} \right)\cdot\left( \delta\nu_3^{1/4} + \frac{\delta^{3/2}}{\nu_3^{3/4}} \right) + \\
  +\sum_{d(u,p_j)<\nu_3-\delta} \left( \frac{\sqrt{d(b,\underline{p}) + d(u,\underline{p})}}{d(b,u)\sqrt[4]{d(u,\underline{p})}} \right)\cdot \frac{\delta^2 }{d(u,p_j)^{1/4}}
\end{multline*}
which implies
\begin{equation}
  \label{eq:I0_case2}
  I^{(2)}_0 \lesssim \left( \delta + \frac{\delta^{3/2}}{\nu_3} \right)|\log\delta| + \nu_3^{1/2}.
\end{equation}

Note that all the constants in the above inequalities depend only on $\lambda$ and the compact $\Kk$ which was chosen as in Lemma~\ref{lemma:bounds_on_T}. We now formulate the main proposition of the current subsection.

\begin{prop}
  \label{prop:Kernel_for_Kalpha}
  Let $\Kk\subset \Mm_g^{t,(0,1)}$ be a compact subset such that for any point $[\Sigma, A,B,\alpha]\in \Kk$ we have $\theta[\alpha](0)\neq 0$. Let $\lambda,R>0$ are fixed. Then there exists $\delta_0,\beta_0>0$ such that for any $(\lambda,\delta)$-adapted graph $G$ on $\Sigma$ with $\delta\leq \delta_0$, and for all $\alpha = \dbar\vphi + \alpha_h$ such that 
  \[
    (\Sigma, A_1,\dots, A_g, B_1,\dots, B_g,\pm\alpha_h+\alpha_G)\in \Kk
  \]
  and $\|\vphi\|_{\mC^2(\Sigma)}\leq R$ the operator $K_\alpha$ has an inverse and we have
  \[
    K_\alpha^{-1}(b,w) = S_\alpha(b,w) + O\left(\frac{\delta^{\beta_0}}{\sqrt[4]{\dist(b,\{ p_1,\dots,p_{2g-2} \})\dist(w,\{ p_1,\dots,p_{2g-2} \})}}\right)
  \]
  where the constants in $O(\ldots)$ depend on $\lambda, R$ and $\Kk$ only.
\end{prop}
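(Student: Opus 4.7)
The plan is to follow the three-step strategy outlined at the beginning of Section~\ref{subsec:Kalphainverse}: construct the approximate kernel $S_\alpha$ (already done), control the error $T = K_\alpha S_\alpha - \Id$ and its propagation $S_\alpha T$, then invert $\Id + T$ via a Neumann series. The bounds~\eqref{eq:bound_on_T1}--\eqref{eq:bound_on_T10} established in Lemma~\ref{lemma:bounds_on_T}, together with the propagation estimates~\eqref{eq:bound_on_ST1} and~\eqref{eq:bound_on_ST2}, do essentially all of the analytic work. What remains is a parameter choice and a functional-analytic wrap-up.

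First I would fix the auxiliary parameters $\nu_1,\nu_2,\nu_3$. Looking at~\eqref{eq:bound_on_ST1} and~\eqref{eq:bound_on_ST2}, each summand is a positive power of $\delta$ divided by a positive power of the $\nu_i$, or a positive power of $\nu_i/\nu_{i+1}$, times at worst $\log\delta^{-1}$. Setting $\nu_i = \delta^{\kappa_i}$ for a suitable decreasing sequence of positive exponents $\kappa_3 < \kappa_2 < \kappa_1 < \beta/(\beta+1)$ (e.g.\ $\kappa_i = \beta/(2^{i}(\beta+1))$ up to a small perturbation), one checks directly that the right-hand sides of~\eqref{eq:bound_on_ST1} and~\eqref{eq:bound_on_ST2} are bounded by $C\delta^{\beta_0}$ for some $\beta_0 > 0$ depending only on $\beta$, uniformly in $b,w$. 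Combining with Lemma~\ref{lemma:bound_on_Ddalpha} via~\eqref{eq:ST_leq_sum}, this yields the pointwise bound
\[
  |(S_\alpha T)(b,w)| \leq \frac{C\delta^{\beta_0}}{\sqrt[4]{d(b,\underline{p})d(w,\underline{p})}}
\]
uniformly in $(\Sigma,A,B,\pm\alpha_h+\alpha_G)\in \Kk$ and $\|\vphi\|_{\mC^2}\leq R$.

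Next I would show invertibility of $\Id + T$ on the finite-dimensional space $\ell^2(W)$. The bounds on $T(u,w)$ themselves (not yet paired with $S_\alpha$) give, after a similar weighted summation with the weight $\sqrt[4]{d(w,\underline{p})}$ singled out, a bound $\sum_u |T(u,w)| \leq C\delta^{\beta_0'}$ and $\sum_w |T(u,w)| \leq C\delta^{\beta_0'}$ for some $\beta_0' > 0$ (each of the terms in~\eqref{eq:bound_on_T1}--\eqref{eq:bound_on_T10} is integrable against the chosen weight and carries a positive power of $\delta$ once $\nu_i$ are fixed as above). By Schur's test, the operator norm of $T$ is at most $C\delta^{\beta_0'}$, so for $\delta \leq \delta_0$ small enough, $\Id + T$ is invertible and $(\Id+T)^{-1} = \sum_{k\geq 0}(-T)^k$ has norm bounded by $2$. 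Defining
\[
  K_\alpha^{-1} \coloneqq S_\alpha(\Id + T)^{-1},
\]
we obtain $K_\alpha K_\alpha^{-1} = (\Id + T)(\Id + T)^{-1} = \Id$, so $K_\alpha^{-1}$ is a right inverse; since everything is finite-dimensional and $B,W$ have the same cardinality (which follows from the existence of a dimer cover on $G$, itself implicit in the setup), it is automatically a two-sided inverse.

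Finally, to extract the stated error estimate, write
\[
  K_\alpha^{-1} - S_\alpha = -S_\alpha T(\Id + T)^{-1} = -S_\alpha T + S_\alpha T \cdot T(\Id + T)^{-1}.
\]
The first term is already bounded pointwise by $C\delta^{\beta_0}/\sqrt[4]{d(b,\underline{p})d(w,\underline{p})}$ by the propagation estimate above. For the second term, one applies the weighted Schur bound on $T(\Id+T)^{-1}$ (norm $\lesssim \delta^{\beta_0'}$) combined with the pointwise bound on $S_\alpha T$, losing only a further power of $\delta$. Taking $\beta_0$ to be the minimum of the exponents encountered yields the claimed bound. The main obstacle in executing this plan is bookkeeping: one must verify that \emph{all} the weighted sums implicit in applying Schur's test and in the propagation step~\eqref{eq:ST_leq_sum} are finite and yield a positive power of $\delta$ under a single uniform choice of $\nu_1,\nu_2,\nu_3$, including the borderline logarithmic factors appearing in~\eqref{eq:bound_on_ST1}--\eqref{eq:bound_on_ST2}; once the exponents $\kappa_i$ are chosen strictly inside the admissible range, each of these sums can be estimated separately and combined at the end.
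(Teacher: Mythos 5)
Your overall strategy is the right one and matches the paper's: pick $\nu_i = \delta^{\kappa_i}$ with a suitable decreasing sequence of small exponents, note that~\eqref{eq:bound_on_ST1},~\eqref{eq:bound_on_ST2} then give $|(S_\alpha T)(b,w)| = O(\delta^{\beta_0}/\sqrt[4]{d(b,\underline{p})d(w,\underline{p})})$, show that $\Id + T$ is invertible, and set $K_\alpha^{-1} = S_\alpha(\Id + T)^{-1}$. But the invertibility step as you have written it has a genuine gap.

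You propose to prove invertibility by Schur's test on $\ell^2(W)$, claiming $\sum_u |T(u,w)| \leq C\delta^{\beta_0'}$ uniformly in $w$ and similarly for column sums. This unweighted bound is false. Inspect, e.g.,~\eqref{eq:bound_on_T6}--\eqref{eq:bound_on_T8}: when $w$ is close to a conical singularity $p_i$, every summand carries a factor $d(w,p_i)^{-1/4}$. White vertices $w$ can lie at distance $\asymp \delta$ from $p_i$ (the t-embedding near the singularity is the branched double cover of a mesh-$\delta$ Temperleyan isoradial graph, so the nearest white vertex is at cone-distance $\asymp\delta$), which pumps an uncontrolled factor $\delta^{-1/4}$ into the row sum. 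The row sums are therefore \emph{not} uniformly $o(1)$; only the weighted quantity $\sqrt[4]{d(w,\underline{p})}\sum_u |T(u,w)| \cdot \sqrt[4]{d(u,\underline{p})}^{-1}\cdot\sqrt[4]{d(u,\underline{p})}\,(\ldots)$, suitably organized, is small. You do write ``integrable against the chosen weight,'' but the bound you actually state drops the weight, and Schur's test on $\ell^2$ does not let you keep it.

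The correct framework — and the one the paper uses — is the weighted space $L^\infty(W, \sqrt[4]{d(\cdot,\underline{p})})$ with the norm $\|f\| = \max_w |f(w)\sqrt[4]{d(w,\underline{p})}|$. The crucial observation (present in the paper's proof but not in your writeup) is the pointwise inequality
\[
  \frac{1}{\sqrt[4]{d(u,\underline{p})}} \lesssim \frac{\sqrt{d(b,u) + d(b,\underline{p}) + d(u,\underline{p})}}{d(b,u)\sqrt[4]{d(u,\underline{p})}},
\]
which shows that the operator norm bound for the right action of $T$ on this weighted space is dominated by the very same sums you already control in~\eqref{eq:bound_on_ST1},~\eqref{eq:bound_on_ST2}. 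With this, $\|T\|_{L^\infty(W,\sqrt[4]{d})\to L^\infty(W,\sqrt[4]{d})} = O(\delta^{\beta_0})$ and the Neumann series converges. Moreover, this framework hands you the final pointwise estimate directly: $K_\alpha^{-1}(b,\cdot) - S_\alpha(b,\cdot) = -S_\alpha(b,\cdot)T(\Id+T)^{-1}$, and since $\|\sqrt[4]{d(b,\underline{p})}(S_\alpha T)(b,\cdot)\| = O(\delta^{\beta_0})$ while $(\Id+T)^{-1}$ has norm $\leq 2$, you read off the stated bound including the $\sqrt[4]{d(b,\underline{p})d(w,\underline{p})}$ denominator. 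Your $\ell^2$ route, even if the invertibility were salvaged, would not recover this weighted pointwise bound without a further argument.
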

\begin{proof}
  Let $L^\infty(W, \sqrt[4]{d(\cdot, \underline{p})})$ denote the space of functions on white vertices with the norm
  \[
    \|f\|_{L^\infty(W, \sqrt[4]{d(\cdot, \underline{p})})} = \max_{w\in W}|f(w)\sqrt[4]{d(w, \underline{p})}|.
  \]
  Choosing $\nu_i = \delta^{\beta_i}$ for some small $0<\beta_3<\beta_2<\beta_1$ we can achieve that the right-hand sides of~\eqref{eq:I1_case1},~\eqref{eq:Iw_case1},~\eqref{eq:I0_case1},~\eqref{eq:I1_case2},~\eqref{eq:Iw_case2} and~\eqref{eq:I0_case2} are all of order $\delta^{\beta_0}$ for some $\beta_0>0$ depending only on $\lambda,R$ and $\Kk$. Plugging this into the right-hand side of~\eqref{eq:ST_three_sums_case1} and~\eqref{eq:ST_three_sums_case2} we conclude that
  \begin{equation}
    \label{eq:KfK1}
    \sum_{u\in W}  \frac{\sqrt{d(b,\underline{p}) + d(u,\underline{p})}}{d(b,u)\sqrt[4]{d(u,\underline{p})}}|T(u,w)|\sqrt[4]{d(w,\underline{p})} = O(\delta^{\beta_0})
  \end{equation}
  uniformly in $b,w$. Combining this with~\eqref{eq:ST_leq_sum} we conclude that for any $b\in B$
  \begin{equation}
    \label{eq:bound_on_ST_common}
    \|\sqrt[4]{d(b,\underline{p})}(S_\alpha T)(b,\cdot)\|_{L^\infty(W, \sqrt[4]{d(\cdot, \underline{p})})} = O(\delta^{\beta_0}).
  \end{equation}
  Moreover, for this choice of $\nu_1,\nu_2,\nu_3$ we have that
  \begin{equation}
    \label{eq:bound_on_norm_of_T}
     \|T\|_{L^\infty(W, \sqrt[4]{d(\cdot, \underline{p})})\to L^\infty(W, \sqrt[4]{d(\cdot, \underline{p})})} = O(\delta^{\beta_0})
  \end{equation}
  for the right action of $T$. Indeed it is enough to notice that
  \[
    \frac{1}{\sqrt[4]{d(u,\underline{p})}} \lesssim \frac{\sqrt{d(b,u) + d(b,\underline{p}) + d(u,\underline{p})}}{d(b,u)\sqrt[4]{d(u,\underline{p})}},
  \]
  then~\eqref{eq:bound_on_norm_of_T} follows from~\eqref{eq:KfK1}. Recall that we have
  \[
    K_\alpha^{-1} = S_\alpha (\Id + T)^{-1}
  \]
  by the definition~\eqref{eq:def_of_T} of $T$. By~\eqref{eq:bound_on_ST_common} and~\eqref{eq:bound_on_norm_of_T} the right-hand side is well-defined and the $K_\alpha^{-1}$ satisfies the desired properties provided $\delta$ is small enough.
\end{proof}

Recall that $G_0 = G\cap \Sigma_0\smm\partial \Sigma_0$. Denote by $K_0$ and $K_{0,\alpha}$ restrictions of $K$ and $K_\alpha$ to $G_0$.

\begin{lemma}
  \label{lemma:Kalphainv_Sigma_with_boundary}
  Assume that all the assumptions of Proposition~\ref{prop:Kernel_for_Kalpha} are satisfied and $\partial \Sigma\neq \varnothing$. Assume that $\alpha$ additionally satisfies $\sigma^*\alpha = -\bar{\alpha}$. Then $K_{0,\alpha}$ is invertible and we have the following formula for the inverse:
  \[
    K_{0,\alpha}^{-1}(b,w) = K_\alpha^{-1}(b,w) + \eta_w^2K_\alpha^{-1}(b,\sigma(w)).
  \]
\end{lemma}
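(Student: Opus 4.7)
The plan is to verify the formula by direct computation: since $K_{0,\alpha}$ is a finite invertible matrix, it suffices to show that the right-hand side, viewed as a function of $b\in B_0$ for each fixed $w\in W_0$, is sent by $K_{0,\alpha}$ to $\delta_{w_0,w}$ for every $w_0\in W_0$. Uniqueness of the inverse then yields the claim. The formula has the flavour of a reflection principle across $\partial\Sigma_0$, and its proof rests on a $\sigma$-symmetry of $K_\alpha$ and on the identity $K_\alpha K_\alpha^{-1}=I$ applied at mirrored index pairs.

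The first step is to track the effect of $\sigma$ on each factor of $K_\alpha(w,b)=\exp(2i\Im\int_w^b\alpha)\cdot K(w,b)$ and derive the covariance
\begin{equation*}
K_\alpha(\sigma w,\sigma b) \;=\; -\bar\eta_w^{2}\,\bar\eta_b^{2}\,K_\alpha(w,b).
\end{equation*}
The hypothesis $\sigma^*\alpha=-\bar\alpha$ gives $\Im\int_{\sigma w}^{\sigma b}\alpha=\Im\int_w^b\alpha$; the choice $\sigma^*\alpha_G=\bar\alpha_G$ conjugates the $\alpha_G$-factor; the identity $\Tt\circ\sigma=\bar\Tt$ combined with the orientation reversal of the dual edge under $\sigma$ produces the leading minus sign; the $\sigma$-invariance of the cut paths $\gamma_1,\dots,\gamma_{g-1}$ preserves the $\pm$ sign from intersections; finally $\eta_{\sigma b}=\bar\eta_b$, $\eta_{\sigma w}=-\bar\eta_w$ give the $\bar\eta^{2}$ factors. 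Propagating this symmetry through $K_\alpha K_\alpha^{-1}=I$ and invoking uniqueness of the inverse yields the dual relation
\begin{equation*}
K_\alpha^{-1}(b,\sigma w) \;=\; -\bar\eta_b^{2}\,\eta_w^{2}\,K_\alpha^{-1}(\sigma b,w).
\end{equation*}

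Now set $F(b):=K_\alpha^{-1}(b,w)+\eta_w^{2}K_\alpha^{-1}(b,\sigma w)$ as a function of $b\in B$. Summing $K_\alpha(w_0,b)F(b)$ over all $b\in B$ and using $K_\alpha K_\alpha^{-1}=I$ gives $\delta_{w_0,w}+\eta_w^{2}\delta_{w_0,\sigma w}=\delta_{w_0,w}$ since $\sigma w\notin W_0$ when both $w_0,w\in W_0$. Because edges of $G$ are short compared to the separation between $w_0$ and $\Sigma_0^{\mathrm{op}}$, the contribution from $b\in B_0^{\mathrm{op}}$ vanishes, so the desired identity reduces to the vanishing of the boundary term $\sum_{b\in B_\partial}K_\alpha(w_0,b)F(b)=0$; this is the main obstacle. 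I would prove it by writing out $K_\alpha K_\alpha^{-1}=I$ at the mirrored pair $(\sigma w_0,w)$ — which gives zero since $\sigma w_0\neq w$ — decomposing the sum as $B_0^{\mathrm{op}}\cup B_\partial$, using Step~1 to rewrite $K_\alpha(\sigma w_0,b)=-\bar\eta_{w_0}^{2}K_\alpha(w_0,b)$ on $B_\partial$ (where $\eta_b^{2}=1$), and using Step~2 to rewrite the $B_0^{\mathrm{op}}$-sum as a sum over $B_0$ involving $K_\alpha^{-1}(\cdot,\sigma w)$. Combining the resulting identity with the vanishing $\sum_{b\in B}K_\alpha(w_0,b)K_\alpha^{-1}(b,\sigma w)=\delta_{w_0,\sigma w}=0$ then rearranges to the required cancellation on $B_\partial$, completing the verification.
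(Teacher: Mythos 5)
Your overall plan — verify $K_{0,\alpha}K_{0,\alpha}^{-1}=I$ by expanding the full identity $K_\alpha K_\alpha^{-1}=I$ and showing the extra terms vanish — is the same as the paper's, and it is the right plan. But there are two issues worth flagging.

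First, both covariance relations in your Steps~1 and~2 have the $\eta$'s conjugated incorrectly. The correct relation is $K_\alpha(\sigma w,\sigma b)=-(\eta_w\eta_b)^2K_\alpha(w,b)$, not $-(\bar\eta_w\bar\eta_b)^2K_\alpha(w,b)$; a quick check is that $\eta_w K(w,b)\eta_b\in\RR$ must remain real after applying $\sigma$, and with $\eta_{\sigma w}=-\bar\eta_w$, $\eta_{\sigma b}=\bar\eta_b$ this forces the unconjugated version. Consequently your Step~2 should read $K_\alpha^{-1}(b,\sigma w)=-\eta_b^2\bar\eta_w^2K_\alpha^{-1}(\sigma b,w)$. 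This matters: if you carry your stated covariances through the cancellation at the end, the boundary sum comes out as $\sum_{b\in B_\partial}K_\alpha(w_0,b)\bigl(K_\alpha^{-1}(b,w)+\bar\eta_w^2K_\alpha^{-1}(b,\sigma w)\bigr)=0$, with a $\bar\eta_w^2$ rather than $\eta_w^2$, which is not the statement you need. With the corrected covariances the argument does close, so this is a genuine but repairable error rather than a conceptual one.

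Second, the ``main obstacle'' you identify dissolves once you look at it pointwise. The paper's proof observes that $F(b)=0$ for every individual $b\in\partial\Sigma_0$: on the boundary $\sigma b=b$ and $\eta_b\in\RR$ with $|\eta_b|=1$, so $\eta_b^2=1$, and then the $\sigma$-covariance of $K_\alpha^{-1}$ gives $K_\alpha^{-1}(b,\sigma w)=-\bar\eta_w^2K_\alpha^{-1}(b,w)$, whence $F(b)=K_\alpha^{-1}(b,w)-\eta_w^2\bar\eta_w^2K_\alpha^{-1}(b,w)=0$. This one-line observation renders the whole mirrored-pair bookkeeping unnecessary: once $F$ vanishes on $B_\partial$, the restriction of the sum from $B$ to $B_0$ is exact (the $B_0^{\mathrm{op}}$ terms vanish by adjacency and the $B_\partial$ terms vanish because $F=0$ there). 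Your indirect route — applying $K_\alpha K_\alpha^{-1}=I$ at $(\sigma w_0,w)$ and at $(w_0,\sigma w)$ and combining — is valid after fixing the covariances, but it re-derives a weaker consequence of a fact that can be read off directly.
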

\begin{proof}
  Define $K^{-1}_{0,\alpha}$ by the formula above. First, note that
  \[
    (K_\alpha K^{-1}_{0,\alpha})(u,w) = \delta_w(u) + \eta_{w}^2\delta_{\sigma(w)}(u).
  \]
  Second, note that $K_\alpha(\sigma(b),\sigma(w)) = -(\eta_b\eta_w)^2K_\alpha(b,w)$, therefore
  \[
    K_\alpha^{-1}(\sigma(b),\sigma(w)) = - (\bar{\eta}_b\bar{\eta}_w)^2 K_\alpha^{-1}(b,w).
  \]
  This and the fact that $\eta_b\in \RR$ if $b\in \partial \Sigma_0$ implies that $K^{-1}_{0,\alpha}(b,w) = 0$ when $b\in \partial\Sigma_0$. These two observations imply that $K^{-1}_{0,\alpha}$ is the inverse of $K_{0,\alpha}$.
\end{proof}

\subsection{Near-diagonal expansion of \texorpdfstring{$K_\alpha^{-1}$}{Kalpha-1}}
\label{subsec:near-diag_expansion_Kalpha}

\begin{lemma}
  \label{lemma:near-diag-expansion-of-Kalphainv}
  Let all the assumptions of Proposition~\ref{prop:Kernel_for_Kalpha} be satisfied. Let $bw$ be an arbitrary edge of $G$. The following asymptotic relation holds:
    \begin{multline*}
      K_\alpha(w,b)K_\alpha^{-1}(b,w) = K_\Tt(w,b)K_\Tt^{-1}(b,w) +\\
      + \exp\left[ 2i\int_{p_0}^w\Im\alpha_0 \right]\cdot K_\Tt(w,b) \cdot \frac{1}{2}\left[ r_{\alpha + \alpha_G}(w) + (\eta_b\eta_w)^2\overline{r_{-\alpha + \alpha_G}(w)} \right] + \\
      +O\left( \frac{\delta^{\beta_0+1}(\delta^{1/2}\log\delta^{-1} + \sqrt{\dist(w,\{ p_1,\dots,p_{2g-2} \})})}{\dist(w,\{ p_1,\dots,p_{2g-2} \})^{3/2}} \right).
    \end{multline*}
\end{lemma}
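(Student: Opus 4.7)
The starting point of the proof is Proposition~\ref{prop:Kernel_for_Kalpha}, which gives
\[
  K_\alpha^{-1}(b,w) = S_\alpha(b,w) + O\left(\frac{\delta^{\beta_0}}{\sqrt[4]{\dist(b,\underline{p})\dist(w,\underline{p})}}\right).
\]
When $bw$ is an edge, we have $|K_\alpha(w,b)| \lesssim \delta$ by Assumption~\ref{item:intro_G_well_embedded} and $|\dist(b,\underline{p}) - \dist(w,\underline{p})| \leq \lambda^{-1}\delta$, so multiplying the two yields an error of order $\delta^{\beta_0+1}/\sqrt{\dist(w,\underline{p})}$, which accounts for the $\sqrt{\dist(w,\underline{p})}/\dist(w,\underline{p})^{3/2}$ piece in the claimed bound. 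The rest of the proof consists of (a) identifying which case of the construction of $S_\alpha$ in Section~\ref{subsec:Kalphainverse} applies to our edge, and (b) verifying that the phases in $K_\alpha(w,b) S_\alpha(b,w)$ combine to produce exactly the expression on the right-hand side.

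For an edge $bw$ we have $\dist(b,w) \leq \lambda^{-1}\delta \ll \nu_1$, and $\dist(b,p_j) \geq \dist(w,p_j) - \lambda^{-1}\delta$ for all $j$. So if $\dist(w,\underline{p}) \geq \nu_2$, we are in Case~1, subcase~2 of the construction of $S_\alpha$, and if $\dist(w,\underline{p}) < \nu_2$, we are in Case~2, subcase~2; subcases~3 cannot occur because two singularities are separated by at least $4\lambda \gg \nu_3$. In the first regime,
\[
  S_\alpha(b,w) = (-1)^{bw\cdot\gamma} e^{-i\Psi(b,w)} K_\Tt^{-1}(b,w) + \tfrac{1}{2}(-1)^{bw\cdot\gamma} e^{-i\Psi'(b,w)} \bigl( r_{\alpha+\alpha_G}(w) + (\eta_b\eta_w)^2 \overline{r_{-\alpha+\alpha_G}(w)} \bigr),
\]
where $\Psi = 2\Im\int_w^b(\alpha+\alpha_G) + \Im\int_{p_0}^b\alpha_0 + \Im\int_{p_0}^w\alpha_0$ and $\Psi' = \Im\int_w^b(2\alpha+2\alpha_G+\alpha_0)$. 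In the second regime the formula is identical except that $K_\Tt^{-1}(b,w)$ is replaced by $K_{\Tt,1/2}^{-1}(b,w)$ (which is what $K_\Tt^{-1}(b,w)$ denotes under the auxiliary notation of Section~\ref{subsec:aux_for_Kalphainv} in this regime) and an extra factor $[\Tt(w)^{1/4}][\Tt(b)^{-1/4}]$ appears inside the $r$ terms.

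The phase cancellation is now direct. By the definitions in Section~\ref{subsec:intro_Kasteleyn_operator},
\[
  K_\alpha(w,b) = \epsilon \cdot \exp\!\bigl(2i\Im\tfrac{}{}\!\int_w^b(\alpha+\alpha_G)\bigr)\exp\!\bigl(i\Im(\int_{p_0}^b\alpha_0 + \int_{p_0}^w\alpha_0)\bigr) \cdot K_\Tt(w,b),
\]
where $\epsilon = -1$ iff $bw$ crosses the cut system $\gamma_1\cup\dots\cup\gamma_{g-1}$ an odd number of times. In Case~1 we have $\epsilon = (-1)^{bw\cdot\gamma}$, so the signs match and the exponential phase cancels $e^{-i\Psi(b,w)}$ identically, giving the main term $K_\Tt(w,b) K_\Tt^{-1}(b,w)$. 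The same phase computation gives $\exp(2i\Im\int_{p_0}^w\alpha_0)$ for the $r$ contribution, since $\Psi - \Psi' = \Im\int_{p_0}^b\alpha_0 + \Im\int_{p_0}^w\alpha_0 - \Im\int_w^b\alpha_0 = 2\Im\int_{p_0}^w\alpha_0$, matching the lemma exactly. In Case~2 the signs package into the built-in sign convention of $K_{\Tt,1/2}^{-1}$ (via the cut $\gamma_0^\Cc$ on the cone, which is the local lift of the $\gamma_i$ meeting $p_i$), and the same cancellation proceeds.

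It remains to absorb the auxiliary errors. In Case~2, Lemma~\ref{lemma:fs} applied to the local double cover gives $[\Tt(w)^{1/4}][\Tt(b)^{-1/4}] = 1 + O(\delta/\dist(w,p_i))$, contributing at most $O(\delta^2/\dist(w,\underline{p})^{3/2})$ after multiplying by $|K_\Tt(w,b)| = O(\delta)$ and using $|r_\alpha(w)| \lesssim \dist(w,\underline{p})^{-1/2}$; together with the discretization error $O(\delta^{1/2}\log\delta^{-1}/\dist(w,\underline{p})^3)$ in $K_{\Tt,1/2}^{-1}$ from Lemma~\ref{lemma:kernel_of_G4pi} this produces the $\delta^{1/2}\log\delta^{-1}/\dist(w,\underline{p})^{3/2}$ piece of the stated error. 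The main obstacle throughout is not conceptual but bookkeeping: the signs $\epsilon$, $(-1)^{bw\cdot\gamma}$, and the signs hidden in $K_{\Tt,1/2}^{-1}$ near the conical points all need to be tracked carefully so that they combine coherently, and the interaction of the global cuts $\gamma_i$ with the local cone cut $\gamma_0^\Cc$ must be verified to yield consistent signs across the transition between Cases~1 and~2.
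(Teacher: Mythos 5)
Your proposal reconstructs the paper's argument faithfully: the paper's own proof is a two-sentence appeal to Proposition~\ref{prop:Kernel_for_Kalpha}, the case-by-case definition of $S_\alpha$ from Section~\ref{subsec:Kalphainverse}, and item~3 of Lemma~\ref{lemma:kernel_of_G4pi}, and you have spelled out exactly these ingredients — the identification of the relevant subcases, the phase cancellation producing $K_\Tt(w,b)K_\Tt^{-1}(b,w)$ and $\exp(2i\int_{p_0}^w\Im\alpha_0)$, and the absorption of the discretization and $[\Tt(w)^{1/4}][\Tt(b)^{-1/4}]$ errors. (Your discretization error $\delta^{3/2}\log\delta^{-1}/\dist(w,\underline{p})^{3/2}$ differs from the paper's $\delta^{\beta_0+3/2}\log\delta^{-1}/\dist(w,\underline{p})^{3/2}$ by the extra factor $\delta^{\beta_0}$, but this is a minor bookkeeping discrepancy, not a gap in the method.)
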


\begin{proof}
  Let $\nu_1,\nu_2,\nu_3$ be as in the construction of the parametrix $S_\alpha$. When $d(w, \underline{p})\geq \nu_2$ the asymptotic relation from the first item follows from Proposition~\ref{prop:Kernel_for_Kalpha} and the definition of $S_\alpha$. Assume now that $d(w,\underline{p})<\nu_2$. In this case we can use Proposition~\ref{prop:Kernel_for_Kalpha}, the definition of $S_\alpha$ and the third item of 
  Lemma~\ref{lemma:kernel_of_G4pi}.
\end{proof}

\section{Determinant of \texorpdfstring{$K_\alpha$}{Kalpha} as an observable for the dimer model}
\label{sec:determinant}

Recall that the perturbed Kasteleyn operator $K_\alpha$ is defined by~\eqref{eq:def_of_Kalpha}. It is well-known~\cite{Cimasoni} that the determinant of a Kasteleyn matrix on a Riemann surface enumerates dimer covers signed by a global sign depending on the monodromy of the height function (cf. Section~\ref{subsec:intro_ratio_of_partition_functions}). By perturbing the Kasteleyn matrix and considering the ratio $\frac{\det K_\alpha}{\det K}$ we obtain the characteristic function of the height function with respect to this sing indefinite measure. In this section we explore this combinatorial relation in details, and estimate how $\det K_\alpha$ varies with respect to $\alpha$. We continue using the setup introducing in Sections~\ref{subsec:intro_flat_metric},~\ref{subsec:intro_graphs_on_Sigma0},~\ref{subsec:intro_dimer_model}. We assume that the height function is introduced as in Section~\ref{subsec:intro_height_function}.

Recall the definition of a flow and the corresponding 1-form with generalized coefficients and its Hodge decomposition discussed in Section~\ref{subsec:intro_height_function}. Let $f$ be a flow and $\M = d\Phi + \Psi$ be the corresponding 1-form.
The following lemma is straightforward:
\begin{lemma}
  \label{lemma:primitive_of_flow}
  Let $v_1,v_2$ be two points lying inside two faces of $G$ and $l$ be a smooth path connecting $v_1$ and $v_2$. Assume that $l$ crosses edges $e_1,\dots, e_k$ of $G$ transversally, orient each $e_j$ such that $l$ crosses this edge from the left to the right locally at the intersection point. Then we have
  \[
    \sum_{j = 1}^k f(e_j) = \Phi(v_2) - \Phi(v_1) + \int_l \Psi.
  \]
\end{lemma}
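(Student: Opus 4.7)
The plan is to prove the identity by reducing to two elementary cases---no crossings and exactly one crossing---and then verifying the latter via the distributional definition of $\M$.

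First I would observe that both sides of the claimed identity are additive under concatenation: the right side telescopes through any intermediate point chosen inside a face of $G$ (since $\Phi$ is continuous there), and $\int_l\Psi$ is additive by definition. Hence, by subdividing $l$ at points chosen in the interiors of the faces visited by $l$, it suffices to treat two cases: (i) $l$ lies entirely in the interior of a single face of $G$, and (ii) $l$ crosses exactly one edge $e$ transversally and is otherwise contained in the two adjacent faces.

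Case (i) is essentially tautological once one observes that $\M$ vanishes as a distribution on the interior of any face: for a smooth test $1$-form $u$ supported in a face $F$ of $G$, every term $\int_w^b u$ in the defining sum is zero, so $\int_\Sigma u\wedge\M=0$. Therefore $d\Phi+\Psi=0$ classically on $F$, and since $\Phi$ is continuous on $F$ and $\Psi$ is smooth there (its only poles are at vertices of $G$), the fundamental theorem of calculus yields $\Phi(v_2)-\Phi(v_1)+\int_l\Psi=\int_l(d\Phi+\Psi)=0$, matching the empty sum on the left.

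For case (ii), I would introduce local Euclidean coordinates near the crossing point $q=l\cap e$ in which $l$ is a segment of the $x$-axis traversed in the $+x$-direction and $e$ is a segment of the $y$-axis; the orientation convention forces $w\in\{y>0\}$ and $b\in\{y<0\}$ for $e=wb$. Testing the defining identity of $\M$ against smooth $1$-forms supported in a small disk around $q$ containing no other edge or vertex of $G$, one identifies
\[
  \M = f(wb)\,\delta(x)\,\chi_e(y)\,dx + (\text{smooth part})
\]
near $q$. Since $\Psi$ is smooth in this disk and $\M=d\Phi+\Psi$, the $\delta$-distributional content of $d\Phi$ must match that of $\M$, which forces $\Phi$ to extend continuously to each side of $e$ with constant jump $\Phi|_{x>0}-\Phi|_{x<0}=f(wb)$ across $e$. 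Applying case (i) to the two subpaths of $l$ on either side of $q$ and adding them, the classical pieces assemble into precisely this jump, giving $\Phi(v_2)-\Phi(v_1)+\int_l\Psi=f(e)$. The main obstacle is nothing deep, only the careful bookkeeping of signs and orientations (the sign of $u\wedge\M$ versus $\M\wedge u$, the orientation of $e$ relative to $l$, and the sign of the induced jump of $\Phi$); once a consistent convention is pinned down, the calculation is short and the general case follows by iteration on the number of crossings.
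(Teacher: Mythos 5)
The paper offers no proof of this lemma, calling it ``straightforward.'' Your argument is a complete and correct verification: the reduction to zero or one crossing by subdividing $l$ at face-interior points, the observation that $\M$ vanishes distributionally on face interiors so $d\Phi+\Psi=0$ classically there, and the local computation showing $d\Phi$ must carry the $\delta$-mass $f(wb)\delta(x)\chi_e(y)\,dx$ of $\M$ (hence $\Phi$ jumps by $f(e)$ across $e$ in the direction of $l$) are exactly the right steps, and the sign bookkeeping you flag does work out with the paper's left-to-right orientation convention and the pairing $\int_\Sigma u\wedge\M$.
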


We need the following auxiliary definition
\begin{defin}
  \label{defin:of_barw}
  For each $w\in W$ pick any of its black neighbors and define $\bar{w}$ to be the point on $\Sigma$ symmetric to $b$ with respect to the edge of $G^\ast$ dual to $bw$. Note that if $w$ is close to a conical singularity, that $\bar{w} = w$ by our assumptions on the graph. Outside the singularities we can locally identify $\Sigma$ with a piece of the Euclidean plane and define the reflection with respect to the edge there. Assumption~\ref{item:intro_reg_part} from Section~\ref{subsec:intro_graphs_on_Sigma0} guarantees that $\bar{w}$ does not depend on the choice of $b$.
\end{defin}

\subsection{Kasteleyn theorem for \texorpdfstring{$G$}{G}}
\label{subsec:Kasteleyn_thm}

There is a number of flows which appear naturally on $G$. First, given any subset $D$ of edges we can set
\begin{equation}
  \label{eq:def_of_f_D}
  f_D(wb) = \begin{cases}
    1,\quad wb\in D,\\
    0,\quad wb\notin D.
  \end{cases}
\end{equation}
Note that if $D$ is a dimer cover, then $\div f_D(b) = -1$ and $\div f_D(w) = -1$ for any $b,w$. Second, we can define the \emph{angle flow} $f^A$ as follows. 
Let $wb$ be an edge of $G$ and $v_1v_2$ be the dual edge of $G^\ast$ oriented such that the black face is on the right. Let $\vartheta$ denote the oriented angle $v_2bv_1$. We define
\begin{equation}
  \label{eq:def_of_fA}
  f^A(wb) = \frac{\vartheta}{2\pi},
\end{equation}
see Figure~\ref{fig:KasTh0}. By the construction, for any $b\in B$ we have $\div f^A(b) = -1$. Moreover, we claim that $\div f^A(w) = 1$ for any $w\in W$. Indeed, pick such a $w$, let $b\sim w$ be its arbitrary neighbor and let $\bar{w}$ be defined by Definition~\ref{defin:of_barw}. Then the angle $\vartheta = 2\pi f^A(wb)$ is equal to the oriented angle $v_1\bar{w}v_2$, which implies the claim.

Let $\M_D$ and $\M^A$ denote 1-forms associated with $f_D$ and $f^A$ respectively, and $\M_D^A$ be the 1-form given by
\begin{equation}
  \label{eq:def_of_MAD}
  \M^A_D = \M_D - \M^A - \frac{2}{\pi} \Im \alpha_G.
\end{equation}
Let
\begin{equation}
  \label{eq:Hodge_for_MDA}
  \M_D^A = d\Phi_D^A + \Psi_D^A
\end{equation}
be the Hodge decomposition. Note that if $D$ is a dimer cover, then $f_D - f^A$ is divergence-free and hence $\Psi_D^A$ is harmonic.

Let us say that a harmonic differential $\Psi$ is \emph{Poincar\'e dual} to a homology class $[C]$ if for any homology class $[C_1]$ we have
\begin{equation}
  \label{eq:def_of_dual_class}
  \int_{C_1} \Psi = C\cdot C_1.
\end{equation}
Given two dimer covers $D_1,D_2$ of $G$ we can draw them simultaneously on $G$ and get a set of double edges and loops $C_1,C_2,\dots, C_k$ on $\Sigma$. Orient each $C_j$ such that dimers from $D_1$ are oriented from black vertices to white. Then the harmonic 1-form $\Psi^A_{D_1} - \Psi^A_{D_2}$ is Poincar\'e dual to $[C_1] + \ldots + [C_k]$.

If $\partial \Sigma_0\neq \varnothing$ we also want to be able to associate the restriction $\Psi\vert_{\Sigma_0}$ with a homology class on $\Sigma_0$. Assume that $\Psi$ is a harmonic differential on $\Sigma$ such that $\sigma^*\Psi = -\Psi$. Then there is a unique class $[C]\in H_1(\Sigma_0, \RR)$ characterized by the property that for any relative homology class $[l]\in H_1(\Sigma_0;\partial \Sigma_0,\ZZ)$ we have
\[
  \int_l \Psi = [C]\cdot [l];
\]
note that $\Psi$ is zero along boundary curves, hence the integral on the left-hand side above is well-defined. We say that $\Psi\vert_{\Sigma_0}$ is Poincar\'e dual to the homology class $[C]$. Recall the notation for the simplicial basis in the first homologies of $\Sigma$ introduced in Section~\ref{sec:The Riemann surface: continuous setting}. The following lemma is straightforward
\begin{lemma}
  \label{lemma:Poincare_dual}
  Assume that $\Psi$ is a harmonic differential on $\Sigma$ having integer cohomologies and such that $\sigma^*\Psi = -\Psi$. Then $\Psi\vert_{\Sigma_0}$ is Poincar\'e dual to an integer homology class if and only if for any $j = 1,\dots, n-1$ we have $\int_{A_j}\Psi$ to be even.
\end{lemma}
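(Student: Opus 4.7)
The plan is to use Poincaré--Lefschetz duality to reduce the claim to integrality of $\int_l\Psi$ for $[l]$ ranging over a $\ZZ$-basis of $H_1(\Sigma_0;\partial\Sigma_0,\ZZ)$, and then to split this basis into ``interior'' closed loops, on which integrality is automatic from the hypothesis that $\Psi$ has integer periods on $\Sigma$, and ``boundary'' arcs, on which the evenness condition enters through the symmetry $\sigma^*\Psi=-\Psi$.

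First I would invoke Poincaré--Lefschetz duality: the intersection pairing identifies $H_1(\Sigma_0,\ZZ)$ with $\Hom(H_1(\Sigma_0;\partial\Sigma_0,\ZZ),\ZZ)$, so $\Psi|_{\Sigma_0}$ is Poincaré-dual to an integer class exactly when $\int_l\Psi\in\ZZ$ for every $[l]\in H_1(\Sigma_0;\partial\Sigma_0,\ZZ)$. A convenient set of $\ZZ$-generators of this relative homology group is provided by $2g_0$ closed handle loops $c_1,\dots,c_{2g_0}$ representing generators of $H_1(\Sigma_0,\ZZ)$ modulo the boundary classes, together with $n-1$ transversal arcs $\ell_j$, $j=1,\dots,n-1$, each joining a basepoint on $B_0$ to one on $B_j$ and chosen in the complement of the interior handles. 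Integrality of $\int_{c_i}\Psi$ is immediate, since $c_i$ is also a closed loop in $\Sigma$ and $\Psi$ has integer cohomologies there.

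For each arc $\ell_j$ the anti-invariance $\sigma^*\Psi=-\Psi$ yields, after a two-line direct computation,
\begin{equation*}
\int_{\ell_j\cup\sigma(\ell_j)^{-1}}\Psi \;=\; \int_{\ell_j}\Psi-\int_{\ell_j}\sigma^*\Psi \;=\; 2\int_{\ell_j}\Psi,
\end{equation*}
so that $\int_{\ell_j}\Psi\in\ZZ$ is equivalent to $\int_{C_j}\Psi\in 2\ZZ$ for the closed loop $C_j\coloneqq\ell_j\cup\sigma(\ell_j)^{-1}$ in $\Sigma$. The remaining step, which I expect to be the main delicate point, is to identify $[C_j]=[A_j]$ in $H_1(\Sigma,\ZZ)$ for $j=1,\dots,n-1$. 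Since $\sigma$ sends $C_j$ to $C_j^{-1}$, the class $[C_j]$ lies in the anti-invariant sublattice of $H_1(\Sigma,\ZZ)$, and a direct count of transversal intersections, combined with the relation $[B_0]+\dots+[B_{n-1}]=0$ in $H_1(\Sigma,\ZZ)$ to handle the crossing at the $B_0$-end and with the choice of $\ell_j$ disjoint from the interior handles, gives $[C_j]\cdot[B_i]=\delta_{ij}$ for $i=1,\dots,n-1$ and $[C_j]\cdot[B_i]=0$ for $i\ge n$. Since the $A_1,\dots,A_{n-1}$ are only constrained to be anti-invariant and dual to the prescribed $B_1,\dots,B_{n-1}$, we may take $A_j=[C_j]$ and complete the remaining $A_i$'s freely to a simplicial basis. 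Combining the two integrality observations above then yields the asserted equivalence.
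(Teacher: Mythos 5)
The paper offers no proof of this lemma; it is labelled ``The following lemma is straightforward,'' so there is nothing from the paper to compare against directly. Your argument is a natural way to fill in the details, and the overall structure — reduce via Poincar\'e--Lefschetz duality to integrality of $\int_l\Psi$ for $[l]$ in a generating set of $H_1(\Sigma_0;\partial\Sigma_0,\ZZ)$, observe that closed handle loops give nothing new since $\Psi$ already has integer periods on $\Sigma$, and for the connecting arcs $\ell_j$ convert to the closed loops $C_j=\ell_j\cup\sigma(\ell_j)^{-1}$ using $\sigma^*\Psi=-\Psi$ — is correct.

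The one place where the write-up is slightly off is the step ``we may take $A_j=[C_j]$.'' The lemma is stated for whatever symplectic basis the paper has already fixed; you do not get to choose it afresh at this point. Moreover your basis cannot be chosen arbitrarily: given $B_1,\dots,B_{n-1}$ and the symmetry constraints, the $A_j$ for $j<n$ are not unique, since one can still add elements of $\mathrm{span}(A_n,\dots,A_g)$. The clean way to close this gap is to note that $[C_j]-[A_j]$ is anti-invariant and pairs trivially with $B_1,\dots,B_{n-1}$ (both classes have intersection number $\delta_{ij}$ with $B_i$ for $i<n$), hence $[C_j]-[A_j]$ lies in the sublattice of anti-invariant classes orthogonal to the boundary circles, which coincides with $(1-\sigma_*)j_*\big(H_1(\Sigma_0,\ZZ)\big)$. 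On that sublattice $\Psi$ automatically has even periods — for $\gamma=[l]-[\sigma_*l]$ with $l$ a closed loop in $\Sigma_0$ one has $\int_\gamma\Psi=2\int_l\Psi\in2\ZZ$ — so $\int_{C_j}\Psi$ and $\int_{A_j}\Psi$ have the same parity irrespective of the choice of basis or of the arcs $\ell_j$. This small addition makes the argument choice-independent and removes the dependence on being able to choose $\ell_j$ disjoint from the curves representing $B_i$ and $A_i$ for $i\geq n$, which is the point where your version is hand-wavy. With that emendation the proof is complete.
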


Recall that we have defined the quadratic form $q_0$ on $H_1(\Sigma, \ZZ/2\ZZ)$ in Section~\ref{sec:The Riemann surface: continuous setting} by~\eqref{eq:def_of_q0}. For any simple oriented loop $C$ representing an element from $H_1(\Sigma, \ZZ/2\ZZ)$ we have
\begin{equation}
  \label{eq:def_of_q_0_again}
  q_0(C) = (2\pi)^{-1}\wind(C,\omega_0) + C\cdot\gamma + 1\mod 2
\end{equation}
where $\gamma = \gamma_1\cup\dots\cup \gamma_{g-1}$ are the cuts introduces in Section~\ref{subsec:intro_Kasteleyn_operator}. Given a harmonic 1-form $\Psi$ Poincar\'e dual to an integer homology class $[C]$ set $q_0(\Psi) = q_0(C)$. Furthermore, if $\Psi\vert_{\Sigma_0}$ is Poincar\'e dual to an integer homology class $[C]$ on $\Sigma_0$, we set 
\begin{equation}
  \label{eq:q0_of_restricted_differential}
  q_0(\Psi\vert_{\Sigma_0})= q_0(C),
\end{equation}
where $C$ is embedded to $\Sigma$ under the natural embedding $\Sigma_0\hookrightarrow \Sigma$.

Assume that $D_0$ is a dimer cover of $G_0$. If $E$ is a dimer cover of the boundary cycles, then $D = D_0\cup E\cup \sigma(D_0)$ is a dimer cover of $G$.
Note that $\sigma^*\M^A_D = - \M^A_D$ since both flows $f^A$ and $f_D$ are symmetric and $\sigma^*\Im\alpha_G = -\Im\alpha_G$. The following lemma is a specification of the general result~\cite{Cimasoni} on the determinant of a Kasteleyn matrix on a Riemann surface to our setup.

\begin{figure}
  \centering
  \includegraphics[clip, trim=0cm 0cm 0cm 0cm, width = 0.96\textwidth]{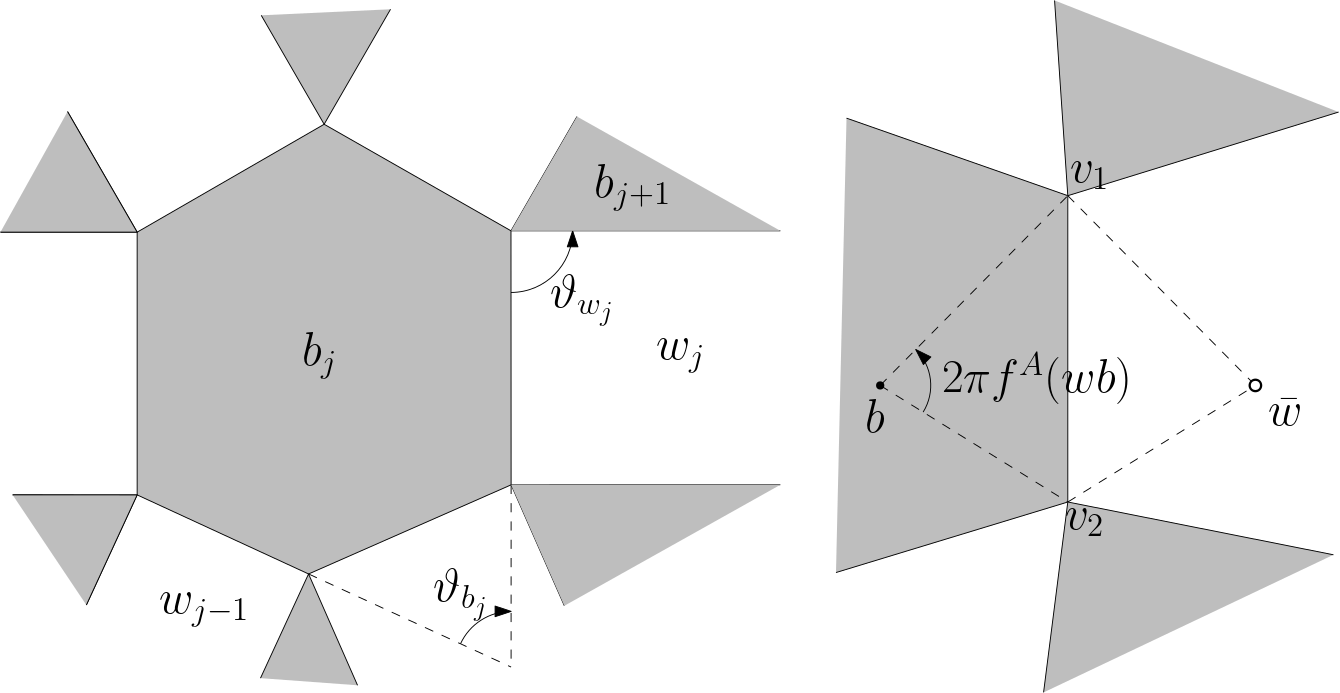}

  \caption{The definition of the angles $\vartheta_b,\vartheta_w$ and the flow $f_\pm^A$.}
  \label{fig:KasTh0}
\end{figure}

\begin{lemma}
  \label{lemma:Kasteleyn_thm}
  For any dimer cover $D$ of $G$ the differential $\Psi^A_D$ has integer cohomologies. Let $\vphi$ be an arbitrary function on edges of $G$. The following formulas hold:
  \begin{enumerate}
    \item Assume that $\partial \Sigma_0 = \varnothing$. Enumerate black and white vertices of $G$ arbitrary. Then there exists a constant $\epsilon\in \TT$ that does not depend on $\vphi$ such that the expansion of the determinant of $K\vphi$ looks as follows:
      \[
        \det (K(w,b)\vphi(wb))_{b\in B,w\in W} = \epsilon \cdot \sum_{D\text{ - dimer cover of }G}\exp[\pi i q_0(\Psi^A_D)] \prod_{wb\in D} |K(w,b)|\vphi(wb)
      \]
    \item Assume that $\partial \Sigma_0 \neq \varnothing$. Enumerate black and white vertices of $G_0$ arbitrary. Then there exist a constant $\epsilon\in \TT$ that does not depend on $\vphi$ and a unique choice of the dimer cover $E$ of boundary cycles such that for any dimer cover $D$ of the form $D_0\cup E\cup \sigma(D_0)$ the 1-form $\Psi^A_D$ is Poincar\'e dual to an integer homology class and the expansion of the determinant of $K\vphi$ restricted to the vertices of $G_0$ looks as follows:
      \[
        \det (K(w,b)\vphi(wb))_{b\in B_0,w\in W_0} = \epsilon \cdot\! \sum_{D_0\text{ - dim. cov. of }G_0}\exp[\pi i q_0(\Psi^A_D\vert_{\Sigma_0})] \prod_{wb\in D_0} |K(w,b)|\vphi(wb),
      \]
      where $D$ denotes a dimer cover of $G$ of the form $D_0\cup E\cup \sigma(D_0)$.
  \end{enumerate}
\end{lemma}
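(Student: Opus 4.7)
The plan is to first establish the integrality of the cohomology class of $\Psi^A_D$, then unfold the determinant expansion comparing signs between pairs of dimer covers, and finally handle the boundary case by descent from the doubled surface.

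\textbf{Step 1 (Integrality of $[\Psi^A_D]$).} I would first show that for any smooth cycle $C$ transverse to the edges of $G$ and disjoint from $p_1,\ldots,p_{2g-2}$,
\[
  \int_C \Psi^A_D \;=\; \int_C \M_D - \int_C \M^A - \frac{2}{\pi}\int_C \Im \alpha_G \pmod{\ZZ},
\]
using that $\int_C d\Phi^A_D=0$. By Lemma~\ref{lemma:primitive_of_flow} the term $\int_C \M_D$ is the signed number of edges of $D$ crossed by $C$, hence an integer. For the remaining two terms, observe that by the construction of $f^A$ the quantity $2\pi f^A(wb)$ is the oriented angle $\angle v_2 b v_1$. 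A local computation then identifies $\int_C \M^A$ with $(2\pi)^{-1}$ times the winding of $C$ measured with respect to the frame $\omega_0$ modulo integers; on the other hand, the very defining property of $\alpha_G$ (namely, that $K$ be gauge-equivalent to a real operator) forces $\frac{2}{\pi}\int_C \Im\alpha_G$ to equal $-(2\pi)^{-1}\wind(C,\omega_0)$ modulo integers. The two contributions cancel, giving $\int_C \Psi^A_D\in \ZZ$.

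\textbf{Step 2 (Determinant expansion and sign comparison).} Fix the enumerations of $B$ and $W$ and a reference dimer cover $D_0$ of $G$, which exists by Kasteleyn--Temperley type considerations on $G$. The permutation expansion of the determinant groups terms by dimer cover:
\[
  \det(K(w,b)\vphi(wb))_{w,b} \;=\; \sum_{D}\epsilon_D \prod_{wb\in D}|K(w,b)|\vphi(wb),
\]
where $\epsilon_D\in\TT$ combines the signature of the permutation with the phases $K(w,b)/|K(w,b)|$. Writing $D\sqcup D_0$ as a disjoint union of loops $C_1,\dots,C_k$ (oriented from black to white along edges of $D$) and double edges, one computes
\[
  \epsilon_D/\epsilon_{D_0} \;=\; \prod_{j=1}^k \Bigl[(-1)^{|C_j|/2+1}\prod_{wb\in C_j}\tfrac{K(w,b)}{|K(w,b)|}\Bigl(\tfrac{K(w,b')}{|K(w,b')|}\Bigr)^{-1}\Bigr],
\]
where the comparison is with $D_0$-edges along the same loop. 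This is the standard loop-flipping identity.

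\textbf{Step 3 (Identification of the sign with $q_0$).} The bracketed quantity at each loop $C_j$ is evaluated using the Kasteleyn condition~\eqref{eq:Kasteleyn_condition} together with the explicit definition~\eqref{eq:def_of_K} of $K$. Unwinding the phases, one finds that the contribution of $C_j$ equals
\[
  \exp\!\Bigl[\pi i\bigl((2\pi)^{-1}\wind(C_j,\omega_0)+C_j\cdot(\gamma_1+\dots+\gamma_{g-1})+1\bigr)\Bigr]
\;=\;\exp[\pi i\, q_0(C_j)],
\]
by~\eqref{eq:def_of_q_0_again}; the $(-1)^{|C_j|/2+1}$ recombines with the $-1$'s attached along the cuts and with the cumulative winding of the edges of $G$ along $C_j$. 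Since $q_0$ is a quadratic form on $H_1(\Sigma,\ZZ/2\ZZ)$ with associated bilinear form the intersection pairing (Lemma~\ref{lemma:q0_is_quadratic_form}), and since the homology class $[C_1]+\dots+[C_k]$ is Poincar\'e dual to $[\Psi^A_D]-[\Psi^A_{D_0}]$ and the loops are pairwise disjoint, one has $\sum_j q_0(C_j)=q_0(\Psi^A_D)-q_0(\Psi^A_{D_0})\pmod 2$. Setting $\epsilon\coloneqq \epsilon_{D_0}\exp[-\pi i\, q_0(\Psi^A_{D_0})]$ yields assertion~(1). The hard technical point here is the verification of the phase bookkeeping at Step~3: the cuts $\gamma_i$, the gauge factor $\alpha_G$, and the angle flow must combine exactly to reproduce the formula for $q_0$; I expect this to be the main obstacle, to be carried out by a careful telescoping computation along each $C_j$.

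\textbf{Step 4 (Boundary case).} When $\partial\Sigma_0\neq\varnothing$, choose $E$ to be any dimer cover of the boundary cycles (there are finitely many options, indexed by the choice, at each component, of which of the two possible perfect matchings of the boundary cycle is taken). With the correspondence $D_0\leftrightarrow D=D_0\cup E\cup\sigma(D_0)$ of Section~\ref{subsec:intro_dimer_model}, the matrix $K(w,b)$ restricted to $G_0$ is obtained from $K$ by a block-decomposition under $\sigma$, so
\[
  \det(K(w,b)\vphi(wb))_{b\in B_0,w\in W_0}
\]
expands as a sum over $\sigma$-invariant dimer covers of $G$ containing $E$, with the sign from part~(1). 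Using $\sigma^*\alpha_G=\bar\alpha_G$, $\sigma^*\omega_0=\bar\omega_0$ and the symmetry of $f^A$ and $f_D$, we get $\sigma^*\Psi^A_D=-\Psi^A_D$, so $\int_{A_j}\Psi^A_D\in\ZZ$ depends on $E$ only through its parity; we select $E$ so that $\int_{A_j}\Psi^A_D$ is even for all $j=1,\dots,n-1$ (uniqueness follows from the fact that varying the matching at one boundary component shifts the corresponding period by $\pm 1$). By Lemma~\ref{lemma:Poincare_dual} this is equivalent to $\Psi^A_D\vert_{\Sigma_0}$ being Poincar\'e dual to an integer homology class on $\Sigma_0$, at which point $q_0(\Psi^A_D\vert_{\Sigma_0})$ is defined by~\eqref{eq:q0_of_restricted_differential} and the identity from part~(1) restricted to $\sigma$-invariant covers becomes exactly assertion~(2).
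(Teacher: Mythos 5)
Your Step~2 and the overall strategy of expanding the determinant and comparing signs loop by loop match the paper's proof. However, there is a genuine gap at the core of Step~3, and your Step~1 asserts two identities that do not hold as stated.

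In Step~3 you claim that the bracketed contribution of a loop $C_j$ equals $\exp[\pi i\,q_0(C_j)]$ and then that $\sum_j q_0(C_j) = q_0(\Psi^A_D) - q_0(\Psi^A_{D_0}) \pmod 2$. Both assertions are wrong, and the errors do not cancel. Since $q_0$ is a \emph{quadratic} form rather than a linear functional, $q_0(\Psi^A_D) - q_0(\Psi^A_{D_0}) = q_0(C_1+\cdots+C_m) + [\Psi^A_{D_0}]\cdot(C_1+\cdots+C_m) \pmod 2$, where $[\Psi^A_{D_0}]$ denotes the Poincar\'e dual class; the cross term $[\Psi^A_{D_0}]\cdot(C_1+\cdots+C_m)$ is present and generically nonzero. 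The paper's per-loop computation~\eqref{eq:KasTh1}--\eqref{eq:KasTh3} in fact shows that the phase of $K(C_j)$ is $\pm\exp\bigl[-\pi i\,q_0(C_j) + \pi i\int_{C_j}\Psi^A_{D}\bigr]$, not $\exp[\pi i\,q_0(C_j)]$; the term $\sum_j\int_{C_j}\Psi^A_{D} = [\Psi^A_D]\cdot(C_1+\cdots+C_m)$ is exactly what produces the missing cross term, and combining it with $-q_0(C_1+\cdots+C_m)$ yields the difference $q_0(\Psi^A_D)-q_0(\Psi^A_{D_0})$ modulo $2$ once one uses that $(C_1+\cdots+C_m)\cdot(C_1+\cdots+C_m)=0$. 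Without seeing where the extra period term enters, your ``careful telescoping'' cannot close this gap.

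In Step~1 the two claims --- that $\int_C\M^A \equiv (2\pi)^{-1}\wind(C,\omega_0)\pmod\ZZ$ and that $\tfrac{2}{\pi}\int_C\Im\alpha_G \equiv -(2\pi)^{-1}\wind(C,\omega_0)\pmod\ZZ$ --- are unproved, and the first one is incorrect: from~\eqref{eq:KasTh3} one finds $\int_C\M^A \equiv \tfrac{1}{2\pi}\sum_j(\vartheta_{w_j}-\vartheta_{b_j})\pmod\ZZ$, which differs from the winding $\tfrac{1}{2\pi}\sum_j(\vartheta_{w_j}+\vartheta_{b_j}) + \pi^{-1}\Im\int_C\alpha_0$ by terms that are not integer in general. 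The paper instead derives integrality from the identity $\exp[\pi i\int_C\Psi^A_D] = \pm\exp[\pi i\,q_0(C)]\cdot K(C)/|K(C)|$, which follows by matching the three formulas~\eqref{eq:KasTh1}--\eqref{eq:KasTh3}; the right-hand side is $\pm 1$ precisely because $K$ is gauge-equivalent to a real operator (so $K(C)\in\RR$), which is the ingredient you tried to use but did not convert into a usable formula. Step~4 of your proposal is essentially correct and aligned with the paper's argument.
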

\begin{proof}
  We assume that $\vphi$ is real valued to simplify the notation; the proof for complex valued $\vphi$ can be repeated verbatim. We can also assume that $\vphi$ never vanishes. Let $C = b_1w_1b_2\ldots w_{k-1}b_kw_k$ be an oriented loop in $G$. For each $j = 1,\dots, k$ identify the face $w_j$ of $G^\ast$ with a Euclidean polygon, let $(b_jw_j)^\ast$ be the edge of $G^\ast$ dual to $b_jw_j$ and oriented such that the face $b_j$ is on the left, and let $(w_jb_{j+1})^\ast$ be the edge dual to $w_jb_{j+1}$ and oriented such that $w_j$ is on the left. Let $\vartheta_{w_j}$ be the oriented angle from $(b_jw_j)^\ast$ to $(w_jb_{j+1})^\ast$. The angle $\vartheta_{b_j}$ is defined similarly, see Figure~\ref{fig:KasTh0}. The definition of $K(w,b)$ implies the following relation.
  \begin{multline}
    \label{eq:KasTh1}
    K(C)= -\prod_{j = 1}^k\left( \frac{-K(w_j, b_j)\vphi(w_jb_j)}{K(w_j, b_{j+1})\vphi(w_jb_{j+1})} \right) =\\
    = \exp \left[ -i\Im\int_C(\alpha_0 + 2\alpha_G) + \pi i (C\cdot\gamma+1) - i\sum_{j = 1}^k \vartheta_{w_j} \right]\times\\
    \times\left|\prod_{j = 1}^k\left( \frac{-K(w_j, b_j)\vphi(w_jb_j)}{K(w_j, b_{j+1})\vphi(w_jb_{j+1})} \right)\right|.
  \end{multline}
  Let us now express $q_0(C)$ in terms of angles $\vartheta_b,\vartheta_w$. To this end we note that $C$ is homotopic to a simple loop which traverses through faces $b_1,w_1,b_2,\dots$ of $G^\ast$ and crosses edges $(b_jw_j)^\ast$, $(w_jb_{j+1})^\ast$ perpendicularly. Computing the winding of this curve we obtain
  \begin{equation}
    \label{eq:KasTh2}
    q_0(C) = \frac{1}{2\pi}\sum_{j = 1}^k (\vartheta_{w_j} + \vartheta_{b_j}) + \pi^{-1}\Im \int_C \alpha_0 + C\cdot\gamma + 1\mod 2.
  \end{equation}
  Let $D$ be a dimer cover of $G$. We now want to express $\int_C\Psi^A_D$ via the angles $\vartheta_b,\vartheta_w$. For this purpose we replace the curve $C$ with the curve $\widetilde{C}$ homotopic to it as drawn on Figure~\ref{fig:KasTh}: instead of following the edges of $G$ the curve $\widetilde{C}$ connects midpoints of edges $b_1w_1,w_1b_2,\dots$, goes around $b_1,b_2,\dots$ clockwise and around $w_1,w_2,\dots$ counterclockwise. Given an edge $bw$ of $G$ and the dual edge $v_1v_2$ of $G^\ast$ oriented such that the black face of $G^\ast$ is on the right we define the flows
  \[
    f_+^A(wb) = \frac{1}{2\pi}\angle v_2b\bar{w},\qquad f_-^A(wb) = \frac{1}{2\pi} \angle \bar{w}bv_1
  \]
  where $\angle$ means an oriented angle, see Figure~\ref{fig:KasTh}, and $\bar{w}$ is defined by Definition~\ref{defin:of_barw}. Clearly, we have $f^A(wb) = f_+^A(wb) + f_-^A(wb)$. Using that the segment $b\bar{w}$ is perpendicular to the dual edge $(bw)^\ast$ of $G^\ast$ we also find the following relation. Given $j = 1,\dots, k$ let $w_{j-1}b_j, w^1_jb_j,\dots, w_j^{l_j}b_j, w_jb_j$ be the edges incident to $b_j$ which $\widetilde{C}$ is crossing consequently, see Figure~\ref{fig:KasTh}. Then
  \begin{equation}
    \label{eq:KasTh21}
    f_+^A(w_{j-1}b_j) + \sum_{s = 1}^{l_j} f^A(w_j^sb_j) + f_-^A(w_jb_j) = \frac{1}{2} - \frac{\vartheta_{b_j}}{2\pi}
  \end{equation}
  Similarly, let $w_jb_j, w_jb_j^1,\dots, w_jb_j^{m_j}, w_jb_{j+1}$ be the edges incident to $w_j$ and crossed by $\widetilde{C}$, then
  \begin{equation}
    \label{eq:KasTh22}
    f_+^A(w_jb_j) + \sum_{s = 1}^{m_j} f^A(w_jb_j^s) + f_-^A(w_jb_{j+1}) = \frac{1}{2} + \frac{\vartheta_{w_j}}{2\pi}.
  \end{equation}
  Using these equalities and Lemma~\ref{lemma:primitive_of_flow} we conclude that
  \begin{multline}
    \label{eq:KasTh3}
    \int_C\Psi^A_D = \int_{\widetilde{C}}\Psi^A_D =\\
    = \#[\text{edges of $C$ covered by $D$}] - k + \frac{1}{2\pi}\sum_{j = 1}^k (\vartheta_{b_j} - \vartheta_{w_j}) - \frac{2}{\pi}\Im \int_C\alpha_G.
  \end{multline}
  \begin{figure}
    \centering
    \begin{minipage}{.65\textwidth}
      \includegraphics[clip, trim=0cm 0cm 0cm 0cm, width = 0.96\textwidth]{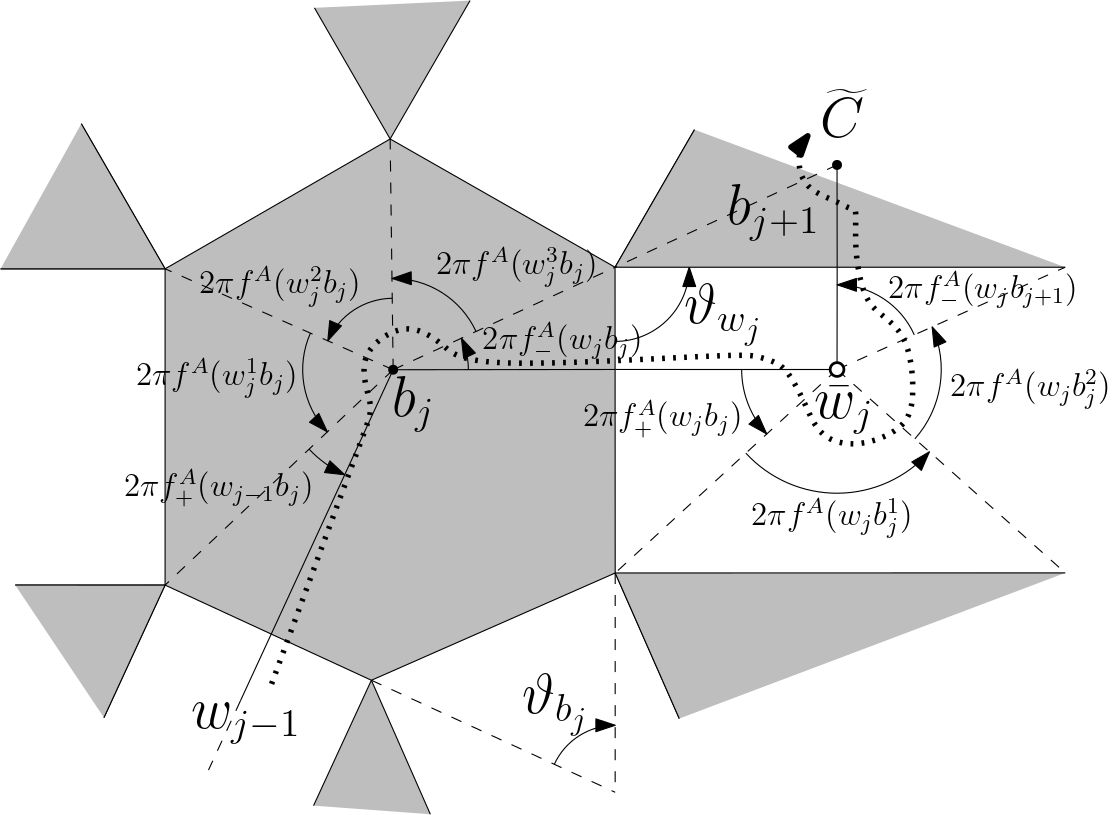}
    \end{minipage}
    \hskip 0.07\textwidth\begin{minipage}{.25\textwidth}
      \includegraphics[clip, trim=0cm 0cm 0cm 0cm, width = 0.96\textwidth]{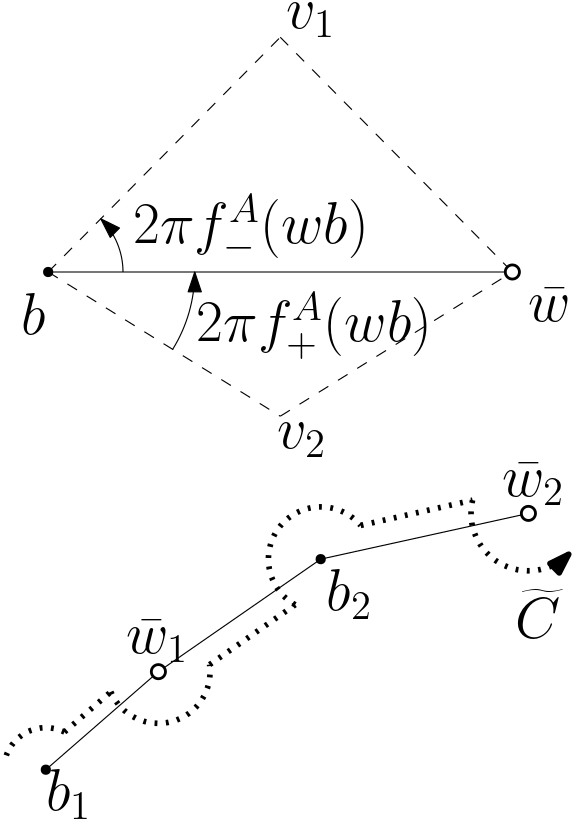}
    \end{minipage}

    \caption{On the left: illustration for the formulas~\eqref{eq:KasTh21} and~\eqref{eq:KasTh22}. On the right: the flow $f_\pm^A$ and the curve $\widetilde{C}$ (the dashed curve on the last picture).}
    \label{fig:KasTh}
  \end{figure}
  Define 
  \[
    \Delta_D(C) = \#[\text{edges of $C$ covered by $D$}] - k.
  \]
  Comparing the exponent on the right-hand side of~\eqref{eq:KasTh1} with the expressions~\eqref{eq:KasTh2},~\eqref{eq:KasTh3} we get
  \begin{equation}
    \label{eq:KasTh3.5}
    K(C) = \exp\left[-\pi iq_0(C) + \pi i \int_C\Psi_D^A - \pi i \Delta_D(C) \right]|K(C)|.
  \end{equation}
  In particular, note that by our choice of weights the operator $K$ is gauge-equivalent to a real-valued operator, hence we have $K(C)\in \RR$ for $K(C)$ defined by~\eqref{eq:KasTh1}. Since $C$ was arbitrary,~\eqref{eq:KasTh3.5} combined with this fact implies that $\Psi^A_D$ has integer cohomologies.

  Assume now that $\partial \Sigma_0 = \varnothing$. Fix an enumeration of black and white vertices and expand
  \[
    \det (K(w,b)\vphi(wb))_{b\in B,w\in W} = \sum_{\tau \text{ - permutation}} (-1)^{\tau}\prod_jK(w_j,b_{\tau(j)})\vphi(w_jb_{\tau(j)}).
  \]
  A given term on the right-hand side is non-zero if and only if $\{ w_jb_{\tau(j)} \}_j$ is a dimer cover. Let $D_1,D_2$ be two dimer covers and $\tau_1,\tau_2$ be the corresponding permutations. Let $C_1,\dots, C_m$ be the loops obtained by superposing of $D_1$ and $D_2$ and orienting the loops in the superposition such that each edge in $D_1$ is oriented from its black to its white end. We have
  \begin{equation}
    \label{eq:KasTh4}
    \frac{(-1)^{\tau_1}\prod_jK(w_j,b_{\tau_1(j)})\vphi(w_jb_{\tau_1(j)})}{(-1)^{\tau_2}\prod_jK(w_j,b_{\tau_2(j)})\vphi(w_jb_{\tau_2(j)})} = \prod_{j = 1}^m K(C_m)
  \end{equation}
  where $K(C)$ is as in~\eqref{eq:KasTh1}. Let us now use~\eqref{eq:KasTh3.5} to evaluate $\prod_{j = 1}^mK(C_m)$. We notice that $\Delta_{D_1}(C_m) = 0$ since exactly half of the edges of $C_m$ is covered by dimers from $D_1$, and we have $q_0(C_1) + \ldots + q_0(C_m) = q_0(C_1 + \ldots + C_m)$ because $C_1,\ldots,C_m$ are disjoint. We conclude that
  \begin{multline}
    \label{eq:KasTh4.5}
    \prod_{j = 1}^m K(C_j) =\\
    = \exp\left[ - \pi i q_0(C_1 + \ldots + C_m) + \pi i (\int_{C_1}\Psi^A_{D_1} + \ldots + \int_{C_m}\Psi^A_{D_1}) \right]\cdot \prod_{j = 1}^m |K(C_j)|
  \end{multline}
  Assume now that $\Psi_{D_i}^A$ is Poincar\'e dual to a class $C_{D_i}$ as defined in~\eqref{eq:def_of_dual_class}, $i = 1,2$. As we observed after~\eqref{eq:def_of_dual_class}, we have $\Psi_{D_1}^A - \Psi_{D_2}^A$ is Poincar\'e dual to $C_1+\ldots + C_m$ which can be expressed as $C_{D_1} - C_{D_2} = C_1+\ldots + C_m$ in $H_1(\Sigma,\ZZ)$. We conclude that
  \begin{multline}
    \label{eq:KasTh4.6}
    q_0(C_1 + \ldots + C_m) = q_0(C_{D_1} - C_{D_2}) =\\
    = q_0(C_{D_1}) - q_0(C_{D_2}) + C_{D_1}\cdot C_{D_2} = q_0(\Psi^A_{D_1}) - q_0(\Psi^A_{D_2}) + (C_{D_2} - C_{D_1})\cdot C_{D_1} =\\
    = q_0(\Psi^A_{D_1}) - q_0(\Psi^A_{D_2}) + (C_1+\ldots + C_m)\cdot C_{D_1} =\\
    = q_0(\Psi^A_{D_1}) - q_0(\Psi^A_{D_2}) + \int_{C_1}\Psi^A_{D_1} + \ldots + \int_{C_m}\Psi^A_{D_1}
  \end{multline}
  where we use the definition of $q_0(\Psi)$ introduced above~\eqref{eq:q0_of_restricted_differential} and~\eqref{eq:def_of_dual_class} and all the equalities are taken modulo 2. Substituting this into~\eqref{eq:KasTh4.5} we obtain
  \begin{equation}
    \label{eq:KasTh4.7}
    \prod_{j = 1}^m K(C_j) = \frac{\exp(\pi i q_0(\Psi^A_{D_1}))}{\exp(\pi iq_0(\Psi^A_{D_2}))}\cdot \prod_{j = 1}^m |K(C_j)|.
  \end{equation}
  Comparing this with~\eqref{eq:KasTh4} we see that there is a unimodal constant $\epsilon$ such that
  \[
    (-1)^{\tau}\prod_jK(w_j,b_{\tau(j)})\vphi(w_jb_{\tau(j)}) = \epsilon\exp(\pi i q_0(\Psi^A_{D}))\cdot \left|\prod_jK(w_j,b_{\tau(j)})\vphi(w_jb_{\tau(j)})\right|.
  \]
  for each permutation $\tau$ which finalizes the proof of the first item of the lemma.

  Let now $\partial \Sigma_0 \neq \varnothing$. Recall that $D = D_0\cup E\cup \sigma(D_0)$ where $E$ is some dimer cover of the boundary cycles. Note that $\sigma^*\Psi^A_D = -\Psi^A_D$. By Lemma~\ref{lemma:Poincare_dual} the differential $\Psi^A_D$ is Poincar\'e dual to an integer cohomology class if and only if $\Psi^A_D$ has even periods along $A_1,\dots, A_{n-1}$. Let us show that there is a unique choice of $E$ for which this condition is satisfied with every $D$.

  Fix $D_0$ and note that changing the cover $E$ along a boundary cycle $B_i$ amounts to adding or subtracting from $\Psi^A_D$ a harmonic differential Poincar\'e dual to $B_i$. Making such adjustments we can achieve that $\int_{A_i}\Psi^A_D$ is even for any $i = 1,\dots n-1$. Replacing $D_0$ with $D_1$ amounts to adding to $\Psi^A_D$ the harmonic differential Poincar\'e dual to $D_0-D_1 + \sigma(D_0-D_1)$ where we denote by $D_0-D_1$ the superposition of $D_0$ and $D_1$. This differential hs even $A_i$-periods for $i = 1,\dots,n-1$. We conclude that the above choice of $E$ guarantees that $\Psi^A_D$ has even $A_1,\dots,A_{n-1}$-periods for each $D_0$, and so $\Psi^A_D\vert_{\Sigma_0}$ is Poincar\'e dual to an integer homology class on $\Sigma_0$.

  Now the formula for $\det (K(w,b)\vphi(wb))_{b\in B_0,w\in W_0}$ follows from the same arguments as in the previous case.
\end{proof}

\subsection{Variation of \texorpdfstring{$\det K_\alpha$}{det Kalpha}}
\label{subsec:variation_of_det}

In this subsection we will be using auxiliary notation defined in Section~\ref{subsec:aux_for_Kalphainv}. We begin with the following technical lemma.

\begin{lemma}
  \label{lemma:KvsD}
  Assume that we are in the setup of Proposition~\ref{prop:Kernel_for_Kalpha} and $\partial \Sigma_0\neq \varnothing$. There exist a constant $\beta_1>0$ a real-valued function $F(b)$ that do not depend on $\alpha$ such that
  \[
    F(b) = O\left( \frac{\delta^{\beta_1}}{(\dist(b, \partial \Sigma_0) + \delta)^{1+\beta_1}} \right)
  \]
  and for any edge $wb$ of $G$ we have
  \begin{multline*}
    K_\alpha^{-1}(b,\sigma(w)) = \frac{1}{2}\left[ \Dd^{-1}_{\alpha+\alpha_G}(b,\sigma(w)) + (\eta_b\bar{\eta}_w)^2\overline{\Dd^{-1}_{-\alpha + \alpha_G}(b,\sigma(w))} \right] + \\
    +F(b) + O\left( \frac{\delta^{\beta_1}}{\dist(b, \{ p_1,\dots, p_{2g-2} \})^{\frac{1}{2}+\beta_1}}  \right)
  \end{multline*}
  where $\beta_1$ and constants in $O(\ldots)$ depend on $\Kk,R$ and $\lambda$ only (see Proposition~\ref{prop:Kernel_for_Kalpha}).
\end{lemma}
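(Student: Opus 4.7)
The plan is to apply Proposition~\ref{prop:Kernel_for_Kalpha} to reduce the problem to analyzing the parametrix $S_\alpha(b,\sigma(w))$, and then compare this parametrix term-by-term with the symmetrized continuous kernel $\tfrac12[\Dd^{-1}_{\alpha+\alpha_G}(b,\sigma(w)) + (\eta_b\bar{\eta}_w)^2\overline{\Dd^{-1}_{-\alpha+\alpha_G}(b,\sigma(w))}]$. The crucial observation is that, since $w$ is adjacent to $b$, one has $\dist(b,\sigma(w)) = \dist(\sigma(b),\sigma(w)) + O(\delta) \approx 2\dist(b,\partial\Sigma_0)+O(\delta)$, so the case analysis built into the definition of $S_\alpha$ reduces to a single case distinction based on $\dist(b,\partial\Sigma_0)$. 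Note also that $\eta_{\sigma(w)}=-\bar{\eta}_w$, hence $(\eta_b\eta_{\sigma(w)})^2 = (\eta_b\bar{\eta}_w)^2$, so the symmetrized expression for the continuous kernel is of the same form as the Case~1.1 parametrix, only evaluated at $(b,\sigma(w))$.

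In the \emph{far-from-boundary} regime $\dist(b,\partial\Sigma_0)\geq \nu_1/2$, we have $\dist(b,\sigma(w))\geq \nu_1$ and $S_\alpha(b,\sigma(w))$ is defined by Case~1.1 (or the analogous Case~2.1 if $b$ or $\sigma(w)$ lies near a conical singularity; the modifier $[\Tt(\cdot)^{\pm 1/4}]$ introduced there tends to $1$ uniformly and produces only errors that can be absorbed into the final $O(\ldots)$). Thus $S_\alpha(b,\sigma(w))$ equals exactly the first term on the right-hand side of the claim, and the error of Proposition~\ref{prop:Kernel_for_Kalpha}, namely $O(\delta^{\beta_0}/\sqrt[4]{\dist(b,\underline p)\dist(\sigma(w),\underline p)})$, bounds the remainder. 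Since $w\sim b$ forces $\dist(\sigma(w),\underline p)\asymp \dist(b,\underline p)$, this is $O(\delta^{\beta_0}/\dist(b,\underline p)^{1/2})$, and we simply take $F(b)\equiv 0$ in this regime.

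In the \emph{near-boundary} regime $\dist(b,\partial\Sigma_0)< \nu_1/2$, the parametrix falls under Case~1.2 (or Case~2.2 near a conical singularity on the boundary): it equals the phase factor $(-1)^{b\sigma(w)\cdot\gamma}\exp[-i\Im(2\int_{\sigma(w)}^b(\alpha+\alpha_G)+\int_{p_0}^b\alpha_0+\int_{p_0}^{\sigma(w)}\alpha_0)]$ times $K_\Tt^{-1}(b,\sigma(w))$, plus the explicit $r_{\pm\alpha+\alpha_G}(\sigma(w))$ correction. On the other hand, Lemma~\ref{lemma:diagonal_expansion_of_Salpha} gives a near-diagonal expansion for $\Dd^{-1}_{\pm\alpha+\alpha_G}(b,\sigma(w))$ involving the same $\alpha_0$-phase factor, the Cauchy singularity $\pi i/(\Tt(b)-\Tt(\sigma(w)))$, and precisely the same $r_{\pm\alpha+\alpha_G}(\sigma(w))$. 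Subtracting, the $\alpha$-dependent phase factors and the $r$-terms cancel by the very construction of the parametrix, and what remains is
\[
  F(b) := (-1)^{b\sigma(w)\cdot\gamma}\exp\bigl[-i\Im\bigl(\tfrac{}{}\textstyle\int_{p_0}^b\alpha_0+\int_{p_0}^{\sigma(w)}\alpha_0\bigr)\bigr]\cdot\Bigl(K_\Tt^{-1}(b,\sigma(w)) - \Pr\bigl[\tfrac{1}{\pi i(\Tt(b)-\Tt(\sigma(w)))},\,\eta_b\eta_{\sigma(w)}\RR\bigr]\Bigr),
\]
which is manifestly $\alpha$-independent. Since $K_\Tt^{-1}$ takes values in $\eta_b\eta_{\sigma(w)}\RR$ and the continuous projection does too, and the phase factor is appropriately chosen so that $\sigma^*$-symmetry forces the outcome to be real (this is where the symmetric extension of the discrete model across $\partial\Sigma_0$ in the full-plane t-embedding used to define $K_\Tt^{-1}$ enters), $F(b)$ is real-valued. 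Theorem~\ref{thmas:parametrix} then yields $|F(b)|=O(\delta^\beta/\dist(b,\sigma(w))^{1+\beta}) = O(\delta^\beta/(\dist(b,\partial\Sigma_0)+\delta)^{1+\beta})$ for any $\beta_1<\beta$.

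The main obstacle is to verify that the quantity $F(b)$ defined above is genuinely a function of $b$ alone rather than of the edge $bw$: different choices of $w\sim b$ produce different values of $\sigma(w)$, and a priori the local discrete kernel $K_\Tt^{-1}(b,\sigma(w))$ and the Cauchy projection will change. However, both quantities are $\mC^1$ in $\sigma(w)$ on the relevant scale with derivatives of order $\delta^{-2}$, so varying $w$ over the (bounded) set of neighbors of $b$ changes $F$ by $O(\delta^{-1}\cdot\delta)=O(1)$; combined with the $\delta^\beta$-smallness of the discrete-to-continuous discrepancy itself, this variation is of the same order as the allowed error $O(\delta^{\beta_1}/\dist(b,\underline p)^{1/2+\beta_1})$ for $\beta_1$ chosen sufficiently small compared to $\min(\beta,\beta_0)$. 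A minor technical point to handle separately is the case where $b$ is simultaneously close to $\partial\Sigma_0$ and close to a conical singularity, where one combines Lemma~\ref{lemma:kernel_of_G4pi} with the analogous expansion of $\Dd^{-1}$ at $p_j$ from Lemma~\ref{lemma:bw_near_singularity_expansion_of_Salpha}; the cancellation mechanism is identical.
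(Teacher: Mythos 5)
Your overall route is the same as the paper's: reduce to estimating $S_\alpha(b,\sigma(w))$ via Proposition~\ref{prop:Kernel_for_Kalpha}, split into a far-from-boundary regime (where the parametrix coincides with the continuous kernel up to small errors, so $F\equiv 0$) and a near-boundary regime (Case~1.2 of the parametrix), and extract $F$ by comparing $K_\Tt^{-1}(b,\sigma(w))$ against the Cauchy-kernel term in the near-diagonal expansion of $\Dd^{-1}_{\pm\alpha+\alpha_G}$ from Lemma~\ref{lemma:diagonal_expansion_of_Salpha}, with the $\alpha$-phase and $r$-terms cancelling exactly. This is the core of the paper's argument too, so the proposal is essentially correct.

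Two points deserve correction. First, your ``minor technical point'' about $b$ being simultaneously close to $\partial\Sigma_0$ and to a conical singularity cannot occur: the singularities $p_1,\dots,p_{2g-2}$ are interior points at a fixed distance from $\partial\Sigma_0$ (by Assumption~\ref{item:intro_metric_assumptions}), and $\nu_1\to 0$ with $\delta$, so $\dist(b,\partial\Sigma_0)<\nu_1$ forces $b$ (and $\sigma(w)$) to be far from all $p_j$. The paper states this explicitly and it eliminates the case entirely; there is no need for an extra argument combining Lemma~\ref{lemma:kernel_of_G4pi} with Lemma~\ref{lemma:bw_near_singularity_expansion_of_Salpha} in that regime. (Such a combination \emph{is} needed for the separate situation where $b$ is far from the boundary but near a singularity, but there $F\equiv 0$ and the error lands in the second $O(\ldots)$ of the claim, not in $F$.)

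Second, the paragraph treating the ``main obstacle'' --- whether $F$ is a function of $b$ alone --- is internally inconsistent and doesn't actually close the gap: you quote a derivative of order $\delta^{-2}$ and a step of order $\delta$, which gives a variation of order $\delta^{-1}$, not $O(1)$, and near the boundary even $O(1)$ would exceed the allowed error $O(\delta^{\beta_1}/\dist(b,\underline p)^{1/2+\beta_1})$. The cleaner reading, and the one consistent with how this lemma is consumed in Lemma~\ref{lemma:reference_flow}, is that $F$ is permitted to depend on the directed edge $wb$ (or equivalently on $\sigma(w)$); the notation $F(b)$ and the bound involving only $\dist(b,\partial\Sigma_0)$ are harmless abuses since $\dist(w,\partial\Sigma_0)\asymp\dist(b,\partial\Sigma_0)$ for $w\sim b$. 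The only property actually used downstream is that $F$ does not depend on $\alpha$, which your construction does establish.
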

\begin{proof}
  By Proposition~\ref{prop:Kernel_for_Kalpha} we have
  \[
    K_\alpha^{-1}(b,\sigma(w)) = S_\alpha(b,\sigma(w)) + O\left( \frac{\delta^\beta}{\dist(b, \{ p_1,\dots, p_{2g-2} \})^{\frac{1}{2}}}  \right)
  \]
  for a small $\beta>0$, thus it is enough to estimate $S_\alpha(b,\sigma(w))$. Recall the construction of $S_\alpha$ given in the beginning of Section~\ref{subsec:Kalphainverse}. By this construction $S_\alpha(b,\sigma(w))$ coincides with $\frac{1}{2}\left[ \Dd^{-1}_{\alpha+\alpha_G}(b,\sigma(w)) + (\eta_b\bar{\eta}_w)^2\overline{\Dd^{-1}_{-\alpha + \alpha_G}(b,\sigma(w))} \right]$ as soon as $\dist(b,\sigma(w)) > \nu_1$ and $\dist(w,\{ p_1,\dots, p_{2g-2} \}) > \nu_2$ (recall that $\eta_{\sigma(w)} = -\bar{\eta}_w$). It remains to compare $S_\alpha(b,\sigma(w))$ with $\frac{1}{2}\left[ \Dd^{-1}_{\alpha+\alpha_G}(b,\sigma(w)) + (\eta_b\bar{\eta}_w)^2\overline{\Dd^{-1}_{-\alpha + \alpha_G}(b,\sigma(w))} \right]$ when one of the aforementioned inequalities break.

  Assume that $\dist(w,\{ p_1,\dots, p_{2g-2} \}) \leq \nu_2$, let $i$ be such that $\dist(\sigma(w),p_i)\leq \nu_2$ and $p_j = \sigma(p_i)$. To evaluate $S_\alpha(b,\sigma(w))$ in this regime we should substitute $b$ and $\sigma(w)$ into the kernel~\eqref{eq:Salpha_case23} with this choice of $i,j$. This kernel is obtained from the bulk kernel by applying the following two operations:
  \begin{itemize}
    \item Replace the kernel $\Dd_\alpha(b,\sigma(w))$ with the new kernel 
      \begin{multline}
        \widetilde{\Dd}^{-1}_{i,\alpha}(b,\sigma(w)) =\\
        = [\Tt(b)^{-1/4}][\Tt(\sigma(w))^{-1/4}]\lim\limits_{p\to p_j}\Dd_\alpha^{-1}(p,\sigma(w))(\Tt(p) - \Tt(p_j))^{1/4}(\Tt(\sigma(w)) - \Tt(p_i))^{1/4}.
      \end{multline}
    \item Multiply by the resulting expression by $\exp\left( -i\Im\int_{p_j}^b(2\alpha_c + 2\alpha_G + \alpha_0) \right)$.
  \end{itemize}
  The error coming from the first replacement can be controlled using Lemma~\ref{lemma:fs}, while the exponential factor on the second step is of order $1 + O(\dist(b,p_j)^{1/2})$. We conclude that the overall error has the required order.

  Let us now assume that $\dist(b,\sigma(w)) \leq \nu_1$. Note that in this case $b$ and $w$ are close to the boundary and thus far from conical singularities. Thus, to evaluate $S_\alpha(b,\sigma(w))$ we should use the kernel given in~\eqref{eq:Salpha_case12}. To compare this kernel to the bulk kernel we can use Theorem~\ref{thmas:parametrix} to replace $K_\Tt^{-1}$ with its continuous counterpart and then Lemma~\ref{lemma:diagonal_expansion_of_Salpha} to compare $\Dd_\alpha^{-1}$ with its near-diagonal asymptotics. A direct check shows that the error term has the required order.

\end{proof}

Proposition~\ref{prop:Kernel_for_Kalpha} permits us to analyze logarithmic variations of $\det K_\alpha$ in a similar fashion as it was done in~\cite{DubedatFamiliesOfCR} by expressing it via the near-diagonal asymptotics of the kernel $K_\alpha^{-1}$. However, an additional technical difficulty appears when $\partial\Sigma_0\neq \varnothing$. In this case we are interested in the variation of $\det K_{0,\alpha}$, and so we must use the kernel $K_{0,\alpha}^{-1}$ of the operator $K_\alpha$ restricted to $\Sigma_0$. This kernel is related with $K_\alpha^{-1}$ via the expression given in Lemma~\ref{lemma:Kalphainv_Sigma_with_boundary}. We see that the additional term in the expression blows up when we substitute $b\sim w$ and $w$ approaches the boundary. This can potentially create an additional term in the logarithmic variation of $\det K_{0,\alpha}$ which is handled by the lemma below.

Assume that $\partial \Sigma_0\neq \varnothing$ and $\alpha$ is a $(0,1)$-form with $\mC^1$ coefficients and such that $
\sigma^*\alpha = -\bar{\alpha}$. Consider the flow
\begin{equation}
  \label{eq:def_of_fS}
  f^S_\alpha(wb) = \eta_w^2K_\alpha(w,b)K^{-1}_\alpha(b,\sigma(w)).
\end{equation}
Using that
\[
  K_\alpha(\sigma(b),\sigma(w)) = -(\eta_b\eta_w)^2K_\alpha(b,w),\qquad K_\alpha^{-1}(\sigma(b),\sigma(w)) = - (\bar{\eta}_b\bar{\eta}_w)^2 K_\alpha^{-1}(b,w)
\]
one can check that
\begin{equation}
  \label{eq:div_of_f^S-alpha}
  \div f^S_\alpha(b) = \begin{cases}
    0,\quad b\notin \partial \Sigma_0,\\
    1,\quad b\in \partial \Sigma_0,
  \end{cases}
  \qquad \div f^S_\alpha(w) = \begin{cases}
    0,\quad w\notin \partial \Sigma_0,\\
    -1,\quad w\in \partial \Sigma_0.
  \end{cases}
\end{equation}
Let $\M^S_\alpha$ denote the 1-form associated with $f^S_\alpha$.

\begin{lemma}
  \label{lemma:reference_flow}
  There exists a flow $f_0^S$ such that the following holds. Let $\Kk\subset \Mm_g^{t,(0,1)}$ be a compact subset such for each $[C,A,B,\alpha]\in \Kk$ we have $\theta[\alpha](0)\neq 0$. Let $R>0$ be fixed. Let $\alpha = \dbar \vphi + \alpha_h$ be such that 
  \[
    (\Sigma, A_1,\dots, A_g,B_1,\dots,B_g,\pm\alpha_h + \alpha_G)\in \Kk
  \]
  and $\|\vphi\|_{\mC^2(\Sigma)}\leq R$, and $\sigma^*\alpha = -\bar{\alpha}$. Let $M^S_0$ be the 1-form associated with $f^S_0$ and 
    \[
      M^S_\alpha - M^S_0 = d\Phi^S_\alpha + \Psi^S_\alpha
    \]
    be the Hodge decomposition. Then $f_0^S$ can be chosen such that
  \begin{enumerate}
    \item $\supp f_0^S\subset \partial \Sigma_0$ and $f_0^S$ is real-valued;
    \item $f^S_\alpha - f^S_0$ is divergence-free;
    \item $\Psi^S_\alpha = o(1)$ as $\delta\to 0$ uniformly in $\alpha$ and $(\lambda, \delta)$-adapted graphs $G$;
    \item $\Phi_\alpha^S$ is bounded uniformly in $\alpha$ and $(\lambda, \delta)$-adapted graphs $G$ provided $\delta$ is small enough;
    \item $\Phi_\alpha^S(p)\to 0$ as $\delta\to 0$ uniformly in $\alpha$ and $(\lambda, \delta)$-adapted graphs $G$, and in $p$ from any compact subset in $\Sigma$ not intersecting $\partial \Sigma_0$.
  \end{enumerate}
\end{lemma}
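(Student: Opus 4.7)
The plan is to build $f_0^S$ by hand on boundary cycles, and then control the Hodge decomposition of $\M^S_\alpha - \M^S_0$ using the expansion of $K_\alpha^{-1}(b,\sigma(w))$ supplied by Lemma~\ref{lemma:KvsD}.

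First, on each boundary cycle orient the cycle consistently with $\partial\Sigma_0$ and enumerate its vertices cyclically as $b_1,w_1,b_2,w_2,\dots,b_k,w_k$. Setting
\[
  f_0^S(b_iw_i) \;=\; \tfrac12,\qquad f_0^S(w_ib_{i+1}) \;=\; -\tfrac12,
\]
and extending by antisymmetry, a one-line computation shows $\div f_0^S(b_i)=1$ and $\div f_0^S(w_i)=-1$, matching~\eqref{eq:div_of_f^S-alpha}. Thus (i) and (ii) hold; in particular $\M_\alpha^S-\M_0^S$ is a closed generalized 1-form so its Hodge decomposition produces a harmonic $\Psi_\alpha^S$.

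To address (iv) and (v), I would substitute the asymptotic of Lemma~\ref{lemma:KvsD} into~\eqref{eq:def_of_fS} and split
\[
  f_\alpha^S \;=\; f^{\mathrm{cont}} + f^{F} + f^{\mathrm{err}},
\]
corresponding to the three summands on the right-hand side of Lemma~\ref{lemma:KvsD}. A discrete-to-continuous approximation in the spirit of Lemma~\ref{lemma:approx_of_dbar} shows that the pairing of the 1-form associated with $f^{\mathrm{cont}}$ with any smooth test 1-form converges to an integral against a 1-form on $\Sigma\smm\partial\Sigma_0$ that is smooth in the bulk and blows up only mildly near $\partial\Sigma_0$. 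The $f^{F}$-contribution is concentrated in an $O(\delta^{\beta_1/(1+\beta_1)})$-neighborhood of $\partial\Sigma_0$ by the decay of $F$, and $f^{\mathrm{err}}$ summed against any $\mC^1$ 1-form gives $o(1)$ thanks to the bounded density of vertices and the integrability of the error bound. Properties (iv) and (v) then follow after fixing the additive constant of $\Phi_\alpha^S$ so that the limiting scalar vanishes in the bulk; the boundary-localized pieces $f^{F}$ and $f_0^S$ are designed so that, modulo $o(1)$, they contribute the same total divergence across each boundary segment and hence cancel in $\Phi_\alpha^S(p)$ for $p$ at positive distance from $\partial\Sigma_0$.

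The main obstacle is property (iii), $\Psi_\alpha^S = o(1)$. Since $\Psi_\alpha^S$ is harmonic on $\Sigma$, it suffices to show that the periods $\int_C(\M_\alpha^S-\M_0^S)$ tend to zero over a fixed basis of cycles chosen to lie in the bulk $\Sigma\smm(\partial\Sigma_0\cup\{p_1,\dots,p_{2g-2}\})$. On such cycles the $f^{F}$ and $f_0^S$ contributions vanish identically and $f^{\mathrm{err}}$ gives $o(1)$, so the task reduces to showing that the period of the continuous-limit 1-form coming from $f^{\mathrm{cont}}$ vanishes. This should follow from a symmetry argument: combining $\sigma^*\alpha=-\bar\alpha$, $\sigma^*\alpha_G=\bar\alpha_G$, $\sigma^*\alpha_0=\bar\alpha_0$ with the transformation rule of $\Dd^{-1}$ under $\sigma$, together with the diagonal expansion of Lemma~\ref{lemma:diagonal_expansion_of_Salpha}, one should be able to identify the limiting 1-form as exact --- essentially a total derivative of $\log\theta[\alpha_h+\alpha_G]$ evaluated along the diagonal of $\Sigma\times\Sigma$ under $\sigma$. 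Verifying this exactness rigorously, and controlling the uniformity in $\alpha$ ranging in the admissible class, is the main technical step.
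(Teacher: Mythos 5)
Your construction of $f_0^S$ on boundary cycles and verification of items (i)–(ii) is fine, but the remaining three items contain real gaps, and the key cancellation mechanism for (iii) is misattributed.

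\emph{The basis of cycles cannot be taken in the bulk.} You reduce (iii) to showing that periods over a basis of $H_1(\Sigma,\ZZ)$ chosen away from $\partial\Sigma_0$ tend to zero. But $\Sigma\smm\partial\Sigma_0$ is disconnected (into $\Sigma_0^\circ$ and $\Sigma_0^{\op,\circ}$), and the classes $A_1,\dots,A_{n-1}$ — which cross $\partial\Sigma_0$ by the symmetry convention $\sigma_*A_i=-A_i$ — are not represented by cycles in the bulk. A quick sanity check: when $\Sigma_0$ is an annulus the double is a torus, the two bulk pieces supply only one independent class, and the $A$-cycle must traverse the boundary. The periods along these crossing cycles are precisely where the $O(1)$ constants $C_j$ appear (see~\eqref{eq:estimates_of_tildePsiPhi}); they are not $o(1)$ and must be absorbed into $f_0^S$. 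Your explicit $\pm\tfrac12$ flow achieves the right divergence but does not carry this correction, so I don't see how your $f_0^S$ can make the $A_j$-periods of $\Psi_\alpha^S$ vanish, and the boundary offset similarly breaks (v) by these constants. The paper's reference flow $f_0^S = -f_E - Cf_0 - \sum_{j}(C_j+C)f_j$ is designed exactly to cancel these graph-dependent but $\alpha$-independent constants.

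\emph{The cancellation driving (iii) is the small-origami assumption, not an exactness of a theta-derived form.} After substituting Lemma~\ref{lemma:KvsD}, the ``continuous'' contribution to $\sum_j f_\alpha^S(e_j)$ is $\tfrac12\int_l\bigl(\Dd^{-1}_{\alpha+\alpha_G}(p,\sigma(p))\,\o + (\eta_b\bar\eta_w)^2\overline{\Dd^{-1}_{-\alpha+\alpha_G}(p,\sigma(p))}\,\o\bigr)$, where $\o$ restricted to each dual edge is a phase times $d\Tt$. The crucial point — see~\eqref{eq:rf3}–\eqref{eq:rf4} — is that the rapidly oscillating face-phases $\eta_w^2$, $\bar\eta_b^2$ recombine $\o$ (resp.\ $(\eta_b\bar\eta_w)^2\o$) into $d\Oo$ (resp.\ $d\bar\Oo$); then $\int_l g\,d\Oo = \Oo\,g\big|_{\partial l} - \int_l \Oo\,dg = O(\delta)$ because $\Oo = O(\delta)$ by the small-origami assumption. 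Your argument instead proposes to take a continuous limit of this contribution and then show exactness via a $\sigma$-symmetry of $\Dd^{-1}$ together with theta-function identities. That route fails on two counts: the phases $\eta_w^2$ oscillate on the lattice scale and do \emph{not} converge to a smooth coefficient, so ``the limiting 1-form'' you describe is not a well-defined object without the origami recombination; and the $(1,0)$-form $\Dd^{-1}_{\alpha+\alpha_G}(p,\sigma(p))\,\omega_0(p)$ (which is what would remain if one ignored the phases) has no reason to be exact — its $\dbar$ involves the $\alpha_0/2+\alpha+\alpha_G$ terms from both variables plus the chain rule through $\sigma$, and these do not assemble to $-\alpha_0 g$. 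Without the origami mechanism the ``continuous'' contribution would be a genuine $O(1)$ obstruction.

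\emph{On (iv) and (v).} Your heuristic that the $f^F$ piece is boundary-concentrated and the errors are integrable is correct as far as it goes, but the $F(b)$ term of Lemma~\ref{lemma:KvsD} produces a nonvanishing $O(1)$ flux across the boundary (integrating $\delta^{\beta_1}/(\dist+\delta)^{1+\beta_1}$ perpendicular to $\partial\Sigma_0$), and this flux depends on the graph. Declaring that $f^F$ and $f_0^S$ ``are designed'' to cancel is precisely the step that needs justification and that your fixed $\pm\tfrac12$ choice cannot supply; the cancellation is why the paper extracts the constants $C$, $C_j$ from the asymptotics before building $f_0^S$.
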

\begin{proof}
  Let $l$ be an oriented path on $G^\ast$ composed of oriented edges $e_1^\ast,\dots, e_k^\ast$ and let $e_1,\dots, e_k$ be the corresponding edges of $G$ oriented as in Lemma~\ref{lemma:primitive_of_flow}. Assume that $l\subset \Sigma_0$. We want to estimate 
  \begin{equation}
    \label{eq:integral_of_MSalpha}
    \sum_{j = 1}^k f^S_\alpha(e_j).
  \end{equation}
  Recall the multivalued function $\Tt$ associated with the t-embedding of $G^\ast$, see Section~\ref{subsubsec:intro_graph_assumptions}. Given an edge $wb$ of $G$ and the dual edge $(wb)^\ast$ of $G^\ast$ define the 1-form $\o$ on the tangent space to $(wb)^\ast$ by
  \begin{equation}
    \label{eq:def_of_Oo_surface}
    \o = \eta_w^2 \exp\left[ i\Im\left(\int_{p_0}^w \alpha_0 + \int_{p_0}^b\alpha_0+ 2\int_w^b\alpha_G\right) \right]d\Tt\vert_{(wb)^*}.
  \end{equation}
  Note that, by the definition of $K$~\eqref{eq:def_of_K}, we have $\eta_w^2K(w,b) = \pm\int_{(wb)^\ast}\o$ where the sign is $-$ if $wb$ crosses one of the cuts $\gamma_1,\dots, \gamma_{g-1}$. Note that $K_\alpha(w,b) = K(w,b)(1+O(\delta))$. It follows that we can write
  \begin{equation}
    \label{eq:rf1}
    \sum_{j = 1}^k f^S_\alpha(e_j) = \sum_{j = 1}^k\pm K_\alpha^{-1}(b_j, \sigma(w_j))\int_{e_j^\ast}\o\cdot (1 + O(\delta)),
  \end{equation}
  where $e_j = w_jb_j$. We now use Lemma~\ref{lemma:KvsD} to replace $K_\alpha^{-1}(b,\sigma(w))$ with the corresponding expression involving $\Dd_{\pm\alpha+\alpha_G}^{-1}$, and then replace summation with integration. Assuming that $\dist(l, \partial \Sigma_0)\geq \delta$ we obtain
  \begin{multline}
    \label{eq:rf2}
    \sum_{j = 1}^k f^S_\alpha(e_j) = \frac{1}{2}\int_l \left( \Dd^{-1}_{\alpha + \alpha_G}(p,\sigma(p))\,\o(p) + (\eta_b\bar{\eta}_w)^2\overline{ \Dd^{-1}_{-\alpha + \alpha_G}(p,\sigma(p))} \,\o(p) \right) + \\
    + \int_lF_1(b)\, |\omega_0| + \int_l O\left(  \frac{\delta^{\beta_1}}{(\dist(b, \partial \Sigma_0) + \delta)^{1+\beta_1}} \right) |\omega_0|
  \end{multline}
  where the $\pm$ sign is compensated by choosing an appropriate branch of $\Dd^{-1}$ and
  \[
    F_1(b) = O\left( \frac{\delta^{\beta_1}}{\dist(b, \{ p_1,\dots, p_{2g-2} \})^{\frac{1}{2}+\beta_1}}\right),
  \]
  $F_1$ is real valued and does not depend on $\alpha$. Indeed, the error term $F_1$ comes from Lemma~\ref{lemma:KvsD} (this part is controlled by the function $F$ in the lemma), and from replacing summation with integration (in this case the independence on $\alpha$ can be derived by examining $\Dd^{-1}$ locally at the boundary via Lemma~\ref{lemma:diagonal_expansion_of_Salpha}).

  Let us estimate the first integral in the right-hand side of~\eqref{eq:rf2}. For this purpose let us recall that we have a multivalied origami map $\Oo$ associated with the t-embedding of $G^\ast$ in $\Sigma$, see Section~\ref{subsubsec:intro_graph_assumptions}. By Assumption~\ref{item:intro_reg_part} on $G$ (see Section~\ref{subsubsec:intro_graph_assumptions}), there exists a branch of $\Oo$ such that $\sup_{z\in \Sigma}|\Oo(z)|\leq C\delta$ where $C$ is a constant depending on the constant $\lambda$ from the assumptions. Let us put
  \[
    \eta_w^{(1)} = \exp\left( i\Im(\int_{p_0}^w\alpha_0 - 2\int_{p_0}^w\alpha_G) \right)\eta_w,\qquad \eta_b^{(1)} = \exp\left( i\Im(\int_{p_0}^b\alpha_0 + 2\int_{p_0}^b\alpha_G) \right)\eta_b.
  \]
  The relation $(\bar{\eta}_b\bar{\eta}_w)^2 = \frac{K(w,b)^2}{|K(w,b)|^2}$ implies $(\bar{\eta}^{(1)}_b\bar{\eta}^{(1)}_w)^2 = \frac{K_\Tt(w,b)^2}{|K_\Tt(w,b)|^2}$, where $K_\Tt(w,b) = \int_{v_1}^{v_2}d\Tt$. It follows that, after replacing $\Oo$ with $e^{it}\Oo$ for a suitable $t\in \RR$, we can write
  \begin{equation}
    \label{eq:rf3}
    d\Oo(z) = \begin{cases}
      (\eta_w^{(1)})^2\,d\Tt,\qquad z\in w,\\
      (\bar{\eta}^{(1)}_b)^2\,d\bar{\Tt},\qquad z\in b
    \end{cases}
  \end{equation} 
  (cf.~\eqref{eq:def_of_origami}), where $z\in w$ or $z\in b$ means that $z$ belongs to the corresponding face of the t-embedding. This allows us to deduce that for any edge $wb$ of $G$
  \begin{equation}
    \label{eq:rf4}
    \begin{split}
      &\o\vert_{(wb)^\ast} = \exp\left[ i\Im \int_w^b\alpha_0\right] \cdot \exp\left[2i\Im\left(\int_{p_0}^w\alpha_G + \int_{p_0}^b\alpha_G \right) \right]\,d\Oo\vert_{(wb)^\ast},\\
      &(\eta_b\bar{\eta}_w)^2\o\vert_{(wb)^\ast} = \exp\left[- i\Im \int_w^b\alpha_0\right] \cdot \exp\left[-2i\Im\left( \int_{p_0}^w\alpha_G + \int_{p_0}^b\alpha_G \right) \right]\,d\bar{\Oo}\vert_{(wb)^\ast}.
    \end{split}
  \end{equation}
  Replacing $\exp\left[ i\Im \int_w^b\alpha_0\right]$ with $1 + O(\delta)$, substituting these expressions into~\eqref{eq:rf2} and integrating by parts we get
  \begin{multline}
    \label{eq:rf5}
    \frac{1}{2}\int_l \left( \Dd^{-1}_{\alpha + \alpha_G}(p,\sigma(p))\,\o(p) + (\eta_b\bar{\eta}_w)^2\overline{ \Dd^{-1}_{-\alpha + \alpha_G}(p,\sigma(p))} \,\o(p) \right) =\\
    = \int_lF_2(b)\, |\omega_0| + \int_l O\left(  \frac{\delta^{\beta_1}}{(\dist(b, \partial \Sigma_0) + \delta)^{1+\beta_1}} \right) |\omega_0|
  \end{multline}
  where 
  \[
    F_2(b) = O\left( \frac{\delta^{\beta_1}}{\dist(b, \{ p_1,\dots, p_{2g-2} \})^{\frac{1}{2}+\beta_1}}\right)
  \]
  $F_2$ is real valued does not depend on $\alpha$. We conclude that
  \begin{equation}
    \label{eq:rf6}
    \sum_{j = 1}^k f^S_\alpha(e_j) = \int_l(F_1(b) + F_2(b))\, |\omega_0| + \int_l O\left(  \frac{\delta^{\beta_1}}{(\dist(b, \partial \Sigma_0) + \delta)^{1+\beta_1}} \right) |\omega_0|.
  \end{equation}
  Note that this estimate remains valid even when $l$ crosses the boundary of $\Sigma_0$ since $|f_\alpha^S(e)|\leq C$ for some $C$ depending on $\Kk, R$ and $\lambda$ only due to Proposition~\ref{prop:Kernel_for_Kalpha}.

  Let $E$ be the dimer cover of boundary cycles chosen as in the second item of Lemma~\ref{lemma:Kasteleyn_thm}. Recall that the flow $f_E$ is defined by
  \[
    f_E(wb) = \begin{cases}
      1,\quad wb\in E,\\
      0,\quad wb\notin E.
    \end{cases}
  \]
  The flow $f^S_\alpha + f_E$ is divergence-free. Put $\tilde{f}_0^S = -f_E$ and $\tilde{M}^S_0$ to be the corresponding 1-form, let 
  \[
    \M^S_\alpha - \tilde{\M}^S_0 = d\tilde{\Phi}^S_\alpha + \tilde{\Psi}^S_\alpha.
  \]
  be its Hodge decomposition. Note that $\sigma^*(\M^S_\alpha - \tilde{\M}^S_0) = \M^S_0 - \tilde{\M}^S_\alpha$, hence we can choose $\tilde{\Phi}^S_\alpha$ such that $\sigma^*\tilde{\Phi}^S_\alpha = -\tilde{\Phi}^S_\alpha$. From the estimate~\eqref{eq:rf6} and Lemma~\ref{lemma:primitive_of_flow} we obtain the following estimates as $\delta\to 0$:
  \begin{equation}
    \label{eq:estimates_of_tildePsiPhi}
    \begin{split}
      &\int_{B_j} \tilde{\Psi}^S_\alpha = o(1),\quad j = 1,\dots, g,\\
      &\int_{A_j}\tilde{\Psi}^S_\alpha = o(1),\quad j = n, n+1,\dots, g,\\
      &\exists C_j\in \RR\ \colon\ \int_{A_j}\tilde{\Psi}^S_\alpha = C_j + o(1),\quad j = 1,\dots, n-1,\\
      &\exists C\in \RR\ \colon\ \forall p\in \Sigma_0,\ \tilde{\Phi}^S_\alpha(p) = C + O\left( \frac{\delta}{(\dist(p,\partial\Sigma_0)+\delta)} \right)^{\beta_1}
    \end{split}
  \end{equation}
  and $C,C_1,\dots, C_{n-1}$ are uniformly bounded and do not depend on $\alpha$. Recall that loops $B_0,B_1,\dots, B_{n-1}$ are boundary components of $\Sigma_0$ oriented according to its orientation. For each $j$ let $f_j$ be the flow defined by
  \[
    f_j(wb) = \begin{cases}
      1,\quad w,b\in B_j,\ wb\text{ coincides with the orientation of }B_j,\\
      -1,\quad w,b\in B_j,\ wb\text{ contradicts with the orientation of }B_j,\\
      0,\quad \text{else}.
    \end{cases}
  \]
  Define
  \[
    f_0^S = -f_E - Cf_0 - \sum_{j = 1}^{n-1}(C_j+C)f_j. 
  \]
  Estimates~\eqref{eq:estimates_of_tildePsiPhi} imply that $f_0^S$ satisfies all the necessary properties.
\end{proof}

Recall that $K_\Tt(b,w)$ and $K_\Tt^{-1})(b,w)$ denote the Kasteleyn operator and its full-plane inverse respectively associated with the t-embedding obtained by identifying a neighborhood of $w$ in $G^\ast$ with a full-plane t-embedding, see Section~\ref{subsec:aux_for_Kalphainv}. Given a $(0,1)$-form $\alpha$ we define
\begin{equation}
  \label{eq:def_of_Palpha}
  P(\alpha) = 2i \sum_{b\sim w,\ b,w\in G} K_\Tt(w,b) K_\Tt^{-1}(b,w)\int_w^b \Im\alpha 
\end{equation}
If $\partial \Sigma_0\neq \varnothing$, then we also define
\begin{equation}
  \label{eq:def_of_P0alpha}
  P_0(\alpha) = i\sum_{b\sim w,\ b,w\in G} (K_\Tt(w,b) K_\Tt^{-1}(b,w) + f_0^S(wb))\int_w^b \Im\alpha.
\end{equation}
Recall that with any anti-holomorphic $(0,1)$-form $\alpha$ we associate vectors $a(\alpha),b(\alpha)\in \RR^g$ defined by
\[
  a(\alpha)_j = \pi^{-1}\int_{A_j} \Im \alpha,\qquad b(\alpha)_j = \pi^{-1}\int_{B_j} \Im \alpha.
\]
Recall the definition of $\theta[\alpha_h](z)$ given in~\eqref{eq:def_of_theta_alpha}.
Let $\Delta$ be the Laplace operator associated with the metric $ds^2$ defined on $\mC^2$ functions compactly supported in the interior of $\Sigma_0$, i.e.
\[
  -4\dbar\partial f = \Delta f \bar{\omega}_0\wedge \omega_0.
\]

\begin{prop}
  \label{prop:variations_of_detKalpha}
  Let $\Kk\subset \Mm_g^{t,(0,1)}$ be a compact such that for any $[\Sigma, A,B,\alpha]\in \Kk$ we have $\theta[\alpha](0)\neq 0$. Let $\lambda,R>0$ be fixed, and $\delta_0$ be as in Proposition~\ref{prop:Kernel_for_Kalpha}. Assume that the graph $G$ on $\Sigma$ is $(\lambda,\delta)$-adapted, $\delta\leq \delta_0$ and $\alpha_t = \dbar\vphi_t + \alpha_{h,t}$ is a smooth family of $(0,1)$-forms on $\Sigma$ such that for all $t$ 
  \[
    (\Sigma, A_1,\dots, A_g,B_1,\dots,B_g,\pm\alpha_{h,t} + \alpha_G)\in \Kk
  \]
  $\|\vphi_t\|_{\mC^2(\Sigma)}\leq R$, $\|\frac{d}{dt}\vphi_t\|_{\mC^2(\Sigma)}\leq R$ and $\int_\Sigma \alpha_{h,t}\wedge \ast \alpha_{h,t}\leq R$. 
  Then we have the following: 
    \begin{enumerate}
      \item If $\partial \Sigma_0 = \varnothing$, then 
        \begin{multline*}
          \frac{d}{dt}\log\left[ e^{-P(\alpha_t)}\det K_{\alpha_t}\right] = \\
          = \frac{d}{dt}\left[ \log \left( \theta[\alpha_{h,t}+\alpha_G](0) \cdot \overline{\theta[-\alpha_{h,t} + \alpha_G](0)}\right) + 4\pi ia(\alpha_{h,t})\cdot b(\alpha_G)\right] -\\
          - \frac{d}{dt}\,\frac{1}{2\pi}\int_{\Sigma_0}\Re \vphi_t \Delta \Re\vphi_t ds^2  + o(1).         
        \end{multline*}
      \item Assume that $\partial \Sigma_0 \neq \varnothing$ and $\alpha_t$ satisfies $\sigma^*\alpha_t = -\bar{\alpha}_t$. Recall that $K_{0,\alpha_t}$ denotes the restriction of $K_{\alpha_t}$ to the vertices of $G_0$. We have
        \begin{multline*}
          \frac{d}{dt}\log\left[e^{-P_0(\alpha_t)}\det K_{0,\alpha_t}\right] = \\
          = \frac{d}{dt}\left[ \log \theta[\alpha_{h,t} + \alpha_G](0) + 2\pi i a(\alpha_{h,t})\cdot b(\alpha_G)\right] - \\ 
          - \frac{d}{dt}\,\frac{1}{2\pi}\int_{\Sigma_0}\Re \vphi_t \Delta \Re\vphi_t ds^2  + o(1).
        \end{multline*}
    \end{enumerate}
    Here $o(1)$ is in $\delta\to 0$, and the estimate depends only on $\lambda,\Kk$ and $R$.
\end{prop}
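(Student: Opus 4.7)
The plan is to start from Jacobi's formula
\[
\frac{d}{dt}\log\det K_{\alpha_t} \;=\; \sum_{b\sim w} K_{\alpha_t}^{-1}(b,w)\,\dot K_{\alpha_t}(w,b).
\]
From \eqref{eq:def_of_Kalpha} we have $\dot K_{\alpha_t}(w,b) = 2i\,\Im\!\bigl(\int_w^b\dot\alpha_t\bigr)\,K_{\alpha_t}(w,b)$, so everything reduces to controlling the edge weights $K_{\alpha_t}(w,b)K_{\alpha_t}^{-1}(b,w)$, which is precisely the content of Lemma~\ref{lemma:near-diag-expansion-of-Kalphainv}. Its expansion splits the weight into a ``full-plane'' piece $K_\Tt(w,b)K_\Tt^{-1}(b,w)$ and a ``curvature'' piece built from $r_{\pm\alpha_t+\alpha_G}(w)$, with a quantitative error that is integrable against $\dot\alpha_t$ away from the conical singularities. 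Substituting the full-plane piece into the sum and comparing with \eqref{eq:def_of_Palpha} yields the contribution $\frac{d}{dt}P(\alpha_t)$ directly.

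The curvature piece produces
\[
i\sum_{b\sim w}\exp\!\bigl(2i\Im\!\int_{p_0}^w\alpha_0\bigr)\,K_\Tt(w,b)\,\bigl[r_{\alpha_t+\alpha_G}(w)+(\eta_b\eta_w)^2\overline{r_{-\alpha_t+\alpha_G}(w)}\bigr]\,\Im\!\int_w^b\dot\alpha_t.
\]
Using the small-origami relations \eqref{eq:rf3}--\eqref{eq:rf4} to rewrite $\eta_w^2K_\Tt(w,b)\,\Im\!\int_w^b\dot\alpha_t$ as a discrete approximation of $\omega_0\wedge\dot\alpha_t$ (and its conjugate), and passing to the integral, this sum converges to
\[
-\tfrac14\int_\Sigma\bigl(r_{\alpha_t+\alpha_G}\,\omega_0\wedge\dot\alpha_t - \overline{r_{-\alpha_t+\alpha_G}\,\omega_0\wedge\dot\alpha_t}\bigr).
\]
Lemma~\ref{lemma:derivative_of_theta} (for item~(2)) or its straightforward closed-surface analogue (for item~(1)) then identifies this integral with the $t$-derivative of the theta-function block on the right-hand side of the proposition, completing the computation in the closed case.

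For item~(2) the formula of Lemma~\ref{lemma:Kalphainv_Sigma_with_boundary} inserts an extra term in Jacobi's formula, namely
\[
2i\sum_{b\sim w}K_{\alpha_t}(w,b)\,\eta_w^2 K_{\alpha_t}^{-1}(b,\sigma(w))\,\Im\!\int_w^b\dot\alpha_t \;=\; 2i\sum_{b\sim w}f^S_{\alpha_t}(wb)\,\Im\!\int_w^b\dot\alpha_t,
\]
where $f^S_{\alpha_t}$ is the flow from \eqref{eq:def_of_fS}. Writing $f^S_{\alpha_t} = (f^S_{\alpha_t}-f^S_0)+f^S_0$ and invoking Lemma~\ref{lemma:reference_flow}, the divergence-free part $f^S_{\alpha_t}-f^S_0$ contributes $o(1)$ as $\delta\to 0$ (by pairing its asymptotically trivial Hodge decomposition $d\Phi^S_{\alpha_t}+\Psi^S_{\alpha_t}$ with $\dot\alpha_t$ and integrating by parts), while the reference flow $f^S_0$ accounts for exactly the shift from $P(\alpha_t)$ to $P_0(\alpha_t)$ in \eqref{eq:def_of_P0alpha}. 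The $\sigma$-antisymmetry $\sigma^\ast\alpha_t=-\bar\alpha_t$ makes $\theta[-\alpha_{h,t}+\alpha_G](0)$ equal to $\overline{\theta[\alpha_{h,t}+\alpha_G](0)}$ up to a prefactor that combines with the $a(\alpha_{h,t})\cdot b(\alpha_G)$ term, cutting the log in half and replacing the coefficient $4\pi i$ by $2\pi i$ as in item~(2).

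The main technical obstacle I anticipate is the uniform passage from the discrete sums above to their continuous integrals in the presence of conical singularities: the kernel error in Lemma~\ref{lemma:near-diag-expansion-of-Kalphainv} deteriorates like $\dist(w,\{p_1,\ldots,p_{2g-2}\})^{-3/2}$ and $r_{\pm\alpha_t+\alpha_G}$ itself is singular at the $p_j$ (Lemma~\ref{lemma:bw_near_singularity_expansion_of_Salpha}). One must excise $O(\delta^\kappa)$-balls around each $p_j$ with $\kappa$ chosen so that both the bulk Riemann-sum error and the excised contribution are $o(1)$; establishing this balance and checking that all estimates are uniform over the compact $\Kk$ and over $\|\vphi_t\|_{\mC^2}\le R$ will form the technical core of the proof.
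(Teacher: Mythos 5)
Your proposal is correct and follows essentially the same route as the paper's proof: Jacobi's formula, the identity $K_{0,\alpha_t}^{-1}=K_{\alpha_t}^{-1}+\eta_w^2K_{\alpha_t}^{-1}(\cdot,\sigma(\cdot))$ from Lemma~\ref{lemma:Kalphainv_Sigma_with_boundary}, the near-diagonal expansion of Lemma~\ref{lemma:near-diag-expansion-of-Kalphainv} splitting off the $K_\Tt K_\Tt^{-1}$ piece and the $r_{\pm\alpha_t+\alpha_G}$ piece, identification of the latter with the theta-derivative via Lemma~\ref{lemma:derivative_of_theta}, and disposal of the $\sigma(w)$-term via Lemma~\ref{lemma:reference_flow}. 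Two small points of bookkeeping worth tightening: first, in the boundary case, the $\sigma$-symmetry converts the $2i\sum_{G_0}$ into $i\sum_G$, so the full-plane piece contributes only $\tfrac12\tfrac{d}{dt}P(\alpha_t)$, and it is this half together with $i\sum f_0^S\int_w^b\Im\dot\alpha_t$ that reassembles into $\tfrac{d}{dt}P_0(\alpha_t)$ — not a ``shift from $P$ to $P_0$'' by $f_0^S$ alone; second, the paper proves item~(2) directly and then obtains item~(1) by taking $\Sigma=\Sigma_0\sqcup\Sigma_0^{\mathrm{op}}$, whereas you sketch both, invoking a ``closed-surface analogue'' of Lemma~\ref{lemma:derivative_of_theta} — a valid but slightly different logical ordering. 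Your identification of the conical-singularity excision as the technical core is exactly right and is what the quantitative error in Lemma~\ref{lemma:near-diag-expansion-of-Kalphainv} is designed to control.
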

\begin{proof}
  Let us prove the second item; the first one follows if one substitutes $\Sigma = \Sigma_0\sqcup \Sigma_0^\op$. Recall the formula for $K_{0,\alpha_t}^{-1}$ given in Lemma~\ref{lemma:Kalphainv_Sigma_with_boundary}. Let us expand
  \begin{multline}
    \label{eq:vofd1}
    \frac{d}{dt}\log\det K_{0,\alpha_t} = \sum_{b\sim w,\ b,w\in G_0}\frac{d}{dt}(K_{\alpha_t}(w,b)) K_{0,\alpha_t}^{-1}(b,w) =\\
    =\sum_{b\sim w,\ b,w\in G} i\int_w^b \Im \dot{\alpha}_t \cdot K_{\alpha_t}(w,b) \left( K_{\alpha_t}^{-1}(b,w) + \eta_w^2 K_{\alpha_t}^{-1}(b,\sigma(w)) \right)
  \end{multline}
  where $\dot{\alpha}_t$ denotes the derivative of $\alpha_t$ with respect to $t$ (not that $K_{\alpha_t}^{-1}(b,w) + \eta_w^2 K_{\alpha_t}^{-1}(b,\sigma(w))$ vanishes when $b$ or $w$ belongs to $\partial \Sigma_0$). We now apply Lemma~\ref{lemma:near-diag-expansion-of-Kalphainv} to replace the expression $K_{\alpha_t}(w,b)K^{-1}_{\alpha_t}(b,w)$ that appears in the sum above with its asymptotic expansion. Using the definition of $\M_{\alpha_t}^S$ and $\M_0^S$ from Lemma~\ref{lemma:reference_flow} we obtain
  \begin{multline}
    \label{eq:vofd2}
    \frac{d}{dt}\log\det K_{0,\alpha_t} = \frac{d}{dt}P_0(\alpha_t) + \\
    + i\sum_{b\sim w,\ b,w\in G}\int_w^b\Im \dot{\alpha}_t\cdot \exp\left[ 2i\int_{p_0}^w\Im\alpha_0 \right]\cdot K_\Tt(w,b) \times\\
    \times\frac{1}{2}\left[ r_{\alpha_t + \alpha_G}(w) + (\eta_b\eta_w)^2\overline{r_{-\alpha_t + \alpha_G}(w)} \right] + \\
    + i\int_\Sigma \Im\dot{\alpha}_t \wedge (\M_{\alpha_t}^S - \M_0^S) + o(1)
  \end{multline}
  as $\delta\to 0$.
  Applying Lemma~\ref{lemma:reference_flow} we get
  \begin{equation}
    \label{eq:vofd3}
    \int_\Sigma \Im\dot{\alpha}_t \wedge (\M_{\alpha_t}^S - \M_0^S) = o(1),\qquad \delta\to 0.
  \end{equation}
  Let us analyze the second summand in~\eqref{eq:vofd2}. Using Assumption~\ref{item:intro_dbar_adapted} on $G$ from Section~\ref{subsec:intro_graphs_on_Sigma0} and Lemma~\ref{lemma:derivative_of_theta} we expand
  \begin{multline}
    \label{eq:vofd4}
     i\sum_{b\sim w,\ b,w\in G}\int_w^b\Im \dot{\alpha}_t\cdot \exp\left[ 2i\int_{p_0}^w\Im\alpha_0 \right]\cdot K_\Tt(w,b) \times\\
     \times \frac{1}{2}\left[ r_{\alpha_t + \alpha_G}(w) + (\eta_b\eta_w)^2\overline{r_{-\alpha_t + \alpha_G}(w)} \right] = \\
     = i\sum_{w\in G}\mu_w \left(\frac{\dot{\alpha}_t}{\omega_0}r_{\alpha_t + \alpha_G}(w) + \overline{\frac{\dot{\alpha}_t}{\omega_0} r_{-\alpha_t + \alpha_G}(w)} \right) + o(1) =\\
     = -\frac{1}{4}\int_\Sigma \left( r_{\alpha_t + \alpha_G}\omega_0\wedge \dot{\alpha}_t - \overline{r_{-\alpha_t + \alpha_G}\omega_0\wedge \dot{\alpha}_t} \right) + o(1) =\\
     = \frac{d}{dt}\left(\log \theta[\alpha_{h,t} + \alpha_G](0) + 2\pi i a(\alpha_{h,t})\cdot b(\alpha_G) - \frac{1}{2\pi}\int_{\Sigma_0}\Re \vphi_t \Delta \Re\vphi_t ds^2\right) + o(1).
  \end{multline}
  Substituting~\eqref{eq:vofd3} and~\eqref{eq:vofd4} into~\eqref{eq:vofd2} we get the result.
\end{proof}

\subsection{Relation between \texorpdfstring{$\det K_\alpha$}{det Kalpha} and the height function}
\label{subsec:det_Kalpha}

We finalize this section by interpreting the determinant of $K_\alpha$ in probabilistic terms. Let us introduce yet another flow $f^K$ on $G$. We put
\begin{equation}
  \label{eq:def_of_fK_nobdy}
  f^K(wb) = K_\Tt(w,b)K_\Tt^{-1}(b,w),\qquad \text{if }\partial \Sigma_0 = \varnothing.
\end{equation}
If $\partial \Sigma_0 \neq \varnothing$, then this definition needs some modifications. Recall the dimer cover $E$ of the boundary cycles defined in the second item of Lemma~\ref{lemma:Kasteleyn_thm}. Recall also the flow $f_0^S$ is defined in Lemma~\ref{lemma:reference_flow}. Recall the notation $K_\Tt,K_\Tt^{-1}$ from Section~\ref{subsec:aux_for_Kalphainv}. We define
\begin{equation}
  \label{eq:def_of_fK_bdypresent}
  f^K(wb) = K_\Tt(w,b)K_\Tt^{-1}(b,w) + f^S_0(wb) + f_E(wb),\qquad \text{if }\partial \Sigma_0 \neq \varnothing.
\end{equation}
Given a dimer cover $D$ of $G$ define $f^K_D = f_D - f^K$ and let
\begin{equation}
  \label{eq:def_of_mKD}
  \M^K_D = \M_D - \M^K,\qquad \M^K_D = d\Phi^K_D + \Psi^K_D
\end{equation}
be the associated 1-form. Note that $\Psi^K_D$ is a priori meromorphic since $f_D - f^K$ is not necessary divergence-free, because $K_\Tt^{-1}(b,w)$ may depend on the point $w$.

\begin{rem}
  \label{rem:why_f0+fE}
  When $\partial\Sigma_0\neq 0$, the role of the corrections $f_0^S$ and $f_E$ in the definition of $f^K$ is to make $M_D^K$ restricted to $\Sigma_0$ to be the 1-form corresponding to the flow
  \begin{equation}
    \label{eq:why_f0+fE}
    f_{D_0}(wb) - K_0(w,b)\cdot \big[\text{main term in the near-diagonal asymptotics of }K_0^{-1}(b,w)\big].
  \end{equation}
  (where $K_0$ is $K$ restricted to $\Sigma_0$). As we already observed before Lemma~\ref{lemma:reference_flow}, the main term in the near-diagonal asymptotics of $K_0^{-1}(b,w)$ differs from the one of $K^{-1}(wb)$ when $w$ is close to the boundary as we have $K_0^{-1}(b,w) = K^{-1}(b,w) + \eta_w^2 K^{-1}(b,\sigma(w))$ (see Lemma~\ref{lemma:Kalphainv_Sigma_with_boundary}) and the second term also blows up. This introduces a correction, which, as shown in Lemma~\ref{lemma:reference_flow}, is approximately (meaning that the terms in the Hodge decomposition of the difference are small) equal to $f_0^S$. Therefore, the expression~\eqref{eq:why_f0+fE} can be replaced with
  \[
    f_{D_0}(wb) - K_\Tt(wb)K^{-1}_\Tt(wb) - f_0^S(wb).
  \]
  Extending this to $\Sigma$ by symmetry and using our usual notation $D = D_0\cup E\cup \sigma(D_0)$ we get
  \[
    f_{D_0\cup\sigma(D_0)}(wb) - K_\Tt(wb)K^{-1}_\Tt(wb) - f_0^S(wb) = f_D(wb) - f^K(wb)
  \]
  which explains the presence of $f_E$.
\end{rem}

\begin{lemma}
  \label{lemma:detKalpha}
  Let $\alpha = \dbar \vphi + \alpha_h$ be an arbitrary $(0,1)$-form and $P(\alpha)$ and $P_0(\alpha)$ be defined by~\eqref{eq:def_of_Palpha} and~\eqref{eq:def_of_P0alpha} respectively. 
  \begin{enumerate}
    \item Assume that $\partial \Sigma_0 = \varnothing$. Let $\Zz_G$ denote the dimer partition function on $G$ and $\epsilon$ be the constant from item 1 of Lemma~\ref{lemma:Kasteleyn_thm}. We have
      \[
        e^{-P(\alpha)}\det K_\alpha = \epsilon \Zz_G \cdot \EE \exp\left[ \pi i q_0(\Psi^A_D) + 2i\int_{\Sigma_0} \Im\alpha\wedge \M^K_D\right].
      \]
    \item Assume that $\partial \Sigma_0\neq\varnothing$ and $\sigma^*\alpha = -\bar{\alpha}$. Let $\Zz_{G_0}$ denote the dimer partition function on $G_0$. Them, in the notation of item 2 of Lemma~\ref{lemma:Kasteleyn_thm}, we have
      \[
        e^{-P_0(\alpha)}\det K_{0,\alpha} = \epsilon \Zz_{G_0} \cdot \EE \exp\left[ \pi i q_0(\Psi^A_D\vert_{\Sigma_0}) + i\int_\Sigma \Im\alpha\wedge \M^K_D\right],
      \]
      where $D$ denotes a dimer cover of $G$ of the form $D_0\cup E\cup \sigma(D_0)$.
  \end{enumerate}
\end{lemma}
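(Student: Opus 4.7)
My plan is to specialize Lemma~\ref{lemma:Kasteleyn_thm} to the edge weight $\vphi(wb) = \exp(2i\,\Im\int_w^b\alpha)$, which satisfies $K(w,b)\vphi(wb) = K_\alpha(w,b)$, and then to recast the combinatorial sum as an expectation. In the closed case, Lemma~\ref{lemma:Kasteleyn_thm}(1) yields $\det K_\alpha = \epsilon\sum_D e^{\pi i q_0(\Psi^A_D)}\prod_{wb\in D}|K(w,b)|\vphi(wb)$. The edge-product equals $\exp\bigl(\sum_{wb\in D}2i\,\Im\int_w^b\alpha\bigr)$, which by the defining relation $\int_\Sigma u\wedge\M_f = \sum_{w\sim b}f(wb)\int_w^b u$ of the 1-form attached to a flow (Section~\ref{subsec:intro_height_function}) is $\exp\bigl(2i\int_\Sigma\Im\alpha\wedge\M_D\bigr)$. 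Writing $\M_D = \M^K + \M^K_D$ and comparing with~\eqref{eq:def_of_Palpha} identifies $2i\int_\Sigma\Im\alpha\wedge\M^K$ with $P(\alpha)$; pulling $e^{P(\alpha)}$ out of the sum and dividing by $\Zz_G=\sum_D\prod_{wb\in D}|K(w,b)|$ gives item~(1).

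For the bordered case I apply Lemma~\ref{lemma:Kasteleyn_thm}(2) with the same $\vphi$ restricted to $G_0$, producing $\det K_{0,\alpha} = \epsilon\sum_{D_0}e^{\pi i q_0(\Psi^A_D|_{\Sigma_0})}\prod_{wb\in D_0}|K(w,b)|\vphi(wb)$ with $D = D_0\cup E\cup\sigma(D_0)$. Using the shorthand $[f,\alpha] := \sum_{w\sim b,\ w,b\in G}f(wb)\int_w^b\Im\alpha$, the $\vphi$-product equals $\exp(2i[f_{D_0},\alpha])$. The hypothesis $\sigma^*\alpha = -\bar\alpha$ gives $\sigma^*\Im\alpha = \Im\alpha$, whence the change of variables $\sigma$ yields $[f_{\sigma(D_0)},\alpha] = [f_{D_0},\alpha]$. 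Combined with $f_D = f_{D_0} + f_E + f_{\sigma(D_0)}$ and the decomposition $f^K = f^K_{\mathrm{local}} + f_0^S - f_E$ from~\eqref{eq:def_of_fK_bdypresent}, together with $P_0(\alpha) = i[f^K_{\mathrm{local}} + f_0^S,\alpha]$ from~\eqref{eq:def_of_P0alpha}, the plan is to match $2i[f_{D_0},\alpha]$ with $P_0(\alpha) + i\int_\Sigma\Im\alpha\wedge\M^K_D$. Substituting into the Kasteleyn expansion and dividing by $\Zz_{G_0}$ then yields item~(2).

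The hard part is the boundary bookkeeping in item~(2): the $\sigma$-folding identity $2i[f_{D_0},\alpha] = i[f_D,\alpha] - i[f_E,\alpha]$ produces a boundary contribution $i[f_E,\alpha]$ that must interact correctly with the $-f_E$ term in $f^K$ and the omission of $f_E$ from $P_0$. To handle this one must use the boundary constraint imposed by $\sigma^*\alpha = -\bar\alpha$ at the fixed-point set $\partial\Sigma_0$ (in a local coordinate where $\sigma$ is complex conjugation, $\alpha|_{\partial\Sigma_0}$ is forced to be purely imaginary along the boundary) together with the particular combinatorial choice of the boundary cover $E$ prescribed by Lemma~\ref{lemma:Kasteleyn_thm}(2). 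A useful consistency check is that both sides of item~(2) must be invariant under flips of $E$ along boundary cycles: such flips modify $\epsilon$, the sign $e^{\pi i q_0(\Psi^A_D|_{\Sigma_0})}$, and $\M^K_D$ by mutually offsetting amounts, so the identity is well-posed independently of the choice of $E$, and one can fix $E$ throughout to be the one produced by Lemma~\ref{lemma:Kasteleyn_thm}(2).
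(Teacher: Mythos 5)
Your proof of item~(1) is correct and matches the paper's: both specialize Lemma~\ref{lemma:Kasteleyn_thm} to $\vphi(wb)=\exp(2i\int_w^b\Im\alpha)$, rewrite $\sum_{wb\in D}2i\int_w^b\Im\alpha$ as $2i\int_\Sigma\Im\alpha\wedge\M_D$, split $\M_D=\M^K+\M_D^K$, and identify $2i\int\Im\alpha\wedge\M^K$ with $P(\alpha)$. For item~(2) you also take the same route as the paper's terse proof, and you correctly identify the $\sigma$-folding step $[f_{\sigma(D_0)},\alpha]=[f_{D_0},\alpha]$ coming from $\sigma^*\Im\alpha=\Im\alpha$.

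Where you stop short is exactly the boundary bookkeeping you flag. Unwinding $f_D=f_{D_0}+f_E+f_{\sigma(D_0)}$, $f^K=K_\Tt K_\Tt^{-1}+f_0^S-f_E$, and $P_0(\alpha)=i[K_\Tt K_\Tt^{-1}+f_0^S,\alpha]$ gives $i\int_\Sigma\Im\alpha\wedge\M_D^K=2i[f_{D_0},\alpha]+2i[f_E,\alpha]-P_0(\alpha)$, so the identity you want requires $[f_E,\alpha]=\sum_{wb\in E}\int_w^b\Im\alpha=0$, and you say "one must use the boundary constraint" but never actually establish this. The constraint $\sigma^*\alpha=-\bar\alpha$ does not force it: in a chart where $\sigma$ is conjugation it makes $\alpha_{\bar z}$ purely imaginary along $\{y=0\}$, but then $\Im\alpha$ restricted to the boundary tangent equals $\Im(\alpha_{\bar z})\,dx$, which is real and generically nonzero, so the individual boundary-edge integrals $\int_w^b\Im\alpha$ do not vanish. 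Your flip-invariance "consistency check" also cannot close the argument, since Lemma~\ref{lemma:Kasteleyn_thm}(2) fixes $E$ uniquely. To be fair, the paper's proof simply asserts that the last equality in its display "follows from the definition of $M_D^K$ and $P_0$", and a literal unwinding of the stated definitions does produce the same residual $2i[f_E,\alpha]$; so you have correctly located the one place where the argument is delicate, but you have not resolved it, and the resolution you sketch would not work as stated.
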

\begin{proof}
  Let us prove the second item, the first one can be proven similarly. Fix a dimer cover $D_0$ of $G_0$ and let $D = D_0\cup E\cup \sigma(D_0)$ as usual. Using the definition of $\M_D^K$ and the fact that $\sigma^\ast\Im\alpha = \Im\alpha$ we can write
  \begin{multline}
    \label{eq:detKalpha0}
    i\int_\Sigma\Im\alpha\wedge \M_D^K = i\sum_{wb\in D}\int_w^b\Im\alpha - i\sum_{w\sim b} f^K(wb)\int_w^b\Im\alpha =\\
    = 2i\sum_{wb\in D_0}\int_w^b\Im\alpha - i \sum_{w\sim b}(f^K(wb) - f_E(wb))\int_w^b\Im\alpha.
  \end{multline}
  Comparing the definition~\eqref{eq:def_of_P0alpha} of $P_0$ with the definition~\eqref{eq:def_of_fK_bdypresent} of $f^K$ we conclude that
  \[
    i \sum_{w\sim b}(f^K(wb) - f_E(wb))\int_w^b\Im\alpha = P_0(\alpha),
  \]
  therefore~\eqref{eq:detKalpha0} implies
  \begin{equation}
    \label{eq:detKalpha1}
    \exp\left[i\int_\Sigma \Im\alpha\wedge \M_D^K\right] = e^{-P_0(\alpha)}\prod_{wb\in D_0}\exp\left[ 2i\int_w^b\Im \alpha \right].
  \end{equation}
  The formula for the determinant now follows from~\eqref{eq:detKalpha1} and the second item of Lemma~\ref{lemma:Kasteleyn_thm} if one substitutes $\vphi(wb) = \exp\left[ 2i\int_w^b\Im \alpha \right]$.
\end{proof}

We conclude this section with the following short discussion about the roles of the reference flows introduced above.

\begin{rem}
  \label{rem:flows}
  Recall that our main goal is to prove that $\M^{\fluct, k}$ converges to $\m^{2\alpha_1} - \EE\m^{2\alpha_1}$, where $\alpha_1$ is the limit of $\alpha_{G_k}^k$ (cf. Theorem~\ref{thma:main1}). Here $\M^{\fluct, k} = dh^{\fluct,k}$, where $h^{\fluct,k}$ is the centered height function, that is, the height function defined with respect to the reference flow given by edge densities $\PP[wb\in \text{dimer cover}]$. A natural question is whether there is an intrinsic definition of a reference flow $f$ such that the corresponding height differentials $dh^k$ would converge to $\m^{2\alpha_1}$. We can approach this in two ways.

  First one is ``topological'': we can try to define the reference flow such that $h^k - 2\pi^{-1}\int\Im\alpha_{G_k}$ has integer monodromies to mimic this property of $\int\m^{2\alpha_1}$. This of course does not specify the reference flow uniquely; however, if we seek for one admitting a ``local'' expression, then, perhaps, the angle flow $f^A$ would be the most natural choice (recall that $\Psi_D^A$ corresponds to $f_D(wb) - f^A(wb) - 2\pi^{-1}\int_w^b\Im\alpha_G$, see~\eqref{eq:def_of_MAD},~\eqref{eq:Hodge_for_MDA}, and the fact that $\Psi_D^A$ has integer monodromies proved in Lemma~\ref{lemma:Kasteleyn_thm}).

  The second way is ``analytic'' and uses the observable $\det K_\alpha$. Examining the proof of Lemma~\ref{lemma:detKalpha} one can conclude that in order to replace $\M_D^K$ with $dh$ defined with respect to some flow $f$ different from $f^K$ it is enough to redefine $P_0$ using the flow $f(wb)$ instead of $(K_\Tt(w,b) K_\Tt^{-1}(b,w) + f_0^S(wb))$. What makes the flow $f^K$ special is that this flow makes $e^{-P_0(\alpha)}\det K_\alpha$ to have bounded variations as we see in Proposition~\ref{prop:variations_of_detKalpha}. Moreover, as we will see in the next section, the scaling limit of this variation identity give the variation identity for a similar observable of $\m^{2\alpha_1}$.

  Unfortunately for us, there is no reason to expect that the flows $f^K$ and $f^A$ coincide exactly, unless the graph $G$ is isoradially embedded, in which case it miraculously happens (see e.g.~\cite[Theorem~1]{DubedatFamiliesOfCR}). Yet our results show that these flows are close at least in a certain weak sense, see Corollary~\ref{cor:angle_flow}.
\end{rem}

\section{Compactified free field and bosonization identity}
\label{sec:free_field}

\subsection{Scalar and instanton components of the compactified free field}
\label{subsec:bosonization}

In this section we develop the so-called ``bosonization identity'' (cf.~\cite[Section~5.2]{DubedatFamiliesOfCR}) for the compactified free field $\m^\alpha$ introduced in Section~\ref{subsec:intro_freefield}. A slightly less general form of this identity was proven in~\cite{Bosonization}; our generalization corresponds to the case when the involution $\sigma$ which makes $\m^\alpha$ to possess an additional symmetry.

We normalize the Hodge star $\ast$ on $\Sigma$ such that for any $(0,1)$-form $\alpha$ we have
\begin{equation}
  \label{eq:def_of_Hodge_star}
  \ast \alpha = -i\bar{\alpha}.
\end{equation}
Note that if $\vphi$ is a $\mC^1$ function on $\Sigma$, then
\begin{equation}
  \label{eq:Dirichlet_inner_product}
    \int_\Sigma d\vphi\wedge \ast d\vphi
\end{equation}
is the Dirichlet inner product.
Recall that
\begin{equation}
  \label{eq:def_of_Ss}
  \Ss_0(u) = \frac{\pi}{2}\int_{\Sigma_0} u\wedge \ast u.
\end{equation}
Recall that $\Omega$ is the matrix of $B$-periods on $\Sigma$ (see Section~\ref{subsec:simplicial_basis}). Let us write
\begin{equation}
  \label{eq:def_of_RT}
  \Omega = R + iT
\end{equation}
where $R,T$ are real symmetric and $T$ is positive.
\begin{lemma}
  \label{lemma:inner_product_via}
  Let $u$ be a harmonic differential on $\Sigma$ and $a,b\in \RR^g$ denote the vectors of $A$- and $B$-periods of $u$. Then we have
  \[
    \int_\Sigma u\wedge \ast u = (\Omega a - b)\cdot T^{-1}(\bar{\Omega}a - b).
  \]
\end{lemma}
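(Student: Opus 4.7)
The plan is to reduce the identity to a direct computation using the Hodge decomposition of $u$ into holomorphic and anti-holomorphic parts and the Riemann bilinear relations.

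First, I would write $u = \alpha + \bar\alpha$ where $\alpha$ is the holomorphic $(1,0)$-part of $u$ (since $u$ is real and harmonic). Expanding $\alpha$ in the normalized basis $\omega_1,\dots,\omega_g$, write $\alpha = \sum_j c_j\omega_j$ for some $c\in \CC^g$. Computing periods via $\int_{A_j}\omega_k = \delta_{jk}$ and $\int_{B_j}\omega_k = \Omega_{jk}$ yields
\[
  a = 2\Re c, \qquad b = 2\Re(\Omega c),
\]
so that with the decomposition $c = x+iy$ one has $x = a/2$ and $y = \tfrac{1}{2}T^{-1}(Ra-b)$, using the notation $\Omega = R+iT$.

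Next I would compute $u\wedge\ast u$ pointwise. From the normalization $\ast\alpha_0 = -i\bar\alpha_0$ on $(0,1)$-forms and the reality of $\ast$, one gets $\ast\alpha = i\bar\alpha$ on $(1,0)$-forms. A short calculation then gives
\[
  u\wedge\ast u = (\alpha+\bar\alpha)\wedge i(\bar\alpha-\alpha) = 2i\,\alpha\wedge\bar\alpha.
\]
Integrating and applying the Riemann bilinear relation $\int_\Sigma\omega_j\wedge\bar\omega_k = -2iT_{jk}$ yields
\[
  \int_\Sigma u\wedge\ast u = 4\,\bar c^{\,t}T c.
\]

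Finally, the identity is obtained by a direct manipulation of the quadratic form. Substituting $c = a/2 + \tfrac{i}{2}T^{-1}(Ra-b)$ in $4\bar c^{\,t}Tc$ (with the cross terms cancelling because $(Ra-b)^t a$ is a real scalar) gives
\[
  \int_\Sigma u\wedge\ast u = a^tTa + (Ra-b)^tT^{-1}(Ra-b).
\]
On the other hand, writing $\Omega a-b = (Ra-b)+iTa$ and $\bar\Omega a-b = (Ra-b)-iTa$ and expanding $(\Omega a-b)^t T^{-1}(\bar\Omega a-b)$ produces exactly the same expression, with the imaginary cross terms again cancelling. This proves the claim. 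No step presents a substantive obstacle; the only point requiring care is keeping track of the conventions for $\ast$ on $(1,0)$- versus $(0,1)$-forms and the sign in the Riemann bilinear relation, which must be consistent with the definition~\eqref{eq:def_of_Hodge_star}.
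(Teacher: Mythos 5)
Your proof is correct and follows essentially the same route as the paper: decompose $u$ into its holomorphic and anti-holomorphic parts, express the holomorphic part in the basis of normalized differentials, apply the Riemann bilinear relation $\int_\Sigma \omega_j\wedge\bar\omega_k = -2iT_{jk}$, and finish with algebra; your $\alpha$ is simply $v/2$ in the paper's notation (where $u = \Re v$). The only difference is cosmetic — you split $c$ into real and imaginary parts and verify the final identity term by term, whereas the paper works with the complex vector $z = iT^{-1}(\bar\Omega a - b)$ directly and leaves the last simplification implicit.
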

\begin{proof}
  The lemma is classical (see~\cite{Fay}), but let us repeat the proof of it for the sake of completeness. Write $u = \Re v$ for some holomorphic differential $v$, and let $z$ be the vector of $A$-periods of $v$. Then the vector of $B$-periods of $v$ is given by $\Omega z$ and we have
  \[
    a = \Re z,\qquad b = \Re\Omega z
  \]
  whence 
  \[
    z = iT^{-1}(\bar{\Omega}a - b).
  \]
  Note that $\ast u = \Im v$. By~\eqref{eq:Riemann bilinear relations} we have
  \[
    \int_\Sigma u\wedge \ast u = \frac{i}{2}\int_\Sigma v\wedge \bar{v} = \Im \left[ \bar{z}\cdot \Omega z \right] = (\Omega a - b)\cdot T^{-1}(\bar{\Omega} a - b).
  \]
\end{proof}

Recall that for any anti-holomorphic $(0,1)$-forms $\alpha$ we introduce the vectors $a(\alpha), b(\alpha)\in \RR^g$ defined by
  \[
    a(\alpha)_j = \pi^{-1}\int_{A_j}\Im \alpha, \qquad b(\alpha) = \pi^{-1}\int_{B_j}\Im \alpha
  \]
  Recall that $a^0,b^0\in \{ 0,1/2 \}^g$ are the vectors representing the form $q_0$, see Section~\ref{subsec:Szego_kernel}. Let $\theta[\alpha](z)$ be defined as in~\eqref{eq:def_of_theta_alpha}. Recall also that if $u$ is a harmonic differential on $\Sigma$ such that $\sigma^\ast u = -u$ and $u$ is Poincare\'e dual to an integer homology class on $\Sigma_0$, then we denote by $q_0(u\vert_{\Sigma_0})$ the evaluation of the quadratic form $q_0$ on this homology class on $\Sigma_0$, see~\eqref{eq:q0_of_restricted_differential} and the discussion above it. Let $\m^{\alpha_1} = d\phi + \psi^{\alpha_1}$ denote the compactified free field introduced in Section~\ref{subsec:intro_freefield}.

\begin{lemma}
  \label{lemma:bosonization_identity}
  Let $\alpha,\alpha_1$ be an anti-holomorphic $(0,1)$-forms.
  \begin{enumerate}
    \item Assume that $\partial \Sigma_0 = \varnothing$. There exists a constant $\Zz_1$ which depends only on the surface $\Sigma_0$, on the choice of the simplicial basis in the first homologies and on $\alpha_1$ such that 
      \begin{multline*}
        \EE \exp\left[ \pi i q_0(\psi^{\alpha_1} - \pi^{-1}\Im\alpha_1) + 2i \int_{\Sigma_0} \Im\alpha\wedge \psi^{\alpha_1} \right] = \\
        = \Zz_1\cdot (-1)^{\Arf(q_0)} \theta[\alpha_1/2+\alpha](0)\overline{\theta[\alpha_1/2-\alpha](0)}\cdot e^{2\pi i a(\alpha)\cdot( b(\alpha_1) - 2b^0) )}
      \end{multline*}
    \item Assume that $\partial \Sigma_0 \neq \varnothing$ and $\sigma^*\alpha = -\bar{\alpha}$, $\sigma^*\alpha_1 = \bar{\alpha}_1$. There exists a constant $\Zz_1$ which depends only on the surface $\Sigma_0$, on the choice of the simplicial basis in the first homologies of the double $\Sigma$ and on $\alpha_1$ such that
      \begin{multline*}
        \EE \exp\left[ \pi i q_0((\psi^{\alpha_1} - \pi^{-1}\Im\alpha_1)\vert_{\Sigma_0}) + i \int_\Sigma \Im\alpha\wedge \psi^{\alpha_1} \right] = \\
        = \Zz_1\cdot \theta[\alpha_1/2 + \alpha](0)e^{\pi i (a(\alpha)\cdot b(\alpha_1) + b^0\cdot a(\alpha_1))}
      \end{multline*}
  \end{enumerate}
\end{lemma}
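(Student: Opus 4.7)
The plan is to compute both sides of the identity directly, following the approach of Alvarez-Gaum\'e et al.~\cite{Bosonization} adapted to account for the involution $\sigma$ and the half-integer characteristic shifts by $\alpha_1/2$.

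For Part~1 (no boundary), I parametrize $\psi^{\alpha_1}$ by its $A$- and $B$-periods, writing $a=m+a(\alpha_1)$, $b=n+b(\alpha_1)$ with $(m,n)\in\ZZ^{2g}$. Three ingredients then feed in. Lemma~\ref{lemma:inner_product_via} expresses the Dirichlet energy as
\[
  \Ss_0(\psi^{\alpha_1}) = \tfrac{\pi}{2}(\Omega a - b)T^{-1}(\bar\Omega a - b) = \tfrac{\pi}{2}\bigl[aTa + (Ra-b)T^{-1}(Ra-b)\bigr].
\]
The identification of $q_0$ with its theta characteristics gives $q_0(mA+nB)\equiv 2a^0\cdot m + 2b^0\cdot n + m\cdot n\pmod 2$, so $e^{\pi i q_0(\psi^{\alpha_1}-\pi^{-1}\Im\alpha_1)}=e^{\pi i(2a^0 m+2b^0 n+mn)}$. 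Finally, Riemann's bilinear relation applied to the harmonic differentials $\Im\alpha$ and $\psi^{\alpha_1}$ yields $2i\int_\Sigma\Im\alpha\wedge\psi^{\alpha_1}=2\pi i(a(\alpha)\cdot b-b(\alpha)\cdot a)$. Combining everything, the (unnormalized) expectation becomes an explicit lattice sum over $(m,n)\in\ZZ^{2g}$, which I compare with the expansion of $\theta[\alpha_1/2+\alpha](0)\,\overline{\theta[\alpha_1/2-\alpha](0)}$ as a double sum indexed by $(k,k')\in\ZZ^{2g}$. The matching change of variables is $(p,q)=(k+k',\,k-k')$, which ranges over pairs with $p\equiv q\pmod 2$; one checks that $m=p+2a^0$ plays the role of the $A$-period shift while $q+2a(\alpha)$ matches the Poisson-dual of the $B$-period variable $n$. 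The parity constraint $p\equiv q\pmod 2$ is precisely what the factor $(-1)^{mn}$ from $q_0$ imposes after Poisson summation on $n$, and accounts for the global sign $(-1)^{\Arf(q_0)}$; re-collecting linear shifts via theta quasi-periodicity produces the stated phase $e^{2\pi i a(\alpha)(b(\alpha_1)-2b^0)}$.

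For Part~2 (boundary present), the $\sigma$-antisymmetry of $\psi^{\alpha_1}$ combined with $\sigma_*A_j=-A_j$, $\sigma_*B_j=B_j$ automatically forces $b(\psi^{\alpha_1})=0$; the same symmetry applied to the $(0,1)$-forms gives $b(\alpha_1)=0$ and $a(\alpha)=0$, while the computation $\sigma^*\omega_k=-\bar\omega_k$ shows that the period matrix with respect to the chosen basis is purely imaginary, $\Omega=iT$ (so $R=0$). The parametrization therefore reduces to a single integer vector $m\in\ZZ^g$, and using $\int_{\Sigma_0}=\tfrac{1}{2}\int_\Sigma$ for the $\sigma$-antisymmetric integrand $\psi\wedge\ast\psi$, the Dirichlet energy becomes $\tfrac{\pi}{4}(m+a(\alpha_1))T(m+a(\alpha_1))$. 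The resulting sum is matched with $\theta[\alpha_1/2+\alpha](0)$ via the theta ``doubling'' identity $\theta\chr{A}{B}(0,iT/4)=\sum_{\delta\in\{0,1\}^g}\theta\chr{(\delta+A)/2}{2B}(0,iT)$, which re-expresses the sum over $m\in\ZZ^g$ as a finite combination of theta constants at modular parameter $iT$; the parity-split sum in the numerator and the analogous split in the normalization collapse to leave a single theta constant. The phase $e^{\pi i(a(\alpha)b(\alpha_1)+b^0 a(\alpha_1))}$ arises from collecting the residual characteristic shifts.

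The main technical obstacle is the careful bookkeeping of signs, powers of $i$, and lattice parities: verifying that the $(-1)^{mn}$ factor in Part~1 produces exactly the $p\equiv q\pmod 2$ constraint together with the correct $\Arf(q_0)$ sign; tracking the quasi-periodicity shifts so as to obtain the precise linear phase $e^{2\pi i a(\alpha)(b(\alpha_1)-2b^0)}$; and in Part~2, identifying the normalization constant $\Zz_1$ (independent of $\alpha$ and $\alpha_1$) that simultaneously absorbs the Gaussian determinant factor from the Poisson/theta manipulations and the partition function $\sum_\psi e^{-\Ss_0(\psi)}$ at trivial characters, so that the ratio is indeed the single theta constant claimed.
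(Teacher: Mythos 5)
Your Part~1 plan is a workable direct alternative to the paper's argument, which instead deduces Part~1 from Part~2 by taking the disconnected double $\Sigma = \Sigma_0\sqcup\Sigma_0^\op$ (so $\Omega$ becomes block-diagonal and the theta constant factors). Your lattice-parametrization plus one-sided Poisson summation should reproduce the paper's answer, at the cost of the more delicate bookkeeping you flag; in particular the parity constraint and $(-1)^{\Arf(q_0)}$ sign do emerge from the $(-1)^{mn}$ factor in essentially the way you describe.

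There is, however, a genuine gap in Part~2. You claim that the $\sigma$-antisymmetry forces $b(\psi^{\alpha_1})=0$, $b(\alpha_1)=0$, $a(\alpha)=0$, and that $\sigma^*\omega_k=-\bar\omega_k$ so that $\Omega=iT$ (i.e.\ $R=0$). None of this holds for a general double. By eq.~\eqref{eq:symmetries_of_normalized_differentials} the relation $\sigma^*\omega_k=-\bar\omega_k$ only holds for $k=1,\dots,n-1$; for $k\geq n$ the pullback permutes pairs of differentials. Correspondingly, the symmetry of $\Omega$ under $\sigma$ is $J\Omega J=-\bar\Omega$, which gives $JRJ=-R$ and $JTJ=T$ but does not force $R=0$ (only that $R$ is supported off the diagonal blocks exchanged by $J$). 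Likewise, the $B$-periods of $\psi^{\alpha_1}-\pi^{-1}\Im\alpha_1$ and the quantities $a(\alpha),b(\alpha_1)$ do not vanish: they lie in the $-1$ eigenspace $V_-$ of the involution matrix $J$, which is nontrivial whenever $g_0>0$. Your simplification is thus only valid when $\Sigma_0$ has genus zero (a multiply-connected domain). The paper's proof handles the general case by splitting $\RR^g=V_+\oplus V_-$ and the lattice into $\Lambda_+\times\Lambda_-$, performing the Poisson resummation only in the $\Lambda_-$-direction, and then recombining via a bijection $\ZZ^g\leftrightarrow\Lambda_+\times\Lambda_-$. To repair your argument you would need to carry out this two-eigenspace decomposition and keep the off-block contributions of $R$; the single-parameter reduction you propose, followed by a theta duplication identity, does not suffice for $g_0>0$.
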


\begin{proof}
  We begin by proving the second item. For this purpose we will engage the same Poisson resummation trick that was used in~\cite[Section~4]{Bosonization}. Define $J$ to be the permutation matrix
\begin{equation*}
  \label{eq:def_of_J_again}
  \begin{split}
    & J_{i,i} = 1,\quad i = 1,\dots, n-1,\\
    & J_{g-2g_0+i,g-g_0 + i} = J_{g-g_0+i, g-2g_0 + i} = 1,\quad i = 1,\dots, g_0.
  \end{split}
\end{equation*}
  We have 
  \[
    J\Omega J = -\bar{\Omega},
  \]
  see eq.~\eqref{eq:Omega_under_involution} in Section~\ref{subsec:simplicial_basis} for details. Recall that $R = \Re \Omega$ and $T = \Im \Omega$. It follows that
  \begin{equation}
    \label{eq:bi1}
    JRJ = -R,\qquad JTJ = T.
  \end{equation}
  Let
  \[
    \RR^g = V_+\oplus V_-
  \]
  be the eigenvalues decomposition with respect to $J$, that is for each $v_\pm \in V_\pm$ we have $Jv_\pm = \pm v$. Note that $V_+$ and $V_-$ are orthogonal with respect to the Euclidean scalar product that are invariant subspaces for $T$, while $RV_\pm \subset V_\mp$.  Given $v\in V_+$ the vector $\bar{v}\in V_-$ is defined by antisymmetrizing $v$:
  \[
    \begin{split}
      &\bar{v}_1 = \ldots = \bar{v}_{n-1} = 0,\\
      &\bar{v}_{n-1+i} = v_{n-1+i},\quad \bar{v}_{n-1 + g_0+i} = - v_{n-1+g_0+i} \qquad i = 1,\dots, g_0.
    \end{split}
  \]
  Let
  \[
    \begin{split}
      &\Lambda_+ = \{ x\in V_+\cap \ZZ^g\ \mid\ x_1,\dots, x_{n-1}\in 2\ZZ \},\\
      &\Lambda_- = V_-\cap \ZZ^g.
    \end{split}
  \]
  Let $N,M$ denote vectors of $A$- and $B$-periods of $\psi^{\alpha_1} - \pi^{-1}\Im\alpha_1$ respectively. Since $\sigma^*\alpha = -\bar{\alpha}$, $\sigma^*\psi^{\alpha_1} = -\psi^{\alpha_1}$ and $\sigma^*\alpha_1 = \bar{\alpha}_1$ we have
  \begin{equation}
    \label{eq:bi2}
    \begin{split}
      &b(\alpha),a(\alpha_1)\in V_+,\qquad a(\alpha), b(\alpha_1)\in V_-,\\
      &\quad N\in \Lambda_+,\qquad\qquad\qquad M\in \Lambda_-.
    \end{split}
  \end{equation}
  Since $q_0$ is the quadratic form corresponding to $\Ff_0$ we have (see~\eqref{eq:a_b_for_quadratic_form} and~\eqref{eq:def_of_q0})
  \begin{equation}
    \label{eq:bi3}
     \begin{split}
      &2a^0_j = q_0(A_j) = \wind(A_j, \omega_0) + \gamma\cdot A_j + 1\mod 2,\\
      &2b^0_j = q_0(B_j) = \wind(B_j, \omega_0) + \gamma\cdot B_j + 1\mod 2.
    \end{split}   
  \end{equation}
  Since $\sigma^*\omega_0 = \bar{\omega}_0$ we have $a^0,b^0\in V_+$, and the normalization~\eqref{eq:wind_condition} implies $a^0_j = 0$ for $j = 1,\dots, n-1$.
  
  Using~\eqref{eq:Riemann bilinear relations} and Lemma~\ref{lemma:inner_product_via} we obtain
  \begin{multline}
    \label{eq:bi4}
    \Zz\EE\exp\Big[ \pi i q_0((\psi^{\alpha_1} -\pi^{-1}\Im\alpha_1) \vert_{\Sigma_0}) + i\int_\Sigma \Im \alpha\wedge \psi^{\alpha_1} \Big] = \\
    = \sum_{\substack{N\in \Lambda_+\\ M\in \Lambda_-}}\exp\Big[\pi i ( \frac{ \bar{N}\cdot  M + 2\bar{a}^0\cdot M + 2b^0\cdot  N }{2} + a(\alpha)\cdot  (M+b(\alpha_1)) - b(\alpha)\cdot  (N+a(\alpha_1))) - \\
    - \frac{\pi}{4}(\Omega (N+a(\alpha_1)) - M-b(\alpha_1))\cdot T^{-1}(\bar{\Omega}(N+a(\alpha_1)) - M - b(\alpha_1)) \Big]
  \end{multline}
  where $\Zz$ is the partition function which depends only on $\Sigma$ and $\alpha_1$. To prove the lemma we need to interpret the sum on the right-hand side of~\eqref{eq:bi4} as a value of a theta function at zero times some exponential corrections. This cannot be done straightforwardly because the lattice over which the summation in~\eqref{eq:bi4} is performed is essentially different from the lattice used to define the theta function. To overcome this problem we will adapt the approach of~\cite{Bosonization} and apply Poisson re-summation in $M$. This changes the sublattice $\Lambda_-$ to the dual one and eventually brings the whole lattice to the correct form.

  Note that by the Poisson summation formula for any $U\in V_-$ we have
  \begin{multline}
    \label{eq:bi5}
    \sum_{M\in \Lambda_-}\exp \left[ -\frac{\pi}{4}M\cdot  T^{-1}M + \frac{\pi}{2} U\cdot T^{-1}M \right] =\\
    = 2^g\sqrt{\det(T\vert_{V_-})}\sum_{M\in V_-} \exp\left[ -\pi M\cdot TM - \pi i U\cdot M + \frac{\pi}{4}U\cdot T^{-1}U \right]
  \end{multline}
  In order to apply this formula to the right-hand side of~\eqref{eq:bi4} we need to substitute
  \[
    U = i(\bar{N} + 2\bar{a}^0 + 2a(\alpha))T + (N + a(\alpha_1))R - b(\alpha_1).
  \]
  We proceed by applying this formula to the right-hand side of~\eqref{eq:bi4} and simplifying the expression. Then, we substitute $T = -\frac{i}{2}(\Omega + J\Omega J)$ and $R = \frac{1}{2}(\Omega - J\Omega J)$ and use that $V_\pm$ are eigenspaces of $J$ to get rid of it. Simplifying the expression again we finally get
  \begin{multline}
    \label{eq:bi6}
    \frac{\Zz}{2^g\sqrt{\det T\vert_{V_-}}}\EE\exp\Big[ \pi i q_0((\psi^{\alpha_1} -\pi^{-1}\Im\alpha_1) \vert_{\Sigma_0}) + i\int_\Sigma \Im \alpha\wedge \psi^{\alpha_1} \Big] = \\
    = \!\!\sum_{\substack{N\in \Lambda_+,\\ M\in \Lambda_-}}\!\! \exp\Big[ \pi i (\frac{\bar{N} + N}{2} -M + \frac{a(\alpha_1)}{2} + a(\alpha) + \bar{a}^0)\cdot \Omega (\frac{\bar{N} + N}{2} -M + \frac{a(\alpha_1)}{2} + a(\alpha) + \bar{a}^0) - \\
    - 2\pi i (\frac{b(\alpha_1)}{2} + b(\alpha) + b^0)\cdot (\frac{\bar{N} + N}{2} -M + \frac{a(\alpha_1)}{2} + a(\alpha) + \bar{a}^0) + \\
  + \pi i (a(\alpha)\cdot b(\alpha_1) + b^0\cdot a(\alpha_1)) \Big].
  \end{multline}
  The relation $m = \frac{\bar{N} + N}{2}-M +\bar{a}^0 - a^0$ provides a bijection between $\Lambda_+\times \Lambda_-$ and $\ZZ^g$ (where $(N, M)\in \Lambda_+\times \Lambda_-$). Substituting this formula for $m$ to the right-hand side of~\eqref{eq:bi6} we obtain an expression defining the theta function. We conclude that
  \begin{multline}
    \label{eq:bi7}
    \frac{\Zz}{2^g\sqrt{\det T\vert_{V_-}}}\EE\exp\Big[ \pi i q_0((\psi^{\alpha_1} -\pi^{-1}\Im\alpha_1) \vert_{\Sigma_0}) + i\int_\Sigma \Im \alpha\wedge \psi^{\alpha_1} \Big] = \\
    = \theta\chr{a(\alpha_1)/2 + a(\alpha) + a^0}{b(\alpha_1)/2 + b(\alpha) + b^0}(0,\Omega)\cdot e^{\pi i(a(\alpha)\cdot b(\alpha_1) + b^0\cdot a(\alpha_1))}
  \end{multline}
  which finishes the proof due to~\eqref{eq:def_of_theta_alpha}.

  It remains to prove the first item. To this end we observe that in the proof above we never used the fact that $\Sigma$ is connected. Thus, we can put $\Sigma = \Sigma_0\sqcup \Sigma_0^\op$ and use the formula developed in the second item. To this end we choose the simplicial basis in $H_1(\Sigma,\ZZ)$ in the same way as before and extend $\alpha,\alpha_1$ to $\Sigma$ so that $\sigma^\ast\alpha = -\bar\alpha$ and $\sigma^\ast\alpha_1 = \bar\alpha_1$. Choosing the basis of normalized differentials as previously we obtain a matrix $\Omega$ of $B$-periods on $\Sigma$ that is block diagonal with blocks $\Omega_0$ and $-\overline\Omega_0$ where $\Omega_0$ is the matrix of $B$-periods on $\Sigma_0$. Similarly, the vectors $a = a(\alpha_1)/2 + a(\alpha) + a^0$ and $b = b(\alpha_1)/2 + b(\alpha) + b^0$ that correspond to the periods of $\Im\alpha,\Im\alpha_1$ on $\Sigma$ split into components: $a = a_0 \oplus a_0^\op$ and $b = b_0 \oplus b_0^\op$ where (here by definition $v_0$ is the vector composed of the first half of the coordinates of $v$)
  \[
    \begin{split}
      &a_0 = a(\alpha_1)_0/2 + a(\alpha)_0 + a^0_0,\qquad a_0^\op = a(\alpha_1)_0/2 - a(\alpha)_0 + a_0^0,\\
      &b_0 = b(\alpha_1)_0/2 + b_0(\alpha)_0 + b^0_0,\qquad b_0^\op = -b(\alpha_1)_0/2 + b(\alpha)_0 + b_0^0.
    \end{split}
  \]
  Then~\eqref{eq:bi7} can be read as follows
    \begin{multline}
      \label{eq:bi8}
      \EE \exp\left[ \pi i q_0(\psi^{\alpha_1} - \pi^{-1}\Im\alpha_1) + 2i \int_{\Sigma_0} \Im\alpha\wedge \psi^{\alpha_1} \right] = \\
      = \Zz_1\cdot \sum_{m_0,m_0^\op\in \ZZ^{g_0}} \exp\Big[ \pi i \big((m_0+a_0)\cdot\Omega(m_0+a_0) - (m_0^\op+a_0^\op)\cdot\overline\Omega_0(m_0^\op+a_0^\op) \big) -\\
      - 2\pi i \big(b_0\cdot(m_0+a_0) + b_0^\op\cdot(m_0^\op + a_0^\op) \big) \Big]\cdot e^{2\pi i a(\alpha)_0\cdot b(\alpha_1)_0}
    \end{multline}
    The right-hand side of~\eqref{eq:bi8} splits into a product of two theta functions corresponding to $\Sigma_0$. We conclude that
    \begin{multline}
      \label{eq:bi9}
      \EE \exp\left[ \pi i q_0(\psi^{\alpha_1} - \pi^{-1}\Im\alpha_1) + 2i \int_{\Sigma_0} \Im\alpha\wedge \psi^{\alpha_1} \right] = \\
      = \Zz_1\cdot \theta[\alpha_1/2+\alpha](0)\overline{\theta[\alpha_1/2-\alpha](0)}\cdot e^{-4\pi ib^0\cdot (\frac{a(\alpha_1)}{2} + a(\alpha) + a^0)} \cdot e^{2\pi i (a(\alpha)\cdot b(\alpha_1) + b^0\cdot a(\alpha_1))}.
    \end{multline}
    To finish the proof of the lemma we can observe that $(-1)^{\Arf(q_0)} = e^{4\pi i a^0\cdot b^0}$ (see~\eqref{eq:def_of_Arf}).
\end{proof}

\subsection{Compactified free field as a random distribution}
\label{subsec:cff_formal_def}

Let $\alpha_1$ be an anti-holomorphic $(0,1)$-form and $\m^{\alpha_1}$ be defined as in Section~\ref{subsec:intro_freefield}. Recall that we view $\m^{\alpha_1}$ as a random 1-form with coefficients from the space of generalized functions, or a random functional on the space of smooth 1-forms. We have the Hodge decomposition $\m^{\alpha_1} = d\phi+\psi^{\alpha_1}$. In this section we make some elementary observations relating this decomposition and its relation with how $\m^{\alpha_1}$ acts on harmonic, exact and coexact forms. This helps us to build some notation needed for the proof of main theorems.

\begin{enumerate}
  \item Let $\mC^\infty_1(\Sigma)$ denote the space of all \emph{real-valued} 1-forms with smooth coefficients. We consider $\mC^\infty_1(\Sigma)$ as a Fr\'echet vector space over $\RR$. Let $\mC'_1(\Sigma)$ denote its Fr\'echet dual.
  \item Let $\mC^\infty_{(0,1)}(\Sigma)$ denote the space of all $(0,1)$-forms with smooth coefficients. Note that $\CC$ acts on $\mC^\infty_{(0,1)}(\Sigma)$, nevertheless we consider it as a vector space over $\RR$ equipped with the Fr\'echet topology.
\end{enumerate}

Hodge decomposition in case of $\mC^\infty_1(\Sigma)$ and Dolbeault decomposition in case of $\mC^\infty_{(0,1)}(\Sigma)$ provide splittings
\begin{equation}
  \label{eq:1forms_splitting}
  \begin{split}
    &\mC^\infty_1(\Sigma) = \mC^\infty_{1,\exact}(\Sigma)\oplus\mC^\infty_{1,\coexact}(\Sigma)\oplus \mC_{1,\harm}(\Sigma),\\
    &\mC^\infty_{(0,1)}(\Sigma) = \mC^\infty_{(0,1),\dbar-\exact} \oplus \mC_{(0,1),\antiholom}(\Sigma).
  \end{split}
\end{equation}
Note that for an arbitrary $\alpha = \dbar \vphi + \alpha_h\in \mC^\infty_{(0,1)}(\Sigma)$ we have
\begin{equation}
  \label{eq:expansion_of_Imalpha}
  \Im \alpha = \frac{1}{2}(d\Im \vphi - \ast d\Re\vphi) + \Im \alpha_h.
\end{equation}
It follows that we have a real linear isomorphism between 
\begin{equation}
  \label{eq:iso_between_01_and_1}
  \mC^\infty_{(0,1)}(\Sigma) \xrightarrow{\Im} \mC^\infty_1(\Sigma),\qquad \alpha\mapsto \Im \alpha.
\end{equation}
This isomorphism respects the splitting: the space $\mC^\infty_{(0,1),\dbar-\exact}$ is mapped onto the space $\mC^\infty_{1,\exact}(\Sigma)\oplus\mC^\infty_{1,\coexact}(\Sigma)$ and the space $\mC_{(0,1),\antiholom}(\Sigma)$ onto the space $\mC_{1,\harm}(\Sigma)$.

Note that $\mC^\infty_1(\Sigma)$ acts on itself via the pairing
\begin{equation}
  \label{eq:wedge_pairing}
  \langle u,v\rangle = \int_\Sigma u\wedge v.
\end{equation}
This induces an embedding $\mC^\infty_1(\Sigma)\hookrightarrow \mC'_1(\Sigma)$. Note that we have
\begin{equation}
  \label{eq:C_1_into_C'_1}
  \mC^\infty_{1,\exact}(\Sigma)\hookrightarrow \mC'_{1,\coexact}(\Sigma),\quad \mC^\infty_{1,\coexact}(\Sigma)\hookrightarrow \mC'_{1,\exact}(\Sigma),\quad \mC_{1,\harm}(\Sigma)\xrightarrow{\cong}\mC'_{1,\harm}(\Sigma)
\end{equation}
where the superscript always denotes the Fr\'echet dual space.

If $\partial \Sigma_0\neq \varnothing$, we have the involution $\sigma$ acting on $\Sigma$. Define $\sigma$-symmetric subspaces by
\begin{equation}
  \label{eq:symmetric_forms}
  \begin{split}
    &\mC^\infty_1(\Sigma)^\sigma = \{ u\in \mC^\infty_1(\Sigma)\ \mid\ \sigma^*u = u \},\\
    &\mC^\infty_{(0,1)}(\Sigma)^\sigma = \{ \alpha\in \mC^\infty \ \mid\ \sigma^*\alpha = -\bar{\alpha} \}.
  \end{split}
\end{equation}
The operation of taking $\sigma$-symmetric subspace commutes with splittings~\eqref{eq:1forms_splitting} and the isomorphism~\eqref{eq:iso_between_01_and_1}. Note that the dual space to $\mC^\infty_1(\Sigma)^\sigma$ consists of \emph{anti-symmetric} generalized 1-forms $u\in \mC'_1(\Sigma)$, i.e. 
\begin{equation}
  \label{eq:anti-symmetric_forms}
  \mC'_1(\Sigma)^\sigma = (\mC^\infty_1(\Sigma)^\sigma)^\ast = \{ u\in \mC'_1(\Sigma)\ \mid\ \sigma^*u = -u \}.
\end{equation}

Now, let us consider $\m^{\alpha_1}$ as a random element of $\mC'_1(\Sigma)$, recall that the action of $\m^{\alpha_1}$ on a 1-form $u$ is given by
\begin{equation}
  \label{eq:action_of_malpha1}
  u\mapsto \int_\Sigma u\wedge (d\phi + \psi^{\alpha_1}).
\end{equation}
Then $\m^{\alpha_1}$ has the following properties:
\begin{itemize}
  \item $\m^{\alpha_1}$ is almost surely closed (i.e. vanishes on exact forms). In particular, if $\alpha = \dbar \vphi + \alpha_h\in \mC^\infty_{(0,1)}(\Sigma)$, then the action of $\m^{\alpha_1}$ on $\Im \alpha$ depends only on $\alpha_h$ and $\Re\vphi$;
  \item if $\partial \Sigma_0\neq \varnothing$, then almost surely $\sigma^*\m^{\alpha_1} = - \m^{\alpha_1}$.
\end{itemize}

The following corollary is immediate after Lemma~\ref{lemma:bosonization_identity}.

\begin{cor}
  \label{cor:char_fct_of_malpha1}
  Let $\alpha_1$ be an anti-holomorphic $(0,1)$-form on $\Sigma$ and $\alpha = \dbar \vphi + \alpha_h$ be a smooth $(0,1)$-form, where $\vphi$ is a smooth function and $\alpha_h$ is anti-holomorphic. Let $\Zz_1$ be as in Lemma~\ref{lemma:bosonization_identity}. We have the following:
    \begin{enumerate}
    \item Assume that $\partial \Sigma_0 = \varnothing$. Then we have 
      \begin{multline*}
        \EE \exp\left[ \pi i q_0(\psi^{\alpha_1} - \pi^{-1}\Im\alpha_1) + 2i \int_{\Sigma_0} \Im\alpha\wedge \m^{\alpha_1} \right] = \\
        = \Zz_1\cdot (-1)^{\Arf(q_0)} \theta[\alpha_1/2+\alpha_h](0)\overline{\theta[\alpha_1/2-\alpha_h](0)}\cdot e^{2\pi i a(\alpha_h)\cdot( b(\alpha_1) - 2b^0) }\times\\
        \times\exp\left[ -\frac{1}{2\pi}\int_{\Sigma_0} d\Re \vphi\wedge \ast d\Re \vphi \right]
      \end{multline*}
    \item Assume that $\partial \Sigma_0 \neq \varnothing$ and $\sigma^*\alpha = -\bar{\alpha},\ \sigma^\ast\alpha_1 = \bar\alpha_1$. Then we have
      \begin{multline*}
        \EE \exp\left[ \pi i q_0((\psi^{\alpha_1} - \pi^{-1}\Im\alpha_1)\vert_{\Sigma_0}) + i \int_\Sigma \Im\alpha\wedge \m^{\alpha_1} \right] = \\
        = \Zz_1\cdot \theta[\alpha_1/2 + \alpha_h](0)\cdot e^{\pi i (a(\alpha_h)\cdot b(\alpha_1) + b^0\cdot a(\alpha_1))}\cdot \exp\left[ -\frac{1}{2\pi}\int_{\Sigma_0} d\Re \vphi\wedge \ast d\Re \vphi \right]
      \end{multline*}
  \end{enumerate}
\end{cor}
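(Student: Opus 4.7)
The plan is to split $\m^{\alpha_1} = d\phi + \psi^{\alpha_1}$ into its scalar and instanton components, exploit their independence, and recognize each factor as an expectation already computed. The key preliminary observation is that the Dolbeault decomposition $\alpha = \dbar\vphi + \alpha_h$ translates via~\eqref{eq:expansion_of_Imalpha} into the Hodge decomposition
\[
\Im\alpha = \tfrac{1}{2}d\Im\vphi - \tfrac{1}{2}\ast d\Re\vphi + \Im\alpha_h
\]
of $\Im\alpha$ into exact, coexact, and harmonic pieces. Since $d\phi$ is exact and $\psi^{\alpha_1}$ is harmonic, the orthogonality relations of the Hodge decomposition (together with the identity $\int \ast u\wedge v = -(u,v)_{L^2}$ and Stokes) yield
\[
\int_\Sigma \Im\alpha \wedge d\phi = \tfrac{1}{2}(d\Re\vphi,d\phi)_{L^2},\qquad \int_\Sigma\Im\alpha\wedge\psi^{\alpha_1} = \int_\Sigma \Im\alpha_h\wedge \psi^{\alpha_1},
\]
so in particular $a(\alpha) = a(\alpha_h)$ and $b(\alpha) = b(\alpha_h)$.

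For item~(1), since $q_0$ is a functional of $\psi^{\alpha_1}$ alone and $\phi\perp\psi^{\alpha_1}$, the expectation factors as the product
\[
\EE\exp\!\Bigl[\pi i q_0(\psi^{\alpha_1}-\pi^{-1}\Im\alpha_1) + 2i\!\int_\Sigma \!\Im\alpha_h\wedge\psi^{\alpha_1}\Bigr]\cdot \EE\exp\!\bigl[i(d\Re\vphi,d\phi)_{L^2}\bigr].
\]
The first factor is precisely the left-hand side of item~(1) of Lemma~\ref{lemma:bosonization_identity} (with $\alpha$ replaced by $\alpha_h$), giving the theta-function/prefactor expression. The second factor is the characteristic function of a Gaussian, which by the GFF normalization of Section~\ref{subsec:intro_freefield} equals $\exp[-\frac{1}{2\pi}\int_{\Sigma_0}d\Re\vphi\wedge\ast d\Re\vphi]$. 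Multiplying the two gives the claimed identity up to an overall numerical factor of $2$ absorbed into $2\Zz_1$.

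For item~(2), the only additional input is the $\sigma$-equivariance: $\sigma^*\alpha = -\bar\alpha$ gives $\sigma^*\Im\alpha = \Im\alpha$, while $\sigma^*\phi = -\phi$ and $\sigma^*\psi^{\alpha_1} = -\psi^{\alpha_1}$, whence $\int_\Sigma\Im\alpha\wedge \m^{\alpha_1} = 2\int_{\Sigma_0}\Im\alpha\wedge\m^{\alpha_1}$. This accounts for the coefficient $i$ in front of the surface integral (rather than $2i$) and for the restriction $\vert_{\Sigma_0}$ in $q_0$. Repeating the factorization on $\Sigma_0$, the scalar part contributes the same Gaussian exponential (the Dirichlet boundary condition on $\phi$ and the antisymmetric extension to $\Sigma$ are compatible, so the computation of the coexact--exact pairing is unchanged), while the instanton part is now exactly item~(2) of Lemma~\ref{lemma:bosonization_identity}.

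No genuinely new analytic content is required beyond Lemma~\ref{lemma:bosonization_identity}; the whole argument is a Hodge-decomposition bookkeeping plus independence of the scalar and instanton components. The main thing to be careful about is reconciling the numerical factor of $2$ in the prefactor $2\Zz_1$ with the partition-function normalization in Lemma~\ref{lemma:bosonization_identity}, and to verify the sign conventions in the symmetric setup of item~(2) so that the exponential $e^{\pi i(a(\alpha)\cdot b(\alpha_1) + b^0\cdot a(\alpha_1))}$ matches the one coming out of the lemma.
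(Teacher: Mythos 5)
Your proposal is correct and takes essentially the same route as the paper's proof: decompose $\Im\alpha$ into exact, coexact, and harmonic parts via~\eqref{eq:expansion_of_Imalpha}, use the independence of the scalar component $\phi$ and the instanton component $\psi^{\alpha_1}$ to factor the expectation, then evaluate the instanton factor by Lemma~\ref{lemma:bosonization_identity} and the scalar factor by the GFF normalization. The only slight advantage of the paper's write-up is that it does not attempt to track the overall numerical constant explicitly, which you correctly flag as the one place requiring care.
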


\begin{proof}
  Let us prove the second item; the proof of the first is similar. Recall that 
  \[
    \Im \alpha = \frac{1}{2}(d\Im \vphi - \ast d\Re\vphi) + \Im \alpha_h
  \]
  and this decomposition is orthogonal with respect to the Dirichlet inner product~\eqref{eq:Dirichlet_inner_product}. Since $\phi$ and $\psi^{\alpha_1}$ are independent we conclude that
  \begin{multline*}
    \EE \exp\left[ \pi i q_0((\psi^{\alpha_1} - \pi^{-1}\Im\alpha_1)\vert_{\Sigma_0}) + i \int_\Sigma \Im\alpha\wedge \m^{\alpha_1} \right] = \\
    = \EE \exp\left[ \pi i q_0((\psi^{\alpha_1} - \pi^{-1}\Im\alpha_1)\vert_{\Sigma_0}) + i \int_\Sigma \Im\alpha_h\wedge \psi^{\alpha_1} \right]\times\\
    \times \EE\exp \left[-\frac{i}{2}\int \ast d\Re \vphi\wedge d\phi\right].
  \end{multline*}
  The corollary now follows from Lemma~\ref{lemma:bosonization_identity} and our normalization of $\phi$.
\end{proof}

Combining Corollary~\ref{cor:char_fct_of_malpha1} with Lemma~\ref{lemma:derivative_of_theta} we get the following corollary. Recall that $r_\alpha$ denotes the second term in the near-diagonal expansion of $\Dd_\alpha^{-1}$, see Lemma~\ref{lemma:diagonal_expansion_of_Salpha} for details.

\begin{cor}
  \label{cor:variation_of_char_fct}
  Assume that $\partial \Sigma_0 \neq \varnothing$ and $\alpha_1$ be an anti-holomorphic $(0,1)$-form on $\Sigma$ such that $\sigma^\ast\alpha_1 = \bar\alpha_1$. Assume that $\alpha_t = \dbar\vphi_t + \alpha_{h,t}$ is a smooth family of $(0,1)$-forms on $\Sigma$ such that for all $t$ we have $\sigma^*\alpha_t = -\bar{\alpha}_t$, $\theta[\alpha_1/2+\alpha_{h,t}](0)\neq 0$ and $\vphi_t,\dot{\vphi}_t\in \mC^2(\Sigma)$. Let $r_{\alpha_1/2\pm\alpha_t}$ be as in Lemma~\ref{lemma:diagonal_expansion_of_Salpha}. Then we have
  \begin{multline*}
    \frac{d}{dt}\log \EE \exp\left[ \pi i q_0((\psi^{\alpha_1} - \pi^{-1}\Im\alpha_1)\vert_{\Sigma_0}) + i \int_\Sigma \Im\alpha\wedge \m^{\alpha_1} \right] =\\
    = -\frac{1}{4}\int_\Sigma \left( r_{\alpha_1/2+\alpha_t}\omega_0\wedge \dot{\alpha}_t - \overline{r_{\alpha_1/2-\alpha_t}\omega_0\wedge \dot{\alpha}_t} \right) = -\frac{1}{2}\int_\Sigma r_{\alpha_t + \alpha_G}\omega_0\wedge \dot{\alpha}_t.
  \end{multline*}
\end{cor}
\begin{proof}
  The first equality follows from Corollary~\ref{cor:char_fct_of_malpha1} and Lemma~\ref{lemma:derivative_of_theta}. The second equality follows from the proof of Lemma~\ref{lemma:derivative_of_theta} given in Section~\ref{subsec:asymptotics_Ddalpha}, see~\eqref{eq:doth1.5}.
\end{proof}

\section{Proof of main results}
\label{sec:proof_of_main_results}

\subsection{Proof of Theorem~\ref{thma:main1}}
\label{subsec:reconstruction_theorem}

We assume that $\partial \Sigma_0\neq \varnothing$. To treat the other case it is enough to apply the same arguments as below with $\Sigma = \Sigma_0\sqcup \Sigma_0^\op$. We follow the notation developed in Section~\ref{subsec:intro_main_results}.

According to~\eqref{eq:iso_between_01_and_1} there exists a finite dimensional $\RR$-linear subspace $\Vv\subset \mC^\infty_{(0,1)}(\Sigma)$ such that $\Uu = \Im \Vv$ and $\mC_{(0,1),\antiholom}(\Sigma)\subset \Vv$. Moreover, we can assume that $\sigma^*\Vv\subset \Vv$ without loss of generality, since $\int_{\Sigma^k}u^k\wedge \M^{\fluct, k}$ depends only on the symmetric (with respect to $\sigma$) part of $u^k$. Given $\alpha\in \Vv$ we denote by $\alpha^k$ the $(0,1)$-form on $\Sigma_k$ defined such that $(\Im\alpha)^k = \Im \alpha^k$, where $(\Im \alpha)^k$ is the pullback of the 1-form $\Im \alpha$ to $\Sigma^k$ as defined in Section~\ref{subsec:intro_main_results}.

Passing to a subsequence and using the assumption of the theorem we can assume that $\M_D^{\fluct,k}\vert_\Uu$ converges weakly to a random functional $\m^\fluct\in \Uu^\ast$. Note that $\sigma^*\m^\fluct = -\m^\fluct$. Let $\psi^\fluct$ be the restriction of $\m^\fluct$ to $\Im\mC_{(0,1),\antiholom}(\Sigma) = \mC_{1,\harm}(\Sigma)$, then $\Psi_D^{\fluct,k}$ converge weakly to $\psi^\fluct$. Since the pairing~\eqref{eq:wedge_pairing} is non-degenerate on $\mC_{1,\harm}(\Sigma)$, we can interpret $\psi^\fluct$ as a random harmonic differential.

Let $\M_D^{K,k}$ and $\M_D^{A,k}$ denote the 1-forms on $\Sigma_k$ defined by~\eqref{eq:def_of_mKD} and~\eqref{eq:def_of_MAD} respectively. Take an arbitrary dimer cover $D$ of $G$ and put
\begin{equation}
  \label{eq:reference_flows}
  \Psi^{\fluct,A,k} = \Psi_D^{A,k} - \Psi_D^{\fluct,k}, \qquad \M^{\fluct,K,k} = \M_D^{K,k} - \M_D^{\fluct,k}.
\end{equation}
Clearly, $\Psi^{\fluct,A,k}$ and $\M^{\fluct,K,k}$ do not depend on the choice of $D$ and $\Psi^{\fluct,A,k}$ is a harmonic differentials (since the corresponding flow is divergence-free). Consider the sequence of points on the torus $\frac{H^1(\Sigma, \RR)}{H^1(\Sigma, 2\ZZ)}$ represented by cohomology classes of $\Psi^{\fluct,A,k}$. Passing to a subsequence we may assume that this sequence converges to a point represented by a harmonic differential $\psi^{\fluct,A}$. Consider the functions
\begin{equation}
  \label{eq:char_fcts}
  \begin{split}
    &F^\fluct_k(\alpha) = \EE\exp\left[ \pi iq_0((\Psi_D^{A,k})\vert_{\Sigma_0}) + i\int_{\Sigma^k} \Im\alpha^k\wedge \M_D^\fluct  \right],\\
    &F^K_k(\alpha) = \EE\exp\left[ \pi iq_0((\Psi_D^{A,k})\vert_{\Sigma_0}) + i\int_{\Sigma^k} \Im\alpha^k\wedge \M_D^{K,k}  \right]
  \end{split}
\end{equation}
defined on $\Vv$. Note that
\begin{equation}
  \label{eq:FH_vs_FK}
  F^K_k(\alpha) = F^\fluct_k(\alpha)\cdot \exp\left[ i\int_{\Sigma^k} \Im \alpha^k\wedge \M^{\fluct,K,k}\right].
\end{equation}

By our assumptions functions $F^\fluct_k$ converge to the function $F^\fluct_0$ given by
\begin{equation}
  \label{eq:char_fcts2}
  F^\fluct_0(\alpha) = \EE\exp\left[ \pi iq_0((\psi^\fluct + \psi^{\fluct,A})\vert_{\Sigma_0}) + i\int_\Sigma \Im\alpha\wedge \m^\fluct  \right]
\end{equation}
uniformly on compacts in $\Vv$.
Note that $F^\fluct_0$ is non-zero since it is a characteristic function of a non-zero (sign indefinite) measure on $(\Im\Vv)^\ast$. It follows that we can fix an anti-holomorphic $(0,1)$-form $\beta$ such that 
\[
  F^\fluct_0(\beta)\neq 0,\qquad \theta[\pm\alpha_1 + \beta](0)\neq 0.
\]

Now let us analyze $F^K_k$. Note that by Proposition~\ref{prop:variations_of_detKalpha} and Corollary~\ref{cor:char_fct_of_malpha1} the logarithmic variation of $e^{-P(\alpha)}\det K_\alpha$ (resp. $e^{-P_0(\alpha)}\det K_{0,\alpha}$ if $\partial\Sigma_0\neq \varnothing$) with respect to $\alpha$ converges to the logarithmic variation of $\EE \exp\left[ \pi i q_0((\psi^{2\alpha_1} - 2\pi^{-1}\Im\alpha_1)\vert_{\Sigma_0}) + i \int_\Sigma \Im\alpha\wedge \m^{2\alpha_1} \right]$. Combining this with Lemma~\ref{lemma:detKalpha} we find out that 
\begin{equation}
  \label{eq:char_fct2.5}
  \lim\limits_{k\to \infty}\frac{F^K_k(\alpha)}{F^K_k(\beta)} = \frac{\EE \exp\left[ \pi i q_0((\psi^{2\alpha_1} - 2\pi^{-1}\Im\alpha_1)\vert_{\Sigma_0}) + i \int_\Sigma \Im\alpha\wedge \m^{2\alpha_1} \right]}{\EE \exp\left[ \pi i q_0((\psi^{2\alpha_1} - 2\pi^{-1}\Im\alpha_1)\vert_{\Sigma_0}) + i \int_\Sigma \Im\beta\wedge \m^{2\alpha_1} \right]}
\end{equation}
when $\alpha$ belongs to an open dense subset of $\Vv$; moreover, the convergence is uniform on compacts from this subset. Note that
\[
  \exp\left[ i\int_{\Sigma^k} (\Im \alpha^k-\Im\beta^k)\wedge \M^{\fluct,K,k}\right] = \frac{F^K_k(\alpha) }{F^K_k(\beta)}\cdot \frac{F^\fluct_k(\beta)}{F^\fluct_k(\alpha)}
\]
by~\eqref{eq:FH_vs_FK}. By~\eqref{eq:char_fct2.5} and the fact that $F^\fluct_k(\beta)$ converge it follows that 
\begin{equation}
  \label{eq:char_fct29}
  \exp\left[ i\int_{\Sigma^k} (\Im \alpha^k-\Im\beta^k)\wedge \M^{\fluct,K,k}\right]
\end{equation}
converge uniformly in $\alpha$ taken from any compact subset of an open dense subset of $\Vv$. This implies that $\M^{\fluct,K,k}$, considered as functionals on $\Uu$, converge to some $\m_\Uu\in \Uu^\ast$. Thus, combining~\eqref{eq:FH_vs_FK} and~\eqref{eq:char_fct2.5}, we can write:
\begin{multline}
  \label{eq:char_fct3}
  \frac{F^\fluct_0(\alpha)}{F^\fluct_0(\beta)} = \\
  = \frac{\EE \exp\left[ \pi i q_0((\psi^{2\alpha_1} - 2\pi^{-1}\Im\alpha_1)\vert_{\Sigma_0}) + i \int_\Sigma \Im\alpha\wedge \m^{2\alpha_1} \right]\cdot \exp\left[ -i\int_\Sigma \Im \alpha\wedge \m_\Uu\right]}{\EE \exp\left[ \pi i q_0((\psi^{2\alpha_1} - 2\pi^{-1}\Im\alpha_1)\vert_{\Sigma_0}) + i \int_\Sigma \Im\beta\wedge \m^{2\alpha_1} \right]\cdot \exp\left[ -i\int_\Sigma \Im \beta\wedge \m_\Uu\right]}
\end{multline}
when $\alpha$ belongs to an open dense subset of $\Vv$. Since both sides are continuous, we conclude that the equality holds for all $\alpha$. Note that by~\eqref{eq:char_fcts2} both sides of~\eqref{eq:char_fct3} are characteristic functions of certain (sign indefinite) measures on $\Uu^\ast$. More precisely, let $\PP^\fluct$ denote the law of $\m^\fluct\vert_\Uu$ and $\PP^{2\alpha_1}$ denote the law of $\m^{2\alpha_1}\vert_\Uu$. The equality~\eqref{eq:char_fct3} combined with~\eqref{eq:char_fcts2} %
implies that
\begin{multline}
  \label{eq:char_fct4}
  \frac{\exp(\pi iq_0((\psi + \psi^{\fluct,A})\vert_{\Sigma_0}))\cdot \mathrm{d}\PP^\fluct(\m)}{F^\fluct_0(\beta)} = \\
  = \frac{\exp(\pi iq_0((\psi - 2\pi^{-1}\Im\alpha_1)\vert_{\Sigma_0}))\cdot \mathrm{d}\PP^{2\alpha_1}(\m + \m_\Uu)}{\EE \exp\left[ \pi i q_0((\psi^{2\alpha_1} - 2\pi^{-1}\Im\alpha_1)\vert_{\Sigma_0}) + i \int_\Sigma \Im\beta\wedge \m^{2\alpha_1} \right]\cdot \exp\left[ - i\int_\Sigma \Im \beta\wedge \m_\Uu\right]},
\end{multline}
where $\m = d\phi + \psi$. 
Taking absolute values of both sides of~\eqref{eq:char_fct4} and using the fact that both measures are probabilistic we conclude that $\PP^\fluct = \PP^{2\alpha_1}(\cdot + \m_\Uu)$, which proves the first assertion of the theorem.

Assume finally that $\EE\left|\int_{\Sigma^k} u^k\wedge \M^{\fluct, k}\right|$ is uniformly bounded in $k$ for each $u\in \Uu$. By the definition we have $\EE\M^{\fluct, k} = 0$, which now implies $\EE \m^\fluct\vert_\Uu = 0$. But then $\PP^\fluct = \PP^{2\alpha_1}(\cdot + \m_\Uu)$ implies $\m_\Uu = \EE (\m^{2\alpha_1}\vert_\Uu)$. To conclude the second assertion of the theorem note that $\EE \m^{2\alpha_1}$ depends only on the instanton component of $\m^{2\alpha_1}$ and hence $\EE (\m^{2\alpha_1}\vert_\Uu) = (\EE \m^{2\alpha_1})\vert_\Uu$. The proof of Theorem~\ref{thma:main1} is concluded.

The proof of Theorem~\ref{thma:main1} in fact gives us some information about the asymptotic behaviour of $\M^{\fluct, K, k}$ and $\Psi^{\fluct, A,k}$, which we would like to point out in the following

\begin{cor}
  \label{cor:angle_flow}
  Recall that $\M^{\fluct, K, k}$ and $\Psi^{\fluct, A,k}$ are defined by~\eqref{eq:reference_flows}:
  \begin{equation*}
    \Psi^{\fluct,A,k} = \Psi_D^{A,k} - \Psi_D^{\fluct,k}, \qquad \M^{\fluct,K,k} = \M_D^{K,k} - \M_D^{\fluct,k}.
  \end{equation*}
  Assume that tightness and first moment hypothesis of Theorem~\ref{thma:main1} are fulfilled. Then 
  \begin{enumerate}
    \item The sequence $\M^{\fluct, K ,k}$ converges to $\EE\m^{2\alpha_1}$ in the $\ast$-weak topology in $\mC_1'(\Sigma)$.
    \item The sequence in $\frac{H^1(\Sigma, \RR)}{H^1(\Sigma, 2\ZZ)}$ induced by cohomology classes of $\Psi^{\fluct, A,k}$ converges to the cohomology class of $\EE\psi^{2\alpha_1} - 2\pi^{-1}\Im\alpha_1$.
  \end{enumerate}
\end{cor}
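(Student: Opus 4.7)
The plan is to extract both statements of the corollary from intermediate steps already present in the proof of Theorem~\ref{thma:main1}. The first moment hypothesis of Theorem~\ref{thma:main1} forces the limit to be uniquely identified, so that the full sequences (not just subsequences) converge.

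For part (1), I would argue as follows. Fix any smooth 1-form $u\in \mC^\infty_1(\Sigma)$ and set $\Uu = \mathrm{span}_\RR(\{u\}\cup \mC_{1,\harm}(\Sigma))$, which is finite-dimensional and contains the harmonic differentials. Applying the argument of Theorem~\ref{thma:main1} with this choice of $\Uu$, the assertion displayed after~\eqref{eq:char_fct29} gives $\M^{\fluct,K,k}|_\Uu\to \m_\Uu$ in $\Uu^\ast$, and the last paragraph of the proof identifies $\m_\Uu=(\EE\m^{2\alpha_1})|_\Uu$ using the first moment hypothesis. Pairing with $u\in \Uu$, this yields
\[
  \int_{\Sigma^k} u^k\wedge \M^{\fluct,K,k}\;\longrightarrow\; \int_\Sigma u\wedge \EE\m^{2\alpha_1},
\]
which is precisely $\ast$-weak convergence of $\M^{\fluct,K,k}$ to $\EE\m^{2\alpha_1}$ in $\mC_1'(\Sigma)$, since $u$ was arbitrary.

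For part (2), the cohomology classes of $\Psi^{\fluct,A,k}$ take values in the compact torus $H^1(\Sigma,\RR)/H^1(\Sigma,2\ZZ)$, so it suffices to show that every subsequential limit coincides with the class of $\EE\psi^{2\alpha_1}-2\pi^{-1}\Im\alpha_1$. Passing to a subsequence along which the classes converge to some harmonic $\psi^{\fluct,A}$ mod $H^1(\Sigma,2\ZZ)$, the proof of Theorem~\ref{thma:main1} leads to the measure identity~\eqref{eq:char_fct4}. Substituting $\m\mapsto \m-\m_\Uu$ in the right-hand side and using $\PP^\fluct=\PP^{2\alpha_1}(\cdot+\m_\Uu)$ (established in that proof), the two factors in front of the pushforward of $\PP^\fluct$ must agree for $\PP^\fluct$-almost every harmonic part $\psi$. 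This forces
\[
  q_0\bigl((\psi+\psi^{\fluct,A})|_{\Sigma_0}\bigr)\;=\;q_0\bigl((\psi+\psi_0-2\pi^{-1}\Im\alpha_1)|_{\Sigma_0}\bigr)\pmod 2
\]
for all harmonic $\psi$ in the support, where $\psi_0$ is the harmonic part of $\m_\Uu$, i.e. $\psi_0=\EE\psi^{2\alpha_1}$ by the identification of $\m_\Uu$. Since $q_0$ is a genuine (non-degenerate) quadratic form on $H_1(\Sigma,\ZZ/2\ZZ)$ and the support of $\PP^\fluct$ translates across $H^1(\Sigma,\RR)/H^1(\Sigma,2\ZZ)$, polarizing this identity yields $\psi^{\fluct,A}=\EE\psi^{2\alpha_1}-2\pi^{-1}\Im\alpha_1$ in $H^1(\Sigma,\RR)/H^1(\Sigma,2\ZZ)$. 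This uniquely determines the subsequential limit, so the full sequence of cohomology classes converges to this value.

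The main technical point I expect to have to handle carefully is the passage from the equality of exponentials in~\eqref{eq:char_fct4} to the pointwise equality of the two $q_0$-values modulo $2$; this requires that $\PP^\fluct$ charges enough of the cohomology torus to polarize $q_0$. The cleanest way is to note that $\psi^{2\alpha_1}$ (and hence its translate $\psi^\fluct$) has full support on its integer-shifted lattice inside $H^1(\Sigma,\RR)$, so translates generate $H_1(\Sigma,\ZZ/2\ZZ)$ after reduction. Once this is in place, both assertions follow without additional new input beyond Theorem~\ref{thma:main1} itself.
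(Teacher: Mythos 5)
Your proof follows essentially the same route as the paper's. Part (1) is identical to the paper's argument: it extracts the convergence of $\M^{\fluct,K,k}$ from the observation after~\eqref{eq:char_fct29} and the identification of $\m_\Uu$ with $\EE\m^{2\alpha_1}$ obtained from the first moment hypothesis. For part (2) you also follow the paper's strategy of reading off the conclusion from the measure identity~\eqref{eq:char_fct4}. Two small differences are worth noting. First, the paper makes the integrality of the class $u=\psi^{\fluct,A}-\EE\psi^{2\alpha_1}+2\pi^{-1}\Im\alpha_1$ explicit (its eq.~\eqref{eq:caf1}, proved via Lemma~\ref{lemma:Kasteleyn_thm} and the integrality of $\Psi^{A,k}_D$) before reducing mod $2$; you leave this implicit, relying on the fact that the $q_0$-values appearing in~\eqref{eq:char_fct4} are already well defined almost surely. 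This is acceptable but worth stating, since applying $q_0$ to $u$ requires $u\in H^1(\Sigma,\ZZ)$. Second, the paper allows for a constant sign $\epsilon\in\{\pm1\}$ between the two $q_0$-factors and shows $\epsilon=1$ using the imbalance $\#q_0^{-1}(0)\neq \#q_0^{-1}(1)$ before polarizing; you assert that the factors ``must agree.'' This is slightly too strong as stated, but your polarization argument in fact resolves it: from $q_0(v+u)=q_0(v)+c$ for all $v$ in the support (which reduces onto all of $H^1(\Sigma,\ZZ/2\ZZ)$, as you note), the identity $u\cdot v+q_0(u)=c$ for all $v$ forces $u=0$ by non-degeneracy of the intersection form and hence $c=0$ as well, so the imbalance step is not actually needed. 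In that sense your polarization is a marginal streamlining of the paper's argument, but the strategy and all key inputs are the same.
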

\begin{proof}
  The first item follows the remark after~\eqref{eq:char_fct29} asserting that $\M^{\fluct, K, k}$ converges to $\m_\Uu$ and the identification of $\m_\Uu$ with $\EE\m^{2\alpha_1}$ made in the end of the proof.

  To prove the second item let us recall that $\psi^{\fluct,A}$ denotes an arbitrary partial limit of the sequence $\Psi^{\fluct, A,k}$ considered modulo $H^1(\Sigma, 2\ZZ)$. By the definition of $\Psi^{\fluct, A, k}$ and Lemma~\ref{lemma:Kasteleyn_thm} we know that for any dimer cover $D$ the harmonic differential $\Psi_D^{A,k} = \Psi_D^{\fluct,k} + \Psi^{\fluct, A, k}$ has integer cohomologies. By Theorem~\ref{thma:main1}, the support of the distribution of $\Psi_D^{\fluct,k}$ in the limit when $k\to +\infty$ coincides with the support of $\psi^{2\alpha_1} - \EE \psi^{2\alpha_1}$. Combining these two observations we conclude that 
  \begin{equation}
    \label{eq:caf1}
    \psi^{\fluct,A}\equiv \EE\psi^{2\alpha_1} - 2\pi^{-1}\Im\alpha_1\quad \mod H^1(\Sigma, \ZZ)
  \end{equation}

  It remains to show that the equality in~\eqref{eq:caf1} holds modulo $H^1(\Sigma, 2\ZZ)$. To simplify the presentation, we assume that $\partial \Sigma_0 = \varnothing$; the other case can be treated similarly. By the equality~\eqref{eq:char_fct4} and the fact that $d\PP^\fluct(\cdot) = d\PP^{2\alpha_1}(\cdot + \EE\m^{2\alpha_1})$ we conclude that there exists a sign $\epsilon\in \{ \pm1 \}$ such that for any harmonic differential $\psi$ which can occur in the support of the distribution of $\psi^{2\alpha_1} - \EE\psi^{2\alpha_1}$ we have
  \[
    \exp[\pi i q_0(\psi + \psi^{\fluct, A})] = \epsilon\exp[\pi i q_0(\psi + \EE\psi^{2\alpha_1} - 2\pi^{-1}\Im\alpha_1)].
  \]
  Denote by $u$ the cohomology class in $H^1(\Sigma, \ZZ/2\ZZ)$ represented by $\psi^{\fluct, A} - \EE\psi^{2\alpha_1} + 2\pi^{-1}\Im\alpha_1$ and by $v$ the cohomology class represented by $\psi + \EE\psi^{2\alpha_1} - 2\pi^{-1}\Im\alpha_1$. We need to show that $u = 0$. The equality above can be rewritten as
  \[
    \exp[\pi i q_0(v+u)] = \epsilon\exp[\pi i q_0(v)]
  \]
  for a fixed $u\in H^1(\Sigma, \ZZ/2\ZZ)$ and an arbitrary $v\in H^1(\Sigma, \ZZ/2\ZZ)$. Using that
  \[
    \#\{ v\in H^1(\Sigma, \ZZ/2\ZZ) \ \mid\ q_0(v) = 1\}\neq \#\{ v\in H^1(\Sigma, \ZZ/2\ZZ) \ \mid\ q_0(v) = 0\}
  \]
  we conclude that $\epsilon = 1$, thus $q_0(v+u) = q_0(v)$ for all $v$. Using that $q_0(v+u) - q_0(v) = u\cdot v + q_0(u)$ and $v$ is arbitrary we conclude that $u = 0$.
\end{proof}

\subsection{Proof of Theorem~\ref{thma:main2}}
\label{subsec:second_thm_tightness}

We begin with the following lemma. See Section~\ref{subsec:quadratic_forms} for the definition of odd and even quadratic forms over $\ZZ/2\ZZ$ and their Arf invariants.

\begin{lemma}
  \label{lemma:sum_of_even_quadratic_forms}
  Let $T$ be a closed compact oriented surface of genus $g\geq 1$ and $\Qq_+$ denote the space of all even quadratic forms on $H_1(T,\ZZ/2\ZZ)$. Then for any $v\in H_1(T,\ZZ/2\ZZ)$ we have
  \[
    2^{g-1} \leq \sum_{q\in \Qq_+} \exp\left[ \pi i q(v) \right]\leq 2^{g-1}(2^g+1).
  \]
\end{lemma}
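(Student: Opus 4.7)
The plan is to compute the sum $S(v) := \sum_{q\in\Qq_+}\exp[\pi i q(v)] = \sum_{q\in \Qq_+}(-1)^{q(v)}$ explicitly and observe that the two stated bounds are actually the maximum and minimum of $S(v)$. I expect to prove $S(0) = 2^{g-1}(2^g+1)$ and $S(v) = 2^{g-1}$ for $v\neq 0$, so both inequalities are in fact equalities (only the weaker stated form is needed).

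The case $v=0$ is immediate since $q(0)=0$ for every quadratic form, giving $S(0) = |\Qq_+| = 2^{g-1}(2^g+1)$. For $v\neq 0$, I would parametrize $\Qq_+$ by vectors in $V := H_1(T,\ZZ/2\ZZ)$. Fix an even reference form $q_0\in\Qq_+$; every quadratic form refining the intersection product on $V$ is of the form $q = q_0 + l_u$ where $l_u(w) := u\cdot w$ for a unique $u\in V$ (Poincar\'e duality, $u\cdot w$ being the mod-2 intersection pairing). The standard identity
\[
\Arf(q_0 + l_u) = \Arf(q_0) + q_0(u) = q_0(u) \mod 2
\]
identifies $\Qq_+$ with $\{u\in V: q_0(u)=0\}$, so that
\[
S(v) = (-1)^{q_0(v)}\sum_{u\in V:\,q_0(u)=0}(-1)^{u\cdot v}.
\]

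The main (and essentially only) computational step is to evaluate the latter restricted Gauss sum for $v\neq 0$. For this I will use the two classical identities
\[
\sum_{u\in V}(-1)^{u\cdot v} = 2^{2g}\delta_{v,0},\qquad \sum_{u\in V}(-1)^{q_0(u)} = (-1)^{\Arf(q_0)}2^g = 2^g,
\]
combined with the defining relation $q_0(u+v) = q_0(u) + q_0(v) + u\cdot v$, which gives
\[
\sum_{u\in V}(-1)^{q_0(u)+u\cdot v} = (-1)^{q_0(v)}\sum_{u'\in V}(-1)^{q_0(u')} = (-1)^{q_0(v)}2^g
\]
after the substitution $u'=u+v$. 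Adding and subtracting this with the first displayed identity (and using $v\neq 0$) splits the sum according to the value of $q_0(u)$ and yields $\sum_{u:\,q_0(u)=0}(-1)^{u\cdot v} = (-1)^{q_0(v)}2^{g-1}$. Substituting back gives $S(v) = 2^{g-1}$, so the bounds in the lemma hold (with equality on both sides, attained at $v\neq 0$ and $v=0$ respectively).

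There is no substantial obstacle; the only point that needs care is the formula $\Arf(q_0 + l_u) = \Arf(q_0) + q_0(u)$ identifying even forms with the zero set of $q_0$, which I would verify in a symplectic basis $a_1,\ldots,a_g,b_1,\ldots,b_g$ by expanding $u = \sum(\alpha_i a_i + \beta_i b_i)$ and checking that both sides equal $\sum_i(\alpha_i q_0(a_i) + \beta_i q_0(b_i) + \alpha_i\beta_i)$ modulo~$2$.
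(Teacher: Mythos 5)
Your proposal is correct and takes a genuinely different route from the paper. The paper argues by induction on $g$: after fixing a symplectic basis and the even reference form $q_0$ that vanishes on it, it introduces $X_g(v)=\sum_{q\in\Qq_+}(-1)^{q(v)}$ and $Y_g(v)=\sum_{q\in\Qq_-}(-1)^{q(v)}$, proves the discrete Gauss sum identity $X_g(v)-Y_g(v)=2^g$ via the parametrization $q(\cdot)=q_0(\cdot+u)+q_0(u)$, and then splits $v,u$ into a genus-$(g-1)$ part and a genus-$1$ part to obtain the recursion $X_g(v)=X_{g-1}(v_1)X_1(v_2)+Y_{g-1}(v_1)Y_1(v_2)$, concluding by the induction hypothesis and $X_1\in\{1,3\}$. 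You instead bypass induction entirely: you parametrize $\Qq_+$ directly by $\{u:q_0(u)=0\}$ via the Arf-shift identity $\Arf(q_0+l_u)=\Arf(q_0)+q_0(u)$ and evaluate the restricted character sum $\sum_{q_0(u)=0}(-1)^{u\cdot v}$ by combining the orthogonality relation $\sum_u(-1)^{u\cdot v}=2^{2g}\delta_{v,0}$ with the twisted Gauss sum $\sum_u(-1)^{q_0(u)+u\cdot v}=(-1)^{q_0(v)}2^g$. This is a cleaner, closed-form argument that immediately yields the exact values $S(0)=2^{g-1}(2^g+1)$ and $S(v)=2^{g-1}$ for $v\neq0$ — a sharper statement than the inequality form, though the paper's recursion also pins down these exact values if one tracks the induction carefully. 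The one step you flagged as needing care, $\Arf(q_0+l_u)=\Arf(q_0)+q_0(u)$, is indeed the load-bearing identity and your basis-expansion check is the standard and correct verification; the rest is routine. Both arguments are sound; yours trades the paper's combinatorial recursion for a direct character-sum computation.
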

\begin{proof}
  The second inequality follows from the fact that $\Qq_+$ has $2^{g-1}(2^g+1)$ elements. We prove the first one by induction on $g$. The inequality can be easily verified in the case when $g = 1$. Assume that $g\geq 2$ and the inequality holds for each surface of genus less than $g$. Let $\Qq_-$ be the space of odd quadratic forms. Set
  \begin{equation}
    \label{eq:qf2}
    X_g(v)= \sum_{q\in \Qq_+} \exp\left[ \pi i q(v) \right],\qquad Y_g(v)=\sum_{q\in \Qq_-} \exp\left[ \pi i q(v) \right].
  \end{equation}
  Choose a symplectic basis $A_1,\dots, A_g,B_1,\dots, B_g\in H_1(T,\ZZ/2\ZZ)$ and let $q_0\in \Qq_+$ be the quadratic form vanishing on this basis. As follows from Lemma~\ref{lemma:shifted_forms}, each quadratic form $q$ can be written as
  \begin{equation}
    \label{eq:qf1}
    q(v) = q_0(v+u) + \Arf(q) = q_0(v+u) + q_0(u)
  \end{equation}
  which provides a bijection between $\Qq_+\cup\Qq_- $ and $H^1(T,\ZZ/2\ZZ)$. Applying~\eqref{eq:qf1} we get
  \begin{multline}
    \label{eq:qf3}
    X_g(v) - Y_g(v) = \sum_{q\in \Qq_+\cup \Qq_-}\exp[\pi i (q(v) + \Arf(q))] = \sum_{u\in H_1(T,\ZZ/2\ZZ)} \exp\left[ \pi i q_0(v+u) \right] = \\
    = \sum_{u\in H_1(T,\ZZ/2\ZZ)} \exp\left[ \pi i q_0(u) \right] = \sum_{q\in \Qq_+\cup \Qq_-}\exp[\pi i \Arf(q)] = |\Qq_+| - |\Qq_-| = 2^g.
  \end{multline}
  Write $u = u_1+u_2, v = v_1 + v_2$, where $u_1,v_1$ are the projections of $u$ and $v$ onto the span of $A_1,B_1,\dots, A_{g-1}, B_{g-1}$. Note that $q_0(u+v) = q_0(u_1+v_1) + q_0(u_2 + v_2)$, therefore
  \begin{multline}
    \label{eq:qf4}
    X_g(v) = \left(\sum_{u_1\colon\ q_0(u_1) = 0}\exp\left[ \pi i q_0(v_1 + u_1) \right]\right)\cdot \left(\sum_{u_2\colon\ q_0(u_2) = 0}\exp\left[ \pi i q_0(v_2 + u_2) \right]\right) +\\
    + \left(\sum_{u_1\colon\ q_0(u_1) = 1}\exp\left[ \pi i q_0(v_1 + u_1) \right]\right)\cdot \left(\sum_{u_2\colon\ q_0(u_2) = 1}\exp\left[ \pi i q_0(v_2 + u_2) \right]\right) =\\
    = X_{g-1}(v_1)X_1(v_2) + Y_{g-1}(v_1)Y_1(v_2) =\\
    = X_{g-1}(v_1)X_1(v_2) + (2^{g-1} - X_{g-1}(v_1))(2 - X_1(v_2)),
  \end{multline}
  where we used~\eqref{eq:qf3} in the last equality. Note that $X_1(v_2)\in \{ 1,3 \}$. In particular, the right-hand side of~\eqref{eq:qf4} is either equal to $2^{g-1}$, or equal to $4X_{g-1}(v_1) - 2^{g-1}$ which is greater of equal to $2^{g-1}$ by the induction hypothesis.
\end{proof}

Introduce the following sets:
\begin{enumerate}
  \item If $\partial \Sigma_0 = \varnothing$, then we set 
    \[
      L_0^\varnothing = \{ l\in H^1(\Sigma, \ZZ/2\ZZ)\ \mid\ q_0 + l\text{ is even} \}.
    \]
  \item Assume that $\partial \Sigma_0 \neq \varnothing$. Let $\bar{\Sigma}_0$ denote the closed surface obtained from $\Sigma_0$ by gluing $n$ discs to the boundary components. Note that if $l\in H^1(\Sigma,\ZZ/2\ZZ)$ is such that $l(A_1) = \ldots = l(A_{n-1}) = 0$ and $l(B_1) = \ldots = l(B_{n-1}) = 1$, then $q_0 + l$ induces a well-defined quadratic form on $H_1(\bar{\Sigma}_0, \ZZ/2\ZZ)$; we denote it by $(q_0 + l)\vert_{\bar{\Sigma}_0}$. Note that if $\sigma^*l = l$, then $l(A_1) = \ldots = l(A_{n-1}) = 0$. We now set
    \[
      L_0^{\partial \Sigma_0} = \{ l\in H^1(\Sigma, \ZZ/2\ZZ)\ \mid\ \sigma^*l = l,\ l(B_1) = \ldots = l(B_{n-1}) = 1,\ (q_0 + l)\vert_{\bar{\Sigma}_0}\text{ is even}\};
    \]
    if the genus of $\bar{\Sigma}_0$ is zero, then we put $L_0^{\partial \Sigma_0}$ to contain the only $l\in H^1(\Sigma, \ZZ/2\ZZ)$ satisfying the first two conditions.
\end{enumerate}
Given an element $l\in H^1(\Sigma, \RR)$, denote by $\alpha_l$ the anti-holomorphic $(0,1)$ form such that for any cycle $C\in H_1(\Sigma, \ZZ)$ we have
\[
  \pi^{-1}\int_C \Im \alpha_l = l(C).
\]
Note that if $\sigma^*l = l$, then $\sigma^*\alpha_l = -\bar{\alpha}_l$. In what follows we identify $l\in H^1(\Sigma, \ZZ/2\ZZ)$ with $l\in H^1(\Sigma, \ZZ)$ such that $l(A_j),l(B_j)\in \{ 0, 1\}$ for any $j = 1,\dots, g$, where $A_1,\dots, A_g,B_1,\dots, B_g$ is the simplicial basis in $H_1(\Sigma, \ZZ)$.

Let $D$ be an arbitrary dimer cover of $G$. Define
\begin{equation}
  \label{eq:def_of_PsiAK}
  \Psi^{A,K} = \Psi_D^A - \Psi_D^K,
\end{equation}
where $\Psi_D^A$ and $\Psi_D^K$ are as in~\eqref{eq:def_of_MAD} and~\eqref{eq:def_of_mKD} respectively. Clearly, $\Psi^{A,K}$ does not depend on $D$.

\begin{lemma}
  \label{lemma:partition_function_estimate}
  The following inequalities hold:
  \begin{enumerate}
    \item Assume that $\partial \Sigma_0 = \varnothing$. Let $\epsilon$ be the constant from the first item of Lemma~\ref{lemma:Kasteleyn_thm} and $\Zz_G$ be the partition function of the dimer model on $G$. Then we have
      \begin{equation*}
        2^{g-1}\Zz_G\leq \epsilon^{-1}\sum_{l\in L_0^\varnothing} \exp\left[ i\int_{\Sigma_0}\Im\alpha_l \wedge\Psi^{A,K} \right] e^{-P(\frac{1}{2}\alpha_l)} \det K_{\frac{1}{2}\alpha_l} \leq 2^{g-1}(2^g+1)\Zz_G.
      \end{equation*}
    \item Assume that $\partial \Sigma_0 \neq \varnothing$. Let $\epsilon$ be the constant from the second item of Lemma~\ref{lemma:Kasteleyn_thm} and $\Zz_{G_0}$ be the partition function of the dimer model on $G_0$. Then we have
      \begin{equation*}
        2^{\bar{g}_0-1}\Zz_{G_0}\leq \epsilon^{-1}\!\!\sum_{l\in L_0^{\partial \Sigma_0}} \exp\left[ \frac{i}{2}\int_\Sigma\Im\alpha_l \wedge\Psi^{A,K} \right] e^{-P_0(\frac{1}{2}\alpha_l)} \det K_{0,\frac{1}{2}\alpha_l} \leq 2^{\bar{g}_0-1}(2^{\bar{g}_0}+1)\Zz_{G_0},
      \end{equation*}
      where $\bar{g}_0$ is the genus of $\bar{\Sigma}_0$.
  \end{enumerate}
\end{lemma}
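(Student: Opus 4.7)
My plan is to apply Lemma~\ref{lemma:detKalpha} with $\alpha=\tfrac{1}{2}\alpha_l$, sum the resulting identity over the appropriate set of half-integer cohomology classes, and invoke Lemma~\ref{lemma:sum_of_even_quadratic_forms}. In the closed case, Lemma~\ref{lemma:detKalpha} gives
\[
  e^{-P(\tfrac{1}{2}\alpha_l)}\det K_{\tfrac{1}{2}\alpha_l} \;=\; \epsilon\,\Zz_G\,\EE\exp\!\left[\pi i q_0(\Psi^A_D)+i\!\int_{\Sigma_0}\!\Im\alpha_l\wedge\M_D^K\right].
\]
In the boundary case, the hypothesis $\sigma^*\alpha=-\bar\alpha$ of the second item of Lemma~\ref{lemma:detKalpha} translates, via the identities $\sigma_*A_j=-A_j$, $\sigma_*B_j=B_j$ for the symplectic basis, into the symmetry requirement $\sigma^*l=l$ that appears in the definition of $L_0^{\partial\Sigma_0}$. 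I will check that the further constraint $l(A_j)=0$, $l(B_j)=1$ for $j=1,\dots,n-1$ is the correct one, by examining the eigenspace decomposition of $H^1(\Sigma,\RR)$ under $\sigma^*$ used in Lemma~\ref{lemma:bosonization_identity}.

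Next, since $\Im\alpha_l$ is harmonic (hence closed), the pairing against $\M_D^K$ depends only on the harmonic part of $\M_D^K$, which is $\Psi^K_D=\Psi^A_D-\Psi^{A,K}$. By the Riemann bilinear relations,
\[
  \int_\Sigma \Im\alpha_l\wedge\Psi^A_D \;=\; \pi\sum_{j=1}^g\bigl(l(A_j)\cdot [\Psi^A_D]_{B_j}-l(B_j)\cdot [\Psi^A_D]_{A_j}\bigr),
\]
which is $\pi$ times an integer because $[\Psi^A_D]\in H^1(\Sigma,\ZZ)$ by Lemma~\ref{lemma:Kasteleyn_thm}. Reducing that integer mod $2$ yields the Poincar\'e duality pairing $l\cdot[\Psi^A_D]\bmod 2$, and combining with the quadratic exponent $\pi iq_0([\Psi^A_D])$ produces $\pi i(q_0+l)([\Psi^A_D])$, viewed on $\Sigma$ in the closed case and on $\bar\Sigma_0$ in the boundary case. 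Rearranging yields, in both cases,
\[
  \exp\!\Bigl[c_l\,i\!\int\!\Im\alpha_l\wedge\Psi^{A,K}\Bigr]e^{-P_\bullet(\tfrac{1}{2}\alpha_l)}\det K_{\bullet,\tfrac{1}{2}\alpha_l} \;=\; \epsilon\,\Zz_\bullet\,\EE\exp[\pi i(q_0+l)(\Psi^A_D)],
\]
with the obvious substitution $(P,\Zz_G)$ vs.~$(P_0,\Zz_{G_0})$ and with $c_l=1$ (closed) or $c_l=\tfrac12$ (boundary).

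The last step is to sum over $l\in L_0^\varnothing$ (resp.\ $L_0^{\partial\Sigma_0}$) and exchange the finite sum with $\EE$. For each realisation of $D$, the inner sum becomes $\sum_{q\in\Qq_+}\exp[\pi i q(v)]$, where $v\in H_1(\Sigma,\ZZ/2\ZZ)$ resp.~$v\in H_1(\bar\Sigma_0,\ZZ/2\ZZ)$ is the mod-$2$ Poincar\'e dual of $[\Psi^A_D]$ (resp.\ of $[\Psi^A_D|_{\Sigma_0}]$), and $\Qq_+$ is the set of even quadratic forms on the corresponding surface. Lemma~\ref{lemma:sum_of_even_quadratic_forms} then supplies the bounds $2^{g-1}\le\sum\le 2^{g-1}(2^g+1)$ (resp.\ with $g$ replaced by $\bar g_0$), uniformly in $D$; averaging and multiplying by $\Zz_G$ (resp.\ $\Zz_{G_0}$) gives exactly the claim.

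The main obstacle will be the boundary case: one must verify that as $l$ ranges over $L_0^{\partial\Sigma_0}$, the form $q_0+l$ indeed descends along $H_1(\Sigma_0,\ZZ/2\ZZ)\twoheadrightarrow H_1(\bar\Sigma_0,\ZZ/2\ZZ)$ and ranges bijectively over all even quadratic forms on $H_1(\bar\Sigma_0,\ZZ/2\ZZ)$. Descent requires $q_0+l$ to vanish on $\ker(H_1(\Sigma_0)\to H_1(\bar\Sigma_0))$, generated by the boundary cycles $B_1,\dots,B_{n-1}$; from the formula~\eqref{eq:def_of_q0} one computes $q_0(B_j)\equiv 1\bmod 2$ (the $+1$ term dominates since the boundary is a geodesic not meeting any $\gamma_i$), so the constraint $l(B_j)=1$ in the definition of $L_0^{\partial\Sigma_0}$ is exactly what cancels this. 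Similarly, $l(A_j)=0$ for $j\le n-1$ is precisely the mod-$2$ image of the eigenspace constraint $a(\alpha_l)\in V_-$ needed for $\sigma^*\alpha_l=-\bar\alpha_l$, while the choice of boundary matching $E$ in Lemma~\ref{lemma:Kasteleyn_thm} forces $[\Psi^A_D]_{A_j}$ to be even for these $j$, which ensures that the apparently $\tfrac{\pi i}{2}$-valued contributions from the Riemann bilinear sum actually lie in $\pi i\ZZ$ and so produce an honest sign. Once these compatibility checks are in place, the three-step argument above goes through verbatim.
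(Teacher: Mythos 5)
Your proposal is correct and takes essentially the same approach as the paper: apply Lemma~\ref{lemma:detKalpha} with $\alpha=\tfrac12\alpha_l$, convert the continuous pairing $\int\Im\alpha_l\wedge\M_D^K$ (after shifting by $\Psi^{A,K}$) into the topological pairing $\pi i\, l(u_D)$ so that it combines with $q_0$ to give $(q_0+l)$, observe that the result factors through $\bar\Sigma_0$ for $l\in L_0^{\partial\Sigma_0}$, and finish with Lemma~\ref{lemma:sum_of_even_quadratic_forms}. The only difference is that you spell out the Riemann-bilinear-relations computation and the compatibility checks in more detail than the paper, which states the key identity and leaves those verifications implicit.
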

\begin{proof}
  We will prove only the second item, the first one can be proven similarly. Given a dimer cover $D_0$ of $G_0$ and the corresponding dimer cover $D = D_0\cup E\cup \sigma(D_0)$ constructed as in the second item of Lemma~\ref{lemma:detKalpha} we denote by $u_D\in H_1(\Sigma_0,\ZZ)$ the homology class Poincar\'e dual to $\Psi_D^A\vert_{\Sigma_0}$. If $l\in L_0^{\partial \Sigma_0}$, then $(q_0 + l)(u_D)$ depends only on the projection of $u_D$ to $H_1(\bar{\Sigma}_0,\ZZ)$. Note that 
  \[
    \exp\left[ \frac{i}{2}\int_\Sigma\Im\alpha_l \wedge\Psi^{A,K} \right]\cdot \exp\left[ \frac{i}{2}\int_\Sigma \Im\alpha_l\wedge \M_D^K \right] = \exp\left[ \pi i l(u_D) \right]
  \]
  whenever $D$ is of the form $D = D_0\cup E \cup \sigma(D_0)$.
  Using these observations and the second item of Lemma~\ref{lemma:detKalpha} we can write
  \begin{equation}
    \label{eq:pfe1}
    \epsilon^{-1} \exp\left[ \frac{i}{2}\int_\Sigma\Im\alpha_l \wedge\Psi^{A,K} \right] e^{-P_0(\frac{1}{2}\alpha_l)} \det K_{0,\frac{1}{2}\alpha_l} = \Zz_{G_0}\EE\exp\left[ \pi i \left((q_0 + l)\vert_{\bar{\Sigma}_0}\right)(u_D) \right].
  \end{equation}
  The inequalities now follow from Lemma~\ref{lemma:sum_of_even_quadratic_forms}.
\end{proof}

The following technical result implies Theorem~\ref{thma:main2} immediately.

\begin{thmas}
  \label{thmas:second_moment_estimate}
  Let $\Kk\subset \Mm_g^{t,(0,1)}$ be a compact subset such that for any $[\Sigma, A,B,\alpha]\in \Kk$ we have $\theta[\alpha](0)\neq 0$. Let $\lambda,R>0$ be fixed and $\delta>0$ be small enough depending on $\Kk,\lambda, R$. There exists a constant $C>0$ depending only on $\Kk, R$ and $\lambda$ such that for each $(\lambda,\delta)$-adapted graph $G$ on $\Sigma$ the following holds:
  \begin{enumerate}
    \item Assume that $\partial \Sigma_0 = \varnothing$. Assume that $\alpha_G$ can be chosen such that
      \[
        (\Sigma, A_1,\dots, A_g,B_1,\dots,B_g,\pm\frac{1}{2}\alpha_l + \alpha_G)\in \Kk
      \]
      for any $l\in L_0^\varnothing$. Then for any $\alpha = \dbar \vphi + \alpha_h$ such that $\|\vphi\|_{\mC^2(\Sigma)}\leq R$ and $\int_\Sigma \alpha_h\wedge\ast \alpha_h\leq R$ we have
      \[
        \EE \left(\int_\Sigma\Im \alpha\wedge \M_D^K \right)^2 \leq C.
      \]
    \item Assume that $\partial \Sigma_0 \neq \varnothing$. Assume that $\alpha_G$ can be chosen such that
      \[
        (\Sigma, A_1,\dots, A_g,B_1,\dots,B_g,\pm\frac{1}{2}\alpha_l + \alpha_G)\in \Kk
      \]
      for any $l\in L_0^{\partial \Sigma_0}$. Then for any for any $\alpha = \dbar \vphi + \alpha_h$ such that $\|\vphi\|_{\mC^2(\Sigma)}\leq R$ and $\int_\Sigma \alpha_h\wedge\ast \alpha_h\leq R$ and such that $\sigma^*\alpha = -\bar{\alpha}$ we have
      \[
        \EE \left(\int_{\Sigma_0}\Im \alpha\wedge \M_D^K \right)^2 \leq C.
      \]
  \end{enumerate}
\end{thmas}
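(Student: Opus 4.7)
\smallskip\noindent\textbf{Proof plan for Theorem~\ref{thmas:second_moment_estimate}.} I will focus on item~(2); item~(1) follows by essentially the same argument or by the doubling trick $\Sigma = \Sigma_0\sqcup\Sigma_0^{\op}$. The starting point is to replicate the spin-structure averaging trick from Lemma~\ref{lemma:partition_function_estimate} but now including a perturbation by $t\alpha$. Concretely, for real $t$ near $0$, put $\alpha_t^{(l)} = \tfrac{1}{2}\alpha_l + t\alpha$ and set
\[
  \Psi(t) \coloneqq \sum_{l\in L_0^{\partial\Sigma_0}}\exp\!\left[\tfrac{i}{2}\!\int_\Sigma\Im\alpha_l\wedge\Psi^{A,K}\right]\, e^{-P_0(\alpha_t^{(l)})}\,\det K_{0,\alpha_t^{(l)}}.
\]
Running the calculation in~\eqref{eq:pfe1} with $\alpha_l$ replaced by $\alpha_l + 2t\alpha$ (and using $\int_\Sigma\Im\alpha\wedge \M_D^K = 2\int_{\Sigma_0}\Im\alpha\wedge \M_D^K$ from $\sigma$-antisymmetry) gives $\Psi(t) = \epsilon\Zz_{G_0}\EE[w(u_D)\,e^{2itX_D}]$, where $X_D = \int_{\Sigma_0}\Im\alpha\wedge \M_D^K$ and $w(u_D) = \sum_{l}\exp[\pi i((q_0+l)|_{\bar\Sigma_0})(u_D)] \in [2^{\bar g_0-1},\,2^{\bar g_0-1}(2^{\bar g_0}+1)]$ by Lemma~\ref{lemma:sum_of_even_quadratic_forms}. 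Therefore $-\Psi''(0) = 4\epsilon\Zz_{G_0}\EE[w(u_D)X_D^2]\ge 4\cdot 2^{\bar g_0-1}\epsilon\Zz_{G_0}\EE X_D^2$, so it suffices to prove $|\Psi''(0)|\le C\Zz_{G_0}$ with $C = C(\Kk,R,\lambda)$.

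\smallskip
For each $l$, write $\mathcal{L}_l(t) = -P_0(\alpha_t^{(l)}) + \log\det K_{0,\alpha_t^{(l)}}$. Then
\[
  \tfrac{d^2}{dt^2}\big|_{t=0}\bigl[e^{-P_0(\alpha_t^{(l)})}\det K_{0,\alpha_t^{(l)}}\bigr] = e^{-P_0(\frac{1}{2}\alpha_l)}\det K_{0,\frac{1}{2}\alpha_l}\cdot\bigl[\mathcal{L}_l'(0)^2+\mathcal{L}_l''(0)\bigr].
\]
Since $|e^{-P_0(\frac{1}{2}\alpha_l)}|=1$ and $|\det K_{0,\frac{1}{2}\alpha_l}|\le \Zz_{G_0}$ by the Hadamard inequality, the triangle inequality over the finite set $L_0^{\partial\Sigma_0}$ reduces the task to bounding $|\mathcal{L}_l'(0)^2+\mathcal{L}_l''(0)|$ uniformly in $l$. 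The quantity $\mathcal{L}_l'(0)$ is controlled by Proposition~\ref{prop:variations_of_detKalpha} applied to the family $\alpha_t^{(l)}$: it equals a $\log\theta$-derivative plus linear and Dirichlet terms in $\vphi$, each bounded by $C(\Kk,R)$ thanks to $\pm\tfrac{1}{2}\alpha_l+\alpha_G\in\Kk$, $\|\vphi\|_{\mC^2}\le R$, and $\int\alpha_h\wedge\ast\alpha_h\le R$. Since $P_0$ is linear in $t$, the remaining obstacle is a uniform bound on
\[
  \mathcal{L}_l''(0) = \tfrac{d^2}{dt^2}\big|_{t=0}\log\det K_{0,\alpha_t^{(l)}} = \tr\bigl(K^{-1}\ddot K\bigr) - \tr\bigl((K^{-1}\dot K)^2\bigr),\qquad K = K_{0,\frac{1}{2}\alpha_l},
\]
with $\dot K(w,b)=2iE(w,b)K(w,b)$, $\ddot K(w,b)=-4E(w,b)^2K(w,b)$, $E(w,b)=\int_w^b\Im\alpha = O(\delta)$.

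\smallskip
The first trace is straightforward: summing over the $O(\delta^{-2})$ edges, each summand contributes $E^2\cdot K^{-1}(b,w)K(w,b) = O(\delta^2)\cdot O(1)$ by the near-diagonal expansion (Lemma~\ref{lemma:near-diag-expansion-of-Kalphainv}, together with the full-plane estimates Theorem~\ref{thmas:parametrix} and Lemma~\ref{lemma:kernel_of_G4pi} for edges near conical singularities), yielding a uniform $O(1)$ bound. The main obstacle is the second trace
\[
  \tr\bigl((K^{-1}\dot K)^2\bigr) = -4\sum_{w_1\sim b_2,\ w_2\sim b_1} E(w_1,b_2)E(w_2,b_1)\,K^{-1}(b_1,w_1)K(w_1,b_2)K^{-1}(b_2,w_2)K(w_2,b_1):
\]
a naive application of the $|K^{-1}(b,w)|\lesssim 1/d(b,w)$ bound yields only an $O(\log\delta^{-1})$ estimate, so the uniform bound requires identifying genuine cancellations.

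\smallskip
To extract these cancellations, I would substitute $K^{-1}(b,w) = S_{\frac{1}{2}\alpha_l}(b,w) + O(\delta^{\beta_0}/\sqrt[4]{d(b,\underline{p})d(w,\underline{p})})$ from Proposition~\ref{prop:Kernel_for_Kalpha} (combined with Lemma~\ref{lemma:Kalphainv_Sigma_with_boundary} for the $\sigma$-symmetric case), and then perform a discrete integration by parts using the identity $\sum_{w\sim b}K(w,b)K^{-1}(b,w') = \delta_{w,w'}$ together with the discrete $\dbar$-approximation of Lemma~\ref{lemma:Kalpha_and_dbar}. The leading contribution is then a Riemann-sum approximation of the regularized bilinear pairing
\[
  \int_\Sigma\!\!\int_\Sigma \Dd^{-1}_{\frac{1}{2}\alpha_l+\alpha_G}(p,q)\,\alpha(q)\,\Dd^{-1}_{\frac{1}{2}\alpha_l+\alpha_G}(q,p)\,\alpha(p),
\]
which by the same variational computation underlying Proposition~\ref{prop:variations_of_detKalpha} and Lemma~\ref{lemma:derivative_of_theta} equals the second $t$-derivative of $\log\theta[\tfrac{1}{2}\alpha_l + t\alpha_h + \alpha_G](0) + 2\pi ita(\alpha_h)\cdot b(\alpha_G) - \tfrac{t^2}{2\pi}\int\Re\vphi\,\Delta\Re\vphi\,ds^2$ at $t=0$, a quantity uniformly bounded under our $\Kk,R$ assumptions. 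Concretely, the diagonal singularity of $\Dd^{-1}$ is absorbed by splitting the double sum at scale $\nu\in(\delta^{\beta_0},1)$: the near-diagonal part is handled using the exact identity $KK^{-1}=I_B$ to cancel the leading Cauchy pole, while the far-diagonal part converges to the regularized integral above by the standard Riemann-sum argument. The principal technical difficulty I anticipate is executing this diagonal subtraction carefully at the conical singularities, where one must additionally use the specialized expansions of $K_{\Tt,1/2}^{-1}$ from Lemma~\ref{lemma:kernel_of_G4pi} and the multivalued discrete holomorphic building blocks $[z^s]$ from Lemma~\ref{lemma:fs}; this is analogous to, but more delicate than, the matching procedure already used in the construction of $S_\alpha$ in Section~\ref{subsec:Kalphainverse}.
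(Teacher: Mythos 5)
Your initial reduction — averaging over $l\in L_0^{\partial\Sigma_0}$, observing that the partition–function weights are bounded above and below by Lemma~\ref{lemma:sum_of_even_quadratic_forms}, and thereby reducing the second moment bound to a uniform bound on $\tfrac{d^2}{dt^2}\log\det K_{0,\alpha_l(t)}\vert_{t=0}$ — matches the paper's proof exactly (eqs.~\eqref{eq:sme1}--\eqref{eq:sme4}). The split of the trace into $\tr(K^{-1}\ddot K)$ and $\tr((K^{-1}\dot K)^2)$, and the $O(1)$ bound on the first, are also the same. The divergence is in how you handle the dangerous cross term.

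The paper does \emph{not} substitute the parametrix $S_\alpha$ and carry out a Riemann-sum matching with a regularized continuum integral, which is the route you sketch. Instead, for a fixed edge $(w_2,b_2)$ it packages the inner sum over $(w_1,b_1)$ into the flow
\[
  f_{w_2,b_2}(w_1b_1) = K_{\frac12\alpha_l}(w_1,b_1)K_{\frac12\alpha_l}(w_2,b_2)K^{-1}_{0,\frac12\alpha_l}(b_1,w_2)K^{-1}_{0,\frac12\alpha_l}(b_2,w_1),
\]
which is divergence-free except at $b_2,w_2,\sigma(b_2),\sigma(w_2)$. The inner sum is then $\tfrac12\int_\Sigma\Im\alpha\wedge\M_{w_2,b_2}$. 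Hodge-decomposing $\M_{w_2,b_2}=d\Phi_{w_2,b_2}+\Psi_{w_2,b_2}$ and establishing the explicit pointwise bounds~\eqref{eq:sme13}--\eqref{eq:sme14}, one integrates by parts against the decomposition $\Im\alpha=\tfrac12(d\Im\vphi-\ast d\Re\vphi)+\Im\alpha_h$: the $\Phi$ part pairs with $d\ast d\Re\vphi$ (bounded since $\vphi\in\mC^2$), and the $\Psi$ part pairs with the harmonic piece via the Riemann bilinear relations and the residue formula. Each such pairing is $O(\delta)$, and summing with the $O(\delta)$ weights $\int_{w_2}^{b_2}\Im\alpha$ over the $O(\delta^{-2})$ edges gives $O(1)$. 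The singularity at the diagonal never has to be subtracted: it is absorbed because $\M_{w_2,b_2}$ has only a weak (simple-pole/logarithmic) singularity at $(w_2,b_2)$ and is paired against the \emph{smooth} data $\Re\vphi$, $\alpha_h$.

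This is where your proposal is genuinely incomplete. You correctly identify that the naive bound is $O(\log\delta^{-1})$ and that a cancellation is needed, but the continuum-matching argument you outline is exactly the part you defer (``the principal technical difficulty I anticipate is executing this diagonal subtraction carefully''). There are two concrete issues with deferring it. First, convergence of the Riemann sum to a regularized integral does not by itself give a $\delta$-uniform bound; you would need quantitative control of the remainder, which brings you back to estimating precisely the same kind of pointwise quantities as the flow/Hodge argument, but with the added burden of identifying the regularization. Second, your appeal to Proposition~\ref{prop:variations_of_detKalpha} to bound $\mathcal{L}_l'(0)$ is fine, but one cannot appeal to it for $\mathcal{L}_l''(0)$ because the error term there is only $o(1)$ in $t$-level 0, and differentiating an $o(1)$ need not yield $o(1)$. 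The paper avoids this trap by bounding $\mathcal{L}_l''(0)$ directly and using Proposition~\ref{prop:variations_of_detKalpha} only for the comparison~\eqref{eq:sme3} of determinants at different $l$'s. So the missing piece in your argument is precisely the mechanism that the flow-decomposition argument supplies; without it, the proof does not close.
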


\begin{proof}[Proof of Theorem~\ref{thma:main2}]
  Assume that we are in the setting of Section~\ref{subsec:intro_main_results}. To derive Theorem~\ref{thma:main2} from Theorem~\ref{thmas:second_moment_estimate} it is enough to show that  
  \begin{equation}
    \label{eq:poT2}
    \theta[\pm\frac{1}{2}\alpha_l + \alpha_1](0)\neq 0.
  \end{equation}
  for any $l\in L_0^{\partial \Sigma_0}$ provided one of the assumptions~\ref{item:intro1}--\ref{item:intro3} formulated prior Theorem~\ref{thma:main2} is satisfied.

  Assume that the form $2\pi^{-1}\Im\alpha_1$ has integer cohomologies and all theta constants of $\Sigma$ corresponding to even theta characteristics are non-zero. By choosing $\alpha_{G^k}$ properly we can assume that $\alpha_1 = 0$. In this case~\eqref{eq:poT2} boils down to $\theta[\frac{1}{2}\alpha_l](0)\neq 0$ given that $l\in L_0^{\partial \Sigma_0}$. But the definition of $L_0^{\partial \Sigma_0}$ and of $\theta[\frac{1}{2}\alpha_l]$ (see~\eqref{eq:def_of_theta_alpha}) imply that $\theta[\frac{1}{2}\alpha_l](z)$ is a theta function with an even theta characteristics, therefore $\theta[\frac{1}{2}\alpha_l](0)\neq 0$ by our assumption.

  Assume that Assumption~\ref{item:intro2} above Theorem~\ref{thma:main2} is satisfied. In this case we note that the definition~\eqref{eq:def_of_theta_alpha} of $\theta[\pm\frac{1}{2}\alpha_l + \alpha_1](z)$ and basic properties of theta function (see item~3 of Proposition~\ref{prop:of_theta} and~\eqref{eq:Phi_alpha_via_a_b}) imply that $\theta[\pm\frac{1}{2}\alpha_l + \alpha_1](0) = 0$ only if $\pi^{-1}\Im\alpha_1$ belongs to a half-integer shift of the theta divisor, hence~\eqref{eq:poT2} hold given Assumption~\ref{item:intro2}.

  Assume now that $\Sigma_0$ is a multiply-connected domain. Recall that $L_0^{\partial\Sigma_0}$ consists of one $l$ such that $l(A_1) = \ldots = l(A_{n-1}) = 0$ and $l(B_1) = \ldots = l(B_{n-1}) = 1$. We claim that~\eqref{eq:poT2} holds for any $\alpha_1$ such that $\sigma\alpha_1 = \bar{\alpha}_1$. We can prove it using Lemma~\ref{lemma:bosonization_identity}. Indeed, consider $\m^{2\alpha_1}$. By the definition, $(\psi^{2\alpha_1} - 2\pi^{-1}\Im \alpha_1)\vert_{\Sigma_0}$ is Poincare dual to a random integer homology class of the form $k_1B_1 + \ldots + k_{n-1}B_{n-1}$. It follows that
  \[
    q_0((\psi^{2\alpha_1} - 2\pi^{-1}\Im\alpha_1)\vert_{\Sigma_0}) + (2\pi)^{-1}\int_\Sigma \Im\alpha_l\wedge (\psi^{2\alpha_1} - 2\pi^{-1}\Im\alpha_1) = 0\mod 2
  \]
  whence
  \begin{multline}
    \label{eq:poT3}
    \exp\left[ \pi i q_0((\psi^{2\alpha_1} - 2\pi^{-1}\Im\alpha_1)\vert_{\Sigma_0}) + \frac{i}{2} \int_\Sigma \Im\alpha_l\wedge \psi^{2\alpha_1} \right] =\\
    = \exp\left[ i\pi^{-1}\int_\Sigma \Im \alpha_l \wedge \Im \alpha_1\right].
  \end{multline}
  But the right-hand side of~\eqref{eq:poT3} is deterministic, hence, applying the second item of Lemma~\ref{lemma:bosonization_identity} we get
  \[
    \theta[\frac{1}{2}\alpha_l + \alpha_1](0) \neq 0.
  \]
  The same applies to $-\alpha_l$, hence~\eqref{eq:poT2} holds.
\end{proof}

\begin{proof}[Proof of Theorem~\ref{thmas:second_moment_estimate}]
  As usually, we prove only the second item; the first one can be proved similarly. For any $l\in H^1(\Sigma, \RR)$ set
  \begin{equation}
    \label{eq:sme1}
    \alpha_l(t) = \frac{1}{2}\alpha_l + t\alpha.
  \end{equation}
  Arguing as in the proof of Lemma~\ref{lemma:partition_function_estimate} one can show that
  \begin{multline}
    \label{eq:sme2}
    \frac{d^2}{dt^2}\sum_{l\in L_0^{\partial \Sigma_0}} \exp\left[ \frac{i}{2}\int_\Sigma\Im\alpha_l \wedge\Psi^{A,K} \right] e^{-P_0(\alpha_l(t))} \det K_{0,\alpha_l(t)}\vert_{t=0}\asymp\\
    \asymp\Zz_{G_0}\EE \left(\int_{\Sigma_0}\Im \alpha\wedge \M_D^K \right)^2,
  \end{multline}
  where $A\asymp B$ if $c^{-1}|B|\leq |A|\leq c|B|$ for some constant $c>1$  
  (we have $c = 2^{\bar{g}_0 -1}(2^{\bar{g}_0}+1)$ in our case). By the assumptions of the theorem and Proposition~\ref{prop:variations_of_detKalpha} for any two $l_1,l_2\in L_0^{\partial \Sigma_0}$ we have
  \begin{equation}
    \label{eq:sme3}
    \det K_{0,\frac{1}{2}\alpha_{l_1}} \asymp \det K_{0,\frac{1}{2}\alpha_{l_2}}.
  \end{equation}
  Using this observation, Lemma~\ref{lemma:partition_function_estimate}, equation~\eqref{eq:sme2} and Proposition~\ref{prop:variations_of_detKalpha} again we conclude that the inequality for the second moment would follow if we prove that for some $l\in L_0^{\partial \Sigma_0}$ and some constant $C>0$ depending only on $\Kk,R$ and $\lambda$ we have
  \begin{equation}
    \label{eq:sme4}
    \left|\frac{d^2}{dt^2}\log \det K_{0,\alpha_l(t)}\vert_{t=0}\right| \leq C.
  \end{equation}
  From now on we will write $= O(1)$ instead of $\leq C$. Expanding~\eqref{eq:sme4} we obtain
  \begin{multline}
    \label{eq:sme5}
    \frac{d^2}{dt^2}\log \det K_{0,\alpha_l(t)}\vert_{t=0} = -4\sum_{b\sim w,\ b,w\in G_0} \left( \int_w^b\Im \alpha \right)^2\cdot K_{\frac{1}{2}\alpha_l}(w,b)K_{0,\frac{1}{2}\alpha_l}^{-1}(b,w) + \\
    + 4 \sum_{\substack{b_1\sim w_1,\ b_1,w_1\in G_0 \\ b_2\sim w_2,\ b_2,w_2\in G_0}} \left(\int_{w_1}^{b_1}\Im\alpha\right)\cdot \left( \int_{w_2}^{b_2}\Im\alpha \right)\times\\
    \times K_{\frac{1}{2}\alpha_l}(w_1,b_1)K_{\frac{1}{2}\alpha_l}(w_2,b_2)K_{0,\frac{1}{2}\alpha_l}^{-1}(b_1,w_2)K_{0,\frac{1}{2}\alpha_l}^{-1}(b_2,w_1).
  \end{multline}
  We estimate two sums on the right-hand side of~\eqref{eq:sme5} separately. By Lemma~\ref{lemma:near-diag-expansion-of-Kalphainv} and Theorem~\ref{thmas:parametrix} we have 
  \begin{equation}
    \label{eq:sme6}
    K_{\frac{1}{2}\alpha_l}(w,b)K_{0,\frac{1}{2}\alpha_l}^{-1}(b,w) = O(1),
  \end{equation}
  hence
  \begin{equation}
    \label{eq:sme7}
    \sum_{b\sim w,\ b,w\in G_0} \left( \int_w^b\Im \alpha \right)^2\cdot K_{\frac{1}{2}\alpha_l}(w,b)K_{0,\frac{1}{2}\alpha_l}^{-1}(b,w) = O(1).
  \end{equation}
  Let us deal with the second sum. Recall that by Lemma~\ref{lemma:Kalphainv_Sigma_with_boundary} we have 
  \begin{equation}
    \label{eq:sme8}
    K_{0,\frac{1}{2}\alpha_l}^{-1}(b,w) = K_{\frac{1}{2}\alpha_l}^{-1}(b,w) + \eta_w^2 K_{\frac{1}{2}\alpha_l}^{-1}(b,\sigma(w));
  \end{equation}
  using this formula we can extend $K^{-1}_{0,\frac{1}{2}\alpha_l}(b,w)$ to the whole $G$. Fix an edge $w_2b_2$ of $G_0$ and consider the flow
  \begin{equation}
    \label{eq:sme9}
    f_{w_2,b_2}(w_1b_1) = K_{\frac{1}{2}\alpha_l}(w_1,b_1)K_{\frac{1}{2}\alpha_l}(w_2,b_2)K_{0,\frac{1}{2}\alpha_l}^{-1}(b_1,w_2)K_{0,\frac{1}{2}\alpha_l}^{-1}(b_2,w_1).
  \end{equation}
  By its definition the flow $f$ has zero divergence at all vertices except of $b_2,w_2,\sigma(b_2), \sigma(w_2)$, where we have
  \begin{equation}
    \label{eq:sme10}
    \begin{split}
      &\div f_{w_2,b_2}(b_2) = -\div f_{w_2,b_2}(w_2) = K_{\frac{1}{2}\alpha_l}(w_2,b_2)K_{0,\frac{1}{2}\alpha_l}^{-1}(b_2,w_2),\\
      &\div f_{w_2,b_2}(\sigma(b_2)) = -\div f_{w_2,b_2}(\sigma(w_2)) = \eta_w^2K_{\frac{1}{2}\alpha_l}(w_2,b_2)K_{0,\frac{1}{2}\alpha_l}^{-1}(b_2,\sigma(w_2)).
    \end{split}
  \end{equation}
  (recall that we have $K_{\frac{1}{2}\alpha_l}(\sigma(b),\sigma(w)) = -(\eta_b\eta_w)^2K_{\frac{1}{2}\alpha_l}(b,w)$ and $K_{\frac{1}{2}\alpha_l}^{-1}(\sigma(b),\sigma(w)) = - (\bar{\eta}_b\bar{\eta}_w)^2 K_{\frac{1}{2}\alpha_l}^{-1}(b,w)$). Let $\M_{w_2,b_2}$ be the 1-form associated with $f_{w_2,b_2}$ and let
  \begin{equation}
    \label{eq:sme11}
    \M_{w_2,b_2} = d\Phi_{w_2,b_2} + \Psi_{w_2,b_2}
  \end{equation}
  be its Hodge decomposition. Using Lemma~\ref{lemma:primitive_of_flow} we can estimate $\Psi_{w_2,b_2}$ and $\Phi_{w_2,b_2}$. Fix an arbitrary smooth metric on $\Sigma$ and denote by $|p-q|$ the distance between $p$ and $q$ in this metric. For any loop $\gamma$ on $\Sigma$ on a definite distance from $b_2,w_2,\sigma(b_2), \sigma(w_2)$ and from conical singularities we have
  \begin{equation}
    \label{eq:sme12}
    \int_\gamma \Psi_{w_2,b_2} = O(\delta\length(\gamma))
  \end{equation}
  since for any edge $b_1w_1$ intersecting $C$ we have $|f_{w_2,b_2}(w_1b_1)| = O(\delta^2)$. Using this observation and expressing the residues of $\Psi_{w_2,b_2}$ via the divergence of $f_{w_2,b_2}$ we conculude that 
  \begin{equation}
    \label{eq:sme13}
    |\Psi_{w_2,b_2}(p)| = O\left( \delta\cdot \left( \frac{1}{|p - b_2||p - w_2|} + \frac{1}{|p - \sigma(b_2)||p - \sigma(w_2)|} \right) \right)
  \end{equation}
  uniformly in positions of $w_2,b_2$. Using this estimate on $\Psi$, Proposition~\ref{prop:Kernel_for_Kalpha} together with Theorem~\ref{thmas:parametrix} and Lemma~\ref{lemma:kernel_of_G4pi} to estimate $f_{w_2,b_2}(w_1b_1)$, and Lemma~\ref{lemma:primitive_of_flow} again we conclude that $\Phi_{w_2,b_2}$ can be chosen such that
  \begin{multline}
    \label{eq:sme14}
    |\Phi_{w_2,b_2}(p)| = O\left( \delta\cdot \frac{|\log|p-b_2|| + |\log|p-w_2|| + 1}{|p - b_2| + |p - w_2|}\right) + \\
    + O\left(\delta\cdot  \frac{|\log|p-\sigma(b_2)|| + |\log|p-\sigma(w_2)|| + 1}{|p - \sigma(b_2)| + |p - \sigma(w_2)|}\right)
  \end{multline}
  uniformly in positions $w_2,b_2$. We now estimate the second sum in~\eqref{eq:sme5}. Using~\eqref{eq:sme12} and~\eqref{eq:sme13} we can write
  \begin{equation}
    \label{eq:sme15}
    \begin{split}
      &\sum_{\substack{b_1\sim w_1,\ b_1,w_1\in G_0 \\ b_2\sim w_2,\ b_2,w_2\in G_0}} \left(\int_{w_1}^{b_1}\Im\alpha\right)\cdot \left( \int_{w_2}^{b_2}\Im\alpha \right)\times\\
      & \times K_{\frac{1}{2}\alpha_l}(w_1,b_1)K_{\frac{1}{2}\alpha_l}(w_2,b_2)K_{0,\frac{1}{2}\alpha_l}^{-1}(b_1,w_2)K_{0,\frac{1}{2}\alpha_l}^{-1}(b_2,w_1)= \\
      & = \frac{1}{2}\sum_{b_2\sim w_2,\ b_2,w_2\in G_0} \left( \int_{w_2}^{b_2}\Im\alpha \right)\cdot \int_\Sigma \Im \alpha\wedge \M_{w_2,b_2} =\\
      &= \frac{1}{2}\sum_{b_2\sim w_2,\ b_2,w_2\in G_0} \left( \int_{w_2}^{b_2}\Im\alpha \right)\cdot \int_\Sigma \left(\frac{1}{2}\Phi_{w_2,b_2} d\ast d\Re\vphi + \Im \alpha_h\wedge \Psi_{w_2,b_2}\right) = O(1)
    \end{split}
  \end{equation}
  because
  \[
    \int_\Sigma \left(\frac{1}{2}\Phi_{w_2,b_2} d\ast d\Re\vphi + \Im \alpha_h\wedge \Psi_{w_2,b_2}\right) = O(\delta).
  \]
  Indeed, $\int_\Sigma \Phi_{w_2,b_2} d\ast d\Re\vphi = O(\delta)$ due to~\eqref{eq:sme14}, and to estimate the second sum one can observe that, by Riemann bilinear relations (see~\eqref{eq:Riemann bilinear relations}) and residue formula, we have
  \begin{multline*}
    \int_\Sigma\Im \alpha_h\wedge \Psi_{w_2,b_2} = K_{\frac{1}{2}\alpha_l}(w_2,b_2)K_{0,\frac{1}{2}\alpha_l}^{-1}(b_2,w_2)\int_{w_2}^{b_2}\alpha_h + \\
    + \eta_w^2K_{\frac{1}{2}\alpha_l}(w_2,b_2)K_{0,\frac{1}{2}\alpha_l}^{-1}(b_2,\sigma(w_2))\int_{\sigma(w_2)}^{\sigma(b_2)}\Im \alpha_h + O(\delta) = O(\delta).   
  \end{multline*}
  Combining~\eqref{eq:sme15} and~\eqref{eq:sme7} with~\eqref{eq:sme5} we conclude~\eqref{eq:sme4} and finish the proof.
\end{proof}

\subsection{Proof of Theorem~\ref{thma:for_BLR}}
\label{subsec:third_thm_BLR_setup}

In this section we prove Theorem~\ref{thma:for_BLR}. Note, the graphs $(G^k)'$ and $(G^{\ast,k})'$ satisfy the assumptions of Section~\ref{subsec:intro_graphs_on_Sigma0} with $\lambda$ and $\delta_k$ chosen as in Section~\ref{subsec:intro_relation_to_BLR}, and the gauge form $\alpha_{G_k}$ can be chosen to be equal to $-\frac{1}{2}\alpha_0$, see~\eqref{eq:alpha_0_vs_alpha_G_Temperley}. Thus, the proof that the limit of $h^k - \EE h^k$ is the primitive of $\m^{-\alpha_0} - \EE \m^{-\alpha_0}$ restricted to $\Sigma_0$ and pulled back to the universal cover $\widetilde{\Sigma}'_0$ is a straightforward application of Theorem~\ref{thma:main1}.

It remains to prove one can extract a subsequence from $\Gamma_0^k, \Gamma_0^{k,\dagger}, G_0^k$ satisfying the assumptions of~\cite{BerestyckiLaslierRayI} with respect to the singular metric $ds^2$ on $\Sigma_0$. The ``bounded density'' and ``good embedding'' assumptions are clearly satisfied by the whole sequence. The rest two assumptions concern the random walk on the graph $\Gamma^k_0$. In fact, both assumptions impose only local restrictions on the random walk, thus we begin by studying the local structure of the random walk on $\Gamma_0^k$. The ``uniform crossing estimate'' assumption follows from~\cite[Lemma~6.8]{CLR1}; the random walk near the conical singularities can be controlled using that $\Gamma_0^k$ is locally a double cover of a Temperleyan isoradial graph.

Let us now show that ``invariance principle'' assumption holds at least along a subsequence. By the construction (see Section~\ref{subsec:intro_relation_to_BLR}, Example~\ref{intro_example:triangular_graphs}), locally outside the conical singularities the graph $\Gamma_0^k$ is a subgraph of a full plane graph obtained in the following way. Let $\lambda,\delta>0$ be fixed, let $\Gamma$ be the Delaunay triangulation associated with a discrete subset of $\CC$ which is a $\lambda^{-1}\delta$-net and $\lambda\delta$ separated, and whose points are in the general position. Let $\Gamma^\dagger$ be the corresponding Voronoi diagram and let $\Tt$ be the t-embedding constructed by midpoints of segments connecting dual vertices of $\Gamma$ and $\Gamma^\dagger$ as in Example~\ref{intro_example:triangular_graphs}. Recall that $\Tt$ is weakly uniform and has $O(\delta)$-small origami (see Section~\ref{subsec:intro_graphs_on_Sigma0} for the precise definition of these properties). The black faces of $\Tt$ correspond to vertices of $\Gamma$ and $\Gamma^\dagger$ and white faces of $\Tt$ correspond to edges of $\Gamma$. As we noticed in Section~\ref{intro_example:alpha_and_temperley}, there is a natural choice of the origami square root function $\eta$ such that 
\[
  \eta_b = 1,\ b\in \Gamma,\qquad \eta_b = i,\ b\in \Gamma^\dagger.
\]
We fix this choice. We define the weights $\mathrm{weight}(b_1b_2)$ on oriented edges of $\Gamma$ following the formula~\eqref{eq:def_of_weight_on_Gammak} defining the weights for $\Gamma_0^k$. The following lemma follows from direct computations (see~\cite[Section~8.1]{CLR1}):

\begin{lemma}
  \label{lemma:Gamma_is_Tgraph}
  If the additive normalization of the origami map $\Oo$ is chosen properly, then the T-graph $\Tt + \bar{\Oo}$ is equal to the Delaunay triangulation $\Gamma$. Moreover, for any edge $b_1b_2$ of $\Gamma$ we have $\mathrm{weight}(b_1b_2) = q(b_1\to b_2)$, where $q$ is the transition rate for the random walk on $\Tt + \bar{\Oo}$ defined as in Section~\ref{subsec:t-embedding_def} for an arbitrary splitting.
\end{lemma}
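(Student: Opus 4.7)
The plan is to verify the two assertions directly from the explicit formulae for the origami map in Definition~\ref{defin:of_origami}, using the prescribed values $\eta_b=1$ for $b\in\Gamma$ and $\eta_b=i$ for $b\in\Gamma^\dagger$ and the explicit geometry of the t-embedding $\Tt$ described in Example~\ref{intro_example:triangular_graphs} (black faces of $\Tt$ correspond to vertices of $\Gamma\cup\Gamma^\dagger$, white faces to edges of $\Gamma^\dagger$, and vertices of $\Tt$ sit at midpoints of the ``radius edges'' $b\,b^\dagger$).

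For the first assertion I would begin by computing the action of $\Tt+\bar\Oo$ on each black face. On such a face one has $d\Oo=\bar\eta_b^2\,d\bar z$, hence $\bar\Oo(z)=\eta_b^2 z+c_b$ locally, and therefore $(\Tt+\bar\Oo)(z)=(1+\eta_b^2)z+c_b$. Exactly one colour class of black faces satisfies $1+\eta_b^2=0$ (these collapse to the single point $c_b$), while the other is a rigid rescaling by factor $2$; correspondingly, the white faces collapse to segments along the direction dictated by $\bar\eta_w^2$. Thus the T-graph has vertices precisely at the images of the collapsing black faces and two-dimensional faces at the images of the rescaled ones. The next step is to fix the global additive normalisation of $\bar\Oo$. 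Integrating $d\bar\Oo$ across the white faces and matching values at shared vertices of $\Tt$ ties together the constants $c_b,c_{b^\dagger}$; the resulting compatibility condition (which is solvable precisely because $\bar\Oo$ is consistently defined and the white face contributions have zero net holonomy around each vertex of $\Tt$) forces the collapsed faces to land exactly at the Delaunay vertices, and the rescaled triangular faces (whose vertices in $\Tt$ are midpoints of the three radius edges at a circumcenter) to map onto the Delaunay triangles themselves. This identifies $\Tt+\bar\Oo$ with $\Gamma$ as a planar graph.

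For the second assertion I would invoke the recipe from Section~\ref{subsec:t-embedding_def} applied to an arbitrary white splitting of $\Tt$. Each Delaunay vertex $b$ of $\Gamma$ arises from a degenerate black face, and its $d$ Delaunay neighbours $b_j$ are the images of the $d$ boundary vertices of that face; the rule for transition rates at a degenerate vertex gives
\[
  q(b\to b_j)=\frac{m_j}{|b-b_j|^2},\qquad m_j=\frac{|K_\Tt(w_j,b)|\cdot|b-b_j|}{\sum_i |K_\Tt(w_i,b)|\cdot|b-b_i|}.
\]
It then remains to identify $|K_\Tt(w_j,b)|$ with the length of the Voronoi edge $b^\dagger_{j,+}b^\dagger_{j,-}$ appearing in~\eqref{eq:def_of_weight_on_Gammak}. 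This is immediate from the explicit embedding: the vertices of $\Tt$ adjacent to the edge dual to $w_jb$ are the midpoints of $b\,b^\dagger_{j,+}$ and $b\,b^\dagger_{j,-}$, so $K_\Tt(w_j,b)=\Tt(v_2)-\Tt(v_1)=\tfrac12(b^\dagger_{j,+}-b^\dagger_{j,-})$. Plugging this into the formula for $m_j$, the common factor $\tfrac12$ cancels in numerator and denominator and we recover precisely the definition~\eqref{eq:def_of_weight_on_Gammak} of $\mathrm{weight}(bb_j)$.

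The main obstacle will be the bookkeeping in the first step: while each individual face computation is routine, one must verify carefully that the global additive normalisation of $\bar\Oo$ can be chosen so that the collapsed faces land exactly at the Delaunay vertices (rather than at some translate of them). This hinges on tracking how the constants propagate across white faces around each vertex of $\Tt$ and using that the relevant orthogonality (Voronoi edges perpendicular to the dual Delaunay edges) is encoded precisely by the identity $\bar\eta_w^2=-\bar\eta_{b^\dagger}^{\,2}\bar\eta_{b}^{\,2}\cdot\mathrm{sgn}$ forced by the Kasteleyn condition. Once this is pinned down, both the planar-graph identification and the matching of transition rates with~\eqref{eq:def_of_weight_on_Gammak} follow by direct computation.
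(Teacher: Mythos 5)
Your overall strategy---compute the action of $\Tt+\bar\Oo$ on each colour class of black face, determine which class collapses, and then match the degenerate-vertex transition rates against the weight formula~\eqref{eq:def_of_weight_on_Gammak}---is exactly what the paper intends; the paper itself only cites direct computations from~\cite[Section~8.1]{CLR1}.

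However, step 1 has an internal inconsistency that you do not notice, and it propagates into step 2. You correctly compute that $(\Tt+\bar\Oo)$ scales a black face $b$ by $1+\eta_b^2$. With the prescribed values $\eta_b=1$ for $b\in\Gamma$ and $\eta_b=i$ for $b\in\Gamma^\dagger$, this factor is $2$ for Delaunay vertices and $0$ for circumcenters, so the faces that degenerate are the $\Gamma^\dagger$-faces---precisely the triangles $\Tt(b^\dagger)$ with vertices at the midpoints of the three radius edges around a circumcenter $b^\dagger$---and their forced collapse points (by consistency with the adjacent rescaled $\Gamma$-faces, which scale by $2$ around the Delaunay vertices) are the circumcenters $b^\dagger$, not Delaunay vertices. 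A few lines later you nevertheless assert that the collapsed faces land at Delaunay vertices and that the ``rescaled triangular faces whose vertices in $\Tt$ are midpoints of the three radius edges at a circumcenter'' map onto Delaunay triangles. These are exactly the $\Gamma^\dagger$-faces, which by your own formula collapse rather than rescale---a contradiction. With the paper's choice of $\eta$, the planar graph $\Tt+\bar\Oo$ is the Voronoi diagram $\Gamma^\dagger$, not $\Gamma$; to land on the Delaunay triangulation you need the $\Gamma$-faces to be the degenerate ones, i.e., $1-\bar\alpha^2\eta_b^2=0$ with $\eta_b=1$, which forces $\bar\alpha^2=1$ and hence the T-graph $\Tt-\bar\Oo$. (Equivalently, one swaps the roles of $1$ and $i$ in the definition of $\eta$, which amounts to a global rotation of $\Oo$; there appears to be a sign discrepancy between Section~\ref{intro_example:alpha_and_temperley} and the statement of the lemma, which a proof has to resolve explicitly rather than silently absorb.) Your step 2 then opens with ``each Delaunay vertex $b$ of $\Gamma$ arises from a degenerate black face'', which contradicts what your step 1 actually established; once the sign is repaired so that the $\Gamma$-faces degenerate, your computation $K_\Tt(w_j,b)=\tfrac12(b^\dagger_{j,+}-b^\dagger_{j,-})$ and the cancellation of the $\tfrac12$ in $m_j$ are correct and reproduce~\eqref{eq:def_of_weight_on_Gammak}.
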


From Lemma~\ref{lemma:Gamma_is_Tgraph} we see that the random walk on $\Gamma$ coincides with the random walk on the T-graph $\Tt + \bar{\Oo}$, hence we can use the machinery from~\cite{CLR1} to work with it. We immediately get the following

\begin{cor}
  \label{cor:conv_to_BM_full-plane}
  Let $\lambda>0$ be fixed and $\delta_k\to 0$ be a sequence of positive numbers. Let $\Gamma_k$ be a Delaunay triangulation of the plane constructed as above with the given $\lambda$ and $\delta=\delta_k$. Let $b_0^k$ be a sequence of vertices of $\Gamma_k$ approximating $0\in \CC$. Let $X_t^{b_0^k}$ be the continuous time random walk on $\Gamma_k$ started at $b_0^k$. Then any subsequence of $X_t^{b_0^k}$ converging in the Skorokhod topology converges to $B_{\phi(t)}$, where $B_t$ is the standard Brownian motion on $\CC$ started at the origin and $\phi$ is a random continuous increasing function such that $\EE\phi(t) = t$ for each $t$.
\end{cor}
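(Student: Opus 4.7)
The plan is to use Lemma~\ref{lemma:Gamma_is_Tgraph} to transfer the problem to the T-graph setting and then apply the regularity theory for T-graph random walks from~\cite{CLR1}. By this lemma, the random walk $X_t^{b_0^k}$ on $\Gamma_k$ is identified with the random walk on the T-graph $\Tt_k + \bar{\Oo}_k$ started at $b_0^k$, normalized so that $\Tr \Var(X_t^{b_0^k}) = t$ (Remark~\ref{rem:variation_of_Xt}). Since the t-embeddings $\Tt_k$ are weakly uniform with $O(\delta_k)$-small origami, the setup falls squarely under the regularity theory developed in~\cite{CLR1}.

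For tightness in Skorokhod topology, note that $X_t^{b_0^k}$ is a continuous-time martingale whose jumps are of size $O(\delta_k)$ (by weak uniformity of $\Tt_k$ together with~\eqref{eq:balls_on_temb_vs_on_Tgraphs}). Combining this with the uniform crossing estimates for the T-graph random walk~\cite[Lemma~6.8]{CLR1}, one obtains the required modulus of continuity bounds uniformly in $k$ on any compact set. Along any convergent subsequence, the limit $M_t$ is therefore a continuous planar martingale starting at the origin, and the normalization $\Tr \Var(X_t^{b_0^k}) = t$ passes to the limit to give $\Tr \langle M \rangle_t = t$.

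The main step is to identify $M$ as a conformal martingale, i.e. isotropic in its two components. For this I would use the principle, established in~\cite{CLR1}, that in the $O(\delta)$-small origami regime classical harmonic functions approximate T-graph harmonic functions. Concretely, for any $C^2$ function $h : U \to \RR$ on an open subset $U \subset \CC$ with $\Delta h \equiv 0$, the restriction of $h$ to vertices of $\Tt_k + \bar{\Oo}_k$ inside $U$ is T-graph harmonic up to an error $o_k(1)$, as $k \to \infty$. The key input here is that, under small origami, t-holomorphic functions approximate classically holomorphic ones (Theorem~\ref{thmas:parametrix} and the results of Section~\ref{subsec:circle_pattern}), and via the correspondence of Lemma~\ref{lemma:Dharm=holom} this transfers to the T-graph harmonic setting. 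Consequently, $h(X_t^{b_0^k})$ is an approximate martingale for every such $h$, which by standard arguments implies that $h(M_t)$ is a genuine local martingale in the limit. Since this holds for all classically harmonic $h$ (in particular for $\Re z$, $\Im z$, $\Re z^2$, $\Im z^2$), the limiting diffusion $M_t$ is conformal.

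Finally, by the Dambis--Dubins--Schwarz theorem for planar conformal martingales, every continuous planar conformal martingale $M$ starting at $0$ has the representation $M_t = B_{\phi(t)}$ for a standard Brownian motion $B$ on $\CC$ and a continuous non-decreasing adapted process $\phi$, and $\Tr \langle M\rangle_t = t$ yields $\EE \phi(t) = t$. The hard part will be the quantitative step in identifying the limit as conformal: one must rigorously control, uniformly on compacts, the discrepancy between discrete harmonicity on the T-graph $\Tt_k + \bar{\Oo}_k$ and classical harmonicity in $\CC$, including for the function $z^2$ whose approximate harmonicity is what forces the off-diagonal entries of $\langle M \rangle$ to vanish and the two diagonal entries to coincide. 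This should follow by combining Lemma~\ref{lemma:approx_of_dbar} with the Harnack- and H\"older-type regularity estimates (Lemmas~\ref{lemma:Harnak_on_T-graph}--\ref{lemma:Boundedness_of_white_values}) applied at the appropriate scales.
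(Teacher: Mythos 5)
Your overall strategy—pass to the T-graph via Lemma~\ref{lemma:Gamma_is_Tgraph}, get tightness and continuity of the limit from the time normalization, show harmonic functions are martingales of the limit, and conclude via a Dambis--Dubins--Schwarz--type argument—agrees with the paper's. But the crucial middle step in your write-up has a genuine gap, and the paper takes a different route precisely to avoid it.

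You assert that for any classically harmonic $h$ (in particular $\Re z^2,\Im z^2$), the restriction of $h$ to vertices of the T-graph $\Tt_k + \bar{\Oo}_k$ is T-graph harmonic up to an error $o_k(1)$. This is not a result in~\cite{CLR1} nor in this paper, and it does not follow from the tools you cite. Lemma~\ref{lemma:approx_of_dbar} gives a Taylor expansion for the discrete Cauchy--Riemann operator $\sum_b K_\Tt(w,b)\vphi(b)$ on the t-embedding; it does not give an estimate for the T-graph random walk generator $\sum_{v'}q(v\to v')(h(v')-h(v))$ applied to a smooth harmonic $h$. Lemma~\ref{lemma:Dharm=holom} is a purely algebraic identity (derivatives of T-graph harmonic functions are t-holomorphic) and provides no quantitative transfer. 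The transition rates $q(v\to v')$ involve the full irregular local geometry of the T-graph, and turning the t-embedding estimate into a T-graph generator estimate requires an argument that is simply not present. You yourself flag this as "the hard part"; it is not filled.

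The paper circumvents this by reversing the direction of approximation. Instead of showing classical harmonic functions are approximately T-graph harmonic, it constructs functions that are \emph{exactly} T-graph harmonic and shows they approximate classical harmonic functions. Concretely, for $\lambda$ with $|\lambda|>r$, the discrete holomorphic function $K_\Tt^{-1}(\cdot,\lambda)$ (Proposition~\ref{prop:full_plane_kernel}) integrates, via Lemma~\ref{lemma:primitive_of_f}, to a function $H_\lambda$ on T-graph vertices that is an exact martingale of the T-graph walk, and Theorem~\ref{thmas:parametrix} guarantees $H_\lambda$ converges to $\log|z-\lambda|$ as $k\to\infty$. The martingale property of $H_\lambda(X_t^{b_0^k})$ holds exactly at each $k$, and the quantitative convergence passes it to the limit to yield that $\log|X_t-\lambda|$ (stopped on exit from $B(0,r)$) is a martingale. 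Since finite linear combinations of $\{\log|\cdot-\lambda|\}_{|\lambda|>r}$ are dense among harmonic functions on $B(0,r)$, all harmonic functions of the limit process are martingales; equivalently, the exit distribution from any disc is uniform on its boundary circle, and the time normalization \eqref{eq:CBM1} then gives the desired Brownian representation.

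To fix your argument you would either need to (i) supply the missing quantitative estimate that the T-graph generator applied to smooth harmonic functions is $o(1)$—a separate argument on its own—or (ii) adopt the paper's route via exact discrete harmonic approximants built from $K_\Tt^{-1}$ and Lemma~\ref{lemma:primitive_of_f}.
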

\begin{proof}
  Recall that by Lemma~\ref{lemma:Gamma_is_Tgraph} $X_t^{b_0^k}$ is the random walk on the T-graph $\Tt + \bar{\Oo}$. From the time normalization of $X_t^{b_0^k}$ (see Remark~\ref{rem:variation_of_Xt}) it is easy to deduce that any subsequential limit of $X_t^{b_0^k}$ has continuous trajectories almost surely. Denote by $X_t$ some subsequential limit; we have
  \begin{equation}
    \label{eq:CBM1}
    \Var\Tr X_t^2 = t.
  \end{equation}
  Given $r>0$ let $\tau_r$ be the first time $X_t$ exits the disc $B(0,r)$. We claim that for each harmonic function $h$ in $B(0,r)$ continuous up to the boundary, the process $h(X_{t\wedge \tau_r})$ is a martingale with respect to the filtration generated by $X_t$. To prove it, fix a $\lambda\in \CC, |\lambda|>r,$ and consider the function $h_\lambda(z) = \log|z-\lambda|$. Applying Lemma~\ref{lemma:primitive_of_f} to the inverting kernel from Theorem~\ref{thmas:parametrix} and sending $k$ to $+\infty$ we find out that $h_\lambda(X_{t\wedge \tau_r})$ is a martingale for any $\lambda$. Given an arbitrary harmonic function $h$ on $B(0,r)$ one can construct a sequence of finite linear combinations of $h_{\lambda}'s$ approximating $h$ in the topology of uniform convergence on compacts of $B(0,r)$. This implies that $h(X_{t\wedge \tau_r})$ is a martingale.

  The discussion above implies the following. Let $\tau$ be an arbitrary stopping time and $\tau_\eps = \inf\{ t>\tau\ \mid\ |X_\tau - X_t|\geq \eps \}$. Then, conditioned on $\tau$, the exist point $X_{\tau_\eps}$ is uniformly distributed on the circle $\partial B(X_\tau, \eps)$. One can apply the same arguments as above to the process $X_{\tau_\eps + t}$ conditioned on $X_{\tau_\eps}$ and prove that its exit point from $B(X_{\tau_\eps},\eps)$ is again uniformly distributed on the circle. Repeating this arguments we get a discrete process $X_0,X_{\tau_\eps}, X_{\tau_\eps + \tau_\eps^{(1)}},\dots$, which converges to the standard Brownian motion up to a random time change $\phi$. Equation~\eqref{eq:CBM1} ensures that $\phi$ is continuous and satisfies $\EE\phi(t) = t$.
\end{proof}

Let us now deduce the ``invariance principle'' assumption from this corollary and by taking a suitable subsequence. Recall that we have a multivalued mapping $\Tt$ on $\Sigma$ which is locally one-to-one outside conical singularities and a branched double cover at conical singularities. Using the time normalization of $X^{b_k}_t$ it is easy to deduce the tightness of $\Tt(X^{b_k}_t)$ in the Skorokhod topology, which in fact implies the tightess on $X^{b_k}_t$ itself. Let $X_t$ be any subsequential limit. Uniform crossing estimates imply that $X_t$ almost surely never visits conical singularities. Now, Corollary~\ref{cor:conv_to_BM_full-plane} implies that $X_t = B_{\phi(t)}$ where $B$ is the Brownian motion on $\Sigma$ and $\phi$ is a random continuous increasing function s.t. $\EE \phi(t) = t$ for any $t$.

\begin{appendix}
\section{}%
\label{appendix:apA}

The main goal of this section is to prove the existence of the locally flat metric $ds^2$ promised in Section~\ref{intro_example:torus}, and to fill in the details missing in Section~\ref{sec:The Riemann surface: continuous setting}. This is a technical task, using quite a lot of machinery from the classical theory of Riemann surfaces. For the sake of completeness we decided to make a brief introduction into the necessary parts of this theory. If the reader does not need such an introduction, then we suggest him to jump to Section~\ref{subsec:locally_flat_metric_existence} skipping previous subsections. All the facts stated before this section are classical and can be found in the standard literature such as~\cite{Johnson},~\cite{GriffitsHarris},~\cite{MumfordTata1},~\cite{Fay},~\cite{ImayoshiTaniguchi}.

\subsection{Sheaves and vector bundles}
\label{subsec:sheaves_and_vectore_bundles}

In what follows it will be convenient for us to use the language of sheaves. Thus, we briefly introduce this notion and related things.

Given an arbitrary category $\Cc$, a \emph{presheaf} of objects of this category on a topological space $\Sigma$ is a contravariant functor from the category of open subsets of $\Sigma$ to $\Cc$. In other words, to specify a presheaf $\Ff$ we have to choose an object $\Gamma(U,\Ff)\in \mathrm{Ob}(\Cc)$ for each open $U\subset \Sigma$ and a morphism $\vphi_{U,V}: \Gamma(U,\Ff)\to \Gamma(V,\Ff)$ for each pair $U\supset V$ such that
\[
  \vphi_{U,U} = \Id,\qquad \vphi_{U,V}\circ\vphi_{V,W} = \vphi_{U,W},\quad \text{if }V\supset W.
\]
The object $\Gamma(U,\Ff)$ is called the space of sections of $\Ff$ over $U$ and the object $\Gamma(\Sigma,\Ff)$ is called the space of global sections. A presheaf $\Ff$ is called a \emph{sheaf} if for any set of indices $I$ and a cover of a set $U$ by sets $\{ U_i \}_{i\in I}$ one has
\begin{itemize}
  \item if $u,v\in \Gamma(U,\Ff)$ and for any $i\in I$ we have $\vphi_{U,U_i}(u) = \vphi_{U,U_i}(v)$, then $u=v$;
  \item if we are given a collection $u_i\in \Gamma(U_i,\Ff)$ such that for any $i,j\in I$ the compatibility relation $\vphi_{U_i,U_i\cap U_j}(u_i) = \vphi_{U_j,U_i\cap U_j}(u_j)$ holds, then there exists an $u\in \Gamma(U,\Ff)$ such that $u_i = \vphi_{U,U_i}(u)$ for any $i\in I$.
\end{itemize}
Note that given a category $\Cc$, the sheaves on $\Sigma$ themself form a category, where the morphisms are defined in a natural way.

If $\Sigma$ is a Riemann surface, then there is a certain amount of natural sheaves on it. First to come is the \emph{structure sheaf} $\Oo_\Sigma$ with spaces of sections given by
\[
  \Gamma(U, \Oo_\Sigma) = \{ f: U\to \CC\ \mid\ f\text{ is holomorphic} \}.
\]
In this case the underlying category may be taken to be the category of commutative rings, or algebras over $\CC$. Next, given a holomorphic vector bundle $E\to \Sigma$ we can introduce the sheaf $\Ee$ of its sections:
\[
  \Gamma(U,\Ee) = \{ f: U\to E\vert_U\ \mid\ f\text{ is a holomorphic section of $E$ over $U$} \}.
\]
Note that the space $\Gamma(U,\Ee)$ is a module over the ring $\Gamma(U, \Oo_\Sigma)$; in this case the sheaf $\Ee$ is called a sheaf of modules over the structure sheaf and the underlying category is the category of modules over commutative rings.

The important property of the sheaf $\Ee$ is that the vector bundle $E$ itself can be reconstructed from it. For, we say that a sheaf $\Ee$ of modules over the structure sheaf is locally free if any $p\in \Sigma$ has a neighborhood $U$ such that $\Ee\vert_U$ is isomorphic to a direct sum of $r$ copies of $\Oo_U$. Given that $\Sigma$ is connected, we see that the number $r$ does not depend on the point $p$. We say that $\Ee$ is a locally free sheaf of rank $r$ in this case. The following proposition is standard

\begin{prop}
  \label{prop:bundles_are_locally_free_sheaves}
  A sheaf $\Ee$ of modules over the structure sheaf is locally free of rank $r$ if and only if it is isomorphic to the sheaf of sections of some holomorphic vector bundle $E$ of rank $r$. Moreover, the correspondence between isomorphism classes of vector bundles and locally free sheaves is one-to-one and functorial in $\Sigma$.
\end{prop}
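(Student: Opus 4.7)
The plan is to construct explicit functors in both directions and verify that they give an equivalence of categories. The direction from vector bundles to locally free sheaves is essentially tautological: given a holomorphic rank-$r$ bundle $E\to\Sigma$, the assignment $U\mapsto\{\text{holomorphic sections of }E|_U\}$ defines a sheaf $\Ee$ of $\Oo_\Sigma$-modules, and the local triviality of $E$ (a chart $E|_U\cong U\times\CC^r$) immediately produces a local isomorphism $\Ee|_U\cong \Oo_U^{\oplus r}$, so $\Ee$ is locally free of rank $r$.

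For the reverse direction I would start with a locally free sheaf $\Ee$ of rank $r$, choose an open cover $\{U_i\}_{i\in I}$ of $\Sigma$ together with trivializations $\phi_i\colon\Ee|_{U_i}\xrightarrow{\sim}\Oo_{U_i}^{\oplus r}$, and compute the transition data $\phi_j\circ\phi_i^{-1}$ over $U_i\cap U_j$. Each such composition is an $\Oo$-linear automorphism of $\Oo_{U_i\cap U_j}^{\oplus r}$, hence multiplication by an invertible matrix $g_{ij}$ with entries in $\Gamma(U_i\cap U_j,\Oo_\Sigma)$, i.e.\ a holomorphic map $g_{ij}\colon U_i\cap U_j\to \GL_r(\CC)$. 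The cocycle condition $g_{ij}g_{jk}=g_{ik}$ on triple overlaps is automatic. One then glues the trivial bundles $U_i\times\CC^r$ along $(p,v)\sim(p,g_{ij}(p)v)$ to obtain a holomorphic vector bundle $E$, and the $\phi_i$ patch into a global isomorphism between $\Ee$ and the sheaf of sections of $E$ (the cocycle relation being precisely what is needed for the local data to glue).

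The remaining points --- independence of $E$ from the choices of cover and trivialization, the naturality of the isomorphism $\Ee\cong$ ``sections of $E$'', and functoriality at the level of morphisms --- I would package uniformly by identifying both isomorphism classes of rank-$r$ bundles and of rank-$r$ locally free sheaves with the pointed set $\check{H}^1(\Sigma,\GL_r(\Oo_\Sigma))$. Changing a local trivialization by $h_i\colon U_i\to\GL_r(\CC)$ replaces $g_{ij}$ by $h_i g_{ij}h_j^{-1}$, which is precisely the Čech coboundary relation, so isomorphism classes on both sides match with cohomology classes. A morphism of locally free sheaves, read in trivializing frames, becomes a collection of holomorphic matrix-valued maps intertwining the cocycles on either side, which is exactly the data of a bundle morphism; this establishes functoriality.

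The main obstacle, though purely technical rather than conceptual, is the bookkeeping that comes with comparing constructions performed relative to different covers: one has to pass to a common refinement and check that the two resulting cocycles agree up to coboundary. Working throughout in the Čech-cohomological language subsumes this obstacle into the standard fact that $\check{H}^1(\Sigma,\GL_r(\Oo_\Sigma))$ is independent of the choice of a sufficiently fine cover, so no analytic input is required beyond the routine observation that a holomorphic map $U\to\GL_r(\CC)$ is the same datum as an invertible $r\times r$ matrix of elements of $\Gamma(U,\Oo_\Sigma)$.
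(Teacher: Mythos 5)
Your proof is correct, but it takes a genuinely different route from the paper. The paper's argument is \emph{pointwise and intrinsic}: given a locally free sheaf $\Ee$, it defines the fiber at $p$ directly as the residue $E_p=\Gamma(U,\Ee)/\Ii_{U,p}\cdot\Gamma(U,\Ee)$ for a small neighborhood $U$ and the ideal $\Ii_{U,p}$ of functions vanishing at $p$, then observes that the resulting family of $\CC$-vector spaces assembles into a holomorphic bundle. Because this construction involves no choice of cover or trivialization, independence of choices and functoriality in $\Sigma$ are essentially built in. Your argument instead goes through the classical \emph{cocycle/transition-function} construction: trivialize $\Ee$ over a cover, extract the $\GL_r(\Oo_\Sigma)$-valued cocycle $g_{ij}$, glue, and then identify isomorphism classes on both sides with $\check{H}^1(\Sigma,\GL_r(\Oo_\Sigma))$. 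This buys you an explicit classification of bundles by a cohomology set, which is often useful downstream, but at the price of the refinement bookkeeping you flag; the paper's fiber-quotient construction avoids that bookkeeping entirely by never choosing a cover in the first place. Both are standard and complete, and either would serve the paper's purpose; the paper opts for the choice-free route because it makes the ``one-to-one and functorial'' assertion immediate rather than a separate verification.
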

\begin{proof}
  Let us sketch the proof of this proposition. The fact that the sheaf of sections of any vector bundle is a locally free sheaf of the corresponding rank is straightforward. Conversely, let $\Ee$ be a locally free sheaf of rank $r$. Take a $p\in \Sigma$ and a small neighborhood $U$ of $p$, let $\Ii_{U,p}\subset \Oo_\Sigma(U)$ be the ideal consisting of functions vanishing at $p$. Then it is easy to see that the vector space quotient
  \[
    E_p = \frac{\Gamma(U, \Ee)}{\Ii_{U,p}\cdot \Gamma(U,\Ee)}
  \]
  does not depend on the choice of $U$ (provided $U$ is small enough) and is a $\CC$-vector space of rank $r$. It is straightforward to show that the family of vector spaces $E_p,\ p\in \Sigma,$ form a vector bundle of rank $r$ and $\Ee$ is isomorphic to the sheaf of sections of it.
\end{proof}

From now on we will not make a difference between holomorphic vector bundles and locally free sheaves, abusing the notation slightly. 

Let $D = \sum_{i = 1}^d m_i\cdot p_i$ be an arbitrary divisor on $\Sigma$; here $p_1,\dots, p_d\in \Sigma$ are some points and $m_1,\dots, m_d\in \ZZ$. Let $\Oo_\Sigma(D)$ be the sheaf given by
\begin{equation}
  \label{eq:def_of_Oo(D)}
  \Gamma(U, \Oo_\Sigma(D)) = \{ f\text{ --- meromorphic on $U$ and $\div f \geq - D\cap U$}  \}.
\end{equation}
It is straightforward to see that $\Oo_\Sigma(D)$ is a locally free sheaf of rank 1, that is $\Oo_\Sigma(D)$ is a line bundle on $\Sigma$. One can easily check that for any two divisors $D_1$ and $D_2$ we have $\Oo_\Sigma(D_1)\otimes \Oo_\Sigma(D_2)\cong \Oo_\Sigma(D_1 + D_2)$, where the tensor product means the tensor product of the corresponding line bundles. In particular, there is a natural isomorphism $\Oo_\Sigma(-D)\cong\Oo_\Sigma(D)^\vee$, where $\Oo_\Sigma(D)^\vee$ is the sheaf of sections of the dual (i.e. obtained by taking the dual vector spaces fiber-wise) line bundle. If $D_1,D_2$ are two divisors such that $D_1\leq D_2$, then there is a natural non-zero morphism $\Oo_\Sigma(D_1)\to \Oo_\Sigma(D_2)$. Note that this morphism is injective as the morphism of sheaves (i.e. sends non-zero sections to non-zero sections), but in general is not injective on the level of line bundles (i.e. vanishes on some fibers). In particular, we have a natural morphism $\Oo_\Sigma\to \Oo_\Sigma(D)$ for any positive $D$. 

Another natural sheaf is the canonical sheaf $K_\Sigma$ of $\Sigma$. It is defined by
\[
  \Gamma(U, K_\Sigma) = \{ \omega\text{ --- holomorphic $(1,0)$-form on U} \}.
\]
Again, it is clear that $K_\Sigma$ is a locally free sheaf of rank 1. As a line bundle, it coincides with the holomorphic cotangent bundle of the surface $\Sigma$.

\subsection{Quadratic forms over \texorpdfstring{$\ZZ/2\ZZ$}{Z/2Z}}
\label{subsec:quadratic_forms}

Let $V$ be a vector space over $\ZZ/2\ZZ$ of dimension $2g$ with a non-degenerate skew-symmetric (i.e. $a\cdot a = 0$) bilinear form. A function $q: V\to \ZZ/2\ZZ$ is called a quadratic form if it satisfies
\begin{equation}
  \label{eq:quadratic_relation}
  q(a+b) = q(a) + q(b) + a\cdot b
\end{equation}
for all $a,b\in V$. The space of quadratic forms on $V$ is an affine space over the dual space $V^\vee$: for each $q$ and $l\in V^\vee$ the function $q+l$ is again a quadratic form. In particular, if there are precisely $2^{2g}$ quadratic forms on $V$.

The symplectic group of $(V,\cdot)$ acts on the space of quadratic forms of $V$. It is well-known that two forms belong to the same orbit if and only if they have the same Arf invariant which is defined as follows. Fix a simplicial basis $A_1,\dots, A_g, B_1,\dots, B_g$ (so that $A_i\cdot A_j = B_i\cdot B_j = 0$ and $A_i\cdot B_j = \delta_{ij}$) and put
\begin{equation}
  \label{eq:def_of_Arf}
  \Arf(q) = \sum_{i = 1}^g q(A_i)q(B_i).
\end{equation}
Quadratic form $q$ is called \emph{even} if $\Arf(q) = 0$, otherwise it is called \emph{odd}. We have the following lemma:
\begin{lemma}
  \label{lemma:shifted_forms}
 Let $q_0$ be the quadratic form satisfying $q_0(A_i) = q_0(B_i) = 0$ for $i = 1,\dots, g$. Then for every other quadratic form $q$ there exists a $u\in V$ such that
 \[
   q(v) = q_0(v+u) + \Arf(q) = q_0(v+u) + q_0(u).
 \]
 for each $v\in V$.
\end{lemma}
\begin{proof}
  There exists a $u\in V$ such that $q(v) = q_0(v) + u\cdot v$ for each $v\in V$. Note that this can be rewritten as
\begin{equation}
  \label{eq:shifted_form}
  q(v) = q_0(v) + u\cdot v = q_0(u+v) + q_0(u)
\end{equation}
using~\eqref{eq:quadratic_relation}. Write $u = \sum_{i = 1}^g(a_iA_i + b_iB_i)$. Then we have
\begin{equation}
  \label{eq:Arf(q_u)}
  q_0(u) = \sum_{i = 1}^ga_ib_i = \Arf(q_u),
\end{equation}
where we used~\eqref{eq:quadratic_relation} to write the first equation and the relations $q(A_i) = b_i$ and $q(B_i) = a_i$ to write the second equation. A combination of~\eqref{eq:shifted_form} and~\eqref{eq:Arf(q_u)} proves the lemma.
\end{proof}

\subsection{Spin line bundles and quadratic forms on \texorpdfstring{$H_1(\Sigma, \ZZ/2\ZZ)$}{H1(Sigma, Z/2Z)}}
\label{subsec:spin_line_bundles}
Let $\Sigma$ be a smooth Riemann surface of genus $g$. A spin line bundle is, by definition, a holomorphic line bundle $\Ff\to \Sigma$ and an isomorphism $\beta:\Ff^{\otimes 2}\cong K_\Sigma$. Classically~\cite{MumfordThetaCharacteristics, Atiyah}, there are precisely $2^{2g}$ isomorphism classes of spin bundles on $\Sigma$, classified by quadratic forms in $H_1(\Sigma, \ZZ/2\ZZ)$. In this section we review this correspondence in some details.

We begin with the relation between quadratic forms and spin structures established by Johnson~\cite{Johnson}. In what follows we will consider quadratic forms on $V = H_1(\Sigma, \ZZ/2\ZZ)$ taken with the intersection product (see Section~\ref{subsec:quadratic_forms}). In this case $V^\vee = H^1(\Sigma, \ZZ/2\ZZ)$.

Let $UT\Sigma$ be the \emph{unit tangent bundle}, which is obtained by removing the zero fiber from the total space of the tangent bundle $T\Sigma$. Let $z\in H_1(UT\Sigma, \ZZ/2\ZZ)$ denote the non-zero element in the kernel of the natural map $H_1(UT\Sigma, \ZZ/2\ZZ)\to H_1(\Sigma, \ZZ/2\ZZ)$.
\begin{defin}
  \label{defin:def_of_spin_structure}
  A class $\xi\in H^1(UT\Sigma, \ZZ/2\ZZ)$ is a \emph{spin structure} on if $\xi(z) = 1$.
\end{defin}
Clearly, the set of spin structures is affine over $H^1(\Sigma, \ZZ/2\ZZ)$ (identified with its natural image in $H^1(UT\Sigma, \ZZ/2\ZZ)$). Note that any loop in $UT\Sigma$ corresponds to a loop on $\Sigma$ with a vector field along it. Given a smooth oriented loop $\gamma$ on $\Sigma$ denote by $\tilde\gamma$ the loop in $UT\Sigma$ corresponding to $\gamma$ with the tangent frame on it. In~\cite{Johnson}, the following theorem is proven:

\begin{thmas}[Johnson]
  \label{thmas:Johnson_spin_structures}
  Given a spin structure $\xi\in H^1(UT\Sigma, \ZZ/2\ZZ)$ and a smooth simple loop $\gamma$ on $\Sigma$ define $q_\xi(\gamma) = \xi(\tilde\gamma)+1\mod 2$. Then $q$ depends only on the homology of $\gamma$ and extends to a quadratic form $q_\xi: H_1(\Sigma,\ZZ/2\ZZ)\to \ZZ/2\ZZ$. The correspondence $\xi\mapsto q_\xi$ is an affine isomorphism between the set of spin structures and quadratic forms on $H_1(\Sigma, \ZZ/2\ZZ)$. 
\end{thmas}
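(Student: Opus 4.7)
The plan is to pass through the Gysin description of $H_*(UT\Sigma,\ZZ/2\ZZ)$, check well-definedness of $q_\xi$ on homology classes represented by simple loops, verify the quadratic relation via a local surgery at transverse intersections, and finally deduce the affine isomorphism by an equivariance and counting argument. This is of course Johnson's original line of reasoning; I would just make the individual ingredients explicit.

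First I would analyse the topology of the unit tangent circle bundle $\pi\colon UT\Sigma\to \Sigma$. Since the Euler number $\chi(\Sigma)=2-2g$ is even, the $\ZZ/2\ZZ$ Gysin sequence collapses to
\begin{equation*}
  0\to \ZZ/2\ZZ\cdot z\to H_1(UT\Sigma,\ZZ/2\ZZ)\xrightarrow{\pi_*} H_1(\Sigma,\ZZ/2\ZZ)\to 0,
\end{equation*}
so any two framed lifts of loops in the same $\ZZ/2\ZZ$-homology class on $\Sigma$ differ by a multiple of $z$. As $\xi(z)=1$ by the spin-structure normalization, well-definedness of $q_\xi$ on homology reduces to showing that for two smooth simple loops $\gamma_0,\gamma_1$ in the same class the framed lifts satisfy $\tilde\gamma_0=\tilde\gamma_1$ in $H_1(UT\Sigma,\ZZ/2\ZZ)$. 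The key input is a Whitney rotation computation: a contractible simple loop has framed lift equal to $z$, so $q_\xi$ vanishes on the trivial class; more generally any two homologous embedded loops are related by a finite sequence of regular isotopies and birth/death of innermost disks, and the parity of the $z$-contribution from each elementary move is controlled by evenness of the relevant Euler characteristics.

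The central computation is the quadratic relation. Given transverse simple loops $\gamma_1,\gamma_2$ meeting in $k$ points, I would perform the oriented smoothing at each intersection to produce an embedded (after a small push-off) representative $\gamma$ of $[\gamma_1]+[\gamma_2]$. In a Euclidean chart at each crossing, a short local model shows that the unit tangent of the resolved curve rotates by an additional half-turn relative to the naive concatenation, so after summing over the $k$ crossings one obtains
\begin{equation*}
  \tilde\gamma \;=\; \tilde\gamma_1 + \tilde\gamma_2 + k\cdot z \quad\text{in}\quad H_1(UT\Sigma,\ZZ/2\ZZ).
\end{equation*}
Since $k\equiv [\gamma_1]\cdot[\gamma_2]\pmod 2$, applying $\xi$ and accounting for the $+1$ normalization in the definition of $q_\xi$ yields $q_\xi([\gamma_1]+[\gamma_2])=q_\xi([\gamma_1])+q_\xi([\gamma_2])+[\gamma_1]\cdot[\gamma_2]$. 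To pass from simple loops to all of $H_1(\Sigma,\ZZ/2\ZZ)$ one uses the classical fact that every class is representable by an embedded (possibly disconnected) multicurve, and that disjoint embedded curves have vanishing intersection, so the quadratic identity unambiguously defines $q_\xi$ on the full homology group.

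For the correspondence $\xi\mapsto q_\xi$, the definition immediately gives that for $\eta\in H^1(\Sigma,\ZZ/2\ZZ)\subset H^1(UT\Sigma,\ZZ/2\ZZ)$ one has $q_{\xi+\eta}(\gamma)=q_\xi(\gamma)+\eta([\gamma])$, so the map is affine and equivariant for the natural free and transitive actions of $H^1(\Sigma,\ZZ/2\ZZ)$ on both the set of spin structures and the set of quadratic forms. Since both sets have cardinality $2^{2g}$, equivariance together with non-emptiness forces bijectivity. The main obstacle is the crossing computation in the quadratic relation: verifying that a transverse self-resolution contributes exactly one factor of $z$ requires unambiguous tracking of the tangent rotation in a local chart, and the homological-only nature of the argument (where one only needs parities modulo $2$) is what ultimately makes the computation clean despite the absence of a canonical framing of $T\Sigma$ globally.
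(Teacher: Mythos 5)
The paper does not contain its own proof of this theorem: it states the result and attributes it to Johnson's original paper, cited as \cite{Johnson}. So the relevant question is whether your reconstruction of Johnson's argument is correct. The overall strategy (Gysin sequence, well-definedness on homology, local crossing computation, affine equivariance and counting) is indeed Johnson's, but the central step as you wrote it contains a genuine error.

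Your crossing computation claims that the oriented resolution of a transverse intersection contributes a half-turn to the framed lift, so that $\tilde\gamma=\tilde\gamma_1+\tilde\gamma_2+k\cdot z$. This is not what happens. In a local chart with the two strands along the coordinate axes, the oriented smoothing creates two arcs: on one the tangent rotates by $+\pi/2$, on the other by $-\pi/2$, so the net contribution to the lift is zero. (One can also confirm this globally: on the flat torus, the $(1,1)$-geodesic obtained by smoothing the meridian-longitude crossing has constant tangent direction, so its framed lift is $\tilde\alpha+\tilde\beta$ with no $z$-correction.) With your $+kz$ claim together with the $+1$ normalization, the arithmetic actually yields $q_\xi([\gamma_1]+[\gamma_2])=q_\xi([\gamma_1])+q_\xi([\gamma_2])+[\gamma_1]\cdot[\gamma_2]+1$, which is \emph{not} the quadratic relation, so your own computation is internally inconsistent. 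What Johnson's lemma actually says is that resolving a transverse crossing leaves $\tilde\gamma$ unchanged in $H_1(UT\Sigma,\ZZ/2\ZZ)$; the intersection term instead arises from the \emph{component count}. One must define $q_\xi$ on a 1-submanifold $C$ with $m$ components by $q_\xi(C)=\xi(\tilde C)+m$, and observe that resolving the $k$ transverse intersections of $\gamma_1\sqcup\gamma_2$ changes the number of components by $k$ in parity; then $q_\xi([\gamma_1]+[\gamma_2])=\xi(\tilde\gamma_1+\tilde\gamma_2)+m(\gamma)=q_\xi([\gamma_1])+q_\xi([\gamma_2])+k$ with $k\equiv[\gamma_1]\cdot[\gamma_2]$. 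Similarly, your appeal to ``regular isotopies and birth/death of innermost disks'' for well-definedness is vague; the clean argument is that two disjoint homologous simple closed curves cobound an orientable subsurface $F$ with two boundary components, the framed lifts differ by $\chi(F)\cdot z$, and $\chi(F)$ is even. Once these two steps are repaired, the rest of your argument (the Gysin part, the equivariance and $2^{2g}$ counting) is fine.
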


The following proposition is folklore.

\begin{prop}
  \label{prop:topological_charact_of_spin_bundles}
  The following sets are in natural bijection:
  \[
    \xymatrix{
      {\begin{array}{c}
        \text{Isomorphism classes}\\
        \text{of spin line bundles }\Ff\to \Sigma 
      \end{array}}
      \ar@{<->}[d]^{F: \Ff^\vee \xrightarrow{\mathrm{diag}} (\Ff^\vee)^{\otimes 2}\xrightarrow{\cong} T\Sigma}\\
      {\begin{array}{c}
        \text{Isomorphism classes of double covers }F:V\to UT\Sigma\\
        \text{s.t. preimages of fibers are connected}
      \end{array}}
      \ar@{<->}[d]^{\xi \text{ is cohomology class of double cover}}\\
      {\text{The set of spin structures $\xi\in H^1(UT\Sigma, \ZZ/2\ZZ)$}}
      \ar@{<->}[d]^{q = \xi + 1}\\
      {\text{The set of quadratic forms $q$ on $H_1(\Sigma, \ZZ/2\ZZ)$}}
    }
  \]
\end{prop}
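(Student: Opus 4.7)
The proposition asserts three bijections in a diagram, and I will establish each in turn.

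Bijection (3) $\leftrightarrow$ (4) is precisely Johnson's theorem (Theorem~\ref{thmas:Johnson_spin_structures}) and requires no further argument. For (2) $\leftrightarrow$ (3), I will use the classical classification of topological double covers of a connected, locally nice space $X$ by elements of $H^1(X,\ZZ/2\ZZ)$, with the trivial cover corresponding to $0$; applied to $X = UT\Sigma$, a double cover has connected preimage over each circle fiber of $UT\Sigma \to \Sigma$ precisely when the associated class $\xi$ evaluates to $1$ on the fiber class $z$, which is the defining condition of a spin structure in Definition~\ref{defin:def_of_spin_structure}.

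For (1) $\leftrightarrow$ (2), the forward direction is the map indicated in the statement. Fix a Hermitian metric on $\Ff$, let $V$ be the associated unit circle bundle of $\Ff^\vee$, and take $F$ to be the composition $v \mapsto v\otimes v$, followed by the isomorphism $(\Ff^\vee)^{\otimes 2} \cong T\Sigma$ induced by $\beta$, followed by the deformation retraction onto $UT\Sigma$. On each fiber this is the squaring double cover $S^1 \to S^1$, so $F$ is a double cover with connected fiber preimages; the construction depends neither on the auxiliary Hermitian metric nor on the representative of the isomorphism class of $(\Ff,\beta)$. Conversely, given a double cover $F\colon V \to UT\Sigma$ with connected fiber preimages, the $S^1$-action on $UT\Sigma$ lifts uniquely to an $S^1$-action on $V$ intertwined with squaring on $S^1$. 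This endows $V \to \Sigma$ with the structure of an $S^1$-principal bundle, whose associated complex line bundle $L$ comes equipped with a topological isomorphism $L^{\otimes 2} \cong T\Sigma$. Promoting $L$ to a holomorphic line bundle compatible with the holomorphic structure on $T\Sigma$ and setting $\Ff = L^\vee$ yields a holomorphic spin bundle.

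To check that (1) $\to$ (2) is a bijection I will compare torsor structures: holomorphic spin bundles $(\Ff,\beta)$ form a torsor over the 2-torsion subgroup $J_2 \subset \mathrm{Pic}^0(\Sigma)$ under $\Ff \mapsto \Ff \otimes \Nn$ for $\Nn \in J_2$, while spin structures form a torsor under $H^1(\Sigma, \ZZ/2\ZZ)$. The Kummer sequence $0 \to \mu_2 \to \Oo_\Sigma^* \xrightarrow{(\cdot)^2} \Oo_\Sigma^* \to 0$ yields $J_2 \cong H^1(\Sigma, \ZZ/2\ZZ)$, and the forward map intertwines the two torsor actions through this isomorphism; since both sides are non-empty (a holomorphic spin bundle exists, for instance as a theta characteristic), bijectivity follows. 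The main obstacle to fill in is precisely this torsor comparison: one must verify that twisting by $\Nn \in J_2$ corresponds at the level of double covers to twisting by the image of $\Nn$ in $H^1(\Sigma, \ZZ/2\ZZ)$, which amounts to tracing the Kummer isomorphism through the explicit construction of $F$. Once this is done, everything else reduces either to standard covering space theory or to the cited theorem of Johnson.
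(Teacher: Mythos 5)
Your proof is correct and follows the same three-step decomposition of the diagram as the paper, with the same explicit maps. The main place you diverge is in how you verify that the map from spin line bundles to double covers is a bijection. The paper's proof sketch simply constructs the forward map $F\colon \Ff^\vee\smm 0\to UT\Sigma$ via the diagonal followed by $\beta^\vee$, and asserts that the inverse is obtained by lifting local trivializations of $UT\Sigma$ along the cover (so that the transition functions of $UT\Sigma$ get a coherent system of square roots, which produces the line bundle). You instead run a torsor comparison: isomorphism classes of $(\Ff,\beta)$ form a torsor over $J_2\cong H^1(\Sigma,\ZZ/2\ZZ)$ via the Kummer sequence, spin structures form a torsor over the same group, and an equivariant map of nonempty torsors is a bijection. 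This is cleaner than checking that the two explicit constructions are mutually inverse, and it isolates exactly where the content lies (compatibility of the two twisting actions). You correctly flag this compatibility as the point that still needs a direct verification; tracing the Kummer class through the transition-function description of the cover makes it a routine but not entirely trivial check. One minor simplification you could make: the paper takes $UT\Sigma = T\Sigma\smm 0$ and $U\Ff^\vee = \Ff^\vee\smm 0$, so the diagonal-plus-$\beta^\vee$ map lands directly in $UT\Sigma$ and the Hermitian metric and deformation retraction in your forward construction are unnecessary.
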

\begin{proof}
  Let us give descriptions of the bijections. Given a spin line bundle $\Ff\to \Sigma$ we define $F: U\Ff^\vee\to UT\Sigma$ as a composition of the diagonal map $\Ff^\vee\to (\Ff^\vee)^{\otimes 2}$ and the isomorphism between $(\Ff^\vee)^{\otimes 2}$ and the tangent bundle $T\Sigma$. Vise versa, given a double cover $F:V\to UT\Sigma$ such that the preimages of fibers are connected, one can reconstruct the spin line bundle $\Ff$ uniquely up to isomorphism. Given a double cover $F:V\to UT\Sigma$ we take $\xi\in H^1(UT\Sigma, \ZZ/2\ZZ)$ to be its cohomology class; this gives the bijection between spin structures and isomorphism classes of double covers such that preimages of fibers are connected. Finally, the correspondence between spin structures and quadratic forms is made by applying Theorem~\ref{thmas:Johnson_spin_structures}.

\end{proof}

Recall that given a holomorphic vector bundle $\Ll\to \Sigma$ we denote by $\Gamma(\Sigma, \Ll)$ the space of its holomorphic sections. The following theorem was proven by Johnson~\cite{Johnson} based on the results of Atiyah~\cite{Atiyah} and Mumford~\cite{MumfordThetaCharacteristics}:
\begin{thmas}
  \label{thmas:Arf_and_h0}
  Let $q$ be a quadratic form on $H_1(\Sigma, \ZZ/2\ZZ)$ and $\Ff$ be the corresponding spin line bundle on $\Sigma$. Then
  \[
    \Arf(q) = \dim \Gamma(\Sigma, \Ff) \mod 2.
  \]
\end{thmas}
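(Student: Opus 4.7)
The plan is to combine two classical ingredients: (i) $\dim \Gamma(\Sigma,\Ff) \bmod 2$ is a deformation invariant of pairs $(\Sigma,\Ff)$ subject to $\Ff^{\otimes 2} \cong K_\Sigma$, due to Mumford, and (ii) the topological type of $(\Sigma,\Ff)$, in particular $\Arf(q_\Ff)$, is already determined by the classification recorded in Proposition~\ref{prop:topological_charact_of_spin_bundles}. Since both sides of the asserted identity are then invariants of the deformation class of $(\Sigma,\Ff)$, one reduces to checking the equality on a convenient set of representatives.

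The first step is to verify the deformation invariance of $h^0(\Sigma,\Ff) \bmod 2$. Because $K_\Sigma \otimes \Ff^\vee \cong \Ff$, Serre duality gives a canonical isomorphism $H^1(\Sigma,\Ff) \cong H^0(\Sigma,\Ff)^\vee$, and the multiplication map $H^0(\Sigma,\Ff) \otimes H^0(\Sigma,\Ff) \to H^0(\Sigma,K_\Sigma)$ is symmetric. In a smooth family $\{(\Sigma_t,\Ff_t)\}_t$ of such pairs, the jump of $h^0$ at a special fibre is governed by the corank of a symmetric bilinear form on the space of nearby sections; symmetry forces this corank to be even, yielding the invariance modulo $2$. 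The invariance of $\Arf(q_\Ff)$ on the other side of the identity is automatic since $q_\Ff$ is a purely topological invariant.

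The second step is to reduce to a manageable class of representatives by degeneration to nodal curves. Any smooth pair $(\Sigma,\Ff)$ is deformation-equivalent to a nodal curve $\Sigma_1 \cup_p \Sigma_2$, or to an irreducible self-nodal curve obtained by identifying two points of a lower-genus surface, equipped with a theta characteristic compatible with the nodal structure. On such a degeneration both invariants are additive: $\Arf(q) = \Arf(q_1) + \Arf(q_2)$ because pinching a separating loop decomposes $H_1(\Sigma,\ZZ/2\ZZ)$ as an orthogonal direct sum with respect to the intersection pairing, while $h^0(\Sigma,\Ff) = h^0(\Sigma_1,\Ff_1) + h^0(\Sigma_2,\Ff_2)$ via the normalization short exact sequence on the nodal curve. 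Iterating, one reduces to a chain of components of genus zero and genus one.

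The base cases are elementary. On $\PP^1$ the unique theta characteristic is $\Oo_{\PP^1}(-1)$ with $h^0 = 0$, while $H_1(\PP^1,\ZZ/2\ZZ) = 0$ forces $\Arf = 0$. On an elliptic curve $E$ one has $K_E \cong \Oo_E$, so the four spin line bundles are $\Oo_E$ and the three non-trivial $2$-torsion line bundles, with $h^0(\Oo_E) = 1$ and $h^0(L) = 0$ for $L$ non-trivial; the four corresponding quadratic forms on $(\ZZ/2\ZZ)^2$ split into three even and one odd, the odd one being matched with $\Oo_E$, as a direct $\Arf$ computation on the standard symplectic basis shows. This completes the verification. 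The main obstacle in this plan is the rigorous justification of Mumford's deformation-invariance step near nodal fibres; a cleaner modern substitute is to identify $h^0(\Sigma,\Ff) \bmod 2$ with the $\bmod 2$ analytic index of the spin Dirac operator $\bar{\partial}_\Ff$ and to invoke Atiyah's $\bmod 2$ index theorem, from which invariance under continuous deformation of the spin structure is automatic and the identification with $\Arf(q_\Ff)$ reduces to two-dimensional spin bordism.
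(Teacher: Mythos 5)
The paper does not prove this theorem: it is stated as a classical result attributed to Johnson, Atiyah, and Mumford, and invoked with no argument. There is therefore no internal proof to compare against, and your outline should be judged as a reconstruction of the cited classical argument.

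Your sketch follows the correct classical route and you correctly flag that the deformation-invariance step is the delicate one, but two of the steps need considerably more care than the phrasing suggests. First, the statement that the jump of $h^0$ at a special fibre ``is governed by the corank of a symmetric bilinear form on the space of nearby sections'' is not quite Mumford's argument; what he actually does is resolve $Rf_*\Ff$ by a two-term complex of vector bundles and realise $H^0(\Sigma_t,\Ff_t)$ as the intersection of two maximal isotropic subspaces of a single even-dimensional quadratic space, from which the parity invariance follows by an explicit linear-algebra lemma about isotropics. Second, and more seriously, the nodal degeneration is substantially harder than the sketch admits: on a compact-type nodal curve $\Sigma_1\cup_p\Sigma_2$ the dualizing sheaf restricts to $\omega_{\Sigma_i}(p)$, which has odd degree, so a limiting theta characteristic is \emph{not} a line bundle glued along the node; one must work either with non-locally-free torsion-free square roots or with Cornalba's compactified moduli of spin curves (where an exceptional component is inserted at the node), and the naive additivity $h^0(\Ff)=h^0(\Ff_1)+h^0(\Ff_2)$ requires justification rather than following from the plain normalization sequence. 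For the irreducible one-nodal degeneration one instead has two inequivalent descent data at the node with different $h^0$, so additivity again has to be stated carefully. Your closing suggestion --- to sidestep all of this by identifying $h^0 \bmod 2$ with the $\ZZ/2\ZZ$-valued index of the skew-adjoint operator $\bar\partial_\Ff$ and invoking Atiyah's $\bmod 2$ index theorem together with the two-dimensional spin bordism computation --- is indeed the cleanest modern route, and is essentially the content of the Atiyah reference that the paper cites.
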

Following this theorem, we call a spin line bundle odd or even if the corresponding quadratic form is odd or even. Theorem~\ref{thmas:Arf_and_h0} tells us in particular that odd spin line bundles always admit non-zero sections. Moreover, it can be shown that if $\Sigma$ is chosen generic and $g\geq 3$, then $\dim \Gamma(\Sigma, \Ff)$ is either 0 or 1 for any spin line bundle $\Ff$.

Let us now explore in details how the spin structure of a spine line bundle is related with ``windings'' of its smooth sections.
Let $\omega$ be a smooth (not necessary holomorphic) $(1,0)$-form on $\Sigma$ and $\gamma$ be a smooth oriented loop on $\Sigma$ such that $\omega$ does not vanish along $\gamma$. Let $r:[0,1]\to \Sigma$ be a smooth parametrization of $\gamma$. Define the winding of $\gamma$ with respect to $\omega$ by
\begin{equation}
  \label{eq:def_of_wind}
  \wind(\gamma, \omega) = \Im \int_0^1 \frac{d}{dt} \log \omega(r'(t))\,dt.
\end{equation}

We have the following

\begin{lemma}
  \label{lemma:wind_lift}
  Let $\xi\in H^1(UT\Sigma, \ZZ/2\ZZ)$ be a spin structure and $\Ff$ a spin line bundle corresponding to $\xi$ under the bijection from Proposition~\ref{prop:topological_charact_of_spin_bundles}. Let $Z\subset \Sigma$ be a finite set and $\omega$ be a smooth (not necessary holomorphic) $(1,0)$-form on $\Sigma\smm Z$ vanishing nowhere. Then $\omega$ is an image of a smooth section of $\Ff\vert_{\Sigma\smm Z}$ if and only if for any simple smooth oriented curve $\gamma$ on $\Sigma\smm Z$ we have
  \begin{equation}
    \label{eq:wind_lift1}
    \xi(\tilde\gamma) = (2\pi)^{-1}\wind(\gamma, \omega)\mod 2.
  \end{equation}
\end{lemma}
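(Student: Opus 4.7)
The lemma is a standard covering space lifting problem. The isomorphism $\beta:\Ff^{\otimes 2}\cong K_\Sigma$ restricts to a principal $\ZZ/2\ZZ$-cover $\mathrm{sq}:\Ff\smm\{0\}\to K_\Sigma\smm\{0\}$, and a nowhere-vanishing smooth $(1,0)$-form on $\Sigma\smm Z$ is precisely a smooth section $\omega:\Sigma\smm Z\to K_\Sigma\smm\{0\}$. To say that $\omega$ is of the form $s\otimes s$ for some smooth section $s$ of $\Ff$ over $\Sigma\smm Z$ is equivalent to requiring that $\omega$ admit a continuous (and hence smooth) lift through $\mathrm{sq}$. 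By the elementary covering lifting criterion applied to the $\ZZ/2\ZZ$-cover $\mathrm{sq}$, such a lift exists if and only if $\omega^*\eta = 0\in H^1(\Sigma\smm Z,\ZZ/2\ZZ)$, where $\eta\in H^1(K_\Sigma\smm\{0\},\ZZ/2\ZZ)$ is the class classifying $\mathrm{sq}$; equivalently, $\eta(\omega_*[\gamma])=0$ for every oriented loop $\gamma\subset\Sigma\smm Z$. Since $H_1(\Sigma\smm Z,\ZZ/2\ZZ)$ is generated by classes of simple smooth oriented loops, it is enough to test on such loops.

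\textbf{Identifying $\eta$ with $\xi$.} Fix an auxiliary smooth Hermitian metric on $\Sigma$. This produces a smooth deformation retraction $K_\Sigma\smm\{0\}\to UK_\Sigma$ and, via the metric dual, a diffeomorphism $UK_\Sigma\cong UT\Sigma$. The analogous construction for $\Ff$/$\Ff^\vee$ yields a compatible identification of the cover $\Ff\smm\{0\}\to K_\Sigma\smm\{0\}$ with the cover $U\Ff^\vee\to UT\Sigma$ appearing in Proposition~\ref{prop:topological_charact_of_spin_bundles}. Consequently $\eta$ pulls back to $\xi\in H^1(UT\Sigma,\ZZ/2\ZZ)$, and the lifting condition becomes $\xi(\hat\gamma)=0$ for every simple oriented loop $\gamma\subset\Sigma\smm Z$, where $\hat\gamma$ is the loop in $UT\Sigma$ obtained by composing $\gamma$ with $\omega/|\omega|$ and the metric identification.

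\textbf{Winding comparison.} Next I compare $\hat\gamma$ with the tangent lift $\tilde\gamma$ in $H_1(UT\Sigma,\ZZ/2\ZZ)$. Trivializing $UT\Sigma$ in a tubular neighbourhood of $\gamma$ as $\gamma\times S^1$, the loop $\tilde\gamma$ is given by $t\mapsto r'(t)/|r'(t)|$, while $\hat\gamma$ is given by $t\mapsto \omega(r(t))^{\sharp}/|\omega(r(t))^{\sharp}|$. The quotient of these two unit tangent vectors is a unit complex number whose argument differs from $\arg \omega(r'(t))$ only by smooth functions of $t$ that return to their starting values after one loop; hence, by the definition~\eqref{eq:def_of_wind}, its total change of argument as $t$ runs from $0$ to $1$ equals $\pm\wind(\gamma,\omega)$. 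In the fiber $S^1$, this means $\hat\gamma$ and $\tilde\gamma$ differ by $(2\pi)^{-1}\wind(\gamma,\omega)$ full turns up to a sign which is invisible modulo $2$. Therefore, in $H_1(UT\Sigma,\ZZ/2\ZZ)$,
\[
  [\hat\gamma] = [\tilde\gamma] + (2\pi)^{-1}\wind(\gamma,\omega)\cdot z \mod 2,
\]
and applying $\xi$ together with $\xi(z)=1$ yields $\xi(\hat\gamma)=\xi(\tilde\gamma)+(2\pi)^{-1}\wind(\gamma,\omega)\bmod 2$.

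\textbf{Conclusion and main obstacle.} Combining the previous paragraphs, existence of a smooth lift of $\omega$ to a section of $\Ff\vert_{\Sigma\smm Z}$ is equivalent to $\xi(\tilde\gamma)=(2\pi)^{-1}\wind(\gamma,\omega)\bmod 2$ for every simple smooth oriented loop $\gamma\subset\Sigma\smm Z$, which is exactly~\eqref{eq:wind_lift1}. The main technical step is the identification paragraph: one must check that under the metric-induced identifications the pullback of $\mathrm{sq}$ to $UT\Sigma$ is indeed the cover of Proposition~\ref{prop:topological_charact_of_spin_bundles} with the correct sign conventions, so that $\xi$ itself (and not some translate by an element of $H^1(\Sigma,\ZZ/2\ZZ)$) appears on the right-hand side of~\eqref{eq:wind_lift1}. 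This can be verified by reduction to the model case $\Sigma=\CC$, $\omega=dz$, $\gamma=\{|z|=1\}$, where $\wind(\gamma,\omega)=2\pi$ and one easily checks that $\xi(\tilde\gamma)=1$ coincides with the fact that $\sqrt{dz}$ does not admit a single-valued branch on $\CC\smm\{0\}$.
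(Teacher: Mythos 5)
Your argument is correct, and it is the same basic covering-space lifting argument the paper has in mind, but you execute it by a slightly different route. The paper evaluates $\omega$ on unit tangent vectors, producing a nowhere-vanishing $\CC^*$-valued function (still called $\omega$) directly on $UT(\Sigma\smm Z)$; the winding of a loop $\gamma$ is then literally $-i\int_{\tilde\gamma}d\log\omega$, and the lifting obstruction is read off as the equality $[(2\pi i)^{-1}d\log\omega]=\xi$ in $H^1(UT(\Sigma\smm Z),\ZZ/2\ZZ)$ — no auxiliary metric or comparison of two distinct loops in $UT\Sigma$ is needed. You instead treat $\omega$ as a section of $K_\Sigma\smm\{0\}$, transport the square cover over to $UT\Sigma$ via a metric-induced identification $K_\Sigma\smm\{0\}\simeq UT\Sigma$, and then compare the resulting loop $\hat\gamma$ with the tangent lift $\tilde\gamma$ to extract the winding term. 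Both routes are sound; the paper's avoids choosing a Hermitian metric and avoids the loop comparison, which makes the identification of the relevant cohomology class immediate, whereas your version has to verify (your ``identifying $\eta$ with $\xi$'' paragraph) that the metric transport matches the cover of Proposition~\ref{prop:topological_charact_of_spin_bundles} without introducing a translate by $H^1(\Sigma,\ZZ/2\ZZ)$. That verification is correct as you state it — the identification is a fiberwise (circle-bundle) isomorphism over $\Sigma$, so it carries the classifying class to the classifying class — but the local model case you invoke only pins down the sign/normalization on the fiber class $z$, not a potential global translate; a cleaner justification is simply to note that the conjugate-linear metric isomorphisms $\Ff\to\Ff^\vee$ and $K_\Sigma\to T\Sigma$ form a commuting square of bundle maps, so the covers correspond on the nose. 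With that clarification your proof is complete and correct.
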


\subsection{Basis in \texorpdfstring{$H_1(\Sigma,\ZZ)$}{H1(Sigma, Z)}}
\label{subsec:simplicial_basis}

The space $H_1(\Sigma,\ZZ)$ has a natural non-degenerate simplicial form given by the intersection product. We fix a simplicial basis 
\[
  A_1,\dots,A_g, B_1,\dots, B_g\in H_1(\Sigma, \ZZ)
\]
with respect to this form fixed by the condition
\[
  A_i\cdot B_j = \delta_{ij},\qquad A_i\cdot A_j = B_i\cdot B_j = 0.
\]
With some abuse of the notation we can assume that $A_i$'s and $B_i$'s are also oriented simple closed curves on $\Sigma$ representing the corresponding homology classes.

Assume that $\Sigma$ is a double of a Riemann surface $\Sigma_0$ and $\sigma:\Sigma\to \Sigma$ is the corresponding anti-holomorphic involution as in Section~\ref{subsec:intro_flat_metric}. Assume that $n$ is the number of boundary components of $\Sigma_0$ and $g_0 = g(\Sigma_0)$, then we have $g = 2g_0 + n-1$. In this setting we can choose the basis above in such a way that
\begin{equation}
  \label{eq:symmetries_of_simplicial_basis}
  \sigma_*A_i = -A_i,\quad \sigma_* B_i = B_i,\quad i = 1,\dots,g.
\end{equation}

Having a simplicial basis fixed we can introduce \emph{normalized} holomorphic differentials, that is the basis $\omega_1,\dots,\omega_g\in H^0(\Sigma, K_\Sigma)$ normalized by the condition
\[
  \int_{A_i} \omega_j = \delta_{ij},\qquad i,j = 1,\dots,g.
\]
If the involution $\sigma$ is present, is straightforward to see that
\begin{equation}
  \label{eq:symmetries_of_normalized_differentials}
  \begin{split}
    & \sigma^*\omega_i = -\bar{\omega}_i,\quad i = 1,\dots, n-1,\\
    & \sigma^*\omega_{n-1+i} = -\bar{\omega}_{n-1+g_0 + i},\quad i = 1,\dots, g_0.
  \end{split}
\end{equation}
The matrix of $b$-periods $\Omega = (\Omega_{i,j})_{i,j = 1,\dots,g}$ is defined by
\[
  \Omega_{i,j} = \int_{B_i}\omega_j = \int_{B_j}\omega_i,
\]
where the last equality follows from Riemann bilinear relations (see~\cite[Chapter~II.2]{GriffitsHarris}):
\begin{equation}
  \label{eq:Riemann bilinear relations}
  \int_\Sigma u\wedge v = \sum_{i = 1}^g\left( \int_{A_i}u\cdot \int_{B_i}v - \int_{A_i}v\cdot \int_{B_i}u\right)
\end{equation}
for any harmonic 1-differentials $u,v$. The matrix $\Omega$ is symmetric and has positive imaginary part, which corresponds to the natural Hermitian product on $H^0(\Sigma, K_\Sigma)$:
\[
  \Im\Omega_{i,j} = \frac{1}{2}\int_\Sigma\omega_i\wedge*\omega_j = \frac{i}{2} \int_\Sigma\omega_i\wedge\bar{\omega}_j,
\]
where $*$ is the Hodge star.

Assume that the involution $\sigma$ is present; then it is easy to see that
\begin{equation}
  \label{eq:Omega_under_involution}
  J\Omega J = -\overline{\Omega}
\end{equation}
where $J$ is the permutation matrix given by 
\begin{equation*}
  \label{eq:def_of_J}
  \begin{split}
    & J_{i,i} = 1,\quad i = 1,\dots, n-1,\\
    & J_{g-2g_0+i,g-g_0 + i} = J_{g-g_0+i, g-2g_0 + i} = 1,\quad i = 1,\dots, g_0.
  \end{split}
\end{equation*}

We finalize this section by recalling some basic facts about harmonic differentials on $\Sigma$. Let $H^{1,0}(\Sigma)$ and $H^{0,1}(\Sigma)$ denote the spaces of holomorphic $(1,0)$-forms and anti-holomorphic $(0,1)$-forms on $\Sigma$ respectively. Then $H^{1,0}(\Sigma)\oplus H^{0,1}(\Sigma)$ is the space of harmonic 1-forms on $\Sigma$. The Hodge decomposition (see~\cite{GriffitsHarris}) provides an isomorphism between $H^{1,0}(\Sigma)\oplus H^{0,1}(\Sigma)$ and $H^1(\Sigma, \CC)$, where each $u\in H^{1,0}(\Sigma)\oplus H^{0,1}(\Sigma)$ is sent to its cohomology class. The Hodge star is an involution on $H^{1,0}(\Sigma)\oplus H^{0,1}(\Sigma)$ given by $*u = i\bar{u}$ for $u\in H^{1,0}(\Sigma)$. The skew-symmetric form $(u,v)\mapsto \int_\Sigma u\wedge v$ coincides with the cap product on $H^1(\Sigma, \CC)$, and the bilinear form 
\begin{equation}
  \label{eq:scalar_product_of_harmonic_differentials}
  (u,v)\mapsto \int_\Sigma u\wedge *v
\end{equation}
defines a scalar product on $H^{1,0}(\Sigma)\oplus H^{0,1}(\Sigma)$.

\subsection{Families of Cauchy--Riemann operators and the Jacobian of a Riemann Surface}
\label{subsec:Jacobian}

Recall that a line bundle is a vector bundle of rank 1. Since $H^2(\Sigma, \ZZ)\cong \ZZ$, the first Chern class of any line bundle is just an integer, called the degree of the bundle. We have for example
\[
  \deg \Oo_\Sigma(D) = \deg D,\qquad \det K_\Sigma = 2g-2.
\]
The degree of a line bundle is the unique topological invariant: any two line bundles of the same degree are isomorphic as $\mC^\infty$ line bundles. But when $g\geq 1$, each $\mC^\infty$ complex line bundle has infinitely many complex structures on it, parametrized by a $g$-dimensional complex torus called the Jacobian of $\Sigma$ and denoted by $\Jac(\Sigma)$. To describe the Jacobian, it is enough to describe complex structures on the trivial line bundle.

Let $\Sigma\times \CC$ be the trivial line bundle. Given an open $U\subset \Sigma$, the space of smooth sections of $\Sigma\times \CC$ over $U$ is just the function space $\mC^\infty(U)$. The Cauchy--Riemann operator $\dbar$ acts on $C^\infty(U)$ naturally; its kernel is the space of holomorphic sections of $\Sigma\times \CC$ over $U$. This endows $\Sigma\times \CC$ with a \emph{complex structure} --- for each $U$ we know which smooth sections are holomorphic. Denote the corresponding \emph{holomorphic} line bundle by $\Ll_0$ In the language of sheaves the holomorphc line bundle $\Ll_0$ is described as
\[
  \Gamma(U,\Ll_0) = \{ f\in \mC^\infty(U)\ \mid\ \dbar f = 0 \}.
\]
Now, let $\alpha$ be a smooth $(0,1)$-form on $\Sigma$ and consider the perturbed operator $\dbar + \alpha$. Replacing $\dbar$ with $\dbar + \alpha$ we obtain another holomorphic line bundle $\Ll_\alpha$, with the underlying sheaf described as 
\begin{equation}
  \label{eq:def_of_Lalpha}
  \Gamma(U, \Ll_\alpha) = \{ f\in \mC^\infty(U) \ \mid\ (\dbar + \alpha)f = 0 \}.
\end{equation}
To see that this is a locally free sheaf of rank 1, write $\alpha = \dbar \vphi + \alpha_h$, where $\vphi\in C^\infty(\Sigma)$ and $\alpha_h$ is an anti-holomorphic $(0,1)$-form on $\Sigma$; this is always possible by Daulbeaut decomposition. Assume that $U$ is simply-connected, so that a primitive $\int\alpha_h$ is defined on $U$. Then it is straightforward to see that
\[
  f\in \Gamma(U, \Ll_\alpha)\qquad \Longleftrightarrow \qquad e^{\vphi + \int\alpha_h}\cdot f\text{ is holomorphic.}
\]
From this it is clear that we have defined a locally free sheaf of rank 1. In the next lemma we show that $\Ll_\alpha$ exhaust all the possible complex structures on the trivial bundle.

\begin{lemma}
  \label{lemma:deg0_bundles_via_connections}
  Let $\Sigma$ be a smooth closed Riemann surface. Then for any holomorphic line bundle $\Ll\to \Sigma$ of degree zero there exists an antiholomorphic $(0,1)$-form $\alpha$ such that $\Ll$ is isomorphic to $\Ll_\alpha$ as a holomorphic line bundle.
\end{lemma}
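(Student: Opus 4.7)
The plan is to reduce the statement to a question about smooth $(0,1)$-forms on $\Sigma$ via a $\mC^\infty$-trivialization, and then resolve that question using the Dolbeault decomposition.

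First I would use the hypothesis $\deg \Ll = 0$ to conclude that $\Ll$ is topologically trivial as a smooth complex line bundle, since on a closed Riemann surface the first Chern class is the only topological invariant of a complex line bundle. Fix any $\mC^\infty$ trivialization $\Ll \cong_{\mC^\infty} \Sigma\times \CC$. The holomorphic structure of $\Ll$ is encoded by its Cauchy--Riemann operator $\dbar_\Ll$ acting on $\mC^\infty$ sections; in the chosen trivialization, $\dbar_\Ll$ is necessarily of the form $\dbar + \beta$ for a unique smooth $(0,1)$-form $\beta$ on $\Sigma$ (this is immediate from the Leibniz rule: the difference $\dbar_\Ll - \dbar$ is $\mC^\infty$-linear over $\mC^\infty(\Sigma)$, hence a $(0,1)$-form). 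By the very definition of $\Ll_\beta$, this gives a canonical holomorphic isomorphism $\Ll\cong \Ll_\beta$.

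Next I would apply the Dolbeault decomposition on $\Sigma$: every smooth $(0,1)$-form admits a unique orthogonal decomposition $\beta = \dbar \vphi + \alpha$ with $\vphi\in \mC^\infty(\Sigma,\CC)$ and $\alpha\in H^{0,1}(\Sigma)$ antiholomorphic. I then claim that multiplication by $e^{-\vphi}$ intertwines $\dbar+\beta$ and $\dbar + \alpha$: indeed
\[
  (\dbar + \alpha)(e^{-\vphi}f) = e^{-\vphi}\bigl((\dbar+\alpha)f - (\dbar\vphi)f\bigr) = e^{-\vphi}(\dbar + \beta - \dbar\vphi - (\alpha-\alpha))f = e^{-\vphi}(\dbar+\beta-\dbar\vphi+\alpha-\beta+\dbar\vphi)f
\]
simplifies to $e^{-\vphi}(\dbar + \beta)f$ after substituting $\alpha = \beta - \dbar\vphi$. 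Hence $f\mapsto e^{-\vphi}f$ defines a smooth $\mC^\infty$-bundle automorphism of $\Sigma\times\CC$ that sends sections annihilated by $\dbar+\beta$ bijectively to sections annihilated by $\dbar+\alpha$, which by the sheaf-theoretic definition of $\Ll_\alpha$ and $\Ll_\beta$ is precisely an isomorphism of holomorphic line bundles $\Ll_\beta\cong \Ll_\alpha$. Composing with the isomorphism from the previous paragraph yields $\Ll\cong \Ll_\alpha$.

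There is no real obstacle here beyond invoking standard facts: degree-zero complex line bundles on a closed oriented surface are $\mC^\infty$-trivial, Cauchy--Riemann operators on the trivial bundle are parametrized by $(0,1)$-forms, and the Dolbeault decomposition holds on any closed Riemann surface. Compactness of $\Sigma$ enters only through the Hodge-theoretic splitting $\Omega^{0,1}(\Sigma) = \dbar \mC^\infty(\Sigma,\CC)\oplus H^{0,1}(\Sigma)$; without it one cannot in general solve $\dbar\vphi = \beta - \alpha$ for a globally smooth $\vphi$. Note that the resulting $\alpha$ is far from unique: two antiholomorphic forms $\alpha_1,\alpha_2$ give isomorphic line bundles iff $\alpha_1-\alpha_2 = \dbar u$ with $e^{-u}$ single-valued on $\Sigma$, i.e.\ iff the periods of $\alpha_1-\alpha_2$ lie in $2\pi i \ZZ$, which is exactly the ambiguity parametrizing $\Jac(\Sigma) = H^{0,1}(\Sigma)/H^1(\Sigma,2\pi i\ZZ)$.
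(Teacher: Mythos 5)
Your approach is essentially the paper's: smoothly trivialize $\Ll$ using $\deg\Ll = 0$, observe via Leibniz that the $\dbar$-operator becomes $\dbar+\beta$ in that trivialization, apply the Dolbeault decomposition $\beta = \dbar\vphi + \alpha$, and gauge away the $\dbar$-exact part by an exponential. That is exactly the strategy the paper follows (there $\beta$ is called $\alpha$ and the antiholomorphic part is $\alpha_h$).

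However, your displayed intertwining computation is incorrect, both in the chain of equalities and in the direction of the conclusion. Starting from
\[
  (\dbar+\alpha)(e^{-\vphi}f) = e^{-\vphi}\bigl((\dbar+\alpha)f - (\dbar\vphi)f\bigr),
\]
substituting $\alpha = \beta - \dbar\vphi$ produces $e^{-\vphi}(\dbar + \beta - 2\dbar\vphi)f$, not $e^{-\vphi}(\dbar+\beta)f$; your next line drops one of the $\dbar\vphi$ terms, and the expression you finally arrive at actually simplifies back to $e^{-\vphi}(\dbar+\alpha)f$, which is of course not $e^{-\vphi}(\dbar+\beta)f$. The identity you need, and the one the paper uses, goes the other way around:
\[
  (\dbar+\beta)(e^{-\vphi}f) = e^{-\vphi}\bigl(\dbar f - (\dbar\vphi)f + \beta f\bigr) = e^{-\vphi}(\dbar+\alpha)f,
\]
using $\beta - \dbar\vphi = \alpha$. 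So it is multiplication by $e^{-\vphi}$ that carries $(\dbar+\alpha)$-holomorphic sections to $(\dbar+\beta)$-holomorphic sections, giving an isomorphism $\Ll_\alpha\to\Ll_\beta$; equivalently $e^{+\vphi}$ goes $\Ll_\beta\to\Ll_\alpha$. Since isomorphism is symmetric this still yields $\Ll\cong\Ll_\beta\cong\Ll_\alpha$, so your conclusion is right, but the algebra as written does not support it and should be corrected. Your closing remark about the uniqueness/Jacobian also has the convention garbled: the form $\alpha_1-\alpha_2$ is antiholomorphic, so writing it as $\dbar u$ forces it to vanish; the correct condition (as in Lemma~\ref{lemma:alpha_1_sim_alpha_2}) is that $\pi^{-1}\Im(\alpha_1-\alpha_2)$ has integer periods.
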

\begin{proof}
  Denote by $C^\infty(U,\Ll)$ the space of smooth sections of $\Ll$ over $U$. Let $U\subset \Sigma$ be simply-connected and open, and let $\phi\in \Gamma(U,\Ll)$ be a non-vanishing holomorphic section. Then any other smooth section over $U$ has the form $f\phi$, where $f\in \mC^\infty(U)$. Define $\dbar(f\phi)= \dbar f\cdot \phi$ (viewed as $\Ll$-valued differential form). Since $\Ll$ is holomorphic, this definition does not depend on the choice of $\phi$ and extends to any open set $U$ via a partition of unity. Note that if $\phi\in \mC^\infty(\Sigma,\Ll)$, then $\dbar\phi$ vanishes over $U$ if and only if $\phi\vert_U$ is a holomorphic section.

  Since $\deg\Ll = 0$, it is trivial as a smooth bundle; in particular, there exists a nowhere vanishing $\phi_0\in C^\infty(\Sigma, \Ll)$. Any other smooth section $\phi$ is of the form $\phi = f\phi_0$ for some $f\in C^\infty(\Sigma)$. Note that
  \[
    \dbar(f\phi_0) - \dbar f\cdot \phi_0
  \]
  is linear in $f$. It follows that there exists a smooth $(0,1)$-form $\alpha$ such that
  \begin{equation}
    \label{eq:deg0_bundles_via_connections1}
    \dbar (f\phi_0) = (\dbar + \alpha)f\cdot \phi_0
  \end{equation}
  Let us write $\alpha = \dbar \vphi + \alpha_h$ where $\alpha_h$ is antiholomorphic. For each open $U\subset \Sigma$ define $\Phi_U: \Gamma(U, \Ll_{\alpha_h})\to \Gamma(U,\Ll)$ (recall that $\Gamma(U, \Ll)$ is the space of holomorphic sections) by
  \[
    \Phi_U(f) = e^{-\vphi} f\phi_0.
  \]
  We have $\dbar(e^{-\vphi} f \phi_0) = e^{-\vphi}(\dbar + \alpha_h)f\cdot \phi_0 = 0$, hence $\Phi_U$ is defined correctly. It is easy to see that $\Phi_U$ descends to an isomorphism between $\Ll_{\alpha_h}$ and $\Ll$.
\end{proof}

It follows from Lemma~\ref{lemma:deg0_bundles_via_connections} that isomorphism classes of line bundles of degree 0 are parametrized by anti-holomorphic $(0,1)$-forms. It is natural to ask when two different forms define the same isomorphism class. 
\begin{lemma}
  \label{lemma:alpha_1_sim_alpha_2}
  Let $\alpha_1,\alpha_2$ be two antiholomorphic $(0,1)$-forms. Then the holomorphic line bundles $\Ll_{\alpha_1}$ and $\Ll_{\alpha_2}$ are isomorphic if and only if all periods of $\pi^{-1}\Im (\alpha_1-\alpha_2)$ are integer.
\end{lemma}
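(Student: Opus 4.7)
The strategy is to reduce the isomorphism question to the triviality of a single line bundle, and then to analyze that triviality via an explicit construction in one direction and Hodge theory in the other. First, I would show that $\Ll_{\alpha_1}\cong \Ll_{\alpha_2}$ if and only if $\Ll_{\alpha_2-\alpha_1}$ is holomorphically trivial. Under the tautological smooth identification of both bundles with $\Sigma\times \CC$, any bundle isomorphism is multiplication by a nowhere-vanishing smooth $g\in \mC^\infty(\Sigma)$; writing out the compatibility condition between the two holomorphic structures (sections satisfy $(\dbar+\alpha_i)f=0$) gives $(\dbar+(\alpha_2-\alpha_1))g=0$, so such $g$ exists iff $\Ll_{\alpha_2-\alpha_1}$ has a nowhere-zero global section, iff it is trivial. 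Setting $\alpha:=\alpha_2-\alpha_1$ (still antiholomorphic), the statement reduces to: $\Ll_\alpha$ is trivial if and only if $\pi^{-1}\Im\alpha$ has integer periods.

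For the sufficient direction, assuming $\pi^{-1}\Im\alpha$ has integer periods, I would write down an explicit trivializing section. Fix a base point $p_0\in \Sigma$ and define the \emph{a priori} multivalued
\[
  f(p) := \exp\Bigl(\int_{p_0}^p (\bar\alpha-\alpha)\Bigr) = \exp\Bigl(-2i\int_{p_0}^p \Im\alpha\Bigr).
\]
The 1-form $\bar\alpha-\alpha$ is closed ($\alpha$ is closed because it is antiholomorphic, and $\bar\alpha$ is closed because it is holomorphic), so the monodromy of $f$ along any loop $\gamma$ is $\exp(-2i\int_\gamma \Im\alpha)=\exp(-2\pi i\,\pi^{-1}\!\int_\gamma \Im\alpha)=1$ by hypothesis. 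Hence $f$ is a single-valued, nowhere-vanishing smooth function, and the type decomposition of $d\log f = \bar\alpha-\alpha$ gives $\dbar f/f=-\alpha$, i.e. $f\in \Gamma(\Sigma,\Ll_\alpha)$, trivializing the bundle.

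For the necessary direction, suppose $\Ll_\alpha$ admits a nowhere-vanishing global section $f$, so $\dbar f=-\alpha f$. The 1-form $\omega_f:=df/f$ is single-valued and closed on $\Sigma$, and its periods lie in $2\pi i\ZZ$ (since locally $\omega_f=d\log f$ and the global periods compute the winding number of $f:\gamma\to\CC^*$ around $0$). Its $(0,1)$-part is $\dbar f/f=-\alpha$, so writing $\omega_f=\eta-\alpha$ for its $(1,0)$-part $\eta$, the relations $d\omega_f=0$ and $d\alpha=0$ yield $d\eta=0$; since $\eta$ has type $(1,0)$ this forces $\dbar\eta=0$, so $\eta$ is a holomorphic $(1,0)$-form with $\int_\gamma(\eta-\alpha)\in 2\pi i\ZZ$ for every loop $\gamma$.

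The remaining step—extracting integrality of $\pi^{-1}\Im\alpha$ from the complex-valued condition $\int_\gamma(\eta-\alpha)\in 2\pi i\ZZ$—is the delicate one. Taking real parts yields $\int_\gamma(\Re\eta-\Re\alpha)=0$ for every $\gamma$. Both $\Re\eta$ and $\Re\alpha$ are harmonic (as real parts of, respectively, a holomorphic and an antiholomorphic form), so the cohomological equality $[\Re\eta]=[\Re\alpha]$ lifts to a pointwise identity $\Re\eta=\Re\alpha$ by uniqueness of harmonic representatives. A direct local-coordinate computation gives the Hodge star identities $\Im\eta=\ast\Re\eta$ for a holomorphic $(1,0)$-form and $\Im\alpha=-\ast\Re\alpha$ for an antiholomorphic $(0,1)$-form; combined with $\Re\eta=\Re\alpha$ these give $\Im\eta=-\Im\alpha$. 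The imaginary part of $\int_\gamma(\eta-\alpha)\in 2\pi i\ZZ$ then reads $-2\int_\gamma \Im\alpha\in 2\pi\ZZ$, i.e. $\pi^{-1}\int_\gamma \Im\alpha\in \ZZ$. This Hodge-theoretic upgrade from a cohomological to a pointwise identity, and the subsequent extraction of the real integrality from the imaginary component, is the step I expect to require the most care.
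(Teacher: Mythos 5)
Your proof is correct and complete. The reduction to triviality of $\Ll_{\alpha_2-\alpha_1}$ and the sufficient direction (the explicit trivialization $f=\exp(-2i\int\Im\alpha)$) are the same as what the paper sketches; the paper's stated isomorphism $\exp(2i\int\Im(\alpha_1-\alpha_2))$ is literally your $f$. For the necessary direction the paper simply cites Griffiths--Harris p.\,314, which reaches the conclusion through the exponential exact sequence $0\to\ZZ\to\Oo\to\Oo^*\to 0$ and the resulting description of $\Pic^0(\Sigma)$ as $H^1(\Sigma,\Oo)/H^1(\Sigma,\ZZ)$. Your argument avoids sheaf cohomology entirely: you take the log-derivative $\omega_f=df/f$ of a global nowhere-zero section, split it by type as $\eta-\alpha$ with $\eta$ holomorphic, use the $2\pi i\ZZ$-period constraint first on real parts together with uniqueness of harmonic representatives to force $\Re\eta=\Re\alpha$ pointwise, and then use the Hodge-star identities $\Im\eta=\ast\Re\eta$, $\Im\alpha=-\ast\Re\alpha$ (valid for the standard conformal Hodge star on real $1$-forms) to read off $\Im\eta=-\Im\alpha$ and extract integrality of $\pi^{-1}\int_\gamma\Im\alpha$ from the imaginary part of the period condition. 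This is a more elementary and self-contained route than the one the paper implicitly invokes; the pointwise identification of the two harmonic $1$-forms and the subsequent use of the Hodge star to control the conjugate component is the genuinely nontrivial step, and you have it right.
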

\begin{proof}[Sketch of a proof]
  If all periods of $\pi^{-1}\Im (\alpha_1-\alpha_2)$ are integer, then the isomorphism between $\Ll_{\alpha_1}$ and $\Ll_{\alpha_2}$ is given by the multiplication by the function $\exp(2i\int \Im(\alpha_1-\alpha_2))$. For the converse statement see~\cite[p.314]{GriffitsHarris}.
\end{proof}

Let $H^{0,1}(\Sigma)$ denote the vector space of all anti-holomorphic $(0,1)$-forms on $\Sigma$ and let
\begin{equation}
  \label{eq:def_of_Lambda}
  \Lambda = \{ \alpha\in H^{0,1}(\Sigma)\ \mid\ \pi^{-1}\Im \alpha\text{ has integer periods} \}.
\end{equation}
Let also
\begin{equation}
  \label{eq:def_of_Pic0}
  \Pic^0(\Sigma) = \{ \text{isomorphism classes of $\deg$ 0 holomorphic line bundles on }\Sigma \}
\end{equation}
Lemmas~\ref{lemma:deg0_bundles_via_connections}~and~\ref{lemma:alpha_1_sim_alpha_2} imply that we have a bijection
\begin{equation}
  \label{eq:Pic_via_01}
  \Psi: \Pic^0(\Sigma) \to \frac{H^{0,1}(\Sigma)}{\Lambda}.
\end{equation}

Using the simplicial basis chosen in Section~\ref{subsec:simplicial_basis}, we can introduce explicit coordinates on $\frac{H^{0,1}(\Sigma)}{\Lambda}$. Recall that $\omega_1,\dots,\omega_g$ is the basis of normalized differentials, see Section~\ref{subsec:simplicial_basis}. Consider the mapping
\begin{equation}
  \label{eq:def_of_Phi}
  \Phi: H^{0,1}(\Sigma)\to \CC^g,\qquad \Phi(\alpha) = (2\pi i)^{-1}(\int_\Sigma \omega_1\wedge \alpha,\dots, \int_\Sigma \omega_g\wedge \alpha).
\end{equation}
Using the fact that the bilinear form~\eqref{eq:scalar_product_of_harmonic_differentials} is non-degenerate, it is easy to see that $\Phi$ is an isomorphism. Note that $\omega\wedge \alpha = 2i\omega\wedge \Im\alpha$ for any $(1,0)$-form $\omega$. It follows immediately from~\eqref{eq:Riemann bilinear relations} that
\begin{equation}
  \label{eq:Phi_alpha_via_a_b}
  \Phi(\alpha) = \pi^{-1}(\int_\Sigma \omega_1\wedge \Im\alpha,\dots, \int_\Sigma \omega_g\wedge \Im\alpha) = b-a\Omega,
\end{equation}
where $a,b\in \RR^g$ are the vectors $A$- and $B$-periods of $\pi^{-1}\Im \alpha$ respectively. In particular
\[
   \Phi(\alpha)\in \ZZ^g + \ZZ^g\Omega \quad \Leftrightarrow \quad \alpha\in \Lambda.
\]
We conclude that $\Phi$ induces an isomorphism
\begin{equation}
  \label{eq:iso_between_Pic_and_complex_torus}
  \Phi:\frac{H^{0,1}(\Sigma)}{\Lambda}\to \frac{\CC^g}{\ZZ^g + \ZZ^g \Omega} = \Jac(\Sigma).
\end{equation}
The torus on the right-hand side of~\eqref{eq:iso_between_Pic_and_complex_torus} is called the \emph{Jacobian} of the surface $\Sigma$ and is denoted by $\Jac(\Sigma)$. From~\eqref{eq:Pic_via_01} and~\eqref{eq:iso_between_Pic_and_complex_torus} we deduce that there is a bijection
\begin{equation}
  \label{eq:def_of_tilde_Aa}
  \Phi\circ\Psi: \Pic^0(\Sigma)\to \Jac(\Sigma).
\end{equation}

Another way to describe $\Pic^0(\Sigma)$ is using the group of divisors. Any holomorphic line bundle $\Ll\to \Sigma$ admits a global meromorphic section (see~\cite[Chapter~I.2]{GriffitsHarris}) $\phi$. Let $D = \div \phi$. Any holomorphic section of $\Ll$ over $U$ is of the form $f\phi$, where $f$ is a meromorphic function on $U$ such that $\div f\geq -D\cap U$. By mapping such a function $f$ to $f\phi$ we obtain an isomorphis $\Ll\cong \Oo_{\Sigma}(D)$ (cf.~\eqref{eq:def_of_Oo(D)}). It follows that we have a natural bijection
\begin{equation}
  \label{eq:def_of_Div0}
  \Pic^0(\Sigma) \cong \Div^0(\Sigma) = \{ D\ \mid\ \deg D = 0 \}/_\sim,
\end{equation}
where $D_1\sim D_2$ if $\Oo_{\Sigma}(D_1)\cong \Oo_\Sigma(D_2)$.
The corresponding map between $D^0(\Sigma)$ and $\Jac(\Sigma)$ is called the Abel map. It has the following description. Let
\[
  D = \sum_{i = 1}^m k_ip_i,\qquad \sum_{i = 1}^m k_i = 0
\]
be a divisor of degree 0. Let $2N = \sum_{i = 1}^m |k_i|$ and $\gamma = \gamma_1\cup\ldots\cup \gamma_N$ be the union of some oriented curves on $\Sigma$ such that $\partial \gamma = D$. Define the \emph{Abel map}
\begin{equation}
  \label{eq:def_of_Abel_map}
  \Aa(D) = \left( (\int_\gamma \omega_1,\dots,\int_\gamma\omega_g)\mod \ZZ^g + \ZZ^g\Omega\right)\in \Jac(\Sigma),
\end{equation}
where $\int_\gamma = \int_{\gamma_1} + \ldots + \int_{\gamma_N}$. It is clear that $\Aa(D)$ does not depend on $\gamma$.

\begin{lemma}
  \label{lemma:PsiPhi=Aa}
  Let $D$ be a divisor on $\Sigma$ such that $\deg D = 0$. If $\alpha\in H^{0,1}(\Sigma)$ is such that $\Ll_\alpha \cong \Oo_\Sigma(D)$, then
  \[
    \Phi(\alpha) = \Aa(D).
  \]
\end{lemma}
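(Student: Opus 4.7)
The plan is to reduce to divisors of the form $D = p - q$, then construct $\alpha$ and an isomorphism $\Ll_\alpha \cong \Oo_\Sigma(p-q)$ by an explicit recipe involving a cut along a path from $q$ to $p$, and finally read off $\Phi(\alpha)$ by a direct application of Stokes' theorem. To carry out the reduction I observe that both $\Aa$ and $[\alpha]\mapsto [\Ll_\alpha]$ are group homomorphisms: the first by definition, and the second because $(\dbar + \alpha_1+\alpha_2)(f_1f_2) = 0$ whenever $(\dbar+\alpha_j)f_j = 0$, which yields $\Ll_{\alpha_1}\otimes \Ll_{\alpha_2} \cong \Ll_{\alpha_1+\alpha_2}$; combined with the $\CC$-linearity of $\Phi$ in~\eqref{eq:def_of_Phi}, the composition $\Phi \circ \Psi^{-1}\colon \Pic^0(\Sigma)\to \Jac(\Sigma)$ is a homomorphism and therefore agrees with the factored Abel map on all of $\mathrm{Div}^0(\Sigma)/\!\sim$ as soon as it does so on the generating set $\{p-q : p,q\in \Sigma\}$.

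For the explicit construction, fix a smooth embedded path $\gamma\colon [0,1]\to \Sigma$ from $q$ to $p$ and small disjoint coordinate discs $(U_p,z_p)$, $(U_q,z_q)$ with $z_p(p)=0$, $z_q(q)=0$. On $\Sigma\setminus\gamma$ I define a smooth function $\chi$ by gluing the single-valued branches of $\frac{1}{2\pi i}\log z_p$ near $p$ and $-\frac{1}{2\pi i}\log z_q$ near $q$ (on slit discs, with slit along $\gamma$) with an interpolating smooth function supported in a tubular neighborhood of $\gamma$, arranging that $\chi$ has constant jump $+1$ across $\gamma$. By construction $\chi$ is holomorphic near the punctures, so $\dbar\chi$ extends to a smooth $(0,1)$-form on all of $\Sigma$, vanishing near $p$ and $q$; set $\alpha := -2\pi i\,\dbar\chi$. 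Since $\chi$ has integer jump across $\gamma$ and integer monodromy around the punctures, the function $f := e^{2\pi i\chi}$ descends to a single-valued smooth function on $\Sigma\setminus\{p,q\}$; near $p$ it equals $z_p$ times a smooth nonvanishing factor, and near $q$ it equals $z_q^{-1}$ times such a factor, so $f$ is meromorphic with $\div f = p-q$. Finally, $(\dbar+\alpha)f = \dbar(e^{2\pi i\chi}) - 2\pi i\,\dbar\chi\cdot e^{2\pi i\chi} = 0$, so $f$ is a meromorphic section of $\Ll_\alpha$ with divisor $p-q$, which realizes an isomorphism $\Ll_\alpha \cong \Oo_\Sigma(p-q)$.

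With $\alpha$ in hand the computation of $\Phi(\alpha)$ is Stokes' theorem on the cut surface $\Sigma\setminus\gamma$. For each $j=1,\ldots,g$, using that $\partial\chi\wedge\omega_j = 0$ for bidegree reasons and $d\omega_j = 0$,
\begin{equation*}
  \int_\Sigma \omega_j \wedge \alpha \;=\; -2\pi i\int_\Sigma \omega_j \wedge \dbar\chi \;=\; 2\pi i\int_{\Sigma\setminus\gamma} d(\chi\omega_j) \;=\; 2\pi i\,[\chi]\int_\gamma \omega_j \;=\; 2\pi i\int_\gamma \omega_j,
\end{equation*}
where $[\chi]=+1$ is the jump across $\gamma$ and the contributions from small circles around $p$ and $q$ vanish because $\chi\omega_j = O(\log|z|\cdot|dz|)$ at the punctures. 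Dividing by $2\pi i$ gives $\Phi(\alpha)_j = \int_\gamma \omega_j$, which modulo $\ZZ^g + \ZZ^g\Omega$ is exactly $\Aa(p-q)_j$ by~\eqref{eq:def_of_Abel_map}. The main obstacle is the construction in the middle paragraph: the interpolation between the local $\log$-branches near $p$, $q$ and the globally jumping function along $\gamma$ must be arranged so that $e^{2\pi i\chi}$ is simultaneously smooth across $\gamma$ and meromorphic near the punctures, and so that $\dbar\chi$ is smooth on all of $\Sigma$; once this geometric setup is in place, both the identification of $\Ll_\alpha$ and the Stokes computation are essentially formal.
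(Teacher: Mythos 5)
Your argument is correct in spirit but follows a genuinely different route from the paper, and there is one small but real gap that should be closed.

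\emph{Comparison of approaches.} The paper proves the lemma directly for a general divisor $D=\sum k_i p_i$: by Riemann--Roch it produces a meromorphic differential of the third kind $\omega_D$ with residues $k_i$ and zero $A$-periods, constructs a harmonic differential $u$ with matching $B$-periods, and sets $\alpha = u^{0,1}$ (automatically anti-holomorphic); the function $\exp\int(\omega_D-u)$ realizes the isomorphism $\Ll_\alpha\cong\Oo_\Sigma(D)$, and both $\Aa(D)$ and $\Phi(\alpha)$ are identified with $(2\pi i)^{-1}(\int_{B_j}\omega_D)_j$ via two applications of the Riemann bilinear relations. Your approach instead reduces to $D=p-q$ by the homomorphism property (which is legitimate: $\Ll_{\alpha_1}\otimes\Ll_{\alpha_2}\cong\Ll_{\alpha_1+\alpha_2}$, $\Phi$ is linear, $\Aa$ is a homomorphism on $\Div^0$, and $\Phi(\alpha_D)\in\Jac(\Sigma)$ depends only on $D$ by Lemma~\ref{lemma:alpha_1_sim_alpha_2}), then builds a \emph{smooth} $(0,1)$-form $\tilde\alpha=-2\pi i\,\dbar\chi$ by hand from a cut along $\gamma$, identifies $\Ll_{\tilde\alpha}\cong\Oo(p-q)$ via the single-valued $e^{2\pi i\chi}$, and reads off $\int_\Sigma\omega_j\wedge\tilde\alpha = 2\pi i\int_\gamma\omega_j$ by Stokes on the cut surface. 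This is more elementary and more explicit: it dispenses with Riemann--Roch, third-kind differentials, and the reciprocity law. The cut-and-paste construction of $\chi$ is the standard one and does work; you are right that $\dbar\chi$ extends smoothly across the interior of $\gamma$ precisely because the jump is a constant, and vanishes near $p,q$ since $\chi$ is locally holomorphic there.

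\emph{The gap.} The form $\tilde\alpha = -2\pi i\,\dbar\chi$ you build is a general smooth $(0,1)$-form, \emph{not} an anti-holomorphic one, whereas the lemma is stated for $\alpha\in H^{0,1}(\Sigma)$ and $\Phi$ in~\eqref{eq:def_of_Phi} is defined only on $H^{0,1}(\Sigma)$. Your Stokes computation proves the identity for the obvious extension $\tilde\Phi$ of $\Phi$ to smooth $(0,1)$-forms applied to $\tilde\alpha$, but you never pass to the anti-holomorphic representative. The fix is short but should be stated: write the Dolbeault decomposition $\tilde\alpha=\dbar\vphi+\tilde\alpha_h$; then $\Ll_{\tilde\alpha_h}\cong\Ll_{\tilde\alpha}\cong\Oo(p-q)$ by (the proof of) Lemma~\ref{lemma:deg0_bundles_via_connections}, and $\tilde\Phi(\dbar\vphi)=0$ because $\int_\Sigma\omega_j\wedge\dbar\vphi=0$ by another Stokes argument (here $\omega_j$ is holomorphic). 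Hence $\Phi(\tilde\alpha_h)=\tilde\Phi(\tilde\alpha)=\Aa(p-q)$, and since any other $\alpha\in H^{0,1}$ with $\Ll_\alpha\cong\Oo(p-q)$ differs from $\tilde\alpha_h$ by an element of $\Lambda$, one gets $\Phi(\alpha)=\Aa(p-q)$ in $\Jac(\Sigma)$. Without this paragraph the proof does not quite establish the lemma as stated.
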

\begin{proof}
  The fact that $\Ll_\alpha \cong \Oo_\Sigma(D)$ means that there is a meromorphic section $\vphi$ of $\Ll_\alpha$ with $\div \vphi = D$. By the construction of $\Ll_\alpha$, $\vphi$ is a function on $\Sigma\smm D$ satisfying $(\dbar + \alpha)\vphi = 0$ and having prescribed singularities at the support of $D$. Write
  \[
    D = \sum_{i = 1}^m k_ip_i,\qquad \sum_{i = 1}^m k_i = 0
  \]
  and let $\gamma$ be as above.
  Let us construct the function $\vphi$ explicitly. By Riemann-Roch theorem, there exists a meromorphic $(1,0)$-form $\omega_D$ with simple poles at $p_i$'s and $\Res_{p_i}\omega_D = k_i$. Let the basis cycles $A_1,\dots,A_g,B_1,\dots,B_g$ be represented by simple curves not intersecting $\gamma$. By substracting from $\omega_D$ an appropriate linear combination of $\omega_i$'s we can assume that $\int_{A_i}\omega_D = 0$ for any $i = 1,\dots, g$. Let $u$ be the harmonic differential such that
  \begin{equation}
    \label{eq:PsiPhi=Aa1}
    \int_{A_j}u = 0,\quad \int_{B_j} u = \int_{B_j} \omega_D,\qquad j = 1,\dots,g.
  \end{equation}
  Fix a reference point $p_0\in \Sigma$ and consider the function
  \[
    \vphi(p) = \exp(\int\limits_{p_0}^p (\omega_D - u)).
  \]
  Define
  \[
    \alpha = u^{0,1}.
  \]
  Then it is clear that $\vphi$ defines a meromorphic section of $\Ll_\alpha$ with $\div \vphi = D$, hence $\Oo_\Sigma(D)\cong \Ll_\alpha$.

  A straightforward repetition of the proof of Riemann bilinear relations implies that
  \[
    \Aa(D) = (2\pi i)^{-1} (\int_{B_1}\omega_D,\dots,\int_{B_g}\omega_D) 
  \]
  On the other hand, applying~\eqref{eq:Riemann bilinear relations} and~\eqref{eq:PsiPhi=Aa1} we get that
  \[
    \Phi(\alpha) = (2\pi i)^{-1}(\int_\Sigma \omega_1\wedge u,\dots,\int_\Sigma \omega_g\wedge u) = (2\pi i)^{-1} (\int_{B_1}\omega_D,\dots,\int_{B_g}\omega_D).
  \]
  We conclude that $\Phi(\alpha) = \Aa(D)$.
\end{proof}

We can summarize the discussion in the following commutative diagram of bijections:
\begin{equation}
  \label{eq:different_Pic}
  \vcenter{\xymatrix{& \Pic^0(\Sigma) \ar[d]^\cong \ar[dr]^\Psi & \\
  \Div^0(\Sigma)\ar[ur]^{D\mapsto \Oo_\Sigma(D)} \ar[r]^{\Aa} & \Jac(\Sigma) & \frac{H^{0,1}(\Sigma)}{\Lambda} \ar[l]_\Phi}}
\end{equation}

We finish this subsection with the following lemma:
\begin{lemma}
  \label{lemma:difference_of_two_spin_bundles}
  Let $q_1,q_2$ be two quadratic forms on $H_1(\Sigma, \ZZ/2\ZZ)$ and $\Ff_1,\Ff_2$ be the corresponding spin line bundles given by Proposition~\ref{prop:topological_charact_of_spin_bundles}. Let $\alpha\in H^{0,1}(\Sigma)$ be representing $\Psi(\Ff_2\otimes \Ff_1^\vee)$. Then $2\pi^{-1}\Im \alpha$ has an integer cohomology class and we have
  \[
    q_2 - q_1 = 2\pi^{-1}\Im\alpha\mod 2.
  \]
\end{lemma}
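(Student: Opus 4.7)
The first assertion is quick: the spin isomorphisms $\Ff_i^{\otimes 2}\cong K_\Sigma$ give a canonical $(\Ff_2\otimes\Ff_1^\vee)^{\otimes 2}\cong K_\Sigma\otimes K_\Sigma^\vee\cong\Oo_\Sigma$, so $\Ff_2\otimes\Ff_1^\vee$ is $2$-torsion in $\Pic^0(\Sigma)$. Translating through diagram~\eqref{eq:different_Pic} via $\Psi$ and Lemma~\ref{lemma:alpha_1_sim_alpha_2}, $2$-torsion in $H^{0,1}(\Sigma)/\Lambda$ is precisely the condition $2\alpha\in\Lambda$, i.e.\ $2\pi^{-1}\Im\alpha$ has integer cohomology class.

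For the second assertion, both sides lie in $H^1(\Sigma,\ZZ/2\ZZ)$, so I would verify the equality by testing against an arbitrary simple oriented loop $\gamma$ and translating both sides into monodromies of an auxiliary multi-valued function. Pick nonzero meromorphic sections $\phi_i$ of $\Ff_i$ whose divisors avoid $\gamma$, set $\omega_i=\phi_i\otimes\phi_i$ (a meromorphic $(1,0)$-form with divisor of even multiplicity), and $h=\omega_2/\omega_1$ (a meromorphic function on $\Sigma$ whose divisor has only even multiplicities). Lemma~\ref{lemma:wind_lift} applied to each $\omega_i$ off its divisor, together with Theorem~\ref{thmas:Johnson_spin_structures}, gives $q_i(\gamma)=(2\pi)^{-1}\wind(\gamma,\omega_i)+1\mod 2$, and subtracting yields
\[
q_2(\gamma)-q_1(\gamma)=(2\pi)^{-1}\wind(\gamma,h)\mod 2,
\]
which is the integer $n\mod 2$ for which any local branch of $\sqrt{h}$ picks up the factor $(-1)^n$ upon transport around $\gamma$.

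To identify this monodromy with $2\pi^{-1}\int_\gamma\Im\alpha\mod 2$, fix an isomorphism $\iota:\Ff_2\otimes\Ff_1^\vee\cong\Ll_\alpha$ and set $f=\iota(\phi_2\otimes\phi_1^{-1})$; then $f$ is a single-valued meromorphic function on $\Sigma$ satisfying $(\dbar+\alpha)f=0$. The proof of Lemma~\ref{lemma:alpha_1_sim_alpha_2} provides the explicit isomorphism $\Ll_{2\alpha}\to\Oo_\Sigma$ as multiplication by $\exp(4i\int\Im\alpha)$ (well-defined because $2\alpha\in\Lambda$); chasing $(\phi_2\otimes\phi_1^{-1})^{\otimes 2}$ through $\iota^{\otimes 2}$ followed by this isomorphism, and comparing with the tautological identification $(\Ff_2\otimes\Ff_1^\vee)^{\otimes 2}\cong K_\Sigma\otimes K_\Sigma^\vee\cong\Oo_\Sigma$ (whose image is $h$), yields
\[
f^2\exp(4i\int\Im\alpha)=c\cdot h,\qquad c\in\CC^*,
\]
hence $f=c^{1/2}\sqrt{h}\exp(-2i\int\Im\alpha)$ as multi-valued functions. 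Single-valuedness of $f$ forces the monodromy of $\sqrt{h}$ around $\gamma$ to cancel that of $\exp(-2i\int\Im\alpha)$, which equals $(-1)^{2\pi^{-1}\int_\gamma\Im\alpha}$; combined with the previous display this gives $q_2(\gamma)-q_1(\gamma)\equiv 2\pi^{-1}\int_\gamma\Im\alpha\pmod 2$. The only bookkeeping subtlety is the scalar $c$, which appears because two isomorphisms of line bundles on connected $\Sigma$ differ by a global constant; it plays no role in the monodromy comparison because a constant rescaling does not change monodromies.
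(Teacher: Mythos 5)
Your proof is correct, and it takes a genuinely different route from the paper's at the key step. After establishing the common starting point
\[
q_2(\gamma)-q_1(\gamma)=\frac{1}{2\pi}\wind(\gamma,h)\ \bmod 2,
\qquad h=\omega_2/\omega_1,
\]
the paper passes through the explicit Abel--Jacobi construction: it sets $\omega_D=\tfrac12\,d\log h$, manufactures a harmonic differential $u$ matching periods of $\omega_D$ in an appropriate sense, and then reruns the argument of Lemma~\ref{lemma:PsiPhi=Aa} (the explicit meromorphic-section construction of $\Ll_\alpha$) to identify $\alpha=u^{0,1}=2i\Im\alpha$. You instead exploit that $(\Ff_2\otimes\Ff_1^\vee)^{\otimes 2}$ carries \emph{two} natural trivializations -- one from $\iota^{\otimes 2}$ followed by the explicit isomorphism $\Ll_{2\alpha}\to\Oo_\Sigma$ given in Lemma~\ref{lemma:alpha_1_sim_alpha_2}, the other from the tautological spin identifications -- and these must differ by a global constant. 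Comparing the image of $(\phi_2\otimes\phi_1^{-1})^{\otimes 2}$ under both, the resulting equation $g^2\exp(4i\int\Im\alpha)=c\,h$ forces the $\pm 1$ monodromies of $\sqrt{h}$ and $\exp(-2i\int\Im\alpha)$ to cancel, which is precisely the desired congruence. Your version is more self-contained: it replaces the appeal to the internals of Lemma~\ref{lemma:PsiPhi=Aa} by a direct comparison of trivializations, and it sidesteps the slightly delicate ``period of $\omega_D$ mod $2$'' bookkeeping (where one must track that $\frac{1}{2\pi i}\int_C\omega_D$ is a priori only well defined modulo $1$, not modulo $2$). The one cosmetic issue is the name clash -- you reuse $f$ both for the section $\iota(\phi_2\otimes\phi_1^{-1})$ of $\Ll_\alpha$ and, implicitly, for the meromorphic ratio $h$ -- but the argument itself is sound and the scalar $c$ is handled correctly.
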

\begin{proof}
  The fact that $2\pi^{-1}\Im \alpha$ follows from Lemma~\ref{lemma:alpha_1_sim_alpha_2} immediately. Let $D_i$ be any divisor such that $\Ff_i \cong \Oo_\Sigma(D_i)$, and let $\omega_i$ be a meromorphic differential with divisor $2D_i$. Let $f$ be the meromorphic function defined by $\omega_2 = f\omega_1$. From Lemma~\ref{lemma:wind_lift} and Theorem~\ref{thmas:Johnson_spin_structures} it follows that for any curve $C$
  \begin{equation}
    \label{eq:tsb1}
    q_2(C) - q_1(C) = \frac{1}{2\pi i}\int_C d\log f \mod 2.
  \end{equation}
  Let $D = D_2- D_1$ and let $\omega_D = \frac{1}{2}d\log f$. Let $u$ be a harmonic differential on $\Sigma$ such that for any curve $C$ we have $\frac{1}{2\pi i}\int_C u = \frac{1}{2\pi i}\int_C\omega_D \mod 2$. Repeating the arguments from the proof of Lemma~\ref{lemma:PsiPhi=Aa} we can show that $\alpha$ can be taken to be $u^{0,1}$. Since $u$ is purely imaginary, it is equivalent to set
  \[
    u = 2i\Im \alpha.
  \]
  But in this case~\eqref{eq:tsb1} becomes the desired equality.
\end{proof}

\subsection{Theta function: definition and basic properties}\label{subsec:theta_function}

In this section we recall the definition of the theta function with a characteristics and listen some of its basic properties that we will use later. Given two real vectors $a,b\in \RR^g$ and $z\in \CC^g$ we define the function $\theta\chr{a}{b}(z,\Omega)$ as follows 
\begin{equation}
  \label{eq:def_of_theta}
  \theta\chr{a}{b}(z,\Omega) = \sum_{m\in \ZZ^g}\exp\Bigl( \pi i (m+a)^t\cdot \Omega (m+a) + 2\pi i (z-b)^t(m+a) \Bigr).
\end{equation}
The function $\theta\chr{a}{b}$ is called \emph{theta function with characteristics} $(a,b)$. We will use the traditional notation
\begin{equation}
  \label{eq:def_of_theta_zero_char}
  \theta(z,\Omega) = \theta\chr{0}{0}(z,\Omega).
\end{equation}

\begin{rem}
  \label{rem:miunus_in_def_of_theta}
  Classically~\cite[Chapter~II]{MumfordTata1}, theta function with characteristics $[a,b]$ is defined to be equal to $\theta\chr{a}{-b}$ in our notation. We choose a non-standard normalization to simplify the notation for the periodicity properties of $\theta$, see Proposition~\ref{prop:of_theta}.
\end{rem}

We now follow the notation from Section~\ref{subsec:simplicial_basis}. Let $q$ be a quadratic form on $H_1(\Sigma, \ZZ/2\ZZ)$ (see Section~\ref{subsec:spin_line_bundles}). Then there exist $a_i,b_i \in \{ 0,\frac{1}{2} \}$ such that
\begin{equation}
  \label{eq:a_b_for_quadratic_form}
  q(A_i) = 2a_i,\qquad q(B_i) = 2b_i,\qquad i = 1,\dots,g.
\end{equation}
In this case we say that the quadratic form $q$ has characteristics $[a,b]$. We denote by $q_\zero$ the quadratic form with zero characteristics, and by $\Ff_\zero$ the corresponding spin line bundle.

The following proposition is a straightforward computation (see~\cite[Chapter~II]{MumfordTata1}).

\begin{prop}
  \label{prop:of_theta}
  The function $\theta\chr{a}{b}$ defined as above satisfies the following properties:
  \begin{enumerate}
    \item for any $k\in \ZZ^{g}$ one has
      \[
        \begin{split}
          &\theta\chr{a}{b}(z+k,\Omega) = \exp(2\pi i k^t\cdot a)\theta\chr{a}{b}(z,\Omega),\\
          &\theta\chr{a}{b}(z+\Omega k, \Omega) = \exp(2\pi ik^t\cdot b ) \exp(-\pi i k^t\cdot\Omega k - 2\pi iz^t\cdot k)\theta\chr{a}{b}(z,\Omega).
        \end{split}
      \]

    \item Let $a,b\in \frac{1}{2}\ZZ^g$ the characteristics of a quadratic form $q$, then 
      \[
        \theta\chr{a}{b}(-z,\Omega) = (-1)^{\mathrm{Arf(q)}}\theta\chr{a}{b}(z,\Omega)
      \]

    \item For all $a,b\in \RR^g$ we have
      \[
        \theta\chr{a}{b}(z,\Omega) = \exp(\pi i a^tBa + 2\pi i (z-b)^t\cdot a)\theta(z-b+Ba,\Omega).
      \]
  \end{enumerate}
\end{prop}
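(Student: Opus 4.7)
All three assertions are straightforward manipulations of the defining series~\eqref{eq:def_of_theta}. The main tools are re-indexing the sum by $m\mapsto m+k$ (allowed because $k\in\ZZ^g$) and completing the square in the quadratic form $(m+a)^t\Omega(m+a)$. Nothing deep is happening; the only care needed is that $2a,2b\in\ZZ^g$ in item~(2) so that certain shifts and parities behave as expected.

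\textbf{Item (1).} For the first identity, the shift $z\mapsto z+k$ changes the exponent by $2\pi i k^t(m+a) = 2\pi i k^t m + 2\pi i k^t a$. Since $k^t m\in\ZZ$, the first summand contributes $1$ and we extract $\exp(2\pi i k^t a)$ from every term of the series. For the second identity, the shift $z\mapsto z+\Omega k$ changes the exponent by $2\pi i k^t\Omega(m+a)$ (using symmetry of $\Omega$); combining this with $\pi i(m+a)^t\Omega(m+a)$ and completing the square yields $\pi i (m+k+a)^t\Omega(m+k+a)-\pi i k^t\Omega k$. Re-indexing $m\to m-k$ in the resulting sum produces $\theta\chr{a}{b}(z,\Omega)$ multiplied by $\exp(-\pi i k^t\Omega k - 2\pi i(z-b)^t k)$, which rearranges to the desired factor.

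\textbf{Item (2).} Starting from $\theta\chr{a}{b}(-z,\Omega)$, the substitution $m\mapsto -m-2a$ (a bijection of $\ZZ^g$ because $2a\in\ZZ^g$) sends $m+a$ to $-(m+a)$ and flips the sign of the linear term. The result equals
\[
  \sum_{m\in\ZZ^g}\exp\bigl(\pi i(m+a)^t\Omega(m+a) + 2\pi i(z-b)^t(m+a) + 4\pi i b^t(m+a)\bigr).
\]
Because $2b\in\ZZ^g$, $\exp(4\pi i b^t m)=1$, so the extra factor collapses to the constant $\exp(4\pi i a^t b)$. Finally, from $q(A_i)=2a_i$, $q(B_i)=2b_i$ and~\eqref{eq:def_of_Arf}, one has $\Arf(q)=4a^tb\bmod 2$, whence $\exp(4\pi i a^t b)=(-1)^{\Arf(q)}$.

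\textbf{Item (3).} This is just a completion of the square starting from $\theta(z-b+\Omega a,\Omega)$: the identity
\[
  \pi i m^t\Omega m + 2\pi i a^t\Omega m = \pi i(m+a)^t\Omega(m+a) - \pi i a^t\Omega a
\]
together with $2\pi i(z-b)^t m = 2\pi i(z-b)^t(m+a) - 2\pi i(z-b)^t a$ converts the series for $\theta(z-b+\Omega a,\Omega)$ into $\exp(-\pi i a^t\Omega a - 2\pi i(z-b)^t a)\cdot \theta\chr{a}{b}(z,\Omega)$, and solving for $\theta\chr{a}{b}(z,\Omega)$ gives the stated formula (the symbol $B$ in the excerpt is a typo for $\Omega$). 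The only mild obstacle in the whole proof is bookkeeping of signs and the use of $2a,2b\in\ZZ^g$ in item~(2); no deeper input is required.
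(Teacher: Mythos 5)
Your proof is correct and follows exactly the route the paper has in mind: the paper gives no proof here, remarking only that the result is ``a straightforward computation'' with a pointer to Mumford's Tata Lectures, and your three computations are precisely the standard re-indexing and completion-of-square manipulations that reference carries out (adjusted for this paper's nonstandard sign in the $b$-characteristic). Your identification of the symbol $B$ in item (3) as a typo for $\Omega$ is also right, as can be seen from how the identity is applied in the proof of Corollary~\ref{cor:ratio_of_partition_functions}.
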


The following theorem is usually referred as Riemann theorem on theta divisor (see~\cite[Chapter~II.3]{MumfordTata1} and~\cite[Chapter~II.7]{GriffitsHarris}). Recall the notation from Section~\ref{subsec:Jacobian}. Define
\[
  \Theta = \{ z\in \Jac(\Sigma)\ \mid\ \theta(z,\Omega) = 0 \}
\]
\begin{thmas}
  \label{thmas:theta_divisor}
  Let $D_0$ be a divisor on $\Sigma$ such that $\Ff_\zero = \Oo_\Sigma(D_0)$. Then for any other divisor $D$ we have
  \[
    \ord_{\Aa(D-D_0)}\theta(\cdot,\Omega) = \dim \Gamma(\Sigma, \Oo(D)),
  \]
  where $\Gamma(\Sigma, \Oo(D))$ is the space of global holomorphic sections of $\Oo(D)$. In particular,
  \[
    \Theta = \{ \Aa(D - D_0)\in \Jac(\Sigma)\ \mid\ \deg D = g-1,\ D\geq 0 \}.
  \]
\end{thmas}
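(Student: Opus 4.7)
\medskip

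My plan is to prove Theorem~\ref{thmas:theta_divisor} along the classical lines of Riemann's vanishing theorem. First I would verify that the base assumption $\Oo_\Sigma(D_0) \cong \Ff_\zero$ forces $\deg D_0 = g-1$, since $2D_0 \sim K_\Sigma$ has degree $2g-2$; in particular, for $\Aa(D-D_0)$ to descend to a point in $\Jac(\Sigma)$ we need $\deg D = g-1$, so the statement is only nontrivial in this range (for $\deg D \neq g-1$ both sides are interpreted as $0$ via Riemann--Roch).

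The central construction is the following. Fix a point $e \in \CC^g$ representing a class in $\Jac(\Sigma)$ and consider the multivalued function
\[
  f_e(p) \;=\; \theta(\Aa(p) - \Aa(D_0) - e, \Omega)
\]
on $\Sigma$. Using the quasi-periodicity properties of $\theta$ listed in Proposition~\ref{prop:of_theta}, together with the periods $\int_{A_j}\omega_k = \delta_{jk}$ and $\int_{B_j}\omega_k = \Omega_{jk}$, a direct calculation shows that the transformation law of $f_e$ under analytic continuation along a cycle $n\cdot A + m \cdot B$ is precisely the transition function of a holomorphic line bundle $L_e$ on $\Sigma$ of degree $g$, independent of $e$. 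Thus $f_e$ is a global holomorphic section of $L_e$; by Riemann--Roch it either vanishes identically or has exactly $g$ zeros (with multiplicity).

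Assuming $f_e \not\equiv 0$, let its zero divisor be $D_e = q_1 + \ldots + q_g$. Computing the Abel image of $D_e$ by integrating $d\log f_e$ around a canonical dissection of $\Sigma$ (this is the classical Riemann period computation, using the quasi-periodicity of $\theta$ once more) yields
\[
  \Aa(D_e) \;\equiv\; e + \Aa(D_0) + \kappa \quad \mod \ZZ^g + \ZZ^g\Omega,
\]
where $\kappa$ is a universal constant (the Riemann constant) depending only on the base point and the choice of canonical basis. The normalization $\Ff_\zero \cong \Oo_\Sigma(D_0)$ with characteristics $[0,0]$ forces $\kappa = 0$ in our setup — this is the content of the second item of Proposition~\ref{prop:of_theta} combined with Lemma~\ref{lemma:difference_of_two_spin_bundles} applied to $\Ff_\zero$. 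Setting $e = \Aa(D - D_0)$, the identity $\Aa(D_e) = \Aa(D)$ combined with Abel's theorem yields $D \sim D_e$, so $D$ is linearly equivalent to an effective divisor iff $\theta$ does not vanish at $\Aa(D-D_0)$; or, if $f_e \equiv 0$, then $\theta$ vanishes, characterizing the locus $\Theta$.

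The main obstacle, and the step I expect to be the most delicate, is the multiplicity statement: $\ord_{\Aa(D-D_0)} \theta = \dim\Gamma(\Sigma, \Oo(D))$. For this I would use Riemann's singularity theorem proved via a degeneration/induction argument. Concretely, writing $h = h^0(\Sigma, \Oo(D))$ and specializing $D = q_1 + \ldots + q_{g-1}$ to have an $h$-dimensional space of deformations, one shows by carefully differentiating $\theta(\Aa(p_1+\ldots+p_{g-1}-D_0), \Omega)$ in the $p_i$ variables and using the Jacobi inversion theorem that all partial derivatives of order $<h$ vanish at $\Aa(D-D_0)$, while some derivative of order exactly $h$ is nonzero (the latter is a nondegeneracy statement equivalent to the injectivity of a certain Gauss-like map on the canonical curve). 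The reverse inequality follows from the fact that vanishing of order $h$ at $\Aa(D-D_0)$ forces the existence of $h$ linearly independent directions in which $D$ can move, which via Abel's theorem gives $h$ linearly independent sections of $\Oo(D)$. Combining both inequalities concludes the proof.
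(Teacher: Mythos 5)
The paper's proof explicitly \emph{cites} the classical Riemann vanishing/singularity theorem and devotes itself entirely to a single point that the author felt was missing from the standard references: that the divisor $D_0$ appearing in the classical theorem (i.e.\ the one with the Riemann constant baked into $\Aa(D-D_0)$) has the \emph{topologically} defined spin structure $q_\zero$ of trivial theta characteristics, where ``topologically defined'' means via the winding-number characterization of Proposition~\ref{prop:topological_charact_of_spin_bundles} and Theorem~\ref{thmas:Johnson_spin_structures}. The paper's argument is short: take an arbitrary spin bundle $\Ff$ with quadratic form $q$, write $\Ff\otimes\Ff_0^\vee = \Ll_\alpha$, use $\Phi(\alpha) = b-a\Omega$ and the translation law of $\theta$ to convert the vanishing order of $\theta$ at $\Aa(D-D_0)$ into parity of $\theta\chr{a}{b}(0)$, apply the classical singularity theorem (assumed) plus Theorem~\ref{thmas:Arf_and_h0} to identify both sides as Arf invariants, and then vary $\Ff$ to force $q_0 = q_\zero$.

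Your proposal takes a genuinely different and much more ambitious route: it attempts to \emph{re-prove} the classical Riemann vanishing and singularity theorems (via the multivalued function $f_e$, the period computation for $\Aa(D_e)$, and a degeneration/differentiation argument for the multiplicity). This is orthogonal to what the paper proves and not what the theorem, in context, needs — the paper is happy to cite that classical part. More importantly, there is a gap at exactly the step the paper cares about. You write that ``the normalization $\Ff_\zero \cong \Oo_\Sigma(D_0)$ with characteristics $[0,0]$ forces $\kappa = 0$'' and attribute this to Proposition~\ref{prop:of_theta}~(2) and Lemma~\ref{lemma:difference_of_two_spin_bundles}. But those two facts alone do not establish it. Lemma~\ref{lemma:difference_of_two_spin_bundles} only tells you how the quadratic form changes under tensoring by a degree-zero bundle, and Proposition~\ref{prop:of_theta}~(2) only relates the parity of $\theta\chr{a}{b}$ at $0$ to $\Arf$. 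To actually identify the Riemann constant with the topological spin structure you need to combine these with (i)~the classical singularity theorem, and (ii)~Theorem~\ref{thmas:Arf_and_h0} (Arf $=$ $h^0$ mod~$2$), and then argue that the equivalence ``$q_0 + l$ odd $\iff q_\zero + l$ odd'' holding for \emph{all} $l\in H^1(\Sigma,\ZZ/2\ZZ)$ forces $q_0 = q_\zero$. That last quantifier-over-all-$l$ step is the crux and does not appear in your sketch. As written, the proposal assumes the identification it is supposed to prove.

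A secondary issue: your multiplicity argument (``degeneration/induction argument,'' ``carefully differentiating,'' ``nondegeneracy statement equivalent to the injectivity of a certain Gauss-like map'') is a gesture at the standard proof rather than a proof. Within this paper's economy that portion would simply be a citation, but if you intend to give the argument you would need to actually carry it out; as stands it is not a proof step.

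In short: you attempt to prove more than the paper does, you skip the step the paper actually proves, and the step you skip is the theorem's real content in this context.
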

\begin{proof}
  This result is classical, but it is usually not mentioned in the literature that the spin line bundle $\Oo(D_0)$ corresponds to $q_\zero$ in the sense of topological Proposition~\ref{prop:topological_charact_of_spin_bundles}. This topological fact is straightforward, but not completely elementary, so let us prove it here for the sake of completeness. Below we assume that $D_0$ is a divisor such that $\Ff_0 = \Oo(D_0)$ is a spin line bundle and all the assertions of the theorem, except $\Ff_0 \cong \Ff_\zero$, hold. Let $q_0$ be the quadratic form corresponding to $\Ff_0$.

  Consider any other spin line bundle $\Ff$ corresponding to a quadratic form $q$. Recall the notation from Section~\ref{subsec:Jacobian}. Let $\alpha\in H^{0,1}(\Sigma)$ represent $\Psi(\Ff\otimes \Ff_0^\vee)$. Let $a,b\in \RR^g$ to be the vectors of $A$- and $B$-periods of $\pi^{-1}\Im \alpha$ respectively, recall that
  \[
    \Phi(\alpha) = b - a\Omega 
  \]
  by~\eqref{eq:Phi_alpha_via_a_b}. Let $D$ be such that $\Ff \cong \Oo_\Sigma(D)$. By Lemma~\ref{lemma:PsiPhi=Aa} and the 3rd item of Proposition~\ref{prop:of_theta} we have
  \[
    \theta(z + \Aa(D-D_0),\Omega) = \theta(z + \Phi(\alpha), \Omega) = \theta(z + b-a\Omega,\Omega) = \exp(\ldots)\cdot \theta\chr{a}{b}(z,\Omega).
  \]
  This and the 2nd item of Proposition~\ref{prop:of_theta} imply that the order of $\theta(\cdot, \Omega)$ at $\Aa(D-D_0)$ is odd if and only if $[a,b]$ is an odd characteristics. We conclude with the properties of $D_0$ that
  \begin{equation}
    \label{eq:td1}
    \Ff \text{ is odd }\quad \Leftrightarrow \quad [a,b]\text{ is an odd characteristics.}
  \end{equation}
  Note that, by Lemma~\ref{lemma:difference_of_two_spin_bundles}, the form $2\pi^{-1}\Im \alpha$ defines a cohomology class $l\in H^1(\Sigma, \ZZ/2\ZZ)$ and $q = q_0 + l$. By Theorem~\ref{thmas:Arf_and_h0}, and the definition of $q_\zero$ the~eq.~\eqref{eq:td1} is equivalent to
  \begin{equation}
    \label{eq:td2}
    q_0 + l \text{ is odd } \quad \Leftrightarrow \quad q_\zero + l\text{ is odd}.
  \end{equation}
  Since $l$ is arbitrary, this immediately implies that $q_0 = q_\zero$.
\end{proof}

\subsection{The prime form}\label{subsec:the_prime_form}

In this section we briefly recall and study the construction of the prime form on the surface $\Sigma$. We refer the reader to~\cite[Chapter~II]{Fay} for more complete exposition. Informally speaking, the prime form $\pf(x,y)$ is a proper analogy of the function $x-y$ when $\CC$ is replaced by $\Sigma$. We begin by recalling how sections of spin line bundles can be constructred in terms of theta functions.

Let $\Ff_-$ be an arbitrary odd spin line bundle (see Section~\ref{subsec:spin_line_bundles}) with the corresponding quadratic form $q_-$, and let $[a_-,b_-]$ be the characteristics of $q_-$ (see Section~\ref{subsec:theta_function}). Let $\omega_1,\dots, \omega_g$ be the basis of normalized differentials, see Section~\ref{subsec:simplicial_basis}. Define
\begin{equation}
  \label{eq:def_of_dif_eta}
  \omega_-(p) = \sum_{j = 1}^g \frac{\partial}{\partial z_j} \theta\chr{a^-}{b^-}(0, \Omega)\omega_j(p).
\end{equation}

\begin{lemma}
  \label{lemma:on_dif_eta}
  The differentia $\omega_-$ is a square of a holomorphic section of $\Ff_-$. The line bundle $\Ff_-$ can be chosen such that $\omega_-$ is non-zero.
\end{lemma}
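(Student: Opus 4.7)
My plan follows the classical treatment of theta constants with odd characteristic, see Fay~\cite[Chapter~I]{Fay}. First, by Theorem~\ref{thmas:Arf_and_h0} applied to the odd quadratic form $q_-$ associated with $\Ff_-$, we have $\dim \Gamma(\Sigma, \Ff_-) \equiv \Arf(q_-) \equiv 1 \pmod 2$, so $\Ff_-$ admits a non-zero holomorphic section regardless of any genericity assumption. To address the second assertion of the lemma, I choose $\Sigma$ generically (outside a proper analytic subvariety of the moduli space) so that $\dim \Gamma(\Sigma, \Ff_-) = 1$; via Theorem~\ref{thmas:theta_divisor} together with Proposition~\ref{prop:of_theta}(3), this is equivalent to $\theta\chr{a^-}{b^-}(\cdot, \Omega)$ having a simple zero at the origin. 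Pick $s_- \in \Gamma(\Sigma, \Ff_-) \smm \{0\}$ and let $D_- := \div s_-$, an effective divisor of degree $g - 1$ whose support is distinct for generic $\Sigma$; then $s_-^{\otimes 2}$ is a non-zero holomorphic $(1,0)$-form with divisor $2 D_-$.

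The crux is to identify $\omega_-$ with a scalar multiple of $s_-^{\otimes 2}$. Given $q_0 \in \Sigma$, consider the multivalued function
\[
  F(p) := \theta\chr{a^-}{b^-}(\Aa(p - q_0), \Omega).
\]
Since $[a^-, b^-]$ is an odd characteristic, Proposition~\ref{prop:of_theta}(2) gives $F(q_0) = 0$, and the first-order Taylor expansion of $F$ near $q_0$ reads
\[
  F(p) = \sum_{j=1}^g \partial_{z_j} \theta\chr{a^-}{b^-}(0, \Omega) \int_{q_0}^p \omega_j + O(|\Aa(p - q_0)|^2),
\]
so that $(dF)(q_0) = \omega_-(q_0)$. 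On the other hand, Riemann's vanishing theorem (Theorem~\ref{thmas:theta_divisor}) combined with Proposition~\ref{prop:of_theta}(3) and Lemma~\ref{lemma:PsiPhi=Aa} describes the full zero divisor of $F$ on $\Sigma$ as $q_0 + E$, where $E$ is effective of degree $g - 1$ with $\Aa(E) \equiv \Aa(D_-)$ in $\Jac(\Sigma)$. The genericity hypothesis $\dim |\Ff_-| = 0$ then forces $E = D_-$ as effective divisors. For $q_0 \in \supp D_-$, the point $q_0$ therefore appears in $q_0 + E$ with multiplicity at least $2$, so $(dF)(q_0) = 0$ and $\omega_-(q_0) = 0$; the analogous argument with multiplicities gives $\div \omega_- \geq D_-$. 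By Serre duality,
\[
  \dim H^0(\Sigma, K_\Sigma \otimes \Oo(-D_-)) = \dim H^0(\Sigma, \Oo(D_-)) = \dim \Gamma(\Sigma, \Ff_-) = 1,
\]
so the space of holomorphic $(1,0)$-forms vanishing on $D_-$ is one-dimensional and contains $s_-^{\otimes 2}$; consequently $\omega_- = c \cdot s_-^{\otimes 2}$ for some $c \in \CC$.

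Finally, the genericity of $\Sigma$ implies that $\theta\chr{a^-}{b^-}(\cdot, \Omega)$ has a simple zero at the origin, so some partial derivative $\partial_{z_j} \theta\chr{a^-}{b^-}(0, \Omega)$ is non-zero; linear independence of $\omega_1, \ldots, \omega_g$ then gives $\omega_- \not\equiv 0$ and hence $c \neq 0$, so that $\omega_- = (\sqrt{c}\, s_-)^{\otimes 2}$ is the square of a holomorphic section of $\Ff_-$, proving both assertions. The main obstacle in making the argument fully rigorous is the identification $E = D_-$: it requires careful bookkeeping of the Riemann constant and the shift formulas relating $\theta\chr{a^-}{b^-}$ to $\theta$, together with the identification of $\Ff_-$ with its image in $\Jac(\Sigma)$ via Lemma~\ref{lemma:PsiPhi=Aa} and a multiplicity analysis when $D_-$ is not reduced; a fully detailed execution can be found in Fay~\cite[Chapter~I]{Fay}.
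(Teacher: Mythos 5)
Your argument heads in the same direction as the paper's (which leans on Fay: the remark after Definition~2.1 for the choice of $\Ff_-$, and Corollary~1.4 for $\div\omega_- = 2D_-$), but you have introduced a hypothesis that is not available and that reverses the quantifier in the statement.

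The gap: you write ``I choose $\Sigma$ generically (outside a proper analytic subvariety of the moduli space) so that $\dim\Gamma(\Sigma,\Ff_-)=1$.'' But the lemma asserts, for \emph{every} Riemann surface $\Sigma$ in the paper's setup, that the odd spin bundle $\Ff_-$ \emph{can be chosen} so that $\omega_-\neq 0$. The degree of freedom is in $\Ff_-$, not in $\Sigma$; you are not allowed to discard a measure-zero set of surfaces, because the kernel $\Dd_\alpha^{-1}$ of Proposition~\ref{prop:def_of_S}, built out of $\pf$ and hence out of $\omega_-$, must exist for arbitrary $\Sigma$. The correct statement, which is precisely what the paper imports from Fay, is that on any compact Riemann surface of genus $g\geq 1$ there exists \emph{some} odd theta characteristic with $\nabla\theta\chr{a^-}{b^-}(0,\Omega)\neq 0$, equivalently with $h^0(\Ff_-)=1$.

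Two further remarks. First, the genericity of $\Sigma$ is also unnecessary at the later stage: once you know $\omega_-\neq 0$, i.e. $\nabla\theta\chr{a^-}{b^-}(0,\Omega)\neq 0$, Riemann's singularity theorem (the sharper form of Theorem~\ref{thmas:theta_divisor}) already forces $\dim\Gamma(\Sigma,\Ff_-)=1$, so $|\Ff_-|=\{D_-\}$ and your identification $E=D_-$ holds without any extra assumption on $\Sigma$. Second, your multiplicity argument only establishes $\div\omega_-\geq D_-$, and you then recover $\omega_-=c\,s_-^{\otimes 2}$ by the Serre-duality dimension count $h^0(K_\Sigma\otimes\Oo(-D_-))=1$; this is a valid alternative to citing $\div\omega_-=2D_-$ directly, and slightly more self-contained than the paper's route. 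If you replace the genericity-of-$\Sigma$ step by the existence-of-odd-$\Ff_-$ statement from Fay (as the paper does) and the Riemann singularity observation above, the proof closes.
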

\begin{proof}
  For the existence of such a $\Ff_-$ that $\omega_-\neq 0$ see~\cite[Remark after Definition 2.1]{Fay}. Assume that $\Ff_-$ is chosen in this way. Let $D_->0$ be an effective divisor such that $\Ff_- \cong \Oo_\Sigma(D_-)$. Then, in the notation of Theorem~\ref{thmas:theta_divisor} we have $\Aa(D_- - D_0) = b_- - a_-\Omega$, as follows immediately from Lemma~\ref{lemma:difference_of_two_spin_bundles},~\eqref{eq:Phi_alpha_via_a_b} and Lemma~\ref{eq:PsiPhi=Aa1}. Due to this fact and~\cite[Corollary~1.4]{Fay} we have
  \[
    \div \omega_- = 2D_-.
  \]
  It follows that $\omega_-$ is a square of a holomorphic section of $\Ff_-$.
\end{proof}

Following~\cite[Chapter~II]{Fay} we define the prime form by
\begin{equation}
  \label{eq:def_of_prime_form}
  \pf(p,q) = \frac{\theta\chr{a^-}{b^-}(\Aa(p-q),\Omega)}{\sqrt{\omega_-}(p)\sqrt{\omega_-}(q)},
\end{equation}
where $\Aa$ is the Abel map, see~\eqref{eq:def_of_Abel_map}, and by $\sqrt{\omega_-}$ we mean a section of $\Ff_-$ whose square is $\omega_-$. This definition obviously has some ambiguities as one can choose different path of integration in the definition of $\Aa$, and it is not entirely clear what does it mean to divide by a section of a line bundle. To make things more precise, we will always assume that if $p$ and $q$ live in a simply connected domain then the path of integration lives inside this domain, and we identify $\sqrt{\omega_-}$ with a function using some trivialization of $\Ff_-$ over this domain. Otherwise we consider $\pf(p,q)$ to be a multiply-defined function with $(-1/2,0)$-covariance in each variable.

We have the following proposition, see~\cite[Chapter~II]{Fay}:
\begin{prop}
  \label{prop:on_pf}
  The prime form has the following properties:
  \begin{itemize}
    \item $\pf(p,q)$ vanishes when $p=q$ and is non-zero otherwise;
    \item $\pf(q,p) = -\pf(p,q)$
    \item Let $z$ be a local coordinate on $\Sigma$. Then
      \[
        \pf(p,q) = \frac{z(p) - z(q)}{\sqrt{dz}(p)\sqrt{dz}(q)} (1 + O(z(p) - z(q))),\quad \text{as }p\to q
      \]
      where $\sqrt{dz}$ is any local section of $\Ff_-$ such that $\sqrt{dz}\otimes \sqrt{dz}$ maps to $dz$ under $\Ff^{\otimes 2}_-\to T^*\Sigma$.
  \end{itemize}
\end{prop}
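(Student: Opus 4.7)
The plan is to establish the three properties in the order 2, 3, 1, since property 2 is essentially formal, property 3 is a direct Taylor expansion, and property 1 requires the most serious input from Riemann's theorem.

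For property 2, I would use that the characteristic $[a^-,b^-]$ belongs to an \emph{odd} quadratic form $q_-$, so $\Arf(q_-)=1$ and item 2 of Proposition~\ref{prop:of_theta} gives $\theta\chr{a^-}{b^-}(-z,\Omega)=-\theta\chr{a^-}{b^-}(z,\Omega)$. Choosing the paths in the Abel map consistently, $\Aa(q-p)=-\Aa(p-q)$ in $\Jac(\Sigma)$, while the denominator $\sqrt{\omega_-}(p)\sqrt{\omega_-}(q)$ is symmetric in $p,q$. Antisymmetry of $\pf$ follows.

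For property 3, I would Taylor expand the numerator. Oddness of $\theta\chr{a^-}{b^-}$ gives $\theta\chr{a^-}{b^-}(0,\Omega)=0$, so
\[
\theta\chr{a^-}{b^-}(\Aa(p-q),\Omega)=\sum_{j=1}^g\frac{\partial}{\partial z_j}\theta\chr{a^-}{b^-}(0,\Omega)\int_q^p\omega_j+O\bigl((z(p)-z(q))^2\bigr).
\]
In the local coordinate $z$, $\int_q^p\omega_j=\bigl(\omega_j/dz\bigr)(q)(z(p)-z(q))+O((z(p)-z(q))^2)$, so by the definition~\eqref{eq:def_of_dif_eta} of $\omega_-$ the numerator equals $(\omega_-/dz)(q)(z(p)-z(q))(1+O(z(p)-z(q)))$. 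Writing $\sqrt{\omega_-}(p)\sqrt{\omega_-}(q)=(\omega_-/dz)(q)\sqrt{dz}(p)\sqrt{dz}(q)(1+O(z(p)-z(q)))$ (using smoothness of $\omega_-$ at $q$, which is nonzero there by the choice of $\Ff_-$ with $\omega_-\neq 0$) yields the claimed asymptotic.

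For property 1, the main step is to analyze the zeros of the numerator. Using item 3 of Proposition~\ref{prop:of_theta} I would rewrite $\theta\chr{a^-}{b^-}(\Aa(p-q),\Omega)=e^{(\cdot)}\theta(\Aa(p-q)-b^-+\Omega a^-,\Omega)$ with a non-vanishing exponential prefactor. Lemma~\ref{lemma:PsiPhi=Aa} combined with Lemma~\ref{lemma:difference_of_two_spin_bundles} identifies $b^--\Omega a^-$ with $\Aa(D_--D_0)$, where $D_-\geq 0$ is any effective divisor with $\Oo_\Sigma(D_-)\cong\Ff_-$ and $D_0$ is as in Theorem~\ref{thmas:theta_divisor}. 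Hence by that theorem the numerator vanishes exactly when $\Aa(p+D_--q)=\Aa(D)$ for some effective $D$ of degree $g-1$, i.e.\ when $h^0(\Ff_-\otimes\Oo_\Sigma(p-q))>0$. If $p=q$ this holds automatically since $h^0(\Ff_-)\ge 1$. For $p\neq q$, a nonzero section of $\Ff_-\otimes\Oo_\Sigma(p-q)$ would correspond to a meromorphic section of $\Ff_-$ with (at worst) a simple pole at $q$ and a zero at $p$; multiplying by $\sqrt{\omega_-}$ this produces a meromorphic differential with $\div\geq 2D_-+p-q$ that, combined with $\omega_-$ itself, would force the canonical divisor class to have excess sections, contradicting the normalization $\omega_-\neq 0$ and Clifford's theorem (or equivalently, using that $p$ is then a base point of $|\Ff_-|$ beyond $D_-$, which cannot happen for an odd theta characteristic with $\omega_-\neq 0$). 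Thus the numerator is nonzero for $p\neq q$, while the denominator $\sqrt{\omega_-}(p)\sqrt{\omega_-}(q)$ is nonzero on a suitable open dense set and its zeros are controlled by the divisor of $\omega_-$, which is $2D_-$; one checks these zeros cancel with further zeros of the numerator (the zero set of $\theta\chr{a^-}{b^-}(\Aa(\cdot-q),\cdot)$ as a function of $p$ is precisely $q+D_-(q)$ by Fay's analysis). The main obstacle here is the uniqueness-of-representation step in the linear system $|\Ff_-\otimes\Oo_\Sigma(p-q)|$; this is where the oddness of the spin structure and the nondegeneracy assumption $\omega_-\neq 0$ are genuinely used.
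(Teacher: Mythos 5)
The paper itself offers no proof of Proposition~\ref{prop:on_pf}; it simply defers to Fay's book (as stated immediately before the proposition: ``see~\cite[Chapter~II]{Fay}''), so your proposal is necessarily supplying a derivation the paper omits. Your arguments for the second and third bullets are essentially correct and standard: antisymmetry follows from oddness of the characteristic (second item of Proposition~\ref{prop:of_theta}) together with $\Aa(q-p)=-\Aa(p-q)$, and the local expansion follows from $\theta\chr{a^-}{b^-}(0)=0$ plus the first-order Taylor expansion matched against the definition~\eqref{eq:def_of_dif_eta} of $\omega_-$. (Minor caveat on the third bullet: your Taylor computation silently uses $(\omega_-/dz)(q)\neq 0$; to get the expansion uniformly in $q$, including $q\in\supp D_-$, one needs an extra limiting argument, but the result is correct.)

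The first bullet, however, contains a genuine logical flaw. You argue that for $p\neq q$ one has $h^0(\Ff_-\otimes\Oo_\Sigma(p-q))=0$ and conclude ``the numerator is nonzero for $p\neq q$.'' This is false whenever $p$ or $q$ lies in the support of the unique effective divisor $D_-$ with $\Oo_\Sigma(D_-)\cong\Ff_-$: for instance if $q\in\supp D_-$, the unique holomorphic section of $\Ff_-$ already vanishes at $q$ and hence belongs to $\Gamma(\Ff_-\otimes\Oo_\Sigma(p-q))$ for every $p$, so $h^0\geq 1$ and the numerator vanishes. In fact, for fixed $q\notin\supp D_-$, the zero divisor of $p\mapsto\theta\chr{a^-}{b^-}(\Aa(p-q),\Omega)$ is precisely $q+D_-$, not $\{q\}$. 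Your own closing sentence (``one checks these zeros cancel with further zeros of the numerator'') contradicts the earlier nonvanishing claim and is in fact the crux of the matter: the correct argument is that the extra zeros along $D_-$ in the numerator cancel exactly against the zeros of $\sqrt{\omega_-}(p)\sqrt{\omega_-}(q)$ (whose divisor is $D_-$ in each variable), and this cancellation, together with $h^0(\Ff_-)=1$ (which follows from $\omega_-\neq 0$ via Riemann's singularity theorem, not Clifford), is what forces $\pf(p,q)$ to vanish only on the diagonal. As written, the argument for the first bullet is internally inconsistent; it needs to be reorganized around the divisor cancellation rather than around a nonvanishing claim for the numerator alone.
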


\subsection{Existence of a locally flat metric with conical singularities}
\label{subsec:locally_flat_metric_existence}

The main goal of this subsection is to prove the following proposition.

\begin{prop}
  \label{prop:existence_of_metric_on_Sigma}
  Let $\Sigma$ and $p_1,\dots,p_{2g-2}$ be as in Section~\ref{subsec:intro_flat_metric}. There exists a unique locally flat metric $ds^2$ on $\Sigma\smm\{ p_1,\dots,p_{2g-2} \}$ with conical singularities at $p_i$'s with cone angles equal to $4\pi$ for each $i = 1,\dots,2g-2$, and such that the area of $\Sigma$ is equal to 1. Moreover, we have the following:
  \begin{enumerate}
    \item\label{item:existence_of_metric_on_Sigma1} The metric $ds^2$ has the form $ds^2 = |\omega_0|^2$ where $\omega_0$ is a smooth $(1,0)$-form on $\Sigma$ satisfying $\sigma^*\omega_0 = \bar{ \omega }_0$ if the involution $\sigma$ is present.
    \item\label{item:existence_of_metric_on_Sigma2} The form $\omega_0$ satisfies the differential equation $(\dbar - \alpha_0)\omega_0 = 0$, where $\alpha_0$ is some antiholomorphic form on $\Sigma$ with the property that $\sigma^*\alpha_0 = \bar{\alpha}_0$ if the involution $\sigma$ is present.
    \item\label{item:existence_of_metric_on_Sigma3} The holonomy map of the metric $ds^2$ along any closed curve $\gamma$ on $\Sigma$ is given by $\exp(2i\int_\gamma\Im\alpha_0)$.
  \end{enumerate}
\end{prop}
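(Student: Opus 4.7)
The approach is to encode $\omega_0$ and $\alpha_0$ as algebro-geometric data: a smooth $(1,0)$-form satisfying $(\dbar-\alpha_0)\omega_0=0$ is precisely a holomorphic section of the twisted line bundle $K_\Sigma\otimes\Ll_{-\alpha_0}$ (with $\Ll_\beta$ as in Section~\ref{subsec:Jacobian}), and the requirement that its divisor be exactly $p_1+\dots+p_{2g-2}$ is equivalent to the line-bundle isomorphism $K_\Sigma\otimes\Ll_{-\alpha_0}\cong\Oo_\Sigma(p_1+\dots+p_{2g-2})$. Since both sides have degree $2g-2$, this is an identity in $\Pic^0(\Sigma)$ that determines the class of $\Ll_{-\alpha_0}$, and hence the class $[-\alpha_0]\in H^{0,1}(\Sigma)/\Lambda$, via diagram~\eqref{eq:different_Pic}. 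Lemma~\ref{lemma:deg0_bundles_via_connections} supplies an antiholomorphic representative $\alpha_0$ (unique modulo $\Lambda$ by Lemma~\ref{lemma:alpha_1_sim_alpha_2}), and the section of $\Oo_\Sigma(\sum p_i)$ with divisor exactly $\sum p_i$, unique up to a nonzero complex scalar since the corresponding meromorphic function must be constant, defines $\omega_0$.

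Next, one verifies the metric properties of $ds^2=|\omega_0|^2$ locally. Away from the conical singularities, on a simply connected neighborhood pick a smooth function $\phi$ with $\dbar\phi=-\alpha_0$ (possible since $\alpha_0$ is $d$-closed); then $e^\phi\omega_0$ is genuinely holomorphic, and its holomorphic primitive $\zeta$ gives an isometric chart to $(\CC,|d\zeta|^2)$. Near $p_i$, writing $\omega_0=(c_iz+O(z^2))\,dz$ in a local coordinate $z$ vanishing at $p_i$, one obtains $|\omega_0|^2=|c_i|^2|z|^2|dz|^2+O(|z|^3)|dz|^2=\tfrac{|c_i|^2}{4}|d(z^2)|^2+O(|z|^3)|dz|^2$, exhibiting $p_i$ as a conical singularity of angle $4\pi$ to leading order; the smooth higher-order perturbation does not change the conical type. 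Rescaling $\omega_0$ by a positive real constant normalizes the area $\tfrac{i}{2}\int_\Sigma\omega_0\wedge\bar\omega_0$ to $1$.

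For the holonomy assertion in item~\ref{item:existence_of_metric_on_Sigma3}, seek a local flat coordinate of the form $d\zeta=e^{i\theta}\omega_0$ with $\theta$ real; imposing $\partial_{\bar z}(e^{i\theta}g)=0$ where $\omega_0=g\,dz$ and $\partial_{\bar z}g=\bar h g$ (with $\alpha_0=\bar h\,d\bar z$) yields $d\theta=-2\Im\alpha_0$. The change of $\theta$ around a loop $\gamma$ is then $-2\oint_\gamma\Im\alpha_0$, which gives the claimed holonomy after matching sign conventions. Uniqueness of the metric is a Liouville-type argument: the log-ratio of any two admissible conformal factors is harmonic on $\Sigma\smm\{p_1,\dots,p_{2g-2}\}$ as the logarithmic difference of two conformally equivalent flat metrics, and bounded near each conical singularity since both have identical cone models there, hence extends to a harmonic function on $\Sigma$, which must be constant; the area normalization forces this constant to vanish.

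When $\partial\Sigma_0\neq\varnothing$ the involution $\sigma$ must be respected. The divisor $\sum p_i$ is $\sigma$-stable and $\sigma^*K_\Sigma\cong\overline{K_\Sigma}$, so the isomorphism class of $\Ll_{-\alpha_0}$ in $\Pic^0(\Sigma)$ is fixed by the involution $[L]\mapsto[\overline{\sigma^*L}]$; one selects $\alpha_0$ within its $\Lambda$-coset so that $\sigma^*\alpha_0=\bar\alpha_0$, and adjusts the remaining phase of $\omega_0$ by a unit complex constant to achieve $\sigma^*\omega_0=\bar\omega_0$, which remains compatible with the area normalization. The principal obstacle is this $\sigma$-equivariance step: several independent sign and phase choices (the line-bundle isomorphism, the representative of $\alpha_0$ modulo $\Lambda$, and the scalar phase of $\omega_0$) must be adjusted simultaneously to satisfy all the symmetry constraints, and verifying that the simultaneous arrangement is actually possible requires careful bookkeeping; once this is in place the remaining steps are essentially routine.
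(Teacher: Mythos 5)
Your proof takes a genuinely different route from the paper's. You set up the problem algebro-geometrically: identify $\omega_0$ with a holomorphic section of $K_\Sigma\otimes\Ll_{-\alpha_0}$, force the line-bundle isomorphism $K_\Sigma\otimes\Ll_{-\alpha_0}\cong\Oo_\Sigma(p_1+\dots+p_{2g-2})$ to pin down $[\alpha_0]$ in $\Pic^0(\Sigma)$, and then read off $\omega_0$ as the (essentially unique) section with the correct divisor. The paper instead works with potential theory: starting from a $\sigma$-invariant smooth background metric $ds_0^2$, it solves a $\partial\dbar$-equation (Lemma~\ref{lemma:potential}) for a conformal factor $\vphi$ with $\partial\dbar\vphi$ equal to $-4\pi i\sum_j\delta_{p_j}$ minus the background curvature form, sets $ds^2 = e^\vphi ds_0^2$, and only afterwards recovers $\alpha_0$ from the holonomy and $\omega_0$ by parallel transport of a unit cotangent vector from a base point. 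Both approaches are correct; yours is more structural, while the paper's is more elementary and self-contained, and — crucially — gives uniqueness essentially for free from the uniqueness in Lemma~\ref{lemma:potential} rather than from a separate Liouville argument.

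The genuine gap in your argument is exactly where you yourself flag an obstacle: the $\sigma$-equivariance step when $\partial\Sigma_0\neq\varnothing$. You assert that "one selects $\alpha_0$ within its $\Lambda$-coset so that $\sigma^*\alpha_0=\bar\alpha_0$" but do not prove this is possible. The issue is nontrivial: the isomorphism $\sigma^*\Ll_{-\alpha_0}\cong\overline{\Ll_{-\alpha_0}}$ only tells you that $\alpha_0-\sigma^*\bar\alpha_0\in\Lambda$, whereas you need this difference to lie in $2\Lambda$ so that the symmetrization $\tfrac12(\alpha_0+\sigma^*\bar\alpha_0)$ still represents the same class in $\Pic^0$. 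Proving that a representative with this property exists (e.g.\ by deforming the marked points to a $\sigma$-invariant configuration where an explicit $\sigma$-invariant square-root of the bundle is visible, and then invoking continuity of $\alpha_0$ in $p_1,\dots,p_{2g-2}$) requires a real argument that you leave unsupplied. This is precisely the technical advantage of the paper's PDE route: since the curvature data $-4\pi i\sum\delta_{p_j}-\Phi_0$ and the background metric are manifestly $\sigma$-invariant, uniqueness of the potential $\vphi$ forces $\sigma^*\vphi=\vphi$ automatically, so the flat metric, its holonomy form $u$ (hence $\alpha_0 = iu^{0,1}$), and $\omega_0$ (built by parallel transport from a base point fixed by $\sigma$) all inherit the symmetry with no case analysis. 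Your proof would be complete if you filled in this one step.
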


Recall that each 2-form on a Riemann surface $\Sigma$ has the type $(1,1)$. A $(1,1)$-form $\Phi$ is called real if for each open set $U\subset \Sigma$ we have $\int_U\Phi\in \RR$. The following lemma is standard (see e.g.~\cite[p.~149]{GriffitsHarris}):
\begin{lemma}
  \label{lemma:potential}
  Assume that $\Phi$ is a smooth $(1,1)$-form on a compact Riemann surface $\Sigma$ such that $\int_\Sigma\Phi = 0$. Then there exists a function $\vphi\in \mC^\infty(\Sigma)$ such that
  \[
    \partial\dbar\vphi = \Phi.
  \]
  The function $\vphi$ is unique up to an additive constant, and if $\Phi$ is real, then $\vphi$ can be taken to have pure imaginary values.
\end{lemma}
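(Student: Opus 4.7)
The strategy is to reduce the equation $\partial\dbar\vphi = \Phi$ on a compact Riemann surface to a scalar Poisson equation for the Laplacian of an auxiliary smooth Hermitian metric, and then invoke standard Hodge theory.

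First, I would fix an arbitrary smooth Hermitian metric $h$ on $\Sigma$ (unrelated to the singular metric $ds^2$ that we want to construct); let $\mathrm{vol}$ denote its Kähler form, which is a real positive $(1,1)$-form, and let $\Delta$ denote the associated Laplace--Beltrami operator on functions. Since $\mathrm{vol}$ is a nowhere vanishing real $(1,1)$-form, any smooth $(1,1)$-form $\Phi$ can be written uniquely as $\Phi = f\cdot\mathrm{vol}$ with $f\in\mC^\infty(\Sigma,\CC)$, and the hypothesis $\int_\Sigma\Phi = 0$ becomes $\int_\Sigma f\,\mathrm{vol} = 0$. A direct computation in local isothermal coordinates $z = x+iy$, in which $\mathrm{vol} = \tfrac{i}{2}\lambda\,dz\wedge d\bar z$ and $\Delta u = -4\lambda^{-1}\partial_z\partial_{\bar z}u$, shows that for any $u\in\mC^\infty(\Sigma,\CC)$
\begin{equation*}
\partial\dbar u \;=\; -\tfrac{i}{2}\,\Delta u\cdot \mathrm{vol}.
\end{equation*}
This coordinate-free identity is the bridge that turns the problem into a scalar PDE.

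Next, by Hodge theory for the (self-adjoint, elliptic) operator $\Delta$ on the compact surface $\Sigma$, the kernel of $\Delta$ acting on $\mC^\infty(\Sigma,\CC)$ consists exactly of the constants, and the image of $\Delta$ is the $L^2$-orthogonal complement of the kernel, i.e.\ precisely the space of smooth functions with vanishing $\mathrm{vol}$-integral. Since $\int_\Sigma f\,\mathrm{vol} = 0$, the equation $\Delta u = 2if$ admits a smooth solution $u$, unique up to an additive constant. Setting $\vphi = u$ and using the bridge identity yields $\partial\dbar\vphi = -\tfrac{i}{2}\cdot 2if\cdot\mathrm{vol} = f\cdot\mathrm{vol} = \Phi$, which gives existence.

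Uniqueness is immediate: if $\partial\dbar\vphi = 0$, the identity gives $\Delta\vphi = 0$, whence $\vphi$ is constant by Hodge theory (or equivalently by the maximum principle for $\Delta$ on a compact surface). For the reality statement, if $\Phi$ is real then $f$ is real-valued. Writing $\vphi = \vphi_1 + i\vphi_2$ with $\vphi_1,\vphi_2$ real, the equation $\Delta\vphi = 2if$ splits into $\Delta\vphi_1 = 0$ and $\Delta\vphi_2 = 2f$; the first forces $\vphi_1$ to be constant, and absorbing this constant into the ambiguity of $\vphi$ we may take $\vphi_1 \equiv 0$, so $\vphi$ is purely imaginary.

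Since each ingredient (the local expression of $\partial\dbar$ in terms of $\partial_z\partial_{\bar z}$, the relation $\Delta = -4\lambda^{-1}\partial_z\partial_{\bar z}$ for the Laplacian of an isothermal metric, and the Fredholm alternative for $\Delta$) is entirely standard, there is no serious obstacle; the only conceptual point worth emphasizing in the write-up is the sign/normalization in the identity $\partial\dbar u = -\tfrac{i}{2}\Delta u\cdot\mathrm{vol}$, which determines the factor of $i$ in the reality part of the statement.
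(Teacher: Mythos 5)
The paper does not prove this lemma; it simply cites it as standard (Griffiths–Harris, p.~149), so there is no proof to compare against. Your argument — fix a background Hermitian metric, write $\Phi = f\cdot\mathrm{vol}$, use the coordinate-free identity relating $\partial\dbar$ to the scalar Laplacian, and solve the resulting Poisson equation by the Fredholm alternative — is exactly the standard proof and is correct in substance. One small slip: with your convention $\Delta u = -4\lambda^{-1}\partial_z\partial_{\bar z}u$ and $\mathrm{vol} = \tfrac{i}{2}\lambda\,dz\wedge d\bar z$, a direct substitution gives $\partial\dbar u = +\tfrac{i}{2}\Delta u\cdot\mathrm{vol}$, not $-\tfrac{i}{2}\Delta u\cdot\mathrm{vol}$; accordingly one should solve $\Delta u = -2if$ rather than $\Delta u = 2if$. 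This does not affect uniqueness or the reality argument, which both go through unchanged.
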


\begin{proof}[Proof of Proposition~\ref{prop:existence_of_metric_on_Sigma}]
  Let $ds_0^2$ be some smooth metric on $\Sigma$ lying in the conformal class of $\Sigma$. If the involution $\sigma$ is present, then we assume that $ds_0^2$ is invariant under $\sigma$. Given a holomorphic local coordinate $z$ on $\Sigma$ we can write $ds_0^2 = e^{\vphi_0} |dz|^2$ for some smooth real-valued $\vphi$. Consider the $(1,1)$-form 
  \[
    \Phi_0 = \partial\dbar\vphi_0.
  \]
  It is straightforward to see that $\Phi_0$ does not depend on a local coordinate, therefore it is well-defined as a global $(1,1)$-form on $\Sigma$. In fact, on the level of volume forms we have
  \[
    \Phi_0 = iKds_0^2,
  \]
  where $K$ is the Gaussian curvature of $ds_0^2$, see~\cite[p.~77]{GriffitsHarris}. Hence, we have
  \begin{equation}
    \label{eq:moS1}
    \int_\Sigma \Phi_0 = 4\pi i (2-2g)
  \end{equation}
  by Gauss--Bonnet theorem.

  Given $p\in \Sigma$ denote by $\delta_p$ the $\delta$-measure at $p$, considered as a $(1,1)$-form with generalized coefficients. Applying Lemma~\ref{lemma:potential} to a suitable smooth approximation of $\delta$-measures and taking the limit we can find a real-valued function $\vphi$, smooth on $\Sigma\smm\{ p_1,\dots, p_{2g-2} \}$, having logarithmic singularities at $p_1,\dots, p_{2g-2}$ and satisfying the equation
  \begin{equation}
    \label{eq:moS2}
    \partial\dbar\vphi = -4\pi i\sum_{j = 1}^{2g-2}\delta_{p_j} - \Phi_0.
  \end{equation}
  Moreover, if the involution $\sigma$ is present, then we have $\sigma^*\vphi = \vphi$. Define
  \[
    ds^2 = e^\vphi ds_0^2.
  \]
  A straightforward local analysis shows that $ds^2$ extends to the whole $\Sigma$ as a locally flat metric with conical singularities at $p_1,\dots, p_{2g-2}$ with cone angles $4\pi$. If $\sigma$ is present, then $ds^2$ is invariant since $\vphi$ and $ds_0^2$ were invariant. Finally, replacing $\vphi$ with $\vphi+c$ for a suitable $c\in \RR$ we can make $ds^2$ to have a unit volume.

  Let us show that $ds^2$ with the above mentioned properties is unique. Given any such $ds^2_1$, define the function $\vphi_1$ by the equation $ds^2_1 = e^{\vphi_1} ds_0^2$. Then $\vphi_1$ must satisfy~\eqref{eq:moS2}, hence $\vphi_1 = \vphi+\cst$ by the uniqueness part of Lemma~\ref{lemma:potential}. But $\cst = 0$ due to the volume normalization.

  It remains to construct the $(1,0)$-form $\omega_0$ and $(0,1)$-form $\alpha_0$ required by the proposition. Note that the local holonomy of $ds^2$ is trivial (that is, the parallel transport along any contractible loop acts trivially on the tangent space). It follows that the holonomy of $ds^2$ along a loop $\gamma$ depends on the homology class of $\gamma$ only. Therefore, we can find a real harmonic differential $u$ such that the holonomy of $ds^2$ along any loop $\gamma$ is given by multiplication by $\exp(i\int_\gamma u)$. The differential $u$ can be taken such that $\sigma^*u = -u$ if $\sigma$ is present. Define
  \[
    \alpha_0 = iu^{0,1}.
  \]
  Then $\alpha_0$ satisfies all the properties from the items~\ref{item:existence_of_metric_on_Sigma2},~\ref{item:existence_of_metric_on_Sigma3} of the proposition.

  To define $\omega_0$, fix a point $p_0\in \Sigma$ and pick a cotangent vector $v_0\in T^\ast_{p_0}\Sigma$ with length 1 with respect to $ds^2$. Given a point $p\in \Sigma$ close to $p_0$ we can apply a parallel transportation to $v_0$ along any short path connecting $p_0$ with $p$ to obtain a cotangent vector at $p$. In this way we obtain a $(1,0)$-form $\omega$ in a small vicinity of $p_0$. The fact that $ds^2$ is locally flat with conical singularities of cone angles $4\pi$ implies that $\omega$ is holomorphic, and can be extended to the whole $\Sigma$ as a multivalued holomorphic differential with the multiplicative monodromy $\exp(-2i\int_\gamma \Im\alpha_0)$ along each non-trivial loop $\gamma$, and we have $ds^2 = |\omega|^2$ everywhere. We now can set
  \[
    \omega_0(p) = \exp(2i\int_{p_0}^p\Im \alpha_0)\omega.
  \]
  Replacing $\omega$ with $\eta\omega$ for some $|\eta| = 1$ and choosing $p_0\in \partial \Sigma_0$ if the involution $\sigma$ is present we can ensure that $\sigma^\omega_0 = \bar{\omega}_0$ so that all the properties of $\omega_0$ required in items~\ref{item:existence_of_metric_on_Sigma1}--\ref{item:existence_of_metric_on_Sigma3} of the proposition are fulfilled.
\end{proof}

\subsection{Properties of the kernel \texorpdfstring{$\Dd_\alpha^{-1}$}{Dalpha-1}}
\label{subsec:asymptotics_Ddalpha}

The goal of this subsection is to fill in the details omitted in Section~\ref{sec:The Riemann surface: continuous setting}. We begin by proving Lemma~\ref{lemma:q0_is_quadratic_form}:

\begin{proof}[Proof of Lemma~\ref{lemma:q0_is_quadratic_form}]
  We will be using the notation from Section~\ref{subsec:spin_line_bundles}. To prove that $q_0$ is a quadratic form we will construct a spin structure $\xi_0$ on $\Sigma$ such that $q_0$ corresponds to $\xi_0$ via Theorem~\ref{thmas:Johnson_spin_structures}. Let $\Sigma' = \Sigma\smm\{ p_1,\dots, p_{2g-2} \}$ and $\pi:UT\Sigma'\to \Sigma'$ be the projection. Define $\mu\in H^1(UT\Sigma',\ZZ/2\ZZ)$ by
  \[
    \mu(\tilde\gamma) = \pi(\tilde\gamma)\cdot (\gamma_1+\ldots+\gamma_{g-1})\mod 2.
  \]
  By evaluating the $(1,0)$-form $\omega_0$ at tangent vectors we obtain a non-vanishing function on $UT\Sigma'$. Denote this function by $\vphi_{\omega_0}$. Then it is easy to see that $\xi_0\in H^1(UT\Sigma',\ZZ/2\ZZ)$ defined by
  \[
    \xi_0(\tilde\gamma) = \frac{1}{2\pi}\Im \int_{\tilde\gamma}d\log \vphi_{\omega_0} + \mu(\tilde\gamma)\mod 2
  \]
  depends only on the homology class of $\tilde\gamma$ in $H_1(UT\Sigma,\ZZ/2\ZZ)$ and defines a spin structure on $\Sigma$.

  Given a smooth loop $\gamma$ on $\Sigma'$ denote by $\tilde\gamma$ its lift to $UT\Sigma'$ given by its tangent vector. We have
  \begin{equation}
    \label{eq:qqf}
    \wind(\gamma,\omega_0) = \Im \int_{\tilde\gamma}d\log\vphi_{\omega_0}.
  \end{equation}
  It follows from the definition~\eqref{eq:def_of_q0} of $q_0$ and~\eqref{eq:qqf} that
  \[
    q_0(\gamma) = \xi_0(\tilde\gamma) + 1\mod 2.
  \]
  Hence, by Theorem~\ref{thmas:Johnson_spin_structures}, $q_0$ is a quadratic form.
\end{proof}

We now prove Proposition~\ref{prop:def_of_S}.

\begin{proof}[Proof of Proposition~\ref{prop:def_of_S}]
  The fact that $\Dd_\alpha^{-1}(p,q)$ satisfies equations~\eqref{eq:equation_on_q_of_Salpha} and~\eqref{eq:equation_on_p_of_Salpha} follows from the formula for $\Dd_\alpha^{-1}(p,q)$ and Proposition~\ref{prop:existence_of_metric_on_Sigma}, item~\ref{item:existence_of_metric_on_Sigma2}. It follows from the properties of theta functions (Proposition~\ref{prop:of_theta}) that the function $\Dd_\alpha^{-1}(p,q)$ extends to 
\[
  \Bigl((\Sigma\smm\{ p_1,\dots,p_{2g-2} \})\times (\Sigma\smm\{ p_1,\dots,p_{2g-2} \})\Bigr)\smm\mathrm{Diagonal}
\]
  as a multivalued function. We are left to verify that $\Dd_\alpha^{-1}$ has the correct monodromy. 

  Let $q_-$ be the spin structure corresponding to the odd spin line bundle $\Ff_-$ used in the construction of the prime form $\pf$. By Lemma~\ref{lemma:wind_lift} and Theorem~\ref{thmas:Johnson_spin_structures} we have  
  \begin{equation}
    \label{eq:doS1}
    q_-(\gamma) = (2\pi)^{-1}\wind(\gamma,\omega_-) + 1\mod 2
  \end{equation}
  for any smooth simple loop $\gamma$ on $\Sigma$. Recall that $\varsigma$ is the smooth function on $\Sigma\smm\{ p_1,\dots, p_{2g-2} \}$ defined by
  \[
    \omega_- = \varsigma \omega_0.
  \]
  It follows that 
  \begin{equation}
    \label{eq:doS2}
    \wind(\gamma, \omega_-) = \wind(\gamma, \omega_0) + \Im\int_\gamma d\log \varsigma.
  \end{equation}
  Combining~\eqref{eq:doS1} and~\eqref{eq:doS2} with the definition~\eqref{eq:def_of_q0} we get
  \begin{equation}
    \label{eq:doS3}
    q_-(\gamma) - q_0(\gamma) = \frac{1}{2\pi}\Im\int_\gamma d\log \varsigma + \gamma\cdot (\gamma_1 + \ldots + \gamma_{g-1}) \mod 2.
  \end{equation}
  Recall that $[a^0, b^0]$ is the characteristics of $q_0$ and $[a^-, b^-]$ is the characteristics of $q_-$. Using the definition of the characteristics we can rewrite~\eqref{eq:doS3} as
  \begin{equation}
    \label{eq:doS4}
    \frac{1}{2\pi}\Im\int_\gamma d\log \varsigma = \sum_{i = 1}^g(2(a^0_i - a^-_i)\gamma\cdot B_i + 2(b^0_i - b^-_i) \gamma\cdot A_i)) + \gamma\cdot (\gamma_1 + \ldots + \gamma_{g-1}) \mod 2.
  \end{equation}
  The equality~\eqref{eq:doS4} holds for any smooth loop $\gamma$ on $\Sigma\smm\{ p_1,\dots, p_{2g-2} \}$ such that $\varsigma$ does not vanish along $\gamma$. This determines the monodromy of $\sqrt{\varsigma}$ uniquely. Combining this with the properties of theta function from Proposition~\ref{prop:of_theta} we determine the monodromy of $\pf(p,q)\sqrt{\omega_0(p)}\sqrt{\omega_0(q)}$, and also the monodromy of all other terms in the definition of $\Dd_\alpha^{-1}$. Direct verification shows that the monodromy of $\Dd_\alpha^{-1}$ is as stated in the proposition.
\end{proof}

The proofs of Lemma~\ref{lemma:diagonal_expansion_of_Salpha} and Lemma~\ref{lemma:bw_near_singularity_expansion_of_Salpha} come from direct computations which we leave to the reader. We finish the subsection by proving Lemma~\ref{lemma:derivative_of_theta}.

\begin{proof}[Proof of Lemma~\ref{lemma:derivative_of_theta}]
  Recall the permutation matrix $J$ introduced in~\eqref{eq:def_of_J}. Recall that $\theta[\alpha](z)$ is defined by~\eqref{eq:def_of_theta_alpha}. From~\eqref{eq:Omega_under_involution}, the definition~\eqref{eq:def_of_theta} of the theta function and symmetries of $\alpha_t$ and $\alpha_G$ with respect to $\sigma$ we get
  \begin{equation}
    \label{eq:doth1}
    \theta[\alpha_{h,t} + \alpha_G](z) = \overline{\theta[-\alpha_{h,t} + \alpha_G](-J\bar{z})}.
  \end{equation}
  In particular, $\theta[\alpha_{h,t}+\alpha_G](0)\neq 0$ implies $\theta[-\alpha_{h,t}+\alpha_G](0)\neq 0$. Define
  \begin{equation}
    \label{eq:def_of_Walpha}
    W_{\alpha_{h,t} + \alpha_G}(q) = d_q \log \theta[\alpha_{h,t} + \alpha_G](\Aa(p-q))\vert_{p=q},
  \end{equation}
  where $\Aa$ is the Abel map~\eqref{eq:def_of_Abel_map}. By~\eqref{eq:doth1} and~\eqref{eq:symmetries_of_normalized_differentials} we have
  \begin{equation}
    \label{eq:Walpha_is_symmetric}
    \sigma^*W_{\alpha_{h,t} + \alpha_G} = \overline{W}_{-\alpha_{h,t} + \alpha_G}.
  \end{equation}
  By the definition of $r$ given in Lemma~\ref{lemma:diagonal_expansion_of_Salpha} we have
  \begin{equation}
    \label{eq:doth1.4}
    r_{\alpha_t + \alpha_G}\omega_0 = \frac{1}{\pi i} W_{\alpha_{h,t} + \alpha_G} - \frac{2}{\pi i}\partial \Re \vphi_t.
  \end{equation}
  Using that $\sigma^\ast\alpha_t = -\bar\alpha_t$ and~\eqref{eq:Walpha_is_symmetric} we conclude that
  \begin{equation}
    \label{eq:ralpha_is_symmetric}
    \sigma^\ast (r_{\alpha_t + \alpha_G}\omega_0) = -\overline{r_{-\alpha_t+\alpha_G}\omega_0}.
  \end{equation}
  We conclude that
  \begin{multline}
    \label{eq:doth1.5}
    -\frac{1}{4}\int_\Sigma \left( r_{\alpha_t+\alpha_G}\omega_0\wedge \dot{\alpha}_t - \overline{r_{-\alpha_t + \alpha_G}\omega_0\wedge \dot{\alpha}_t} \right) =\\
    = -\frac{1}{4}\int_\Sigma\left( r_{\alpha_t + \alpha_G}\omega_0\wedge \dot{\alpha}_t - \sigma^\ast\left( r_{-\alpha_t + \alpha_G}\omega_0\wedge \dot{\alpha}_t \right) \right) = -\frac{1}{2}\int_\Sigma r_{\alpha_t + \alpha_G}\omega_0\wedge \dot{\alpha}_t.
  \end{multline}
  Let us now substitute~\eqref{eq:doth1.4} into the right-hand side of~\eqref{eq:doth1.5}. Using Riemann bilinear relations~\eqref{eq:Riemann bilinear relations} we obtain
  \begin{equation}
    \label{eq:doth2}
    \begin{split}
      &-\frac{1}{4}\int_\Sigma \left( r_{\alpha_t+\alpha_G}\omega_0\wedge \dot{\alpha}_t - \overline{r_{-\alpha_t + \alpha_G}\omega_0\wedge \dot{\alpha}_t} \right) = -\frac{1}{2}\int_\Sigma r_{\alpha_t + \alpha_G}\omega_0\wedge \dot{\alpha}_t =\\
      &= -\frac{1}{2\pi i}\int_\Sigma W_{\alpha_{h,t} + \alpha_G}\wedge \dot{\alpha}_{h,t} + \frac{1}{\pi }\Im \int_\Sigma \partial \Re\vphi\wedge \dbar\dot{\vphi}_t =\\
      &= -\frac{1}{\pi}\int_\Sigma W_{\alpha_{h,t} + \alpha_G}\wedge \Im\dot{\alpha}_{h,t} - \frac{1}{\pi }\Re \int_{\Sigma_0} \Delta \Re\vphi_t\cdot \dot{\vphi}_t\, ds^2 =\\
      &= \sum_{j = 1}^g \frac{\theta[\alpha_{h,t} + \alpha_G]_j(0)}{\theta[\alpha_{h,t} + \alpha_G](0)}(\Omega a(\dot{\alpha}_{h,t}) - b(\dot{\alpha}_{h,t}))_j - \frac{d}{dt}\frac{1}{2\pi}\int_{\Sigma_0} \Delta\Re\vphi_t\Re\vphi_t\,ds^2
    \end{split}
  \end{equation}
  where $\theta[\alpha_{h,t} + \alpha_G]_j(0)$ denotes the partial derivative of $\theta[\alpha_{h,t} + \alpha_G](z)$ by $z_j$ at $z = 0$.

  On the other hand, differentiating the series defining the theta function we get
  \begin{equation}
    \label{eq:doth3}
    \frac{d}{dt}\log \theta[\alpha_{h,t} + \alpha_G](0) = \sum_{j = 1}^g \frac{\theta[\alpha_{h,t} + \alpha_G]_j(0)}{\theta[\alpha_{h,t} + \alpha_G](0)}(\Omega a(\dot{\alpha}_{h,t}) - b(\dot{\alpha}_{h,t}))_j -2\pi i a(\dot{\alpha}_{h,t})\cdot b(\alpha_G).
  \end{equation}
  Combining~\eqref{eq:doth2} and~\eqref{eq:doth3} we get the result.
\end{proof}

\subsection{Teichm\"uller space and the space of Torelli marked Riemann surfaces}
\label{subsec:teichmuller_space}

In order to formulate the setup for main theorems (see Section~\ref{subsec:intro_main_results}) we need some basic facts about the Teichm\"uller space. We address the reader to~\cite{ImayoshiTaniguchi} for a detailed exposition of the subject. Let $\Sigma_\rf$ be a fixed closed Riemann surface of genus $g$. As a set, Teichm\"uller space is defined as
\[
  \Tch_g = \{ (\Sigma, f)\ \mid\ f:\Sigma_\rf\to \Sigma\text{ orientation preserving homeomorphism} \}/_\sim
\]
where $(\Sigma_1,f_1)\sim (\Sigma_2,f_2)$ if and only if $f_2\circ f_1^{-1}$ is homotopic to a conformal map between $\Sigma_1$ and $\Sigma_2$. One of the ways to describe the topology on $\Tch_g$, originally referred to Teichm\"uller, is to consider extremal mappings in the homotopy class of $f_2\circ f_1^{-1}$. Teichm\"uller's theorem asserts that for any two points $(\Sigma_1,f_1)$ and $(\Sigma_2,f_2)$ in $\Tch_g$ there exists a unique quasiconformal mapping $h:\Sigma_1\to \Sigma_2$ called the `Teichm\"uller map' homotopic to $f_2\circ f_1^{-1}$ and minimizing the $L^\infty$ norm of the Beltrami coefficient among other quasiconformal mappings. The Beltrami coefficient of $h$ appears to be of the form $k\frac{u}{|u|}$, where $u$ is some holomorphic quadratic differential on $\Sigma_1$ and $k\in [0,1)$ is a constant. In particular, $h$ is locally affine outside zeros of $u$. The Teichm\"uller distance between $(\Sigma_1,f_1)$ and $(\Sigma_2,f_2)$ is defined to be the logarithm of the maximal dilatation of $h$, that is, $\log\frac{k+1}{k-1}$. 

Using Teichm\"uller maps we can define the topology of $\mC^2$ convergence on the space of diffeomorphisms between Riemann surfaces. Fix a finite open cover $U_1\cup \ldots\cup U_n$ of $\Sigma_\rf$. Assume that $(\Sigma_k, f_k)$ is a sequence of points in $\Tch_g$ converging to a point $(\Sigma, f)$. Without loss of generality we assume that $f,f_1,f_2,\dots$ are all the corresponding Teichm\"uller maps. Assume that for each $k$ we are given a  diffeomorphism $\xi_k:\Sigma_k\to \Sigma$. We say that the sequence $\xi_k$ converges to identity in $\mC^2$ topology if the following holds:
\begin{enumerate}
  \item For each $j = 1,\dots, n$ and any compact $K\subset U_j$ we have $\xi_k(f_k(K))\subset f(U_j)$ for $k$ large enough depending on $K$.
  \item For each $j = 1,\dots, n$ there exist holomorphic coordinates $z_j: f(U_j)\to \CC$ and $z_j^{(k)}: f_k(U_j)\to \CC$ such that the functions $z_j^{(k)}\circ\xi_k^{-1}\circ z_j^{-1}$ converge to identity uniformly in $\mC^2$ topology on any compact of $z_j(f(U_j))$.
\end{enumerate}
Note that if the sequence $\xi_k$ converges to identity in $\mC^2$ topology, then maximum dilatations of $\xi_k$'s considered as quasiconformal mappings converges to $1$.

The topological space $\Tch_g$ can be equipped with a structure of a complex manifold. Recall that the mapping class group $\Mod(\Sigma_\rf)$ is defined as 
\[
  \Mod(\Sigma_\rf) = \frac{\Diff(\Sigma_\rf)}{\Diff_0(\Sigma_\rf)},
\]
where $\Diff_0(\Sigma_\rf)$ is the group of diffeomorphisms of $\Sigma_\rf$ homotopic to identity. The group $\Mod(\Sigma_\rf)$ acts on $\Tch_g$ properly discontinuously, each point of $\Tch_g$ has finite stabilizer, and the quotient $\Mm_g = \Tch_g/\Mod(\Sigma_\rf)$ is a smooth complex orbifold called~\emph{the moduli space of genus $g$ Riemann surfaces}.

There is a natural homomorphism of groups $\Mod(\Sigma_\rf)\to \Aut(H_1(\Sigma_\rf, \ZZ))$, where by $\Aut$ we denote the set of symplectic automorphisms. The kernel of this homomorphism is called the \emph{Torelli group}; we denote it by $\Ii_g(\Sigma_\rf)$. The quotient 
\begin{equation}
  \label{eq:def_of_t}
  \Mm_g^t = \frac{\Tch_g}{\Ii_g(\Sigma_\rf)}
\end{equation}
is called the \emph{moduli space of Torelli marked curves}. One can show that the action of $\Ii(\Sigma_\rf)$ has no fixed points on $\Tch_g$, hence $\Mm_g^t$ is a smooth complex manifold.

Choose a symplectic basis $A_1^\rf,\dots, A_g^\rf, B_1^\rf,\dots, B_g^\rf$ in $H_1(\Sigma_\rf,\ZZ)$. Then for each $[(\Sigma, f)]\in \Mm_g^t$ there is a natural choice of a symplectic basis in $H_1(\Sigma, \ZZ)$ given by
\[
  A_i = f_*A_i^\rf, \qquad B_i = f_*B_i^\rf.
\]
One can show that this defines a bijection between the set of points of $\Mm_g^t$ and the set of isomorphism classes of Riemann surfaces of genus $g$ with a fixed symplectic basis in the first homologies. This basis is usually called a \emph{Torelli marking} of the surface.

Finally, let us introduce the moduli space $\Mm_g^{t,(0,1)}$ of Torelli marked Riemann surfaces of genus $g$ with a fixed anti-holomorphic $(0,1)$-form. Set first $\Mm_g^{t,(0,1)} = \CC^g\times \Mm_g^t$. Given $(z_1,\dots, z_g)\in \CC^g$ define the $(0,1)$-form on $\Sigma$ by
\[
  \alpha = \sum_{i = 1}^g z_i \bar{\omega}_i
\]
where $\omega_1,\dots, \omega_g$ is the set of normalized differentials on $\Sigma$ associated with the Torelli marking, see Section~\ref{subsec:simplicial_basis}. Using this identification we can interpret $\Mm_g^{t,(0,1)}$ as the moduli space of tuples $(\Sigma, A, B, \alpha)$, where $\Sigma$ is a Riemann surface of genus $g$, $(A,B)$ is a simplicial basis in $H_1(\Sigma, \ZZ)$, and $\alpha$ is an anti-holomorphic $(0,1)$-form.

\end{appendix}

\printbibliography

\end{document}